\mathchardef\myhyphen="2D
\newcommand{\Iver}[1]{\left[#1\right]_{\mathsmaller{\rm Iver}}}
\newcommand{\defeq}{\;{\raisebox{-.1\height}{$\stackrel{\raisebox{-.1\height}{\tiny\rm def}}{=}$}}\;}
\newcommand{\simX}[1]{\stackrel{\raisebox{-.3\height}{\tiny\it #1}}{\sim}}
\newcommand{\stkout}[1]{\ifmmode\text{\sout{\ensuremath{#1}}}\else\sout{#1}\fi} 
\newcommand{\tightoversetone}[2]{\mathop{#2}\limits^{\vbox to -.05ex{\kern-0.75ex\hbox{$#1$}\vss}}}
\newcommand{\tightoversettwo}[2]{\mathop{#2}\limits^{\vbox to -.25ex{\kern-0.75ex\hbox{$#1$}\vss}}}
\let\@@pmod\pmod
\DeclareRobustCommand{\pmod}{\@ifstar\@pmods\@@pmod}
\def\@pmods#1{\mkern4mu({\operator@font mod}\mkern 6mu#1)}
\def\myCorollary{Derived Utility}
\def\myLemma{Auxiliary Utility}
\def\myTheorem{Major Utility}
\def\myCorollary{Corollary}
\def\myLemma{Lemma}
\def\myTheorem{Theorem} }
\newtheorem{algorithm}{Algorithm}
\newtheorem{corollary}{\myCorollary}
\newtheorem{definition}{Definition}
\newtheorem{lemma}{\myLemma}
\newtheorem{theorem}{\myTheorem}
\def\bfA{{\bf A}}
\def\bfB{{\bf B}}
\def\bfG{{\bf G}}
\def\bfg{{\bf g}}
\def\bfi{{\bf i}}
\def\bfq{{\bf q}}
\def\bfS{{\bf S}}
\def\calA{{\cal A}}
\def\calB{{\cal B}}
\def\calC{{\cal C}}
\def\calD{{\cal D}}
\def\calE{{\cal E}}
\def\calF{{\cal F}}
\def\calG{{\cal G}}
\def\calH{{\cal H}}
\def\calI{{\cal I}}
\def\calJ{{\cal J}}
\def\calL{{\cal L}}
\def\calM{{\cal M}}
\def\calN{{\cal N}}
\def\calP{{\cal P}}
\def\calQ{{\cal Q}}
\def\calR{{\cal R}}
\def\calS{{\cal S}}
\def\calT{{\cal T}}
\def\calU{{\cal U}}
\def\calV{{\cal V}}
\def\calX{{\cal X}}
\def\calY{{\cal Y}}
\def\sfE{{\sf E}}
\def\sft{{\sf t}}
\def\abs{{\rm abs}}
\def\Base{{\rm Base}}
\def\Bor{{\rm Bor}}
\def\Bun{{\rm Bun}}
\def\cl{{\rm cl}}
\def\Cost{{\rm Cost}}
\def\coni{{\rm coni}}
\def\Diri{{\rm Diri}}
\def\deg{{\rm deg}}
\def\diag{{\rm diag}}
\def\diam{{\rm diam}}
\def\dim{{\rm dim}}
\def\dist{{\rm dist}}
\def\dom{{\rm dom}}
\def\Gr{{\rm Gr}}
\def\gmul{{\rm gmul}}
\def\half{\frac{1}{2}}
\def\hom{{\rm hom}}
\def\HomoPhysL{\,{\stackrel{\;\mathsmaller{\mathfrak{M}}}{\Longleftarrow}}\,}
\def\HomoPhysR{\,{\stackrel{\mathsmaller{\mathfrak{M}}\;}{\Longrightarrow}}\,}
\def\IsoPhysLR{\,{\stackrel{\mathsmaller{\mathfrak{M}}}{\Longleftrightarrow}}\,}
\def\itGamma{\mathit{\Gamma}}
\def\itPi{\mathit{\Pi}}
\def\itTheta{\mathit{\Theta}}
\def\lab{{\rm lab}}
\def\LTK{{\rm LTK}}
\def\orb{{\rm orb}}
\def\poly{{\rm poly}}
\def\Pr{{\rm Pr}}
\def\Range{{\rm Range}}
\def\Slab{{\rm Slab}}
\def\sent{{\rm sent}}
\def\sign{{\rm sign}}
\def\size{{\rm size}}
\def\stab{{\rm stab}}
\def\smallperp{{\mathsmaller{\perp}}}
\def\supp{{\rm supp}}
\def\Tr{{\rm Tr}}
\begin{document}
\title{Monte Carlo Quantum Computing}
\iftoggle{ForUSPTO} {
\author{Haiqing Wei}
\vspace{0.75ex}
\address{Quantica Computing, LLC, San Jose, California, USA}
\vspace{0.75ex}
\newcommand{\PatentAbstract} {Monte Carlo methods are described for efficiently simulating a separately frustration-free Hamiltonian of a many-body quantum system on a classical computer. Also disclosed are methods for designing a separately frustration-free Hamiltonian to simulate a prescribed quantum system. Further described are methods for solving a prescribed computational problem by designing a quantum system having a separately frustration-free Hamiltonian and simulating the designed quantum system via Monte Carlo on a classical computer.}
} {
\author{David H. Wei$^{\,*}$}
\vspace{0.75ex}
\address{Quantica Computing, LLC, San Jose, California, USA}
\address{*\,Email: david.hq.wei@gmail.com}
\begin{abstract}
It is shown that a class of separately frustration-free (SFF) Hamiltonians can be Monte Carlo simulated efficiently on a classical computing machine, because such an SFF Hamiltonian corresponds to a Gibbs wavefunction whose nodal structure is efficiently computable by solving a small subsystem associated with a low-dimensional configuration subspace. It is further demonstrated that SFF Hamiltonians can be designed to implement universal ground state quantum computation. The two results combined have effectively solved the notorious sign problem in Monte Carlo simulations, and proved that all bounded-error quantum polynomial time algorithms admit bounded-error probabilistic polynomial time simulations.
\end{abstract}
}
\pagestyle{plain}


\iftoggle{ForUSPTO} {
%
\section*{CROSS-REFERENCE TO RELATED APPLICATIONS}
The present application claims the benefit under 35 U.S.C. \S119(e) of U.S. Provisional Application Ser. No. 63/130,847, filed December 27, 2020. Said U.S. Provisional Application Ser. No. 63/130,847 is hereby incorporated by reference in its entirety.

\section*{FIELD}
This invention relates to classical and quantum computations, specifically to implementations of quantum computing and simulations via Monte Carlo on classical computers.
} {}

\iftoggle{ForUSPTO} {
\section*{BACKGROUND}
\subsection*{\bf General Background}
} {
\section{Introduction}
}
Quantum mechanics and quantum field theories provide the most accurate description of almost everything in the known physical world, with the only exception of extremely strong gravitation. Vastly many questions in physics, chemistry, materials science, and even molecular biology could be answered definitively by solving a set of well established quantum equations governing a {\em quantum system}, which refers to a generic physical system consisting of particles and fields that mediate electromagnetic, weak and strong interactions, even non-extreme gravitational forces, so long as said particles and fields conform to the laws of quantum mechanics. Such a quantum system usually involves a large number of particles and modes of fields, and the noun quantum system is often premodified by an adjective {\em many-particle} or {\em many-body} \cite{Greiner01,Gustafson11,Coleman15} to stress that the sum of the number of particles and the number of modes of fields is large, usually much greater than $1$, although such adjective many-particle or many-body is sometimes omitted but implied and understood from the context. A formal definition of quantum physics or system will be given \iftoggle{ForUSPTO} {in the detailed description} {later}, whose mathematical rigor is needed to clearly present and discuss the problem of solving or simulating a given quantum system, as well as to measure the computational complexity incurred by a solution. Computational applications solving quantum equations include but are not limited to two general classes of problems, where one class consists of static problems that need to find the eigenvalues and eigenstates or the stationary distribution or the expectation value of a quantum observable, such as a time-independent Hamiltonian or the associated (thermal) density matrix, which is also known as the (quantum) statistical operator or the Gibbs operator \cite{Feynman72,Ceperley91,Ceperley95,Zinn-Justin05,Kozlov12}, whose position (that is, coordinate in a {\em configuration space} \cite{Greiner01}) representation is called the the Gibbs kernel, whereas the other class consists of dynamic problems that need to solve for the time evolution of a quantum state in the Schr\"odinger representation or a quantum observable in the Heisenberg representation \cite{Messiah99,Greiner01} when dealing with a closed quantum system, or of a reduced density matrix governed by the Liouville-von Neumann equation or the so-called master equations \cite{Prigogine62,Mandel95} for an open quantum system. However, the categorization of computational problems into static and dynamic classes is only relative, and the two classes of problems are often interchanged during formulations and solutions. Examples for solving a static problem using dynamic evolution include the homotopy method for eigenvalue problems \cite{Chu85,Chu88}, the equation-of-motion method for computing the eigenvalue distribution of large matrices on a classical computer \cite{Alben75,Hams00}, and Kitaev's quantum phase (eigenvalue) estimation algorithm as a quantum counterpart \cite{Kitaev95}. Conversely, for examples of mapping the dynamic execution of a computational algorithm into finding the minimum energy and state of a static system, there are the so-called quantum-dot cellular automata \cite{Lent93,Lent97} still in the realm of classical computing, and the celebrated paradigm of ground state quantum computation (GSQC) \cite{Feynman85,Kitaev02,Mizel04,Kempe06}, where a given quantum algorithm is mapped into a designer Hamiltonian and a simulation of its ground state, which encodes the history of the quantum evolution prescribed by said quantum algorithm.

\iftoggle{ForUSPTO} {
\subsection*{\bf Prior Art}
} {}
Ostensibly due to the exponentially exploding dimension of the Hilbert space that is needed to describe the state of a quantum system, it can be exceedingly hard to solve the quantum equations or simulate an eigenstate or dynamics of even a moderately sized quantum system on a classical computer \cite{Feynman82}. Various quantum Monte Carlo (QMC) approaches \cite{Ceperley79Binder,Schmidt84,Ceperley86,Hammond94,Grotendorst02} have the potential to break the curse of dimensionality, by mapping a non-negative ground state wavefunction or Gibbs kernel of a quantum system into a classical probability density, and simulating a random walk that embodies importance sampling of such a probability distribution. Given positivity of a concerned wavefunction or Gibbs kernel, QMC is arguably the only general and exact numerical method that is free of uncontrollable systematic errors due to modeling approximations, providing reliable and rigorous simulation results upon numerical convergence. Unfortunately, previous QMC procedures for many quantum systems, especially those involving multiple indistinguishable fermions which represent the vast majority of atomic, molecular, condensed matter, and nuclear systems, suffer from the notorious sign problem \cite{Feynman82,Grotendorst02,Loh90,Ortiz02,SignProblemWiki} that leads to an exponential slowdown of numerical convergence, due to the presence and cancellation of positive and negative amplitudes, when no computational basis is known to represent the concerned ground or thermal state by a non-negative wavefunction or Gibbs kernel. At the fundamental and theoretical level, as Feynman keenly noted \cite{Feynman87}, a defining characteristic, possibly the single most important aspect, setting quantum mechanics and computing apart from classical mechanics and computing, seemed to be the presence and necessity of a sort of ``negative probability'' in the quantum universe, endowing the power to represent and manipulate ``negative probabilities''. Indeed, their perceived inability to deal with ``negative probabilities'' efficiently was believed to fundamentally limit the power of classical mechanics and computers in terms of simulating their quantum counterparts and solving computational problems that quantum computers are predicted and believed to excel. The persistently unsolved status of the sign problem in the past, compounded by the piling of ``evidence problems'' that had an efficient quantum solution but no good classical algorithm being known or even thought possible \cite{Simon97,Shor97,Harrow09,Aaronson10,Aaronson15,Raz19,Gilyen19}, had fueled a pervasive belief that quantum computers were inherently more powerful than classical machines, and there existed certain hard computational problems which were amenable to polynomial quantum algorithms but could not be solved efficiently on classical computers, or in the terminology of quantum complexity theory \cite{Bernstein97,Watrous09,QuantumComplexityWiki}, that the computational complexity classes of bounded-error quantum polynomial time (BQP) and quantum Merlin Arthur (QMA) are strictly proper ({\it i.e.}, larger) supersets of the classical classes of bounded-error probabilistic polynomial time (BPP) and 1-message Arthur-Merlin interactive proof (AM[1]$\defeq$MA) \cite{Babai85,Goldwasser86,Arora09}.

\iftoggle{ForUSPTO} {
\subsection*{\bf Objects and Advantages}
} {}
Bucking the common and popular belief, here I shall \iftoggle{ForUSPTO} {disclose} {present} general and systematic solutions to the dreaded sign problem, thus establish BPP$\,=\,$BQP, by identifying and characterizing a class of quantum Hamiltonians/systems called separately frustration-free (SFF), whose Gibbs wavefunctions have a nodal structure that can be efficiently computed by solving a small subsystem involving a small number of dynamical variables associated with a constituent few-body interaction. When an SFF Hamiltonian is also fermionic stoquastic and spatially local, it is dubbed SFF fermionic Schr\"odinger (SFF-FS). Alternatively, when a general SFF Hamiltonian has each of its constituent few-body interactions essentially bounded, it is called SFF essentially bounded (SFF-EB). As a first \iftoggle{ForUSPTO} {Major Utility} {theorem}, it will be \iftoggle{ForUSPTO} {demonstrated} {proved} that any SFF-FS or SFF-EB Hamiltonian can be efficiently simulated on a classical probabilistic machine, in the sense that a Gibbs wavefunction of such a Hamiltonian can be efficiently sampled via a Monte Carlo procedure on a classical probabilistic machine, subject to a polynomially small error in a properly defined sense. Next, I will construct a special class of Hamiltonians/systems called SFF doubly universal (SFF-DU), that are both SFF-FS and SFF-EB consisting of distinguishable and interacting bi-fermions, each of which comprises two non-interacting identical spinless fermions moving in a three-well potential on a circle. Then as a second \iftoggle{ForUSPTO} {Major Utility} {theorem}, it will be \iftoggle{ForUSPTO} {demonstrated} {proved} that SFF-DU Hamiltonians/systems consisting of only bi-fermions are universal for quantum circuits and computations, as such, any many-body quantum system can be mapped onto and simulated by an SFF-DU system of bi-fermions.

The combination of the two \iftoggle{ForUSPTO} {major utilities} {theorems} effectively asserts a third \iftoggle{ForUSPTO} {Major Utility} {theorem} that the two computational complexity classes, bounded-error probabilistic polynomial time (BPP) \cite{Arora09} and bounded-error quantum polynomial time (BQP) \cite{Arora09}, are actually one and the same. That quantum computing and mechanics are just classical computing and probability up to polynomial reduction, is too good and true. Any quantum algorithm/circuit in BQP can be efficiently simulated by a Monte Carlo algorithm running on a classical computer. Such simulation, indeed implementation of quantum computing, is called Monte Carlo quantum computing (MCQC), which is not to be confused with the still sign-problem-prone, conventional quantum Monte Carlo simulation of quantum computation \cite{Cerf98}. The \iftoggle{ForUSPTO} {methods and algorithms} {theory and algorithms} presented here not only solve the sign problem that has plagued Monte Carlo simulations in many areas of science and technology, but also open up new avenues for developing and identifying efficient probabilistic algorithms from the vantage point of quantum computing. All known and to be discovered BQP algorithms reduce to BPP solutions. It should be noted that the BPP or BQP class of computational problems as referenced here is not to be understood as limited strictly to the family of decision or promise problems \cite{Even84,Goldreich06} on a classical or quantum computer. Rather, the BPP or BQP class should be broadly interpreted as to represent general types of computational problems that are efficiently solvable on a classical or quantum machine. Indeed, it has been well established and widely known that a great number of computational problems for function evaluation, objective optimization, and matching or solution search, {\it etc}., are reducible or polynomially equivalent to BPP or BQP problems, in that, the answer to a function/optimization/search problem can be obtained efficiently by solving one or a polynomially bounded number of BPP or BQP problem(s). Moreover, it is often quite straightforward in practice to modify and adapt only slightly a BPP or BQP algorithm to an efficient procedure for solving a function/optimization/search problem.

\iftoggle{ForUSPTO} {
\section*{SUMMARY}
Methods, software systems and related computer program products, involved computer-readable source and machine codes as well as numerical data, hardware systems and related physical devices storing, processing, and executing involved computer-readable source and machine codes as well as numerical data, are described in relation to processes of simulating a multivariable system on a classical computer, constructing a multivariable system for quantum computing, and solving a computational problem on a classical computer.

A first process of simulating a multivariable system comprises a step of receiving a multivariable system and a step of generating a plurality of result sampling points. Said multivariable system in turn comprises a configuration space containing configuration points, and a separately frustration-free (SFF) partial Hamiltonian as a combination of a plurality of directly frustration-free (DFF) or ground state frustration-free (GFF) partial Hamiltonians, as well as an objective operator whose expectation value is of interest. Said step of generating a plurality of result sampling points further comprises one step of initializing a current sampling point (CSP) variable, and another step of repeating a first iterative step and a second iterative step for a predetermined number of times, where the first iterative step executes a transition from a CSP to a new sampling point, while the second iterative step records and resets the CSP variable.

A second process of constructing a multivariable system comprises a step of receiving a description of a quantum algorithm, a step of constructing a Feynman-Kitaev (FK) register, and a step of constructing an SFF partial Hamiltonian. Said description of a quantum algorithm in turn comprises a plurality of qubits, an initial state, a linearly ordered sequence of a plurality of quantum gates, and a quantum measurement operator. Said step of constructing a Feynman-Kitaev (FK) register further comprises a sub-step of allocating a plurality of FK logic bits and a plurality of FK clock bits, a sub-step of creating a plurality of FK time projectors, and a sub-step of creating a plurality of FK time propagators, where said plurality of FK time propagators are divided into a first group of first FK time propagators and a second group of second FK time propagators. Said step of constructing an SFF partial Hamiltonian further comprises a sub-step of creating an FK state initializer, a sub-step of creating a first FK state operator as a direct sum of a plurality of first FK state propagators, with each said first FK state propagator being a tensor product between a first FK time propagator and a quantum gate acting on said FK logic bits, and a sub-step of creating a second FK state operator as a direct sum of a plurality of second FK state propagators, with each said second FK state propagator being a tensor product between a second FK time propagator and a quantum gate acting on said FK logic bits.

A third process of solving a computational problem comprises a step of generating data or signals describing a multivariable system and a step of simulating said multivariable system. Said step of generating data or signals describing a multivariable system further comprises a sub-step of allocating a configuration space, a sub-step of creating an SFF partial Hamiltonian, and a sub-step of generating an objective operator. Said step of simulating the multivariable system uses said first process of simulating a multivariable system.
} {}

\iftoggle{ForUSPTO} {
\newcommand{\ListOfDraw} {
Fig. \ref{ThreeWellVx} shows a three-well potential on a circle.

Fig. \ref{ThreeWellPsiEvenOdd} shows the eigenstates $\psi_{\mathsmaller{P}}$, $\psi_+$, and $\psi_-$ in dotted, solid, and dashed lines respectively.

Fig. \ref{ThreeWellPsiPsiLPsiR} shows the eigenstates $\psi_{\mathsmaller{P}}$, $\psi_{\mathsmaller{L}}$, and $\psi_{\mathsmaller{R}}$ in dotted, solid, and dashed lines respectively.

Fig. \ref{FourPhisAndNodalCurves} shows the bi-fermion wavefunctions $\Phi_+(x_1,x_2)$, $\Phi_-(x_1,x_2)$, $\Phi_{\mathsmaller{L}}(x_1,x_2)$, $\Phi_{\mathsmaller{R}}(x_1,x_2)$.

Fig. \ref{FourXsinZcosNodalCurves} shows the ground states of $H_{\mathsmaller{\rm BF},0}+
V_{\mathsmaller{X},\eta\sin\theta}+V_{\mathsmaller{Z},\eta\cos\theta}$ for $\theta=-3\pi/4,-\pi/4,\pi/4,3\pi/4$.

Fig. \ref{SixVertexSegment} shows a $6$-vertex segment of a ring-shaped graph.

Fig. \ref{MethodSimuSyst} illustrates one process of simulating a multivariable system.

Fig. \ref{MethodSimuDens} illustrates another process of simulating a multivariable system.

Fig. \ref{MethodConsFK} illustrates a process of constructing a multivariable system.

Fig. \ref{MethodSolvSyst} illustrates one process of solving a computational problem.

Fig. \ref{MethodSolvDens} illustrates another process of solving a computational problem.
} 

\section*{BRIEF DESCRIPTION OF THE DRAWINGS}
\ListOfDraw

\section*{DETAILED DESCRIPTION}
\subsection*{\bf Separately Frustration-Free Partial Hamiltonians and Systems}
} {
\section{Separately Frustration-Free Hamiltonians/Systems}
}
Although not being stated explicitly, it should be understood that the physics and properties of any quantum system involve such physical constants as the Planck constant, the speed of light in vacuum, the elementary electric charge, and the masses of elementary particles. For succinctness of mathematical expressions, it is good to work with a ``natural'' unit system where the relevant physical constants take the numerical value $1$, or to have the Schr\"odinger and other physical equations nondimensionalized, so that the physical constants do not manifest in the Hamiltonians and other physical quantities (operators).

A theory on a physical system, whether classical or quantum, needs to start with identifying a list of {\em dynamical variables} \cite{Dirac30,Dittrich20} associated with the particles and/or fields comprising the physical system. Commonly used dynamical variables include spatial positions and/or spins of particles and spatial distributions of field intensities or potentials, although other physical quantities may be used as well, and could be particularly convenient for specific problems of interest. Different choices of a list of dynamical variables associated with the same physical system are mathematically and physically equivalent, when they are related through either a canonical transformation \cite{Arnold06,Dittrich20} for a classical theory or a unitary transformation \cite{Dirac30,Weyl31,Greiner01} for a quantum theory, and in both the classical and quantum cases an extension by/to the so-called gauge transformations \cite{Schiff68,Itzykson80,Jackson98,Greiner01,Weinberg05,Sakurai20}.

In the non-relativistic regime, a general many-body quantum system may consist of a fixed number of particles belonging to multiple species of boltzmannons, bosons, and fermions, that interact with and/or through certain classical fields as well as quantized modes of fields that are bosonic in nature, whereas in the relativistic regime, {\it i.e.}, the realm of quantum field theory, both particles and fields could move in and out of quantized modes, and the number of particles or field quanta in any particular quantized mode may not be conserved. For considerations of QMC and quantum computing, if quantum systems consisting of many fermionic particles could be well handled, then bosonic particles and field modes would not impose additional challenges, because indistinguishable bosons can be artificially labeled and a bosonic system/subsystem can be simulated by a boltzmannonic random walk in conjunction with Monte Carlo sampling of particle label permutations. Permutation-averaging $N$-particle Gibbs wavefunctions, $N\in\mathbb{N}$ amounts to calculating the permanent of an $N\times N$ non-negative matrix, which is known to admit a fully polynomial randomized approximation scheme \cite{Jerrum89,Jerrum04}, that, with a specified approximation error that is bounded by a polynomial of the problem input size, can be cast into a decision version that belongs to the class of BPP promise problems \cite{Even84,Goldreich06}. Another excuse for not having to emphasize and work explicitly on bosons in the present considerations of QMC and quantum computing is that a quantum system consists only of fermions can be configured into a universal quantum computer/simulator, which can be so programmed that its effective physics is a homomorphic image, or called a homophysical image (the notion of homophysics will be precisely defined shortly), of any general quantum system containing particles of any type of statistics. Furthermore, an individual boltzmannon can be regarded as a single fermion of a unique species. Therefore, it is without loss of generality to focus on many-body quantum systems consisting of multiple species of fermions. A quantum system and its Hamiltonian are said to be {\em many-fermion} when they involve at least two identical fermions of the same particle species.

In a canonically quantized theory \cite{Dirac30,Itzykson80,CanoQuanWiki} with the position (that is, the {\em canonical coordinate} \cite{Dirac30}) representation, a quantum system could involve both continuous and discrete dynamical variables being represented by continuous and discrete canonical coordinates \cite{Dirac30}, thus move in a Cartesian product between a connected continuous topological space \cite{Munkres00} and a finite discrete point set. Physical motions and quantum mechanics in a finite discrete space can be treated nicely and simply using the theory of finite-dimensional vector spaces \cite{Halmos93} and matrices, as epitomized by the familiar physics and theory of the electron spin and Pauli matrices. By contrast, quite a bit of advanced and sophisticated mathematical machinery is required to render a quantum theory of continuous motions in a topological space with sufficient generality and necessary rigor, that involves topological manifolds, Riemannian differential geometry, and functional analysis, in particular, Wiener measures and path integrals, operator analysis, and spectral theory.

\begin{definition}{(Configuration Space \cite{Greiner01})}\label{defiConfigSpace}\\
In general, a configuration space $\calC$ refers to a Cartesian product $\calC\defeq\calM\times\calP$ between a connected continuous topological space $\calM$ and a finite discrete (topological) space $\calP$, that is a finite point set $\calP$ endowed with the discrete topology. A point in such a configuration space $\calC$ is meant to represent the values of a chosen list of dynamical variables associated with a plurality of constituent particles and/or fields comprising a quantum system.
\vspace{-1.5ex}
\end{definition}

A configuration space $\calC$ defined as such includes two special cases, one of which being with $\calM=\emptyset$ or $\calM$ containing but a single isolated point, when it reduces to $\calC\cong\calP$, while the other with $\calP=\emptyset$ or $\calP$ containing just one discrete point, when it reduces to $\calC\cong\calM$. Notwithstanding the adoption of a generic Cartesian product $\calC=\calM\times\calP$ representing an all-encompassing continuous-discrete product configuration space, and accordingly, our generality-striving mathematical formulations in this \iftoggle{ForUSPTO} {specification} {presentation}, it is worth noting that an actual or model physical system with either all discrete or all continuous dynamical variables can be made universal, in terms of both numerical simulations for or based on it and the computational power of it as a computing machine.

In practices of numerical computations as well as complexity analyses, every continuous coordinate can be discretized into a lattice at a finite resolution, with the entailed discretization error being controlled to decrease at least polynomially as the resolution gets finer. Certain required accuracy in approximating the quantum wave distribution or the expectation value of an observable sets an upper bound for the grid size of discretization. Such discretization is widely adopted in various fields like digital signal processing, computational physics, and lattice (quantum) field theory, where a finite grid resolution effects an upper limit to the frequency or energy scale, that, if set sufficiently high, should not affect the low-frequency dynamics or low-lying energy states of interests.

Conversely, any discrete dynamical variable, such as a spin-$\half$, can be homophysically simulated (in a sense to be precisely defined shortly) by a subspace of low-energy states of a particle moving in a continuous space with a multi-well potential, where the well depths and tunneling strengths between wells are suitably designed, such that well-localized wave packets resemble discretely valued states of the discrete dynamical variable, and tunnelings of wave packets between potential wells simulate couplings between the discretely valued states of the discrete dynamical variable in question. Indeed, one interesting development later in this \iftoggle{ForUSPTO} {specification} {presentation} will be to construct a system of two fermions moving continuously on a circle that simulates/realizes a qubit, or to be exact, a rebit, which is a quantum two-state system with the wavefunction restricted to be real-valued.

Consider a physical system consisting of $S\in\mathbb{N}$ fermion species, each of which indexed by $s\in[1,S]$ has $n_s\in\mathbb{N}$ identical particles moving in a configuration space $\calC_s\defeq\calM_s\times\calP_s$, with $\calP_s$ being a finite set of discrete points having a cardinality $d^{\mathsmaller{D}}_s\defeq|\calP_s|\in\mathbb{N}$, and $\calM_s\defeq(M_s,\calT_s,\calA_s)$ being a $(d^{\mathsmaller{C}}_s\in\mathbb{N})$-dimensional $C^k$-differentiable manifold, $k\ge 1$, that is a connected second-countable Hausdorff topological space of a set $M_s$ with a topology $\calT_s$, equipped additionally with a $C^k$-smooth atlas $\calA_s$, which is a collection of charts (also called coordinate patches) $\calA_s\defeq\{(U_{\alpha},\varphi_{\alpha})\}_{\alpha\in A_s}$, where $A_s$ is an index set, $\{U_{\alpha}\}_{\alpha\in A_s}$ is an open cover of $M_s$, and $\forall\alpha\in A_s$, $\varphi_{\alpha}:U_{\alpha}\mapsto\varphi_{\alpha}(U_{\alpha})\subseteq\mathbb{R}^{d_s}$ is a homeomorphism from $U_{\alpha}$ into $\varphi_{\alpha}(U_{\alpha})$; furthermore, $\forall\alpha,\beta\in A_s$ such that $U_{\alpha}\cap U_{\beta}\neq\emptyset$, the charts $(U_{\alpha},\varphi_{\alpha})$ and $(U_{\beta},\varphi_{\beta})$ are $C^k$-compatible, namely, the chart transition (also called a change of coordinates) $\varphi_{\beta}\circ\varphi_{\alpha}^{\mathsmaller{-}1}:\varphi_{\alpha}(U_{\alpha}\cap U_{\beta})\mapsto\varphi_{\beta}(U_{\alpha}\cap U_{\beta})$ is $C^k$-smooth; still further, the atlas $\calA_s$ is assumed maximal, in the sense that, if $(U,\varphi)$ is any chart $C^k$-compatible with all $(U_{\alpha},\varphi_{\alpha})$, $\alpha\in A_s$, then $(U,\varphi)\in\calA_s$. The topological space $(M_s,\calT_s)$ is called locally Euclidean by the equipment of an atlas. Being both locally Euclidean and second-countable implies that the manifold $\calM_s$ as a topological space is locally compact, normal, metrizable, paracompact, path-connected, separable, and admits a countable partition of unity $\{\rho_i\}_{i\in\mathbb{N}}$ subordinate to a predetermined open cover of $M_s$, such that $\forall i\in\mathbb{N}$, $\rho_i\ge 0$, $\forall p\in M_s$, $\sum_{i=1}^{\infty}\rho_i(p)=1$, where the functions $\{\rho_i\}_{i\in\mathbb{N}}$ are $C^k$-smooth, the collection of supports $\{\supp(\rho_i)\}_{i\in\mathbb{N}}$ is locally finite \cite{Warner83,Lang99}; either the predetermined open cover is also countable as $\{U_i\}_{i\in\mathbb{N}}$, and $\supp(\rho_i)\subseteq U_i$ holds $\forall i\in\mathbb{N}$, or else, the $\{\rho_i\}_{i\in\mathbb{N}}$ functions and the open cover sets are not necessarily in one-to-one correspondence, but then $\{\supp(\rho_i)\}_{i\in\mathbb{N}}$ can be assumed all compact \cite{Warner83}. The configuration space $\calC_s=\calM_s\times\calP_s$ is considered to possess a product topology, with $\calP_s$ being interpreted as a compact discrete topological space.

To better facilitate analysis and physics therein, the differentiable manifold $\calM_s$ is further assumed to be equipped with a smooth metric $\{g_p\}_{p\in M_s}$ to make $(\calM_s,g)$ a Riemannian manifold, with the metric $g$ defining an inner product on each tangent space $T_pM_s$, $p\in M_s$, and inducing naturally a length measure $L_g(\gamma)\defeq\int_a^b\langle\gamma'(t)|\gamma'(t)\rangle^{1/2}_gdt$ along any piecewise $C^1$-smooth path $\gamma:\mathbb{R}\supseteq[a,b]\mapsto M_s$ \cite{Lang99,Sakai96}, which in turn defines a distance function $\dist_g(q,r)$, or just $\dist(q,r)$ when there is no risk of ambiguity, as $\dist_g(q,r)\defeq\inf\mathlarger{\mathlarger{\{}}\int_{\gamma}dL_g(\gamma):\gamma\in\Gamma_{[q,r]}(M_s)\mathlarger{\mathlarger{\}}}$, $\forall q,r\in M_s$, with
\begin{equation}
\Gamma_{[q,r]}(M_s)\defeq\left\{\gamma:[a,b]\mapsto M_s,\gamma\;{\rm is\;piecewise}
\;C^1\myhyphen{\rm smooth},\gamma(a)=q,\gamma(b)=r\right\}
\end{equation}
being the set of all piecewise $C^1$-smooth paths on $M_s$ connecting $q$ and $r$, and $\arg\inf\mathlarger{\mathlarger{\{}}\int_{\gamma}dL_g(\gamma):\gamma\in\Gamma_{[q,r]}(M_s)\mathlarger{\mathlarger{\}}}$ being the minimal geodesic on $M_s$ between $q$ and $r$. The distance function $\dist_g(\cdot,\cdot)$ metricizes $M_s$ and the metric topology $(M_s,\dist_g)$ coincides with the atlas topology $\calT_s$. Also, the metric $g$ induces a volume measure $V_g$ on a suitable measurable space $(M_s,\Sigma_s)$ with $\Sigma_s$ being a certain $\sigma$-algebra of subsets of $M_s$ that contains the topology $\calT_s$, where for each $U\in\calT_s$, the volume $V_g(U)\defeq\int_UdV_g\defeq\sum_{i\in\mathbb{N}}\int_{\varphi_{\alpha(i)}(U)}\left[|\det(g)|^{1/2}\rho_i\right]\circ{\raisebox{-.1\height}{$\varphi_{\alpha(i)}^{\mathsmaller{-}1}\,dx_{\alpha(i)}^1\cdots dx_{\alpha(i)}^{d_s}$}}$ \cite{Lang99,Sakai96}, with $\{\rho_i\}_{i\in\mathbb{N}}$ being a countable partition of unity subordinate to the atlas $\{(U_{\alpha},\varphi_{\alpha})\}_{\alpha\in A_s}$, that is, $\forall i\in\mathbb{N}$, there exists an $\alpha(i)\in A_s$ such that $\supp(\rho_i)\subseteq U_{\alpha(i)}$. It is further assumed that the measure space $(M_s,\Sigma_s,V_g)$ is $\sigma$-finite. The smallest $\Sigma_s$ possible is the Borel $\sigma$-algebra of $\calT_s$, denoted by $\Sigma_s^{\mathsmaller{B}}$, on which $V_g$ defines a Borel measure. The Borel measure space $(M_s,\Sigma_s^{\mathsmaller{B}},V_g)$ is separable \cite{Halmos74,Bogachev07} in the sense that, the quotient space of $\Sigma_s^{\mathsmaller{B}}$ modulo sets of measure zero is separable as a topological space, induced by the metric $d([E],[F])\defeq V_g(E\Delta F)\defeq V_g((E\cap F^c)\cup(E^c\cap F))$, called the {\em measure of symmetric difference}, $\forall E,F\in\Sigma_s^{\mathsmaller{B}}$, where $[X]\in\Sigma_s^{\mathsmaller{B}}/V_g$ represents the equivalence class of $X\in\Sigma_s^{\mathsmaller{B}}$ modulo sets of measure zero, that is, $[X]\defeq\{Y\in\Sigma_s^{\mathsmaller{B}}:V_g(X\Delta Y)=0\}$. A drawback of the Borel measure is incompleteness. Fortunately, the Lebesgue measure, being separable and complete as desired, can be constructed by defining an outer measure $V_g^*$ for any subset $X\subseteq M_s$ as $V_g^*(X)\defeq\inf\left\{\sum_{i\in\mathbb{N}}V_g(U_i):\{U_i\}_{i\in\mathbb{N}}\subseteq\calT_s,\bigcup_{i\in\mathbb{N}}U_i\supseteq X\right\}$, calling any subset $Y\subseteq M_s$ Lebesgue measurable if $V_g^*(X)=V_g^*(X\cap Y)+V_g^*(X\cap Y^c)$ holds $\forall X\subseteq M_s$, and identifying $V_g^*$ on the $\sigma$-algebra $\Sigma_s^{\mathsmaller{L}}$ of Lebesgue measurable sets as the Lebesgue measure $V_g$ \cite{Halmos74,Bogachev07}.

In a canonically first-quantized theory with the coordinate representation, the $n_s$ indistinguishable particles of each fermion species $s\in S$ may be artificially labeled, so that their spatial configuration, namely, the collective local coordinates of the $n_s$ identical fermions, can be represented by a point on a product space $\calC_s^{n_s}\defeq(\calM_s\times\calP_s)^{n_s}\defeq\prod_{n=1}^{n_s}(\calM_s\times\calP_s)\cong\calM_s^{n_s}\times\calP_s^{n_s}$, with $\calM_s^{n_s}$ being a Riemannian manifold as an induced manifold structure on the Cartesian set product \cite{Lang99,Tu11}, and $\cong$ signifying a topological/manifold isomorphism. For convenience to make references {\it infra}, the configuration space $\calC_s=\calM_s\times\calP_s$ as a spatial substrate that hosts a number of identical particles of species $s$ is called the {\em substrate space}, and $d_s\defeq\dim(\calC_s)\defeq\dim(\calM_s)+\log_2|\calP_s|$ is called the {\em substrate space dimension}, where $\dim(\calM_s)$ is given by the standard definition of dimension, or called topological dimension, of a connected topological space \cite{Engelking78,Munkres00}, which for a well-behaved topological space, particularly a differentiable manifold $\calM_s$, coincides with the minimum number of independent numerical values, also called {\em scalar components}, that are needed to label and distinguish all of the points in $\calM_s$, while $|\calP_s|$ denotes the cardinality of $\calP_s$, namely, the number of discrete points in the set $\calP_s$, and $\log_2|\calP_s|$ characterizes the minimum number of bits needed in a binary string used to label and distinguish all of the discrete points in $\calP_s$. By contrast, the product manifold $\calC_s^{n_s}$ for describing the spatial configurations of $n_s\in\mathbb{N}$ particles is called the {\em configuration space}, or to emphasize, the {\em many-body configuration space}, which is said to have a {\em configuration space dimension} $\dim(\calC_s^{n_s})\defeq n_sd_s$. Naturally, the interactions among particles and the wave function of the whole system consisting of $S$ fermion species can be treated mathematically under the auspices of a (product) configuration space $\calC\defeq\prod_{s=1}^{\mathsmaller{S}}\calC_s^{n_s}\cong\left(\prod_{s=1}^{\mathsmaller{S}}\calM_s^{n_s}\right)\mathlarger{\mathlarger{\times}}\left(\prod_{s=1}^{\mathsmaller{S}}\calP_s^{n_s}\right)\defeq\calM\times\calP$, which is endowed with the product topology, the product atlas, and the product Riemannian metric still denoted by $g$, together with the product length measure for paths and the product volume measure still being denoted by $L_g$ and $V_g$ respectively. A note is in order on the notation of a {\em configuration point} \cite{Greiner01} or a coordinate variable on a substrate space $\calC_s=\calM_s\times\calP_s$ or a configuration space $\calC=\calM\times\calP$. In many situations, a single variable like $q\in\calC_s$ or $q\in\calC$ would be sufficient and self-explanatory. But when it is necessary or otherwise useful, an ordered pair such as $(q^{\mathsmaller{C}},q^{\mathsmaller{D}})\in\calC_s=\calM_s\times\calP_s$ or $(q^{\mathsmaller{C}},q^{\mathsmaller{D}})\in\calC=\calM\times\calP$ may be used to coordinate a point in a substrate or configuration space, with the continuous and discrete canonical coordinates written down explicitly and separately.

On a many-body configuration space $\calC$, the above-named product length measure $L_g$  induces a product distance function $\dist_g(q,r)\defeq\inf\mathlarger{\mathlarger{\{}}\!\int_{\gamma}dL_g(\gamma):\gamma\in\Gamma_{[q,r]}(\calC)\mathlarger{\mathlarger{\}}}$, $\forall q,r\in\calC$, which respects the product topology on $\calC$. The overall {\em configuration space dimension} is obviously $\dim(\calC)\defeq\sum_{s=1}^{\mathsmaller{S}}n_sd_s$. The redundancy due to the artificial labeling of identical particles can and has to be removed by requiring the wavefunction to satisfy the exchange symmetry under permutations of identical particles. Specifically, for the system of $S$ fermion species, let $\left(q_1,\cdots\!,q_{\mathsmaller{S}}\right)\in\prod_{s=1}^{\mathsmaller{S}}\calC_s^{n_s}$ represent a point in the configuration space $\calC$, with $q_s\defeq\!\left(q_{s1},\cdots\!,q_{sn_s}\right)$, $1\le s\le S$ denoting the configuration of the $n_s$ artificially labeled identical fermions of the $s$-th species, where $\forall s\in[1,S]$, $\forall n\in[1,n_s]$, $q_{sn}\defeq(x_{sn1},\cdots\!,x_{snd_s},v_{sn})$, with $v_{sn}\in\calP_s$, $(x_{sn1}\defeq\varphi_{sn1},\cdots\!,x_{snd_s}\defeq\varphi_{snd_s})$ being a chart (coordinate patch) on an $\calM_s$ manifold, then any $\mathbb{K}$-valued wavefunction
\begin{equation}
\psi\!\left(q_1,\cdots\!,q_{\mathsmaller{S}}\right)\!\defeq\psi\!\left(\left(q_{11},\cdots\!,q_{1n_1}\right)\!,\cdots\!,\left(q_{s1},\cdots\!,q_{sn_s}\right)\!,\cdots\!,\left(q_{\mathsmaller{S}1},\cdots\!,q_{\mathsmaller{S}n_{\mathsmaller{S}}}\right)\right)
\end{equation}
must be in the Hilbert space $L_{\!\mathsmaller{F}}^2(\calC)\defeq\!\left\{\psi\in L^2(\calC)\defeq L^2(\calC;\mathbb{K}):\psi\circ\pi=\sign(\pi)\psi,\forall\pi\in G_{\rm ex}\right\}$, with a scalar field $\mathbb{K}\in\{\mathbb{R},\mathbb{C}\}$, and a group $G_{\rm ex}\defeq\prod_{s=1}^{\mathsmaller{S}}G_s$ called the {\em exchange symmetry group}, where $\forall s\in[1,S]$, $G_s$ is the symmetry group of $n_s$ labels \cite{Sagan01}, {\it i.e.}, the symmetry group of exchanging the $n_s$ identical fermions of the $s$-th species, so that for each $(\pi\in G_{\rm ex})=\prod_{s=1}^{\mathsmaller{S}}(\pi_s\in G_s)$, it holds
\begin{equation}
\psi\circ\pi\!\left(q_1,\cdots\!,q_{\mathsmaller{S}}\right)\!\defeq\psi\!\left(\pi_1q_1,\cdots\!,\pi_{\mathsmaller{S}}q_{\mathsmaller{S}}\right)=\,\sign(\pi)\psi\!\left(q_1,\cdots\!,q_{\mathsmaller{S}}\right) ,
\end{equation}
with $\sign(\pi)\defeq\prod_{s=1}^{\mathsmaller{S}}\sign(\pi_s)$, where
\begin{equation}
\pi_sq_s=\pi_s\!\left(q_{s1},\cdots\!,q_{sn_s}\right)\!\defeq\!\left(q_{s\pi_s(1)},\cdots\!,q_{s\pi_s(n_s)}\right)\!,\;\forall s\in[1,S] \,,
\end{equation}
and $\sign(\pi_s)=\pm 1$ denotes the sign of the permutation $\pi_s\in G_s$ \cite{Sagan01}. Alternatively, the redundancy in the coordinate representation for the system of $S$ fermion species can be removed by using the quotient manifold $\calC/G_{\rm ex}$ of the configuration space $\calC$ modulo the exchange symmetry group $G_{\rm ex}$, whose action on $\calC$ is obviously homeomorphic \cite{Rotman99,Kerber99,Lin94,Bekka00,Jacobson09,Rose09}. More formally, the quotient manifold $\calC/G_{\rm ex}$ forms an orbifold \cite{Thurston78,Adem07}, on which a typical point $[q]\in\calC/G_{\rm ex}$ is actually an orbit $G_{\rm ex}q\defeq\{\pi q:\pi\in G_{\rm ex}\}$ of a certain $q\in\calC$, and a $\mathbb{K}$-valued function can be defined and take value on each $[q]\in\calC/G_{\rm ex}$ independently. The set of square Lebesgue-integral functions on $\calC/G_{\rm ex}$, called wavefunctions on $\calC/G_{\rm ex}$, form a Hilbert space $L^2(\calC/G_{\rm ex})$, which is isomorphic to $L_{\!\mathsmaller{F}}^2(\calC)$, with both being regarded as one and the same. It is worth noting that the Lebesgue spaces $L^2(\calC)$ and $L_{\!\mathsmaller{F}}^2(\calC)$ are necessarily separable Hilbert spaces due to the separability of the Lebesgue measure space $(\calC,\Sigma^{\mathsmaller{L}}(\calC),V_g)$, with $\Sigma^{\mathsmaller{L}}(\calC)=\prod_{s=1}^{\mathsmaller{S}}\!\left(\Sigma_s^{\mathsmaller{L}}\right)^{n_s}$ representing the product $\sigma$-algebra.

Many-body systems considered in this \iftoggle{ForUSPTO} {specification} {presentation} are always assumed to consist of a potentially large number $S\in\mathbb{N}$ of fermion species, where each species $s\in[1,S]$ has at most a fixed number $n_{\max}\in\mathbb{N}$ of identical fermions that move on a substrate space $\calC_s=\calM_s\times\calP_s$ whose substrate space dimension $d_s$ does not exceed a fixed number $d_{\max}\in\mathbb{N}$. As the number $S$ of species and the overall configuration space dimension $\sum_{s=1}^{\mathsmaller{S}}n_sd_s$ increase indefinitely, of great interest is the asymptotic scaling of the computational complexity for simulating such many-body systems.

Since Dirac and von Neumann \cite{Dirac30,vonNeumann32} laid down the foundations, a quantum theory is customarily formulated through a Hilbert space $\calH$ of state vectors and a space $\calL(\calH)$ of linear operators on $\calH$ for physical observables. With a configuration space $\calC$ being chosen, $\calH(\calC)\defeq L^2(\calC)$ and $\calH_{\!\mathsmaller{F}}(\calC)\defeq L_{\!\mathsmaller{F}}^2(\calC)$ are among the most often used Hilbert spaces, although other convenient choices, such as the Sobolev spaces \cite{Hebey99,Adams03,Mazya11} $\calH^1(\calC)\defeq W^{1,2}(\calC)\defeq\{\psi\in L^2(\calC):\partial\psi\in L^2(\calC)\}$ and $\calH_{\!\mathsmaller{F}}^1(\calC)\defeq W^{1,2}(\calC)\defeq\{\psi\in L_{\!\mathsmaller{F}}^2(\calC):\partial\psi\in L^2(\calC)\}$, are employed occasionally, where $\forall\psi\in L^2(\calC)$, $\partial\psi$ denotes the vector of first-order weak partial derivatives of $\psi$ as distributions. Having a Hilbert space $\calH(\calC)$ chosen and understood, let $\calL(\calC)\defeq\calL(\calH(\calC))$ denote the set of linear operators from $\calH(\calC)$ to $\calH(\calC)$ that are possibly unbounded but must be densely defined and closable, $\calL_0(\calC)\subseteq\calL(\calC)$ denote the subset consisting of self-adjoint operators, and $\calB(\calC)\defeq\calB(\calH(\calC))$ denote the Banach space of bounded linear operators \cite{Conway90,Kato80,Blanchard15}. A general Hilbert space $\calH(\calC)$ and the sets of operators $\calL(\calC)$, $\calL_0(\calC)$, $\calB(\calC)$, as well as the quantum theory based on them are said to be supported by the configuration space $\calC$. The quantum physics of such a physical system is governed by one particular self-adjoint operator $H\in\calL_0(\calC)$, called the Hamiltonian, that can be defined through a semibounded symmetric quadratic form $Q_{\!\mathsmaller{H}}(\phi,\psi)$ over a dense subset $\dom(H)$ of the chosen and understood Hilbert space $\calH(\calC)$ \cite{Kato80,Blanchard15}, such that, formally, the Hamiltonian $H$ acts on Dirac bra and ket vectors to make $\langle\phi|H|\psi\rangle\defeq Q_{\!\mathsmaller{H}}(\phi,\psi)$, $\forall\phi,\psi\in\calH'\defeq\dom(H)\subseteq\calH(\calC)$. It is often necessary and/or convenient to consider a restriction of $H$ to a closed Hilbert subspace $\calH\subseteq\calH'$, denoted as $H|_{\mathsmaller{\calH}}:\calH\mapsto\calH$, which is a self-adjoint operator defined on $\calH$ such that $\langle\phi|(H|_{\mathsmaller{\calH}})|\psi\rangle=\langle\phi|H|\psi\rangle$, $\forall\phi,\psi\in\calH$, in other words, $H|_{\mathsmaller{\calH}}=P_{\mathsmaller{\calH}}HP_{\mathsmaller{\calH}}$, now as an operator defined over the entire $\calH'$, where $P_{\mathsmaller{\calH}}$ represents the orthogonal projection onto the closed Hilbert subspace $\calH$. Conversely, the Hamiltonian $H$ is called an extension of $H|_{\mathsmaller{\calH}}$ from the Hilbert space $\calH$ to a Hilbert superspace $\calH'$.

Let $\calD$ be an open subset of the configuration space $\calC$, and $\Diri(\calD,p)\defeq\{\psi\in L^p(\calC):\psi(q)=0,\,\forall q\in\calC\setminus\calD\}$, $p\in\mathbb{R}$, $p\ge 1$ be the closed subspace of  $L^p$-integrable and Dirichlet boundary-conditioned functions. In particular, with $p$ defaulting to the most common value $2$, let $\Diri(\calD)\defeq\Diri(\calD,p=2)$ denote the closed Hilbert subspace of Dirichlet boundary-conditioned functions, called the Dirichlet space on $\calD$. One particularly useful restriction of a Hamiltonian $H\in\calL_0(\calC)$ is $H|_{\mathsmaller{\Diri(\calD)}}=P_{\mathsmaller{\Diri(\calD)}}HP_{\mathsmaller{\Diri(\calD)}}$, or written as $H|_{\mathsmaller{\calD}}=P_{\mathsmaller{\calD}}HP_{\mathsmaller{\calD}}$ in a shorthand manner, where $\cdot|_{\mathsmaller{\calD}}\defeq\cdot|_{\mathsmaller{\Diri(\calD)}}$ and $P_{\mathsmaller{\calD}}\defeq P_{\mathsmaller{\Diri(\calD)}}$ denote respectively a restriction and the orthogonal projection to the Dirichlet space $\Diri(\calD)$. Formally, $H|_{\mathsmaller{\calD}}$ can be represented as $H|_{\mathsmaller{\calD}}=H+V_{\mathsmaller{\calD}}(q)$, $q\in\calC$, interpreted as an extended operator form sum \cite{Kato80,Blanchard15}, where $V_{\mathsmaller{\calD}}(q)=0$, $\forall q\in\calD$, $V_{\mathsmaller{\calD}}(q)=+\infty$, $\forall q\notin\calD$ confines the quantum wave to within the domain $\calD$. A general operator $O\in\calL(\calC)$ is said to satisfy the {\em Hopf lemma} (in $\calC$), also known as the {\em Hopf-Oleinik lemma}, as well as the {\em strong version of the Hopf extremum principle} \cite{Hopf52,Oleinik52,Gilbarg01,Evans10}, if for any given open subset $\calD\subseteq\calC$, and any Dirichlet eigenfunction $\psi\in\Diri(\calD,p)$ of $O$ such that $O\psi = \lambda\psi$ in $\calD$, $\lambda\in\mathbb{K}$, $\mathbb{K}\in\{\mathbb{R},\mathbb{C}\}$, it holds true that $\|\partial\psi(q)\|\neq 0$ for all $q\in\partial\calD$, where $\partial\psi$ represents the vector of first-order partial derivatives of $\psi$, $\|\!\cdot\!\|$ is a certain norm for such vectors, while $\partial\calD$ denotes the boundary of $\calD$; either $p=2$ is implied as conventional, or otherwise, a value for $p\in\mathbb{R}$, $p>1$ will be given explicitly. Conversely, an operator $O\in\calL(\calC)$ is said to satisfy the {\em anti-Hopf lemma} on a specific open subset $\calD\subseteq\calC$, when each Dirichlet eigenfunction $\psi\in\Diri(\calD,p)$, $p\in\mathbb{R}$, $p>1$ of $O$ such that $O\psi = \lambda\psi$ in $\calD$, $\lambda\in\mathbb{K}$, $\mathbb{K}\in\{\mathbb{R},\mathbb{C}\}$ also satisfies the Neumann boundary condition, namely, $\partial\psi(q)=0$ for all $q\in\partial\calD$.

Another useful Hamiltonian restriction has to do with the mathematical description for a system of many identical fermions associated with a configuration space $\calC$, where the governing fermionic Hamiltonian $H_{\!\mathsmaller{F}}$ is a self-adjoint operator on the Hilbert subspace $L_{\!\mathsmaller{F}}^2(\calC)\subseteq L^2(\calC)$, that is always assumed to be derived from $H_{\!\mathsmaller{F}}\defeq H_{\!\mathsmaller{B}}|_{L_{\!\mathsmaller{F}}^2(\calC)}=P_{\!\mathsmaller{F}}H_{\!\mathsmaller{B}}P_{\!\mathsmaller{F}}$, where $H_{\!\mathsmaller{B}}\in\calL_0(\calC)$ is a boltzmannonic Hamiltonian, {\it i.e.}, a Hamiltonian associated with a system of all distinguishable particles, that is exchange-symmetric as $[H_{\!\mathsmaller{B}},P_{\!\mathsmaller{F}}]=0$ on $L^2(\calC)$, while $P_{\!\mathsmaller{F}}\defeq P_{L_{\!\mathsmaller{F}}^2(\calC)}$ is the orthogonal projection from $L^2(\calC)$ onto $L_{\!\mathsmaller{F}}^2(\calC)$, such that $\forall\psi\in L^2(\calC)$, $P_{\!\mathsmaller{F}}\psi\defeq(\prod_{s=1}^{\mathsmaller{S}}n_s!)^{\mathsmaller{-}1/2}\sum_{\pi\in G_{\rm ex}}[\sign(\pi)\times\psi\circ\pi]\in L_{\!\mathsmaller{F}}^2(\calC)$. Such exchange-symmetric $H_{\!\mathsmaller{B}}$ associated with each fermionic Hamiltonian $H_{\!\mathsmaller{F}}$ is called the base boltzmannonic Hamiltonian of $H_{\!\mathsmaller{F}}$, written symbolically as $H_{\!\mathsmaller{B}}=\Base(H_{\!\mathsmaller{F}})$. In the following, some Hamiltonians may have the subscript ``$_{\mathsmaller{B}}$'' or ``$_{\mathsmaller{F}}$'' omitted, when it is unnecessary to make the distinction, or there is no ambiguity in the context as to which is the case. By a slight abuse of notation, let $\Base(H_{\!\mathsmaller{B}})=H_{\!\mathsmaller{B}}$ when the operand $H_{\!\mathsmaller{B}}$ is already a boltzmannonic Hamiltonian, so that the operator $\Base(\cdot)$ may be applied generally to any exchange-symmetric Hamiltonian, either fermionic or boltzmannonic. Although it may not be used frequently in this \iftoggle{ForUSPTO} {specification} {presentation}, the notion of base boltzmannonic Hamiltonian can be straightforwardly generalized to a bosonic Hamiltonian describing a system of multiple identical bosons, as well as a Hamiltonian representing a mixed system consisting of a multitude of particle species, obeying possibly different types of statistics: boltzmannonic, bosonic, or fermionic.

\begin{definition}{(Quantum Physics and Systems)}\label{defiQuantumPhysics}\\
A quantum physics/system, or called a physics/system in short, is an ordered triple $(\calC,\calH,\calB)$, with $\calC$ being a generic continuous-discrete product configuration space, $\calH\defeq\calH(\calC)$ being a Hilbert space of state vectors supported by $\calC$, and $\calB\defeq\calB(\calH)$ being a Banach algebra of bounded linear operators on $\calH$, which contains a strongly continuous semigroup of Gibbs operators $\{\exp({-}\tau H)\}_{\tau}$ parametrized by an imaginary time $\tau\in[0,\infty)\}$ and generated by a lower-bounded and self-adjoint operator $H$, thereby designated as the Hamiltonian measuring the total energy of said quantum system.
\vspace{-1.5ex}
\end{definition}

A lower-bounded and self-adjoint operator will be called a {\em partial Hamiltonian} for brevity. For any partial Hamiltonian $h\in\calL_0(\calH)$, a shorthand notation $h\in\log\calB$ will be used to signify the fact a certain constant $\tau\in\mathbb{R}$, $\tau>0$ exists such that $\exp(-\tau h)\in\calB$.

\begin{definition}{(Dimension and Computational Size of a Quantum Physics/System)}\label{defiSizeQuanPhys}\\
For a given quantum physics/system $(\calC,\calH,\calB)$ associated with a generic continuous-discrete product configuration space $\calC=\calM\times\calP$, with $\calM\cong\prod_{s=1}^{\mathsmaller{S}}\calM_s^{n_s}$ and $\calP\cong\prod_{s=1}^{\mathsmaller{S}}\calP_s^{n_s}$, the dimension of the quantum physics/system is identified with $\dim(\calC)=\sum_{s=1}^{\mathsmaller{S}}n_s[\dim(\calM_s)+\log_2|\calP_s|]$, while the computational size of the quantum physics/system is identified with the computational size of $\calC$ defined as $\size(\calC)\defeq\dim(\calC)+\diam(\calC)$, where $\diam(\calC)\defeq[\diam(\calM)^2+\diam(\calP)^2]^{1/2}$ is called the diameter of $\calC$, with $\diam(\calM)\defeq\sup\{\dist_g(q,r):(q,r)\in\calM^2\}$ and $\diam(\calP)\defeq{\mathlarger{\mathlarger{(}}}\!\sum_{s=1}^{\mathsmaller{S}}|\calP_s|^2{\mathlarger{\mathlarger{)}}}^{1/2}$ being called the diameters of $\calM$ and $\calP$ respectively. The designated Hamiltonian $H\in\log\calB$ measuring the total energy of the quantum physics/system $(\calC,\calH,\calB)$ is also said to have a computational size $\size(H)\defeq\size(\calC)$.
\end{definition}

\begin{definition}{(Homophysical and Isophysical Mapping between Quantum Physics/Systems)}\label{defiHomoIsophysics}\\
Given two quantum physics/systems $(\calC_1,\calH_1,\calB_1)$ and $(\calC_2,\calH_2,\calB_2)$, with each configuration space $\calC_i$, $i\in\{1,2\}$ being associated with an implied measure space $(\calC_i,\calF_i,V_{g_i})$, $\calF_i$ being a $\sigma$-algebra of measurable subsets of $\calC_i$ and containing the supports of all of the wavefunctions in $\calH_i$, $i\in\{1,2\}$, a homophysical mapping $\mathfrak{M}:(\calC_1,\calH_1,\calB_1)\mapsto(\calC_2,\calH_2,\calB_2)$, also called a homophysics from $(\calC_1,\calH_1,\calB_1)$ to $(\calC_2,\calH_2,\calB_2)$, is an injective mapping that sends any measurable set $\calD_1\in\calF_1$ to a measurable set $\calD_2\defeq\mathfrak{M}(\calD_1)\in\calF_2$ which is unique modulo a set of $V_{g_2}$-measure zero, maps any $\psi_1\in\calH_1$ to a unique $\psi_2\defeq\mathfrak{M}(\psi_1)\in\calH_2$, and sends any $O_1\in\calB_1$ to a unique $O_2\defeq\mathfrak{M}(O_1)\in\calB_2$, such that 1) the correspondence $\calF_1\ni\calD\mapsto\mathfrak{M}(\calD)\in\calF_2$ embeds the measure space $(\calC_1,\calF_1,V_{g_1})$ into $(\calC_2,\calF_2,V_{g_2})$ \cite{MeasureSpaceEncycMath,Koushesh14}; 2) $\calH_1\ni\psi\mapsto\mathfrak{M}(\psi)\in\calH_2$ embeds the Hilbert space $\calH_1$ into $\calH_2$; 3) $\calB_1\ni O\mapsto\mathfrak{M}(O)\in\calB_2$ embeds the Banach algebra $\calB_1$ into $\calB_2$; 4) there exists a constant $c\in\mathbb{R}$, $c>0$, called the homophysical constant, with which $\langle\mathfrak{M}(\psi)|\mathfrak{M}(O)|\mathfrak{M}(\phi)\rangle=c\langle\psi|O|\phi\rangle$ holds $\forall\psi,\phi\in\calH_1$, $\forall O\in\calB_1$; 5) in case the physics/systems $(\calC_i,\calH_i,\calB_i)$ are variable and supported by configuration spaces $\calC_i$, $i\in\{1,2\}$ of varying computational sizes, it is further required that $\size(\calC_2)$ should be upper-bounded by a fixed polynomial of $\size(\calC_1)$, and the homophysical constant $c$ satisfy the condition of $c+c^{\mathsmaller{-}1}$ being upper-bounded by another fixed polynomial of $\size(\calC_1)$, where $\size(\calC_1)$ and $\size(\calC_2)$ are the computational sizes of the configuration spaces (as well as of the quantum physics/systems).

A homophysical mapping $\mathfrak{M}:(\calC_1,\calH_1,\calB_1)\mapsto(\calC_2,\calH_2,\calB_2)$ is called isophysical and referred to as an isophysics between $(\calC_1,\calH_1,\calB_1)$ and $(\calC_2,\calH_2,\calB_2)$, when the mapping is also surjective, namely, the three embeddings $\calF_1\ni\calD\mapsto\mathfrak{M}(\calD)\in\calF_2$, $\calH_1\ni\psi\mapsto\mathfrak{M}(\psi)\in\calH_2$, and $\calB_1\ni O\mapsto\mathfrak{M}(O)\in\calB_2$ are all isomorphisms. In general, any homophysics $\mathfrak{M}:(\calC_1,\calH_1,\calB_1)\mapsto(\calC_2,\calH_2,\calB_2)$ induces and can be identified with an isophysics between $(\calC_1,\calH_1,\calB_1)$ and $(\mathfrak{M}(\calC_1),\mathfrak{M}(\calH_1),\mathfrak{M}(\calB_1))$, with the latter called a subphysics of $(\calC_2,\calH_2,\calB_2)$, when $\mathfrak{M}(\calC_1)\defeq\{\mathfrak{M}(q):q\in\calC_1\}\subseteq\calC_2$, $\mathfrak{M}(\calF_1)\defeq\{\mathfrak{M}(D):D\in\calF_1\}\subseteq\calF_2$, $\mathfrak{M}(\calH_1)\defeq\{\mathfrak{M}(\psi):\psi\in\calH_1\}\subseteq\calH_2$, and $\mathfrak{M}(\calB_1)\defeq\{\mathfrak{M}(O):O\in\calB_1\}\subseteq\calB_2$.
\vspace{-1.5ex}
\end{definition}

Under a homophysics $\mathfrak{M}:(\calC_1,\calH_1,\calB_1)\mapsto(\calC_2,\calH_2,\calB_2)$, the image $(\calC_2,\calH_2,\calB_2)$ is said to represent or implement the preimage $(\calC_1,\calH_1,\calB_1)$ through $\mathfrak{M}$. The symbols $\HomoPhysR$ and $\HomoPhysL$ will be used such that, $\psi_1\HomoPhysR\psi_2$ or $\psi_2\HomoPhysL\psi_1$ and $O_1\HomoPhysR O_2$ or $O_2\HomoPhysL O_1$ indicate that a $\psi_1\in\calH_1$ and an $O_1\in\calB_1$ are homophysically mapped to $\psi_2=\mathfrak{M}(\psi_1)\in\calH_2$ and $O_2=\mathfrak{M}(O_1)\in\calB_2$ through the homophysics $\mathfrak{M}$, or in other words, that $\psi_2=\mathfrak{M}(\psi_1)$ and $O_2=\mathfrak{M}(O_1)$ implement $\psi_1\in\calH_1$ and $O_1\in\calB_1$ through the homophysics $\mathfrak{M}$. When said $\mathfrak{M}$ is an isophysics, the two physics/systems are said to be isophysical to each other, and represent or implement each other via the isophysics $\mathfrak{M}$ or $\mathfrak{M}^{\mathsmaller{-}1}$, and the symbol $\IsoPhysLR$ will be used such that, $\psi_1\IsoPhysLR\psi_2$ and $O_1\IsoPhysLR O_2$ state that $\psi_1\in\calH_1$ and $O_1\in\calB_1$ are isophysically equivalent to $\psi_2=\mathfrak{M}(\psi_1)\in\calH_2$ and $O_2=\mathfrak{M}(O_1)\in\calB_2$ via the isophysics $\mathfrak{M}$.

For any partial Hamiltonian $H\in\calB(\calC)$, the operators $\{\exp({-}\tau H)\}$ are well defined $\forall\tau\in[0,\infty)$ and compact $\forall\tau\in(0,\infty)$, which form a strongly continuous semigroup indexed by $\tau\in[0,\infty)$ \cite{Yosida80,Engel00}. Such $\{\exp({-}\tau H)\}_{\tau\ge 0}$ are called the Gibbs operators or thermal density matrices generated by $H$. For any fermionic partial Hamiltonian $H_{\!\mathsmaller{F}}=P_{\!\mathsmaller{F}}H_{\!\mathsmaller{B}}P_{\!\mathsmaller{F}}$, the corresponding Gibbs operator is simply $\exp({-}\tau H_{\!\mathsmaller{F}})=\exp({-}\tau P_{\!\mathsmaller{F}}H_{\!\mathsmaller{B}}P_{\!\mathsmaller{F}})=P_{\!\mathsmaller{F}}\exp({-}\tau H_{\!\mathsmaller{B}})P_{\!\mathsmaller{F}}$. Therefore, it is without loss of generality to focus on Gibbs operators generated by boltzmannonic partial Hamiltonians.

\begin{definition}{(Gibbs Transition Amplitudes, Gibbs Kernels, Gibbs Wavefunctions)}\label{defiGibbsWaveFunc}\\
For any given $(r,q)\in\calC^2$, the point value $\langle r|\exp({-}\tau H)|q\rangle$ of a Gibbs operator $\exp({-}\tau H)$, $\tau>0$ is called the Gibbs transition amplitude between the configuration points $q$ and $r$. The bivariate function $\langle\cdot|\exp({-}\tau H)|\cdot\rangle : (r,q)\in\calC^2 \mapsto \langle r|\exp({-}\tau H)|q\rangle \in \mathbb{K}$ is called a Gibbs kernel, also known as a thermal Green's function, or a thermal density matrix. Acting on a prescribed initial state $\phi\in\calH(\calC)$ as a ket vector, the Gibbs operator $\exp({-}\tau H)$ produces another wavefunction $\psi = \exp({-}\tau H)\phi\in\calH(\calC)$, such that
$$\psi(r)=\langle r|\exp({-}\tau H)|\phi\rangle \,\defeq\, \int_{q\,\in\,\calC}\,\langle r|\exp({-}\tau H)|q\rangle\,\phi(q)\,dq \,, \; \forall r\in\calC ,$$
which is called a forward Gibbs wavefunction, with $H$, $\tau$, $\phi$ being implied and understood. By the self-adjointness of $H$ and $\exp({-}\tau H)$, the Gibbs operator acting on $\phi^*$ as a bra vector produces
$$\psi^*(r)=\langle\phi|\exp({-}\tau H)|r\rangle \,\defeq\, \int_{q\,\in\,\calC}\phi^*(q)\,\langle q|\exp({-}\tau H)|r\rangle\,dq \,, \; \forall r\in\calC ,$$
which is called a backward Gibbs wavefunction, specifically the backward counterpart of the forward Gibbs wavefunction $\langle\cdot|\exp({-}\tau H)|\phi\rangle$, with $H$, $\tau$, $\phi$ being implied and understood.
\vspace{-1.5ex}
\end{definition}

In particular, if $\lambda_0\in\mathbb{R}$ is the smallest eigenvalue of a self-adjoint operator $H\in\calB(\calC)$, then the ground state energy of the shifted partial Hamiltonian $H'\defeq H-\lambda_0$ is zero, and the large-$\tau$ limit of the forward Gibbs wavefunction $\exp({-}\tau H)\phi$, that is $\lim_{\tau\rightarrow+\infty}\exp({-}\tau H)\phi$, coincides with the ground state of $H'$ as well as $H$, so long as the initial state $\phi\in\calH(\calC)$ has a non-vanishing overlap with said ground state. It is for this reason that the ground state of a partial Hamiltonian is considered as a special case, or more precisely, the large-$\tau$ limiting case, of a forward Gibbs wavefunction. Similarly, the restriction of a Gibbs kernel $\langle\cdot|\exp({-}\tau H)|q\rangle : r\in\calC \mapsto \langle r|\exp({-}\tau H)|q\rangle \in \mathbb{K}$ for any fixed $q\in\calC$ is considered to be a forward Gibbs wavefunction, as the limit of a sequence $\{\langle r|\exp({-}\tau H)|\phi\rangle\}$ associated with a sequence $\{\phi\}\subseteq\calH(\calC)$ of wavefunctions, whose supports approach the point set $\{q\}\subseteq\calC$. The self-adjointness of $H$ and $\exp({-}\tau H)$ puts backward and forward Gibbs wavefunctions on the equal footing, in that every property or statement for a forward Gibbs wavefunction holds true for its backward counterpart, moreover, when the scalar field $\mathbb{F}$ contains real numbers only, each backward Gibbs wavefunction $\langle\phi|\exp({-}\tau H)|\cdot\rangle$ as an element in $\calH(\calC)$ becomes the same as its forward counterpart $\langle\cdot|\exp({-}\tau H)|\phi\rangle$. In the following, the adjective ``forward'' or ``backward'' is mostly omitted when there is no risk of ambiguity or no need to differentiate between the two cases, and the generic noun ``Gibbs wavefunction'' will be used to refer to either a ground state wavefunction as a limiting case, or a general Gibbs wavefunction at a finite imaginary time $\tau\in[0,\infty)$, or a restriction of a Gibbs kernel having one of the two configuration coordinates fixed, when there is no risk of ambiguity or it is clear from the context which is the case.

For linear operators acting on a finite-dimensional vector space $L^2(\calP)$, the mathematical formulations and analyses are more or less straightforward. In contrast, significantly more mathematical sophistication is needed to treat quantum motions in a continuous space up to satisfying generality and rigor. In particular, Gibbs operators generated by partial Hamiltonians associated with continuous dynamical variables are best represented using the formulation of path integral \cite{Feynman65}, also known as functional integral \cite{Simon79,Glimm87,Lorinczi11}, based on a Gibbs measure $\mu_{\mathsmaller{G}}$ with respect to a reference measure $\mu_{\mathsmaller{R}}$, on a $\sigma$-algebra $\calF_{\tau}$ of the Wiener space \cite{Ito74,Kuo75,Karatzas91,Bar11} of continuous functions/paths $W_{\tau}\defeq C^0([0,\tau];\calC)\defeq\{\gamma:[0,\tau]\mapsto\calC,\gamma\;\mbox{is\;continuous}\}$, $\tau\ge 0$, where $\calF_{\tau}$ contains the uniform topology induced by the distance function $\dist_{\infty}(\cdot,\cdot):W_{\tau}\times W_{\tau}\mapsto\mathbb{R}$, such that $\forall\gamma_1,\gamma_2\in W_{\tau}$, $\dist_{\infty}(\gamma_1,\gamma_2)\defeq\sup_{0\le\tau'\le\tau}\dist_g(\gamma_1(\tau'),\gamma_2(\tau'))$, and the reference measure $\mu_{\mathsmaller{R}}$ is strictly positive, with respect to which the Gibbs measure $\mu_{\mathsmaller{G}}$ is absolutely continuous, so that the Radon-Nikodym derivative $d\mu_{\mathsmaller{G}}/d\mu_{\mathsmaller{R}}$ exists and is often written as $d\mu_{\mathsmaller{G}}/d\mu_{\mathsmaller{R}}=\exp\{-U_{\!\mathsmaller{G}}[\gamma(\cdot)]\}$, where $U_{\!\mathsmaller{G}}[\gamma(\cdot)]$ is a possibly complex-valued Gibbs energy functional depending on $\gamma(\cdot)\in W_{\tau}$.

When the discrete space $\calP$ contains more than one points, the configuration space $\calC=\calM\times\calP$ comprises mutually disconnected components $\{\calM\times\{v\}:v\in\calP\}$, such that $\calC=\bigcup\{\calM\times\{v\}:v\in\calP\}$, and each component $\calM\times\{v\}$, $v\in\calP$ is a connected continuous submanifold that is isomorphic to $\calM$. Given a boltzmannonic partial Hamiltonian $H_{\!\mathsmaller{B}}\in\calL_0(\calM\times\{v\})$ supported by any component $\calM\times\{v\}$, $v\in\calP$ as a connected submanifold itself, the associated Gibbs measure facilitates a representation of the Gibbs operator $\exp({-}\tau H_{\!\mathsmaller{B}})$, $\tau\ge 0$ via path integral, such that
\begin{align}
\langle\phi|\exp({-}\tau H_{\!\mathsmaller{B}})|\psi\rangle &= \int_{\calM}\int_{\calM}\int_{W^v_{\tau}}\phi^*(r,v)\psi(q,v)\exp\{-U_{\!\mathsmaller{G}}[\gamma(\cdot)]\}d\mu_{\mathsmaller{R}}^{\mathsmaller{[q,r]}}(\gamma)dV_g(q)dV_g(r) \\[0.75ex]
&= \int_{\calM}\int_{\calM}\int_{W^v_{\tau}}\phi^*(r,v)\psi(q,v)d\mu_{\mathsmaller{G}}^{\mathsmaller{[q,r]}}(\gamma)dV_g(q)dV_g(r),\;\forall\phi,\psi\in L^2(\calM\times\{v\}) \,, \nonumber
\end{align}
where $(q,v)$ and $(r,v)$ denote typical points on $\calM\times\{v\}$, with $q,r\in\calM$, $\mu_{\mathsmaller{G}}^{\mathsmaller{[q,r]}}$ is a Gibbs measure, with the reference measure $\mu_{\mathsmaller{R}}^{\mathsmaller{[q,r]}}$ being a conditional Wiener measure, closely related to the so-called Brownian bridge measure \cite{Lorinczi11,Karatzas91,Bar11}, on the subset of constrained continuous paths $\{\gamma(\cdot)\in W^v_{\tau}\defeq C^0([0,t];\calM\times\{v\}):\gamma(0)=(q,v),\,\gamma(\tau)=(r,v)\}$, and the boltzmannonic Hamiltonian $H_{\!\mathsmaller{B}}$ being sufficiently regular to guarantee the existence of the $U_{\!\mathsmaller{G}}[\gamma(\cdot)]$ functional whose real part is lower-bounded. It is clear that $\exp({-}\tau H_{\!\mathsmaller{B}})$, $\tau>0$ is an integral operator, whose integral kernel is precisely the Gibbs kernel, defined as
\begin{equation}
K(\tau H_{\!\mathsmaller{B}};r,q;v)\defeq\langle(r,v)|\exp({-}\tau H_{\!\mathsmaller{B}})|(q,v)\rangle\defeq\int_{W^v_{\tau}}
\exp\{-U_{\!\mathsmaller{G}}[\gamma(\cdot)]\}d\mu_{\mathsmaller{R}}^{\mathsmaller{[q,r]}}(\gamma),\;\forall q,r\in\calM \,,
\label{FeynmanKacKHB}
\end{equation}
such that a typical Gibbs wavefunction reads
\begin{equation}
\left[\exp({-}\tau H_{\!\mathsmaller{B}})\psi\right]\!(r,v)=\int_{\calM}K(\tau H_{\!\mathsmaller{B}};r,q;v)\,\psi(q,v)\,dV_g(q),\;\forall\psi\in L^2(\calM\times\{v\}) \,.
\end{equation}

Given a connected open subset $\calD\subseteq(\calM\times\{v\})\subseteq\calC$, a partial Hamiltonian $H\in\calL_0(\calC)$, being boltzmannonic or otherwise, and an associated Gibbs operator $\exp({-}\tau H)$, $\tau>0$ can be restricted to $\calD$ and subject to the Dirichlet boundary condition on the boundary $\partial\calD$, denoted as $H|_{\Diri(\calD)}$ and $\exp[-\tau H|_{\Diri(\calD)}]$ respectively, by defining and evaluating the Gibbs kernel as
\begin{align}
K(\tau H|_{\Diri(\calD)};r,q) \,\defeq\,& \langle(r,v)|\exp({-}\tau H|_{\mathsmaller{\Diri(\calD)}})|(q,v)\rangle \nonumber \\[0.75ex]
\,\defeq\,& \int_{W_{\tau}}\!\exp\{-U_{\!\mathsmaller{G}}[\gamma(\cdot)]\} \Iver{\gamma\subseteq\calD} d\mu_{\mathsmaller{R}}^{\mathsmaller{[q,r]}}(\gamma) \,,
\label{FeynmanKacKHDiri}
\end{align}
$\forall q,r\in\calM$, where $\Iver{\gamma\subseteq\calD}$ is an Iverson bracket \cite{IversonBracket} which returns a number valued to $1$ if the path $\gamma(\cdot)\subset W_{\tau}$ is fully contained in $\calD$, and $0$ otherwise. Evidently, $K(\tau H|_{\Diri(\calD)};r,q)=0$ whenever $(q,v)\not\in\calD$ or $(r,v)\not\in\calD$, $\forall\tau>0$. More generally, for any partial Hamiltonian $H\in\calL_0(\calC=\calM\times\calP)$ and any topologically open submanifold $\calE\subseteq\calC$, the restriction of $H$ to $\calE$ is denoted and defined as $H|_{\calE}=P_{L^2(\calE)}HP_{L^2(\calE)}$, with $P_{L^2(\calE)}:L^2(\calC)\mapsto L^2(\calE)$ being an orthogonal projection such that
\vspace{-0.5ex}
\begin{equation}
\forall\psi\in L^2(\calC),\;[P_{L^2(\calE)}\psi](q,v)
= \left\{\!\! \begin{array}{rl}
\psi(q,v) \,, &\!\! \mbox{if}\;(q,v)\in\calE , \\
0 \,, &\!\! \mbox{\rm otherwise} \,,
\end{array} \right. \label{defiPsiRestriction}
\end{equation}
\begin{equation}
\lim_{\tau\,\rightarrow\,{+}0} \frac{ \langle(q',v')|\exp({-}\tau H|_{\calE})|(q,v)\rangle  - \langle(q',v')|(q,v)\rangle } { \langle(q',v')|\exp({-}\tau H)|(q,v)\rangle - \langle(q',v')|(q,v)\rangle } =
\left\{\!\! \begin{array}{rl}
1 \,, &\!\! \mbox{if}\;\{(q,v),(q',v')\}\subseteq\calE , \\
0 \,, &\!\! \mbox{\rm otherwise} \,.
\end{array} \right. \label{defiHRestriction}
\end{equation}

\begin{definition}{(Irreducible Partial Hamiltonians and $H$-Connected, $H$-Closed, and $H$-complete Submanifolds)}\label{defiHConnectHClosed}\\
Given a submanifold $\calD\subseteq\calC$ of a configuration space $\calC=\calM\times\calP$, where $\calM$ is a differentiable manifold and $\calP$ is a finite discrete space, a partial Hamiltonian $H\in\calL_0(\calC)$ is called irreducible in $\calD$, when its base boltzmannonic Hamiltonian $\Base(H)$ is irreducible in $\calD$, when in turn the semigroup $\{\exp({-}\tau\Base(H)):\tau\ge 0\}$ is irreducible in $\calD$, in the sense that, for any $V_g$-measurable subsets $\calD_1\subseteq\calD$, $\calD_2\subseteq\calD$ such that $V_g(\calD_1)>0$, $V_g(\calD_2)>0$, there always exist wavefunctions $\psi_1,\psi_2\in L^2(\calD)$ and a constant $\tau>0$ such that
\begin{equation}
\int_{\calD_1}\int_{\calD_2}\langle\psi_2(q_2)|\exp({-}\tau\Base(H))|\psi(q_1)\rangle\,dV_g(q_1\!\in\!\calD_1)\,dV_g(q_2\!\in\!\calD_2)\neq 0 \,,
\end{equation}
in which case, the submanifold $\calD$ is said to be $H$-connected. On the other hand, given a partial Hamiltonian $H\in\calL_0(\calC)$, a submanifold $\calD\subseteq\calC$ is called $H$-closed if, for any $V_g$-measurable subsets $\calD_1\subseteq\calD$, $\calD_2\subseteq(\calC\setminus\calD)$, and for any $\psi_1\in L^2(\calD_1),\psi_2\in L^2(\calD_2)$, $\forall\tau>0$, it holds identically that
\begin{equation}
\int_{\calD_1}\int_{\calD_2}\langle\psi_2(q_2)|\exp({-}\tau\Base(H))|\psi(q_1)\rangle\,dV_g(q_1\!\in\!\calD_1)\,dV_g(q_2\!\in\!\calD_2)=0 \,.
\end{equation}
A submanifold $\calD\subseteq\calC$ is called $H$-complete when it is both $H$-connected and $H$-closed.
\vspace{-1.5ex}
\end{definition}

It is almost a trivial remark that the above-defined notions of $H$-connectedness, $H$-closedness, and $H$-completeness apply equally well when the concerned partial Hamiltonian $H$ is restricted to and involves only a small number of particles or dynamical variables constituting a subsystem of a large physical system in the background. Such situation arises naturally, for example, when $H$ is one of many additive partial Hamiltonians comprising a total Hamiltonian that governs a large physical system associated with a large-sized configuration space $\calC$.

\begin{definition}{(Lie-Trotter-Kato Decomposition of Partial Hamiltonians)}\label{defiLieTrotterKato}\\
A partial Hamiltonian $H$ is called Lie-Trotter-Kato decomposable (or Lie-Trotter-Kato decomposed) \cite{Trotter59,Chernoff68,Chernoff70,Chernoff74,Kato74,Kato78} into $H=\sum_{k=1}^{\mathsmaller{K}}H_k$, $K\in\mathbb{N}$, either as a conventional algebraic operator addition, or more generally, as an extended operator form sum, with each $H_k$, $k\in[1,K]$ being a partial Hamiltonian, such that, as $n\!\rightarrow\!{+}\infty$, $\left(\textstyle{\prod_{k=1}^{\mathsmaller{K}}}\exp(\mathsmaller{-}H_k/n)\right)^n\!$ converges to $\exp(\mathsmaller{-}H)$ in the strong operator topology. The corresponding expression $H=\sum_{k=1}^{\mathsmaller{K}}H_k$, $K\in\mathbb{N}$ is called a Lie-Trotter-Kato decomposition, and $\exp(\mathsmaller{-}H)=s\mbox{-}\lim_{\,n\mathsmaller{\rightarrow}{+}\infty}\left(\textstyle{\prod_{k=1}^{\mathsmaller{K}}}\exp(\mathsmaller{-}H_k/n)\right)^n\!$ is called a Lie-Trotter-Kato product formula, with ``$s\mbox{-}\lim$\!'' indicating limit in the strong operator topology.
\end{definition}

\begin{definition}{(Multiplicatively Coupled Functions or Operators)}\label{defiMulCoup}\\
Given an algebra $\calA$ of functions or operators, and two subalgebras $\calA_1\subseteq\calA$ and $\calA_2\subseteq\calA$, an element $g\in\calA$ is called $\calA_1\calA_2$-multiplicatively coupled when it can be represented as $g=\sum_{i\in\calI}f_{1i}f_{2i}$, with $\{f_{1i}:i\in\calI\}\subseteq\calA_1$, $\{f_{2i}:i\in\calI\}\subseteq\calA_2$, where $\calI$ is an index set, which can be countably infinite, with a suitable topology defined on $\calA$ to make sense of the infinite sum.
\vspace{-1.5ex}
\end{definition}

Let $\calL(\calM)$ and $\calL(\calP)$ be two algebras of linear operators supported by $\calM$ and $\calP$ respectively. Use {\em CD-multiplicatively coupled} as a shorthand abbreviation for an operator $O\in\calL(\calM\times\calP)$ that is $\calL(\calM)\calL(\calP)$-multiplicatively coupled. Armed with Lie-Trotter-Kato product formulas and well known techniques of perturbation theory, it is without loss of generality for us to focus considerations on a particular class of Hamiltonians that are CD-separately moving.

\begin{definition}{(CD-Separately Moving and Irreducible Partial Hamiltonians)}\label{defiCDSepMove}\\
A CD-multiplicatively coupled partial Hamiltonian $H\in\calL_0(\calM\times\calP)$ is called CD-separately moving if it can be written as $H=H^{\mathsmaller{C}}+H^{\mathsmaller{D}}$, with $H^{\mathsmaller{C}}\defeq\sum_{i\in\calI}H_i^{\mathsmaller{C}}V_i^{\mathsmaller{D}}$ and $H^{\mathsmaller{D}}\defeq\sum_{j\in\calJ}V_j^{\mathsmaller{C}}H_j^{\mathsmaller{D}}$ being called the continuous and the discrete parts of $H$ respectively, where $\forall i\in\calI$ and $\forall j\in\calJ$, with the index sets $\calI$ and $\calJ$ being either finite or countably infinite, it holds that $H_i^{\mathsmaller{C}},V_j^{\mathsmaller{C}}\in\calL_0(\calM)$ and $V_i^{\mathsmaller{D}},H_j^{\mathsmaller{D}}\in\calL_0(\calP)$, while $V_i^{\mathsmaller{C}}$ is $\calM$-diagonal and $V_j^{\mathsmaller{D}}$ is $\calP$-diagonal. A partial Hamiltonian $H\in\calL_0(\calM\times\calP)$ is called CD-separately irreducible if it is both CD-separately moving and irreducible in the configuration space $\calM\times\calP$.
\vspace{-1.5ex}
\end{definition}

A discrete part $H^{\mathsmaller{D}}$ of a partial Hamiltonian $H$ occurs naturally, for example, in dealing with a physical system consisting of particles with spins, or in describing quantum tunneling of electrons between nearby nano-structures such as quantum wells, quantum wires, and quantum dots, or in a model of computational physics that involves a discrete dynamical variable, such as a Hubbard model or a lattice field theory. Particularly, in the context of GSQC \cite{Feynman85,Kitaev02,Mizel04,Kempe06}, where a quantum computational algorithm is reduced to solving for the ground state of a designer Hamiltonian $H$ that involves more than one identical particles of a certain species, the bosonic or fermionic exchange symmetry, as a rigid and nonnegotiable requirement imposed on the wavefunction describing the collective state of said identical particles, can be enforced equivalently by superimposing a discrete part $H^{\mathsmaller{D}}=\sum_{\pi\in G_0}[I-\sign(\pi)\times\pi]$ onto the designer Hamiltonian $H$, which induces an energy penalty to prevent any violation of exchange symmetry in the ground state, where $G_0$ denotes a generating set of the exchange symmetry group $G_{\rm ex}$ associated with the concerned species of identical particles, $\sign(\pi)$ is the sign of a particle exchange operation $\pi\in G_{\rm ex}$. In one exemplary embodiment, $G_0$ can be the set of the adjacent transpositions \cite{Sagan01}, which swap pairs of identically particles with artificially assigned indices $i$ and $i{+}1$, $i\in[1,n{-}1]$, $n\in\mathbb{N}$ being the total number of identical particles of the concerned species. Since $G_{\rm ex}\ni\pi\mapsto\sign(\pi)\in\{+1,-1\}$ is a group homomorphism between $G_{\rm ex}$ and the group $(\{+1,-1\},*)$, consistency with the exchange symmetry conditions for permutations in the generating group $G_0$ guarantees that the ground state of $H^{\mathsmaller{D}}$ satisfies the requirement of exchange symmetry for all permutations $\pi\in G_{\rm ex}$. The benefit of having $\pi$ to traverse a generating set $G_0$ instead of the full set $G_{\rm ex}$ is to reduce the number of additive terms in $H^{\mathsmaller{D}}$ from $|G_{\rm ex}|=n!$ to a polynomial of $n$, and specifically $n{-}1$ when $G_0$ is the set of adjacent transpositions. One advantage of enforcing exchange symmetries via discrete parts of Hamiltonians like $H^{\mathsmaller{D}}=\sum_{\pi\in G_0}[I-\sign(\pi)\times\pi]$ is to relax the working domain of state vectors supported by a configuration space $\calC$ from the fermionic Hilbert space $L_{\!\mathsmaller{F}}^2(\calC)$ to the boltzmannonic Hilbert space $L^2(\calC)$, knowing that the ground state of $H^{\mathsmaller{D}}$ must fulfill the symmetry requirements.

Using the so-called perturbation gadgets \cite{Kempe06,Biamonte08,Jordan08,Bravyi08prl,Cao15}, any CD-multiplicatively coupled partial Hamiltonian can be Lie-Trotter-Kato decomposed into a sum of CD-separately moving partial Hamiltonians, since, any monomial $AB$ with $A\in\calL(\calM)$, $B\in\calL(\calP)$ as an additive term of a CD-multiplicatively coupled partial Hamiltonian has a Lie-Trotter-Kato decomposition
\begin{equation}
e^{{-}AB}=s\mbox{\,-}\!\!\!\lim_{n\mathsmaller{\rightarrow}{+}\infty}\left[e^{(A/n)-(A/n)^2}e^{{-}(A+B)/n}e^{(B/n)-(B/n)^2}\right]^{n^2} \!,
\end{equation}
provided that all of the relevant Gibbs operators are well defined, especially when the operator $A\in\calL(\calM)$ is unbounded. CD-multiplicatively coupled operators form a large class, that includes most of the naturally occurring and practically relevant physical interactions in real or model systems. More importantly, as will be clearly seen later, a physical system governed by a CD-separately moving Hamiltonian can already be made homophysically universal, implementing a quantum computer that can be programmed to simulate any given physical system. It is for these reasons that focusing considerations on CD-separately moving partial Hamiltonians entails no loss of generality. In this \iftoggle{ForUSPTO} {specification} {presentation}, any Hamiltonian $H\in\calL(\calC)$ supported by any continuous-discrete product configuration space $\calC=\calM\times\calP$ is always assumed to be CD-separately irreducible.

For all known naturally occurring physics systems, as well as most theoretical model systems that are employed to implement or represent quantum computing \cite{Kitaev02,Kempe06,Cubitt13}, it is usually the case that a many-body Hamiltonian is {\em computationally local} and of an additive form as $H=\sum_{k=1}^{\mathsmaller{K}}h_k$, $K\in\mathbb{N}$, where each partial Hamiltonian $h_k$, $k\in[1,K]$ acts nontrivially only on a small number of artificially labeled particles, that belong to a necessarily small number of fermion species, and are said to be moved by the corresponding $h_k$, while the other particles of the whole system are said to be fixed by $h_k$. More specifically, the following definitions are in order.

\begin{definition}{($h$-Moved and $h$-Fixed Particle Species)}\label{defiHMovedHFixedSpecies}\\
With respect to a quantum physics/system $(\calC,\calH,\calB)$ associated with a generic continuous-discrete product configuration space $\calC=\prod_{s=1}^{\mathsmaller{S}}\calC_s^{n_s}$, $S\in\mathbb{N}$, $n_s\in\mathbb{N}$ for all $s\in[1,S]$, a partial Hamiltonian $h\in\calL_0(\calC)$ is said to move particles of the first species, if a tensor product state $\psi\defeq\psi_1\otimes\psi'_1\in\calH(\calC)$ exists, with $\psi_1\in L^2(\calC_1^{n_1})$, $\psi'_1\in L^2(\calC'_1)$, $\calC'_1\defeq\prod_{s=2}^{\mathsmaller{S}}\calC_s^{n_s}$, for which no wavefunction $\phi'_1\in L^2(\calC'_1)$ exists to satisfy $h\psi=\psi_1\otimes\phi'_1$, $\|\phi'_1\|\neq 0$. More generally, $\forall s\in[1,S]$, a partial Hamiltonian $h\in\calL_0(\calC)$ is said to move particles of the $s$-th species, if by swapping the indices $1$ and $s$, the partial Hamiltonian $h$ is turned into $h'\in\calL_0(\calC')$ which moves particles of the first species, with respect to the new quantum physics/system $(\calC',\calH',\calB')$ that has the particle species relabeled. A partial Hamiltonian $h\in\calL_0(\calC)$ is said to fix particles of the $s$-th species, $s\in[1,S]$, if it does not move particles of the $s$-th species, with respect to said quantum physics/system $(\calC,\calH,\calB)$.
\end{definition}

\begin{definition}{(Few-Body-Moving Interactions)}\label{defiFBMTF}\\
With respect to the same quantum physics/system $(\calC,\calH,\calB)$ as above, a partial Hamiltonian $h\in\log\calB$ is called an $(s_m,n_m,d_m)$-few-body-moving (FBM) interaction, $(s_m,n_m,d_m)\in\mathbb{N}^3$, when it moves a number $s_m$ of particle species and fix all of the other $(S-s_m)$ species, where each moved species labeled by an $s\in[1,S]$ has no more than $n_m$ identical particles that live in a substrate space $\calC_s$ with a substrate space dimension $\dim(\calC_s)\le d_m$. When the quantum physics/system $(\calC,\calH,\calB)$ is variable in size due to a varying $S\in\mathbb{N}$ number of species, a partial Hamiltonian $h\in\log\calB$ is called a logarithmically FBM interaction if it is an $(s_m,n_m,d_m)$-FBM interaction with $d_m=O(1)$ and $n_m=O(1)$ being fixed while $s_m=O(\log(\size(\calC)))$.
\vspace{-1.5ex}
\end{definition}

In this \iftoggle{ForUSPTO} {specification} {presentation}, $O(\cdot)$, $\Omega(\cdot)$, and $\Theta(\cdot)$ are the traditional notations of asymptotics in the Knuth convention, representing an upper bound, a lower bound, and a simultaneous upper and lower bound, respectively \cite{Knuth76}. An $(s_m,n_m,d_m)$-FBM interaction may be simply referred to as an FBM interaction in short, with the triple of bounds $(s_m,n_m,d_m)$ implied and clearly understood. Almost all FBM interactions in this \iftoggle{ForUSPTO} {specification} {presentation} are at most logarithmically FBM, namely, $(s_m,n_m,d_m)$-FBM with $s_m=O(\log(\size(\calC)))$ at the most, $n_m$ and $d_m$ being fixed and small. It is obvious that a logarithmically FBM interaction involves no more than $O(\log(\size(\calC)))$ dynamical variables.

Given any FBM interaction $h\in\log\calB$, let
\begin{equation}
\calI_h\defeq\{s\in[1,S]:h\;\textit{moves the}\;s\textit{-th particle species}\}
\end{equation}
be the set of indices labeling particle species that are moved by $h$, and the complement
\begin{equation}
\calI_h^{\mathsmaller{\perp}}\defeq\{s\in[1,S]:h\;\textit{fixes the}\;s\textit{-th particle species}\}
\end{equation}
be the set of indices labeling particle species that are fixed by $h$. Clearly, $\calI_h$ and $\calI_h^{\mathsmaller{\perp}}$ bi-partition the whole set of indices $[1,S]$, in the sense that $\calI_h\cup\calI_h^{\mathsmaller{\perp}}=[1,S]$ and $\calI_h\cap\calI_h^{\mathsmaller{\perp}}=\emptyset$. Define product manifolds $\calE_h\defeq\prod_{s\in\calI_h}\calC_s^{n_s}$ and $\calE_h^{\mathsmaller{\perp}}\defeq\prod_{s\in\calI_h^{\mathsmaller{\perp}}}\calC_s^{n_s}$. The FBM interaction $h$ is said to induce an {\em orthogonal direct sum decomposition} (ODSD) $\calC=\calE_h\oplus\calE_h^{\mathsmaller{\perp}}$, in the sense that there exist atlases $\{(U_{\alpha},\varphi_{\alpha})\}_{\alpha\in A}$ for $\calE_h$ and $\{(U_{\beta}^{\mathsmaller{\perp}},\varphi_{\beta}^{\mathsmaller{\perp}})\}_{\beta\in B}$ for $\calE_h^{\mathsmaller{\perp}}$, $A$ and $B$ being index sets indexing coordinate patches, such that the Cartesian product $\{(U_{\alpha},\varphi_{\alpha})\times(U_{\beta}^{\mathsmaller{\perp}},\varphi_{\beta}^{\mathsmaller{\perp}})\defeq(U_{\alpha}\times U_{\beta}^{\mathsmaller{\perp}},(\varphi_{\alpha},\varphi_{\beta}^{\mathsmaller{\perp}}))\}_{\alpha\in A,\,\beta\in B}$ serves as an atlas for $\calC$; in particular, any $q\in\calC$ has a unique ODSD $q=u\oplus u^{\mathsmaller{\perp}}$, with $u\in\calE_h$, $u^{\mathsmaller{\perp}}\in\calE_h^{\mathsmaller{\perp}}$, such that for any open neighborhood $U$, $q\in U\subseteq\calC$, there exist open neighborhoods $U_{\alpha}$, $\alpha\in A$, $u\in U_{\alpha}\subseteq\calE_h$ and $U_{\beta}^{\mathsmaller{\perp}}$, $\beta\in B$, $u^{\mathsmaller{\perp}}\in U_{\beta}^{\mathsmaller{\perp}}\subseteq\calE_h^{\mathsmaller{\perp}}$ to make $U\subseteq U_{\alpha}\times U_{\beta}^{\mathsmaller{\perp}}$ true. For any $u^{\mathsmaller{\perp}}\in\calE_h^{\mathsmaller{\perp}}$, the subset $\calE_h\oplus u^{\mathsmaller{\perp}}\defeq\{q\in\calC:\exists u\in\calE_h\;{\rm such\;that}\;q=u\oplus u^{\mathsmaller{\perp}}\}$ is called an $\calE_h$-coset at $u^{\mathsmaller{\perp}}$, which is a submanifold of $\calC$ isomorphic to $\calE_h$ in the sense that the two not only are diffeomorphic as smooth manifolds but also endowed with isometric Riemannian metrics. Similarly, $\forall u\in\calE_h$, the subset $u\oplus\calE_h^{\mathsmaller{\perp}}\defeq\{q\in\calC:\exists u^{\mathsmaller{\perp}}\in\calE_h^{\mathsmaller{\perp}}\;{\rm such\;that}\;q=u\oplus u^{\mathsmaller{\perp}}\}$ is called an $\calE_h^{\mathsmaller{\perp}}$-coset at $u\in\calE_h$, which is a submanifold of $\calC$ isomorphic to $\calE_h^{\mathsmaller{\perp}}$.

\begin{definition}($h$-Moved and $h$-Fixed Factor Spaces and Cosets)\label{defiHMovedHFixedCosets}\\
$\calE_h\defeq\prod_{s\in\calI_h}\calC_s^{n_s}$ and $\calE_h^{\mathsmaller{\perp}}\defeq\prod_{s\in\calI_h^{\mathsmaller{\perp}}}\calC_s^{n_s}$ are called the $h$-moved and $h$-fixed factor (configuration) spaces respectively. An $\calE_h$-coset $\calE_h\oplus u^{\mathsmaller{\perp}}$, $u^{\mathsmaller{\perp}}\in\calE_h^{\mathsmaller{\perp}}$ is called $h$-moved if a $\phi\in L^2(\calE_h\oplus u^{\mathsmaller{\perp}})$ exists such that $\phi$ is not an eigenvector of $h$. Conversely, an $\calE_h$-coset $\calE_h\oplus u^{\mathsmaller{\perp}}$, $u^{\mathsmaller{\perp}}\in\calE_h^{\mathsmaller{\perp}}$ is called $h$-fixed if any $\phi\in L^2(\calE_h\oplus u^{\mathsmaller{\perp}})$ is an eigenvector of $h$.
\vspace{-1.5ex}
\end{definition}

It is clear that for any FBM interaction $h$, on any $h$-fixed $\calE_h$-coset $\calE_h\oplus u^{\mathsmaller{\perp}}$, $u^{\mathsmaller{\perp}}\in\calE_h^{\mathsmaller{\perp}}$, $h$ is $(\calE_h\oplus u^{\mathsmaller{\perp}})$-diagonal whose eigenvectors are trivially computable. Two $\calE_h$-cosets $\calE_h\oplus u^{\mathsmaller{\perp}},\calE_h\oplus v^{\mathsmaller{\perp}}$ with $u^{\mathsmaller{\perp}},v^{\mathsmaller{\perp}}\in\calE_h^{\mathsmaller{\perp}}$ are said to be $h$-isomorphic when the effects of $h$ on the two $\calE_h$-cosets are the same up to a scalar constant, that is, there exist $\tau_1,\tau_2\in(0,\infty)$ such that $\mathlarger{\mathlarger{\langle}}u\oplus u^{\mathsmaller{\perp}}\mathlarger{\mathlarger{|}}\exp(-\tau_1h)\mathlarger{\mathlarger{|}}v\oplus u^{\mathsmaller{\perp}}\mathlarger{\mathlarger{\rangle}}=\mathlarger{\mathlarger{\langle}}u\oplus v^{\mathsmaller{\perp}}\mathlarger{\mathlarger{|}}\exp(-\tau_2h)\mathlarger{\mathlarger{|}}v\oplus v^{\mathsmaller{\perp}}\mathlarger{\mathlarger{\rangle}}$ holds true $\forall u,v\in\calE_h$. Obviously, $h$-isomorphism is an equivalence relation among the $\calE_h$-cosets $\{\calE_h\oplus u^{\mathsmaller{\perp}}:u^{\mathsmaller{\perp}}\in\calE_h^{\mathsmaller{\perp}}\}$, which induces a partition of the collection of all $\calE_h$-cosets into equivalence classes $\mathlarger{\mathlarger{\{}}[\calE_h\oplus u^{\mathsmaller{\perp}}]\defeq\{\calE_h\oplus v^{\mathsmaller{\perp}}:\calE_h\oplus v^{\mathsmaller{\perp}}\;\textrm{is}\;h\textrm{-isomorphic to}\;\calE_h\oplus u^{\mathsmaller{\perp}}\}\mathlarger{\mathlarger{\}}}$.

When an FBM interaction $h_k\in\log\calB$, $k\in\mathbb{N}$ is an additive term in a sum $\sum_{k=1}^{\mathsmaller{K}}h_k$, or a tensor-multiplicative factor in a product $\bigotimes_{k=1}^{\mathsmaller{K}}h_k$, $K\in\mathbb{N}$, the sets of indices $\calI_{h_k}$, $\calI_{h_k}^{\mathsmaller{\perp}}$ and the factor spaces $\calE_{h_k}$, $\calE_{h_k}^{\mathsmaller{\perp}}$ may be denoted alternatively as $\calI_k\defeq\calI_{h_k}$, $\calI_k^{\mathsmaller{\perp}}\defeq\calI_{h_k}^{\mathsmaller{\perp}}$ and $\calE_k\defeq\calE_{h_k}$, $\calE_k^{\mathsmaller{\perp}}\defeq\calE_{h_k}^{\mathsmaller{\perp}}$, $k\in[1,K]$, just for convenience.

\begin{definition}{(FBM Tensor Monomials and Polynomials)}\label{defiTensorMonoPoly}\\
Given a quantum physics/system $(\calC,\calH,\calB)$, a partial Hamiltonian $M\in\log\calB$ is called an FBM tensor monomial, when $M$ is in a form of a tensor product $M=I_0\otimes\left(\bigotimes_{i=1}^nh_i\right)$, $n\in\mathbb{N}$, $n\le\size(\calC)$, where $\forall i\in[1,n]$, $h_i\in\log\calB(\calC_i)$ is an FBM interaction moving a factor space $\calC_i$ with $\dim(\calC_i)=O(\log(\size(\calC)))$, while $\{\calC_i\}_{i=1}^n$ Cartesian-factorizes the whole configuration space as $\calC=\calC_0\times\prod_{i=1}^n\calC_i$, and $I_0$ is the identity operator supported by the factor space $\calC_0\defeq\calC/\prod_{i=1}^n\calC_i$, although $\calC_0$ may reduce to an empty set. The number $\deg(M)\defeq\sum_{i=1}^n\dim(C_i)$ is called the degree of the FBM tensor monomial $M$. A partial Hamiltonian $W\in\log\calB$ is called an FBM tensor polynomial, when $W$ can be written as $W=\sum_{k=1}^{\mathsmaller{K}}M_k$, $K\in\mathbb{N}$, $K=O(\poly(\size(C)))$, with each $M_k$, $k\in[1,K]$ being an FBM tensor monomial. The number $\deg(W)\defeq\max\{\deg(M_k):k\in[1,K]\}$ is called the degree of the FBM tensor polynomial $W$.
\vspace{-1.5ex}
\end{definition}

When useful and necessary, the notions of few-body-moving operators, moved or fixed particle species, and moved or fixed factor spaces and cosets can all be straightforwardly generalized to generic operators in a Banach algebra $\calB$ with respect to a given quantum physics/system $(\calC,\calH,\calB)$. For a general operator $Q \in \calB$ having an adjoint operator (or Hermitian conjugate) $Q^+ \in \calB$, the associated operators $Q_+ \defeq (Q + Q^+)/2 \in \calB$ and $Q_- \defeq (Q - Q^+)/2 \in \calB$ are called {\em the Hermitian part and the skew-Hermitian part of $Q$} respectively. Let ${\bf i} \in \calB$ denote an operator in $\calB$ such that ${\bf i}^2 + I \equiv 0$, with $I \in \calB$ being the identity operator, then ${\bf i}A \in \calB$ is self-adjoint or Hermitian, called a {\em Hermitianization of $A$}, for any given skew-Hermitian operator $A \in \calB$.

\begin{definition}{(General Operators Being Few-Body-Moving)}\label{defiGenOperFBM}\\
Given a quantum physics/system $(\calC,\calH,\calB)$, a general operator $Q \in \calB$ is called few-body-moving, when both its Hermitian part $Q_+ = (Q + Q^+)/2 \in \calB$ and the Hermitianization ${\bf i}Q_-$ of its skew-Hermitian part $Q_- = (Q - Q^+)/2 \in \calB$ are few-body-moving. A particle species, or a factor space, or a coset is called $Q$-moved (conversely, $Q$-fixed), when such a particle species, or a factor space, or a coset is either $Q_+$-moved or ${\bf i}Q_-$-moved (both $Q_+$-fixed and $Q_-$-fixed, respectively).
\vspace{-1.5ex}
\end{definition}

For any given partial Hamiltonian $H$ with respect to a quantum physics/system $(\calC,\calH,\calB)$, let $\lambda_0(H)$ denote the lowest eigenvalue of $H$ (namely, the ground state energy), and $\psi_0(H)$ represent either a unique eigenvector (that is, the ground state wavefunction) when there is no degeneracy, or an orthonormal basis of the ground eigenspace $\psi_0(H)\defeq\{\psi_{0,l}(H):1\le l\le\gmul(H,0)\}$, where $\gmul(H,0)$ denotes the geometric multiplicity of $\lambda_0(H)$. Let $\psi_0(H;q)$ specify the point value(s) of the ground state(s) at $q\in\calC$. When $H$ is well understood, it may be omitted from the notations $\lambda_0(H)$, $\psi_0(H)$, and $\psi_0(H;q)$ without ambiguity. By the same token, let $\{\lambda_n(H)\}_{n\ge 1}$ denote the higher eigenvalues/eigenspaces of $H$ in a non-decreasing order, $\{\psi_n(H)\}_{n\ge 1}$ and $\{\psi_n(H;q)\}_{n\ge 1}$ represent the corresponding eigenstates and their point values, provided that degeneracy of states does not create confusion, or it is properly resolved by using additional indexing and ordering in accordance with additional auxiliary quantum numbers.

\begin{definition}{(Polynomially/Efficiently Computable Partial Hamiltonians)}\label{defiEffCompPartHamil}\\
Given a quantum physics/system $(\calC,\calH,\calB)$, a partial Hamiltonian $H\in\log\calB$ is said to be polynomially computable (or efficiently computable, interchangeably), when the Gibbs operators generated by $\Base(H)$ is polynomially computable (or efficiently computable), in the sense that, there exists an $\epsilon_1=\Omega(1/\poly(\size(\calC)))>0$, such that, for any given $\epsilon_2=\Omega(1/\poly(\size(\calC)))>0$, the Gibbs transition amplitude $K(\tau\Base(H);r,q)=\left\langle r\left|\exp[-\tau\Base(H)]\right|q\right\rangle$ can be computed to within a relative error no more than $\epsilon_2$, $\forall\tau\in(0,\epsilon_1]$, $\forall (r,q)\in\calC^2$ such that $\dist g(r,q)\le\epsilon_1$, at a computational cost that is upper-bounded by $O(\poly(\size(\calC)))$.
\vspace{-1.5ex}
\end{definition}

In the context of GSQC, whenever a computational problem is reduced to determining the ground state energy or sampling from the ground state wavefunction of a Hamiltonian $H$ on a subset of a configuration space $\calC$, or in the context of quantum statistical mechanics, whenever a computational problem is reduced to computing a partition function $\int_{q\,\in\,\calC}\,\langle q|\exp(-\tau H)|q\rangle \, dq$ for a fixed $\tau > 0$ or sampling from a Gibbs kernel $\langle r|\exp(-\tau H)|q\rangle$, $(r,q)\in\calC^2$, the Hamiltonian $H$ has to be polynomially computable in order for any hope of solving the problem efficiently. In particular, whenever a many-body Hamiltonian is given as an FBM tensor polynomial $H=\sum_{k=1}^{\mathsmaller{K}}H_k$, $H_k=\bigotimes_{i=1}^{n_k}h_{ki}$, $n_k\in\mathbb{N}$, $\forall k\in[1,K]$, $K\in\mathbb{N}$, it is always assumed that all of the FBM tensor monomials $H_k$, $k\in[1,K]$ and all of the FBM interactions $h_{ki}$, $i\in[1,n_k]$, $k\in[1,K]$ are polynomially computable. Furthermore, in the context of GSQC, the total Hamiltonian $H$ should have a unique ground state and needs to be polynomially gapped, namely, the ground state should be separated from all excited states by an energy gap whose reciprocal is upper-bounded by a polynomial of the problem input size, in order for the problem to be efficiently solvable.

Given a quantum physics/system $(\calC,\calH,\calB)$ of a variable size, and a typical FBM tensor monomial $M\in\log\calB(\calC)$, a sufficient (though not necessary) condition guaranteeing efficient computability of $M$ is that, either $\deg(M)=O(1)$ is a fixed number independent of the varying $\size(\calC)$, or more generally, $M=P\otimes Q$ is a tensor product with both $P$ and $Q$ being efficiently computable, while $P$ is an  projection operator such that $P^2=P$, although $\deg(P)$ could be as large as $O(\poly(\size(C)))$, in which case, it follows from the identity
\begin{equation}
\exp({-}\tau M) = \exp({-}\tau P\otimes Q) = I + P\otimes[\,\exp({-}\tau Q)-I\,],\;\tau\in(0,\infty) \label{TensorMonoMostProj}
\end{equation}
that $M$ is efficiently computable, because both $\exp(-\tau P)=(I-P) + e^{-\tau}P$ and $\exp(-\tau Q)$ are polynomially computable, $\forall \tau = O(\poly(\size(\calC)))$, $\tau > 0$. Also, if $M\in\log\calB(\calC)$ is a tensor product of the form $M=R\otimes Q$, with both $R$ and $Q$ being efficiently computable, where $R$ is a {\em self-inverse operator} such that $R^2 = I$, then both $M$ and $I-M$ are efficiently computable, because
\begin{equation}
I - R\otimes Q = 2R^+\!\otimes Q^- + 2R^-\!\otimes Q^+, ~\mbox{with}~ R^{\pm} \defeq {\textstyle{\half}}(I\pm R), ~ Q^{\pm} \defeq {\textstyle{\half}}(I\pm Q) \,, \label{TensorMonoSelfInv}
\end{equation}
where $R^{\pm}$ are projection operators such that $R^+R^-=R^-R^+=0$, hence $R^+\!\otimes Q^-$ and $R^-\!\otimes Q^+$ commute. Furthermore, if both $R$ and $Q$ are self-inverse operators, then both $\half(I + R\otimes Q)$ and $\half(I - R\otimes Q)$ are efficiently computable projection operators.

\begin{definition}{(Polynomial Lie-Trotter-Kato Decomposition of Partial Hamiltonians)}\label{defiPolyLieTrotterKato}\\
A Lie-Trotter-Kato decomposed partial Hamiltonian $H=\sum_{k=1}^{\mathsmaller{K}}H_k\defeq\sum_{k=1}^{\mathsmaller{K}}\bigotimes_{i=1}^{n_k}h_{ki}$, with $K\in\mathbb{N}$, $K=O(\poly(\size(H)))$, $n_k\in\mathbb{N}$, $n_k=O(\poly(\size(H)))$, $\forall k\in[1,K]$ is called polynomially Lie-Trotter-Kato decomposable (PLTKD), when $H$ is also an FBM tensor polynomial with each $h_{ki}$, $i\in[1,n_k]$, $n_k\in\mathbb{N}$, $k\in[1,K]$ being an efficiently computable FBM interaction, and for each $k\in[1,K]$, all but an $m_k=O(1)$ number of the FBM interactions $\{h_{ki}:i\in[1,n_k]\}$ are projection operators, wherein $\lim_{n\rightarrow\infty}\left(\textstyle{\prod_{k=1}^{\mathsmaller{K}}}\exp(\mathsmaller{-}H_k/n)\right)^n\!$ converges to $\exp(\mathsmaller{-}H)$ polynomially fast with respect to a suitable operator norm, that is, there exist a fixed integer $b\in\mathbb{N}$ and some positive constants $c_1,c_2,c_3\in\mathbb{R}$, all of which being independent of $K$, such that
\begin{equation}
e^{\lambda_0(H)}\scalebox{1.4}{$\|$}\scalebox{1.2}{$($}\,\textstyle{\prod_{k=1}^{\mathsmaller{K}}}e^{\mathsmaller{-}H_k/mK^b}\scalebox{1.2}{$)$}^{mK^b}\!-e^{\mathsmaller{-}H}\scalebox{1.4}{$\|$}\! \;\le\; c_1m^{\mathsmaller{-}c_2}\min[1,\,\lambda_1(H)-\lambda_0(H)] \label{LieTrotterKatoBounds}
\end{equation}
holds $\forall K\in\mathbb{N}$, $K=O(\poly(\size(H)))$, and $\forall m\in\mathbb{N}$ that satisfies $m\ge c_3$ and $mK^b\ge\tau_k^{{-}1}$ for all $k\in[1,K]$, where $\lambda_0(H)$ and $\lambda_1(H)$ denote the lowest and second lowest eigenvalues of $H$, $\|\cdot\|$ can be either the standard operator norm, or more strongly, the Hilbert-Schmidt norm, $\tau_k\in\mathbb{R}$ is a constant associated with $H_k$ such that the Gibbs operator $\exp(-\tau H_k)$ is efficiently computable, $\forall\tau\in(0,\tau_k)$, $\forall k\in[1,K]$, $\forall K\in\mathbb{N}$. Naturally, the corresponding expression $H=\sum_{k=1}^{\mathsmaller{K}}H_k$ is called a polynomial Lie-Trotter-Kato decomposition of $H$.
\vspace{-1.5ex}
\end{definition}

The class of PLTKD partial Hamiltonians contains a subclass of computationally local Hamiltonians, that are FBM tensor polynomials of the form $H=\sum_{k=1}^{\mathsmaller{K}}h_k$, with each $h_k$, $k\in[1,K]$ being an FBM interaction that moves no more than $O(\log(\size(H)))$ particles or dynamic variables. As will be seen later in this \iftoggle{ForUSPTO} {specification} {presentation}, the subclass of computationally local PLTKD Hamiltonians is already complete and universal for GSQC, and it defines the same quantum computational complexity classes as does the superclass of PLTKD Hamiltonians in general, as far as polynomial reduction is concerned. Still, it is advantageous to consider, analyze, and employ the superclass of PLTKD Hamiltonians, not only because they form a larger set and afford wider generality, but also due to the fact that their employment in certain applications leads to faster algorithms, even though the improvement is only polynomial.

When the partial Hamiltonians $H$ and $\{H_k\}_{k\in[1,\mathsmaller{K}]}$ are all bounded, the polynomial Lie-Trotter-Kato decomposability and the asymptotic bound of (\ref{LieTrotterKatoBounds}) can be easily established using power series expansions of matrix/operator functions and matrix/operator norm analyses, which reduces to the well-known Lie product formula \cite{Lie1888,Poincare1900,vonNeumann1929,Hall03}, as a special case of the Baker-Campbell-Hausdorff formula \cite{Campbell1896,Campbell1897,Baker1905,Hausdorff06}, or the Zassenhaus formula \cite{Zassenhaus39}. However, when some of the operators are unbounded, it is more intricate to formulate a Lie-Trotter-Kato product and prove an asymptotic bound of (\ref{LieTrotterKatoBounds}), due to subtleties in applicable domains as well as additivity of operators. Nevertheless, a significant body of theorems has been proved to establish the Lie-Trotter-Kato product formula for a wide variety of operators \cite{Trotter59,Chernoff68,Chernoff70,Chernoff74,Kato74,Kato78}, mostly in the sense of strong operator topology. Recently, the Lie-Trotter-Kato product formula has been proved to converge in the operator norm, and indeed with the asymptotic error polynomially bounded as in (\ref{LieTrotterKatoBounds}) \cite{Rogava93,Helffer95,Ichinose97,Ichinose98,Neidhardt98,Neidhardt99,Ichinose01,Ichinose02,Ichinose09}, especially for the all-important Schr\"odinger operators $H=-\Delta_g+V(q)$ on a Riemannian manifold $(\calC,g)$, with the potential $V(q\!\in\!\calC)$ subject to certain conditions of regularity. Here we do not limit our Hamiltonians to any explicit and specific form, but just posit that they should have a known polynomial Lie-Trotter-Kato decomposition.

Let $(\Omega,\calF,P)$ be a base probability space. For any random variable $X:(\Omega,\calF,P)\mapsto(\Omega',\calF')$, let $P_{\mathsmaller{X}}$ denote the induced probability on the measurable space $(\Omega',\calF')$ such that $P_{\mathsmaller{X}}(A')\defeq P(X^{\mathsmaller{-}1}(A'))$, $\forall A'\in\calF'$, and call $P_{\mathsmaller{X}}$ the probability law of $X$. For any two sub-$\sigma$-algebras $\calA\subseteq\calF$, $\calB\subseteq\calF$, define $\alpha(\calA,\calB)\defeq\sup_{\mathsmaller{A\in\calA,\,B\in\calB}}\{|P(A\cap B)-P(A)P(B)|\}$ as a measure of dependence between $\calA$ and $\calB$ \cite{Rosenblatt56}. Let $\mathbb{I}$ be a totally ordered monoid as an index set with $\min(\mathbb{I})=0$, {\it e.g.}, $\mathbb{I}=\mathbb{Z}_{\ge 0}$ or $\mathbb{I}=\mathbb{R}_{\ge 0}$. Let $\calX\defeq\{X_i\}_{i\in\mathbb{I}}$ be a sequence of random variables defined on $(\Omega,\calF,P)$. For any $j,k\in\mathbb{I}$, $j<k$, let $\calF_j^k(\calX)=\sigma(\{X_i\}_{j\le i\le k})$ be the $\sigma$-algebra generated by the random variables $\{X_i\}_{j\le i\le k}$ \cite{Bradley86,Bradley05}. For a sequence of random variables $\calX$, call $\alpha(\calX,k)\defeq\sup_{j\in\mathbb{I}}\{\alpha(\calF_{0}^j(\calX),\calF_{j+k}^{\infty}(\calX))\}$, $k\in\mathbb{I}$ its $\alpha$-mixing coefficient \cite{Bradley86,Bradley05,Doukhan94,Lin96}. For two measures $\mu$ and $\nu$ on a common measurable space $(\Omega,\calF)$, define the total variation distance $\|\mu-\nu\|_{\mathsmaller{\rm TV}}\defeq\sup_{\mathsmaller{A\in\calF}}\{|\mu(A)-\nu(A)|\}$ \cite{Levin08}. A sequence of random variables $\calX\defeq\{X_i\}_{i\in\mathbb{I}}$ is called $\alpha$-strongly mixing when $\lim_{k\mathsmaller{\rightarrow}\infty}\alpha(\calX,k)=0$ \cite{Bradley86,Bradley05,Doukhan94,Lin96}.

\begin{definition}{(Strongly Mixing Sequence of Random Variables)}\label{defiStrongMix}\\
A sequence of random variables $\calX\defeq\{X_i\}_{i\in\mathbb{I}}$ is called rapidly mixing toward a random variable $Y$\cite{Levin08}, with a mixing time $k_0>0$, $k_0\in\mathbb{I}$, when both $\alpha(\calX,k)\le\exp(\mathsmaller{-}k/k_0)$ and $\|P_{\mathsmaller{X_k}}-P_{\mathsmaller{Y}}\|_{\mathsmaller{\rm TV}}\le\exp(\mathsmaller{-}k/k_0)$ hold, $\forall k\ge k_0$, $k\in\mathbb{I}$. A sequence of random variables $\calX\defeq\{X_i\}_{i\in\mathbb{I}}$ is said to be well mixed if $\alpha(\calX,k)\le\exp(\mathsmaller{-}k)$, $\forall k\ge 1$, $k\in\mathbb{I}$.
\vspace{-1.5ex}
\end{definition}

Let $(\calC,\calH,\calB)$ be a many-body quantum system with a PLTKD Hamiltonian $H=\sum_{k=1}^{\mathsmaller{K}}H_k\in\log\calB$  that is supported by a continuous-discrete product configuration space $\calC\defeq\prod_{s=1}^{\mathsmaller{S}}(\calM_s^{n_s}\times\calP_s^{n_s})$ equipped with a Riemannian metric $g$, where $S\in\mathbb{N}$ is the number of particle species, each species indexed by $s\in[1,S]$ has $n_s\in\mathbb{N}$ identical particles moving on a substrate space $\calM_s\times\calP_s$.

\begin{definition}{(Computational Problem of Simulating Many-Body Hamiltonians)}\label{defiCompProbSimuManyBody}\\
A computational problem of simulating $H$ is to generate a sequence of a predetermined polynomial number of well mixed random samples of: either 1) configuration points $\{q\in\calC\}$ according to the probability density function $|\psi_0(H;q)|^2dV_g(q)$ in the context of GSQC, when $H$ is guaranteed to be polynomially gapped with a unique ground state $\psi_0(H)$; or 2) configuration points $\{q\in\calC\}$ according to a probability density function $|\psi(H;q;q_0,\tau)|^2dV_g(q)$ in the context of quantum statistical mechanics, where $\psi(H;q;q_0,\tau) \defeq \langle q|\exp\{-\tau[H{-}\lambda_0(H)]\}|q_0\rangle$, $q\in\calC$ is a Gibbs wavefunction with fixed $(q_0,\tau)\in\calC\times(0,\infty)$; or 3) pairs of configuration points $\{(r,q)\in\calC^2\}$ according to the absolute value of a Gibbs kernel $\langle r|\exp(-\tau H)|q\rangle$ in the context of quantum statistical mechanics, with $\tau > 0$ being a predetermined constant; where all said polynomials are in terms of $\size(H)$.
\end{definition}

\begin{definition}{(Spatially Local Partial Hamiltonians)}\label{defiSLH}\\
A boltzmannonic partial Hamiltonian $H_{\!\mathsmaller{B}}\in\calL_0(\calC)$ for a quantum multi-species multi-particle system moving on a many-body configuration space $\calC$ is called spatially local, if for any open subset $\calD\subseteq\calC$ and any $\psi\in\dom(H_{\!\mathsmaller{B}})\subseteq L^2(\calC)$ such that $\psi|_{\mathsmaller{\calC\setminus\cl(D)}}=0$, the overlap integral $\int_{\mathsmaller{\calC\setminus\cl(D)}}\phi^*H_{\!\mathsmaller{B}}\psi dV_g=0$ identically for any $\phi\in L^2(\calC)$. A fermionic partial Hamiltonian $H\in\calL_0(L^2_{\!\mathsmaller{F}}(\calC))$ is called spatially local when its base boltzmannonic Hamiltonian $\Base(H)$ is spatially local.
\vspace{-1.5ex}
\end{definition}

Spatially local boltzmannonic or fermionic partial Hamiltonians are epitomized by, but not limited to, the conventional boltzmannonic or fermionic Schr\"odinger operators of the familiar form $H=-\Delta_g+V(q)$, with $\Delta_g\defeq|\!\det(g)|^{\mathsmaller{-}1/2}\partial_i|\!\det(g)|^{1/2}g^{ij}\partial_j$ being the Laplace-Beltrami operator on a many-body configuration space $\calC$ endowed with a Riemannian metric $g$, $V(q\!\in\!\calC)$ representing a local potential, which as an operator is obviously a $\calC$-diagonal polynomial, and will be referred to as a $\calC$-diagonal potential hereafter. An important property of a spatially local boltzmannonic partial Hamiltonian $H_{\!\mathsmaller{B}}$ is wavefunction support nonexpanding, that is, $\forall\psi\in\dom(H_{\!\mathsmaller{B}})\subseteq L^2(\calC)$, let $\calD\subseteq\calC$ be the support of $\psi$ and $\cl(\calD)$ be the closure of $\calD$, such that $\int_{\mathsmaller{\calC\setminus\cl(\calD)}}\phi^*\psi\,dV_g=0$, $\forall\phi\in L^2(\calC)$, then the support of the wavefunction $H_{\!\mathsmaller{B}}\psi$ modulo null subsets of zero measure is necessarily a subset of $\cl(\calD)$, because it also holds that $\int_{\mathsmaller{\calC\setminus\cl(\calD)}}\phi^*H_{\!\mathsmaller{B}}\psi\,dV_g=0$, $\forall\phi\in L^2(\calC)$. For a general CD-separately moving partial Hamiltonian $H=H^{\mathsmaller{C}}+H^{\mathsmaller{D}}$ supported by a continuous-discrete product configuration space $\calC=\calM\times\calP$, the property of spatial locality is a useful characterization mostly for the continuous component $H^{\mathsmaller{C}}$, while the discrete component $H^{\mathsmaller{D}}$ can be spatially local if and only if it is $\calP$-diagonal, hence $\calC$-diagonal.

\begin{definition}{(Boltzmannonic and Fermionic Schr\"odinger Partial Hamiltonians)}\label{defiBFSH}\\
Supported by a continuous-discrete product manifold $\calC$ as a many-body configuration space, a CD-separately moving boltzmannonic partial Hamiltonian $H_{\!\mathsmaller{B}}=H_{\!\mathsmaller{B}}^{\mathsmaller{C}}+H_{\!\mathsmaller{B}}^{\mathsmaller{D}}\in\calL_0(\calC)$ is called spatially local and stoquastic, or boltzmannonic Schr\"odinger in short, when $H_{\!\mathsmaller{B}}^{\mathsmaller{C}}$ is spatially local and $H_{\!\mathsmaller{B}}$ is stoquastic in the sense that, $\forall\tau>0$, the Gibbs operator $\exp({-}\tau H_{\!\mathsmaller{B}})$ has a real- and non-negative-valued Gibbs kernel $K(\tau H_{\!\mathsmaller{B}};r,q)=\left\langle r\left|\exp({-}\tau H_{\!\mathsmaller{B}})\right|q\right\rangle$, $\forall(r,q)\in\calC\times\calC$. A fermionic partial Hamiltonian $H\in \calL_0(L^2_{\!\mathsmaller{F}}(\calC))$ is called spatially local and fermionic stoquastic, or fermionic Schr\"odinger in short, when its base boltzmannonic partial Hamiltonian $\Base(H)$ is boltzmannonic Schr\"odinger.
\end{definition}

\begin{lemma}{(Tensor Product of Fermionic Schr\"odinger and $\calC$-diagonal Operators)}\label{FStimesCdiagEqFS}\\
If $H_1=H^{\mathsmaller{C}}_1+H^{\mathsmaller{D}}_1\in\calL_0(L^2_{\!\mathsmaller{F}}(\calC_1))$ is a CD-separated fermionic Schr\"odinger partial Hamiltonian that is supported by a continuous-discrete product configuration space $\calC_1\defeq\calM_1\times\calP_1$, and $V\in\calL_0(L^2_{\!\mathsmaller{F}}(\calC))$ is a non-negative $\calC$-diagonal potential supported by a configuration space $\calC$, then the tensor product operator $H_2\defeq H_1\otimes V\in\calL_0(L^2_{\!\mathsmaller{F}}(\calC_2))$ is a fermionic Schr\"odinger partial Hamiltonian supported by the product configuration space $\calC_2\defeq\calC_1\times\calC$.
\end{lemma}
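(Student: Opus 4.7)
The plan is to verify, for $H_2\defeq H_1\otimes V$, each clause in the definition of a fermionic Schr\"odinger Hamiltonian: its base boltzmannonic Hamiltonian $\Base(H_2)$ must be CD-separately moving, its continuous part must be spatially local, and $\Base(H_2)$ must be stoquastic. I will establish these three conditions in order.

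First I identify the base. Since $V$ is $\calC$-diagonal it commutes with every permutation acting on the $\calC$-coordinate and is untouched by the $\calC_1$-exchange symmetries; hence $\Base(H_2)=\Base(H_1)\otimes V$ on $L^2(\calC_2)$, and the fermionic projector on $\calC_2$ restricts this base to $H_2$ exactly as it restricts $\Base(H_1)$ to $H_1$. For the CD-separate form I split $V$ along the $\calM$--$\calP$ product structure of $\calC$ as $V=\sum_v V_v^{\mathsmaller{C}}\otimes|v\rangle\langle v|$ with each $V_v^{\mathsmaller{C}}$ being $\calM$-diagonal multiplication, and substitute into $H_1^{\mathsmaller{C}}\otimes V$ and $H_1^{\mathsmaller{D}}\otimes V$. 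Regrouping tensor factors yields $H_2=H_2^{\mathsmaller{C}}+H_2^{\mathsmaller{D}}$ with $H_2^{\mathsmaller{C}}\defeq H_1^{\mathsmaller{C}}\otimes V$ matching the template $\sum_{i,v}(H_i^{\mathsmaller{C}}\!\otimes V_v^{\mathsmaller{C}})(V_i^{\mathsmaller{D}}\!\otimes|v\rangle\langle v|)$, and $H_2^{\mathsmaller{D}}\defeq H_1^{\mathsmaller{D}}\otimes V$ matching the symmetric template for the discrete part.

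For spatial locality of $\Base(H_2^{\mathsmaller{C}})=\Base(H_1^{\mathsmaller{C}})\otimes V$, I fix any open $\calD\subseteq\calC_2$ and any $\psi\in L^2(\calC_2)$ vanishing on $\calC_2\setminus\cl(\calD)$. For each fixed $q\in\calC$, the slice $\psi(\cdot,q)\in L^2(\calC_1)$ is supported in the corresponding $\calC_1$-slice of $\cl(\calD)$; spatial locality of $\Base(H_1^{\mathsmaller{C}})$ guarantees that $\Base(H_1^{\mathsmaller{C}})\psi(\cdot,q)$ keeps the same support, and the scalar factor $V(q)$ preserves it. Pairing with any $\phi\in L^2(\calC_2)$ supported in $\calC_2\setminus\cl(\calD)$ then integrates to zero slice by slice, hence overall, delivering spatial locality.

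For stoquasticity, I exploit the commutativity of $\Base(H_1)\otimes I$ with $I\otimes V$, which gives $(\Base(H_1)\otimes V)^n=\Base(H_1)^n\otimes V^n$ and therefore, for each fixed $q\in\calC$, the Gibbs operator $\exp[-\tau\Base(H_2)]$ acts on the $q$-slice as $\exp[-\tau V(q)\Base(H_1)]$. This yields the factorization
\begin{equation*}
K(\tau\Base(H_2);(q_1',q'),(q_1,q))=K(\tau V(q)\Base(H_1);q_1',q_1)\,\delta_{\calC}(q',q),
\end{equation*}
whose non-negativity follows from $V(q)\ge 0$ and the stoquasticity of $\Base(H_1)$ evaluated at the rescaled imaginary time $\tau V(q)\ge 0$. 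The main obstacle is that this kernel is not a classical function but a distribution supported on the $\calC$-diagonal, because $V$ contributes no kinetic smoothing on $\calC$; non-negativity must therefore be interpreted in the distributional sense of pairing against non-negative test functions, with the boundary case $V(q)=0$ handled by noting that the corresponding slice-wise Gibbs operator collapses to the identity, whose kernel is itself a non-negative Dirac delta on $\calC_1$.
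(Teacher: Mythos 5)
Your proof is correct and follows essentially the same route as the paper's: decompose $H_2=H_1^{\mathsmaller{C}}\otimes V+H_1^{\mathsmaller{D}}\otimes V$, verify spatial locality of the continuous part, and verify stoquasticity of $\Base(H_2)=\Base(H_1)\otimes V$. The paper states the last two verifications as one-line assertions, whereas you unpack them — the slicewise support argument, the kernel factorization $K(\tau\Base(H_2);(q_1',q'),(q_1,q))=K(\tau V(q)\Base(H_1);q_1',q_1)\,\delta_{\calC}(q',q)$, and the observation that non-negativity must be read distributionally since $V$ supplies no kinetic spreading on $\calC$ — so your version is a more careful expansion of the same argument rather than a different one.
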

\vspace{-4.0ex}
\begin{proof}[\iftoggle{ForUSPTO} {Demonstration} {Proof}]
$H_2=H_1\otimes V=H^{\mathsmaller{C}}_1\otimes V+H^{\mathsmaller{D}}\otimes V$. It is obvious that $H^{\mathsmaller{C}}_1\otimes V$ is spatially local as long as $H^{\mathsmaller{C}}_1$ is so. Also clear is that $\Base(H_2)=\Base(H_1)\otimes\Base(V)$ is stoquastic, as $\Base(V)$ is $\calC$-diagonal and non-negative. It follows that $H_2$ is fermionic Schr\"odinger.
\vspace{-1.5ex}
\end{proof}

Given any open submanifold $\calD\subseteq\calC$ and any boltzmannonic Schr\"odinger partial Hamiltonian $H_{\!\mathsmaller{B}}$ supported by it, an eigenvalue problem arises naturally for $H_{\!\mathsmaller{B}}$ on $\calD$, subject to a suitable boundary condition of the Dirichlet, Neumann, or Robin type, when the boundary $\partial\calD$ is nonempty. The same is true for a fermionic Schr\"odinger partial Hamiltonian $H$ supported by $\calD$ as well, as long as the submanifold $\calD$ is large enough to contain all of the orbits, so to stay invariant, under the action of the exchange symmetry subgroup $G_{\rm ex}^{\mathsmaller{H}}$, which is defined as the largest subgroup comprising permutations that exchange only among the identical particles being actually moved by $H$. Evidently, all eigenvectors of a boltzmannonic or fermionic Schr\"odinger partial Hamiltonian $H_{\!\mathsmaller{B}}$ or $H$ supported by $\calD$ can be made real-valued, and the ground state of $H_{\!\mathsmaller{B}}$ made nowhere negative. Moreover, with the submanifold $\calD$ being $C^k$-smooth, $k\ge 1$ and $H_{\!\mathsmaller{B}}=\Base(H)$ being sufficiently regular, it can be assumed that the Gibbs operators $\exp({-}\tau H_{\!\mathsmaller{B}})$ and $\exp({-}\tau H)$ for some $\tau>0$ are compact self-adjoint on the separable Hilbert spaces $L^2(\calD)$ and $L_{\!\mathsmaller{F}}^2(\calD)$ respectively, and the Gibbs kernel $K(\tau H_{\!\mathsmaller{B}};r,q)=\langle r|\exp({-}\tau H_{\!\mathsmaller{B}})|q\rangle\in L^2(\calD\times\calD)$ is jointly continuous in $r$ and $q$, so is
\begin{align}
K(\tau H;r,q) \,&=\, \langle r|P_{\!\mathsmaller{F}}\exp({-}\tau H_{\!\mathsmaller{B}})P_{\!\mathsmaller{F}}|q\rangle \,=\, \langle r|P_{\!\mathsmaller{F}}\exp({-}\tau H_{\!\mathsmaller{B}})|q\rangle \nonumber \\[0.75ex]
\,&=\, const \times {\textstyle{ {\scalebox{1.15}{$\sum$}}_{\pi_1,\pi_2\in G_{\rm ex}} }} \, \sign(\pi_1) \, \sign(\pi_2) \, K(\tau H_{\!\mathsmaller{B}};\pi_1r,\pi_2q) \\[0.75ex]
\,&=\, const \times {\textstyle{ {\scalebox{1.15}{$\sum$}}_{\pi\in G_{\rm ex}} }} \, \sign(\pi)\,K(\tau H_{\!\mathsmaller{B}};\pi r,q) \,. \nonumber
\end{align}
It follows from the spectral theorem for compact self-adjoint operators \cite{Conway90,Blanchard15} that the countable set of eigenvectors of $\exp({-}\tau H)$, $\tau>0$ form an orthonormal basis for the Hilbert space $L_{\!\mathsmaller{F}}^2(\calD)$, and the Gibbs operator has the spectral resolution $\exp({-}\tau H)=\sum_{n=0}^{\infty}e^{\mathsmaller{-}\tau\lambda_n(H)}|\psi_n(H)\rangle\langle\psi_n(H)|$, where $\mathlarger{\mathlarger{\{}}e^{\mathsmaller{-}\tau\lambda_n(H)}\mathlarger{\mathlarger{\}}}_{n=0}^{\infty}$ lists the eigenvalues of $\exp({-}\tau H)$ in a decreasing order, and $\mathlarger{\mathlarger{\{}}\psi_n(H)\mathlarger{\mathlarger{\}}}_{n=0}^{\infty}$ lists the corresponding normalized eigenvectors, all of which are necessarily continuous in $\calD$ by the continuity of $K(\tau H;r,q)$ and the $L^2$-integrability of the eigenvectors in the eigen equations
\begin{align}
\left[\exp\left(-\tau H\right)\psi_n(H)\right](r) \,&=\, \int_{\calD}K(\tau H;r,q)\,\psi_n(H;q)\,dV_g(q) \nonumber \\[0.75ex]
\,&=\, e^{\mathsmaller{-}\tau\lambda_n(H)}\psi_n(H;r),\;\forall r\in\calD,\;\forall n\ge 0 \,.
\end{align}

For a conventional boltzmannonic Schr\"odinger operator $H_{\!\mathsmaller{B}}=-\Delta_g+V(q)$ on any connected Riemannian manifold $(\calM,g)$, with $\Delta_g$ being the Laplace-Beltrami operator, $V(q\!\in\!\calM)$ being $\calM$-diagonal and real-valued, the conditional Wiener measure can be taken as the reference measure, with respect to which the Gibbs energy functional reads $U_{\!\mathsmaller{G}}[\gamma(\cdot)]=\int_0^{\tau}V[\gamma(\tau')]d\tau'$ in equation (\ref{FeynmanKacKHB}), which then reduces to the celebrated Feynman-Kac formula. It can be shown that the operator $\exp({-}\tau H_{\!\mathsmaller{B}})$ is trace class, $\forall\tau>0$, the Gibbs kernel $K(\tau H_{\!\mathsmaller{B}};r,q)\in L^2(\calM\times\calM;\mathbb{R})$ is non-negative and continuous in $(\tau,r,q)\in(0,\infty)\times\calM\times\calM$, so long as the potential $V(q)$ is sufficiently regular, {\it e.g.}, Kato-decomposable, that is, $V=V_+-V_-$, with $V_+\in L^1_{\rm loc}(\calM)$ (locally $L^1$-integrable), and $V_-$ being a Kato-class potential having mild singularities \cite{Kato80,Lorinczi11,Simon82}. Moreover, $\forall\tau>0$, the Gibbs kernel $K(\tau H_{\!\mathsmaller{B}};r,q)$ is strictly positive almost everywhere, thus {\em positivity improving} \cite{Lorinczi11} in the sense that, $\forall\psi\in L^2(\calM;\mathbb{R})$ such that $\psi(q)\ge 0$, $\forall q\in\calM$ and $\psi\not\equiv 0$, $\int_{\mathsmaller{\calM}}K(\tau H_{\!\mathsmaller{B}};r,q)\,\psi(q)\,dV_g(q)$ becomes strictly positive for $V_g$-almost every $r\in\calM$, provided that the potential $V(q)$ is {\em Klauder regular}, {\it i.e.}, fulfilling the condition below to avoid the so-called Klauder phenomenon \cite{Simon79,Lorinczi11},
\begin{equation}
\exists\,\tau>0\,\;\mbox{such that }\int_{\calM}\mu_{\mathsmaller{W}}^{\mathsmaller{[q,\cdot]}}\!\left(\left\{
\gamma:\exp[-{\textstyle{\int}}_0^{\tau}V(\gamma(\tau'))d\tau']\!=\!0\right\}\right)\!dV_g(q)=0 \,,
\end{equation}
with $\mu_{\mathsmaller{W}}^{\mathsmaller{[q,\cdot]}}$ being the standard Wiener measure on the subset of paths in $W_{\tau}$ that start from $q$ and end anywhere in $\calM$. Any Kato-decomposable potentials $V(q)$ is Klauder regular, so the associated conventional boltzmannonic Schr\"odinger operator $H_{\!\mathsmaller{B}}=-\Delta_g+V(q)$ is positivity improving. When $H_{\!\mathsmaller{B}}$ is positivity improving, $\forall\tau>0$, the Gibbs operator $\exp({-}\tau H_{\!\mathsmaller{B}})$ has a unique positive eigenvector as the ground state of $H_{\!\mathsmaller{B}}$ \cite{Lorinczi11}, by the celebrated Perron-Frobenius theorem for non-negative matrices \cite{Seneta81,Horn85,Meyer00} and the generalizing Krein-Rutman and de Pagter's theorems for compact operators that are positive with respect to a convex cone \cite{Krein48,dePagter86,Du06}. Besides, when necessary, better analytical properties of $K(\tau H_{\!\mathsmaller{B}};r,q)$ can be established and used when $q$ and $r$ are kept out of a submanifold of singularity $\calM_{\rm sin}$, which has a lower dimension than $\calM$. A case in point is when the potential $V(q)$, $q\in\calM$ is a sum of a bounded smooth function and Coulomb interactions between multiple electrons and multiple nuclei as well as among the multiple electrons and among the multiple nuclei themselves, where $\calM_{\rm sin}$ is the union of {\em collision planes}, each of which is defined by a coordinate coincidence between two particles. The potential $V(q)$ is analytic \cite{Krantz02} at any $q\in\calM\setminus\calM_{\rm sin}$, so the eigenfunctions of $H_{\!\mathsmaller{B}}$ are analytic in $\calM\setminus\calM_{\rm sin}$, by the standard theory of analytic regularity for solutions of partial differential equations \cite{Kato57,Morrey66,Hormander76}. Thus, the Gibbs kernel $K(\tau H_{\!\mathsmaller{B}};r,q)$ is analytic in $(\calM\setminus\calM_{\rm sin})\times(\calM\setminus\calM_{\rm sin})$. Even at points on the submanifold of singularity $\calM_{\rm sin}$, the singular behaviors of the eigenfunctions of $H_{\!\mathsmaller{B}}$ are mild \cite{Fournais05,Fournais09,Ammann12}.

In this \iftoggle{ForUSPTO} {specification} {presentation}, we do not limit boltzmannonic Hamiltonians to the conventional Schr\"odinger operators, instead assume that any boltzmannonic Hamiltonian $H_{\!\mathsmaller{B}}$ should be spatially local and sufficiently regular over a continuous-discrete product Riemannian manifold $(\calC,g)$ excluding a lower-dimensional submanifold $\calC_{\rm sin}\subset\calC$, such that $\forall\tau>0$, the Gibbs operator $\exp(\mathsmaller{-}\tau H_{\!\mathsmaller{B}})$ is trace class, hence Hilbert-Schmidt with a Gibbs kernel $K(\tau H_{\!\mathsmaller{B}};r,q)$ belonging to the class $L^2(\calC\times\calC;\mathbb{R})$ for any fixed $\tau\in(0,\infty)$, as well as being positivity improving and $(\tau,r,q)$-jointly $C^{\infty,2,2}$ over the product topological space $(0,\infty)\times(\calC\setminus\calC_{\rm sin})\times(\calC\setminus\calC_{\rm sin})$, namely, $\forall(n_{\tau},n_i,n_j,n_k,n_l)\in(\mathbb{N}\cup\{0\})^5$ such that $n_i+n_j\le 2$, $n_k+n_l\le 2$, with $(i,j,k,l)\subseteq\mathbb{N}^4$ indexing continuous coordinate components, the partial derivative
$\partial_{\raisebox{0.3\height}{\tiny $\tau$}}^{\raisebox{0.4\height}{\tiny $n_{\tau}$}}
\partial_{\raisebox{0.3\height}{\tiny $r_i$}}^{\raisebox{0.4\height}{\tiny $n_i$}}
\partial_{\raisebox{0.3\height}{\tiny $r_j$}}^{\raisebox{0.4\height}{\tiny $n_j$}}
\partial_{\raisebox{0.3\height}{\tiny $q_k$}}^{\raisebox{0.4\height}{\tiny $n_k$}}
\partial_{\raisebox{0.3\height}{\tiny $q_l$}}^{\raisebox{0.4\height}{\tiny $n_l$}}
K(\tau H_{\!\mathsmaller{B}};r,q)$ exists and is continuous in $(0,\infty)\times(\calC\setminus\calC_{\rm sin})\times(\calC\setminus\calC_{\rm sin})$. Moreover, the Hamiltonian $H_{\!\mathsmaller{B}}$ ought to behave like a uniformly elliptic differential operator and substantiate the so-called Hopf lemma \cite{Hopf52,Oleinik52,Gilbarg01,Evans10}, in that, for any bounded connected open subset $\calD\subseteq\calC'$ of a submanifold $\calC'\subseteq(\calC\setminus\calC_{\rm sin})$, possibly with $\dim(\calC')<\dim(\calC)$ when considering restricted motions on a lower-dimensional submanifold of $\calC$, there exists a constant $E_0(H_{\!\mathsmaller{B}},\calD)\in\mathbb{R}$, called a {\em Hopf bias} for convenience, such that if $\psi\in C^2(\calD)\cap C^1(\cl(\calD))$ satisfies the {\em Dirichlet differential inequality} $[H_{\!\mathsmaller{B}}|_{\calD}+E_0(H_{\!\mathsmaller{B}},\calD)]\psi(q)\ge 0$, $\forall q\in\calD$, then 1) $\psi$ must have a strictly negative outer normal derivative at any boundary point $r\in\partial\calD\cap\cl(B)$ with an open ball $B\subseteq\calD$ such that $\psi(r)<\psi(q)$, $\forall q\in B$; 2) $\psi$ can only be a constant if it attains a non-positive minimum over $\cl(\calD)$ at an interior point $r\in\calD$. In particular, $\forall\psi\in\Diri(\calD)$, $\lambda\in\mathbb{R}$ such that $H_{\!\mathsmaller{B}}|_{\calD}\,\psi(q)=\lambda\psi(q)$, $\psi(q)>0$, $\forall q\in\calD$, it is always true that the gradient vector $\partial_r\psi(r)\defeq(\partial^i\psi)=(g^{ij}\partial_j\psi)\neq 0$ for almost every $r\in\partial\calD$, so long as $\partial\calD$ is piecewise $C^2$-smooth.

Still further, when considering a varying series of fermionic or boltzmannonic Schr\"odinger partial Hamiltonians $\{H(t):t\in\calI\}$ or $\{H_{\!\mathsmaller{B}}(t)=\Base(H(t)):t\in\calI\}$, with the index set $\calI$ being an interval on the real axis, and $\{H(t)\}$ or $\{H_{\!\mathsmaller{B}}(t)\}$ being continuously differentiable or even analytic in $t\in\calI$ \cite{Krantz02,Rellich69,Kriegl97}, it is always assumed that a suitable Wiener measure on the Riemannian manifold $\calC$ exists, and the partial Hamiltonian $H_{\!\mathsmaller{B}}(t)$ behaves sufficiently well on $\calC$ excluding a lower-dimensioned submanifold $\calC_{\rm sin}(t)$, $\forall t\in\calI$, such that the Feynman-Kac integral produces a Gibbs kernel $K(\tau H_{\!\mathsmaller{B}}(t);r,q)$ that is in $L^2(\calC\times\calC;\mathbb{R})$ and positivity improving $\forall(\tau,t)\in(0,\infty)\times\calI$, which is furthermore $(\tau,t,r,q)$-jointly $C^{\infty,1,0,0}$ in $(0,\infty)\times\calI\times\calC\times\calC$ and $(\tau,r,q)$-jointly $C^{\infty,2,2}$ in $(0,\infty)\times(\calC\setminus\calC_{\rm sin}(t))\times(\calC\setminus\calC_{\rm sin}(t))$, $\forall t\in\calI$. Still further, the operator $H_{\!\mathsmaller{B}}(t)$ must substantiate the Hopf lemma and extremum principle, $\forall t\in\calI$. Antisymmetrization yields the fermionic counterparts $H(t)=P_{\!\mathsmaller{F}}H_{\!\mathsmaller{B}}(t)P_{\!\mathsmaller{F}}$, $t\in\calI$, and $K(\tau H(t);r,q)=\left(\prod_{s=1}^{\mathsmaller{S}}n_s!\right)\sum_{\pi,\pi'\in\,G_{\rm ex}}(\mathsmaller{-}1)^{\sign(\pi)\,+\,\sign(\pi')}K(\tau H_{\!\mathsmaller{B}}(t);\pi r,\pi'q)$, which is also $(\tau,t,r,q)$-jointly $C^{\infty,1,0,0}$ in the product space $(0,\infty)\times\calI\times\calC\times\calC$, and $(\tau,r,q)$-jointly $C^{\infty,2,2}$ in $(0,\infty)\times(\calC\setminus\calC_{\rm sin}(t))\times(\calC\setminus\calC_{\rm sin}(t))$, $\forall t\in\calI$. Let $\{\psi_n(t;q)\}_{n=0}^{\infty}$ denote the set of eigenvectors of $\exp(\mathsmaller{-}H(t))$ corresponding to the eigenvalues $\{\lambda_n(H(t))\}_{n=0}^{\infty}$ in an increasing order, $\forall t\in\calI$. It is well known that $K(\tau H(t);r,q)=\sum_{n=0}^{\infty}e^{\mathsmaller{-}\tau\lambda_n(H(t))}\psi_n(t;r)\psi_n^*(t;q)$, $\forall(\tau,t,r,q)\in(0,\infty)\times\calI\times\calC\times\calC$, which converges absolutely and uniformly as $n\rightarrow\infty$, as well as uniformly when $\tau\rightarrow\infty$. Then it follows straightforwardly from
\begin{equation}
\int_{\calC}K(\tau H(t);r,q)\,\psi_n(t;q)\,dV_g(q) \,=\, e^{\mathsmaller{-}\tau\lambda_n(H(t))}\,\psi_n(t;r) \,, \; \forall r\in\calC \,, \; \forall t\in\calI \,, \; \forall\tau>0 \,, \; \forall n\ge 0 \,, \label{PsiIsC12}
\end{equation}
that the $t$-series of eigenstate wavefunctions $\{\psi_n(t;q):(t,q)\in\calI\times\calC\}$ is $(t,q)$-jointly $C^{1,0}$ in $\calI\times\calC$, $\forall n\ge 0$, and $C^2$ in $\calC\setminus\calC_{\rm sin}(t)$, $\forall t\in\calI$, $\forall n\ge 0$.

\begin{definition}{(Densities and Signed Densities)}\label{defiSignDens}\\
Given a measurable space $(\calC,\calF)$, where $\calC$ is a configuration space and $\calF \defeq \calF(\calC)$ is a $\sigma$-algebra of subsets of $\calC$, a scalar density on $\calC$ is a $(\calC,\calF)$-measurable function from $\calC$ to a field $\mathbb{K}$. A density on $\calC$ is a tuple- or vector-valued function having either just one or a plurality of scalar densities as components. In particular, a scalar density on $\calC$ is a density on $\calC$ that has just one component. A density $f$ on $\calC$ is said to be signed, when $f$ has two components $f_1$ and $f_2$ as scalar densities that are not necessarily different, and two points $q_1\in\calC$ and $q_2\in\calC$ exist which need not to differ, such that the value of the quotient $f_1(q_1)/f_2(q_2)$ is different from zero and a positive real number.
\end{definition}

\begin{definition}{(Minimally Entangled Densities)}\label{defiMiniEntaDens}\\
A density on a configuration space $\calC$ of a variable size is said to be minimally entangled, when it can be written as $f(\{g_i\}_{i=1}^n) \defeq f(g_1,\cdots\!,g_n)$, with $n=O(\poly(\size(\calC)))$, $f$ being a Borel measurable function defined on $\mathbb{K}^n$, $\mathbb{K}$ being a field, each $g_i$, $i\in[1,n]$ being a $\mathbb{K}$-valued scalar density on a submanifold $\calC_i$ as a tensor factor of $\calC$ such that $\size(C_i)=O(\log(\size(\calC)))$, wherein all of the functions $f$ and $\{g_i\}_{i=1}^n$ have a representation in a closed mathematical form that is efficiently computable, such that $\forall q\in\calC$, the point-value $f(\{g_i(q|_{\calC_i})\}_{i=1}^n)$, with $q|_{\calC}$ denoting the coordinate restriction of $q$ to the submanifold $\calC_i$, $\forall i\in[1,n]$, can be computed to within any predetermined relative error $\epsilon > 0$, at the cost of an $O(\poly(\size(\calC)+|\log\epsilon|))$ computational complexity.
\end{definition}

\begin{definition}{(Substantially Entangled Densities, Practically Substantially Entangled Densities)}\label{defiSubsEntaDens}\\
A density on a configuration space $\calC$ of a variable size is said to be substantially entangled when it is signed and can not be represented in the form of a minimally entangled density on $\calC$. A density on the same $\calC$ is said to be practically substantially entangled when it is signed and has no known representation in the form of a minimally entangled density on $\calC$.
\vspace{-1.5ex}
\end{definition}

The ground state $\psi_0(H;\cdot)$ or the Gibbs kernel function $K(\tau H;\cdot,\cdot)\defeq\langle\cdot|\exp(-\tau H)|\cdot\rangle$, $\tau>0$ of a many-fermion Hamiltonian $H$ is a signed density on the corresponding configuration space $\calC\defeq\calC^1$ or the product space $\calC\times\calC\defeq\calC^2$. In the past, great computational difficulties have been encountered in sampling from $|\psi_0(H;\cdot)|^2$ or $K(\tau H;\cdot,\cdot)$, $\tau>0$ to derive a numerical estimate for an expectation value of an interested observable, which seem to stem from the fact that the density $\psi_0(H;\cdot)$ or $K(\tau H;\cdot,\cdot)$, $\tau>0$ is substantially entangled, or at least practically substantially entangled.

To tackle a computational problem of simulating a many-fermion Hamiltonian $H$ via Monte Carlo, a mathematical correspondence is needed to connect a density associated with $H$ to a classical probability. Frequently used densities include the ground state $\psi_0(H;q)$, $q\in\calC^1$ and a Gibbs kernel $K(\tau H;q)\defeq K(\tau H;q_1,q_2)\defeq\langle q_1|\exp(-\tau H)|q_2\rangle$, $q\defeq(q_1,q_2)\in\calC^2$, $\tau>0$. Let $\calH\subseteq L^2(\calC^k;\mathbb{K})$, $k\in\mathbb{N}$ be a Hilbert space of functions supported by a configuration or product space $\calC^k\defeq\prod_{i=1}^k\calC$ and over a scalar field $\mathbb{K}\in\{\mathbb{R},\mathbb{C}\}$, and let $T\!\in\!\calB(\calH)$ be a bounded and normal linear operator, defined through an integral kernel $\langle\cdot|T|\cdot\rangle\in L^2(\calC^k\times\calC^k;\mathbb{K})$, such that $[T\phi](r)=\int_{\mathsmaller{\calC^k}}\langle r|T|q\rangle\,\phi(q)\,dV_g(q)$, $\forall\phi\in\calH$, $\forall r\in\calC^k$. Let $\psi$ be an eigenvector such that $T\psi=\lambda\psi$, $\lambda\in\mathbb{K}$. $\psi$ is necessarily continuous in $\calC^k$. Define a new operator $T_{\psi}\defeq[\psi^*]T[\psi^*]^{\mathsmaller{-}1}$\! on the $L^1$ space $\Diri(\calD,1)$, with $\calD\defeq\calC^k\setminus\psi^{\mathsmaller{-}1}(0)$ being the open subset of $\calC^k$ excluding the nodal points of $\psi$, the superscript $\!^*$ denoting complex conjugation when applicable and needed, and $\forall\phi\in L^2(\calD)$, $[\phi]:L^2(\calD)\mapsto L^2(\calD)$ representing a multiplication operator induced by the function $\phi(\cdot)$, which is formally a diagonal operator listing the values $\{\phi(q):q\in\calD\}$ as its diagonal elements, namely, $[\phi]\defeq\diag(\{\phi(q)\!:\!q\in\calD\})\defeq\!\int_{\mathsmaller{\calD}}\phi(q)\,|q\rangle\langle q|\,dV_g(q)$, such that $[\phi]\xi(q)=\phi(q)\xi(q)$ holds true, $\forall\xi\in L^2(\calD)$, $\forall q\in\calD$. The operator $T_{\psi}$ has an integral kernel $\langle r|T_{\psi}|q\rangle\defeq\psi^*(r)\langle r|T|q\rangle\psi^*(q)^{\mathsmaller{-}1}$, $(r,q)\in\calD\times\calD$.

\begin{lemma}{(Quasi-stochastic Operators and Integral Kernels)}\label{QuasiStochasticOper}\\
The operator $T_{\psi}$ and the associated integral kernel $\langle\cdot|T_{\psi}|\cdot\rangle$ are quasi-stochastic, in the sense that the right marginal distribution $\int_{\mathsmaller{\calD}}\langle r|T_{\psi}|q\rangle\,dV_g(r)$ reduces to the constant $\lambda$, $\forall q\in\calD$, moreover, for any eigenvector $\psi'$ of $T$ such that $T|_{\mathsmaller{\calD}}\psi'=\lambda'\psi'$, $\psi'\in\Diri(\calD)$, the function $\psi^*\psi'\in\Diri(\calD,1)$ is an eigenvector of $T_{\psi}$ with the same eigenvalue $\lambda'$. Conversely, when $T$ satisfies the Hopf lemma in $\calC\supseteq\calD$ while $T_{\psi}$ satisfies the anti-Hopf lemma on $\calD$, then $\{\psi^*\psi_k\}_{k\ge 0}$ lists all of the Dirichlet eigenfunctions of $T_{\psi}$ such that $T_{\psi}(\psi^*\psi_k) = \lambda_k\psi^*\psi_k$, $\psi^*\psi_k\in\Diri(\calD,1)$, $\forall k\ge 0$, when $\{\psi_k\}_{k\ge 0}$ lists all of the Dirichlet eigenfunctions of $T$ such that $T\psi_k = \lambda_k\psi_k$, $\psi_k\in\Diri(\calD)$, $\lambda_k\in\mathbb{K}$, $\forall k\ge 0$.
\end{lemma}
\vspace{-4.0ex}
\begin{proof}[\iftoggle{ForUSPTO} {Demonstration} {Proof}]
Let $T^+$ denote the adjoint operator of $T$ on $L^2(\calC^k)$, $k\in\mathbb{N}$. The operator $T-\lambda$ is normal. It follows from $\|(T-\lambda)\psi\|^2=0$ that $\|(T^+-\lambda^*)\psi\|^2=0$, namely, $T^+\psi=\lambda^*\psi$. Therefore, $\int_{\mathsmaller{\calC^k}}\psi^*(r)\langle r|T|q\rangle\,dV_g(r)=\int_{\mathsmaller{\calC^k}}\langle\psi|r\rangle\,\langle r|T|q\rangle\,dV_g(r)=\langle\psi|T|q\rangle=\langle T^+\psi|q\rangle=\lambda\psi^*(q)$, and consequently, $\int_{\mathsmaller{\calD}}\langle r|T_{\psi}|q\rangle\,dV_g(r)=\lambda$, $\forall q\in\calD$. On the other hand, for any $\psi'\in\Diri(\calD)$ such that $T|_{\mathsmaller{\calD}}\psi'=\lambda'\psi'$, $\lambda'\in\mathbb{K}$, it holds true that $T_{\psi}(\psi^*\psi')=[\psi^*]T[\psi^*]^{\mathsmaller{-}1}(\psi^*\psi')=[\psi^*]T\psi'=\lambda'(\psi^*\psi')$. In particular, the operator $T_{\psi}$ fixes the non-negative density $|\psi(q)|^2=\psi^*(q)\psi(q)$, $q\in\calD$, when the operator $T$ is affine-transformed to make $\lambda=1$. Such a non-negative density $|\psi(\cdot)|^2$, when normalized, is called a {\em stationary distribution} or {\em stationary probability vector} of (or associated with) the quasi-stochastic operator $T_{\psi}$. Conversely, when $T$ satisfies the Hopf lemma in $\calC\supseteq\calD$ while $T_{\psi}$ satisfies the anti-Hopf lemma on $\calD$, then for any Dirichlet eigenfunction $\psi''\in\Diri(\calD,1)$ of $T_{\psi}$ such that $T_{\psi}\psi'' = \lambda''\psi''$, $\lambda''\in\mathbb{K}$, it holds that $\psi' \defeq \psi''/\psi^*$ is always well-defined in $\calD$ and continuous in $\cl(\calD)$, where $\cl(\calD) \defeq \calD \cup \partial\calD$ denotes the topological closure of $\calD$, further, $\psi'$  vanishes on $\partial\calD$. It follows from $T_{\psi}\psi'' = [\psi^*]T[\psi^*]^{-1}\psi''= \lambda''\psi''$ that $T(\psi''/\psi^*) = \lambda''(\psi''/\psi^*)$, namely, $\psi' = \psi''/\psi^*$ is necessarily a Dirichlet eigenfunction of $T$ in $\calD$ which is also continuous in $\cl(\calD)$.
\vspace{-1.5ex}
\end{proof}

Clearly, a quasi-stochastic operator $T_{\psi}$ becomes bona fide stochastic when the associated integral kernel $\langle r|T|q\rangle$, $(r,q)\!\in\!\calC^k\times\calC^k$, $k\in\mathbb{N}$ is real-valued and nowhere negative, and the operator is scaled properly to have a unit spectral radius, in which case, the meaning of stationary distribution or stationary probability vector coincides precisely with the standard and well-known definition in association with Markov chains and stochastic matrices.

Specifically, for any conventional boltzmannonic Schr\"odinger operator $H_{\!\mathsmaller{B}}$ that is irreducible on a connected open subset $\calD$ of a many-body configuration space $\calC$, the Gibbs kernel $\langle r|\exp({-}\tau H_{\!\mathsmaller{B}})|q\rangle$ is positive real-valued, $\forall(r,q)\in\calD\times\calD$, $\forall\tau>0$, and the ground state $\psi_0$ of $H_{\!\mathsmaller{B}}$ is necessarily positive everywhere on $\calD$. Thus, the operator $[\psi]\exp({-}\tau H_{\!\mathsmaller{B}})[\psi]^{\mathsmaller{-}1}$, $\tau>0$ and its integral kernel $\psi(r)\langle r|\exp({-}\tau H_{\!\mathsmaller{B}})|q\rangle\psi(q)^{\mathsmaller{-}1}$ are indeed stochastic and can be interpreted as a Markov operator and a Markov kernel. Such boltzmannonic Schr\"odinger Hamiltonians are known as stoquastic and of great importance for Monte Carlo simulations of physical systems \cite{Bravyi08qic,Bravyi10}, where the positivity of the associated Gibbs kernels and the ground state wavefunctions makes it possible to map a quantum density into a classical probability density, which can be sampled by a random walk without encountering the sign problem. Here in this \iftoggle{ForUSPTO} {specification} {presentation}, the notion of stoquasticity will be extended to fermionic systems, which require a resolution of the numerical sign problem.

Despite being based on a stoquastic boltzmannonic Hamiltonian, fermionic Schr\"odinger Hamiltonians have in the past been considered archetypical hosts and victims of the infamous sign problem, due to the ostensible necessity of both positive and negative signs to satisfy the fermion-exchange symmetry of any legitimate wavefunction. Feynman \cite{Feynman82,Feynman87} was one of the pioneers recognizing the need and use of ``negative probability'' as the fundamental and differentiating aspect of quantum mechanics from classical physics and probability, even above the ``spooky'' quantum entanglement. Indeed, Feynman showed explicitly \cite{Feynman87} that when formulated in terms of density matrices, quantum physics paralleled classical probability all but identically, with the only exception of admitting ``probability'' values out of the interval $[0,1]$, which echoed Wigner \cite{Wigner32} and Moyal's \cite{Moyal49} earlier analyses using the density-matrix-equivalent Wigner functions. When a quantum system undergoes a restricted form of dynamics such that a suitable representation of the density matrix or Wigner function can be chosen to avoid ``negative probability'', then one may reasonably expect that the quantum dynamics can be efficiently simulated on a classical probabilistic machine. Such intuition/conjecture has been attested empirically by the vast and successful QMC simulations of boltzmannonic and bosonic systems in the absence of magnetic fields, and recently proved in full mathematical rigor for restricted quantum dynamics involving superoperators represented by positive-definite Wigner functions \cite{Veitch12,Mari12}. On the other hand, the apparently unavoidable ``negative probability'' makes it extremely hard, seemingly impossible, to simulate a general large quantum system on a classical probabilistic computer. Limited to classical physics and probability, unable to actualize and transfer ``negative probability'' as a quantum system can do, a BPP machine simulating a signed measure or density $\nu$ describing a quantum system would have to sample from a classical probability, that is, a non-negative-definite measure $\mu$, and compute expectation values of/involving a signed function $\sigma=d\nu/d\mu$, where catastrophic cancellation could have statistical noise inundating and obscuring the desired averages, when the $\nu$-described system is large and near quantum degeneracy. The seemingly fundamental limitation prompted Feynman to turn the difficulty of simulating quantum mechanics on its head, by proposing to build quantum computers and simulate quantum mechanics on quantum machines, thus ushering in the field of quantum computing, particularly quantum simulations \cite{Lloyd96}. Quantum computers, empowered by the capability to actualize and transfer ``negative probability'', seem to hold a potential to efficiently solve a possibly larger set of computational problems known as BQP. There is a widely held belief that ${\rm BQP}\setminus{\rm BPP}\neq\emptyset$, which shall turn out to be false.

Let $H$ be a general and usually fermionic partial Hamiltonian on a Riemannian manifold $\,(\calC,g)$ as a configuration space, $q_0 \in \calC$ be a fixed configuration point called the {\em reference point} \cite{Ceperley91}, $\tau \in (0,\infty]$ be a fixed imaginary time. It is always assumed that $H$ behaves sufficiently well analytically, such that the Gibbs wavefunction $\psi(H;q;q_0,\tau) \defeq \langle q|\exp\{-\tau[H{-}\lambda_0(H)]\}|q_0\rangle$ is continuous in $q \in \calC$, $\forall q_0 \in \calC$, $\forall \tau \in (0,\infty]$. In particular, the limit $\psi(H;q;q_0,\infty) \defeq \lim_{\tau\rightarrow\infty}\psi(H;q;q_0,\tau)$, $q \in \calC$ becomes substantially independent of the reference point $q_0$ and coincides with the ground state of $H$ up to a global scaling factor, namely, $\psi(H;q;q_0,\infty)/\|\psi(H;q;q_0,\infty)\| = \psi_0(H;q)$, $q \in \calC$, so long as $\psi(H;q_0;q_0,\infty) \neq 0$ and the ground state is non-degenerate. In the following of this \iftoggle{ForUSPTO} {specification} {presentation}, when applicable, {\em we will take advantage of such generality of the Gibbs wavefunction to treat two typical application scenarios on the equal footing,} where in one application scenario a Gibbs wavefunction $\psi(H;\cdot;q_0,\tau\rightarrow\infty)$, $q_0\in\calC$ in the large-$\tau$ limit represents the ground state of $H$, while in the other a Gibbs wavefunction $\psi(H;\cdot;q_0,\tau)$, $q_0\in\calC$ at a finite $\tau\in(0,\infty)$ stands for the Gibbs transition amplitude associated with a Gibbs operator $\exp(-\tau H)$.

\begin{definition}{(Nodal Cell and Nodal Surface)}\label{defiNodalCellSurface}\\
For any given Gibbs wavefunction $\psi(H;\cdot;q_0,\tau)$, $q_0 \in \calC$, $\tau \in (0,\infty]$, the preimage $\psi^{\mathsmaller{-}1}(H;\mathbb{K}\setminus\{0\};q_0,\tau) = \calC\setminus\psi^{\mathsmaller{-}1}(H;\{0\};q_0,\tau) \defeq \left\{r\in\calC:\psi(H;r;q_0,\tau)\neq 0\right\}$ is necessarily an open subset of $\calC$, one connected component of which containing a given point $q \in \calC$ such that $\psi(H;q;q_0,\tau)\neq 0$ is called the nodal cell of $\psi(H;\cdot;q_0,\tau)$ containing $q$, denoted as $\calN(H;q;q_0,\tau)$, whose boundary $\partial\calN(H;q;q_0,\tau)$ is called the nodal surface of $\psi(H;\cdot;q_0,\tau)$ enclosing $q$.
\vspace{-1.5ex}
\end{definition}

Multiple decades worth of research and development in QMC of fermionic Hamiltonians has signified the central importance of nodal cells and surfaces associated with Gibbs wavefunctions, which can be said to condense all of the computational hardness. In particular, for a fermionic Schr\"odinger partial Hamiltonian $H$ on a connected Riemannian manifold $\calM$, if the location of the nodal surfaces were known or otherwise efficiently computable for the ground state $\psi_0(H)$ or a Gibbs kernel function associated with $H$, then solving for the ground state wavefunction everywhere on $\calM$ would be equivalent to solving the ground state of a Dirichlet eigenvalue problem for $\Base(H)$ on a single nodal cell \cite{Lieb62,Ceperley91,Cances06}, or computing the Gibbs kernel function would be equivalent to computing a boltzmannonic path integral over all of the Feynman paths that never cross any nodal surface. That is the essential observation behind the fixed-node diffusion \cite{Anderson75,Anderson76,Ceperley80,Ceperley81,Caffarel88I,Caffarel88II,Ceperley91,Cances06} or constrained path integral \cite{Ceperley91,Ceperley96,Zhang97,Zhang04,Kruger08} methods of QMC.

%

Given a fermionic partial Schr\"odinger Hamiltonian $H=H^{\mathsmaller{C}}+H^{\mathsmaller{D}}$ on a continuous-discrete product configuration space $\calC=\calM\times\calP$, which is CD-separately irreducible and has a non-degenerate ground state $\psi_0(H)$. Let $\calN\subseteq\calM\times\{v\}$, $v\in\calP$ be a nodal cell of $\psi_0(H)$. The continuous component $\Base(H^{\mathsmaller{C}})$ of the base boltzmannonic Hamiltonian $\Base(H)$ is necessarily irreducible in $\calN$, and straightforwardly, the nodal cell $\calN$ is $(\Base(H^{\mathsmaller{C}})|_{\mathsmaller{\calN}})$-complete. However, $\calN$ might not be $\Base(H^{\mathsmaller{D}})$-closed in general. Rather, there could be another nodal cell $\calN'\subseteq\calM\times\{v'\}$ with $v'\in\calP$, $v'\neq v$ that is connected to $\calN$ through $\Base(H^{\mathsmaller{D}})$.

\begin{definition}{((Positively) $H^{\mathsmaller{D}}$-Connected Configuration Points and Nodal Cells)} \label{HDConnectedPointsCells}\\
Let $\calC=\calM\times\calP$, $H=H^{\mathsmaller{C}}+H^{\mathsmaller{D}}$, $\psi_0(H)$, and nodal cells $\calN\subseteq\calM\times\{v\}$, $\calN'\subseteq\calM\times\{v'\}$, $v,v'\in\calP$ of $\psi_0(H)$ be given as in the immediate above. Two configuration points $(q,v)$ and $(q,v')$ with $q\in\calM$ and $v,v'\in\calP$ are said to be $H^{\mathsmaller{D}}$-connected when $\langle(q,v)|I+\Base(H^{\mathsmaller{D}})|(q,v')\rangle\neq 0$. Such a pair of $H^{\mathsmaller{D}}$-connected configuration points $(q,v)$ and $(q,v')$ are called positively $H^{\mathsmaller{D}}$-connected if additionally $\psi_0(H;(q,v))\psi_0(H;(q,v'))>0$. Two nodal cells $\calN\subseteq\calM\times\{v\}$ and $\calN'\subseteq\calM\times\{v'\}$ with $v,v'\in\calP$ are said to be directly (and positively) $H^{\mathsmaller{D}}$-connected when $q\in\calM$ exists such that $(q,v)\in\calN$, $(q,v')\in\calN'$, $(q,v)$ and $(q,v')$ are (positively) $H^{\mathsmaller{D}}$-connected.

Two nodal cells $\calN\subseteq\calM\times\{v\}$ and $\calN'\subseteq\calM\times\{v'\}$ with $v,v'\in\calP$ are said to be (positively) $H^{\mathsmaller{D}}$-connected, when a sequence of nodal cells $\{\calN_i\}_{i=0}^n$, $n\in\mathbb{N}$ exists such that $\calN_0=\calN$, $\calN_n=\calN'$, $\calN_{i{-}1}$ and $\calN_i$ are directly (and positively) $H^{\mathsmaller{D}}$-connected, $\forall i\in[1,n]$, where each such sequence of nodal cells is called a (positive) $H^{\mathsmaller{D}}$-path between $\calN$ and $\calN'$. A non-negative integer-valued (positive) $H^{\mathsmaller{D}}$-distance between such pair of nodal cells $(\calN,\calN')$, denoted by $\dist(\calN,\calN')$, can be defined as $\dist(\calN,\calN')=0$ if $\calN=\calN'$, else $\dist(\calN,\calN')=n$ if $n\in\mathbb{N}$ is the least number such that an $H^{\mathsmaller{D}}$-path $\{\calN_i\}_{i=0}^n$ exists to start and end at $\calN_0=\calN$ and $\calN_n=\calN'$ respectively.
\end{definition}

\begin{definition}{(Nodal Groupoid)}\label{defiNodalGroupoid}\\
Let $H=H^{\mathsmaller{C}}+H^{\mathsmaller{D}}$ be a CD-separately irreducible fermionic Schr\"odinger Hamiltonian supported by a continuous-discrete product configuration space $C=\calM\times\calP$, which has a non-degenerate ground state $\psi_0(H)$. Let $(q,v)\in\calC$ be a configuration point such that $\psi_0(H;(q,v))\neq 0$. The set of all nodal cells of $\psi_0(H)$ that are positively $H^{\mathsmaller{D}}$-connected to $\calN(H;(q,v))$ is called a nodal groupoid of $\psi_0(H)$, which is denoted as $\calN^{\mathsmaller{\clubsuit}}(H;(q,v))$. The same name of nodal groupoid may also refer to the set union of nodal cells $\bigcup\calN^{\mathsmaller{\clubsuit}}(H;(q,v))\defeq\bigcup\{\calN\in\calN^{\mathsmaller{\clubsuit}}(H;(q,v))\}$, when there is no risk of ambiguity in a context, or the mathematical notation makes the specific meaning clear.

For any nodal groupoid $\calN^{\mathsmaller{\clubsuit}}(H;(q,v))$, its continuous boundary denoted by $\partial\calN^{\mathsmaller{\clubsuit}}(H;(q,v))$ is defined to be the union of the usual nodal surfaces, $\partial\calN^{\mathsmaller{\clubsuit}}(H;(q,v))\defeq\bigcup\{\partial\calN:\calN\in\calN^{\mathsmaller{\clubsuit}}(H;(q,v))\}$, while its discrete boundary denoted by $\delta\calN^{\mathsmaller{\clubsuit}}(H;(q,v))$ is defined to be the set of ordered pairs of configuration points $((r,v),(r,v'))$ such that $r\in\calM$, $v,v'\in\calP$, $(r,v)\in\bigcup\calN^{\mathsmaller{\clubsuit}}(H;(q,v))$, $(r,v)$ and $(r,v'))$ are $H^{\mathsmaller{D}}$-connected but not positively $H^{\mathsmaller{D}}$-connected. The boundary of $\calN^{\mathsmaller{\clubsuit}}(H;(q,v))$, called the nodal boundary in short, is denoted and defined as $\nabla\calN^{\mathsmaller{\clubsuit}}(H;(q,v))\defeq\partial\calN^{\mathsmaller{\clubsuit}}(H;(q,v))\cup\delta\calN^{\mathsmaller{\clubsuit}}(H;(q,v))$. Also, the closure of $\calN^{\mathsmaller{\clubsuit}}(H;(q,v))$ is denoted and defined as $\cl[\calN^{\mathsmaller{\clubsuit}}(H;(q,v))]\defeq[\,\bigcup\calN^{\mathsmaller{\clubsuit}}(H;(q,v))]\cup[\nabla\calN^{\mathsmaller{\clubsuit}}(H;(q,v))]$.

A wavefunction $\psi\in L^2(\calC)$ is said to satisfy the Dirichlet boundary conditions on $\nabla\calN^{\mathsmaller{\clubsuit}}(H;(q,v))$, when $\psi(r,v)=0$ for any $(r,v)\in\partial\calN^{\mathsmaller{\clubsuit}}(H;(q,v))$, and $\psi(r,v')/\psi(r,v)=\psi_0(H;(r,v'))/\psi_0(H;(r,v))$ for any $((r,v),(r,v'))\in\delta\calN^{\mathsmaller{\clubsuit}}(H;(q,v))$.
\vspace{-1.5ex}
\end{definition}

The term {\em nodal groupoid} suggests an interpretation of $\calN^{\mathsmaller{\clubsuit}}(H;(q,v))$ forming a small category, where the collection of concerned nodal cells serves as the set of objects, and from each $\calN\in\calN^{\mathsmaller{\clubsuit}}(H;(q,v))$ to each $\calN'\in\calN^{\mathsmaller{\clubsuit}}(H;(q,v))$, the set of morphisms $\hom(\calN,\calN')$ consists of all positive $H^{\mathsmaller{D}}$-paths (considered as directed) between $\calN$ and $\calN'$, so consequently, every morphism is invertible.

The required equation of wavefunction proportionality for the Dirichlet boundary condition on the discrete boundary $\delta\calN^{\mathsmaller{\clubsuit}}(H;(q,v))$ is a generalization of the so-called {\em lever rule} in fixed-node diffusion QMC for lattice fermions \cite{vanBemmel94,Sorella15}, which makes it possible to recover fully and identically the ground state wavefunction $\psi_0(H)$ restricted to the nodal groupoid $\calN^{\mathsmaller{\clubsuit}}(H;(q,v))$, by solving the ground state of a Dirichlet eigenvalue problem $H_{\!\mathsmaller{B}}^{\rm eff}\psi=\lambda\psi$, $\psi\in\Diri(\calN^{\mathsmaller{\clubsuit}}(H;(q,v)),2)$, with an effective Hamiltonian $H_{\!\mathsmaller{B}}^{\rm eff}\defeq\Base(H)|_{\calN^{\mathsmaller{\clubsuit}}(H;(q,v))}+V_{\!\mathsmaller{B}}^{\rm eff}(\cdot,v)$ representing a truncation of the boltzmannonic Hamiltonian $\Base(H)$ to within $\calN^{\mathsmaller{\clubsuit}}(H;(q,v))$, where
\begin{equation}
V_{\!\mathsmaller{B}}^{\rm eff}(r,v)\defeq\sum_{((r,v),(r,v'))\in\delta\calN^{\mathsmaller{\clubsuit}}(H;(q,v))}\frac{\psi_0(H;(r,v'))}{\psi_0(H;(r,v))}\,\langle(r,v)|\Base(H^{\mathsmaller{D}})|(r,v')\rangle \ge 0, \label{defiVBeff}
\end{equation}
and $\Base(H)|_{\calN^{\mathsmaller{\clubsuit}}(H;(q,v))}$ denotes the restriction of $\Base(H)$ to the submanifold $\bigcup\calN^{\mathsmaller{\clubsuit}}(H;(q,v))$ as specified in equation (\ref{defiHRestriction}), which discards all of the $\Base(H)$-induced interactions between any $(r,v)\in\calC$ and $(r,v')\in\calC$ such that $((r,v),(r,v'))\in\delta\calN^{\mathsmaller{\clubsuit}}(H;(q,v))$, henceforth referred to as {\em $\Base(H)$-couplings across the discrete boundary $\delta\calN^{\mathsmaller{\clubsuit}}(H;(q,v))$}, whose collective effect is replaced by the effective $\calC$-diagonal potential $V_{\!\mathsmaller{B}}^{\rm eff}(\cdot,v)$. As mentioned previously, an alternative interpretation of $\Base(H)|_{\calN^{\mathsmaller{\clubsuit}}(H;(q,v))}$ is to view it as a superimposition of $\Base(H)$ and an infinite potential barrier out of $\bigcup\calN^{\mathsmaller{\clubsuit}}(H;(q,v))$, namely, $\Base(H)|_{\calN^{\mathsmaller{\clubsuit}}(H;(q,v))}\cong\Base(H)+V_{\calN^{\mathsmaller{\clubsuit}}(H;(q,v))}(r,v)$, where $V_{\calN^{\mathsmaller{\clubsuit}}(H;(q,v))}(\cdot,v)$ is a $\calC$-diagonal potential such that $V_{\calN^{\mathsmaller{\clubsuit}}(H;(q,v))}(r,v)=0$, $\forall r\in\bigcup\calN^{\mathsmaller{\clubsuit}}(H;(q,v))$ and $V_{\calN^{\mathsmaller{\clubsuit}}(H;(q,v))}(r,v)={+}\infty$, $\forall r\notin\bigcup\calN^{\mathsmaller{\clubsuit}}(H;(q,v))$. That the $\Base(H)$-couplings across $\delta\calN^{\mathsmaller{\clubsuit}}(H;(q,v))$ can be equivalently substituted by the effective $\calC$-diagonal potential $V_{\!\mathsmaller{B}}^{\rm eff}$ has been proved rigorously in \cite{vanBemmel94}, noting that for any function $\psi\in L^2(\calC)$ satisfying the Dirichlet boundary conditions on $\nabla\calN^{\mathsmaller{\clubsuit}}(H;(q,v))$, any occurrence of a wave amplitude $\psi(r,v')$ in any equation can be substituted identically by $[\psi_0(H;(r,v'))/\psi_0(H;(r,v))]\,\psi(r,v)$, so long as $((r,v),(r,v'))\in\delta\calN^{\mathsmaller{\clubsuit}}(H;(q,v))$.

\begin{definition}{(Fundamental Domain or $G$-Tile)}\label{defiFunDomain}\\
Let $(X,\calT)$ be a topological space, on which a group $G$ acts by homeomorphisms \cite{Lin94,Bekka00}. An open set $D\subseteq X$ is called a fundamental domain under $G$, or simply a $G$-tile, if 1) $\forall g_1,g_2\in G$, either $g_1D=g_2D$ or $g_1D\cap g_2D=\emptyset$, and 2) the orbit of $D$, $\orb(D)\defeq\bigcup_{g\in G}gD$, upon topological closure, contains $X$, that is, $\cl[\orb(D)]=X$. An open set $D\subseteq X$ is called a $G$-pretile when only condition 1) is satisfied.
\end{definition}

\begin{lemma}{(Tiling Property of Nodal Groupoids \cite{Ceperley91,Cances06})}\\
Let $\calC=\calM\times\calP$, $\calT$, and $G_{\rm ex}$ denote the configuration space, its defining topology, and the exchange symmetry group of a CD-separately irreducible fermionic Schr\"odinger Hamiltonian $H=H^{\mathsmaller{C}}+H^{\mathsmaller{D}}$, whose ground state is non-degenerate, then each nodal groupoid $\calN^{\mathsmaller{\clubsuit}}(H;q)$, $q\in\calC$, $\psi_0(H;q)\neq 0$ is a $G_{\rm ex}$-tile, with respect to the topological subspace $\mathlarger{\mathlarger{(}}\calC'\defeq\calC\setminus\psi_0^{\mathsmaller{-}1}(H;\{0\},\calT'\defeq\{U\cap\calC':U\in\calT\}\mathlarger{\mathlarger{)}}$.
\label{TilingProperty}
\end{lemma}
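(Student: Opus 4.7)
The plan is to verify in turn the two conditions of a $G_{\rm ex}$-tile in Definition~\ref{defiFunDomain}. For the pretile condition, it suffices after composition with a group element to show that any $\pi\in G_{\rm ex}$ with $\pi\calN^{\mathsmaller{\clubsuit}}(H;q)\cap\calN^{\mathsmaller{\clubsuit}}(H;q)\neq\emptyset$ must satisfy $\pi\calN^{\mathsmaller{\clubsuit}}(H;q)=\calN^{\mathsmaller{\clubsuit}}(H;q)$. Since $\pi$ is a homeomorphism of $(\calC',\calT')$, it permutes the connected components of $\calC'$ (the nodal cells) bijectively, so any intersection point lies in some cell $\calN_2\in\calN^{\mathsmaller{\clubsuit}}(H;q)$ that is the image $\pi\calN_1$ of a cell $\calN_1\in\calN^{\mathsmaller{\clubsuit}}(H;q)$. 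The fermionic antisymmetry $\psi_0(H;\pi r)=\sign(\pi)\psi_0(H;r)$ combined with the exchange invariance $[\Base(H^{\mathsmaller{D}}),\pi]=0$ makes the positive $H^{\mathsmaller{D}}$-connection relation strictly $\pi$-equivariant: the sign product $\psi_0(H;r_1)\psi_0(H;r_2)$ is preserved by $\sign(\pi)^2=1$, and the amplitude $\langle r_1|I+\Base(H^{\mathsmaller{D}})|r_2\rangle$ is literally unchanged. For any other cell $\calN_3\in\calN^{\mathsmaller{\clubsuit}}(H;q)$, joined to $\calN_1$ by a positive $H^{\mathsmaller{D}}$-path, $\pi$ carries this path to one from $\calN_2$ to $\pi\calN_3$, so the transitivity built into Definition~\ref{defiNodalGroupoid} forces $\pi\calN_3\in\calN^{\mathsmaller{\clubsuit}}(H;q)$; the reverse inclusion follows by applying the same reasoning to $\pi^{-1}$.

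For the covering property I would argue by contradiction, combining the variational principle with fermionic stoquasticity. Suppose $\mathcal{O}\defeq\cl[\orb(\calN^{\mathsmaller{\clubsuit}}(H;q))]$ omits some nodal cell of $\calC'$, so $\calC'\setminus\mathcal{O}$ is a nonempty $G_{\rm ex}$-invariant open set on which $\psi_0(H)$ does not vanish identically. Define the support-restricted wavefunctions $\phi_1\defeq\psi_0(H)\cdot\Iver{q\in\mathcal{O}}$ and $\phi_*\defeq\psi_0(H)-\phi_1$; both are nonzero, antisymmetric by the $G_{\rm ex}$-invariance of their respective supports, mutually orthogonal with $\|\phi_1\|^2+\|\phi_*\|^2=\|\psi_0(H)\|^2$, and lie in the form domain of $H$ because the continuous part of $\partial\mathcal{O}$ is a union of nodal surfaces of $\psi_0(H)$, leaving $\phi_1$ and $\phi_*$ continuous and Sobolev-regular with only Hopf-controlled kinks.

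I would then decompose the total energy as $\lambda_0(H)\|\psi_0(H)\|^2=\langle\phi_1|H|\phi_1\rangle+\langle\phi_*|H|\phi_*\rangle+2E_c$, where the cross term $E_c$ gathers the $\Base(H^{\mathsmaller{D}})$-matrix elements between configuration points in $\mathcal{O}$ and in $\calC\setminus\mathcal{O}$ weighted by the corresponding values of $\psi_0(H)$; the local operator $H^{\mathsmaller{C}}$ contributes no cross term, because $\psi_0(H)$ vanishes on the continuous portion of $\partial\mathcal{O}$ and the integration by parts of the associated Dirichlet form carries no surface contribution. Every nonzero summand in $E_c$ couples a configuration point in $\mathcal{O}$ to a partner in $\calC'\setminus\mathcal{O}$ that is necessarily only negatively $H^{\mathsmaller{D}}$-connected to it (any positive partner would, by the pretile property just proved, have to lie in the same $G_{\rm ex}$-orbit and hence inside $\mathcal{O}$): the two wavefunction values carry opposite signs, while the fermionic-Schr\"odinger stoquasticity forces $\langle r|\Base(H^{\mathsmaller{D}})|r'\rangle\le 0$; the triple product is therefore non-negative, so $E_c\ge 0$. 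The variational bounds $\langle\phi_i|H|\phi_i\rangle\ge\lambda_0(H)\|\phi_i\|^2$, $i\in\{1,*\}$, on the other hand, force $E_c\le 0$. Hence $E_c=0$ and both variational bounds saturate, making $\phi_1$ and $\phi_*$ two linearly independent ground states of $H$ at energy $\lambda_0(H)$---contradicting the non-degeneracy hypothesis.

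The hard part will be the operator-theoretic housekeeping that makes the energy decomposition and the saturation of the variational bound rigorous under the sharp cutoff $\Iver{\cdot\in\mathcal{O}}$: verifying $\phi_1,\phi_*\in\calH_{\!\mathsmaller{F}}^1(\calC)$, controlling the distributional second derivatives on $\partial\mathcal{O}$ so that no spurious surface terms appear in the Dirichlet form, and establishing the absolute convergence of the discrete-coupling sum defining $E_c$. All of these are controlled by the regularity apparatus already assembled in the preceding pages---$C^2$-smoothness of $\psi_0(H)$ away from $\calC_{\rm sin}$, continuity of $\psi_0(H)$ across nodal surfaces where it vanishes, Hopf-lemma bounds on normal derivatives, and spatial locality combined with the CD-separate irreducibility of $H$---and the remaining task is to knit them into a seamless argument that legitimises the sign/variational dichotomy above.
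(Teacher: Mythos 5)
Your argument is correct, and the two halves line up with the paper's as follows. The pretile half is essentially identical: you and the paper both observe that $G_{\rm ex}$ acts by homeomorphisms permuting the nodal cells, that the positive $H^{\mathsmaller{D}}$-connection relation is $\pi$-equivariant because $\sign(\pi)^2=1$ and $[\Base(H^{\mathsmaller{D}}),\pi]=0$, and hence that $\pi$ sends groupoids to groupoids, so overlapping images of a groupoid must coincide.

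For the covering half, you take a genuinely more streamlined route. The paper works through a general Dirichlet eigenvalue problem on the groupoid $\calD$, invokes the lever rule to replace the discrete-boundary couplings by the non-negative effective potential $V_{\!\mathsmaller{B}}^{\rm eff}$ of equation~(\ref{defiVBEff}), antisymmetrizes and extends that Dirichlet eigenvector, and only then applies Courant--Fischer--Weyl twice, first to pin $\kappa_0=\lambda_0(H)$, then to exclude a second orbit via orthogonality. You instead cut $\psi_0(H)$ itself into $\phi_1=\psi_0\Iver{\mathcal{O}}$ and $\phi_*=\psi_0-\phi_1$ and close the argument with a single energy split, $\lambda_0\|\psi_0\|^2=\langle\phi_1|H|\phi_1\rangle+\langle\phi_*|H|\phi_*\rangle+2E_c$, then squeeze $E_c=0$ from the two inequalities $E_c\ge 0$ (stoquasticity of $\Base(H^{\mathsmaller{D}})$ times the enforced sign flip across a groupoid boundary) and $E_c\le 0$ (Rayleigh bound on each piece). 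The ingredients are the same ones the paper uses to establish $V_{\!\mathsmaller{B}}^{\rm eff}\ge 0$, but the presentation is tighter and bypasses the Dirichlet-eigenvalue/lever-rule scaffolding, at the mild cost of not simultaneously re-deriving the fixed-node equivalence for the restricted wavefunction, which the paper's longer detour supplies for free and exploits in \myCorollary~\ref{FixedNodeMethod}. You also correctly identify where the remaining rigour lives: the cutoff regularity $\phi_1,\phi_*\in\calH_{\!\mathsmaller{F}}^1(\calC)$, the vanishing of the $\Base(H^{\mathsmaller{C}})$ cross term through the nodal surfaces, and the convergence of the discrete sum. Those are exactly the points the paper leans on its standing regularity hypotheses for, so nothing essential is missing.
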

\vspace{-4.0ex}
\begin{proof}[\iftoggle{ForUSPTO} {Demonstration} {Proof}]
Firstly, define a binary relation ${\raisebox{-.1\height}{$\simX{H}$}}$ between any $q,r\in\calC'$ such that $q\;{\raisebox{-.1\height}{$\simX{H}$}}\;r$ if $r\in\calN^{\mathsmaller{\clubsuit}}(H;q)$, that is, there exists a sequence of nodal cells $\{\calN_i\}_{i=0}^n$, $n\in\mathbb{N}$ such that $\calN_0=\calN(H;q)$, $\calN_n=\calN(H;r)$, $\calN_{i{-}1}$ and $\calN_i$ are directly and positively $H^{\mathsmaller{D}}$-connected, $\forall i\in[1,n]$. It is easy to verify that ${\raisebox{-.1\height}{$\simX{H}$}}$ is an equivalence relation, which partitions the set $\calC'=\calC\setminus\psi_0^{\mathsmaller{-}1}(H;\{0\})$ into mutually disjoint equivalence classes $\mathlarger{\mathlarger{\{}}[q]_{\mathsmaller{H}}\defeq\{r\in\calC':q\;{\raisebox{-.1\height}{$\simX{H}$}}\;r\}\mathlarger{\mathlarger{\}}}$, that are exactly the nodal groupoids, namely, $[q]_{\mathsmaller{H}}=\calN^{\mathsmaller{\clubsuit}}(H;q)$, $\forall q\in\calC'$. It is also clear that $\forall q,r\in\calC'$, $\forall\pi\in G_{\rm ex}$, $q\;{\raisebox{-.1\height}{$\simX{H}$}}\;r$ if and only if $\pi q\;{\raisebox{-.1\height}{$\simX{H}$}}\;\pi r$, as a consequence of $\psi_0(H;\pi q)=\sign(\pi)\,\psi_0(H;q)$, $\forall q\in\calC$, $\forall \pi\in G_{\rm ex}$, as well as $\pi\in G_{\rm ex}$ being an autohomeomorphism. Therefore, for each nodal groupoid $\calN^{\mathsmaller{\clubsuit}}(H;q)=[q]_{\mathsmaller{H}}$, $q\in\calC'$ and $\forall\pi\in G_{\rm ex}$, the transformed open set $\pi\calN^{\mathsmaller{\clubsuit}}(H;q)=\calN^{\mathsmaller{\clubsuit}}(H;\pi q)=[\pi q]_{\mathsmaller{H}}$ is necessarily a nodal groupoid, which is either identical to or disjoint with $\calN^{\mathsmaller{\clubsuit}}(H;q)=[q]_{\mathsmaller{H}}$. So, every groupoid is a $G$-pretile. The subgroup of $G_{\rm ex}$ under which a nodal groupoid $\calN^{\mathsmaller{\clubsuit}}(H;q)$ is invariant is called the stabilizer of $\calN^{\mathsmaller{\clubsuit}}(H;q)$, denoted and specified as $\stab(\calN^{\mathsmaller{\clubsuit}}(H;q))\defeq\{\pi\in G_{\rm ex}:\pi\calN^{\mathsmaller{\clubsuit}}(H;q)=\calN^{\mathsmaller{\clubsuit}}(H;q)\}$.

Secondly, let $\calD\defeq\calN^{\mathsmaller{\clubsuit}}(H;q)$, $q\in\calC'$ be any groupoid viewed as a $G_{\rm ex}$-pretile, then, as a consequence of the spatial locality of $\Base(H^{\mathsmaller{C}})$, the restriction of $\psi_0(H)$ to $\calD$ constitutes an eigen solution to a Dirichlet problem on $\calD$, $[H_{\!\mathsmaller{B}}^{\rm eff}|_{\mathsmaller{\calD}}][\psi_0(H)|_{\mathsmaller{\calD}}]=\lambda_0(H)[\psi_0(H)|_{\mathsmaller{\calD}}]$, $\psi_0(H)|_{\mathsmaller{\calD}}>0$, subject to the Dirichlet boundary condition on $\partial\calD\defeq\partial\calN^{\mathsmaller{\clubsuit}}(H;q)$, where $H_{\!\mathsmaller{B}}^{\rm eff}=\Base(H)|_{\mathsmaller{\calD}}+V_{\!\mathsmaller{B}}^{\rm eff}$ denotes the effective boltzmannonic Hamiltonian in association with the nodal groupoid $\calD$, with $V_{\!\mathsmaller{B}}^{\rm eff}$ being specified in equations (\ref{defiVBeff}). Conversely, for any Dirichlet eigen solution $\phi_0\in\Diri(\calD)\subseteq L^2(\calC)$ of $H_{\!\mathsmaller{B}}^{\rm eff}$ associated with an eigenvalue $\kappa_0\in\mathbb{R}$, $\kappa_0\le\lambda_0(H)$ such that $(H_{\!\mathsmaller{B}}^{\rm eff}|_{\mathsmaller{\calD}})(\phi_0|_{\mathsmaller{\calD}})=\kappa_0\,\phi_0|_{\mathsmaller{\calD}}$, $\phi_0|_{\mathsmaller{\calD}}>0$, $\phi_0|_{\mathsmaller{\calC\setminus \calD}}=0$, the wavefunction $\phi_0$ can be extended into $\phi_1\in L^2(\cl(\calD)\defeq\calD\cup\nabla\calD)\subseteq L^2(\calC)$ such that $\phi_1|_{\mathsmaller{\calD}}=\phi_0|_{\mathsmaller{\calD}}$, $\phi_1|_{\mathsmaller{\calC\setminus\cl(\calD)}}=0$, and $\phi_1(q')/\phi_1(q)=\psi_0(H;q')/\psi_0(H;q)$ for all $(q,q')\in\calC\times\calC$ satisfying $(q,q')\in\delta\calD$ and $q\in\calD$. Consequently, $\left\{[\Base(H)|_{\mathsmaller{\cl(\calD)}}][\phi_1|_{\mathsmaller{\cl(\calD)}}]\right\}\!(q)=\kappa_0\,\phi_0(q)$ holds for all $q\in\calD\cup\partial\calD$. Moreover, it is without loss of generality to assume that $\phi_0$ is fixed by $\stab(\calN^{\mathsmaller{\clubsuit}}(H;q))$, meaning $\phi_0\circ\pi(q)=\phi_0(q)$, $\forall q\in\bigcup\calN^{\mathsmaller{\clubsuit}}(H;q)$, $\forall\pi\in\stab(\calN^{\mathsmaller{\clubsuit}}(H;q))$, because, otherwise, $\phi_0$ can be replaced by its symmetrized version $const\times\sum_{\pi\,\in\,\stab(\calN^{\mathsmaller{\clubsuit}}(H;q))}\phi_0\circ\pi^{{-}1}$.

Now consider the homeomorphic group action \cite{Rotman99,Kerber99,Lin94,Bekka00,Jacobson09,Rose09} of $G_{\rm ex}$ on $\calC$ and the induced transformation of quantum amplitude distributions, {\it i.e.}, wavefunctions. For any $\pi\in G_{\rm ex}$ acting by homeomorphism, $\pi\calD\defeq\{\pi q: q\in\calD\}$ is still a nodal groupoid, $\pi(\partial\calD)=\partial(\pi\calD)$, $\pi(\delta\calD)=\delta(\pi\calD)$, $\pi(\nabla\calD)=\nabla(\pi\calD)$ are still the boundaries of $\pi\calD$, and the transformed wavefunctions $\phi_0\circ\pi^{{-}1}$, $\phi_1\circ\pi^{{-}1}$ still solve a Dirichlet eigenvalue problem on $\pi\calD$, in particular, $\left\{[\Base(H)|_{\mathsmaller{\cl(\pi\calD)}}][\phi_1\circ\pi^{{-}1}|_{\mathsmaller{\cl(\calD)}}]\right\}\!(r)=\kappa_0\,\phi_0\circ\pi^{{-}1}(r)$ holds $\forall r\in\pi\calD\cup\partial(\pi\calD)$, since $\Base(H)$ is invariant under $\pi$, $\phi_0$ is fixed by $\stab(\calN^{\mathsmaller{\clubsuit}}(H;q))$, while $\psi_0(H)$ is antisymmetric as $\psi_0(H;\pi q)=\sign(\pi)\times\psi_0(H;q)$, $\forall q\in\calC$. Consequently, the wavefunctions $\phi_0$ and $\phi_1$ can be antisymmetrized and extended into $P_{\!\mathsmaller{F}}\phi_i\defeq const\times\sum_{\pi\in G_{\rm ex}}[\sign(\pi)\times(\phi_i\circ\pi^{\mathsmaller{-1}})]\in L_{\!\mathsmaller{F}}^2(\calC)$, $i\in\{0,1\}$ respectively, satisfying
\begin{equation}
\{[H|_{\mathsmaller{\orb(\cl(\calD))}}][P_{\!\mathsmaller{F}}\phi_1]\}(r)=\kappa_0\,[P_{\!\mathsmaller{F}}\phi_0](r),\;\forall r\in\orb(\calD), \label{DiriOnOrbitBeforeLever}
\end{equation}
subject to the boundary condition $[P_{\!\mathsmaller{F}}\phi_1](r')/[P_{\!\mathsmaller{F}}\phi_1](r)=\psi_0(H;r')/\psi_0(H;r)$, $\forall(r,r')\in\calC\times\calC$ such that $r\in\orb(\calD)$, $r'\in\orb(\delta\calD)\setminus\orb(\calD)$, and $\langle r|\Base(H^{\mathsmaller{D}})|r'\rangle\neq 0$, where $\orb(\calD)\defeq\bigcup_{\pi\in G_{\rm ex}}\pi\calD$ and $\orb(\delta\calD)\defeq\bigcup_{\pi\in G_{\rm ex}}\pi(\delta\calD)$ are the orbits of $\calD$ and $\delta\calD$ under the action of $G_{\rm ex}$ respectively. Using the lever rule trick again, a similar change of variables with
\begin{equation}
V_{\!\mathsmaller{B}}^{\rm Eff}(r)\defeq\sum_{r'\,\in\,\orb(\delta\calD)\setminus\orb(\calD)}\frac{\psi_0(H;r')}{\psi_0(H;r)}\,\langle r|\Base(H^{\mathsmaller{D}})|r'\rangle\ge 0,\;\forall r\in\orb(\calD), \label{defiVBEff}
\end{equation}
transforms the homogeneous linear equation (\ref{DiriOnOrbitBeforeLever}) into a Dirichlet eigenvalue equation
\begin{equation}
\{[H|_{\mathsmaller{\orb(\calD)}}+V_{\!\mathsmaller{B}}^{\rm Eff}][P_{\!\mathsmaller{F}}\phi_0]\}(r)=\kappa_0\,[P_{\!\mathsmaller{F}}\phi_0](r),\;\forall r\in\orb(\calD). \label{DiriOnOrbitAfterLever}
\end{equation}

But it must hold that $\kappa_0=\lambda_0(H)$. Otherwise, if $\kappa_0<\lambda_0(H)$, then according to equation (\ref{DiriOnOrbitAfterLever}), the wavefunction $\psi\defeq[P_{\!\mathsmaller{F}}\phi_0]_{\mathsmaller{\orb(\calD)}}\times\mathbbm{1}_{\mathsmaller{\orb(\calD)}}\in L^2_{\!\mathsmaller{F}}(\calC)$, with $\mathbbm{1}_{\mathsmaller{\orb(\calD)}}$ being the indicator function \cite{IndicatorFunction} for the set $\orb(\calD)$, would produce a Rayleigh-Ritz ratio $\langle\psi|H|\psi\rangle/\langle\psi|\psi\rangle=\kappa_0<\lambda_0(H)$, which, by the Courant-Fischer-Weyl min-max principle \cite{Courant89,Reed78}, is impossible unless $P_{\!\mathsmaller{F}}\phi_0$ vanishes almost everywhere on $\orb(\calD)$. But $P_{\!\mathsmaller{F}}\phi_0$ cannot vanish, even just in $\calD$, because, $\forall\pi\in G_{\rm ex}$ such that $\pi\calD\,\cap\,\calD\neq\emptyset$, thus $\pi\calD=\calD=\pi^{\mathsmaller{-}1}\calD$, it follows from $(\phi_0\circ\pi^{\pm 1})(r)=\phi_0(\pi^{\pm 1}r)>0$, $\forall r\in\calD$ that $\sign(\pi)=+1$, hence $\psi(r)=[P_{\!\mathsmaller{F}}\phi_0](r)>0$, $\forall r\in\calD$. Furthermore, $\orb(\calD)$ must be the whole of $\calC'$. Otherwise, if there is a $q'\in\calC'\setminus\orb(\calD)$ such that $\psi_0(H;q')\neq 0$, then the nodal groupoid $\calD'\defeq\bigcup\calN^{\mathsmaller{\clubsuit}}(H;q')$ is nonempty, $\orb(\calD)\cap\orb(\calD')=\emptyset$, and a wavefunction $\phi'_0\defeq\psi_0(H)|_{\mathsmaller{\calD'}}$ constitutes an eigenfunction of a Dirichlet boundary value problem on $\calD'$, associated with an eigenvalue $\kappa'_0\le\lambda_0(H)$. Then the previous steps of wavefunction antisymmetrization and extension can be repeated to instantiate another wavefunction $\psi'\defeq[P_{\!\mathsmaller{F}}\phi'_0]_{\mathsmaller{\orb(\calD')}}\times\mathbbm{1}_{\mathsmaller{\orb(\calD')}}\in L^2_{\!\mathsmaller{F}}(\calC)$ besides $\psi\defeq[P_{\!\mathsmaller{F}}\phi_0]_{\mathsmaller{\orb(\calD)}}\times\mathbbm{1}_{\mathsmaller{\orb(\calD)}}\in L^2_{\!\mathsmaller{F}}(\calC)$, such that $\langle\psi|H|\psi\rangle/\langle\psi|\psi\rangle=\lambda_0(H)$, $\langle\psi'|H|\psi'\rangle/\langle\psi'|\psi'\rangle=\kappa'_0\le\lambda_0(H)$, $\langle\psi|\psi\rangle\neq 0$, $\langle\psi'|\psi'\rangle\neq 0$, $\langle\psi|\psi'\rangle=0$. That contradicts the premise of $H$ having a non-degenerate ground state with the eigenvalue $\lambda_0(H)$. Therefore, $\calC'=\orb(\calD)\subseteq\calC'$, with respect to the topological space $(\calC',\calT')$.
\end{proof}

\begin{corollary}{(Fixed-Node Equivalence)}\\
Simulating a promised unique ground state $\psi_0(H;r\in\calC)$ of a CD-separately irreducible fermionic Schr\"odinger Hamiltonian $H=H^{\mathsmaller{C}}+H^{\mathsmaller{D}}$ supported by a continuous-discrete product configuration space $C=\calM\times\calP$ is equivalent to simulating the lowest Dirichlet eigenvector of $\Base(H)$ on a nodal groupoid $\calD\defeq\calN^{\mathsmaller{\clubsuit}}(H;q)$, $q\in\calC$, $\psi_0(H;q)\neq 0$, provided though that the boundary $\nabla\calD$ and the Dirichlet boundary conditions on $\nabla\calD$ are {\it a priori} known or efficiently computable, in the sense that for any given continuous path $\gamma:[0,1]\mapsto\calC$ of a polynomially bounded length, it can be efficiently decided whether $\gamma$ intersects $\partial\calD$, and $\forall r=(r^{\mathsmaller{C}},r^{\mathsmaller{D}})\in\calD$, it is efficient to compute and enumerate all of the configuration points $\{r'=(r'^{\mathsmaller{C}}=r^{\mathsmaller{C}},r'^{\mathsmaller{D}})\}$ such that $(r,r')\in\delta\calD$, together with the ratios of wavefunction values $\{\psi_0(H;r')/\psi_0(H;r)\}$, $\forall(r,r')\in\delta\calD$.
\label{FixedNodeMethod}
\vspace{-1.5ex}
\end{corollary}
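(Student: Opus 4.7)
The plan is to repackage the constructive content of the Tiling Property Lemma as a bidirectional sampling reduction, so that the corollary amounts to checking that the group-theoretic extension between $\calD$ and $\calC$ can be carried out classically at polynomial cost. The lemma has already established three facts that together constitute exactly what the desired equivalence needs: the restriction $\psi_0(H)|_{\calD}$ is a strictly positive Dirichlet eigenvector of the effective boltzmannonic operator $H_{\!\mathsmaller{B}}^{\rm eff}\defeq\Base(H)|_{\calD}+V_{\!\mathsmaller{B}}^{\rm eff}$ on $\calD$; its eigenvalue $\lambda_0(H)$ coincides with the lowest Dirichlet eigenvalue of $H_{\!\mathsmaller{B}}^{\rm eff}$; and $\orb(\calD)$ exhausts $\calC\setminus\psi_0^{\mathsmaller{-}1}(H;\{0\})$ up to a set of measure zero.

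For the forward reduction, given access to samples from $|\psi_0(H)|^2\,dV_g$ on $\calC$, a sample $r\in\calC$ is mapped to its canonical representative in $\calD$ by decoding which tile $\pi\calD$ contains $r$ and outputting $\pi^{\mathsmaller{-}1}r$; the hypothesized efficient knowledge of $\nabla\calD$ makes such tile-identification efficient, and the resulting law on $\calD$ is proportional to $\left|\psi_0(H)|_{\calD}\right|^2$, which by the lemma is the square of the lowest Dirichlet eigenvector of $H_{\!\mathsmaller{B}}^{\rm eff}$.

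For the reverse reduction, observe that $|\psi_0(H;\cdot)|$ is $G_{\rm ex}$-invariant, because each $\pi\in G_{\rm ex}$ only flips the sign of $\psi_0(H)$. Using the $G_{\rm ex}$-tiling to decompose any integral of $f\,|\psi_0(H)|^2\,dV_g$ over $\calC$ into $|G_{\rm ex}|/|\stab(\calD)|$ identical contributions indexed by cosets and then reindexing by the full group, one sees that independently drawing $\pi\in G_{\rm ex}$ uniformly and $q$ from $|\phi_0|^2\,dV_g/\|\phi_0\|^2$ on $\calD$, then emitting $\pi q$, reproduces the target distribution up to a normalization constant that is absorbed on renormalization. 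Uniform sampling from $G_{\rm ex}=\prod_{s=1}^{\mathsmaller{S}}G_s$ is performed species by species via the standard Knuth shuffle at total cost $O(\size(\calC))$ per sample, and requires no knowledge of $\stab(\calD)$.

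The main obstacle—and the only place where the proviso in the statement is actually invoked—is implementing the Dirichlet simulation on $\calD$ itself: path-integral or Gibbs-kernel proposals driven by $\Base(H)|_{\calD}$ must be rejected whenever they cross $\partial\calD$, and proposals that span the discrete boundary $\delta\calD$ must be re-weighted by the lever-rule ratios $\psi_0(H;r')/\psi_0(H;r)$ so as to realize the effective potential $V_{\!\mathsmaller{B}}^{\rm eff}$. Both operations are algorithmically available precisely under the hypothesis of the corollary, and the resulting random walk is sign-problem-free because it never leaves a single nodal groupoid on which $\psi_0(H)$ has a fixed sign. The deeper question of when $\nabla\calD$ and the lever-rule ratios are themselves efficiently computable is the defining feature of the strongly frustration-free class introduced in the sequel.
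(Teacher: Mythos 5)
The paper leaves this corollary unproved, presenting it as an immediate consequence of Lemma~\ref{TilingProperty}, and your plan of repackaging that lemma's constructive content is the right instinct. Your reverse reduction is sound: drawing $\pi\in G_{\rm ex}$ uniformly (species by species) and $q\sim|\phi_0|^2/\|\phi_0\|^2$ on $\calD$, then emitting $\pi q$, does reproduce $|\psi_0(H)|^2/\|\psi_0(H)\|^2$ on $\calC$, and you correctly note that the $\stab(\calD)$-overcounting cancels on renormalization — though you should make explicit that this relies on $\phi_0$ being (chosen) $\stab(\calD)$-invariant, a fact established inside the lemma's proof, so that the contribution to each tile $\pi_0\calD$ is well-defined independently of which $\pi$ with $\pi\calD=\pi_0\calD$ is drawn. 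You also correctly isolate the role of the proviso in keeping the Dirichlet random walk sign-free via fixed-node rejection on $\partial\calD$ and lever-rule reweighting across $\delta\calD$.

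The step that does not clearly follow from the proviso as stated is your forward reduction's tile decoder. The proviso gives a membership oracle for $\calD$: by tracing a polynomially bounded path from a reference $q_0\in\calD$ to a candidate point and checking crossings of $\partial\calD$ and positivity of $H^{\mathsmaller{D}}$-connections, one can test whether a \emph{given} $\pi^{-1}r$ lies in $\calD$. It does not obviously yield an efficient procedure to \emph{find} the correct $\pi$ among the $\prod_s n_s!$ elements of $G_{\rm ex}$. You can patch this either by (i) restricting to $G_{\rm ex}$-invariant observables, for which expectations against $|\psi_0(H)|^2$ on $\calC$ and $|\phi_0|^2$ on $\calD$ agree tile by tile with no decoding required, or (ii) accumulating the crossing permutation along a path from $q_0$ to $r$ in the constrained-path-integral spirit, noting that the decision procedure is $G_{\rm ex}$-equivariant because $\partial\calD\subset|\psi_0|^{-1}(0)$ and $|\psi_0|$ is $G_{\rm ex}$-invariant. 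The paper's implicit reading of ``equivalent'' is arguably the weaker wavefunction-level identification (restriction to $\calD$ and $P_{\!\mathsmaller{F}}$-antisymmetrization invert each other, as the lemma's proof shows), which bypasses the decoder entirely; framed as a bidirectional sampling reduction, your forward arrow needs one of these repairs, though this is the trivial direction and not the fixed-node direction that carries the corollary's weight.
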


This \iftoggle{ForUSPTO} {Derived Utility} {corollary} signifies the central importance of nodal groupoids and boundaries in QMC, which can be said to condense all of the computational hardness of simulating quantum systems on a classical probabilistic computer. A means to decide quickly the boundary and the Dirichlet boundary conditions of a nodal groupoid translates immediately to an efficient BPP algorithm for solving the ground state and energy of the concerned quantum system. Unfortunately, computing the nodal surface(s) of a general fermionic many-body system has proven to be exceedingly hard, except for the celebrated Lieb-Mattis solution \cite{Lieb62} for many fermions moving on a one-dimensional manifold diffeomorphic to one of the intervals $[0,1]$, $[0,1)$, $(0,1)$ on the real line, where the two-fermion-coincidence hyperplanes in the many-body configuration space are precisely the only nodal surfaces, because, within any interval on the real line, particle exchanges can only happen with two particles passing through each other on the line. With more spatial dimensions, however, two-fermion-coincidence hyperplanes in the many-body configuration space have less dimensions than, therefore cannot fix, the nodal surfaces.

The celebrated Feynman path integral and the corresponding path integral Monte Carlo (PIMC) provide general and powerful methods for computing and simulating Gibbs operators and their associated Gibbs wavefunctions and kernels \cite{Feynman72,Ceperley91,Ceperley95,Zinn-Justin05,Kozlov12,Feynman65,Simon79,Glimm87,Lorinczi11,
Blankenbecler81,Sugiyama86,Caffarel88I,Caffarel88II,Ceperley96,Zhang97,
Baroni98arXiv,Baroni99prl,Bernu02,Zhang04,Kruger08,Carleo10,Binder10}. Generally, it may be desired to compute a Gibbs operator, specifically its associated Gibbs wavefunctions and kernels, corresponding to a piecewise constant sequence of partial Hamiltonians $\{H_m : m \in [1,M]\}$, $M \in \mathbb{N}$, with each $H_m$, $m \in [1,M]$ being a constant operator applied to a many-fermion quantum system during an interval $(\tau_{m-1},\tau_m]$ of (imaginary) time, where $\{\tau_m : m \in [0,M]\} \subseteq [0,\infty)^{\mathsmaller{M}+1}$ is a predetermined sequence of time instants, with $\tau_0 = 0$, $\tau_m > \tau_{m-1}$, $\forall m \in [1,M]$. Within each time interval $(\tau_{m-1},\tau_m]$, the partial Hamiltonian $H_m$, $m \in [1,M]$ effects a Gibbs operator $G_m(-\delta\tau_mH_m) \defeq \exp[-(\tau_m{-}\tau_{m-1})H_m]$, whose integral kernel $K(\delta\tau_m H_m;\cdot,\cdot)$, namely, the Gibbs kernel, is analytically expressed as
\begin{equation}
K(\delta\tau_mH_m;r,q) \,=\, \langle r|P_{\!\mathsmaller{F}}\exp({-}\delta\tau_mH_{m{\mathsmaller{B}}})|q\rangle
\,=\, const \times {\textstyle{ {\scalebox{1.15}{$\sum$}}_{\pi\in G_{\rm ex}} }} \, \sign(\pi) \, K(\delta\tau_mH_{m{\mathsmaller{B}}};\pi r,q) \,, \label{GibbsWaveFuncSymm}
\end{equation}
with $H_{m{\mathsmaller{B}}} \defeq \Base(H_m)$ denoting the base boltzmannonic Hamiltonian of $H_m$. Numerically, in order to compute or approximate a Gibbs transition amplitude $K(\delta\tau_mH_m;r,q)$, $\forall m \in \mathbb{N}$, $\forall (r,q) \in \calC^2$, a PIMC procedure divides the interval $(\tau_{m-1},\tau_m]$ into $N_m \in \mathbb{N}$ smaller steps by inserting $N_m{-}1$ intermediate instants of time $\{\tau_{mn}\}_{n \in [1,\mathsmaller{N}_m-1]}$, and defining $\tau_{m0} \defeq \tau_{m-1}$, $\tau_{m\mathsmaller{N}_m} \defeq \tau_m$, such that each time step $\delta\tau_{mn} \defeq \tau_{mn} - \tau_{m(n-1)}$ is positive and sufficiently small, that the corresponding Gibbs operator $\exp(-\delta\tau_{mn}H_m)$ can be efficiently computed or approximated straightforwardly using a {\em small-time approximation} such as $\exp(-\delta\tau_{mn}H_m) = I - \delta\tau_{mn}H_m + O(\delta\tau_{mn}^2)$, or $\exp(-\delta\tau_{mn}H_m) = \exp(-\delta\tau_{mn}U_m)\exp(-\delta\tau_{mn}V_m) + O(\delta\tau_{mn}^2)$, or a higher-order formula of operator decomposition or splitting \cite{Creutz89,Suzuki90,Yoshida90,Suzuki91,Suzuki92,Yoshida93,Chin02,Suzuki06,Sakkos09,Dornheim15,Yan17}, where $U_m + V_m = H_m$ is an exemplary operator decomposition, such that both $\exp(-\delta\tau_{mn}U_m)$ and $\exp(-\delta\tau_{mn}V_m)$ have a closed-form analytical kernel.

Having the time interval $(0,\tau_{\mathsmaller{M}}]$ partitioned into a number $N_{\rm tot} \defeq \sum_{m=1}^{\mathsmaller{M}}N_m$ of small steps with the ordered sequence of time instants $\{\tau_{10} {\defeq} 0\} \cup \{\tau_{mn}\}_{m \in [1,\mathsmaller{M}], \, n \in [1,\mathsmaller{N}_m]}$, the essence of PIMC in the configuration coordinate representation is to discretize the Wiener space of continuous curves $\{\gamma:[0,\tau_{\mathsmaller{M}}]\mapsto\calC,\, \gamma \in C^0\}$ into a {\em cylinder set} $\prod_{t=0}^{\mathsmaller{N}_{\rm tot}} \calC_t \defeq \{(q_0,\cdots\!,q_{t-1},q_t,\cdots\!,q_{\mathsmaller{N}_{\rm tot}}):q_t\in\calC,\,\forall t\in[0,N_{\rm tot}]\} \cong \calC^{\mathsmaller{N}_{\rm tot}}$ \cite{Ito74,Kuo75,Karatzas91,Bar11}, that is a Cartesian product of $N_{\rm tot}{+}1$ labeled and ordered copies of the configuration space $\{\calC_t \cong \calC : t \in [0,N_{\rm tot}]\}$, with each element $\{q_t\}_{t \in [0,\mathsmaller{N}_{\rm tot}]} \in \prod_{t=0}^{\mathsmaller{N}_{\rm tot}} \calC_t$ representing a zigzag path made of consecutive segments of straight lines, then to approximate a path or functional integral by an $(N_{\rm tot}{+}1)$-fold multiple integral of elementary calculus \cite{Feynman65,Simon79,Glimm87,Lorinczi11,Ito74,Kuo75,Karatzas91,Bar11}. Let $N_{\rm tot}(m) \defeq \sum_{l=1}^mN_l$, $m \in [0,M]$ be an integer-valued function, and $\forall t \in [0,N_{\rm tot}]$, let $m(t) \in \mathbb{Z}$ be the smallest integer $m$ such that $N_{\rm tot}(m) \ge t$.

\begin{definition}{(Feynman Slices, Feynman Slabs, Feynman Stacks, Feynman Substacks, Feynman Paths, Feynman Flights, Feynman Bundles, Feynman Chains, Rectified Feynman Chains)}\label{defiFeynmanStuff}\\
With the same conventions and notations as the immediate above, for a given time instant $t \in [0,N_{\rm tot}]$, a Feynman slice denoted as $(\calC_t,H(t))$ is a couple formed by the configuration space $\calC_t \cong \calC$ and the instantaneous partial Hamiltonian $H(0) \defeq 0$ or $H(t) \defeq H_{m(t)}$, $t \in [1,N_{\rm tot}]$; $\forall m \in [1,M]$, let $t_{m-1} \defeq N_{\rm tot}(m{-}1)$, $t_m \defeq N_{\rm tot}(m)$, the $(N_m+1)$-tuple of Feynman slices $\{(\calC_t,H(t)) : t \in [t_{m-1},t_m]\} \defeq \Slab(\calC_{t_{m-1}},\calC_{t_m},H_m,t_m{-}t_{m-1})$ associated with a constant partial Hamiltonian $H(t) = H_m$ for $t \in (t_{m-1},t_m]$ form a Feynman slab; the $M$-tuple of all of the Feynman slabs $\{\{(\calC_t,H(t)) : t \in [t_{m-1},t_m]\} : m \in [1,M]\} \defeq \{\Slab(\calC_{t_{m-1}},\calC_{t_m},H_m,t_m{-}t_{m-1}) : m \in [1,M]\}$ form a Feynman stack; a single Feynman slab or a sequence of consecutive Feynman slabs form(s) a Feynman substack, which is a Feynman stack in its own right with respect to the corresponding start and end time instants.

With respect to a Feynman stack or substack, a couple formed by a zigzag path $\{q_t\}_{t \in [t',t'']} \in \prod_{t=t'}^{t''} \calC_t$, $0\le t' < t'' \le N_{\rm tot}$ and the sequence of partial Hamiltonians $\{H(t)\}_{t \in (t',t'']}$ is called a Feynman path from $t=t'$ to $t=t''$ that connects $q_{t'} \in \calC_{t'}$ and $q_{t''} \in \calC_{t''}$; a triple $(q_{t-1},q_t,H(t))$ with $t \in [1,N_{\rm tot}]$, $q_{t-1} \in \calC_{t-1}$, $q_t \in \calC_t$ is called a Feynman flight and considered as a portion of a Feynman path; for each Feynman slab $\{(\calC_t,H(t)) : t \in [t_{m-1},t_m]\} = \Slab(\calC_{t_{m-1}},\calC_{t_m},H_m,t_m{-}t_{m-1})$, $m \in [1,M]$, and each pair of configuration points $(r_{m-1},r_m) \in \calC_{t_{m-1}} \times \calC_{t_m}$, the collection of all of the Feynman paths from $t = t_{m-1}$ and $t = t_m$ that connect $r_{m-1}$ and $r_m$ is called a Feynman bundle and denoted as $\Bun(t_{m-1},r_{m-1};t_m,r_m)$.

With respect to a Feynman stack or substack that starts at $t_{m_1}$ and ends at $t_{m_2}$, with $0 \le m_1 < m_2 \le M$ and $t_m \defeq N_{\rm tot}(m)$ for any $m \in [m_1,m_2]$, $\forall (m_2{-}m_1{+}1)$-tuple $\{r_m\}_{m\in[m_1,m_2]}$ with $r_m  \in \calC_{t_m}$ for any $m \in [m_1,m_2]$, the $(m_2{-}m_1)$-tuple of Feynman bundles $\{\Bun(t_{m-1},r_{m-1};t_m,r_m)\}_{m \in (m_1,m_2]}$ is called a Feynman chain from $t_{m_1}$ to $t_{m_2}$, or from the $(m_1{+}1)$-th Feynman slab to the $m_2$-th Feynman slab; a Feynman chain $\{\Bun(t_{m-1},r_{m-1};t_m,r_m)\}_{m \in (m_1,m_2]}$ is called rectified if $\forall m \in (m_1,m_2]$, $r_m$ is always in a positive nodal cell of the Gibbs wavefunction $\langle\cdot|\exp[-(t_m{-}t_{m-1})H_m]|r_{m-1}\rangle$; a rectified Feynman chain $\{\Bun(t_{m-1},r_{m-1};t_m,r_m)\}_{m \in (m_1,m_2]}$ is said to be the rectification of another Feynman chain $\{\Bun(t_{m-1},q_{m-1};t_m,q_m)\}_{m \in (m_1,m_2]}$, $q_{m_1} = r_{m_1}$, $\{q_m\in\calC_{t_m}\}_{m\in(m_1,m_2]}$, or the latter being rectified into the former, when a set of fermion exchange operators $\{\pi_m \in G_{\rm ex}\}_{m\in(m_1,m_2]}$ exist such that $q_m = \pi_m r_m$ for all $m\in(m_1,m_2]$.
\vspace{-1.5ex}
\end{definition}

A specific PIMC method assigns to each Feynman flight a Gibbs transition amplitude through a small-time approximation, and assigns to each Feynman path a product of Gibbs transition amplitudes due to its constituent Feynman flights, then computes any Gibbs wavefunction or kernel by summing up (integrating over) the Gibbs transition amplitudes of all of the permitted Feynman paths. In particular, for each Feynman bundle $\Bun(t_{m-1},r_{m-1};t_m,r_m)$ with $t_{m-1} = N_{\rm tot}(m{-}1)$, $t_m = N_{\rm tot}(m)$, $m \in [1,M]$, $q_{m-1} \in \calC_{t_{m-1}}$, $r_m \in \calC_{t_m}$, the boltzmannonic Gibbs transition amplitude $\langle q_m|\exp[{-}(t_m{-}t_{m-1})H_{m\mathsmaller{B}}]|r_{m-1}\rangle$ associated with $H_{m{\mathsmaller{B}}} \defeq \Base(H_m)$ is obtained by integrating the corresponding Gibbs transition amplitudes of all of the paths in the Feynman bundle, which produces the boltzmannonic Gibbs wavefunction or kernel when $r_m$ or both $r_{m-1}$ and $r_m$ traverse(s) the respective configuration space(s). By equation (\ref{GibbsWaveFuncSymm}), the corresponding fermionic Gibbs wavefunctions or kernels can be obtained by symmetrizing the $r_m$ coordinate, that is, $\langle r_m|\exp[{-}(t_m{-}t_{m-1})H_m]|r_{m-1}\rangle = \langle r_m|P_{\!\mathsmaller{F}}\exp[{-}(t_m{-}t_{m-1})H_{m\mathsmaller{B}}]|r_{m-1}\rangle$, $\forall r_{m-1} \in \calC_{t_{m-1}}$, $\forall r_m \in \calC_{t_m}$. Next, for each Feynman chain $\{\Bun(t_{m-1},r_{m-1};t_m,q_m) : m \in (m_1,m_2]\}$ with respect to a Feynman stack or substack comprising the $(m_1+1)$-th to the $m_2$-th Feynman slabs, $0 \le m_1 < m_2 \le M$, a Gibbs transition amplitude associated with the Feynman chain is the product of the Gibbs transition amplitudes associated with its constituent Feynman bundles. Finally, integrating over all Feynman chains starting at $(t_{m_1},r_{m_1})$ and ending at $(t_{m_2},r_{m_2})$, with the intermediate coordinates traversing their respective configuration spaces, yields the Gibbs transition amplitudes for said Feynman stack or substack, namely,
\begin{align}
& \langle r_{m_2} \,|\, {\textstyle{ \prod_{m=m_1+1}^{m_2} }} \exp[-(t_m{-}t_{m-1})H_{m\mathsmaller{B}}] \,|\, r_{m_1}\rangle \,, \; \forall r_{m_1} \in \calC_{t_{m_1}} , \; \forall r_{m_2} \in \calC_{t_{m_2}} \nonumber \\[0.75ex]
\,=\,\;& \int_{\calC_{t_{m_2-1}}} \!\!\!\!\!\!\!\!\!\! \cdots \,\, \int_{\calC_{t_{m_1+1}}} \! {\textstyle{ \prod_{m=m_1+1}^{m_2} }} \langle r_m \,|\, \exp[-(t_m{-}t_{m-1})H_{m\mathsmaller{B}}] \,|\, r_{m-1}\rangle \; {\textstyle{ \prod_{m=m_1+1}^{m_2-1} }} dr_m \label{BoltzGibbsFeynStack}
\end{align}
for the boltzmannonic Gibbs wavefunctions or kernels, and
\begin{align}
& \langle r_{m_2} \,|\, P_{\!\mathsmaller{F}} \, {\textstyle{ \prod_{m=m_1+1}^{m_2} }} \exp[-(t_m{-}t_{m-1})H_m] \,|\, r_{m_1}\rangle \,, \; \forall r_{m_1} \in \calC_{t_{m_1}} , \; \forall r_{m_2} \in \calC_{t_{m_2}} \nonumber \\[0.75ex]
\,=\,\;& \int_{\calC_{t_{m_2-1}}} \!\!\!\!\!\!\!\!\!\! \cdots \,\, \int_{\calC_{t_{m_1+1}}} \! {\textstyle{ \prod_{m=m_1+1}^{m_2} }} \langle r_m \,|\, P_{\!\mathsmaller{F}} \exp[-(t_m{-}t_{m-1})H_{m\mathsmaller{B}}] \,|\, r_{m-1}\rangle \; {\textstyle{ \prod_{m=m_1+1}^{m_2-1} }} dr_m \label{FermiGibbsFeynStack}
\end{align}
for the fermionic counterparts. It is noted that, in equation (\ref{FermiGibbsFeynStack}), suffice it to keep only one of the fermionic symmetrization projection operators acting on one of the configuration spaces $\{\calC_{t_m}\}_{m\in(m_1,m_2]}$, with all of the rest $P_{\!\mathsmaller{F}}$ operators being redundant whose presence and absence produce the same or equivalent result. Nevertheless, it is sometimes convenient to keep the $P_{\!\mathsmaller{F}}$ operator for each Feynman bundle or Feynman slab, so that the fermionic Gibbs wavefunctions or kernels of the Feynman slab are handy, whose numerical signs and sign changes are of interest.

{\bf Context Switch Notice:} In the following plurality of paragraphs until another reminder of context switch like this one, many-body physical systems are always assumed to be moving on a connected continuous Riemannian manifold, involving no discrete dynamical variables or having all discrete dynamical variables frozen.

\begin{corollary}{(Maximal Property of Nodal Cells for Two-Fermion Systems)}\\
The ground state of any fermionic Schr\"odinger system consisting of two identical fermions moving on a connected Riemannian manifold, when non-degenerate, has exactly one connected positive nodal cell.
\label{OnePositiveNodalCell}
\vspace{-2.5ex}
\end{corollary}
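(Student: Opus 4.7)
The plan is to read the conclusion directly off the tiling property of nodal groupoids proved in the preceding lemma, by exploiting the fact that in a two-fermion system the exchange group is as small as possible. Since by the context-switch notice there are no discrete dynamical variables, the configuration space reduces to $\calC=\calM\times\calM$, every nodal groupoid $\calN^{\mathsmaller{\clubsuit}}(H;q)$ coincides with a single nodal cell $\calN(H;q)$, and the exchange symmetry group is $G_{\rm ex}=\{e,\pi\}$, where $\pi$ is the transposition of the two identical fermions.

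I would first fix any $q\in\calC$ with $\psi_0(H;q)>0$ and set $\calD\defeq\calN(H;q)$, which is a connected positive nodal cell by construction. Because $\psi_0(H;\pi r)=-\psi_0(H;r)$ holds throughout $\calC$, the transformed open set $\pi\calD$ is a connected negative nodal cell disjoint from $\calD$; in particular the stabilizer $\stab(\calD)$ is trivial and the full $G_{\rm ex}$-orbit of $\calD$ is $\orb(\calD)=\calD\cup\pi\calD$.

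Next I would invoke the tiling lemma, which asserts that $\calD$ is a $G_{\rm ex}$-tile relative to the subspace topology on $\calC'\defeq\calC\setminus\psi_0^{\mathsmaller{-}1}(H;\{0\})$. In fact the proof of that lemma establishes the stronger statement $\orb(\calD)=\calC'$, but even the weaker density $\cl[\orb(\calD)]\supseteq\calC'$ is enough here. Suppose toward contradiction that there existed a second connected positive nodal cell $\calD'\neq\calD$. Then $\calD'\cap\calD=\emptyset$ (distinct connected components of the open set $\psi_0^{\mathsmaller{-}1}(H;(0,\infty))$), and $\calD'\cap\pi\calD=\emptyset$ because $\psi_0(H)$ takes strictly positive values on $\calD'$ but strictly negative values on $\pi\calD$. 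Thus $\calD'$ would be a nonempty open subset of $\calC'$ disjoint from the dense set $\orb(\calD)=\calD\cup\pi\calD$, which is impossible. Hence every positive nodal cell must equal $\calD$, and the claim follows.

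There is essentially no technical obstacle beyond checking these disjointness relations; all of the work has already been absorbed into the tiling lemma, whose proof relied on the Perron-Frobenius-type uniqueness of the positive Dirichlet ground state together with antisymmetrization. The only point worth flagging is the use of the open-ness of each nodal cell (which is built into Definition~\ref{defiNodalCellSurface} via the continuity of $\psi_0(H;\cdot)$), since without it the density argument above would not immediately produce a contradiction.
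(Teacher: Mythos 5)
Your argument is correct and takes essentially the same approach as the paper: both deduce the result directly from \myLemma~\ref{TilingProperty} using the fact that the exchange group of a two-fermion system has only two elements, so the $G_{\rm ex}$-orbit of one positive nodal cell is $\{\calD,\pi\calD\}$ and tiles $\calC'$, leaving no room for a second positive cell. Your write-up is somewhat more explicit about the disjointness checks and the density argument, but the content is the same.
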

\begin{proof}[\iftoggle{ForUSPTO} {Demonstration} {Proof}]
When limited to systems consisting of two identical fermions moving on a connected Riemannian manifold, the exchange symmetry group has only two elements, one being even, the other being odd. Then it follows from {\myLemma} \ref{TilingProperty} that the ground state $\psi_0$ can only have two nodal cells, one being positive, the other negative, denoted by $\calN^+(\psi_0)$ and $\calN^-(\psi_0)$ respectively.
\vspace{-1.5ex}
\end{proof}

For a general single-species fermionic Schr\"odinger system moving on a connected Riemannian manifold, the maximal property of ground state nodal cells, {\it i.e.}, there is but one positive nodal cell, and consequently only one negative nodal cell, has been conjectured by Ceperley, who was also able to confirm the property numerically for systems consisting of up to $200$ identical fermions moving on a two- or three-dimensional torus \cite{Ceperley91}. Later, Mitas provided a rigorous proof for certain classes of single-species fermionic systems \cite{Mitas06prl,Mitas06arx}, particularly those of spin-polarized non-interacting fermions moving in a quadratic potential well or on a potential-flat torus, whose wavefunctions can be factored as a nodeless function multiplying an alternant \cite{Muir60}, that is, the determinant of an alternant matrix of particle coordinate variables, thus enabling an analytical representation and rigorous determination of the nodal surface. It has remained open whether or when the maximal property of nodal cells holds for a general interacting single-species fermionic Schr\"odinger system, until now. As will be asserted in {\myLemma} \ref{NodalCellTubeOneToOne} below, the maximal property of nodal cells does hold for the ground state of a single-species fermionic Hamiltonian $H(1)$, if there exists a $C^1$ curve $\{H(t)\in\calL_0(\calM):t\in[0,1]\}$ in the vector space $\calL_0(\calM)$, such that the ground state of $H(t)$ is non-degenerate, $\forall t\in[0,1]$, and that of $H(0)$ satisfies the maximal property of nodal cells.

On the other hand, it can be shown that a system of multi-particle multi-species has more than one disconnected positive nodal cells in general. To that end, just consider a system consisting of $S\ge 2$ fermion species, with at least two species indexed by $s=1,2$ having at least two identical fermions, where different species do not interact, so that the Hamiltonian is of a form $H(0)=\sum_{s=1}^{\mathsmaller{S}}H_s$, with each $H_s$, $s\in[1,S]$ representing the total energy of the subsystem of all of the fermions of the $s$-th species. Furthermore, the ground state of each $H_s$, $s\in[1,S]$ is assumed non-degenerate, so that $\psi_0(H_s)$ is a unique ray in the corresponding projective Hilbert space, and the unique ground state of the overall system of $S$-specie is a separated product $\psi_0(H(0))=\prod_{s=1}^{\mathsmaller{S}}\psi_0(H_s)$. Pick one positive nodal cell for each $\psi_0(H_{s'})$, and denote it by $\calN_{s'}^+$, $\forall s'\in[1,S]$. For each $s\in\{1,2\}$, there is at least one nontrivial element, denoted as $\pi_s\neq 1$, in the symmetry group $G_s$ of exchanging $n_s\ge 2$ identical fermions, $n_s\in\mathbb{N}$, so there must be at least one positive and one negative nodal cells, denoted as $\calN_s^+$ and $\calN_s^-$ respectively, such that $\calN_s^+\cap\calN_s^-=\emptyset$ and $\pi_s\calN_s^+=\calN_s^-$. Clearly, the product subspace $\prod_{s=1}^{\mathsmaller{S}}\calN^+_s$ is one positive nodal cell for the overall ground state $\psi_0(H_0)$, and so is $\pi_1\pi_2\prod_{s=1}^{\mathsmaller{S}}\calN^+_i=\calN_1^-\times\calN_2^-\times\prod_{s=3}^{\mathsmaller{S}}\calN^+_i$, such that $\scalebox{1.15}{(}\prod_{s=1}^{\mathsmaller{S}}\calN^+_i\scalebox{1.15}{)}\cap\scalebox{1.15}{(}\calN_1^-\times\calN_2^-\times\prod_{s=3}^{\mathsmaller{S}}\calN^+_i\scalebox{1.15}{)}=\emptyset$.

Let $(\calM,\calF_{\mathsmaller{\calM}},V_g)$ denote a $\sigma$-finite Lebesgue measure space, with $\calM$ being a connected, locally compact Riemannian manifold, and $\calF_{\mathsmaller{\calM}}$ being a $\sigma$-algebra of Lebesgue measurable subsets of $\calM$, and $V_g$ a Lebesgue measure associated with the Riemannian metric $g$ on the manifold $\calM$. Let $[\emptyset]\subset\calF_{\mathsmaller{\calM}}$ be the equivalence class of null sets with $V_g$-measure zero, and $\calF_{\mathsmaller{\calM}}^*$ be the quotient space of $\calF_{\mathsmaller{\calM}}$ modulo $[\emptyset]$. For any pair of measurable sets $E,F\in\calF_{\mathsmaller{\calM}}$, let $[E],[F]\in\calF_{\mathsmaller{\calM}}^*$ denote the corresponding equivalence classes of sets, and $d([E],[F])=V_g((E\cap F^c)\cup(E^c\cap F))$ denote the measure of symmetric difference, then $(\calF_{\mathsmaller{\calM}}^*,d)$ is a metric space, which is a separable topological space \cite{Halmos74,Bogachev07}, and actually second-countable.

\begin{definition}{($\sigma$-Continuity of Set-Valued Functions)}\label{SigmaCont}\\
Let $\calX$ be a topological space. A set-valued function $f:\calX\mapsto\calF_{\mathsmaller{\calM}}^*$ is called $\sigma$-continuous, if there exist a countable number of continuous set-valued functions $f_n:\calX\mapsto\calF_{\mathsmaller{\calM}}^*$, $n\in\mathbb{N}$, such that $f_n(x)$ is a compact subset of $\calM$ for each $(n,x)\in(\mathbb{N}\times\calX)$, $\bigcup_{n\in\mathbb{N}}f_n(x)=\calM$ for each $x\in\calX$, and $f\cap f_n:\calX\mapsto\calF_{\mathsmaller{\calM}}^*$ is continuous for each $n\in\mathbb{N}$, where $f\cap f_n$ denotes a range-restricted function such that $\forall x\in\calX$, $(f\cap f_n)(t)\defeq f(x)\cap f_n(x)$.
\end{definition}

\begin{definition}{($C^{1,2}\!$ Path and Piecewise $C^{1,2}\!$ Path of Hamiltonians)}\label{defiC1Path}\\
Let $\calI$ be a connected subset of $\mathbb{R}$. A $C^{1,2}\!$ path of Hamiltonians is a fermionic Schr\"odinger operator-valued curve $\{H(t)\in\calL_0(\calM):t\in\calI\}$ in the vector space $\calL_0(\calM)$ over a connected Riemannian manifold $\calM$, possibly accompanied by a $\sigma$-continuous set-valued function $\calM_{\rm sin}\!:\calI\mapsto\calF_{\mathsmaller{\calM}}^*$, such that the Gibbs kernel function $\langle r|\exp(\mathsmaller{-}\Base(H(t)))|q\rangle$ is $(t,r,q)$-jointly $C^{1,0,0}$ in $\calI\times\calM^2$, and $\forall t\in\calI$, $\calM_{\rm sin}(t)\subset\calM$ is a submanifold with $\dim(\calM_{\rm sin}(t))<\dim(\calM)$, $\langle r|\exp(\mathsmaller{-}\Base(H(t)))|q\rangle$ is positivity improving and in $L^2(\calM^2)\cap C^{2,2}((\calM\setminus\calM_{\rm sin}(t))^2)$, the ground state of $H(t)$ is non-degenerate and placed at the energy level $0$, while the operator $\Base(H(t))$ substantiates the Hopf lemma, $\forall t\in\calI$. A piecewise $C^{1,2}\!$ path of Hamiltonians is a concatenation (also known as product) of a finite number of $C^{1,2}\!$ paths of Hamiltonians.
\vspace{-1.5ex}
\end{definition}

Let the index subset $\calI=\bigcup_{k=1}^{\mathsmaller{K}}\calI_k$ be a union of consecutive intervals $\{\calI_k\}_{k=1}^{\mathsmaller{K}}$, $K\in\mathbb{N}$, where the closures of adjacent intervals intersect at a single point on the real line, $\cl(\calI_k)\cap\cl(\calI_{k+1})=\{a_k\}$, $a_k\in\mathbb{R}$, $\forall k\in[1,K{-}1]$. For a piecewise $C^{1,2}\!$ path of Hamiltonians $\{H(t)\in\calL_0(\calM):t\in\calI\}$, the Gibbs kernel function $\langle r|\exp(\mathsmaller{-}H(t))|q\rangle$ as a linear combination of exchange-symmetric versions of $\langle r|\exp(\mathsmaller{-}\Base(H(t)))|q\rangle$ is by definition $(t,r,q)$-jointly $C^{1,0,0}$ in each $\calI_k\times\calM^2$, $k\in[1,K]$, and $(r,q)$-jointly $C^{2,2}$ in $(\calM\setminus\calM_{\rm sin}(t))^2$, $\forall t\in\calI$, $\forall k\in[1,K]$. It follows from equation (\ref{PsiIsC12}) that the $t$-dependent, non-degenerate ground state $\psi_0(t;q)=\psi_0(H(t);q)$ is in $C^{1,0}(\calI_k\times\calM)$, $\forall k\in[1,K]$ and in $C^2(\calM\setminus\calM_{\rm sin}(t))$, $\forall t\in\calI$, $\forall k\in[1,K]$.

Therefore, the preimage $\psi_0^{\mathsmaller{-}1}(\mathbb{R}\setminus\{0\})\defeq\{(t,q)\in\calI\times\calM:\psi_0(t;q)\neq 0\}$ is an open subset of $\calI\times\calM$, a connected component of which containing a given point $(t',q')\in\calI\times\calM$, denoted by $\calT_{t'q'}$, is called a {\em nodal tube} containing the point $(t',q')$. Let $\calN_{t'q'}\defeq\calN(H(t');q')$, $(t',q')\in\calI\times\calM$ be a typical nodal cell of the ground state $\psi_0(H(t'))$ of the instantaneous Hamiltonian $H(t')$ at the instant $t=t'$. Clearly, $\calN_{t'q'}$ is contained in the nodal tube $\calT_{t'q'}$, and is actually the intersection of the two submanifolds $\calT_{t'q'}\subseteq(\calI\times\calM)$ and $\{t'\}\times\calM\subseteq(\calI\times\calM)$. It is said that the nodal cell $\calN_{t'q'}$ grows into the nodal tube $\calT_{t'q'}$, or the nodal tube $\calT_{t'q'}$ is grown from the nodal cell $\calN_{t'q'}$.

Similar to \iftoggle{ForUSPTO} {Auxiliary Utility} {Lemma} \ref{TilingProperty}, by considering the group action of $G_{\rm ex}$ on nodal tubes, it can be shown that any nodal tube $\calT_{t'q'}$, $(t',q')\in(\calI\times\calM)$ is a $G_{\rm ex}$-tile of the product manifold $\calI\times\calM$, namely, $\forall\pi_1,\pi_2\in G_{\rm ex}$, either $\pi_1\calT_{t'q'}=\pi_2\calT_{t'q'}$, or $\pi_1\calT_{t'q'}\cap\pi_2\calT_{t'q'}=\emptyset$, what is more, $\bigcup_{\pi\in G_{\rm ex}}\pi\calT_{t'q'}=(\calI\times\calM)$. Having a point $(t',q')\in(\calI\times\calM)$ fixed, consider a set-valued function $\calT_{t'q'}(\cdot):\calI\mapsto\calF_{\mathsmaller{\calM}}^*$, which sends each $t\in\calI$ to $\calT_{t'q'}(t)\defeq\!\left(\calT_{t'q'}\cap(\{t\}\times\calM)\right)\cup[\emptyset]\defeq\!\left\{\left(\calT_{t'q'}\cap(\{t\}\times\calM)\right)\cup Z:Z\in[\emptyset]\right\}\!\in\calF_{\mathsmaller{\calM}}^*$, then the nodal tube $\calT_{t'q'}$ is the graph of the function $\calT_{t'q'}(\cdot)$, modulo $[\emptyset]$.

\begin{lemma}{(One-to-One Correspondence between Nodal Cells and Nodal Tubes)}\label{NodalCellTubeOneToOne}\\
Given a piecewise $C^{1,2}\!$ path of Hamiltonians $\left\{H(t)\in\calL_0(\calM):t\in\calI=\bigcup_{k=1}^{\mathsmaller{K}}I_k\right\}$, and an instant $t'\in\calI$, let $\{\calN_{t'}\}\defeq\{\calN_{t'q'}\!:q'\in\{q^{(n)}\}_{n=1}^{\mathsmaller{N}}\}$, $N\in\mathbb{N}$ denote the set of all nodal cells associated with the instantaneous Hamiltonian $H(t')$, with each $q^{(n)}\in\calM$, $n\in[1,N]$ being contained in and indexing one unique nodal cell, then $\{\calT_{t'}\}\defeq\{\calT_{t'q'}\!:q'\in\{q^{(n)}\}_{n=1}^{\mathsmaller{N}}\}$ lists all of the nodal tubes associated with the piecewise $C^{1,2}\!$ path, and there exists a bijection $\bfg:\{\calN_{t'}\}\mapsto\{\calT_{t'}\}$, which commutes with all of the elements of the exchange symmetry group $G_{\rm ex}$, such that $\bfg(\calN_{t'q'})=\calT_{t'q'}$, $\forall q'\in\{q^{(n)}\}_{n=1}^{\mathsmaller{N}}$, and $\bfg\pi=\pi\bfg$, $\forall\pi\in G_{\rm ex}$. Furthermore, for any fixed point $(t',q')\in(\calI\times\calM)$, the set-valued function $\calT_{t'q'}(\cdot)$ from $\calI$ to the metric space $(\calF_{\mathsmaller{\calM}}^*,d)$ is $\sigma$-continuous, and $\int_{\scalebox{0.65}{$\calT_{t'q'}(t)$}\!}|\psi_0(H(t);q)|^2dV_g(q)$ remains a constant fraction of $\int_{\scalebox{0.65}{$\calM$}\!}|\psi_0(H(t);q)|^2dV_g(q)$, $\forall t\in\calI$.
\end{lemma}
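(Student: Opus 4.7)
The plan is to treat the four assertions one by one, with the core structural ingredient being that a $C^{1,2}\!$ path of fermionic Schr\"odinger Hamiltonians induces a smooth evolution of nodal surfaces in time, so that nodal cells neither split nor merge as $t$ varies. First, I would define $\bfg(\calN_{t'q'})\defeq\calT_{t'q'}$, which is unambiguous since by construction $\calN_{t'q'}\subseteq\calT_{t'q'}\cap(\{t'\}\times\calM)$: both are connected components through $(t',q')$ of their respective open sets, namely the nonzero set of $\psi_0(H(t'))$ inside $\calM$ and of $\psi_0$ inside $\calI\times\calM$. The equivariance $\bfg\pi=\pi\bfg$ is a direct consequence of the fact that each $\pi\in G_{\rm ex}$ acts trivially on $\calI$ and homeomorphically on $\calM$, with $\psi_0(t;\pi q)=\sign(\pi)\psi_0(t;q)$, so the nonzero sets (at each $t$ and globally) are $G_{\rm ex}$-invariant and $\pi\calN_{t'q'}=\calN_{t'\pi q'}$, $\pi\calT_{t'q'}=\calT_{t'\pi q'}$.

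The main obstacle is showing that $\bfg$ is a bijection. Surjectivity uses connectedness of $\calI$: if the projection of some nodal tube $\calT$ to $\calI$ missed $t'$, then $\pi(\calT)$ would be a proper clopen subset of $\calI$, using the $C^{1,0}$ joint regularity of $\psi_0$ in $(t,q)$. For injectivity, the crucial input is the Hopf lemma assumption on $\Base(H(t))$: at every $(t,q)\in Z\defeq\psi_0^{\mathsmaller{-}1}(\{0\})$ with $q\in\calM\setminus\calM_{\rm sin}(t)$, the spatial gradient $\partial_q\psi_0(t,q)$ is nonzero. Combined with $C^{1,2}$-regularity, the implicit function theorem makes the regular part of $Z$ a codimension-$1$ $C^1$ hypersurface in $\calI\times\calM$ over which the projection to $\calI$ is a submersion. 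Hence nodal surfaces translate smoothly in time without pinching, so a local flow of nodal cells along $t$ can be built from a partition of unity and tangent vector fields to $Z$. Since $\calM_{\rm sin}(t)$ has strictly lower dimension, it does not disrupt connectivity. Two distinct nodal cells at $t'$ cannot lie in the same tube because the existence of a continuous path between them in $\calT$ would, by an approximation argument staying off the singular set, contradict the fact that this time-flow preserves the cell decomposition.

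For $\sigma$-continuity of $t\mapsto\calT_{t'q'}(t)$, I would take a countable exhaustion $\{K_n\}_{n\in\mathbb{N}}$ of $\calM$ by compact sets and set $f_n(t)\defeq K_n$. On each compact $K_n$, the smooth evolution of the nodal hypersurfaces established above yields a Hausdorff-distance bound of order $|\Delta t|$ on their motion (by bounding normal velocities via $\partial_t\psi_0/|\partial_q\psi_0|$ on compact subsets of $\calM\setminus\calM_{\rm sin}(t)$), hence the symmetric-difference measure $V_g((\calT_{t'q'}(t)\triangle\calT_{t'q'}(t'))\cap K_n)=O(|t-t'|)$, which is the required continuity in $(\calF_{\!\mathsmaller{\calM}}^*,d)$. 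Singular contributions from $\calM_{\rm sin}(t)$ are absorbed into the null-set equivalence class by $\sigma$-finiteness and lower-dimensionality.

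Finally, for the constant-fraction claim, I would invoke the tiling property (Lemma \ref{TilingProperty}) at the level of nodal tubes: both the set of nodal cells at instant $t$ and the set of nodal tubes are $G_{\rm ex}$-orbits of a single representative, with $|\psi_0(H(t);q)|^2$ being $G_{\rm ex}$-invariant. The bijection $\bfg$, being equivariant, identifies $\stab(\calT_{t'q'}(t))$ with $\stab(\calT_{t'q'})$, which does not depend on $t$. Then
\begin{equation}
\int_{\calM}|\psi_0(H(t);q)|^2\,dV_g(q)=\frac{|G_{\rm ex}|}{|\stab(\calT_{t'q'})|}\int_{\calT_{t'q'}(t)}|\psi_0(H(t);q)|^2\,dV_g(q),
\end{equation}
which gives the desired time-independent fraction $|\stab(\calT_{t'q'})|/|G_{\rm ex}|$.
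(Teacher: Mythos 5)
Your injectivity argument is the weak link. You invoke a ``local flow of nodal cells along $t$'' built from tangent vector fields to the zero set $Z$, but you never establish that such a flow exists globally in $t$. The normal velocity of the nodal surface, $-\partial_t\psi_0/\|\partial_q\psi_0\|$, blows up wherever the spatial gradient is small, and the Hopf-lemma lower bound on $\|\partial_q\psi_0\|$ is only available on compact sets bounded away from $\calM_{\rm sin}(t)$ and from infinity; it is not $t$-uniform on the whole manifold. So a trajectory of your flow could escape to infinity or approach the singular set in finite time, and the ``approximation argument staying off the singular set'' is precisely the step that needs a proof. The paper avoids this entirely: it never constructs a flow. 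It restricts to compact sets $\calM_\epsilon(t)$ on which $\|\partial_q\psi_0\|\ge\psi^*_\epsilon>0$ uniformly, shows that $t\mapsto\calT_{t'q'}(t)\cap\calM_\epsilon(t)$ is continuous in the measure-of-symmetric-difference metric, and invokes \myLemma\,\ref{TilingProperty} to express the normalized mass of the slice as $1/N(\epsilon;t',q';t)$ for an integer $N$ --- a quantity which, being continuous and integer-valued, must be constant. Bijectivity and the constant-fraction claim then fall out together (letting $\epsilon\to 0$). In effect the paper derives injectivity from the constant fraction; you try to derive the constant fraction from injectivity, and it is exactly your injectivity step that is unsupported.

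Your orbit--stabilizer reformulation of the constant fraction, giving $|\stab(\calT_{t'q'})|/|G_{\rm ex}|$, is a pleasant way to phrase the answer, but notice that it silently assumes the slices are nonempty: the implication $\pi\calT(t)=\calT(t)\Rightarrow\pi\calT=\calT$ uses $\calT(t)\neq\emptyset$, and the covering $\bigcup_{\pi}\pi\calT_{t'q'}(t)\supseteq\calM$ (modulo null sets) needs every slice to contribute. So this argument must come \emph{after} non-vanishing is established, not before, and cannot do the work of proving bijectivity. Your $\sigma$-continuity sketch also quietly needs a bound on the $(d{-}1)$-dimensional measure of the nodal surface inside each compact $K_n$ to convert a Hausdorff-distance estimate into a symmetric-difference-measure estimate; the paper instead works directly with continuity of the integrals $\int_{\calT_{t'q'}(t)\cap\calM_\epsilon(t)}|\psi_0|^2\,dV_g$, which requires only the level-set evolution equation and the uniform gradient lower bound already in hand.
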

\vspace{-3.0ex}
\begin{proof}[\iftoggle{ForUSPTO} {Demonstration} {Proof}]
Suffice it to show that all of the assertions hold true for a single piece of $C^1$ path $\{H(t)\in\calL_0(\calM):t\in[0,1]\}$ with $H(\cdot)$ being $C^1([0,1])$, such that the Gibbs kernel function $\langle r|\exp(\mathsmaller{-}H(t))|q\rangle$ is $(t,r,q)$-jointly $C^{1,0,0}$ in $[0,1]\times\calM^2$ and $(r,q)$-jointly $C^{2,2}$ in $(\calM\setminus\calM_{\rm sin}(t))^2$, $\forall t\in[0,1]$, thus the never degenerate ground state $\psi_0(H(t);q)$ is $(t,q)$-jointly $C^{1,0}$ in $[0,1]\times\calM$ and $C^2$ in $\calM\setminus\calM_{\rm sin}(t)$, $\forall t\in[0,1]$. Given any $t'\in[0,1]$ and the accordingly defined $\{\calN_{t'}\}$ and $\{\calT_{t'}\}$, it is obvious that each nodal cell $\calN_{t'q'}\in\{\calN_{t'}\}$ with $q'\in\calM$ such that $\psi_0(H(t');q')\neq 0$ corresponds to a nodal tube $T_{t'q'}\in\{\calT_{t'}\}$ through their common reference to the point $(t',q')\in[0,1]\times\calM$, which defines a mapping $\bfg:\{\calN_{t'}\}\mapsto\{\calT_{t'}\}$. By the $G_{\rm ex}$-invariance of $H(t)$ and $|\psi_0(H(t);\cdot)|$, $\forall t\in[0,1]$, it is obvious that $\bfg$ commutes with any $\pi\in G_{\rm ex}$, namely, for each nodal cell $\calN_{t'q'}\in\{\calN_{t'}\}$ that grows into a nodal tube $\bfg(\calN_{t'q'})=T_{t'q'}\in\{\calT_{t'}\}$, the $\pi$-transformed nodal cell $\pi\calN_{t'q'}$ grows into the $\pi$-transformed nodal tube $\pi\calT_{t'q'}$. To prove that $\bfg$ is bijective, it is sufficient to show that for any fixed point $(t',q')\in(\calI\times\calM)$, the set-valued function $\calT_{t'q'}(\cdot):[0,1]\mapsto(\calF_{\mathsmaller{\calM}}^*,d)$ is $\sigma$-continuous and never vanishes over its entire domain, namely, $V_g(\calT_{t'q'}(t))>0$, $\forall t\in[0,1]$.

For any $\delta>0$, for any $t\in[0,1]$, let $\calM_{\raisebox{0.1\height}{\tiny \rm sin}}^{\raisebox{-0.1\height}{\tiny $\delta$}}(t)\defeq\bigcup_{q\,\in\,\calM_{\rm sin}(t)}B_g(q,\delta)$ be a slightly blown-up open set enclosing the submanifold of singularity $\calM_{\rm sin}(t)$, with $B_g(q,\delta)\defeq\{r\in\calM:\dist_g(r,q)<\delta\}$. Pick one point $O\in\calM$ and call it an origin. Given any $\epsilon\in(0,1)$, and for any $t\in[0,1]$, let $\delta(\epsilon)>0$ be chosen such that
\begin{equation}
\int_{\calM_{\rm sin}^{\delta(\epsilon)}(t)}|\psi_0(H(t);q)|^2\,dV_g(q)\,<\,\frac{\epsilon}{2}\int_{\calM}|\psi_0(H(t);q)|^2\,dV_g(q), \nonumber
\end{equation}
let $\bfB_{\epsilon}(t)\defeq\cl(B_g(O,R(\epsilon);t))$ be the smallest closed ball centered at the origin $O$ with a radius $R(\epsilon)>0$, such that, with $\calB_{\epsilon}(t)\defeq\bigcup_{\pi\in G_{\rm ex}}\!\pi\bfB_{\epsilon}(t)$, $\calM_{\epsilon}(t)\defeq(\calM\setminus\calM_{\raisebox{0.1\height}{\tiny \rm sin}}^{\raisebox{-0.1\height}{\tiny $\delta(\epsilon)$}}(t))\cap\calB_{\epsilon}(t)$, it holds that
\begin{equation}
\int_{\calM_{\epsilon}(t)}|\psi_0(H(t);q)|^2\,dV_g(q)\,\ge\,(1-\epsilon)\int_{\calM}|\psi_0(H(t);q)|^2\,dV_g(q). \nonumber
\end{equation}
Then the mapping $\calM_{\epsilon}(\cdot):t\in[0,1]\mapsto\calM_{\epsilon}(t)\in\calF_{\mathsmaller{\calM}}^*$ is a continuous function, and the union of $\calM_{\epsilon}(t)$-restricted nodal surfaces $\partial\calN_{\epsilon}(t)\defeq\!\bigcup_{\calN_{tq}\in\{\calN_t\}}\partial(\calN_{tq}\cap\calM_{\epsilon}(t))$ is a compact submanifold, which is $C^2$-smooth by the implicit function theorem and the $C^2$-continuity of $\psi_0(H(t);\cdot)$. Therefore, the norm of the gradient, $\|\partial_q\psi_0(H(t),q)\|$, does not vanish on $\partial\calN_{\epsilon}(t)$ and attains a minimum $\psi'_{\epsilon}(t)\defeq\min\{\|\partial_q\psi_0(H(t),q)\|:q\in\partial\calN_{\epsilon}(t)\}>0$, $\forall t\in[0,1]$ by the premise of the operator $\Base(H(t))$ substantiating the Hopf lemma. Moreover, the function $\psi'_{\min}(\cdot):[0,1]\mapsto\mathbb{R}$ is $C^1([0,1])$ and strictly positive, so $\psi^*_{\epsilon}=\min\{\psi'_{\epsilon}(t):t\in[0,1]\}$ exists and is positive.

For any fixed point $(t',q')\in(\calI\times\calM)$, and with respect to a fixed $\epsilon\in(0,1)$, the $\calM_{\epsilon}(t)$-restricted nodal tube $\calT_{t'q'}(t)\cap\calM_{\epsilon}(t)$ of $\psi_0(H(t);\cdot)$ is a continuous set-valued function in $t\in[0,1]$, because its boundary $\partial(\calT_{t'q'}(t)\cap\calM_{\epsilon}(t))\subseteq((\partial\calT_{t'q'}(t))\cap\calM_{\epsilon}(t))\cup(\partial\calM_{\epsilon}(t))$ varies continuously in $t\in[0,1]$. That $\partial\calM_{\epsilon}(\cdot)$ is in $C^1([0,1])$ follows straightforwardly from the $C^{1,2}$-continuity and $L^2$-integrability of $\psi_0(H(\cdot),\cdot)$, while the coordinate $q\in\calM$ of points on $(\partial\calT_{t'q'}(t))\cap\calM_{\epsilon}(t)$, $t\in[0,1]$ satisfies the level-set equation \cite{Sethian99} $\psi_0(H(t),q(t))=0$, and the differential version $(\partial\psi_0/\partial t)+(\partial_q\psi_0)\cdot(dq/dt)=0$, which yields an evolution equation for the $t$-dependence of the nodal surface \cite{Sethian99},
\begin{equation}
\frac{dq(t)}{dt}\,=\,-\frac{\partial\psi_0(H(t);q(t))}{\partial t}\;\frac{\partial_q\psi_0(H(t);q(t))}{\|\partial_q\psi_0(H(t);q(t))\|^2}\,,
\end{equation}
and guarantees the $C^1$-continuity of $(\partial\calT_{t'q'}(t))\cap\calM_{\epsilon}(t)$, by the $C^{1,2}$-continuity of $\psi_0(H(t);q)$ and the $\psi^*_{\epsilon}>0$ lower-boundedness of $\|\partial_q\psi_0(H(t);q)\|$. Therefore, the set-valued function $(\calT_{t'q'}\cap\calM_{\epsilon})(\cdot):t\in[0,1]\mapsto\calT_{t'q'}(t)\cap\calM_{\epsilon}(t)\in\calF_{\mathsmaller{\calM}}^*$ is continuous, so are the integrals $\int_{\calT_{t'q'}(t)\cap\,\calM_{\epsilon}(t)}|\psi_0(H(t);q)|^2\,dV_g(q)$ and $\int_{\calM_{\epsilon}(t)}|\psi_0(H(t);q)|^2\,dV_g(q)$ continuous in $t$, $\forall t\in[0,1]$.

By \iftoggle{ForUSPTO} {Auxiliary Utility} {Lemma} \ref{TilingProperty}, for any fixed $t\in[0,1]$, the subset $\calT_{t'q'}(t)\cap\calM_{\epsilon}(t)$ tessellates the whole compact space $\calM_{\epsilon}(t)$, under the homeomorphism action of the exchange symmetry group $G_{\rm ex}$, hence there exists an integer $N(\epsilon;t',q';t)\in[1,|G_{\rm ex}|]$ such that
\begin{equation}
\frac{\int_{\calT_{t'q'}(t)\cap\,\calM_{\epsilon}(t)}|\psi_0(H(t);q)|^2\,dV_g(q)}{\int_{\calM_{\epsilon}(t)}|\psi_0(H(t);q)|^2\,dV_g(q)}\,=\,\frac{1}{N(\epsilon;t',q';t)},\;\;\forall t\in[0,1], \label{OneOverNt}
\end{equation}
since $|\psi_0(H(t);\cdot)|^2$ is invariant under any particle exchange operation in $G_{\rm ex}$. To be simultaneously continuous and discrete-valued, the fraction in equation (\ref{OneOverNt}) can only be a constant $1/N_0$, with $N(\epsilon;t',q';t)=N_0\in\mathbb{N}$, $\forall t\in[0,1]$. In the limit $\epsilon\rightarrow 0$, it is necessary that $\calM_{\epsilon}(t)\rightarrow\calM$, in the sense that $V_g(\calM\setminus\calM_{\epsilon}(t))\rightarrow 0$, $\forall t\in[0,1]$, otherwise, there would exist a $t\in[0,1]$ such that the Gibbs kernel function $\langle r|\exp(\mathsmaller{-}\Base(H(t)))|q\rangle$ is not positivity improving. As $\epsilon$ approaches $0$, the fraction in equation (\ref{OneOverNt}) remains the same constant, and the set-valued function $\calT_{t'q'}(\cdot)\cap\calM_{\epsilon}(\cdot):[0,1]\mapsto(\calF_{\mathsmaller{\calM}}^*,d)$ is always continuous. That establishes the bijectivity of the function $\bfg$, for any predetermined $t'\in[0,1]$.
\end{proof}

\begin{corollary}{(Nodal Maximality for Conventional Fermionic Schr\"odinger Systems)}\\
Ceperley's conjecture on the nodal maximality, positing that any non-degenerate ground state has exactly one positive nodal cell, is true for a single-spices system consisting of $n\in\mathbb{N}$ identical spinless fermions moving in a smooth $(d\ge 2)$-dimensional Riemannian substrate space $(\calM_0,g_0)$, where the physics is governed by a conventional Schr\"odinger operator $H_1={-}\Delta_g+V_1(q\in\calM)$ on the configuration space $(\calM\defeq\calM_0^n,g)$, with the potential $V_1$ being Kato-decomposable \cite{Kato80,Lorinczi11,Simon82} and $C^2$-smooth in $\calM\setminus\calM_{\rm sin}$, where $\calM_{\rm sin}\subset\calM$ is a submanifold with $\dim(\calM_{\rm sin})<\dim(\calM)$.
\label{CeperleysConjectureIsTrue}
\vspace{-2.5ex}
\end{corollary}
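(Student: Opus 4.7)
The plan is to realize $H_1$ as the endpoint $H(1)$ of a piecewise $C^{1,2}$ path of Hamiltonians $\{H(t):t\in[0,1]\}$ whose origin $H(0)$ is a Mitas-type reference for which Ceperley's conjecture is already known, and then to transport nodal maximality from $t=0$ to $t=1$ via the $G_{\rm ex}$-equivariant bijection between nodal cells and nodal tubes asserted in Lemma \ref{NodalCellTubeOneToOne}.

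For the reference Hamiltonian I would take $H(0)\defeq-\Delta_g+V_0$, with $V_0$ a separable sum of single-particle potentials chosen so that the ground state factorizes exactly as a strictly positive symmetric envelope times an alternant \cite{Muir60} of one-particle orbitals: the harmonic-oscillator instance of Mitas \cite{Mitas06prl,Mitas06arx} when $\calM_0\subseteq\mathbb{R}^d$, the flat potential on a torus otherwise, and for a general Riemannian substrate, a deep $n$-well tight-binding potential whose leading-order Slater determinant of Wannier-like orbitals realizes the same alternant structure. In every case the zero set of the alternant consists only of two-fermion coincidence hypersurfaces, and partitions $\calM=\calM_0^n$ into exactly two nodal cells exchanged by any odd $\pi\in G_{\rm ex}$, witnessing nodal maximality at $t=0$.

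The path itself is built by the affine interpolation $H(t)\defeq-\Delta_g+(1-t)V_0(q)+tV_1(q)+c(t)$, with a $C^1$-smooth additive scalar $c(t)$ pinning $\lambda_0(H(t))=0$. Kato-decomposability of $V_0$ and $V_1$, together with their $C^2$-smoothness off a low-dimensional submanifold $\calM_{\rm sin}(t)\defeq\calM_{\rm sin}^{(0)}\cup\calM_{\rm sin}^{(1)}$, guarantees that each $\Base(H(t))$ is stoquastic, positivity improving, and substantiates the Hopf lemma, and that the associated Gibbs kernel enjoys the joint regularity demanded by Definition \ref{defiC1Path}. The Perron-Frobenius, Krein-Rutman, and de Pagter theorems then ensure that the boltzmannonic ground state of $\Base(H(t))$ is non-degenerate at every $t$; the only way the fermionic ground state of $H(t)$ itself can become degenerate is through isolated crossings between $\lambda_0(H(t))$ and higher fermionic eigenvalues.

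The main obstacle is excluding these crossings, which I would handle by a generic-perturbation detour. By the Rellich-Kato analytic perturbation theory, the fermionic eigenvalues $\lambda_n(H(t))$ are real-analytic in $t$ off a discrete set of collision instants, and any individual collision can be lifted by an arbitrarily small exchange-symmetric perturbation $\epsilon(t)W(q)$ whose matrix element between the colliding eigenstates is nonzero and whose envelope satisfies $\epsilon(0)=\epsilon(1)=0$. Splicing finitely many such detours produces a piecewise $C^{1,2}$ path with the original endpoints $H(0),H(1)$ along which $\psi_0(H(t))$ stays non-degenerate at every instant. Lemma \ref{NodalCellTubeOneToOne} then supplies a $G_{\rm ex}$-equivariant bijection between the two nodal cells of $\psi_0(H(0))$ and the nodal cells of $\psi_0(H(1))=\psi_0(H_1)$; because the former pair is exchanged by an odd transposition, so must be the latter, forcing $\psi_0(H_1)$ to possess exactly one positive and one negative nodal cell, which proves the corollary.
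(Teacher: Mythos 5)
Your proposal follows the same overall route as the paper: identify a Mitas-type reference $H(0)$ with known nodal maximality, build a piecewise $C^{1,2}$ path $\{H(t)\}$ ending at $H_1$, keep the fermionic ground state non-degenerate along the way, and invoke Lemma \ref{NodalCellTubeOneToOne} to transport the ``exactly two nodal cells'' property from $t=0$ to $t=1$. The genuine gap is in how you dispose of the ground-state degeneracies along the path.

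You claim that ``any individual collision can be lifted by an arbitrarily small exchange-symmetric perturbation $\epsilon(t)W(q)$ whose matrix element between the colliding eigenstates is nonzero.'' This is not sufficient once more than two eigenvalue curves meet at the collision instant $t_1$. If $m{+}1\ge 3$ curves collide, the relevant object is the $(m{+}1)\times(m{+}1)$ matrix $M_{ij}=\langle\psi_i|W|\psi_j\rangle$ restricted to the degenerate eigenspace, and what you need is that its \emph{lowest} eigenvalue be simple. Having all $M_{ij}\neq 0$ does not secure this: for instance, any ball $B(q_1,\epsilon_1)$ inside which all the degenerate $\psi_i$ are positive (the paper shows such a ball exists via the Hopf lemma) and any nonnegative diagonal potential $W$ supported there yields a positive matrix $M$, whose lowest eigenvalue can have multiplicity up to $m$ (e.g.\ the rank-one all-ones matrix). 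Moreover, since $W(q)$ must be a local, $\calM$-diagonal, exchange-symmetric potential in order to keep $H(t)$ fermionic Schr\"odinger and keep the hypotheses of Lemma \ref{NodalCellTubeOneToOne} intact, the class of attainable matrices $M$ is severely constrained, and the ``generic perturbation'' heuristic is not available without further argument. The paper's proof is engineered precisely to escape this: it installs a high-frequency oscillatory potential $U_0(t;q)\sim\exp[-D(t-t_1)^2]\exp(-Ex_1^2)\sin(Fx_1)$ that has \emph{negligible direct} first-order mixing among the $\psi_i$ and instead couples them all, with negative sign, to a single high-energy effective state $\Psi$; the resulting effective Hamiltonian $H''(t)$ restricted to the $(m{+}2)$-dimensional low-energy subspace is then stoquastic, irreducible, and aperiodic, so Perron-Frobenius delivers a simple ground state directly, regardless of $m$.

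Two further loose ends worth noting: you assert Rellich-Kato analyticity of the eigenvalue curves for the affine path $(1-t)V_0+tV_1$ with $V_0,V_1$ merely Kato-decomposable, but Rellich's theorem is cleanly applicable to \emph{bounded} analytic perturbations; the paper first replaces $V_1$ on a large compact set by a bounded $V_1'$ and shows the swap is an infinitesimally small perturbation, and you should do the same before invoking analyticity. And you assert the collision instants form a discrete set, but the argument should actually give \emph{finitely} many collisions on $[0,1]$ (via analyticity of the $\lambda$-curves and Bolzano-Weierstrass) so that finitely many localized detours, with controllably small mutual overlap, suffice.
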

\begin{proof}[\iftoggle{ForUSPTO} {Demonstration} {Proof}]
Construct a path of Hamiltonians $\{H(t):t\in[0,1]\}$ that is not only piecewise $C^1$-continuous but actually piecewise analytic \cite{Krantz02,Rellich69,Kriegl97} with
\begin{equation}
H(t) \defeq \begin{cases}
(1-2t)H_0+2tH_1+U_0(t;q\in\calM)+2tU_1(q\in\calM), & t\in[0,1/2],\\
H_1+2(1-t)U_1(q\in\calM), & t\in(1/2,1],
\end{cases}
\end{equation}
where $H_0=-\Delta_g+V_0$, with $V_0$ being a bounded, $C^2(\calM)$-smooth, and quasiconvex potential having a deep global minimum, such that $H_0=-\sum_{i=1}^{nd}g^{ii}(\partial/\partial x^i)^2-\sum_{i=1}^{nd}h^i(\partial/\partial x^i)+\sum_{i=1}^{nd}k_ig_{ii}x^ix^i+\min(V_0)$, in a small neighborhood covered by a local coordinate $(x^1,\cdots\!,x^{nd})$ around the global potential minimum, where $(g_{ii})_{i=1}^{nd}$ and $(g^{ii})_{i=1}^{nd}$ are diagonalized representations of the Riemannian metric $g$ and $g^{{-}1}$ respectively, $h^i$ is a smooth function related to the Riemannian metric, $k_i>0$ is a constant, while $U_0$ is a $t$-analytic, $\calM$-diagonal, bounded, and $C^2(\calM)$-smooth potential that vanishes at $t=0$ and $t=1/2$, and $U_1$ is $t$-independent, $\calM$-diagonal, bounded, and $C^2(\calM)$-smooth. The $V_0$ potential can be designed to have the constants $\{k_i:i\in[1,nd]\}$ in the Hamiltonian $H_0$ so large that all the other coefficients $\{g_{ii},g^{ii},h^i:i\in[1,nd]\}$ can be taken as constants in the neighborhood of $\arg\min(V_0)$, without any substantial change in the relevant low-lying eigenstates of $H_0$, which essentially represents an ideal harmonic oscillator, whose fermionic Schr\"odinger ground state is non-degenerate. More specifically, with a local coordinate $(x^1,\cdots\!,x^{nd})=((x^{\nu\delta})_{\delta=1}^{d})_{\nu=1}^{n}$ for said neighborhood of $\arg\min(V_0)\in\calM$ explicitly manifesting its structure of Cartesian product, in that for each $\nu\in[1,n]$, the $d$-tuple $(x^{\nu\delta})_{\delta=1}^{d}\in\calM_0$ represents the position of the $\nu$-th artificially labeled identical fermion in the substrate space $\calM_0$, one good choice of the parabolic potential is $V_0\sim\min(V_0)+\sum_{\nu=1}^n\sum_{\delta=1}^dk_{\delta}g_{0,\delta\delta}x^{\nu\delta}x^{\nu\delta}$, with $(g_{0,\delta\delta})_{\delta=1}^d$ being a diagonalized representation of the Riemannian metric $g_0$ for the substrate space $\calM_0$, and the positive constants $(k_{\delta})_{\delta=1}^d$ satisfying $n\max(k_1^{\mathsmaller{1/2}}g_{0,11},k_2^{\mathsmaller{1/2}}g_{0,22})<k_{\delta}^{\mathsmaller{1/2}}g_{0,\delta\delta}$, $\forall\delta\in[3,d]$, such that, among the $d$ decoupled one-dimensional modes of harmonic oscillations, each along one orthogonal axis in the neighborhood on the substrate space $\calM_0$, only two can be excited and occupied by the identical fermions, while the rest $(d-2)$ modes are all frozen to their ground states. Furthermore, the constants are chosen to make sure that no integers $n_1\in[1,n]$, $n_2\in[1,n]$ exist to satisfy $n_1k_1^{\mathsmaller{1/2}}g_{0,11}=n_2k_2^{\mathsmaller{1/2}}g_{0,22}$, so the ground state of the $n$-fermion system cannot be degenerate. By definition, the ground state of $H(1)$ is also non-degenerate. If suitable potentials $U_0$ and $U_1$ can be chosen such that the ground state of $H(t)$ is never degenerate, $\forall t\in[0,1]$, then all of the premises in \iftoggle{ForUSPTO} {Auxiliary Utility} {Lemma} \ref{NodalCellTubeOneToOne} are fulfilled to assert that the ground state of $H(1)=H_1$ has exactly the same number of positive nodal cells as that of $H(0)=H_0$, which as a system of non-interacting identical spinless fermions moving in a parabolic potential is known to have Ceperley's conjecture hold true \cite{Mitas06prl}.

It is without loss of generality to assume $\calM$ being compact and $V_1$ being in $C^2(\calM)\cap L^{\infty}(\calM)$, because, otherwise, compact sets $\calB_{\epsilon}(1)\defeq\bigcup_{\pi\in G_{\rm ex}}\!\pi\bfB_{\epsilon}(1)$, $\calM_{\epsilon}(1)=(\calM\setminus\calM_{\raisebox{0.1\height}{\tiny \rm sin}}^{\raisebox{-0.1\height}{\tiny $\delta(\epsilon)$}}(1))\cap\calB_{\epsilon}(1)$ as defined in the \iftoggle{ForUSPTO} {demonstration} {proof} of \iftoggle{ForUSPTO} {Auxiliary Utility} {Lemma} \ref{NodalCellTubeOneToOne} can be identified with $\epsilon>0$, $\delta(\epsilon)>0$, and a bounded $C^2(\calM)$-smooth potential $V'_1$ can be defined, which coincides with $V_1$ in $\calM_{\epsilon}(1)$ and arises quickly to a sufficiently large positive constant outside $\calB_{\epsilon}(1)$, so to provide a potential barrier trapping a sufficient number of bound states. Let $H'_1\defeq-\Delta_g+V'_1$ and $U_1\defeq V'_1-V_1$. As multiplicative operators, $V'_1$ is bounded and $V_1$ is $(-\Delta_g)$-bounded with relative bound zero \cite{Kato80,Hislop96,Lorinczi11}. Therefore, $U_1$ is relatively bounded and infinitesimally small with respect to both $H_0$ and $H_1$, as well as $H'_1$. Since both $H_0$ and $H_1$ have a non-degenerate ground state as a discrete eigenvector, a sufficiently large radius of $\calB_{\epsilon}(1)$ can be chosen such that the original potentials $V_0$ and $V_1$ have exponentially attenuated the ground state wavefunctions \cite{Agmon82,Hislop96} effectively to below a sufficiently small $\epsilon>0$ that restriction of the configuration space to the compact subspace $\calB_{\epsilon}(1)$ does not change the physics as far as the low-lying energy states are concerned, at the same time, the perturbation due to $U_1|_{\mathsmaller{\calM\setminus\calM_{\epsilon}(1)}}$ is so small that the ground state of the analytic path of Hamiltonians $\{H(t):t\in[1/2,1]\}$ stays isolated and non-degenerate throughout. Alternatively and specifically, while always enforcing wavefunction continuity in $\calB_{\epsilon}(1)\setminus\partial\calB_{\epsilon}(1)$, a Dirichlet boundary condition on $\partial\calB_{\epsilon}(1)$ can be suddenly turned on at $t=0$ and abruptly removed at $t=1$ without significantly affecting the low-lying energy states and their nodal structures, as guaranteed by the Courant-Fischer-Weyl min-max principle \cite{Courant89,Reed78}, since the the low-lying eigenstates have their wavefunctions exponentially decayed and decaying outside $\calB_{\epsilon}(1)$.

Therefore, it is only necessary to consider an analytic path of Hamiltonians $H'(t)\defeq(1-2t)H_0+2tH'_1\defeq{-}\Delta_g+V'(t)$, $t\in[0,1/2]$ on a presumably compact configuration space $\calM$, with $V'(t)\defeq(1-2t)V_0+2tV'_1\in C^2(\calM)\cap L^{\infty}(\calM)$ being an $\calM$-diagonal potential, $\forall t\in[0,1/2]$. The eigenvalues and eigenvectors of $H'(t)$ can be parametrized into $t$-analytic curves, by a well-known theorem of Rellich \cite{Rellich42,Kato80,Kriegl03,Kriegl11}. Each $t$-analytic curve of eigenvalues or eigenvectors will be referred to as a $\lambda$-curve or $\psi$-curve respectively. A $t$-continuous curve of ground state energy $\{E_0(H'(t)):t\in[0,1/2]\}$ is obtained by taking the minimum among all of the $\lambda$-curves. Since $V(t)$, $t\in[0,1/2]$ is bounded, the growth of $|\lambda(t)-\lambda(0)|$ is bounded by a constant \cite{Kato80} for each $\lambda$-curve $\{\lambda(t):t\in[0,1/2]\}$, and there can be only a finite number of $\lambda$-curves that ever come close to $\{E_0(H'(t)):t\in[0,1/2]\}$. Consequently, there can be no more than a finite number of occurrences of ground state degeneracy along the $\{E_0(H'(t)):t\in[0,1/2]\}$ curve, because, otherwise, there would be two $\lambda$-curves whose difference, as another $t$-analytic function, has an infinite number of zeros within the bounded domain $[0,1/2]$, then, by the Bolzano-Weierstrass theorem, the sequence of zeros has a limit point, contradicting analyticity of the $\lambda$-curves.

Now it only remains to demonstrate that each of the finite number of isolated points of degeneracy along $E_0(H'(t))$, $t\in[0,1/2]$ can be removed by an above-mentioned $U_0(t,\cdot)$, $t\in[0,1/2]$ potential on the compact Riemannian manifold $\calM$. Around each point of degeneracy at $t=t_1\in(0,1/2)$, within each interval $(t_1-\delta_1,t_1+\delta_1)\subset(0,1/2)$ for a sufficiently small $\delta_1>0$, there can be only a finite number of $\lambda$-curves $\{\lambda_i(t):i\in[0,m]\}$, $m\in\mathbb{N}$ intersecting at $t=t_1$ and nowhere else, while the eigenvalues and associated eigenvectors can be represented as $\lambda_i(t)=\lambda_i(t_1)+O(|t-t_1|)$, $\psi_i(t)=\psi_i(t_1)+O(|t-t_1|)$, $\forall i\in[0,m]$. Since the nodal surface of each eigenvector $\psi_i(t_1)$, $i\in[0,m]$ is compact and smooth, on which the continuous function $\|\partial_q\psi_i(t_1,q)\|$ is strictly positive by the Hopf lemma \cite{Hopf52,Oleinik52,Gilbarg01,Evans10}, the union of all these nodal surfaces has a $V_g$-measure zero, and there must be an open ball $B(q_1,\epsilon_1)\subset\calM$ of radius $\epsilon_1>0$ about a certain $q_1\in\calM$, within which each eigenvector $\psi_i(t_1)$, $i\in[0,m]$ does not vanish, and can be made positive by adjusting its global numerical sign. It can be also assumed that $B(q_1,\epsilon_1)$ be covered by a chart with a local coordinate $(x^1,\cdots\!,x^{nd})$ such that the Laplace-Beltrami operator reads $\Delta_g=\sum_{i=1}^{nd}g^{ii}(\partial/\partial x^i)^2+\sum_{i=1}^{nd}h^i(\partial/\partial x^i)$. Next, sufficiently small $\delta_2\in(0,\delta_1)$ and $\epsilon_2\in(0,\epsilon_1)$ can be found such that $\delta_2+\epsilon_2\ll\min\{\psi_i(t_1;q_1):i\in[0,m]\}$, and $\psi_i(t;q)=\psi_i(t_1;q_1)+O(|t-t_1|+\|q-q_1\|)$ in $\calI_2\times B(q_1,\epsilon_2)$, $\calI_2\defeq(t_1-\delta_2,t_1+\delta_2)$, $\forall i\in[0,m]$. Install a perturbative potential $U_0(t;q=q_1+(x^i))=C\exp[{-}D(t-t_1)^2]\exp({-}Ex_1^2)\sin(Fx_1)$, with positive real-valued constants $C,D,E,F$, both $D$ and $E$ being so large that $U_0(t;q_1+(x^i))$ is essentially localized in $\calI_2\times B(q_1,\epsilon_2)$, while $|F|/E$ being so large that $U_0(t;q)$ induces virtually no direct, first-order mixing among the $\psi$-curves $\{\psi_i(t):i\in[0,m], t\in\calI_2\}$, but couples each of them, in essentially the same manner when $\delta_2$ and $\epsilon_2$ are sufficiently small, to a collection of highly excited Dirichlet eigenstates of $H'(t)$, denoted as $\{\psi_j(t):j\in\calJ\subseteq\mathbb{N}\}\subseteq\Diri(\calM)$, where the coupling matrix element $\langle\psi_j(t)|U_0(t)|\psi_i(t)\rangle\sim\langle\psi_j(t_1)|U_0(t_1)\rangle\psi_i(t_1;q_1)$ is mostly independent of $i\in[0,m]$ and $t\in\calI_2$, while the energy difference $\lambda_j(t)-\lambda_i(t)\sim\lambda_j(t_1)\sim g^{11}F^2$ is largely a constant independent of $i\in[0,m]$, $j\in\calJ$, and $t\in\calI_2$. The $\calJ$-indexed high-energy states mediate interactions among the low-energy $\{\psi_i:i\in[0,m]\}$ states through higher-order perturbations.

A standard series expansion up to the second-order perturbations \cite{Schiff68,Landau77,Sakurai20} indicates that the perturbative coupling effect of all of the $\calJ$-indexed high-energy states combined is equivalent to that of a single effective state $\Psi(q=q_1+(x^i))\sim{-}\sin(Fx_1)$, $q\in\calM$, which may not be a strict eigenvector of $H'(t)$, but behaves much like an energy eigenstate in that $H'(t)\Psi\sim g^{11}F^2\Psi$, $\forall t\in\calI_2$. Note that the effective state $\Psi$ is taken to be independent of $t$, ignoring any small perturbative change, by virtue of the largeness of $g^{11}F^2$ and smallness of $\delta_2$. Restricted to the $(m+2)$-dimensional subspace spanned by $\{\psi_i(t):i\in[0,m]\}$ and $\Psi$, $\forall t\in\calI_2$, the total Hamiltonian $H(t)=H'(t)+U_0(t)$ is effectively $H(t)\sim H''(t)$ as
\begin{equation}
H''(t)=\sum_{i=0}^m\lambda_i(t)|\psi_i(t)\rangle\langle\psi_i(t)|+g^{11}F^2|\Psi\rangle\langle\Psi|-\sum_{i=0}^mA_i(t)\left\{|\Psi\rangle\langle\psi_i(t)|+|\psi_i(t)\rangle\langle\Psi|\right\}, \label{psiPsiCoupling}
\end{equation}
with $A_i(t)\sim\half\psi_i(t_1;q_1)\,Ce^{{-}D(t-t_1)^2}\int_{B(q_1,\epsilon_2)}e^{{-}Ex_1^2(q)}dV_g(q)>0$, $\forall t\in\calI_2$. The Hamiltonian $H''(t)$ is stoquastic, irreducible, and aperiodic, $\forall t\in\calI_2$, whose ground state is guaranteed non-degenerate by the Perron-Frobenius theorem \cite{Seneta81,Horn85,Meyer00}. Alternatively, a Schrieffer-Wolff transformation \cite{Schrieffer66,Bravyi11} can be employed to project out the $\calJ$-indexed high-energy states and obtain an effective Hamiltonian for the low-energy subspace spanned by the $\{\psi_i:i\in[0,m]\}$ states, which should again manifest the non-degeneracy of the ground state of the effective Hamiltonian. In any case, it is \iftoggle{ForUSPTO} {demonstrated} {proved} that the ground state of $H(t)$ has avoided level crossing by the gradual onset of $U_0(t)$ in the concerned interval $\calI_2$. Importantly, toward the end of $\calI_2$, both $\lambda_0(H(t))$ and $\psi_0(H(t))$ return back to $\lambda_0(H'(t))$ and $\psi_0(H'(t))$ respectively, so that the same procedure can be repeated for the other isolated points $\{t_i\in(0,1/2):i\in[2,m']\}$, $m'\in\mathbb{N}$ of degeneracy along $\{E_0(H'(t)):t\in[0,1/2]\}$, to have all of them removed eventually, rest assured that the overlap among the Gaussian-shaped impulses of the form $C\exp[{-}D(t-t_i)^2]$, $i\in[1,m']$ in the $U_0(\cdot,\cdot)$ potential can be made negligibly small for all of the relevant considerations by making $D>0$ sufficiently large.
\end{proof}

\begin{definition}{(Imaginary-Time $C^{\omega}\!$ Path of Hamiltonians)}\label{defiImagTimeAnalyticPath}\\
Let $(\calM,g)$ be a connected, locally compact Riemannian manifold as a many-body configuration space. Use the Sobolev space $\calH^1\defeq L^2_{\!\mathsmaller{F}}(\calM)\cap W^{1,2}(\calM)$ as a Hilbert space of fermionic wavefunctions. An imaginary-time $C^{\omega}\!$ path of Hamiltonians is a fermionic Schr\"odinger operator-valued curve $\{H(\tau)\in\calL_0(\calH^1):\tau\in\calI\defeq([\tau_0,\infty),\,\tau_0>0\}$ in the vector space $\calL_0(\calH^1)$, which generates an imaginary-time-inhomogeneous (ITI) Gibbs kernel function $K(\tau;r,q)$ that is $(\tau,r,q)$-jointly analytic over the domain $\calI\times\calM^2$, denoted as $K(\tau;r,q)\in C^{\omega}(\calI\times\calM^2)$, such that
\begin{align}
K(\tau;r,q) &\,=\, K(\tau;q,r)\in\mathbb{R},\;\forall(r,q)\in\calM^2,\;\forall\tau\in\calI, \\[0.75ex]
{-}\frac{\partial K(\tau;r,q)}{\partial\tau} &\,=\, H(\tau)\,K(\tau;r,q),\;\forall(r,q)\in\calM^2,\;\forall\tau\in\calI, \label{SchrEqnForK} \\[0.75ex]
K(\tau_0;r,q) &\,=\, \langle r|\exp[{-}\tau_0H(\tau_0)]|q\rangle,\;\forall(r,q)\in\calM^2,
\end{align}
where the Hamiltonian $H(\tau)$, $\tau\in\calI$ is applied with respect to the $r\in\calM$ spatial variable in equation (\ref{SchrEqnForK}), and $\forall\tau\in\calI$, the function $K(\tau;\cdot,\cdot)\in L^2(\calM^2)$ is positivity improving, the operator $\Base(H(\tau))$ substantiates the Hopf lemma and the strong extremum principle on any bounded and connected submanifold of $\calM$. An imaginary-time piecewise $C^{\omega}\!$ path of Hamiltonians is a concatenation of a finite number of imaginary-time $C^{\omega}\!$ paths of Hamiltonians.
\vspace{-1.5ex}
\end{definition}

Much of the concepts of nodal cells and the tiling and maximal properties of nodal cells apply to an imaginary-time $C^{\omega}\!$ path of Hamiltonians, particularly one that is associated with a constant Hamiltonian $H(\tau)=H_0\in\calL_0(\calM)$, $\forall\tau\in\calI$ \cite{Ceperley91}, when the ITI Gibbs kernel function is simply the conventional $K(\tau;r,q)=\langle r|\exp({-}\tau H_0)|q\rangle$, $\forall\tau\in\calI$.

\begin{definition}{(Ceperley Reaches and Nodal Cells for Gibbs Kernel Functions)}\label{defiReachAndNodal}\\
With a reference point $(\tau_0,q_0)\in\{\tau_0\}\times\calM$ fixed, the Ceperley reach of $(\tau_0,q_0)$, denoted by $\calR_{\tau_0q_0}$, also called the imaginary-time nodal tube from $(\tau_0,q_0)$, is the set of points $(\tau_1,q_1)\in\calI\times\calM$, $\calI\defeq[\tau_0,\infty)$, $\tau_0>0$ having a reaching path ({\it i.e.}, a continuous curve) $\gamma:[0,1]\mapsto\calI\times\calM$ such that $\gamma(0)=(\tau_0;q_0)$, $\gamma(1)=(\tau_1;q_1)$, and $K(\gamma(s),q_0) \defeq K(\tau_s;q_s,q_0)>0$, for all $s\in[0,1]$ \cite{Ceperley91}. A slice of the Ceperley reach $\calR_{\tau_0q_0}\subseteq\calI\times\calM$ at $\tau\in\calI$ is denoted and defined as $\calR_{\tau_0q_0}(\tau)\defeq\calR_{\tau_0q_0}\cap(\{\tau\}\times\calM)$, while a segment of the Ceperley reach $\calR_{\tau_0q_0}$ between $\tau_1\in\calI$ and $\tau_2\in[\tau_1,\infty)$ is denoted and defined as $\calR_{\tau_0q_0}([\tau_1,\tau_2])\defeq\calR_{\tau_0q_0}\cap([\tau_1,\tau_2]\times\calM)$. For each $\tau\in\calI$, the preimage $K^{{-}1}(\tau;\mathbb{R}\setminus\{0\})\defeq\{r\in\calM:K(\tau;r,q_0)\neq 0\}$ is necessarily an open subset of $\calM$, one connected component of which containing a given point $q_1\in\calM$ is called the nodal cell of $K(\tau;\cdot,q_0)$ around $q_1$, denoted as $\calN_{\tau_0q_0}(K;\tau;q_1)$. A nodal cell $\calN_{\tau_0q_0}(K;\tau;q_1)$ is called positive when $K(\tau;q_1,q_0)>0$.
\vspace{-1.5ex}
\end{definition}

Clearly, each slice $\calR_{\tau_0q_0}(\tau)$ is a union of some positive nodal cells, for each $\tau\in\calI$. For a conventional fermionic Schr\"odinger Hamiltonian $H={-}\Delta_g+V$, Ceperley has proved that the Ceperley reach $\calR_{\tau_0q_0}$ is maximal \cite{Ceperley91}, namely, it bisects the manifold $(0,\infty)\times\calM$, $\forall q_0\in\calM$, using the analytical solvability of $K(\tau;\cdot,q_0)$ in the classical limit of $\tau\rightarrow 0{\mathsmaller{+}}$, when the effect of potential energy becomes immaterial, in conjunction with the differential evolution of $K(\tau;\cdot,q_0)$ in $\tau\in(0,\infty)$. His proof can be generalized straightforwardly to an imaginary-time $C^{\omega}\!$ path of Hamiltonians $H(\tau)={-}\Delta_g+V(\tau)$, $\tau\in(0,\infty)$, in association with a $\tau$-dependent potential $V(\tau)$ that is lower bounded, as such, any slice $\calR_{\tau_0q_0}(\tau)$ must be the union of all positive nodal cells at $\tau$, $\forall\tau\in(0,\infty)$. However, when it comes to the maximality of a nodal cell $\calN_{\tau_0q_0}(K;\tau;q_1)$, $K(\tau;q_1,q_0)\neq 0$, $\tau\in[\tau_0,\infty)$, $\tau_0>0$, a rigorous proof is still missing. The following \myCorollary\,\ref{CeperleysConjectureIsTrueDM} shall fill the gap.

Before that, it is well worth mentioning that because of the analyticity of $K(\cdot;\cdot,q_0)\in C^{\omega}(\calI\times\calM)$, the closure of the Ceperley reach $\calR_{\tau_0q_0}$ and its boundary $\partial\calR_{\tau_0q_0}$, the closure of any slice $\calR_{\tau_0q_0}(\tau)$, $\tau\in\calI$ and its boundary $\partial\calR_{\tau_0q_0}(\tau)$, as well as the closure of any nodal cell $\calN_{\tau_0q_0}(K;\tau;q_1)$, $\tau\in\calI$, $q_1\in\calR_{\tau_0q_0}$ and its boundary $\partial\calN_{\tau_0q_0}(K;\tau;q_1)$, are all examples of a semianalytic set $\calF$, which is amenable to a Whitney stratification \cite{Thom64,Whitney65,Lojasiewicz65,Gabrielov68,Mather70,Hironaka73,Wall75,
Bierstone88,Parusinski94,Shiota97,Dries98,Pflaum01}, meaning that a filtration of closed subsets $\calF=\calF_m\supseteq\calF_{m{-}1}\supseteq\cdots\supseteq\calF_0\supseteq\calF_{{-}1}=\emptyset$ exists, such that $m=\dim(\calF)$, $\calF_i\setminus\calF_{i{-}1}=\bigcup_{\mathsmaller{j\in I(i)}}\calS_{ij}$ is either empty or an $i$-dimensional analytic submanifold for each $i\in[0,m]$, where $I(i)$ is an index set labeling the disjoint and connected components $\{S_{ij}:j\in I(i)\}$ of $\calF_i\setminus\calF_{i{-}1}$, each of which is called a strata of dimension $i$, the decomposition of $\calF$ into strata $\calF=\bigcup_{\mathsmaller{i\in[0,m]}}\bigcup_{\mathsmaller{j\in I(i)}}\calS_{ij}$ is locally finite and satisfies the so-called {\em frontier condition}, namely, any point in $\calF$ has a neighborhood intersecting no more than a finite number of strata, and $\cl(\calS_{ij})\supseteq\calS_{i'j'}$, $\dim(\calS_{ij})\ge\dim(\calS_{i'j'})$ must hold for any pair of strata $\calS_{ij}$, $\calS_{i'j'}$ such that $\cl(\calS_{ij})\cap\calS_{i'j'}\neq\emptyset$, moreover, for any triple $(\calS_{ij},\calS_{i'j}',x_*\in\calF)$ such that $x_*\in\calS_{i'j'}\subset\cl(\calS_{ij})$, $\dim(\calS_{ij})=k\in\mathbb{N}$, with a sequence of points $\{x_{\alpha}:\alpha\in\mathbb{N}\}\subseteq\calS_{ij}$ converging to $x_*$, and the associated sequence of tangent spaces $\{T_{x_{\alpha}}\calS_{ij}\in\Gr(k,T_{x_{\alpha}}\calF)\}$ converging to $\calT\subseteq\mathbb{R}^m$, where $\Gr(k,T_{x_{\alpha}}\calF)\}$ represents the Grassmannian bundle of $k$-dimensional subspaces in $T_{x_{\alpha}}\calF$, and the convergence is in the standard topology of said Grassmannian bundle, one of the two {\em Whitney conditions} must be satisfied: A) $T_{x_*}\calS_{i'j'}\subseteq\calT$; B) If another sequence of points $\{x'_{\alpha}:\alpha\in\mathbb{N}\}$ in $\calS_{i'j'}$ also converges to $x_*$, with $x'_{\alpha}\neq x_{\alpha}$, $\forall\alpha\in\mathbb{N}$, and the secants $\widehat{x_{\alpha}x'_{\alpha}}$ converging in the projective space $\mathbb{R}\mathbb{P}^{m{-}1}$ to a line $\calL\subseteq\mathbb{R}^m$, then $\calL\subseteq\calT$. It is known that the condition B implies A \cite{Mather70}.

In particular, the topological boundaries $\partial\calR_{\tau_0q_0}(\tau)$ and $\partial\calN_{\tau_0q_0}(K;\tau;q_1)$, $\tau\in\calI$, $q_1\in\calR_{\tau_0q_0}$ are all examples of an analytic variety $\calV\subseteq\calM$, which, according to {\em Lojasiewicz's structure theorem for varieties} \cite{Lojasiewicz65,Krantz02}, has a Whitney stratification $\calV=\bigcup_{\mathsmaller{i\in[0,m]}}\bigcup_{\mathsmaller{j\in I(i)}}\calV_{ij}$, $m=\dim(\calV)$ in a neighborhood $Q$ around any point $x_*\in\calV\subseteq\calM$, where, in a suitable chart, $x_*$ can be identified with the origin $(0,\cdots\!,0)\in\mathbb{R}^m$, the neighborhood can be appointed as $Q=\prod_{i=1}^m({-}\delta_i,\delta_i)$, $\delta_i>0$, $\forall i\in[1,m]$, and a finite set of polynomials $\{H_i^k(x_1,\cdots\!,x_i;x_k):i\in[1,m],\,k\in[i{+}1,m]\}$ can be chosen, with $(x_1,\cdots\!,x_i;x_k)\in\mathbb{R}^{i{+}1}$, the discriminant $D_i^k(x_1,\cdots\!,x_i)$ for each $H_i^k(x_1,\cdots\!,x_i;x_k)$ viewed as a univariate polynomial of $x_k$ being non-vanishing in $Q_i\defeq\prod_{j=1}^i({-}\delta_j,\delta_j)$, each root $x_k=\xi$ of the univariate polynomial $H_i^k(x_1,\cdots\!,x_i;\cdot)$ satisfying $|\xi|<\delta_k$, $\forall(x_1,\cdots\!,x_i)\in Q_i$, such that, each stratum $\calV_{ij}$ is an algebraic variety that in an open subset $Q_{ij}\subseteq Q_i\subseteq\mathbb{R}^i$ can be analytically parametrized by a system of $(m{-}i)$ equations as
\begin{equation}
\left. \begin{array}{rcl}
x_{i{+}1} &=& \chi_{ij}^{i{+}1}(x_1,\cdots\!,x_i), \\
x_{i{+}2} &=& \chi_{ij}^{i{+}2}(x_1,\cdots\!,x_i), \\
&\cdots& \\
x_m &=& \chi_{ij}^m(x_1,\cdots\!,x_i), \\
H_i^k(x_1,\cdots\!,x_i;\chi_{ij}^k) &=& 0,\;\forall k\in[i{+}1,m],
\end{array} \right\}
\end{equation}
where each $\chi_{ij}^k$ is an analytic function in $Q_{ij}$, the subsets $\{Q_{ij}\}$ and functions $\{\chi_{ij}^k\}$ are non-redundant, in the sense that, for any triple of indices $(i,j,j')$, either $Q_{ij}\cap Q_{ij'}=\emptyset$ or $Q_{ij}=Q_{ij'}$, and in the latter case, for any $k\in[i{+}1,m]$, either $\chi_{ij}^k\equiv\chi_{ij'}^k$ on $Q_{ij}$ or $\chi_{ij}^k(x_1,\cdots\!,x_i)\neq\chi_{ij'}^k(x_1,\cdots\!,x_i)$ for all $(x_1,\cdots\!,x_i)\in Q_{ij}$.

In any case, each of the submanifolds $\calR_{\tau_0q_0}$, $\calR_{\tau_0q_0}(\tau)$, $\calN_{\tau_0q_0}(K;\tau,q_1)$ or the boundaries $\partial\calR_{\tau_0q_0}$, $\partial\calR_{\tau_0q_0}(\tau)$, $\partial\calN_{\tau_0q_0}(K;\tau,q_1)$, $\tau\in\calI$, $q_1\in\calR_{\tau_0q_0}$ is guaranteed to behave sufficiently regular and much like a smooth manifold with smooth boundaries in a sufficiently small neighborhood of any $q\in\calM$, so that the classical Morse theory \cite{Morse69,Milnor69,Goresky88,Matsumoto02,Nicolaescu11,Borodzik16}, in particular, the stratified Morse theory \cite{Goresky88} and the Morse theory for manifolds with boundaries \cite{Borodzik16}, can be employed to track topological changes of a (sub)manifold by analyzing a differentiable scalar function on the (sub)manifold. Our proof of \myCorollary\,\ref{CeperleysConjectureIsTrueDM} below may not necessarily invoke any deep results of Morse theory directly and explicitly, but the basic idea and methodology being used have much in common.

\begin{lemma}{(Nodal Maximality for Fermionic Density Matrices)}\\
With respect to a fixed reference point $(\tau_0,q_0)\in\calI\times\calM$, $\calI\defeq[\tau_0,\infty)$, $\tau_0>0$, Ceperley's conjecture on the maximality of any nodal cell $\calN_{\tau_0q_0}(K;\tau_1;q_1)$, $K(\tau_1;q_1,q_0)\neq 0$, $q_1\in\calM$ holds true for the ITI Gibbs kernel function $K(\cdot;\cdot,q_0)$ associated with an imaginary-time $C^{\omega}\!$ path of Hamiltonians $\{H(\tau):\tau\in\calI\}$ as defined in Definition \ref{defiImagTimeAnalyticPath}, once it starts at $\tau=\tau_0$ with the nodal cell $\calN_{\tau_0q_0}(K;\tau_0;q_0)$ being maximal.
\label{CeperleysConjectureIsTrueDM}
\vspace{-2.5ex}
\end{lemma}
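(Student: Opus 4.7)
The plan is to extend the parametric-continuation strategy from the proof of Corollary \ref{CeperleysConjectureIsTrue} along the imaginary-time variable $\tau\in\calI$, leveraging the joint analyticity $K(\tau;r,q_0)\in C^{\omega}(\calI\times\calM)$ posited in Definition \ref{defiImagTimeAnalyticPath}, the positivity-improving property of $K(\tau;\cdot,\cdot)$, and the Hopf lemma together with the strong extremum principle required of $\Base(H(\tau))$. Define
\[
\calT^{*}\defeq\{\tau\in\calI:\calN_{\tau_0q_0}(K;\tau;q_0)\text{ is the unique positive nodal cell of }K(\tau;\cdot,q_0)\},
\]
and establish $\calT^{*}=\calI$ by showing $\calT^{*}$ is non-empty, open, and closed in the connected set $\calI$.

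Non-emptiness holds by hypothesis, $\tau_0\in\calT^{*}$. Openness follows from the Hopf lemma plus the implicit function theorem applied to the analytic level-set equation $K(\tau;q,q_0)=0$: at any $\tau_1\in\calT^{*}$, the boundary $\partial\calN_{\tau_0q_0}(K;\tau_1;q_0)$ is a smooth codimension-one submanifold on which the gradient $\partial_qK(\tau_1;q,q_0)$ does not vanish, so the entire nodal configuration deforms analytically for $\tau$ in a neighborhood of $\tau_1$, preserving the single-positive-cell structure of the slice $\calR_{\tau_0q_0}(\tau)$. By the already-established tiling/maximality of the reach (the analogue of Ceperley's bisection result for the ITI case), maintaining a single positive connected slice is equivalent to maintaining maximality.

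The main obstacle is closedness of $\calT^{*}$. Let $\{\tau_n\}\subseteq\calT^{*}$ with $\tau_n\to\tau_{*}\in\calI$, and suppose for contradiction that $\tau_{*}\notin\calT^{*}$, so the slice $\calR_{\tau_0q_0}(\tau_{*})$ splits into two or more positive nodal cells. Since the reach is path-connected in $\calI\times\calM$, the split requires a topological bifurcation of the semianalytic nodal variety $\calV\defeq\{(\tau,q)\in\calI\times\calM:K(\tau;q,q_0)=0\}$ at $\tau=\tau_{*}$. By \L{}ojasiewicz's structure theorem and the Whitney stratification of $\calV$, such a bifurcation is classified locally into two types: (i) a \emph{pinch}, where two smooth sheets of the nodal surface meet at an interior $q_{*}\in\calM$, producing a saddle of $K$ with $K(\tau_{*};q_{*},q_0)=0$ and $\nabla_{(\tau,q)}K(\tau_{*};q_{*},q_0)=0$; or (ii) a \emph{nucleation}, where a new nodal component emerges inside the preceding positive cell, forcing $K(\tau_{*};\cdot,q_0)$ to attain an interior zero at an isolated minimum. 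Type (ii) is ruled out by the strong extremum principle applied to the non-negative $K(\tau_{*};\cdot,q_0)$ on the preceding positive cell via the elliptic evolution equation $H(\tau_{*})K=-\partial_{\tau}K$: any interior zero would force $K\equiv 0$ on that cell, violating positivity carried over analytically from the $\tau_n$'s. Type (i) is the hardest case, handled by combining the Hopf lemma applied at each $\tau_n\in\calT^{*}$ (giving non-vanishing of $\partial_qK$ on the smooth $\tau_n$-nodal surface) with the analytic structure of $K$ near $(\tau_{*},q_{*})$: an analytic saddle emerging precisely at $\tau=\tau_{*}$ would, by $C^{\omega}$ regularity together with the Schr\"odinger evolution equation, propagate as a near-singular nodal structure to all sufficiently close $\tau_n$ through a codimension-one bifurcation stratum of $\calV$, contradicting the smooth single-positive-cell nodal structure guaranteed at those $\tau_n\in\calT^{*}$. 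Hence $\tau_{*}\in\calT^{*}$, completing the closedness step and the proof by connectedness of $\calI$.
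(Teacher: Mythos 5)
Your overall scaffolding (show the set of good $\tau$'s is nonempty, open and closed in the connected interval $\calI$) is a legitimate alternative to the paper's sup-based reductio, but the analytical heart of the argument is missing and two of your three steps have genuine gaps.

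The most serious problem is the "pinch" case in the closedness step. You claim that a saddle emerging at $\tau=\tau_*$ "would \ldots propagate as a near-singular nodal structure to all sufficiently close $\tau_n$," contradicting smoothness at $\tau_n\in\calT^*$. This is not true: a pinch is precisely the scenario where two \emph{smooth, well-separated} sheets of the nodal surface approach each other as $\tau\to\tau_*$ and first touch at $\tau_*$; the nodal structure at every $\tau_n<\tau_*$ is perfectly regular, a smooth hypersurface with nonvanishing gradient and a single positive cell. Nothing about $C^{\omega}$ regularity forces the degeneracy to be visible before $\tau_*$. So the pinch mechanism --- which is arguably the dangerous one for splitting a positive cell into two --- is not actually excluded by your argument. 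The paper avoids this entirely by working at $\tau_1=\sup\calJ$, where it can choose the neighborhood $U$ so that $\calV\cap\{\tau<\tau_1\}=\emptyset$ and $V_g(\calR_{\tau_0q_0}(\tau_1)\cap\calV)=0$; that normalization is what pins the local picture to a nucleation-type degeneracy.

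Your "nucleation" branch is also incomplete: invoking the strong extremum principle on $K(\tau_*;\cdot,q_0)$ requires the differential inequality $[H(\tau_*)+E_0]\,K(\tau_*;\cdot,q_0)\ge0$, i.e.\ $\partial_\tau K(\tau_*;\cdot,q_0)\le E_0\,K(\tau_*;\cdot,q_0)$, to hold \emph{throughout a neighborhood} of the interior zero with a uniform Hopf bias $E_0$, not merely at the zero point. That uniform bound is exactly what the paper extracts via the Weierstrass--Malgrange preparation theorem (the local factorization $K=c(t,x)[\sum_it^ia_i(x)+t^k]$ and the analysis of $a_0,a_1$ give $E_0=E_1+E_2$). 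You state the conclusion of the extremum principle without deriving the hypothesis, which is where the real analytic work is. Finally, your openness argument leans on the Hopf lemma to assert a smooth codimension-one boundary with nonvanishing gradient, but the Hopf lemma \emph{presupposes} a sufficiently regular boundary (interior-ball condition) rather than establishing it; at $\tau\in\calT^*$ the zero set of the analytic $K(\tau;\cdot,q_0)$ could a priori have lower-dimensional singular strata with the positive cell still connected, and you would need a separate argument to rule that out before the implicit function theorem applies.
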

\begin{proof}[\iftoggle{ForUSPTO} {Demonstration} {Proof}]
Since the Ceperley reach $\calR_{\tau_0q_0}\subseteq\calI\times\calM$ is maximal, each slice $\calR_{\tau_0q_0}(\tau)$ is the union of all positive nodal cells of the ITI Gibbs kernel function $K(\tau;\cdot,q_0)\in L^2(\calM)\cap C^{\omega}(\calM)$, $\forall\tau\in\calI$. The subset $\calJ\defeq\{\tau_*\in\calI:K(\tau;\cdot,q_0)\;\mbox{\rm has only 1 positive nodal cell},\;\forall\tau\in[\tau_0,\tau_*)\}$ is nonempty because $\calN_{\tau_0q_0}(K;\tau_0;q_0)$ is maximal. Assume that $\calJ$ is upper-bounded to use reductio ad absurdum. Then $\tau_1\defeq\sup(\calJ)<\infty$ exists, and there must be a point $(\tau_1,q_1)\in\{\tau_1\}\times\calR_{\tau_0q_0}(\tau_1)$ at which a negative nodal cell starts to nucleate within a surrounding positive nodal cell, namely, on the one hand, $\exists\delta>0$ such that no open set $V$ exists to satisfy $q\in V\subseteq\calM$, $K(\tau_1{-}\delta;q,q_0)<0$ for all $q\in V$, on the other hand, $\forall\delta>0$, there always exists an open set $V\subseteq\calM$, $V\ni q_1$ such that $K(\tau_1{+}\delta;q,q_0)<0$ for all $q\in V$. Consider the analytic function $K(\cdot;\cdot,q_0)\in C^{\omega}(\cl(U))$ in a small open set $U\subseteq\calI\times\calM$, $U\ni(\tau_1,q_1)$, and an analytic variety $\calV=U\cap\partial\calR_{\tau_0q_0}$ as a collection of strata. Clearly, $U$ can be chosen sufficiently small, such that $\calV\cap\{\tau\,{<}\,\tau_1\}=\emptyset$, and $V_g(\calR_{\tau_0q_0}(\tau_1)\cap\calV)=0$, otherwise, the absurdity of $K(\tau_1;\cdot,q_0)\equiv 0$, thus $K(\cdot;\cdot,q_0)\equiv 0$, would follow.

Also obviously, $K(\tau_1;q_1,q_0)=0$, $\partial_{\tau}K(\tau=\tau_1;q_1,q_0)\le 0$. If $\partial_{\tau}K(\tau=\tau_1;q_1,q_0)<0$, then by the continuity of $K(\cdot;\cdot,q_0)$, $U$ can be chosen sufficiently small such that $\partial_{\tau}K(\tau;q,q_0)<0$ for all $(\tau,q)\in U$. Otherwise, if $\partial_{\tau}K(\tau=\tau_1;q_1,q_0)=0$, let $k\ge 1$ be the smallest integer such that $\partial_{\tau}^kK(\tau=\tau_1;q_1,q_0)\neq0$, then according to the Weierstrass-Malgrange preparation theorem \cite{Lojasiewicz65,Malgrange66,Golubitsky73,Krantz02}, the function $K(\cdot;\cdot,q_0)$ can be represented as
\begin{equation}
K(t;x,q_0)=c(t,x)\left[{\textstyle{\sum_{i=0}^{k{-}1}}}\,t^ia_i(x)+t^k\right],
\end{equation}
locally in $U$, with local coordinates $t=\tau-\tau_1\in\mathbb{R}$, $x=(x_1,\cdots\!,x_m)\in\mathbb{R}^m$, $m=\dim(\calM)$, where $c(\cdot,\cdot)$ and $\{a_i(\cdot)\}_{i=0}^{k-1}$ are all analytic functions such that $a_i(0)=0$, $\forall i\in[0,k{-}1]$, and $c(t,x)>0$, $\forall(t,x)\in(t,x)[\cl(U)]$, with $(t,x)[\cl(U)]\defeq\{(t(u),x(u)):u\in\cl(U)\}$. If $a_1(\cdot)\equiv 0$ in $x(U)\defeq\{x(u):u\in U\}$, then $\partial_tK(t=0;x,q_0)=[\partial_tc(t,x)/c(t,x)]K(0;x,q_0)$, $\forall x\in x(U\cap\calM)$. Let $E_1\defeq\max\{|\partial_tc(t,x)|/c(t,x):(t,x)\in(t,x)[\cl(U)]\}$, then $\partial_tK(t=0;x,q_0)-E_1K(0;x,q_0)\le 0$, $\forall x\in x(U\cap\{\tau\,{=}\,\tau_1\})$; Otherwise, it follows from $\calV\cap\{\tau\,{<}\,\tau_1\}=\emptyset$ and $V_g(\calR_{\tau_0q_0}(\tau_1)\cap\calV)=0$ that $a_0(\cdot)$ and $a_1(\cdot)$ must be respectively of the forms
\begin{equation}
a_0(x)=b_0(x)\,{\textstyle{\sum_{|{\sf k}|\,=\,\kappa}}}\,a_{0,{\sf k}}\,x^{2{\sf k}},\;\;a_1(x)=b_1(x)\,{\textstyle{\sum_{|{\sf k}|\,=\,\kappa}}}\,a_{1,{\sf k}}\,x^{2{\sf k}},
\end{equation}
where $\kappa\ge 1$, ${\sf k}\defeq(k_1,\cdots\!,k_m)\in(\mathbb{N}\cup\{0\})^m$ is a multi-index, $|{\sf k}|\defeq\sum_{i=1}^mk_i$, $x^{2{\sf k}}\defeq\prod_{i=1}^mx_i^{2k_i}$ \cite{Krantz02}, $a_{0,{\sf k}}\ge 0$ for all ${\sf k}$ and $\sum_{|{\sf k}|\,=\,\kappa}a_{0,{\sf k}}>0$, both $b_0(\cdot)$ and $b_1(\cdot)$ are analytic in $x(U)$, $b_0(x)>0$ for all $x\in x(U)$, $a_{1,{\sf k}}$ must also vanish whenever $a_{0,{\sf k}}=0$. Therefore, $|a_1(x)|/a_0(x)$ is bounded on $x[\cl(U)]$. Let $E_2\defeq\max\{|a_1(x)|/a_0(x):x\in x[\cl(U)]\}$, and $E_0\defeq E_1+E_2$, then it holds true that $\partial_{\tau}K(\tau=\tau_1;q,q_0)-E_0K(\tau_1;q,q_0)\le 0$, $\forall q\in U\cap\{\tau\,{=}\,\tau_1\}$.

Now, consider the ITI Gibbs kernel function $K'(\tau;q,q_0)\defeq K(\tau;q,q_0)\,e^{{-}E_0\tau}$, which corresponds to a constant energy-shifted $C^{\omega}\!$ path of Hamiltonians $\{H(\tau)+E_0:\tau\in\calI\}$, and shares virtually the same analytical properties as well as exactly the same Ceperley reach and nodal structures with $K(\tau;q,q_0)$, $(\tau,q)\in\calI\times\calM$. In particular, the set
\begin{equation}
\calJ'\defeq\{\tau_*\in\calI:K'(\tau;\cdot,q_0)\;\mbox{\rm has only 1 positive nodal cell},\;\forall\tau\in[\tau_0,\tau_*)\}\equiv\calJ ,
\end{equation}
and $(\tau_1,q_1)$ is exactly the same point at which a negative nodal cell starts to nucleate within a surrounding positive nodal cell of $K'(\cdot;\cdot,q_0)$. It follows from the analyses above that
\begin{equation}
[H(\tau_1)+E_0]\,K'(\tau_1;q,q_0) \,=\, -\partial_{\tau}K'(\tau=\tau_1;q,q_0) \,\ge\, 0, \; \forall q\in U\cap\{\tau\,{=}\,\tau_1\}.
\end{equation}
On the other hand, $K'(\tau_1;\cdot,q_0)$ attains a non-positive minimum over $\cl(\calD)$ with $K'(\tau_1;q_1,q_0)=0$ at $q_1\in U$, the strong version of the Hopf extremum principle dictates that $K'(\tau_1;\cdot,q_0)\equiv 0$ over $\cl(U)$, which contradicts the premise of $V_g(\calR_{\tau_0q_0}(\tau_1)\cap\calV)=0$. It must be concluded that the set $\calJ\equiv\calJ'$ is unbounded, and Ceperley's conjecture is true.
\end{proof}

\begin{corollary}{(Nodal Maximality for Conventional Fermionic Density Matrices)}\\
Let $K(\cdot;\cdot,q_0)\in C^{\omega}(\calI\times\calM)$, $\calI\defeq(0,\infty)$ be the ITI Gibbs kernel function  associated with an imaginary-time $C^{\omega}\!$ path of conventional fermionic Schr\"odinger operators $\{H(\tau)={-}\Delta_g+V(t,q\in\calM):\tau\in\calI\}$, where $\Delta_g$ is the Laplace-Beltrami operator on the configuration space $\calM$, $V(\cdot,\cdot)\in C^{\omega}((\{0\}\cup\calI)\times\calM)$, and $V(\tau,\cdot)\in L^{\infty}(\calM)$ is an $\calM$-diagonal potential, $\forall\tau\in\{0\}\cup\calI$. Then, Ceperley's conjecture holds true for the maximality of any nodal cell $\calN_{q_0}(K;\tau;q_1)$, $K(\tau;q_1,q_0)\neq 0$, $q_1\in\calM$, $\tau\in\calI$, with respect to a fixed reference point $q_0\in\calM$.
\label{CeperleysConjectureIsTrueCDM}
\vspace{-2.5ex}
\end{corollary}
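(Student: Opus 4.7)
The plan is to derive this corollary from Lemma \ref{CeperleysConjectureIsTrueDM} by producing, for each given $\tau_1 > 0$ at which maximality is sought, a sufficiently small $\tau_0 \in (0, \tau_1)$ such that the nodal cell $\calN_{q_0}(K; \tau_0; q_0)$ is already maximal. With such a $\tau_0$ in hand, the truncated $C^{\omega}\!$ path $\{H(\tau): \tau \in [\tau_0, \infty)\}$ satisfies all prerequisites of Definition \ref{defiImagTimeAnalyticPath} (joint analyticity of $K$, positivity improvement, Hopf lemma) by standard parabolic theory for uniformly elliptic Schr\"odinger operators with $L^\infty$-bounded analytic potentials, so Lemma \ref{CeperleysConjectureIsTrueDM} propagates maximality to all $\tau \in [\tau_0, \infty)$, in particular to $\tau_1$. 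Since $\tau_1 > 0$ was arbitrary, this yields maximality throughout $\calI = (0, \infty)$.

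The first substantive step will be the short-time asymptotic of the base boltzmannonic Gibbs kernel $K_B$ on $\calM = \calM_0^n$. Feynman-Kac combined with the Minakshisundaram-Pleijel heat-kernel expansion on a smooth Riemannian manifold with bounded analytic potential gives, uniformly on compact subsets,
\begin{equation*}
K_B(\tau_0; r, q_0) \,=\, (4\pi\tau_0)^{-nd/2} \exp\bigl({-}\dist_g(r,q_0)^2/(4\tau_0)\bigr)\,\bigl[\Omega(r,q_0) + O(\tau_0)\bigr],
\end{equation*}
with $\Omega(r,q_0)$ smooth and strictly positive (the Van Vleck-Morette density). Fermionic antisymmetrization produces $K(\tau_0; r, q_0) \propto \sum_{\pi \in G_{\rm ex}} \sign(\pi)\, K_B(\tau_0; \pi r, q_0)$, and in local coordinates near $q_0$ adapted to the product $\calM_0^n$, the leading Gaussian factor depends on $\pi$ only through $\sum_i |r_i - q_{0,\pi(i)}|^2$. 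Up to a common positive prefactor, the short-time antisymmetrized kernel therefore reduces to the Slater determinant
\begin{equation*}
K(\tau_0; r, q_0) \,\approx\, (4\pi\tau_0)^{-nd/2}\,\det\bigl[\exp({-}|r_i - q_{0,j}|^2/(4\tau_0))\bigr]_{i,j=1}^n
\end{equation*}
of Gaussian single-particle orbitals centered at the $n$ single-particle coordinates of $q_0$.

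The second step will invoke Mitas's rigorous analysis \cite{Mitas06prl,Mitas06arx} of the nodal structure of such Slater determinants of Gaussians in spatial dimension $d \ge 2$, which confirms Ceperley's maximality conjecture for this free-fermion kernel: exactly one positive and one negative nodal cell modulo $G_{\rm ex}$. A parabolic rescaling $r \mapsto q_0 + \sqrt{\tau_0}(r - q_0)$ converts the full kernel into the Mitas alternant plus a vanishing remainder, and I would combine the Hopf lemma with the $G_{\rm ex}$-tiling counting argument from the proof of Lemma \ref{NodalCellTubeOneToOne} to show that the nodal variety of $K(\tau_0; \cdot, q_0)$ converges, in the Hausdorff metric on compact sets, to the Mitas nodal variety; consequently, for all sufficiently small $\tau_0 > 0$, the positive nodal cell containing $q_0$ is a single maximal $G_{\rm ex}$-tile.

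The principal obstacle is controlling the $O(\tau_0)$ corrections globally on $\calM$ rather than only on compact sets, since large-distance tails of the antisymmetrized heat kernel could in principle harbor extraneous nodal pockets. I would overcome this by combining an off-diagonal Gaussian upper bound of the form $|K_B(\tau_0; r, q_0)| \le C \tau_0^{-nd/2} \exp({-}\dist_g(r,q_0)^2/(C'\tau_0))$, which holds for Schr\"odinger operators with $L^\infty$ potential, with Ceperley's already established theorem on the maximality of the reach $\calR_{q_0}$. Any putative extra positive nodal cell at $\tau_0$ must then lie inside the unique positive region of the reach-slice, and the Gaussian-tail bound forces its $V_g$-measure to vanish as $\tau_0 \to 0^+$; joint analyticity of $K(\cdot;\cdot,q_0)$ on $\calI \times \calM$ precludes such a degenerate nodal cell, completing the verification of the initial maximality hypothesis required to invoke Lemma \ref{CeperleysConjectureIsTrueDM}.
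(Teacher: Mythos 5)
Your proposal takes essentially the same route as the paper: choose an arbitrarily small $\tau_0>0$, verify the premises of Definition \ref{defiImagTimeAnalyticPath} on the truncated path $\{H(\tau):\tau\in[\tau_0,\infty)\}$, establish nodal maximality at $\tau_0$, and invoke \myLemma~\ref{CeperleysConjectureIsTrueDM} to propagate it to all $\tau\in\calI$. The paper compresses the key short-time maximality claim into a one-sentence appeal to ``standard Feynman-Kac theory of path integrals \ldots also ensuring the nodal maximality for $K(\tau_0;\cdot,q_0)$''; your heat-kernel expansion, Slater-determinant reduction, appeal to Mitas, and tail-control argument are the substance behind that sentence.
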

\begin{proof}[\iftoggle{ForUSPTO} {Demonstration} {Proof}]
The proposition follows straightforwardly from \myLemma\;\ref{CeperleysConjectureIsTrueDM}, when a $\tau_0>0$ sufficiently and arbitrarily small is chosen such that, based on the standard Feynman-Kac theory of path integrals \cite{Simon79,Glimm87,Lorinczi11}, the $C^{\omega}\!$ path of conventional fermionic Schr\"odinger operators $\{H(\tau):\tau\in[\tau_0,\infty)\}$ generates a $(\tau,q)$-jointly analytic ITI Gibbs kernel function $K(\cdot;\cdot,q_0)\in C^{\omega}([\tau_0,\infty)\times\calM)$, which fulfills all of the conditions in Definition \ref{defiImagTimeAnalyticPath} as premises for \myLemma\;\ref{CeperleysConjectureIsTrueDM}, also ensuring the nodal maximality for $K(\tau_0;\cdot,q_0)$.
\vspace{-1.5ex}
\end{proof}

It is interesting to note that \myCorollary\;\ref{CeperleysConjectureIsTrueCDM} provides an alternative \iftoggle{ForUSPTO} {demonstration} {proof} for \myCorollary\;\ref{CeperleysConjectureIsTrue}, asserting the nodal maximality for the promised non-degenerate ground state of a conventional fermionic Schr\"odinger Hamiltonian $H_{\mathsmaller{\infty}}={-}\Delta_g+V(q\in\calM)$, with $V(\cdot)$ being $\calM$-diagonal and Kato-decomposable, because such $H_{\mathsmaller{\infty}}$ can be easily made the limit of an imaginary-time $C^{\omega}\!$ path of conventional fermionic Hamiltonians $\{H(\tau)\}$ as detailed in \myCorollary\;\ref{CeperleysConjectureIsTrueCDM}.

The essence of \myLemma\;\ref{CeperleysConjectureIsTrueDM} and \myCorollary\;\ref{CeperleysConjectureIsTrueCDM} is to affirm that, on any boundary surface of a Ceperley reach or imaginary-time nodal tube, there exists no point on which the normal vector of said boundary surface runs parallel to the imaginary time axis. For a conventional fermionic Schr\"odinger Hamiltonian on a connected Riemannian manifold $\calM$, with respect to a fixed reference point $(\tau_0,q_0)\in\{0\}\times\calM$, Ceperley has posed an interesting question on whether two points $(\tau_1,q_1)$ and $(\tau_1,r_1)$ in $\calI\times\calM$ that fall in the same nodal cell of the Gibbs wavefunction $K(\tau_1;\cdot,q_0)$ must also lie in the same Ceperley reach of $(\tau_0,q_0)$ \cite{Ceperley91}. The question is now answered in the affirmative by virtue of \myLemma\;\ref{CeperleysConjectureIsTrueDM} and \myCorollary\;\ref{CeperleysConjectureIsTrueCDM}.

{\bf Context Switch Notice:} From this point onwards, many-body physical systems are assumed to be moving on a general continuous-discrete product configuration space $\calC=\calM\times\calP$ as default, unless explicitly stated otherwise.

It is useful to remind that, in the context of a PLTKD Hamiltonian $H=\sum_{k=1}^{\mathsmaller{K}}\bigotimes_{i=1}^{n_k}h_{ki}$ supported by a continuous-discrete product Riemannian manifold $\calC=\prod_{s=1}^{\mathsmaller{S}}(\calM_s\times\calP_s)^{n_s}$ as a configuration space, where $K\in\mathbb{N}$, $n_k\in\mathbb{N}$ for all $k\in[1,K]$, $S\in\mathbb{N}$, $n_s\in\mathbb{N}$ for all $s\in[1,S]$, there is an $h_{ki}$-moved and an $h_{ki}$-fixed factor spaces, denoted as $\calE_{ki}\defeq\calE_{h_{ki}}$ and $\calE_{ki}^{\mathsmaller{\perp}}\defeq\calE_{h_{ki}}^{\mathsmaller{\perp}}$ respectively, that are associated with each individual FBM interaction $h_{ki}$ for a given $k\in[1,K]$ and a given $i\in[1,n_k]$. Consequently, $\forall k\in[1,K]$, there is an $H_k$-fixed factor space $\calE_k^{\mathsmaller{\perp}}\defeq\calE_{H_k}^{\mathsmaller{\perp}}\defeq\bigotimes_{i=1}^{n_k}\calE_{ki}^{\mathsmaller{\perp}}$ associated with the FBM tensor monomial $H_k\defeq\bigotimes_{i=1}^{n_k}h_{ki}$, in conjunction with a complementary $H_k$-moved factor space $\calE_k\defeq\{q\in\calC:q\not\in\calE_k^{\mathsmaller{\perp}}\}$, such that $\calC=\calE_k\oplus\calE_k^{\mathsmaller{\perp}}$, namely, $\forall q\in\calC$, there is a unique ODSD $q=u_{q,k}\oplus u_{q,k}^{\mathsmaller{\perp}}$ with $u_{q,k}\defeq u_{q,H_k}\in\calE_k$, $u_{q,k}^{\mathsmaller{\perp}}\defeq u_{q,H_k}^{\mathsmaller{\perp}}\in\calE_k^{\mathsmaller{\perp}}$.

\begin{definition}{(Node-Determinate FBM Tensor Polynomials and Monomials)}\label{defiNodeDeterTensorPolyMono}\\
With respect to a Gibbs wavefunction of concern, an FBM tensor polynomial $P$ is called $\epsilon$-almost node-determinate, or said to possess a property of node-determinacy with an error probability upper-bounded by $\epsilon\ge 0$, when either 1) the concerned Gibbs wavefunction is
$$\psi(P;q;q_0,\tau) \defeq \langle q|\exp\{-\tau[P{-}\lambda_0(P)]\}|q_0\rangle \,, \; \forall q \in \calC, \; (q_0,\tau) \in \calC\times(0,\infty) \; \mbox{being fixed} \,,$$
the numerical sign of whose point values can be uniquely and efficiently determined on each of the $\calE_{\mathsmaller{P}}$-cosets $\{\calE_{\mathsmaller{P}}\oplus u^{\mathsmaller{\perp}}:u^{\mathsmaller{\perp}}\in\calE_{\mathsmaller{P}}^{\mathsmaller{\perp}},P\;\textit{moves}\;\calE_{\mathsmaller{P}}\oplus u^{\mathsmaller{\perp}}\}$, except for a subset $\calU(P) \subseteq \calC$ of a measure that is less than $\epsilon$ times the volume of $\calC$; or 2) the concerned Gibbs wavefunction is the ground state $\psi_0(P)$ of $P$, which, on each of the $\calE_{\mathsmaller{P}}$-cosets $\{\calE_{\mathsmaller{P}}\oplus u^{\mathsmaller{\perp}}:u^{\mathsmaller{\perp}}\in\calE_{\mathsmaller{P}}^{\mathsmaller{\perp}},P\;\textit{moves}\;\calE_{\mathsmaller{P}}\oplus u^{\mathsmaller{\perp}}\}$, either 2.1) is one-dimensional (namely, the ground state is non-degenerate); or else 2.2) has an orthonormal basis of ground state eigenvectors $\psi_0(P)\defeq\{\psi_{0,l}(P):l\in[1,\gmul(P,0)]\}$, such that the subset
$$\calU(P)\defeq\{q\in\calE_{\mathsmaller{P}}\oplus\calE_{\mathsmaller{P}}^{\mathsmaller{\perp}}:\exists\,\phi,\phi'\in\psi_0(P)\,\mbox{such that}\;\phi\neq\phi',\,\phi(q)\phi'(q)\neq 0\} \,,$$
called the set of node-uncertain configuration points, has a $|\psi_0(P)|^2$-measure that is upper-bounded by $\epsilon$, namely, $\scalebox{1.15}{$\sum$}_{\,\phi\,\in\,\psi_0(P)}\,\scalebox{1.1}{$\int$}_{\!\calU(P)}\,|\phi(q)|^2\,dV_g(q)\le\epsilon$.

Specifically, an FBM tensor monomial $M=\bigotimes_{i=1}^nh_i$ is called $\epsilon$-almost node-determinate when it is so as an FBM tensor polynomial. Associated with an $\epsilon$-almost node-determinate FBM tensor monomial $M$, the set $\calU^c(M)\defeq(\calE_{\mathsmaller{M}}\oplus\calE_{\mathsmaller{M}}^{\mathsmaller{\perp}})\setminus\calU(M)$ is called the subset of node-certain configuration points. More specifically, a single FBM interaction $h$ is called $\epsilon$-almost node-determinate when it constitutes an $\epsilon$-almost node-determinate FBM tensor monomial $M=h$.
\vspace{-1.5ex}
\end{definition}

Obviously, for any finite $\tau \in (0,\infty)$ and $\forall q_0 \in \calC$, if $P$ is efficiently computable, then the Gibbs wavefunction $\langle q|\exp\{-\tau[P{-}\lambda_0(P)]\}|q_0\rangle$, $q\in\calC$ is efficiently computable, the numerical sign of whose point values can be uniquely and efficiently determined, $\forall q\in\calC$, so $P$ is straightforwardly node-determinate. It is only in the large-$\tau$ limit, when the Gibbs wavefunction of $P$ reduces to a set of degenerate ground states, that the node-determinacy of $P$ becomes non-trivial.

Cast in the language of statistical decision theory or hypothesis testing \cite{Berger85,Liese08}, the set of ground state eigenvectors $\{\psi_{0,l}(P):l\in[1,\gmul(P,0)]\}$ of a tensor polynomial $P$ constitutes a statistical model comprising a number $\gmul(P,0)$ of hypotheses or candidate probability distributions that are parametrized by an index $l\in[1,\gmul(P,0)]$, each of which corresponds to a probability density function $|\psi_{0,l}(P;q)|^2$ of generating a random $q\in\calE_{\mathsmaller{M}}\oplus\calE_{\mathsmaller{M}}^{\mathsmaller{\perp}}$ that is measured or observed. Given an observation or measurement $q\in\calE_{\mathsmaller{M}}\oplus\calE_{\mathsmaller{M}}^{\mathsmaller{\perp}}$, the task of statistical decision making or hypothesis testing is to determine an optimal estimate for the parameter $l\in[1,\gmul(P,0)]$, namely, which of the $\gmul(P,0)$ candidates $\{\psi_{0,l}(P):l\in[1,\gmul(P,0)]\}$ is most likely the underlying probability distribution, where optimality is with respect to minimization of error (also known as the loss function \cite{Berger85,Liese08}). A tensor monomial $P$ being $\epsilon$-almost node-determinate means that a decision rule exists, based on which decisions can be made with an expected error bounded by $\epsilon$. The significance of an FBM tensor monomial $M$ being $\epsilon$-almost node-determinate is that, $\forall q\in\calU^c(M)$, it is polynomially efficient to decide $q$ falls into the support of which of the ground state eigenvectors $\{\psi_{0,l}(M):1\le l\le\gmul(M,0)\}$, and further determine what value the ground state eigenvector assumes at $q$, with a probability of failure upper-bounded by $\epsilon$.

Some illustrative examples may help to gain intuition and appreciation of node-determinate FBM tensor monomials and polynomials. To that end, it might be the simplest, by way of example but no means of limitation, to consider a prototypical system consisting of $n\in\mathbb{N}$ mutually distinguishable particles moving in a discrete product space $\calP=\{0,1\}^n$, where each particle labeled by an $i\in[1,n]$ amounts to a canonical rebit, with two Pauli matrices (or called operators) $X_i\defeq\sigma_i^x$ and $Z_i\defeq\sigma_i^z$ generating a Banach algebra of operators acting on the two-dimensional Hilbert subspace ${\rm span}(|0\rangle_i,|1\rangle_i)$. Let $\calI_1=\{i_1,\cdots\!,i_m\}\subseteq[1,n]$ and $\calI_2\defeq[1,n]\setminus\calI_1$ be a set partition of the indices of the rebits, and $\calP=\calP_1\times\calP_2$ be the corresponding Cartesian factorization of the configuration space. Let $\calB(\calI_1)\defeq\bigotimes_{l=1}^m{\rm span}(I_{i_l},Z_{i_l})$ denote the tensor product of the two-dimensional Banach algebras associated with the rebits indexed by $\calI_1$. Any pure tensor $M(\epsilon_1,\cdots\!,\epsilon_l)\defeq\bigotimes_{l=1}^mZ_{i_l}^{\,\epsilon_l}\in\calB(\calI_1)$, with $\epsilon_l\in\{0,1\}$, $\forall l\in[1,m]$, is clearly a node-determinate FBM tensor monomial, because its ground states are all coordinate eigenstates that surely does not overlap in the $\calP$ space. To be utterly specific, for any given configuration point $q\defeq(s_1,\cdots\!,s_n)\in\calP$, there is at most one unique ground state in the set $\psi_0(M(\epsilon_1,\cdots\!,\epsilon_l))$ that does not vanish at $q$, which can be easily identified by a $(Z^+_{i_1},\cdots\!,Z^+_{i_m})$-joint measurement on the location $q\in\calP$, that is, on the coordinate eigenstate $|q\rangle$, so to obtain a bit array $(\langle q|Z^+_{i_1}|q\rangle,\cdots\!,\langle q|Z^+_{i_m}|q\rangle)=(s_{i_1},\cdots\!,s_{i_m})$ as the result of measurement, where $\forall i\in[1,n]$, $Z^+_i\defeq\half(I+Z_i)$ measures the $\{0,1\}$ coordinate of the $i$-th rebit, such that, for any $i\in[1,n]$, $\langle q|Z^+_i|q\rangle=\langle s_1\cdots s_n|Z^+_i|s_1\cdots s_n\rangle=\langle s_i|Z^+_i|s_i\rangle=s_i\in\{0,1\}$. Furthermore, if $j\in\calI_2$ is a typical index, and $U_j=X_j\sin\theta+Z_j\cos(\theta)$, $\theta\in\mathbb{R}$ is a general single-rebit operator acting on the $j$-th rebit, then a $\calP_1$-diagonally controlled FBM operator $M_{\bm{1}}\otimes U_j$, with $M_{\bm{1}}\defeq M(\epsilon_1,\cdots\!,\epsilon_l)\in\calB(\calI_1)$, $\lambda_0(\pm M_{\bm{1}})=-1$, is also node-determinate, because the operator moves only one rebit, and its ground states in the basis $\{\psi_0(M_{\bm{1}})\otimes\psi_0({-}U_j)\}\cup\{\psi_0({-}M_{\bm{1}})\otimes\psi_0(U_j)\}$ have absolutely non-overlapping supports in the configuration space $\calP=\{0,1\}^n$, since $\psi_0({-}U_j)$ and $\psi_0(U_j)$ are each non-degenerate, and every eigenstate in the potentially degenerate manifold $\psi_0(M_{\bm{1}})\cup\psi_0({-}M_{\bm{1}})$ is a $\calP$-coordinate eigenstate that is distinguished from others in the manifold by its different support in $\calP$.

Also, let $\calI_1=\{1,2\}$, $\calI_2=[1,n]\setminus\calI_1$, $\calP=\calP_1\times\calP_2$ be the corresponding Cartesian factorization of the configuration space, consider the FBM projection operators $X_{12}^{\pm}\defeq\half(I\mp X_{12})$ with $X_{12}\defeq X_1\otimes X_2$, and a controlled FBM operator $H_{xxz}\defeq X_{12}\otimes M_{\bm{2}}$, with $M_{\bm{2}}\in\calB(\calI_2)\defeq\bigotimes_{i=3}^n{\rm span}(I_i,Z_i)$ being a $\calP_2$-diagonal FBM tensor monomial such that $\lambda_0(\pm M_{\bm{2}})=-1$. It turns out that both $X_{12}^+=\half(I+X_{12})$ and $X_{12}^-=\half(I-X_{12})$ are node-determinate, because the operators have the famous Bell states as their ground states, namely,
\begin{align}
\sqrt{2}\,\psi_0(X_{12}^+) \,=\,\;& \{|{+}{-}\rangle+|{-}{+}\rangle,\,|{+}{-}\rangle-|{-}{+}\rangle\} \,=\, \{|00\rangle-|11\rangle,\,|10\rangle-|01\rangle\}, \label{psi0X12P} \\[0.75ex]
\sqrt{2}\,\psi_0(X_{12}^-) \,=\,\;& \{|{+}{+}\rangle+|{-}{-}\rangle,\,|{+}{+}\rangle-|{-}{-}\rangle\} \,=\, \{|00\rangle+|11\rangle,\,|10\rangle+|01\rangle\}, \label{psi0X12N}
\end{align}
such that, for either $X_{12}^+$ or $X_{12}^-$, the two basis ground states in either $\psi_0(X_{12}^+)$ or $\psi_0(X_{12}^-)$ are absolutely non-overlapping in the $\{0,1\}^2=\{00, 01, 10, 11\}$ configuration space, and the two of them can be easily told apart by performing a parity $Z_{12}$ measurement jointly on the two rebits, with the parity operator $Z_{12}\defeq Z^+_1+Z^+_2\pmod*{2}$, such that $\langle s_1s_2|Z_{12}|s_1s_2\rangle=s_1+s_2\pmod*{2}$, $\forall(s_1,s_2)\in\{0,1\}^2$. Moreover, the operator $H_{xxz}=X_{12}\otimes M_{\bm{2}}$ is node-determinate, because the ground states in $\psi_0(H_{xxz})=\{\psi_0(X_{12})\otimes\psi_0({-}M_{\bm{2}})\}\cup\{\psi_0({-}X_{12})\otimes\psi_0(M_{\bm{2}})\}$ have non-overlapping supports in the $\calP$ space and can be told apart unequivocally by a $(Z_{12},\{Z^+_i:i\in\calI_2\})$-joint measurement on any given configuration point $q\in\calP$ that is within the support of $\psi_0(H_{xxz})$, namely, the unique member wavefunction in the set $\psi_0(H_{xxz})$ that does not vanish at a given configuration coordinate $q=(s_1,s_2,\{s_i:i\in\calI_2\})\in\calP$ can be easily identified by performing a $(Z_{12},\{Z^+_i:i\in\calI_2\})$-joint measurement on the point $q$, that is, by measuring the parity $Z_{12}$ on the rebits $1$ and $2$ jointly, and $Z^+_i$-measuring each of the $i$-th rebit, $\forall i\in\calI_2$.

Similarly and straightforwardly, let $\calI_1=\{i_1,\cdots\!,i_m\}\subseteq[1,n]$ and $\calI_2\defeq[1,n]\setminus\calI_1$ be a set partition of the indices of the rebits, and $\calP=\calP_1\times\calP_2$ be the corresponding Cartesian factorization of the configuration space, then within the subsystem supported by the configuration space $\calP_1$, the {\em multi-rebit-simultaneously-flipping} operator $\eta X_{\raisebox{0.1\height}{\tiny $i_1\cdots i_m$}} \defeq \eta \bigotimes_{k=1}^mX_{i_k}$, $\eta=\pm 1$ is node-determinate, with $\scalebox{1.15}{\{} \psi_{0,\,s_1\cdots s_m} (X_{\raisebox{0.1\height}{\tiny $i_1\cdots i_m$}}) \defeq \scalebox{1.1}{(} |s_1\cdots s_m\rangle - \eta|(1-s_1)\cdots(1-s_m)\rangle \scalebox{1.1}{)} / \sqrt{2} : s_1\cdots s_m \in \{0,1\}^m \scalebox{1.15}{\}}$ lists an orthonormal basis of ground state functions, whose supports are mutually non-overlapping. Given any configuration point $q_1=(s_1,\cdots\!,s_m)\in\calP_1$, there is a unique ground state $\psi_{0,\,s_1\cdots s_m} (X_{\raisebox{0.1\height}{\tiny $i_1\cdots i_m$}})$ such that $\langle q_1 | \psi_{0,\,s_1\cdots s_m} (X_{\raisebox{0.1\height}{\tiny $i_1\cdots i_m$}}) \rangle \neq 0$, whose support actually contains only two configuration points, $q_1$ and its Boolean complement $q_1^c \defeq \scalebox{1.1}{(} (1-s_1),\cdots\!,(1-s_m) \scalebox{1.1}{)} \in\calP_1$. Let $M_{\bm{2}}\in\calB(\calI_2)\defeq\bigotimes_{i=3}^n{\rm span}(I_i,Z_i)$ with $\lambda_0(\pm M_{\bm{2}})=-1$ be a general $\calP_2$-diagonal tensor monomial that is not necessarily FBM, then it is easily verified, and will be rigorously proved shortly in the following, that the operator $X_{\raisebox{0.1\height}{\tiny $i_1\cdots i_m$}} \otimes M_{\bm{2}}$ is node-determinate.

More generally, with the same $\calI_1=\{i_1,\cdots\!,i_m\}\subseteq[1,n]$, $\calI_2\defeq[1,n]\setminus\calI_1$, $\calP=\calP_1\times\calP_2$, and a general $\calP_2$-diagonal tensor monomial $M_{\bm{2}}\in\calB(\calI_2)$ with $\lambda_0(\pm M_{\bm{2}})=-1$ as introduced immediately above, then any classical reversible computation \cite{Bennett73,Fredkin82,Bennett82,Feynman96} performed on the computational basis states associated with the subsystem supported by $\calP_1$ amounts to a particular type of self-inverse operator $F_{\bm{1}}$, known as a {\em classical reversible gate array} \cite{Shor97} (CRGA) operator, for which the tensor monomial $F_{\bm{1}}\otimes M_{\bm{2}}$ is also guaranteed to be node-determinate. Specifically, a CGRA operator $F_{\bm{1}}$ is associated with a bijective function $F:\{0,1\}^m\mapsto\{0,1\}^m$, which maps each $(s_1\cdots s_m)\in\{0,1\}^m$ to a unique $F((s_1\cdots s_m)\in\{0,1\}^m$. Therefore, $F$ partitions the set $\{0,1\}^m$ into $2^{m-1}$ $F$-equivalence classes, each $F$-equivalence class containing exactly two elements: an $(s_1\cdots s_m)\in\{0,1\}^m$, together with its $F$-image $F(s_1\cdots s_m)\in\{0,1\}^m$. Let $\{0,1\}^m/F$ denote the quotient space, that is, the set containing the $2^{m-1}$ $F$-equivalence classes. Then the operator $F_{\bm{1}}$ can be represented as
\begin{align}
F_{\bm{1}} \,=\,\;& {\textstyle{ \scalebox{1.35}{$\sum$}_{\{(s_1\cdots s_m),\,F(s_1\cdots s_m)\}\,\in\,\{0,1\}^m/F} }} \, |\psi_+(s_1\cdots s_m)\rangle \langle\psi_+(s_1\cdots s_m)| \nonumber \\[0.75ex]
\,-\,\;& {\textstyle{ \scalebox{1.35}{$\sum$}_{\{(s_1\cdots s_m),\,F(s_1\cdots s_m)\}\,\in\,\{0,1\}^m/F} }} \, |\psi_-(s_1\cdots s_m)\rangle \langle\psi_-(s_1\cdots s_m)|\,, \label{NodeDeterOfCRGA}
\end{align}
with $\psi_{\pm}(s_1\cdots s_m) \defeq \{|(s_1\cdots s_m) \pm |F(s_1\cdots s_m)\} /\sqrt{2}$ being two and the only two eigenstates of the self-inverse operator $F_{\bm{1}}$, corresponding to each $\{(s_1\cdots s_m),\,F(s_1\cdots s_m)\}\in\{0,1\}^m/F$. Clearly, both $F_{\bm{1}}$ and $-F_{\bm{1}}$ have a node-determinate manifold of ground states. Consequently, the tensor monomial $F_{\bm{1}} \otimes M_{\bm{2}}$ is node-determinate, as will be proven below.

\begin{definition}{(Doubly Node-Determinate FBM Tensor Monomials)}\label{defiDoubNodDet}\\
An FBM tensor monomial $M$ is called doubly $\epsilon$-almost node-determinate or doubly node-determinate up to $\epsilon>0$, when both $\lambda_0(M)<0$ and $\lambda_0({-}M)<0$, and orthonormal bases $\psi_0(M)\defeq\{\psi_{0,l}(M):l\in[1,\gmul(M,0)]\}$ and $\psi_0({-}M)\defeq\{\psi_{0,l}({-}M):l\in[1,\gmul({-}M,0)]\}$ exist on each of the $\calE_{\mathsmaller{M}}$-cosets $\{\calE_{\mathsmaller{M}}\oplus u^{\mathsmaller{\perp}}:u^{\mathsmaller{\perp}}\in\calE_{\mathsmaller{M}}^{\mathsmaller{\perp}},M\;\textit{moves}\;\calE_{\mathsmaller{M}}\oplus u^{\mathsmaller{\perp}}\}$, such that the subset
\vspace{-1.25ex}
\begin{equation}
\calU_{\pm}(M)\defeq\{q\in\calE_{\mathsmaller{M}}\oplus\calE_{\mathsmaller{M}}^{\mathsmaller{\perp}}:\exists\,\phi,\phi'\in\psi_0(M)\cup\psi_0({-}M),\,\mbox{such that}\;\phi\neq\phi',\,\phi(q)\phi'(q)\neq 0\} \nonumber
\vspace{-1.25ex}
\end{equation}
has a $|\psi_0(\pm M)|^2$-measure $\scalebox{1.15}{$\sum$}_{\,\phi\,\in\,\psi_0(M)\,\cup\,\psi_0({-}M)}\,\scalebox{1.1}{$\int$}_{\!\calU_{\pm}(M)}\,|\phi(q)|^2\,dV_g(q)\le\epsilon$.
\end{definition}

\begin{definition}{(Universally Node-Determinate FBM Tensor Monomials)}\label{defiUinvNodDet}\\
An FBM tensor monomial $M$ is called universally $\epsilon$-almost node-determinate, or universally node-determinate up to $\epsilon>0$, when on each of the $\calE_{\mathsmaller{M}}$-cosets $\{\calE_{\mathsmaller{M}}\oplus u^{\mathsmaller{\perp}}:u^{\mathsmaller{\perp}}\in\calE_{\mathsmaller{M}}^{\mathsmaller{\perp}},M\;\textit{moves}\,\;\calE_{\mathsmaller{M}}\oplus u^{\mathsmaller{\perp}}\}$, a suitable choice of all of its eigenfunctions always exists and forms an orthonormal basis $\psi_*(M)\defeq\bigcup_{n\,\in\,\{0\}\,\cup\,\mathbb{N}}\{\psi_n(M)\}$, such that the subset
\vspace{-1.25ex}
\begin{equation}
\calU_*(M)\defeq\{q\in\calE_{\mathsmaller{M}}\oplus\calE_{\mathsmaller{M}}^{\mathsmaller{\perp}}:\exists\,\phi,\phi'\in\psi_*(M),\,\mbox{such that}\;\phi\neq\phi',\,\phi(q)\phi'(q)\neq 0\} \nonumber
\vspace{-1.25ex}
\end{equation}
has a $|\psi_*(M)|^2$-measure $\scalebox{1.15}{$\sum$}_{\,\phi\,\in\,\psi_*(M)}\,\scalebox{1.1}{$\int$}_{\!\calU_*(M)}\,|\phi(q)|^2\,dV_g(q)\le\epsilon$.
\vspace{-1.5ex}
\end{definition}

The notion of a universally node-determinate FBM tensor monomial may seem a little peculiar, but it encapsulates and is epitomized by a class of FBM polynomials that are built from configuration coordinate measuring/projection operators, which are straightforwardly computable and universally node-determinate, yet endow tremendous computational power.

\begin{lemma}{(Node-Determinate FBM Tensor Monomials of the First Kind)}\label{NodeDeterTensorMonoI}\\
If an FBM tensor monomial $M=U\otimes V$ is a tensor product of two FBM tensor monomials $U$ and $V$ supported by configuration spaces $\calC_{\mathsmaller{U}}$ and $\calC_{\mathsmaller{V}}$ respectively, where $U$ is doubly $\epsilon$-almost node-determinate, $\epsilon>0$, while $V$ is either 1) doubly non-degenerate in the sense that both $\lambda_0(V)$ and $\lambda_0(-V)$ are negative and non-degenerate eigenvalues, or 2) doubly $\epsilon$-almost node-determinate, then, respectively, $M$ is either 1) $\epsilon$-almost  node-determinate, or 2) $2\epsilon$-almost node-determinate, and called a node-determinate FBM tensor monomial of the first kind in either case.
\end{lemma}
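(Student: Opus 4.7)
The plan is to leverage the tensor-product spectral decomposition of $M=U\otimes V$. Since the eigenvalues of $M$ are products $\lambda_i(U)\lambda_j(V)$, and since $U$ being doubly node-determinate forces both $\lambda_0(U)<0$ and $\lambda_{\max}(U)=-\lambda_0(-U)>0$ (with the analogous sign pattern holding for $V$ under either hypothesis), the lowest eigenvalue of $M$ is
\begin{equation*}
\lambda_0(M)\;=\;\min\bigl\{\lambda_0(U)\lambda_{\max}(V),\;\lambda_{\max}(U)\lambda_0(V)\bigr\}\;<\;0,
\end{equation*}
and its ground eigenspace is the span of either $\psi_0(U)\otimes\psi_0(-V)$, or $\psi_0(-U)\otimes\psi_0(V)$, or the direct sum of both in the coincident (degenerate) case.

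First I would choose $\psi_0(M)$ as the natural orthonormal basis obtained from the disjoint union of product bases drawn from the factor bases, and observe that two distinct elements $\Phi=\psi\otimes\phi$ and $\Phi'=\psi'\otimes\phi'$ are simultaneously nonzero at $(u,v)\in\calC_{\mathsmaller{U}}\times\calC_{\mathsmaller{V}}$ if and only if $\psi(u)\psi'(u)\neq 0$ and $\phi(v)\phi'(v)\neq 0$. Splitting distinct pairs into the three subcases $(\psi\neq\psi',\,\phi=\phi')$, $(\psi=\psi',\,\phi\neq\phi')$, and $(\psi\neq\psi',\,\phi\neq\phi')$ yields the set containment
\begin{equation*}
\calU(M)\;\subseteq\;\bigl(\calU_{\pm}(U)\times\calC_{\mathsmaller{V}}\bigr)\;\cup\;\bigl(\calC_{\mathsmaller{U}}\times\calU_{\pm}(V)\bigr),
\end{equation*}
where $\calU_{\pm}(\cdot)$ denotes the doubly-uncertain set of the respective factor.

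Next I would bound the $|\psi_0(M)|^2$-measure of $\calU(M)$ by integrating $|\Phi|^2=|\psi|^2\,|\phi|^2$ over this union with Fubini--Tonelli. In case~1, $V$ is doubly non-degenerate, so each of $\psi_0(\pm V)$ consists of a single unit-normalized function and $\calU_{\pm}(V)=\emptyset$; thus only the $\calU_{\pm}(U)\times\calC_{\mathsmaller{V}}$ contribution survives, and together with $\int|\psi_0(\pm V)|^2=1$ and the doubly-$\epsilon$-node-determinacy of $U$ this delivers the bound $\epsilon$. In case~2, both contributions are present, each separately controlled by the doubly-$\epsilon$-node-determinacy of its own factor, so a union bound gives the total $2\epsilon$.

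The main obstacle is the coincident-spectrum case, where the ground eigenspace of $M$ is spanned jointly by product vectors of both forms $\psi\otimes\psi_0(-V)$ (with $\psi\in\psi_0(U)$) and $\psi'\otimes\psi_0(V)$ (with $\psi'\in\psi_0(-U)$). Cross-pair distinct basis vectors then involve $\psi\in\psi_0(U)$ and $\psi'\in\psi_0(-U)$, for which the joint nonvanishing $\psi(u)\psi'(u)\neq 0$ is controlled only by the doubly-uncertain set $\calU_{\pm}(U)$, not by $\calU(U)$ alone; this is precisely why the hypothesis requires that $U$ be \emph{doubly} node-determinate rather than merely node-determinate, and analogously why $V$ is required to be doubly node-determinate in case~2. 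Once the cross-pair bookkeeping is absorbed into the doubly-node-determinacy bounds, the case analysis closes with the asserted constants $\epsilon$ and $2\epsilon$.
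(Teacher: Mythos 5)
Your proposal follows essentially the same route as the paper's (rather terse) proof: identify the lowest eigenvalue of $M=U\otimes V$ as the minimum of the two cross products, characterize the ground eigenspace by taking an orthonormal product basis drawn from $\psi_0(\pm U)$ and $\psi_0(\pm V)$, and bound the $|\psi_0(M)|^2$-measure of $\calU(M)$ by a union-type argument via Fubini. You additionally spell out the case split (strict inequality vs.\ coincident spectrum) and the role of cross-pairs, which is exactly the bookkeeping the paper leaves implicit.

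One slip in the case-1 argument: you assert that because $V$ is doubly non-degenerate, ``$\psi_0(\pm V)$ consists of a single unit-normalized function and $\calU_\pm(V)=\emptyset$.'' The singletons claim is right, but $\calU_\pm(V)=\emptyset$ does not follow; non-degeneracy of $\lambda_0(V)$ and $\lambda_0(-V)$ says nothing about the supports of $\psi_0(V)$ and $\psi_0(-V)$, which may well overlap, in which case $\calU_\pm(V)=\{v:\psi_0(V;v)\,\psi_0(-V;v)\neq 0\}\neq\emptyset$. The conclusion you draw from this — that only the $\calU_\pm(U)\times\calC_V$ contribution survives — is nevertheless correct, but for a different reason: in the ground-state basis $\psi_0(M)$ the $V$-factor is functionally tied to which of $\psi_0(U)$ or $\psi_0(-U)$ the $U$-factor belongs to (each $\phi_U\in\psi_0(U)$ is paired only with $\psi_0(-V)$, and each $\phi_U\in\psi_0(-U)$ only with $\psi_0(V)$). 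Therefore any distinct pair of basis vectors — whether same-$\phi_V$ or cross-$\phi_V$ — already forces $u\in\calU_\pm(U)$, giving $\calU(M)\subseteq\calU_\pm(U)\times\calC_V$ regardless of whether $\calU_\pm(V)$ is empty. With that correction, your estimate in case 1 is fine. Your case-2 union bound and the observation that double node-determinacy of $U$ is precisely what is needed to control the cross-pair term match the paper's reasoning.
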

\vspace{-4.0ex}
\begin{proof}
Depending on whether $\lambda_0(U)\lambda_0({-}V)>\lambda_0({-}U)\lambda_0(V)$, or $\lambda_0(U)\lambda_0({-}V)<\lambda_0({-}U)\lambda_0(V)$, or $\lambda_0(U)\lambda_0({-}V)=\lambda_0({-}U)\lambda_0(V)$ holds, the ground eigenspace of $M$ may be $\psi_0(M)=\psi_0(U)\otimes\psi_0({-}V)$, or $\psi_0(M)=\psi_0({-}U)\otimes\psi_0(V)$, or $\psi_0(M)=\{\psi_0(U)\otimes\psi_0({-}V)\}\cup\{\psi_0({-}U)\otimes\psi_0(V)\}$ respectively. No matter which is the case, the ground eigenspace has an orthonormal basis $\psi_0(M)=\{\phi_{\mathsmaller{U}}\otimes\phi_{\mathsmaller{V}}:\phi_{\mathsmaller{U}}\in\psi_0(U)\cup\psi_0({-}U),\,\phi_{\mathsmaller{V}}\in\psi_0(V)\cup\psi_0({-}V)\}$, where each member wavefunction $\phi_{\mathsmaller{U}}\otimes\phi_{\mathsmaller{V}}$ does not overlap with others, in the configuration space $\calC_{\mathsmaller{U}}\times\calC_{\mathsmaller{V}}$ excluding a subset with a $|\psi_0(M)|^2$-measure no more than $\epsilon$ or $2\epsilon$ corresponding to the cases of $V$ being either doubly non-degenerate or doubly $\epsilon$-almost node-determinate.
\end{proof}

\begin{lemma}{(Node-Determinate FBM Tensor Monomials of the Second Kind)}\label{NodeDeterTensorMonoII}\\
An FBM tensor monomial $M=\bigotimes_{i=1}^mV_i$, $m\in\mathbb{N}$, with each $V_i$, $i\in[1,m]$ being an FBM tensor monomial, is $m\epsilon$-almost node-determinate and non-negative such that $M\ge 0$, if all but one of the $m$ FBM tensor monomials $\{V_i:i\in[1,m]\}$ are universally $\epsilon$-almost node-determinate, $\epsilon>0$, with the only one possible exception being at least  $\epsilon$-almost node-determinate, and all of the FBM tensor monomials are non-negative such that $V_i\ge 0$, $\forall i\in[1,m]$. Such an operator $M$ is called a node-determinate FBM tensor monomial of the second kind.
\end{lemma}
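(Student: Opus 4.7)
The plan is to dispatch the two assertions separately. For $M\ge 0$, I would note that each $V_i$ is self-adjoint with $V_i\ge 0$, so $\sigma(V_i)\subseteq[0,\infty)$; since the spectrum of a tensor product of non-negative self-adjoint operators consists of all products of eigenvalues taken one from each factor, $\sigma(M)=\{\prod_{i=1}^m\mu_i:\mu_i\in\sigma(V_i)\}\subseteq[0,\infty)$, so $M\ge 0$ with $\lambda_0(M)=\prod_{i=1}^m\lambda_0(V_i)$.

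For the $m\epsilon$-almost node-determinacy, let $j\in[1,m]$ denote the (possibly) exceptional index. I would construct an explicit orthonormal product eigenbasis of $M$ by tensoring factor bases: for each $i\neq j$, invoke universal $\epsilon$-almost node-determinacy of $V_i$ to pick the full eigenbasis $\psi_*(V_i)=\{\phi^{(i)}_n\}$ from Definition~\ref{defiUinvNodDet}, and for $V_j$ pick any orthonormal eigenbasis whose ground-eigenspace portion coincides with the nearly non-overlapping set $\psi_0(V_j)$ from Definition~\ref{defiNodeDeterTensorPolyMono}. The product vectors $\bigotimes_i\phi^{(i)}_{n_i}$ diagonalize $M$ with eigenvalues $\prod_i\mu^{(i)}_{n_i}$, and retaining those which attain $\lambda_0(M)$ produces an orthonormal basis $\psi_0(M)$ of the ground eigenspace of $M$.

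The heart of the argument is a slot-wise union bound. Any two distinct basis tensors $\bigotimes_i\phi_i\neq\bigotimes_i\phi'_i$ must differ at some slot $i^*\in[1,m]$, and for both to be nonzero at $q=(q_1,\ldots,q_m)$ the coordinate $q_{i^*}$ must lie in the slot-$i^*$ node-uncertain set, namely $\calU_*(V_{i^*})$ when $i^*\neq j$ and $\calU(V_j)$ when $i^*=j$. Hence $\calU(M)\subseteq\bigcup_{i=1}^m\{q:q_i\in\calU_i\}$ where $\calU_i\defeq\calU_*(V_i)$ for $i\neq j$ and $\calU_j\defeq\calU(V_j)$. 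Summing the $|\cdot|^2$-masses over $\psi_0(M)$ and factorizing across slots by the product structure of each basis tensor, the slot-$i$ contribution collapses via orthonormality in the other slots to $\sum_{n}\int_{\calU_i}|\phi^{(i)}_n(q_i)|^2\,dV_g(q_i)\le\epsilon$ by hypothesis, and the union bound delivers the claimed $m\epsilon$ total.

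The step I anticipate as the principal obstacle is the bookkeeping in the subcase where $\lambda_0(V_j)=0$ coexists with $\lambda_0(V_{i'})=0$ for some $i'\neq j$, since then $\psi_0(M)$ unavoidably contains basis tensors in which the $V_j$-factor is a higher excited state of $V_j$ lying outside $\psi_0(V_j)$, a regime not directly controlled by ordinary $\epsilon$-almost node-determinacy of $V_j$. The resolution I intend to pursue is to partition $\psi_0(M)$ according to whether the $V_j$-factor lies in $\psi_0(V_j)$ or in its orthonormal completion, and to show that any overlap between two basis tensors in the latter block is already absorbed through the slot-$i'$ term of the union bound, because each such tensor must still carry a zero-eigenvalue factor at some $i'\neq j$ whose universal $\epsilon$-almost node-determinacy controls the slot-$i'$ overlap; the residual slot-$j$ contribution then reduces to the ordinary node-determinacy estimate on $\psi_0(V_j)$, so the overall bound $m\epsilon$ survives intact.
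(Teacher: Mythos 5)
You correctly isolate the dangerous subcase --- when $\lambda_0(V_j)=0$ coexists with $\lambda_0(V_{i'})=0$ for some $i'\neq j$ --- but the resolution you sketch does not close it, and this is precisely the point where the paper's proof does something your framework cannot accommodate. Run your own union bound on a pair of product eigenvectors $\Phi$ and $\Phi'$ that share the same zero-eigenvalue factor $\phi_{i'}$ at slot $i'$ and differ only at slot $j$, with $\phi_j\neq\phi'_j$ both \emph{excited} eigenstates of $V_j$: the slot-$i'$ term never fires because $\Phi$ and $\Phi'$ agree there, and the slot-$j$ term gives you nothing because $\calU(V_j)$ from Definition \ref{defiNodeDeterTensorPolyMono} is built solely from the ground eigenspace $\psi_0(V_j)$, and an only ordinarily node-determinate $V_j$ carries no bound whatsoever on the overlap of its excited states. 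So for the product eigenbasis you insist on, $\calU(M)$ need not have $|\psi_0(M)|^2$-measure $\le m\epsilon$, and the lemma would simply be false for that basis. This is not an exotic corner either: the Feynman-Kitaev propagators the lemma is ultimately applied to are tensor products of projection operators, which all have $\lambda_0=0$, so the hard subcase is the generic one.

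The correct repair is a different basis, not a sharper bound; node-determinacy of $M$ is an existence statement about \emph{some} orthonormal basis of its ground eigenspace, and you are free to abandon the eigenbasis of $V_j$ on the part of $\ker M$ where it is useless. The paper's proof (stated for $m=2$ with $V_1$ universal and $V_2$ merely ordinary) takes the basis
$$\psi_0(M)\subseteq\{\psi_n(V_1)\otimes\psi_0(V_2):n>0\}\cup\{\psi_0(V_1)\otimes|q\rangle:q\in\calC_2\},$$
i.e., it tensors the ground sector $\psi_0(V_1)$ against \emph{coordinate} eigenstates $|q\rangle$ of $\calC_2$, not eigenstates of $V_2$. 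These still span $\psi_0(V_1)\otimes L^2(\calC_2)\subseteq\ker M$ (since $V_1\psi_0(V_1)=0$), but the $|q\rangle$ have pairwise disjoint supports, so the slot-$2$ contribution to the node-uncertain set vanishes identically on this sector, and the only residual overlap is in the $V_1$-factor, which universal node-determinacy of $V_1$ bounds by $\epsilon_1$. The complementary sector $\{\psi_n(V_1):n>0\}\otimes\psi_0(V_2)$ has, by the eigenvalue equation, its $V_2$-factor necessarily inside $\psi_0(V_2)$, so ordinary node-determinacy of $V_2$ covers it by $\epsilon_2$. Your treatment of $M\ge 0$ and of the easy subcases (at most one $\lambda_0(V_i)=0$) is fine, and your diagnosis of the obstacle is exactly right; what is missing is the change of basis.
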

\vspace{-4.0ex}
\begin{proof}
Suffice it to just demonstrate the basic case of $m=2$ that $M=V_1\otimes V_2$ is $(\epsilon_1+\epsilon_2)$-almost node-determinate and non-negative such that $M\ge 0$, if $V_1$, $V_2$ are two FBM tensor monomials supported by configuration spaces $\calC_1$ and $\calC_2$ respectively, such that $V_1$ is universally $\epsilon_1$-almost node-determinate, $\epsilon_1>0$, and $V_2$ is $\epsilon_2$-almost node-determinate, $\epsilon_2>0$, furthermore, $V_1\ge 0$, $V_2\ge 0$. The ground eigenspace of $M$ has a basis
\vspace{-1.25ex}
\begin{equation}
\psi_0(M)\subseteq\{\psi_n(V_1)\otimes\psi_0(V_2):n\in\mathbb{Z},\,n>0\}\cup\{\psi_0(V_1)\otimes|q\rangle:q\in\calC_2\}, \nonumber
\vspace{-1.25ex}
\end{equation}
whose member wavefunctions do not overlap with each other over the configuration space $\calC_1\times\calC_2$ excluding a subset with a $|\psi_0(M)|^2$-measure less than $\epsilon_1+\epsilon_2$. Non-negativity of $M$ is obvious. Evidently, the same are also true for the FBM tensor monomial $M'=V_2\otimes V_1$.
\vspace{-1.5ex}
\end{proof}

Although by no means exhausting all of the ways that FBM tensor monomials can become node-determinate, doubly and universally node-determinate FBM tensor monomials as well as node-determinate FBM tensor monomials of the first and the second kinds do encapsulate a large set of node-determinate FBM tensor monomials. By means of direct sums, they also combine into a large collection of node-determinate FBM tensor polynomials. 

\begin{definition}{(Ground state Frustration-Free Hamiltonian)}\label{defiGFFH}\\
A partial Hamiltonian $H$ on a configuration space $\calC$ of size $N \defeq \size(C)$ is called ground state frustration-free (GFF), when it is PLTKD in the form $H = \sum_{i=1}^{\mathsmaller{J}} H_i$, $J \in \mathbb{N}$, $J = O(\poly(N))$ and has a non-degenerate ground state $\psi_0(H)$, that is separated from all of the excited states by an energy gap sized as $\Omega(1/\poly(N))$, where each $H_i$ is $O(1/\poly(N))$-almost node-determinate and has $\psi_0(H)$ as its ground state or one of its ground states, $\forall i \in [1,J]$. Each such additive partial Hamiltonian $H_i$, $i\in[1,J]$ is called GFF-compatible with respect to $H$.
\vspace{-1.5ex}
\end{definition}

The ground state $\psi_0(H)$ of any efficiently computable GFF partial Hamiltonian $H = \sum_{i=1}^{\mathsmaller{J}} H_i$ can be efficiently simulated through Markov chain Monte Carlo (MCMC) using a Markov transition matrix $[\psi_0(H;r)]\,\langle r\exp[-\tau\Base(H)]|q\rangle\,[\psi_0(H;q)]^{-1}$, $(r,q) \in \calC\times\calC$, $r$ and $q$ being restricted to within the same nodal cell $\calN \subseteq \calC$, where $\psi_0(H;r)/\psi_0(H;q) = \psi_0(H_i;r)/\psi_0(H_i;q)$ must hold true for a certain $i \in [1,J]$, and the quotient can be computed efficiently for all $(r,q) \in \calN\times\calN$ such that $\langle r|\exp[-\tau\Base(H_i)]|q\rangle \neq 0$, by the premise of $H$ being GFF.

\begin{definition}{(Directly Frustration-Free Hamiltonian)}\label{defiDFFH}\\
A partial Hamiltonian $H$ on a configuration space $\calC$ of size $N \defeq \size(C)$ is called directly frustration-free (DFF), when it is a direct sum of partial Hamiltonians in the form $H = \sum_{i=1}^{\mathsmaller{J}} H_i$, $J \in \mathbb{N}$, $J = O(\poly(N))$, with the energy gap between the ground state and the excited states $\Omega(1/\poly(N))$ lower-bounded for $H$ and every $H_i$, $i\in[1,J]$, where each $H_i$, $i\in[1,J]$ has $0$ as the smallest eigenvalue and moves an $O(\log(N))$-sized configuration subspace $\calC_i$, which is annihilated by any other $H_j$, $j\in[1,J]$, $j\ne i$, namely, $H_j\phi_i=0$ holds true, $\forall\phi_i \in L^2(\calC_i)$, so long as $j\neq i$. Each such additive partial Hamiltonian $H_i$, $i\in[1,J]$ is called DFF-compatible with respect to $H$.
\vspace{-1.5ex}
\end{definition}

Any Gibbs operator $\exp(-\tau H)$, $\tau\in(0,\infty]$ of any efficiently computable DFF partial Hamiltonian $H = \sum_{i=1}^{\mathsmaller{J}} H_i$ can be efficiently simulated by an identity $\exp(-\tau H) = I + \sum_{i=1}^{\mathsmaller{J}} [\,\exp(-\tau H_i) - I\,]$, due to the premise that each $H_i$, $i\in[1,J]$ only moves an $O(\log(N))$-sized configuration subspace, which is annihilated by any other $H_j$, $j\in[1,J]$, $j\ne i$, consequently, $H_iH_j = H_jH_i = 0$ holds true so long as $j\neq i$. In particular, when $\tau > 0$ is sufficiently large, with respect to any fixed $q_0\in\calC$, the Gibbs wavefunction $\langle\cdot|\exp(-\tau H)|q_0\rangle$ is essentially a ground state of $H$, which can be efficiently simulated using either a constrained path integral method \cite{Ceperley91,Ceperley96,Zhang97,Zhang04,Kruger08} or a fixed-node diffusion method \cite{Anderson75,Anderson76,Ceperley80,Ceperley81,Caffarel88I,Caffarel88II,Ceperley91,Cances06}, whose nodal cells can always be efficiently and uniquely determined, even though the partial Hamiltonian $H$ may have degenerate ground states.

In a constrained path integral method, a Feynman slab $\Slab(\calC_0,\calC_{\mathsmaller{L}},H,\tau)$, $L \in \mathbb{N}$, also regarded as a Feynman substack consisting of the single Feynman slab, is divided into a number $1+L$ of Feynman slices, including the first with $\calC_0 \cong \calC$ and the last with $\calC_{\mathsmaller{L}} \cong \calC$, and a Feynman path $\{q_m\}_{m\in[0,\mathsmaller{M}]}$ undergoes a MCMC random walk in the space $\{q_0\}\times\prod_{m=1}^{\mathsmaller{M}}\calC_m$, with $\calC_m \cong \calC$ for all $m\in[1,M]$, subject to the constraint of no crossing nodal surfaces, and following a Metropolis-Hastings \cite{Metropolis53,Hastings70} or Gibbs \cite{Geman84,Liu01,Li09,Bremaud17} strategy of importance sampling, which uses the product of Gibbs transition amplitudes assigned to said Feynman path as a {\em density proxy} to the probability distribution of Feynman paths, where the density proxy needs only to be proportional to the probability density of Feynman paths, with a proportionality constant fixed but unknown.

In a fixed-node diffusion method, with respect to any fixed $q_0 \in \calC_0 \cong \calC$ and at a sufficiently large $\tau > 0$, the Gibbs transition amplitude $\phi_0(q_{\mathsmaller{L}}) \defeq \langle q_{\mathsmaller{L}}|\exp(-\tau H)|q_0\rangle$ can be efficiently computed for all $q_{\mathsmaller{L}} \in \calC_{\mathsmaller{L}} \cong \calC$, $L \in \mathbb{N}$, such computed numerical values for the wavefunction $\phi_0(q_{\mathsmaller{L}})$, $q_{\mathsmaller{L}}\in\calC_{\mathsmaller{L}}$, which is necessarily a ground state of $H$, can be used to guide an MCMC random walk of $q_{\mathsmaller{L}}$ over $\calC_{\mathsmaller{L}}$ with Metropolis-Hastings or Gibbs importance sampling using $|\phi_0(q_{\mathsmaller{L}})|^2$ as a density proxy, so to generate random samples of $q_{\mathsmaller{L}} \in \calC_{\mathsmaller{L}}$ according to the probability density $|\phi_0(q_{\mathsmaller{L}})|^2/\|\phi_0(q_{\mathsmaller{L}})\|^2$. It is usually desired to keep $q_{\mathsmaller{L}}$ within a fixed nodal cell of $\phi_0(\cdot)$, and there is no difficulty to locate the nodal surfaces when the function $\phi_0(\cdot)$ itself is efficiently computable. It may be further desired to keep $q_{\mathsmaller{L}}$ within the Ceperley reach with respect to the reference point $q_0\in\calC_0$, whose boundary surface is still not hard to locate for the Feynman slab $\Slab(\calC_0,\calC_{\mathsmaller{L}},H,\tau)$.

\begin{definition}{(Separately Frustration-Free Hamiltonian)}\label{defiSFFH}\\
A partial Hamiltonian $H$ that generates a Gibbs operator $\exp\{-t[H{-}\lambda_0(H)]\}$ of interest at a fixed $t \in (0,\infty]$ is called separately frustration-free (SFF), when it is a sum of partial Hamiltonians as $H = \sum_{k=1}^{\mathsmaller{K}} H_k$, $K\in\mathbb{N}$, $K = O(\poly(N))$, $N \defeq \size(H)$, with each shifted partial Hamiltonian $H_k-\lambda_0(H_k)$, $k \in [1,K]$ being either DFF or GFF, and the Gibbs operator $\exp\{-t[H{-}\lambda_0(H)]\}$ can be simulated, to within an error no more than $O(1/\poly(N))$, by iterating a sequence of Gibbs operators $\{\exp\{-\tau[H_k{-}\lambda_0(H_k)]\} : k\in[1,K]\}$, $\tau \in (0,\infty]$ for no more than $O(\poly(N))$ times.
\vspace{-1.5ex}
\end{definition}

In one exemplary embodiment, where $t \in(0,\infty)$ is finite, the Gibbs operator $\exp\{-t[H{-}\lambda_0(H)]\}$ is approximated by iterating a Lie-Trotter-Kato product $\prod_{k=1}^{\mathsmaller{K}}\exp\{-[\tau(H_k-\lambda_0(H_k)]\}$, $\tau = t/m$ for $m\in\mathbb{N}$ times, with $m = \poly(N)$ being sufficiently large for a sufficiently accurate approximation. In another exemplary embodiment, where $t$ is so large that the Gibbs operator $\exp\{-t[H{-}\lambda_0(H)]\}$ is essentially the same as the ground state projector $|\psi_0(H)\rangle\langle\psi_0(H)|$, $\tau$ is also chosen to be large, and each Gibbs operator $\exp\{-\tau[H_k{-}\lambda_0(H_k)]\}$ is applied in turn to effect a ground state projector $|\psi_0(H_k)\rangle\langle\psi_0(H_k)|$, $\forall k \in [1,K]$, and a process of iterating a sequence of ground state projectors $\{|\psi_0(H_k)\rangle\langle\psi_0(H_k)|\}_{k \in [1,\mathsmaller{K}]}$ mixes rapidly and converges to the projector $|\psi_0(H)\rangle\langle\psi_0(H)|$.

\begin{lemma}{(Homophysical Images of DFF, GFF, and SFF Partial Hamiltonians)}\label{HomoPhysDFFGFFSFF}\\
Let $\mathfrak{M}:(\calC_1,\calH_1,\calB_1)\mapsto(\calC_2,\calH_2,\calB_2)$ be a homophysics. If an FBM tensor monomial $h\in\calB_1$ is (doubly/universally) $\epsilon$-almost node-determinate, then $\mathfrak{M}(h)\in\calB_2$ is also (doubly/universally) $\epsilon$-almost node-determinate correspondingly. Therefore, if $H\in\calB_1$ is DFF, GFF, or SFF, then so is $\mathfrak{M}(H)\in\calB_2$ respectively.
\end{lemma}
\vspace{-4.0ex}
\begin{proof}[\iftoggle{ForUSPTO} {Demonstration} {Proof}]
The proposition follows straightforwardly from definitions. 
\end{proof}

\begin{definition}{(DFF-FS, GFF-FS, SFF-FS Partial Hamiltonians and Systems)}\label{defiSFFFS}\\
A partial Hamiltonian is called DFF fermionic Schr\"odinger (DFF-FS), GFF fermionic Schr\"odinger (GFF-FS), or SFF fermionic Schr\"odinger (SFF-FS), when it is both fermionic Schr\"odinger and DFF, GFF, or SFF, respectively. A quantum system is called SFF-FS when its governing Hamiltonian is SFF-FS.
\end{definition}

\begin{lemma}{(Path Rectification for Feynman Stacks of SFF-FS Hamiltonians)}\label{PathRectFeynStack}\\
For an SFF-FS partial Hamiltonian $H=\sum_{k=1}^{\mathsmaller{K}}H_k$ on a configuration space $\calC$ of size $N\defeq\size(\calC)$, with $K\in\mathbb{N}$, $K=O(\poly(N))$, in conjunction with any $q_0\in\calC$ and any $\tau\in(0,\infty]$, the Gibbs wavefunction $\langle q_{\mathsmaller{K}}|\exp\{-\tau[H-\lambda_0(H)]\}|q_0\rangle$, $q_{\mathsmaller{K}} \in \calC_{\mathsmaller{K}} \cong \calC$ associated with a Feynman stack made of $K$ Feynman slabs $\{\Slab(\calC_{k-1},\calC_k,H_k,\tau) \}_{k\in[1,\mathsmaller{K}]}$ can be computed efficiently, to within an error no more than $O(1/\poly(N))$, by integrating over rectified Feynman chains only, and using a constrained path integral method to simulate the Feynman slabs, when all $\{H_k\}_{k\in[1,\mathsmaller{K}]}$ are DFF.

In case some $H_k$, $k\in[1,K]$ are GFF, then the ground state $\psi_0(H)$ can be efficiently simulated, to within an error no more than $O(1/\poly(N))$, by integrating over rectified Feynman chains only, using a fixed-node diffusion method to simulate each Feynman slab associated with a GFF partial Hamiltonian, and using either a fixed-node diffusion method or a constrained path integral method to simulate each Feynman slab associated with a DFF partial Hamiltonian.
\end{lemma}
\vspace{-4.0ex}
\begin{proof}[Proof]
It is without loss of generality and does not change physics to have all of the FBM tensor monomials $H_k$, $k\in[1,K]$ affine transformed so that $\lambda_0(H_k) = 0$, $\forall k \in [1,K]$. Just as in equation (\ref{FermiGibbsFeynStack}), let $H_0 \defeq 0$ and $\{(C_k \cong \calC,H_k) : k\in[0,K]\}$ be the boundary and intermediate Feynman slices, such that $\forall k \in [1,K]$, the $k$-th Feynman slab starts at $(\calC_{k-1},H_{k-1})$ and ends at $(\calC_k,H_k)$. Then, $\forall q_{\mathsmaller{K}} \in \calC_{\mathsmaller{K}}$, the Gibbs wavefunction for the product of Gibbs operators $\prod_{k=1}^{\mathsmaller{K}}\exp(-\tau H_k)$ associated with said Feynman stack reads
\begin{equation}
\langle q_{\mathsmaller{K}\,} |\, P_{\!\mathsmaller{F}} \, {{\textstyle \prod_{k=1}^{\mathsmaller{K}} }} \exp(-\tau H_k) \,|\, q_0\rangle = \int_{\calC_{\mathsmaller{K}-1}} \!\!\!\!\!\!\! \cdots \, \int_{\calC_1} {\textstyle{ \prod_{k=1}^{\mathsmaller{K}} }} \langle q_k \,|\, P_{\!\mathsmaller{F}} \exp(-\tau H_{k\mathsmaller{B}}) \,|\, q_{k-1}\rangle \; {\textstyle{ \prod_{k=1}^{\mathsmaller{K}-1} }} dq_k \,, \label{PathRectGibbsWaveFunc}
\end{equation}
where $H_{k\mathsmaller{B}} \defeq \Base(H_k)$, $\forall k \in [1,K]$. The Gibbs wavefunction is computed by integrating contributions of Feynman chains $\{\Bun(t_{k-1},q_{k-1};t_k,q_k) : k \in[1,K]\}$ over the intermediate configuration coordinates $\{q_k\}_{k\in[1,\mathsmaller{K})}$.

It turns out sufficient to consider and integrate over rectified Feynman chains only, because $\forall k\in[1,K)$, a fermion exchange operator $\pi_k \in G_{\rm ex}$ acting on a $q_k \in \calC_k$ introduces a $\sign(\pi_k)$ factor to both the Gibbs wavefunction associated with the $k$-th Feynman slab and that with the $(k+1)$-th Feynman slab, yielding a $\sign(\pi_k)^2 = +1$ factor without real effect to equation (\ref{PathRectGibbsWaveFunc}), while for $k=K$, a fermion exchange operator $\pi_{\mathsmaller{K}} \in G_{\rm ex}$ acting on a $q_{\mathsmaller{K}} \in \calC_k$ simply reflects $q_{\mathsmaller{K}}$ between the symmetric positive and negative nodal cells, where a restriction of the Gibbs wavefunction to a positive nodal cell is sufficient to determine and derive all of the quantum physics.

If all $\{H_k\}_{k\in[1,\mathsmaller{K}]}$ are DFF, then every fermionic Gibbs transition amplitude $\langle q_k|\exp(-\tau H_k)|q_{k-1}\rangle$, $\tau\in(0,\infty]$ associated with any Feynman bundle $\Bun(t_{k-1},q_{k-1};t_k,q_k)$ can be computed efficiently by a boltzmannonic path integral for the Gibbs operator $\exp(-t H_{k\mathsmaller{B}})$ over all of the Feynman paths that do not cross any nodal surface of the Gibbs wavefunctions $\{\langle q(t)|\exp(-t H_k)|q_{k-1}\rangle : (t,q(t)) \in [0,\tau] \times \calC\}$, which is a procedure well known as the constrained path integral method \cite{Ceperley91,Ceperley96,Zhang97,Zhang04,Kruger08}. The Gibbs wavefunctions and their  nodal surfaces are efficiently computable, because each $H_k$ is DFF, whose Gibbs wavefunctions with respect to any fixed $q_{k-1} \in \calC_{k-1}$ has an $O(\log(N))$-bounded dimensionality.

When some $H_k$, $k\in[1,K]$ are GFF, the Gibbs operator $\exp(-\tau H) = P_{\!\mathsmaller{F}} \exp[-\tau\Base(H)]$ needs to take the large-$\tau$ limit and reduces to the ground state projector. A $\tau = \Theta(\poly(N))$ can be chosen and be sufficiently large, since the ground states are separated by a polynomial-sized gap from the excited states. For each Feynman slab $\Slab(\calC_{k-1},\calC_k,H_k,\tau)$, $k\in[1,K]$, let $q_{k-1}\in\calC_{k-1}$ be a fixed configuration point and $\langle\cdot|\exp(-\tau H_k)|q_{k-1}\rangle$ be a Gibbs wavefunction of interest, which is essentially the same as the ground state $\psi_0(H_k;\cdot)$, that can be simulated efficiently using the fixed-node diffusion method by \myLemma\;\ref{QuasiStochasticOper} and  \iftoggle{ForUSPTO} {Derived Utility} {corollary} \ref{FixedNodeMethod}, with a Markovian transition matrix $[\psi_0(H_k;r_k)]\,\langle r_k|\exp(-\tau H_{k\mathsmaller{B}})|q_k\rangle\,[\psi_0(H_k;q_k)]^{-1}$, where $q_k\in\calC_k$ and $r_k\in\calC_k$ are in the same nodal cell that contains a point $\pi q_{k-1}$, $\pi\in G_{\rm ex}$ is an even permutation. The quotient $\psi_0(H_k;r_k)/\psi_0(H_k;q_k)$ is efficiently computable, $\forall(r_k,q_k)\in\calC_k\times\calC_k$, since $H_k$ is either DFF or GFF. Obviously, if $H_k$, $k\in[1,K]$ is in fact DFF, then the ground state $\langle\cdot|\exp(-\tau H_k)|q_{k-1}\rangle$ can be simulated efficiently using a constrained path integral method as well, since the Gibbs wavefunction $\langle\cdot|\exp(-t H_k)|q_{k-1}\rangle$ is efficiently computable for all $t\in(0,\tau]$, $\tau = \Theta(\poly(N))$.
\vspace{-1.5ex}
\end{proof}

Besides the fermionic Schr\"odinger Hamiltonians/systems, another class of Hamiltonians/systems that is useful for computations and simulations consists of FBM tensor polynomials as a sum of FBM tensor monomials that are essentially bounded.

\begin{definition}{(Essentially Bounded Partial Hamiltonian)}\label{defiEssenBoundHamil}\\
Given a partial Hamiltonian $H$ on a Hilbert space $\calH$ and an $\epsilon>0$, an $\epsilon$-essential computational subspace for $H$ is a Hilbert subspace $\calH'\subseteq\calH$, onto which $H$ is restricted to become $H'\defeq H|_{\calH'}$, such that
\begin{align}
\left|\langle\psi_0(H)|\psi_0(H')\rangle\right|^2 \ge\, (1-\epsilon)\,\langle\psi_0(H)|\psi_0(H)\rangle\,\langle\psi_0(H')|\psi_0(H')\rangle, \\[0.75ex]
(1-\epsilon) \,\le\, [\lambda_1(H')-\lambda_0(H')]\,/\,[\lambda_1(H)-\lambda_0(H)] \,\le\, (1+\epsilon).
\end{align}
A Lie-Trotter-Kato decomposable partial Hamiltonian $H=\sum_{k=1}^{\mathsmaller{K}}H_k\in\calL_0(\calH)$, $K\in\mathbb{N}$ is said to be $\epsilon$-essentially bounded by $\Lambda$, $\epsilon>0$, $\Lambda>0$, if an $\epsilon$-essential computational subspace $\calH'\subseteq\calH$ exists for $H$ and $\{H_k\}_{k\in[1,K]}$, such that the restriction of $H_k$ to $\calH'$ is bounded by $\Lambda$, namely, the operator norm $\|H_k|_{\calH'}\|\le\Lambda$, $\forall k\in[1,K]$.
\end{definition}


\begin{definition}{(DFF-EB, GFF-EB, SFF-EB Partial Hamiltonians and Systems)}\label{defiSFFEB}\\
A partial Hamiltonian is called DFF essentially bounded (DFF-EB), GFF essentially bounded (GFF-EB), or SFF essentially bounded (SFF-EB), when it is both essentially bounded and DFF, GFF, or SFF, respectively. A quantum system is called SFF-EB when its governing Hamiltonian is SFF-EB.
\vspace{-1.5ex}
\end{definition}

\myLemma\;\ref{QuasiStochasticOper} and \myLemma\;\ref{PathRectFeynStack} imply that the efficient classical Monte Carlo simulatability extends to any SFF-EB Hamiltonian $H=\sum_{k=1}^{\mathsmaller{K}}H_k$, because, for any $\tau>0$, the Gibbs operators $G(\tau H_k)\defeq\exp(\mathsmaller{-}\tau H_k)$, $k\in[1,K]$ can be similarity-transformed into quasi-stochastic operators $P(\tau H_k)\defeq[\psi_0^*(H)]G(\tau H_k)[\psi_0^*(H)]^{\mathsmaller{-}1}$, $k\in[1,K]$, by the same diagonal operator associated with the same wavefunction $\psi_0(H)$, where each operator $P(\tau H_k)$ fixes the probability distribution $|\psi_0(H)|^2$, while $K^{\mathsmaller{-}1}\sum_{k=1}^{\mathsmaller{K}}P(\tau H_k)$ attenuates the excited states of $H$, therefore, repeated random applications of the operators $\{P(\tau H_k)\}_{k\in[1,K]}$ as transition probability matrices, with $\tau>0$ taking a value that is sufficiently large, form a Markov chain that projects out $|\psi_0(H)|^2$ as the stationary distribution.

More specifically, with $\tau\rightarrow\mathsmaller{+}\infty$ asymptotically, $\forall k\in[1,K]$, the Gibbs operator $G(\tau H_k)$ approaches a limit $G(\mathsmaller{+}\infty H_k)\defeq\lim_{\tau\rightarrow\mathsmaller{+}\infty}\exp(\mathsmaller{-}\tau H_k)\defeq\Pi_k=\sum_{l\in\gmul(H_k)}|\psi_{0,l}(H_k)\rangle\langle\psi_{0,l}(H_k)|$ exponentially fast, where $\Pi_k$ is the orthogonal projection onto the null space of $H_k$, {\it i.e.}, its ground state Hilbert subspace, which has $\{\psi_{0,l}(H_k)\}_{l\in\gmul(H_k)}$ as an orthonormal basis. By the SFF property of $H$, the subset $\calU(H)$ of node-uncertain configuration points has an asymptotically negligible $|\psi_0(H)|^2$-measure, such that $\forall k\in[1,K]$,
\begin{equation}
\langle r|P(\textstyle{\mathsmaller{+}\infty}H_k)|q\rangle\approx{\textstyle\sum_{l\in\gmul(H_k)}}\left|\psi_{0,l}(H_k;r)\right|^2\times(q\in\supp(\psi_{0,l}(H_k))),\;\forall q,r\notin\calU(H), \label{r_PinftyHk_q}
\end{equation}
where $P(\mathsmaller{+}\infty H_k)\defeq\lim_{\tau\rightarrow\mathsmaller{+}\infty}P(\tau H_k)$, and $(q\in\supp(\psi_{0,l}(H_k)))$ is binary-valued to either $1$ or $0$ depending on whether or not $q$ falls into the support of $\psi_{0,l}(H_k)$, $\forall l\in\gmul(H_k)$. Furthermore, for any excited state of $H$ represented by a wavefunction $\psi_n(\cdot)\defeq\psi_n(H;\cdot)$, $n>0$, the inequality $\langle\psi_n|\exp(\mathsmaller{-}\tau H)|\psi_n\rangle/\langle\psi_n|\psi_n\rangle\le\exp(\mathsmaller{-}\tau\lambda_1)<1-\tau\lambda_1/2$ holds, $\forall\tau\in(0,1/\lambda_1]$, $\lambda_1\defeq\lambda_1(H)$. Since $H=\sum_{k=1}^{\mathsmaller{K}}H_k$ is an essentially bounded Lie-Trotter-Kato decomposition into non-negative operators, $\forall\tau\in(0,1/\lambda_1]$, for any excited state $\psi_n=\psi_n(H)$, $n>0$, there exists an $m\in\mathbb{N}$ such that
$$\textstyle{
\left[\prod_{k=1}^{\mathsmaller{K}}\exp(\mathsmaller{-}\tau H_k/m)\right]^m\ge\left[\prod_{k=1}^{\mathsmaller{K}}(I-\tau H_k/m)\right]^m\ge\left(I-\sum_{k=1}^{\mathsmaller{K}}\tau H_k/m\right)^m\ge I-\sum_{k=1}^{\mathsmaller{K}}\tau H_k,
}$$
$$\textstyle{
\left\langle\psi_n\!\left|\left[\prod_{k=1}^{\mathsmaller{K}}\exp(\mathsmaller{-}\tau H_k/m)\right]^m\right|\!\psi_n\right\rangle<(1+\tau\lambda_1/6)\left\langle\psi_n\!\left|\exp(\mathsmaller{-}\tau H)\right|\!\psi_n\right\rangle<(1-\tau\lambda_1/3)\,\langle\psi_n|\psi_n\rangle,
}$$
consequently, $\langle\psi_n|I-\sum_{k=1}^{\mathsmaller{K}}\tau H_k|\psi_n\rangle<(1-\tau\lambda_1/3)\,\langle\psi_n|\psi_n\rangle$, and further
$$\left\langle\psi_n\!\left|{\textstyle{\sum_{k=1}^{\mathsmaller{K}}}}(I-\Pi_k)\right|\!\psi_n\right\rangle\ge\left\langle\psi_n\!\left|\Lambda^{\mathsmaller{-}1}\,{\textstyle{\sum_{k=1}^{\mathsmaller{K}}}}H_k\right|\!\psi_n\right\rangle>(\lambda_1/3\Lambda)\,\langle\psi_n|\psi_n\rangle,$$
with $\Lambda\defeq\max_{\,k\in[1,K]}\{\|H_k\|\}$. Therefore, for any excited state $\psi_n=\psi_n(H)$, $n>0$, the inequality
\begin{equation}
\left\langle\psi_n\!\left|K^{\mathsmaller{-}1}\,{\textstyle{\sum_{k=1}^{\mathsmaller{K}}}}P(\mathsmaller{+}\infty H_k)\right|\!\psi_n\right\rangle=\left\langle\psi_n\!\left|K^{\mathsmaller{-}1}\,{\textstyle{\sum_{k=1}^{\mathsmaller{K}}}}\Pi_k\right|\!\psi_n\right\rangle<(1-\lambda_1/3K\Lambda)\,\langle\psi_n|\psi_n\rangle
\end{equation}
holds, meaning that the transition probability matrix $K^{\mathsmaller{-}1}\sum_{k=1}^{\mathsmaller{K}}P(\mathsmaller{+}\infty H_k)$, representing random applications of matrices $\{P(\mathsmaller{+}\infty H_k)\}_{k=1}^{\mathsmaller{K}}$, attenuates any excited state of $H$ rapidly.

In practical numerical computations as well as complexity analyses, the real or imaginary time axis is often discretized so that physical and Markov chain dynamics are represented discretely, where the discretization error can be made arbitrarily small efficiently by using the Lie-Trotter-Kato product formulas, especially the higher-order symmetrized versions of them \cite{Creutz89,Suzuki90,Yoshida90,Suzuki91,Suzuki92,Yoshida93,Chin02,Suzuki06,Sakkos09,Dornheim15,Yan17}

In the following I shall present, by ways of example and by no means of limitation, several Monte Carlo algorithms using the Metropolis-Hastings \cite{Metropolis53,Hastings70} or the Gibbs \cite{Geman84,Liu01,Li09,Bremaud17} method of importance sampling to simulate the ground states of SFF Hamiltonians, which take advantage of \myLemma\;\ref{PathRectFeynStack} and overcome the dreaded sign problem. Even though the exemplary algorithms are mostly illustrations of simulating ground states, from and by virtue of \myLemma\;\ref{PathRectFeynStack}, one skilled in the art will see straightforwardly that the same methods and algorithms can be applied or adapted slightly to simulate any Gibbs kernel at a finite imaginary time, also known as a thermal Green's function or a thermal density matrix, associated with an SFF partial Hamiltonian, so long as the SFF partial Hamiltonian is a sum of all DFF partial Hamiltonians.

Some of the algorithms are exhibited in a framework with the configuration space fully discretized, while some others are presented in the mathematical language of an all-continuous configuration space. As being noted previously, no loss of generality is incurred by choosing any such specific configuration space, because an actual or model physical system with either all-discrete or all-continuous dynamical variables can be made universal, in terms of both numerical simulations for or based on it and the computational power of it as a computing machine. Furthermore, methods and steps as specified in the algorithms can be combined and generalized straightforwardly to handle the case of a continuous-discrete product configuration space with ease.

For any multi-species many-body system of concern in this \iftoggle{ForUSPTO} {specification} {presentation}, even if some of the physical dynamical variables are intrinsically continuous and conventionally represented by canonical coordinates on Riemannian manifolds $\calM_s$, $s\in[1,S]$ as substrate spaces, such continuous substrate space can still be approximated by a set of grid/lattice points sampled at a desired resolution, that serve as vertices to form a connected graph $\calG_s$, $s\in[1,S]$, where each vertex is connected with no more than a fixed number of neighbors, by an edge representing the kinetics of a particle hopping between vertices. Taking advantage of its natural Cartesian product structure, the configuration space of such a multi-species many-body system is approximated and represented nicely by a Cartesian product graph $\calG\defeq\prod_{s=1}^{\mathsmaller{S}}\calG_s^{n_s}$ \cite{Imrich08}, with which and for any FBM interaction $h$, the notions of $h$-moved and $h$-fixed factor spaces (now subgraphs) $\calE_h$ and $\calE_h^{\mathsmaller{\perp}}$, $h$-moved and $h$-fixed $\calE_h$-cosets, $h$-induced ODSDs and its node-determinacy, the notions of a DFF or GFF Hamiltonian and its local node-determinacy, all carry over straightforwardly.

When it is clear that a product graph $\calG\defeq\prod_{s=1}^{\mathsmaller{S}}\calG_s^{n_s}$ as a discrete configuration space is the result of spatial discretization from a bounded Riemannian manifold as a continuous-discrete product configuration space $\calC=\prod_{s=1}^{\mathsmaller{S}}\calC_s^{n_s}$, the finite graph $\calG$ can inherit from $\calC$ the measures of configuration space dimension $\dim(\calG)\defeq\dim(\calC)$, diameter $\diam(\calG)=\diam(\calC)$, and computational size $\size(\calG)\defeq\dim(\calG)+\diam(\calG)=\size(\calC)$. Otherwise, without any reference to a continuous-discrete product configuration space being made, a finite product graph $\calG\defeq\prod_{s=1}^{\mathsmaller{S}}\calG_s^{n_s}$ can be assigned with estimated measures of configuration space dimension $\dim(\calG)\defeq\sum_{s=1}^{\mathsmaller{S}}n_s$, diameter $\diam(\calG)=\left(\sum_{s=1}^{\mathsmaller{S}}\diam(\calG_s)^2\right)^{1/2}$, and computational size $\size(\calG)\defeq\dim(\calG)+\diam(\calG)$, where for each $s\in[1,S]$, the diameter $\diam(\calG_s)$ of each $\calG_s$ is defined as $\diam(\calG_s)\defeq\max\{n\in\mathbb{N}:\calG_s\;\textit{having a path formed by}\;n\;\textit{consecutive edges}\}$. Such estimated measures may not be exactly the same as what might be inherited from a continuous-discrete product configuration space, but they are surely within a constant multiplier, and suffice for the purpose of asymptotic analyses, when $\size(\calG)$ grows as the result of an increasing $S\in\mathbb{N}$.

On a continuous submanifold $\calM\subseteq\calC$ of a general configuration space $\calC$, a nodal cell is an open subset of $\calM$, for which topological connectedness is equivalent to path-connectedness. When $\calM$ is discretized, the nodal surfaces are no longer located precisely, rather they fall mostly between pairs of adjacent lattice points at which the ground state wavefunction assumes different signs. As the lattice resolution increases, simulation errors due to the location uncertainty of the nodal surfaces, which does not exceed half of the grid size, can be made arbitrarily small. Furthermore, the so-called ``lever rule'' for fixed-node diffusion QMC of lattice fermions \cite{vanBemmel94,Sorella15} with a DFF or GFF Hamiltonian $H=\sum_{k=1}^{\mathsmaller{K}}H_k$ can pinpoint the nodal surfaces more accurately, and the introduction of a ``nodal boundary potential''  \cite{vanBemmel94,Sorella15} can represent the effects of $\psi_0(H;r\in\calM)$ nodal surface crossings exactly without error.

With a properly discretized configuration space $\calG$, the Hilbert space $L^2(\calG;\mathbb{K})$ of quantum states and linear operators on $L^2(\calG;\mathbb{K})$, $\mathbb{K}\in\{\mathbb{R},\mathbb{C}\}$, all have finite dimensions, so all quantum physics are described and solved by matrix algebra, saving much mathematical technicality of functional analysis. Particularly, in such lattice formulations, all operators as finite matrices are bounded, and their polynomial Lie-Trotter-Kato decompositions follow straightforwardly from Lie's product formula, as a trivial consequence of the Baker-Campbell-Hausdorff formula. Also, a spatially local Hamiltonian becomes a matrix that couples each lattice point with no more than an $O(\dim(\calG))$ number of nearby neighbors. Recall that the graph-theoretic concept of connectedness is also based on path connections. So path-connectedness is generally applicable to continuous, discrete, as well as continuous-discrete product configuration spaces. Furthermore, a quantitative notion of path-connectedness will prove handy in computational complexity considerations.

\begin{definition}
A bounded and connected subset $\calD$ of a configuration space $\calC$, with $\calC$ being either continuous as a Riemannian manifold or discrete as a graph, is called polynomially path-connected, if any two points in $\calD$ is connected by a path in $\calD$ whose length is $O(\poly(\size(\calC)))$, where the length of a path is measured either through the Riemannian metric or in terms of the number of graph edges, respectively.
\vspace{-1.5ex}
\end{definition}

Polynomial path-connectedness may necessarily hold for any nodal cell of a quantum system on a suitably regular configuration space, especially when the Hamiltonian has a polynomially gapped ground state. But let's save a rigorous treatment for a dedicated mathematical discourse elsewhere, and simply posit the property when necessary in this \iftoggle{ForUSPTO} {specification} {presentation}. Let $\Gamma(M)$ denote the directed graph associated with a finite matrix $M$ \cite{Horn85}. The irreducibility of $M$ is equivalent to the fact that $\Gamma(M)$ is strongly connected. When $M$ is primitive, {\it i.e.}, aperiodic irreducible non-negative \cite{Horn85,Meyer00}, the polynomial boundedness of its index of primitivity $\gamma(M)$ is tantamount to the polynomial path-connectedness of $\Gamma(M)$, namely, that any ordered pair of vertices of $\Gamma(M)$ can be connected by a directed path whose length does not exceed the polynomially bounded integer $\gamma(M)$. With a self-adjoint matrix, directedness is immaterial for the associated graph, because any directed arc is always accompanied by another oppositely directed arc between the same pair of vertices.

It may be useful to note that, the term {\em sign graph} has been proposed \cite{Davies01} to denote a maximally connected subgraph of vertices on which a discrete eigen wavefunction $\psi\in L^2(\calG;\mathbb{K})$ assumes the same numerical sign, as opposed to the term nodal cell (or domain) in the case of a continuous configuration space. And, as far as a conventional boltzmannonic Schr\"odinger operator is concerned, possibly as the base boltzmannonic Hamiltonian of a fermionic Schr\"odinger operator, there have been theorems proved regarding the number of sign graphs in relation to the index of an energy eigenstate \cite{Davies01}, in close analogy to the famous theorem of Courant \cite{Courant89}  for a Schr\"odinger operator on a continuous configuration space.

On a bounded discrete configuration space $\calG$, any DFF or GFF Hamiltonian $H=\sum_{k=1}^{\mathsmaller{K}}H_k$ is discretized into a necessarily finite and {\em tensor-sparse} matrix, with tensor-sparsity referring to the property of $H$ being the sum of a $K=O(\poly(\size(\calG)))$ number of FBM tensor monomials $H_k$, $k\in[1,K]$, each of which being a tensor product of an $O(\poly(\size(\calG)))$ number of FBM interactions, with each FBM interaction coupling any lattice point in $\calG$ to at most an $O(\log(\size(\calG)))$ number of neighbor lattice points. By contrast, the dimensions of $H$, being a $|\calG|\times|\calG|$ matrix, can be much larger, as $|\calG|$ can be an exponential function of $\size(\calG)$. Consider discrete Gibbs operators $G_1(\beta M)\defeq I-\beta\beta_0\,\Base(M)$, with $M\in\{H,H_k,k\in[1,K]\}$, $\beta_0\defeq\left(\max\{\|\Base(H)\|,\|\Base(H_k)\|:k\in[1,K]\}\right)^{\mathsmaller{-}1}$, $\beta\in[0,1)$, and $G_*(\tau M)\defeq\exp(\mathsmaller{-}\tau M)$, $\tau\ge 0$, $M\in\{H,H_k,k\in[1,K]\}$. Let $\Gamma(H)\defeq\Gamma(G_1(H))$ be the graph associated with the Hamiltonian $H$, and $\Gamma^+(H)$ be the subgraph of $\Gamma(H)$ with all ``ground state sign changing'' edges removed, namely, $\Gamma^+(H)$ inherits all vertices of $\Gamma(H)$ but keeps only those edges of $\Gamma(H)$ whose two end vertices see the ground state $\psi_0(H)$ assuming the same sign. For each lattice point $q\in\calG\setminus\{r\in\calG:\psi_0(H;r)=0\}$, let $\Gamma(H;q)$ be the set of points that are connected from $q$ by an edge in $\Gamma(H)$, that is, $\Gamma(H;q)\defeq\{r\in\calG:\langle r|G_1(H)|q\rangle\neq 0\}$, or equivalently, $\Gamma(H;q)=\{r\in\calG:\exists k\in[1,K],\langle r|G_1(H_k)|q\rangle\neq 0\}$. Further, let $\Gamma^+(H;q)$ be the subset of $\Gamma(H;q)$ containing only the points on which the ground state of $H$ assumes the same sign as on $q$, namely, $\Gamma^+(H;q)\defeq\{r\in\calG:\langle r|G_1(H)|q\rangle\neq 0,\psi_0(H;r)\psi_0(H;q)>0\}$. Finally, let $\Gamma^*(H;q)$ be the subset of vertices containing $q$ and all that can be reached from $q$ via a connected path in $\Gamma^+(H)$. Evidently, $\Gamma^*(H;q)$ is exactly the discrete equivalent of the nodal cell $\calN(H;q)$.

Recall that $\calU(H)$ represents the set of node-uncertain configuration points for any DFF or GFF Hamiltonian $H$. Only it is noted that when $H$ is a finite matrix supported by a finite graph $\calG$, $\calU(H)$ is a subset of the vertices of $\calG$. On any subgraph $\calU^c(H)\cap\Gamma(H;q_0)$, with $H$ being DFF or GFF, $\calU^c(H)\defeq\calG\setminus\calU(H)$, $q_0\in\calU^c(H)$, $\psi_0(H;q_0)\neq 0$, a classical random walk can be defined in association with $H$, as represented by a transition probability matrix $Q_{\!\mathsmaller{H}}\defeq\{\Pr(r|q):q,r\in\calU^c(H)\cap\Gamma(H;q_0)\}=[\psi_0^*(H)]G_{\mathsmaller{\Box}}(H)[\psi_0^*(H)]^{\mathsmaller{-}1}$, with $[\psi_0^*(H)]=\diag(\{\psi_0^*(H;q)\!:\!q\in\calU^c(H)\cap\Gamma(H;q_0)\})$ being a diagonal matrix, $G_{\mathsmaller{\Box}}(H)\defeq G_1(H)$ when $H$ is DFF-FS or GFF-FS, while $G_{\mathsmaller{\Box}}(H)\defeq K^{\mathbb{-}1}\sum_{k=1}^{\mathsmaller{K}}G_*(\mathsmaller{+}\infty H_k)$ when $H$ is SFF-EB, where $G_*(\mathsmaller{+}\infty H_k)\defeq\lim_{\tau\mathsmaller{\rightarrow+\infty}}G_*(\tau H_k)$, $\forall k\in[1,K]$, such that $\forall q,r\in\calU^c(H)$, the transition probability
\begin{equation}
\Pr(r|q)=\left\{\begin{array}{l}\vspace{1.0ex}
\psi_0(H;r)\left\langle r|G_1(H)|q\right\rangle\psi_0(H;q)^{\mathsmaller{-}1} \times \Iver{q\in\supp(\psi_0(H))},\; \mbox{if}\;H\;\mbox{is FS}, \\
K^{\mathsmaller{-}1}\sum_{k=1}^{\mathsmaller{K}}\sum_{l=1}^{\gmul(H_k)}\left|\psi_{0,l}(H_k;r)\right|^2 \times \Iver{q\in\supp(\psi_{0,l}(H_k))},\; \mbox{if}\;H\;\mbox{is EB},
\end{array}\right. \nonumber
\end{equation}
in which the Iverson brackets $\Iver{q\in\supp(\psi_0(H))}$ and $\Iver{q\in\supp(\psi_{0,l}(H_k))}$ are $0$ or $1$ binary-values to exclude starting points outside of a ground state. In case $H$ is both DFF-FS and DFF-EB, or both GFF-FS and GFF-EB, either $G_{\mathsmaller{\Box}}(H)=G_1(H)$ or $G_{\mathsmaller{\Box}}(H)=K^{\mathbb{-}1}\sum_{k=1}^{\mathsmaller{K}}G_*(\mathsmaller{+}\infty H_k)$ can be used in conjunction with the corresponding transition probability matrix. Starting from $q_0\in\calU^c(H)$, $\psi_0(H;q_0)\neq 0$ and following the transition probability matrix $Q_{\!\mathsmaller{H}}$, a walker traverses a subset of configuration points
$$\Range(Q_{\!\mathsmaller{H}};q_0)\defeq\!\left\{r\in\calU^c(H)\cap\Gamma(H;q_0):\exists\,n\in\mathbb{N}\;\mbox{such that}\;\langle r|Q_{\!\mathsmaller{H}}^n|q_0\rangle\neq 0\right\},$$
that is called the range of $Q_{\!\mathsmaller{H}}$ from $q_0$. It follows from \iftoggle{ForUSPTO} {Auxiliary Utility} {Lemma} \ref{QuasiStochasticOper} and equation (\ref{r_PinftyHk_q}) that the matrix $Q_{\!\mathsmaller{H}}$ is bona fide stochastic and defines an irreducible Markov chain on $\Range(Q_{\!\mathsmaller{H}};q_0)\subseteq\calG$, which shares the same spectrum with the Gibbs operator $G_{\mathsmaller{\Box}}(H)$, with the stationary distribution (stationary probability vector) of $Q_{\!\mathsmaller{H}}$ being the stationary distribution of $G_{\mathsmaller{\Box}}(H)$ point-wise squared, much similar to that in the well-known quantum-stochastic mapping for boltzmannonic stoquastic Hamiltonians \cite{Bravyi10,Aharonov07a,Aharonov07b}. The Markov chain defined by the primitive ({\it i.e.}, aperiodic irreducible non-negative) matrix $Q_{\!\mathsmaller{H}}$ on $\Range(Q_{\!\mathsmaller{H}};q_0)$ is reversible and bound to converge to the unique stationary distribution $|\psi_0(H;q\in\Range(Q_{\!\mathsmaller{H}};q_0))|^2$.

Here goes one Monte Carlo algorithm for sampling the ground state of a discretized DFF-FS or GFF-FS Hamiltonian $H=\sum_{k=1}^{\mathsmaller{K}}H_k$, $K\in\mathbb{N}$ defined on a discrete, bounded configuration space $\calG$, with each $H_k$, $k\in[1,K]$ being either DFF or GFF, which solves the computational problem of simulating a many-body Hamiltonian as specified in Definition \ref{defiCompProbSimuManyBody} using a fixed-node diffusion type of approach, when the system is known to be DFF-FS or GFF-FS.

\renewcommand{\labelenumi}{\thealgorithm.\arabic{enumi}}
\begin{algorithm}{(Monte Carlo  Simulation of a DFF-FS or GFF-FS Hamiltonian)} \label{MCSimuDisc}\\
\vspace{-5.0ex}
\begin{enumerate}[start=0]
\item Choose a predetermined bound $\delta>0$ for numerical accuracy; Choose a warm start $q=q_0$, that is a random point $q_0\in\calU^c(H)$ with the promise of $|\log\psi_0(H;q_0)|$ being polynomially bounded; Set an iteration counter $N\in\mathbb{Z}$ to $0$; Choose a predetermined upper bound $N_{\max}\in\mathbb{N}$ for the iteration counter; Choose a predetermined number of samples $S_{\max}\in\mathbb{N}$; Initialize an index $k\in[1,K]$ of the FBM tensor monomials to 1; \label{AlgDiscStepZero}
\item Let $q=u_{q,k}\oplus u_{q,k}^{\mathsmaller{\perp}}$, $u_{q,k}\in\calE_k$, $u_{q,k}^{\mathsmaller{\perp}}\in\calE_k^{\mathsmaller{\perp}}$ be the unique $H_k$-induced ODSD of $q$, compute the ground states $\psi_0(H_k;\calE_k\oplus u_{q,k}^{\mathsmaller{\perp}})$, with the $L^2$-normed error guaranteed not to exceed $\delta$; \label{AlgDiscSolvePsi0Hk}
\item Compute the set $\Gamma^+(H_k;q)$ by enumerating all non-vanishing entries in the $u_{q,k}$-th column of $H_k$ and checking against $\psi_0(H_k;\calE_k\oplus u_{q,k}^{\mathsmaller{\perp}})$ for wavefunction sign changes; \label{AlgDiscGetGammaQPlus}
\item Compute the Gibbs transition amplitude $\langle r|G_1(H_k)|q\rangle$, $\forall r\in\Gamma^+(H_k;q)$, with the $L^2$-normed error guaranteed not to exceed $\delta$; \label{AlgDiscComputeG1Hk}
\item Advance the walker for a single step from $q$ to a random point $r\in\calU^c(H)\cap\Gamma^+(H_k;q)$ using importance sampling according to the transition probability
$$\Pr(r|q)={\textstyle{\scalebox{1.25}{$\sum$}_{l=1}^{\gmul(H_k)}}}\psi_{0,l}(H_k;r)\,\langle r|G_1(H_k)|q\rangle\,\psi_{0,l}(H_k;q)^{\mathsmaller{-}1} \times \Iver{q\in\supp(\psi_{0,l}(H_k))};$$ \label{AlgDiscQtoRJump}
\vspace{-5.0ex}
\item Assign the value of $r$ to $q$; Increase the index $k$ by $1$; \label{AlgDiscUpdateQ}
\item Repeat steps \ref{MCSimuDisc}.\ref{AlgDiscSolvePsi0Hk} through \ref{MCSimuDisc}.\ref{AlgDiscUpdateQ} until $k>K$, upon which time reset $k$ to $1$ and increase $N$ by $1$; \label{AlgDiscHkLoop}
\item Repeat the loop \ref{MCSimuDisc}.\ref{AlgDiscHkLoop} until $N=N_{\max}$, transfer the well-mixed walker location $q$ to a result container, and reset the value of $N$ to $0$; \label{AlgDiscMixing}
\item Repeat the loop \ref{MCSimuDisc}.\ref{AlgDiscMixing} $S_{\max}$ times to get $S_{\max}$ samples of walker locations. \label{AlgDiscRepeatPolyTimes}
\end{enumerate}
\end{algorithm}

\vspace{-1.0ex}
Another class of discrete-configuration-space-based algorithms randomize the applications of Gibbs operators $G_{\mathsmaller{\Box}}(H_k)\defeq G_1(H_k)$ or $G_{\mathsmaller{\Box}}(H_k)\defeq G_*(\mathsmaller{+}\infty H_k)$, $k\in[1,K]$ in association with the FBM tensor monomials of a Hamiltonian $H=\sum_{k=1}^{\mathsmaller{K}}H_k$ according to $H$ being DFF or GFF and FS or EB, while still adopting a fixed-node diffusion strategy.

\renewcommand{\labelenumi}{\thealgorithm.\arabic{enumi}}
\begin{algorithm}{(Monte Carlo  Simulation of a DFF or GFF and FS or EB Hamiltonian)} \label{MCSimuDiscV1}\\
\vspace{-5.0ex}
\begin{enumerate}[start=0]
\item Choose a predetermined bound $\delta>0$ for numerical accuracy; Choose a warm start $q=q_0$, that is a random point $q_0\in\calU^c(H)$ with the promise of $|\log\psi_0(H;q_0)|$ being polynomially bounded; Set an iteration counter $N\in\mathbb{Z}$ to $0$; Choose a predetermined upper bound $N_{\max}\in\mathbb{N}$ for the iteration counter; Choose a predetermined number of samples $S_{\max}\in\mathbb{N}$; \label{AlgDiscV1StepZero}
\item Pick a $k\in[1,K]$ randomly, let $q=u_{q,k}\oplus u_{q,k}^{\mathsmaller{\perp}}$, $u_{q,k}\in\calE_k$, $u_{q,k}^{\mathsmaller{\perp}}\in\calE_k^{\mathsmaller{\perp}}$ be the unique $H_k$-induced ODSD of $q$; If $\calE_k\neq\emptyset$, go on to the next step; Else, repeat this step; \label{AlgDiscV1PickK}
\item Compute the wavefunction $\psi_0(H_k;\calE_k\oplus u_{q,k}^{\mathsmaller{\perp}})$; If $H$ is DFF-EB or GFF-EB, go on to the next step; Else if $H$ is DFF-FS or GFF-FS, determine the nodal cell $\Gamma^+(H_k;q)$; \label{AlgDiscV1GetGammaQPlus}
\item If $H$ is DFF-EB or GFF-EB, go on to the next step; Else if $H$ is DFF-FS or GFF-FS, compute the Gibbs transition amplitude $\langle r|G_1(H_k)|q\rangle$, $r\in\Gamma^+(H_k;q)$, with the $L^2$-normed error not to exceed $\delta$; \label{AlgDiscV1ComputeG1Hk}
\item Advance the walker for a single step from $q$ to a random point $r\in\calU^c(H)$ using importance sampling, according to the transition probability
\begin{align}
\Pr(r|q) =\;& {\textstyle{\scalebox{1.25}{$\sum$}_{l=1}^{\gmul(H_k)}}}\psi_{0,l}(H_k;r)\,\langle r|G_1(H_k)|q\rangle\,\psi_{0,l}(H_k;q)^{\mathsmaller{-}1} \nonumber \\[0.75ex]
\times\;& \Iver{r\in\Gamma^+(H_k;q)} \,\times\, \Iver{q\in\supp(\psi_{0,l}(H_k))} \nonumber
\end{align}
for the case of $H$ being DFF-FS or GFF-FS, or
$$\Pr(r|q)={\textstyle{\scalebox{1.25}{$\sum$}_{l=1}^{\gmul(H_k)}}}\left|\psi_{0,l}(H_k;r)\right|^2 \times \Iver{q\in\supp(\psi_{0,l}(H_k))}$$
when $H$ is DFF-EB or GFF-EB; \label{AlgDiscV1QtoRJump}
\item Assign the value of $r$ to $q$; Increase the iteration counter $N$ by $1$; \label{AlgDiscV1UpdateQ}
\item Repeat steps \ref{MCSimuDiscV1}.\ref{AlgDiscV1PickK} through \ref{MCSimuDiscV1}.\ref{AlgDiscV1UpdateQ} until $N=N_{\max}$, transfer the well-mixed walker location $q$ to a result container, and reset the value of $N$ to $0$; \label{AlgDiscV1Mixing}
\item Repeat the loop \ref{MCSimuDiscV1}.\ref{AlgDiscV1Mixing} $S_{\max}$ times to get $S_{\max}$ samples of walker locations. \label{AlgDiscV1RepeatPolyTimes}
\end{enumerate}
\end{algorithm}

\vspace{-1.0ex}
A variation of Algorithm \ref{MCSimuDiscV1} is applicable when the DFF or GFF and FS or EB Hamiltonian $H=\sum_{k=1}^{\mathsmaller{K}}H_k$ is such that the FBM tensor monomials $H_k$, $k\in[1,K]$, up to operator isomorphisms, all belong to a prescribed collection of a polynomially bounded number of FBM Hamiltonians, called {\em template Hamiltonians}, $\forall q=u_{q,k}\oplus u_{q,k}^{\mathsmaller{\perp}}\in\calG$, $u_{q,k}\in\calE_k$, $u_{q,k}^{\mathsmaller{\perp}}\in\calE_k^{\mathsmaller{\perp}}$. The variation also illustrates, by way of example but no means of limitation, Monte Carlo samplings without a fixed-node diffusion strategy.

\renewcommand{\labelenumi}{\thealgorithm.\arabic{enumi}}
\begin{algorithm}{(Monte Carlo  Simulation of a DFF or GFF and FS or EB Hamiltonian)} \label{MCSimuDiscV2}\\
\vspace{-5.0ex}
\begin{enumerate}[start=0]
\item Choose a predetermined bound $\delta>0$ for numerical accuracy; Choose a warm start $q=q_0$, that is a random point $q_0\in\calU^c(H)$ with the promise of $|\log\psi_0(H;q_0)|$ being polynomially bounded; Label $q=q_0$ by $\lab(q_0)=+1$; Set an iteration counter $N\in\mathbb{Z}$ to $0$; Choose a predetermined upper bound $N_{\max}\in\mathbb{N}$ for the iteration counter; Choose a predetermined number of samples $S_{\max}\in\mathbb{N}$;

For each template Hamiltonian, precompute and store the ground state wavefunctions, the nodal cells, additionally the Gibbs transition amplitudes for the case of H being DFF-FS or GFF-FS, all with $L^2$-normed errors guaranteed not to exceed $\delta$;
\label{AlgDiscV2StepZero}
\item Pick a $k\in[1,K]$ randomly, let $q=u_{q,k}\oplus u_{q,k}^{\mathsmaller{\perp}}$, $u_{q,k}\in\calE_k$, $u_{q,k}^{\mathsmaller{\perp}}\in\calE_k^{\mathsmaller{\perp}}$ be the unique $H_k$-induced ODSD of $q$; If $\calE_k\neq\emptyset$, go on to the next step; Else, repeat this step; \label{AlgDiscV2PickK}
\item Look up the wavefunction $\psi_0(H_k;\calE_k\oplus u_{q,k}^{\mathsmaller{\perp}})$ from the precomputed and stored solutions; \label{AlgDiscV2GetGammaQPlus}
\item If $H$ is DFF-EB or GFF-EB, go on to the next step; Else if $H$ is DFF-FS or GFF-FS, look up the Gibbs transition amplitude $\langle r|G_1(H_k)|q\rangle$, $r\in\Gamma(H_k;q)$ from the precomputed and stored solutions; \label{AlgDiscV2ComputeG1Hk}
\item Advance the walker for a single step from $q$ to a random point $r\in\calU^c(H)$ using importance sampling, according to the transition probability
\begin{align}
\Pr(r|q) =\;& {\textstyle{\scalebox{1.25}{$\sum$}_{l=1}^{\gmul(H_k)}}}\psi_{0,l}(H_k;r)\,\langle r|G_1(H_k)|q\rangle\,\psi_{0,l}(H_k;q)^{\mathsmaller{-}1} \nonumber \\[0.75ex]
\times\;& \Iver{r\in\Gamma(H_k;q)} \,\times\, \Iver{q\in\supp(\psi_{0,l}(H_k))} \nonumber
\end{align}
for the case of $H$ being DFF-FS or GFF-FS, or
$$\Pr(r|q)={\textstyle{\scalebox{1.25}{$\sum$}_{l=1}^{\gmul(H_k)}}}\left|\psi_{0,l}(H_k;r)\right|^2 \times \Iver{q\in\supp(\psi_{0,l}(H_k))}$$
when $H$ is DFF-EB or GFF-EB; If $\exists\,l\in[1,\gmul(H_k)]$ such that $\psi_{0,l}(H_k;r)/\psi_{0,l}(H_k;q) < 0$, then label $r$ by $\lab(r) = -\lab(q)$; Otherwise, label $r$ by $\lab(r) = \lab(q)$; \label{AlgDiscV2QtoRJump}
\item Assign the value and label of $r$ to $q$; Increase the iteration counter $N$ by $1$; \label{AlgDiscV2UpdateQ}
\item Repeat steps \ref{MCSimuDiscV2}.\ref{AlgDiscV2PickK} through \ref{MCSimuDiscV2}.\ref{AlgDiscV2UpdateQ} until $N=N_{\max}$, record the well-mixed walker location $q$, and reset the value of $N$ to $0$; \label{AlgDiscV2Mixing}
\item Repeat the loop \ref{MCSimuDiscV2}.\ref{AlgDiscV2Mixing} $S_{\max}$ times to get $S_{\max}$ samples of walker locations. \label{AlgDiscV2RepeatPolyTimes}
\end{enumerate}
\end{algorithm}

\vspace{-1.0ex}
On a continuous configuration space $\calM$, a general state space Markov chain \cite{Orey71,Roberts04,Petritis15} in association with a DFF or GFF Hamiltonian $H=\sum_{k=1}^{\mathsmaller{K}}H_k$ can be defined for a discrete time random walk with a quasi-stochastic operator $Q_{\!\mathsmaller{H}}\defeq\{\Pr(r|q)\}_{r,q\,\in\,\calU^c(H)}=[\psi_0^*(H)]G_{\mathsmaller{\Box}}(H)[\psi_0^*(H)]^{\mathsmaller{-}1}$ constructed according to \iftoggle{ForUSPTO} {Auxiliary Utility} {Lemma} \ref{QuasiStochasticOper}, where $\calU^c(H)\defeq\calM\setminus\calU(H)$, $[\psi_0^*(H)]=\diag(\{\psi_0^*(H;q)\!:\!q\in\calU^c(H)\})$ is $\calM$-diagonal, $G_{\mathsmaller{\Box}}(H)\defeq G_*(\epsilon H)$ with a scaling factor $\epsilon=\Omega(1/\poly(\size(H)))$, $\epsilon>0$ or $G_{\mathsmaller{\Box}}(H)\defeq K^{\mathsmaller{-}1}\sum_{k=1}^{\mathsmaller{K}}G_*(\mathsmaller{+}\infty H_k)$ respectively depending upon $H$ being DFF or GFF and FS or EB, with $G_*(\mathsmaller{+}\infty H_k)\defeq\lim_{\tau\mathsmaller{\rightarrow+}\infty}G_*(\tau H_k)$, $k\in[1,K]$, such that $\forall q,r\in\calU^c(H)$, the transition probability
\begin{equation}
\Pr(r|q)=\left\{\begin{array}{l}\vspace{1.0ex}
\psi_0(H;r)\left\langle r|G_*(\epsilon H)|q\right\rangle\psi_0(H;q)^{\mathsmaller{-}1} \times \Iver{q\in\supp(\psi_0(H))},\; \mbox{if}\;H\;\mbox{is FS}, \\
K^{\mathsmaller{-}1}\sum_{k=1}^{\mathsmaller{K}}\sum_{l=1}^{\gmul(H_k)}\left|\psi_{0,l}(H_k;r)\right|^2 \times \Iver{q\in\supp(\psi_{0,l}(H_k))},\; \mbox{if}\;H\;\mbox{is EB}.
\end{array}\right. \nonumber
\end{equation}
Either $G_{\mathsmaller{\Box}}(H)=G_*(\epsilon H)$ or $G_{\mathsmaller{\Box}}(H)=K^{\mathbb{-}1}\sum_{k=1}^{\mathsmaller{K}}G_*(\mathsmaller{+}\infty H_k)$ may be used, in conjunction with the corresponding quasi-stochastic operator, when $H$ is both FS and EB. Starting from $q_0\in\calU^c(H)$ such that $\psi_0(H;q_0)\neq 0$, and following the quasi-stochastic operator $Q_{\!\mathsmaller{H}}$, a walker traverses a subset of configuration points
$$\Range(Q_{\!\mathsmaller{H}};q_0)\defeq\!\left\{r\in\calU^c(H):\exists\,n\in\mathbb{N}\;\mbox{such that}\;\langle r|Q_{\!\mathsmaller{H}}^n|q_0\rangle\neq 0\right\},$$
that is called the range of $Q_{\!\mathsmaller{H}}$ from $q_0$. It is clear that the operator $Q_{\!\mathsmaller{H}}$ is primitive, {\it i.e.}, aperiodic irreducible non-negative, and shares the same eigenvalues with $G_{\mathsmaller{\Box}}(H)$, further has the corresponding eigenvectors $\{\psi_0(H;q)\psi_n(H;q)\}_{n\ge 0}$ simply related to those eigenvectors $\{\psi_n(H;q)\}_{n\ge 0}$ of $G_{\mathsmaller{\Box}}(H)$ \cite{Bravyi10,Aharonov07a,Aharonov07b}. In particular, the irreducible Markov chain is also reversible with respect to the ground state probability distribution, thus has $|\psi_0(H;\cdot)|^2$ as its unique stationary distribution.

The following is one Monte Carlo algorithm for a DFF-FS or GFF-FS Hamiltonian $H=\sum_{k=1}^{\mathsmaller{K}}H_k$ on a continuous configuration space $\calM$, using a fixed-node diffusion type of approach.

\renewcommand{\labelenumi}{\thealgorithm.\arabic{enumi}}
\begin{algorithm}{(Monte Carlo  Simulation of a DFF-FS or GFF-FS Hamiltonian)} \label{MCSimuCont}\\
\vspace{-5.0ex}
\begin{enumerate}[start=0]
\item Choose a predetermined bound $\delta>0$ for numerical accuracy; Choose a warm start $q=q_0$, that is a random point $q_0\in\calU^c(H)$ with the promise of $|\log\psi_0(H;q_0)|$ being polynomially bounded; Set an iteration counter $N\in\mathbb{Z}$ to $0$; Choose a predetermined upper bound $N_{\max}\in\mathbb{N}$ for the iteration counter; Choose a predetermined number of samples $S_{\max}\in\mathbb{N}$; Choose an $m\in\mathbb{N}$ such that the approximation error $c_1m^{\mathsmaller{-}c_2}$ in (\ref{LieTrotterKatoBounds}) does not exceed $\delta$; Initialize an iteration counter $m'$ to $0$; Initialize an index $k\in[1,K]$ of the FBM tensor monomials to 1; \label{AlgContStepZero}
\item Let $q=u_{q,k}\oplus u_{q,k}^{\mathsmaller{\perp}}$, $u_{q,k}\in\calE_k$, $u_{q,k}^{\mathsmaller{\perp}}\in\calE_k^{\mathsmaller{\perp}}$ be the unique $H_k$-induced ODSD of $q$, compute the wavefunction $\psi_0(H_k;\calE_k\oplus u_{q,k}^{\mathsmaller{\perp}})$, determine the nodal cell $\calN(H_k;q)$; \label{AlgContSolveHkOverN}
\item Compute the Gibbs transition amplitude $\langle r|G_*(H_k/mK^b)|q\rangle$, restricted to the connected open set $\{r\in\calN(H_k;q)\}$ and subject to the Dirichlet boundary condition on $\{r\in\partial\calN(H_k;q)\}$, with the $L^2$-normed error guaranteed not to exceed $\delta/mK^b$, $b\in\mathbb{N}$ as specified in (\ref{LieTrotterKatoBounds}); \label{AlgContSolveGstarHkOverN}
\item Advance the walker for a single step from $q$ to a random point $r\in\calU^c(H)\cap\calN(H_k;q)$ via importance sampling according to the transition probability
$$\Pr(r|q)={\textstyle{\scalebox{1.25}{$\sum$}_{l=1}^{\gmul(H_k)}}}\psi_{0,l}(H_k;r)\,\langle r|G_*(H_k/mK^b)|q\rangle\,\psi_{0,l}(H_k;q)^{\mathsmaller{-}1}\times \Iver{q\in\supp(\psi_{0,l}(H_k))};$$ \label{AlgContQtoRJump} \vspace{-4.5ex}
\item Assign the value of $r$ to $q$; Increase the index $k$ by $1$; \label{AlgContUpdateQ}
\item Repeat steps \ref{MCSimuCont}.\ref{AlgContSolveHkOverN} through \ref{MCSimuCont}.\ref{AlgContUpdateQ} until $k>K$, upon which time reset $k$ to $1$ and increase $m'$ by $1$; \label{AlgContHkLoop}
\item Repeat the loop \ref{MCSimuCont}.\ref{AlgContHkLoop} until $m'=mK^b$, upon which time reset $m'$ to $0$ and increase $N$ by $1$; \label{AlgContLieTrotterKatoLoop}
\item Repeat the loop \ref{MCSimuCont}.\ref{AlgContLieTrotterKatoLoop} until $N=N_{max}$, record the well-mixed walker location $q$, and reset the value of $N$ to $0$; \label{AlgContMixing}
\item Repeat the loop \ref{MCSimuCont}.\ref{AlgContMixing} $S_{\max}$ times to get $S_{\max}$ samples of walker locations. \label{AlgContRepeatPolyTimes}
\end{enumerate}
\end{algorithm}

\vspace{-1.0ex}
Another class of continuous-configuration-space-based algorithms randomize the applications of Gibbs operators $G_{\mathsmaller{\Box}}(H_k)\defeq G_*(\epsilon H_k)$ with a polynomially small scaling factor $\epsilon>0$ or respectively $G_{\mathsmaller{\Box}}(H_k)\defeq G_*(\mathsmaller{+}\infty H_k)$, $k\in[1,K]$ in association with the FBM tensor monomials of a DFF or GFF and FS or EB Hamiltonian $H=\sum_{k=1}^{\mathsmaller{K}}H_k$, and apply the so-called Gibbs sampling method to form a random walk that follows a series of conditional distributions \cite{Geman84,Liu01,Li09,Bremaud17} corresponding to the individual FBM tensor monomials of $H$, while still adopting a fixed-node diffusion strategy.

\renewcommand{\labelenumi}{\thealgorithm.\arabic{enumi}}
\begin{algorithm}{(Monte Carlo  Simulation of a DFF or GFF and FS or EB Hamiltonian)} \label{MCSimuContV1}\\
\vspace{-5.0ex}
\begin{enumerate}[start=0]
\item Choose a predetermined bound $\delta>0$ for numerical accuracy; Choose a warm start $q=q_0$, that is a random point $q_0\in\calU^c(H)$ with the promise of $|\log\psi_0(H;q_0)|$ being polynomially bounded; Set an iteration counter $N\in\mathbb{Z}$ to $0$; Choose a predetermined upper bound $N_{\max}\in\mathbb{N}$ for the iteration counter; Choose a predetermined number of samples $S_{\max}\in\mathbb{N}$; If $H$ is DFF-FS or GFF-FS, choose a predetermined scaling factor $\epsilon>0$; \label{AlgContV1StepZero}
\item Pick a $k\in[1,K]$ randomly, let $q=u_{q,k}\oplus u_{q,k}^{\mathsmaller{\perp}}$, $u_{q,k}\in\calE_k$, $u_{q,k}^{\mathsmaller{\perp}}\in\calE_k^{\mathsmaller{\perp}}$ be the unique $H_k$-induced ODSD of $q$; If $\calE_k\neq\emptyset$, go on to the next step; Else, repeat this step; \label{AlgContV1PickK}
\item Compute the wavefunction $\psi_0(H_k;\calE_k\oplus u_{q,k}^{\mathsmaller{\perp}})$; If $H$ is DFF-EB or GFF-EB, go on to the next step; Else if $H$ is DFF-FS or GFF-FS, determine the nodal cell $\calN(H_k;q)$; \label{AlgContV1EvalPsi0Hk}
\item If $H$ is DFF-EB or GFF-EB, go on to the next step; Else if $H$ is DFF-FS or GFF-FS, compute the Gibbs transition amplitude $\langle r|G_*(\epsilon H_k)|q\rangle$, restricted to the connected open set $\{r\in\calN(H_k;q)\}$ and subject to the Dirichlet boundary condition on $\{r\in\partial\calN(H_k;q)\}$, with the $L^2$-normed error guaranteed not to exceed $\delta$; \label{AlgContV1EvalGtarEpsHk}
\item Advance the walker for a single step from $q$ to a random point $r\in\calU^c(H)$ using importance sampling, according to the transition probability
\begin{align}
\Pr(r|q) =\;& {\textstyle{\scalebox{1.25}{$\sum$}_{l=1}^{\gmul(H_k)}}}\psi_{0,l}(H_k;r)\,\langle r|G_*(\epsilon H_k)|q\rangle\,\psi_{0,l}(H_k;q)^{\mathsmaller{-}1} \nonumber \\[0.75ex]
\times\;& \Iver{r\in\calN(H_k;q)} \,\times\, \Iver{q\in\supp(\psi_{0,l}(H_k))} \nonumber
\end{align}
for the case of $H$ being DFF-FS or GFF-FS, or
$$\Pr(r|q)={\textstyle{\scalebox{1.25}{$\sum$}_{l=1}^{\gmul(H_k)}}}\left|\psi_{0,l}(H_k;r)\right|^2\times \Iver{q\in\supp(\psi_{0,l}(H_k))}$$
when $H$ is DFF-EB or GFF-EB; \label{AlgContV1QtoRJump}
\item Assign the value of $r$ to $q$; Increase the iteration counter $N$ by $1$; \label{AlgContV1UpdateQ}
\item Repeat steps \ref{MCSimuContV1}.\ref{AlgContV1PickK} through \ref{MCSimuContV1}.\ref{AlgContV1UpdateQ} until $N=N_{\max}$, record the well-mixed walker location $q$, and reset the value of $N$ to $0$; \label{AlgContV1Mixing}
\item Repeat the loop \ref{MCSimuContV1}.\ref{AlgContV1Mixing} $S_{\max}$ times to get $S_{\max}$ samples of walker locations. \label{AlgContV1RepeatPolyTimes}
\end{enumerate}
\end{algorithm}

\vspace{-1.0ex}
A variation of Algorithm \ref{MCSimuContV1} is useful when the concerned DFF or GFF and FS or EB Hamiltonian is a sum of FBM tensor monomials $\{H_k\}_{k=1}^{\mathsmaller{K}}$, all of which belong to a prescribed collection of a polynomially bounded number of template Hamiltonians. The variation also illustrates, by way of example but no means of limitation, Monte Carlo samplings without a fixed-node diffusion strategy.

\renewcommand{\labelenumi}{\thealgorithm.\arabic{enumi}}
\begin{algorithm}{(Monte Carlo  Simulation of a DFF or GFF and FS or EB Hamiltonian)} \label{MCSimuContV2}\\
\vspace{-5.0ex}
\begin{enumerate}[start=0]
\item Choose a predetermined bound $\delta>0$ for numerical accuracy; Choose a warm start $q=q_0$, that is a random point $q_0\in\calU^c(H)$ with the promise of $|\log\psi_0(H;q_0)|$ being polynomially bounded; Label $q=q_0$ by $\lab(q_0)=+1$; Set an iteration counter $N\in\mathbb{Z}$ to $0$; Choose a predetermined upper bound $N_{\max}\in\mathbb{N}$ for the iteration counter; Choose a predetermined number of samples $S_{\max}\in\mathbb{N}$; If $H$ is DFF-FS or GFF-FS, choose a predetermined scaling factor $\epsilon>0$;

For each template Hamiltonian, precompute and store the ground state wavefunctions, the nodal cells, additionally the Gibbs transition amplitudes for the case of H being DFF-FS or GFF-FS, all with $L^2$-normed errors guaranteed not to exceed $\delta$; \label{AlgContV2StepZero}
\item Pick a $k\in[1,K]$ randomly, let $q=u_{q,k}\oplus u_{q,k}^{\mathsmaller{\perp}}$, $u_{q,k}\in\calE_k$, $u_{q,k}^{\mathsmaller{\perp}}\in\calE_k^{\mathsmaller{\perp}}$ be the unique $H_k$-induced ODSD of $q$; If $\calE_k\neq\emptyset$, go on to the next step; Else, repeat this step; \label{AlgContV2PickK}
\item Look up the wavefunction $\psi_0(H_k;\calE_k\oplus u_{q,k}^{\mathsmaller{\perp}})$ from the precomputed and stored solutions; \label{AlgContV2EvalPsi0Hk}
\item If $H$ is DFF-EB or GFF-EB, go on to the next step; Else if $H$ is DFF-FS or GFF-FS, look up the Gibbs transition amplitude $\langle r|G_*(\epsilon H_k)|q\rangle$ from the precomputed and stored solutions; \label{AlgContV2EvalGtarEpsHk}
\item Advance the walker for a single step from $q$ to a random point $r\in\calU^c(H)$ using importance sampling, according to the transition probability
\begin{align}
\Pr(r|q) =\;& {\textstyle{\scalebox{1.25}{$\sum$}_{l=1}^{\gmul(H_k)}}}\psi_{0,l}(H_k;r)\,\langle r|G_*(\epsilon H_k)|q\rangle\,\psi_{0,l}(H_k;q)^{\mathsmaller{-}1} \nonumber \\[0.75ex]
\times\;& \Iver{r\in\calU^c(H_k)} \,\times\, \Iver{q\in\supp(\psi_{0,l}(H_k))} \nonumber
\end{align}
for the case of $H$ being DFF-FS or GFF-FS, or
$$\Pr(r|q)={\textstyle{\scalebox{1.25}{$\sum$}_{l=1}^{\gmul(H_k)}}}\left|\psi_{0,l}(H_k;r)\right|^2 \times \Iver{q\in\supp(\psi_{0,l}(H_k))}$$
when $H$ is DFF-EB or GFF-EB; If $\exists\,l\in[1,\gmul(H_k)]$ such that $\psi_{0,l}(H_k;r)/\psi_{0,l}(H_k;q) < 0$, then label $r$ by $\lab(r) = -\lab(q)$; Otherwise, label $r$ by $\lab(r) = \lab(q)$; \label{AlgContV2QtoRJump}
\item Assign the value and label of $r$ to $q$; Increase the iteration counter $N$ by $1$; \label{AlgContV2UpdateQ}
\item Repeat steps \ref{MCSimuContV2}.\ref{AlgContV2PickK} through \ref{MCSimuContV2}.\ref{AlgContV2UpdateQ} until $N=N_{\max}$, record the well-mixed walker location $q$, and reset the value of $N$ to $0$; \label{AlgContV2Mixing}
\item Repeat the loop \ref{MCSimuContV2}.\ref{AlgContV2Mixing} $S_{\max}$ times to get $S_{\max}$ samples of walker locations. \label{AlgContV2RepeatPolyTimes}
\end{enumerate}
\end{algorithm}

\vspace{-1.0ex}
In Algorithms \ref{MCSimuCont} through \ref{MCSimuContV2}, when dealing with a DFF-FS or GFF-FS Hamiltonian, Gibbs operators of the form $W_*(M) \defeq G_*(M)|_{\Diri(\calD)}$ or $W_*(M) \defeq G_*(M)$ need to be computed, where $M=\bigotimes_{i=1}^nh_i$ is an FBM monomial which moves a submanifold $\calD=\prod_{i=1}^n\calE_i$ as a Cartesian product of low-dimensional factor spaces, which, according to equations (\ref{TensorMonoMostProj}) and (\ref{TensorMonoSelfInv}), requires the computation of $W_*(h)$ for a fermionic Schr\"odinger FBM interaction $h\defeq h_i$ that moves a low-dimensional factor space $\calE\defeq\calE_i$, $i\in[1,m]$, $m=O(1)\in[1,n]$. Such a Gibbs operator $W_*(h)$ can be computed directly either by solving the eigenvalue problem of $h$ on the domain $\calE$, or via a subroutine evaluating the Feynman-Kac path integral \cite{Bernu02,Binder10} that corresponds to $h$ on $\calE$, subject to the proper boundary condition. Alternatively, each Gibbs operator $W_*(h)$ of any FBM interaction $h$ can itself be estimated through an MCMC subroutine, where $W_*(h)$ is broken up into many small-step Gibbs operators as $W_*(h)|=\prod_{j=1}^{\mathsmaller{N}}W_*(h/N)$, with $N\in\mathbb{N}$ sufficiently large though still polynomially bounded, such that each small-step Gibbs operator $W_*(h/N)$ is highly localized in space, meaning that there exist positive constants $c_1,c_2,c_3$ such that $\langle r|W_*(h/N)|q\rangle<\exp\left(-c_1\|r-q\|^{c_2}N^{c_3}\right)$, $\forall q,r\in\calE$, in which case, the small-step Gibbs operator $W_*(h/N)$ in a small neighborhood around any starting point $q\in\calE$ can be approximated very well by an analytical or semi-analytical solution for the case of an infinite or semi-infinite configuration space permeated globally with a simply flat, or linearly sloped, or quadratically varying potential landscape that locally matches the potential energy surface in the small neighborhood around $q\in\calE$. Particularly, a semi-infinite configuration space separated by a hyperplane from a region of infinite potential can be employed to model a small neighborhood in $\calE$ near the boundary $\partial\calE$.

Yet another alternative, applicable to an FBM interaction with a conventional Schr\"odinger operator $h|_{\Diri(\calE)}=-\Delta_g+V_{\mathsmaller{\calE}}(q)$ restricted to an open domain $\calE$, where the potential $V_{\mathsmaller{\calE}}(q)$ rises up steeply near $\partial\calE$, from being finite-valued within $\calE$ to being infinity-valued outside, in order to enforce the Dirichlet boundary condition, implements a single step of random walk described by a transition probability $\Pr(r|q)=\psi_0(r)\,\langle r|G_*(h)|_{\Diri(\calE)}|q\rangle\,\psi_0(q)^{-1}$ by mapping the Schr\"odinger-Dirichlet eigenvalue problem $\exp({-}h|_{\Diri(\calE)})\psi(q)=e^{-\lambda}\psi(q)$, $\forall q\in\calE$, $\lambda\in\mathbb{R}$, while $\psi(q)=0$ for all $q\notin\calE$, to a Fokker-Planck-Kolmogorov eigenvalue problem $\exp({-}L|_{\Diri(\calE,1)})\psi_0(q)\psi(q)=e^{-(\lambda-\lambda_0)}\psi_0(q)\psi(q)$, $\forall q\in\calE$, $\lambda\in\mathbb{R}$, while $\psi_0(q)\psi(q)=0$ for all $q\notin\calE$, with $\lambda_0\defeq\lambda_0(h|_{\Diri(\calE)})$ and $\psi_0\defeq\psi_0(h|_{\Diri(\calE)})$ characterizing the ground state of $h|_{\Diri(\calE)}$, through a {\em quantum-stochastic operator similarity transformation} \cite{Caffarel88I,Caffarel88II,Nelson66,Favella67,Albeverio74,Jona-Lasinio81,Risken96,Bogachev15} in the same spirit of \iftoggle{ForUSPTO} {Auxiliary Utility} {Lemma} \ref{QuasiStochasticOper}, where
\begin{equation}
L|_{\Diri(\calE,1)}\defeq\left[\psi_0\right]\times\left(h|_{\Diri(\calE)}-\lambda_0\right)\times\left[\psi_0\right]^{\mathsmaller{-}1}=\left[\psi_0\right]\times\left[-\Delta_g+V_{\mathsmaller{\calE}}(q)-\lambda_0\right]\times\left[\psi_0\right]^{\mathsmaller{-}1}
\label{SchrodingerFokkerPlanckKolmogorovOperator}
\end{equation}
is a Fokker-Planck-Kolmogorov operator generating a pure drift-diffusion process without branching and killing \cite{Nemec10}, and can be simulated by standard and stable MCMC \cite{Caffarel88I,Caffarel88II,Risken96}. It is obvious that the semigroup $\exp({-}L|_{\Diri(\calE,1)})=[\psi_0]\times\exp[-(h|_{\Diri(\calE)}-\lambda_0)]\times[\psi_0]^{\mathsmaller{-}1}$ of the Fokker-Planck-Kolmogorov drift-diffusion process does precisely realize the desired quasi-stochastic operator $[\psi_0^*(h)]G_*(h)|_{\Diri(\calE)}[\psi_0^*(h)]^{\mathsmaller{-}1}$ associated with $G_*(h)|_{\Diri(\calE)}$.

Constrained path integral methods \cite{Ceperley91,Ceperley96,Zhang97,Zhang04,Kruger08} provide another class of Monte Carlo algorithms for efficiently simulating DFF or GFF Hamiltonians, and the closely related reptation quantum Monte Carlo \cite{Baroni98arXiv,Baroni99prl,Carleo10} is rightly suitable for simulating an adiabatically varying process in association with a sequence of Hamiltonians that change gradually in time. Given a DFF or GFF Hamiltonian $H=H^{\mathsmaller{C}}+H^{\mathsmaller{D}}:L^2_{\mathsmaller{F}}(\calC)\mapsto L^2_{\mathsmaller{F}}(\calC)$, that is CD-separately irreducible and supported by a continuous-discrete product configuration space $C=\calM\times\calP$, an adiabatic sequence of DFF or GFF Hamiltonians $\{H'_l:L^2_{\mathsmaller{F}}(\calC)\mapsto L^2_{\mathsmaller{F}}(\calC),\,l\in[0,L],\,L\in\mathbb{N}\}$ may be constructed, in which $\psi_0(H'_0)$ is easily solved, $H'_{\mathsmaller{L}}=H$, and the Hamiltonians vary slowly from $H'_0$ to $H'_{\mathsmaller{L}}$ to satisfy conditions of the adiabatic theorem, such that the Gibbs operators $\{\exp(\mathsmaller{-}\tau H'_l):l\in[0,L]\}$, with a predetermined $\tau>0$, progressively and recursively project out a series of ground states $\{\psi_0(H'_l):l\in[0,L]\}$ from the starting point $\psi_0(H'_0)$, in the sense that $\langle\psi_0(H'_l)|\exp(\mathsmaller{-}\tau H'_l)|\psi_0(H'_{l-1})\rangle=\|\psi_0(H'_{l-1})\|\times[1+O(\tau^2/L^2)]$, much similar to the quantum search algorithm that uses repeated measurements to drive an adiabatic evolution of ground states \cite{Childs02}. For each $l\in[0,L]$, let $\LTK(H'_l)$ denote a set of FBM tensor monomials that Lie-Trotter-Kato decompose $H'_l$, such that
\begin{equation}
\left\|\left[\textstyle{\prod_{\mathsmaller{H\in\LTK(H'_l)}}}\,e^{-\tau H/m}\right]^m-e^{-\tau H'_l}\right\| \,\le\, c_1m^{\mathsmaller{-}c_2}\min[1,\lambda_1(\tau H'_l)], \label{LTKforHprime}
\end{equation}
for an $m=O(\poly(\max\{\size(H'_l):l\in[0,L]\}))$, $m\in\mathbb{N}$, and fixed constants $c_1>0$, $c_2>0$. Then let $\{H_t:t\in[0,T],\,T\in\mathbb{N}\}\defeq\bigcup_{l=0}^{\mathsmaller{L}}\{\LTK(H'_l)\}^m\!$ denote the ordered sequence of FBM tensor monomials collecting all of the $m$-fold repeated Lie-Trotter-Kato decompositions of $\{H'_l:l\in[0,L]\}$, such that the continued products of the Gibbs operators $\{\exp(\mathsmaller{-}\tau H'_l):l\in[0,L]\}$ are well approximated by the continued products of the Gibbs operators $\{\exp(\mathsmaller{-}\tau H_t/m):t\in[0,T]\}$, which progressively and recursively project out a series of ground states ${\mathlarger{\mathlarger{\{}}}\phi_t\defeq\psi_0({H'}_{\!\!\raisebox{0.4\height}{\tiny $\LTK^{-1}(t)$}}):t\in[0,T]{\mathlarger{\mathlarger{\}}}}$, where $\forall t\in[0,T]$, the function $\LTK^{-1}(t)$ returns the unique index $l\in[0,L]$ such that $H_t$ is there as an FBM tensor monomial for the Lie-Trotter-Kato decomposition of $H'_l$. Given such a projected ground state $\psi_0(H)=\prod_{t=0}^{\mathsmaller{T}}\exp(\mathsmaller{-}\tau H_t/m)|\phi_0\rangle$, the expected value of any operator $O:L^2_{\mathsmaller{F}}(\calC)\mapsto L^2_{\mathsmaller{F}}(\calC)$ can be computed as
\begin{equation}
\langle O\rangle=\frac{\langle\psi_0(H)|O|\psi_0(H)\rangle}{\langle\psi_0(H)|\psi_0(H)\rangle}=\frac
{\mathlarger{\mathlarger{\langle}}\phi_0\mathlarger{\mathlarger{|}}\!\left[\prod_{t=0}^{\mathsmaller{T}}\exp(\mathsmaller{-}\tau H_t/m)\right]^{\!+\!}O\!\left[\prod_{t=0}^{\mathsmaller{T}}\exp(\mathsmaller{-}\tau H_t/m)\right]\!\mathlarger{\mathlarger{|}}\phi_0\mathlarger{\mathlarger{\rangle}}}
{\mathlarger{\mathlarger{\langle}}\phi_0\mathlarger{\mathlarger{|}}\!\left[\prod_{t=0}^{\mathsmaller{T}}\exp(\mathsmaller{-}\tau H_t/m)\right]^{\!+\!}\!\left[\prod_{t=0}^{\mathsmaller{T}}\exp(\mathsmaller{-}\tau H_t/m)\right]\!\mathlarger{\mathlarger{|}}\phi_0\mathlarger{\mathlarger{\rangle}}}.
\end{equation}
The basic idea of PIMC in the configuration coordinate representation is to work with a cylinder set $\calC^{2\mathsmaller{T}+1}=\{(q_{2\mathsmaller{T}},\cdots\!,q_{\mathsmaller{T}},q_{\mathsmaller{T}-1},\cdots\!,q_0):q_t\in\calC,\,\forall t\in[0,2T]\}$, insert a resolution of the identity $I=\int_{\calC}dq_t\,|q_t\rangle\langle q_t|$ to the immediate left of either the operator $\exp(\mathsmaller{-}\tau H_t/m)$ for each $t\in[0,T]$ or the operator $\exp(\mathsmaller{-}\tau H_{2\mathsmaller{T}-t+1}/m)^+$ for each $t\in[T{+}1,2T]$, such that
\begin{equation}
\langle O\rangle=\frac
{\scalebox{1.35}{$\int$}\phi_0(q_{2\mathsmaller{T}})\mathlarger{\mathlarger{[}}\prod_{t=\mathsmaller{T}+1}^{2\mathsmaller{T}}\langle q_t|G_t|q_{t-1}\rangle\mathlarger{\mathlarger{]}}\langle q_{\mathsmaller{T}}|Oe^{\mathsmaller{-}\tau H_{\mathsmaller{T}}/m}|q_{\mathsmaller{T}-1}\rangle\mathlarger{\mathlarger{[}}\prod_{t=1}^{\mathsmaller{T}-1}\langle q_t|G_t|q_{t-1}\rangle\mathlarger{\mathlarger{]}}\phi_0(q_0)\prod_{t=0}^{2\mathsmaller{T}}dq_t}
{\scalebox{1.35}{$\int$}\phi_0(q_{2\mathsmaller{T}})\mathlarger{\mathlarger{[}}\prod_{t=\mathsmaller{T}+1}^{2\mathsmaller{T}}\langle q_t|G_t|q_{t-1}\rangle\mathlarger{\mathlarger{]}}\langle q_{\mathsmaller{T}}|G_{\mathsmaller{T}}|q_{\mathsmaller{T}-1}\rangle\mathlarger{\mathlarger{[}}\prod_{t=1}^{\mathsmaller{T}-1}\langle q_t|G_t|q_{t-1}\rangle\mathlarger{\mathlarger{]}}\phi_0(q_0)\prod_{t=0}^{2\mathsmaller{T}}dq_t},\!
\label{CoordPIMC}
\end{equation}
where shorthand notations are used with
\begin{equation}
G_t \,\defeq\,
\left\{\begin{array}{rl}
\exp(\mathsmaller{-}\tau H_t/m), & \mbox{when}\;t\in[0,T], \\
\exp(\mathsmaller{-}\tau H_{2\mathsmaller{T}-t+1}/m), & \mbox{when}\;t\in[T{+}1,2T],
\end{array}\right.
\end{equation}
while $\forall q\in\calC$, $|q\rangle=const\times\sum_{\pi\in G_{\rm ex}}\!\sign(\pi)\,|\pi q)$ is a Dirac delta distribution as a limit of wavefunctions in $L^2_{\mathsmaller{F}}(\calC)$ approximating an antisymmetric configuration coordinate eigenvector, and for each $\pi q=r\in\calC$, $|r)$ denotes a Dirac delta distribution as a limit of functions in $L^2(\calC)$ approximating a boltzmannonic state that has artificially labeled particles, of the same or different species, being localized and positioned around the configuration point $r\in\calC$.

With each $H_t\defeq\bigotimes_{i=1}^{n_t}h_{ti}$, $n_t\in\mathbb{N}$, $t\in[0,T]$ being an FBM tensor monomial, and most of the FBM interactions $\{h_{ti}:i\in[1,n_t]\}$ being orthogonal projection operators, the fermionic Gibbs transition amplitude $G_t(r,q)\defeq\langle r|\exp(\mathsmaller{-}\tau H_t)|q\rangle$, $\forall(r,q)\in\calC^2$, $\forall\tau>0$, $\tau=O(1)$ can be computed as in equation (\ref{TensorMonoMostProj}) from the transition amplitudes associated with the individual FBM interactions. For each $i\in[1,n_t]$, it is perfectly feasible to have the FBM interaction $h_{ti}$ exactly diagonalized in the $h_{ti}$-moved factor space $\calC_i=\calM_i\times\calP_i$, from which, either the transition amplitude $\langle r_i|h_i|q_i\rangle$, when $h_i$ is an orthogonal projection operator, or else the Gibbs transition amplitude $\langle r_i|\exp(\mathsmaller{-}\tau h_i)|q_i\rangle$, can be easily computed for any $(r_i,q_i)\in\calC_i^2$. Note that, when $h_i$ is a projection operator, computing the Gibbs transition amplitude $\langle r_i|\exp(\mathsmaller{-}\tau h_i)|q_i\rangle$ for any $\tau>0$ such that $\tau+\tau^{{-}1}=O(\poly(\size(H)))$ is completely equivalent to computing $\langle r_i|h_i|q_i\rangle$, by virtue of the operator identity $\exp(\mathsmaller{-}\tau P)=I+(e^{{-}\tau}-1)P$ for all $P$ such that $P^2=P$. Alternatively, each fermionic Gibbs transition amplitude $\langle r_i|\exp(\mathsmaller{-}\tau h_i)|q_i\rangle$, with $h_i$ being an FBM interaction, can be evaluated by straightforwardly summing up an $O(\poly(\size(H)))$ number of boltzmannonic transition amplitudes for all of the exchange permutations of the configuration $(r_i,q_i)$, where, for each such permuted configuration $(r'_i,q'_i)\in\calC_i^2$, a boltzmannonic transition amplitude can be computed approximately by taking a direct ``Rayleigh flight'', either along the geodesic line segment from $q'_i$ to $r'_i$, when $q'_i,r'_i\in\calM\times\{v\}$ for some $v\in\calP$, and $\langle r'_i|\exp[{-}\tau\Base(H^{\mathsmaller{C}})]|q'_i\rangle\neq 0$, or through a single $H^{\mathsmaller{D}}$-hop, when $q'_i,r'_i\in\{x\}\times\calP$ for some $x\in\calM$, and $\langle r'_i|\Base(H^{\mathsmaller{D}})|q'_i\rangle\neq 0$, provided that $\tau/m>0$ is sufficiently small. Still alternatively, using the Hubbard-Stratonovich transformation \cite{Stratonovich58,Hubbard59}, every Gibbs operator $\exp(\mathsmaller{-}\tau h_{ti}/m)$ with a small $\tau/m>0$ can be converted into a product of single-particle Gibbs operators, each of which is free of inter-particle interactions, but may involve an individual particle subject to an external auxiliary random field, such that the product of single-particle Gibbs operators under a fixed auxiliary random field configuration is analytically solvable, and the Gibbs operator $\exp(\mathsmaller{-}\tau h_{ti}/m)$ can be computed as an expectation value of said product against the probability distribution of the auxiliary random fields \cite{Blankenbecler81,Sugiyama86}. In any case, equation (\ref{CoordPIMC}) is numerically implemented by approximating both the denominator and the numerator on the right hand side as a sum of quantum amplitudes associated with zigzags henceforth called {\em Feynman paths}, each of which corresponds to a unique point in the cylinder set and is of the form $\bfq\defeq(q_{2\mathsmaller{T}},\cdots\!,q_{\mathsmaller{T}},q_{\mathsmaller{T}-1},\cdots\!,q_0)\in\calC^{2\mathsmaller{T}+1}$, where each $q_t\in\calC$ is called the configuration coordinate of the $t$-th time slice, $\forall t\in[0,2T]$. With respect to each specific Feynman path $\bfq$, a partial zigzag formed by connecting a consecutive subset of slice coordinates, $\bfq(t_2,t_1)\defeq(q_{t_2},q_{t_2{-}1},\cdots\!,q_{t_1{+}1},q_{t_1})$, with $0\le t_1<t_2\le 2T$, is called a {\em Feynman segment} of $\bfq$, and the quantum amplitude
\begin{equation}
\bfG[\bfq(t_2,t_1)] \,\defeq\, \phi_{\lfloor t_2\rceil}(q_{t_2})\,{\mathlarger{\mathlarger{[}}}\,{\textstyle{\prod_{t=t_1{+}1}^{t_2}}}\langle q_t|G_t|q_{t-1}\rangle{\mathlarger{\mathlarger{]}}}\,\phi_{\lfloor t_1\rceil}(q_{t_1})
\end{equation}
is called a {\em partial path integral} associated with the Feynman segment $\bfq(t_2,t_1)$, where
\begin{equation}
\lfloor t\rceil\defeq\left\{\begin{array}{rl}
t,\! & \!\mbox{when}\;t\in[0,T], \\
2T-t,\! & \!\mbox{when}\;t\in[T{+}1,2T].
\end{array}\right.
\end{equation}
It is well known and recognized that Feynman's method and theory of Euclidean path integrals can be regarded as an isophysical mapping between a quantum system on a configuration space $\calC$ and a classical system on a configuration space $\calC\otimes[0,2T] \cong \calC^{2\mathsmaller{T}+1}$, with which a quantum Hamiltonian supported by $\calC$ is mapped to a classical potential energy function on $\calC^{2\mathsmaller{T}+1}$, and a quantum Euclidean theory becomes identical to a classical theory of statistical mechanics, although for a fermionic quantum system, the requirement of exchange antisymmetry, persisting into the classically mapped problem, remains a strong quantum reminder and signature.

A PIMC procedure does importance sampling from such zigzag Feynman paths weighted by their associated quantum amplitudes and produces an approximate fraction to estimate $\langle O\rangle$. To overcome the sign problem, a constrained PIMC \cite{Ceperley91,Ceperley96,Zhang97,Zhang04,Kruger08} starts with and accepts only {\em positive-definite Feynman paths} $\bfq=(q_{2\mathsmaller{T}},\cdots\!,q_{\mathsmaller{T}},q_{\mathsmaller{T}-1},\cdots\!,q_0)\in\calC^{2\mathsmaller{T}+1}$ such that $\phi_0(q_0)>0$ and $\bfG[\bfq(t,0)]>0$, $\forall t\in[1,2T]$, or equivalently,
\begin{equation}
\phi_0(q_0)>0\,\;\mbox{\rm and}\;\,\bfG[\bfq(t,t{-}1)]>0,\;\forall t\in[1,2T]. \label{PhiGtPhiPositive}
\end{equation}
During a constrained PIMC procedure, when a positive-definite Feynman path $\bfq\in\calC^{2\mathsmaller{T}+1}$ is being wiggled into $\bfq'\in\calC^{2\mathsmaller{T}+1}$, by changing and updating the configuration coordinates for a tuple of time slices $\{t_k,\cdots\!,t_i,\cdots\!,t_0\}$ from an old tuple of configuration coordinates $\{q_{t_k},\cdots\!,q_{t_i}\cdots\!,q_{t_0}\}$ to a new tuple $\{q'_{t_k},\cdots\!,q'_{t_i}\cdots\!,q'_{t_0}\}$, with $0\le t_0\le\cdots\le t_i\le\cdots\le t_k\le 2T$, $0\le i\le k\le 2T$, but no two time instants should coincide when $t_k\neq t_0$, the positive definiteness of Feynman paths must be maintained by requiring that the two fractions
\begin{align}
\frac{\bfG[\bfq'(t_i,t_i{-}1)]}{\bfG[\bfq(t_i,t_i{-}1)]} \,=\,\;& \frac{\psi_0(H_{\lfloor t_i\rceil};q'_{t_i})\,G_{t_i}(q'_{t_i},q'_{t_i-1})\,\psi_0(H_{\lfloor t_i-1\rceil};q'_{t_i-1})}{\psi_0(H_{\lfloor t_i\rceil};q_{t_i})\,G_{t_i}(q_{t_i},q_{t_i-1})\,\psi_0(H_{\lfloor t_i-1\rceil};q_{t_i-1})} \label{PositivePathCondRight} \\[0.75ex]
\frac{\bfG[\bfq'(t_i{+}1,t_i)]}{\bfG[\bfq(t_i{+}1,t_i)]} \,=\,\;& \frac{\psi_0(H_{\lfloor t_i+1\rceil};q'_{t_i+1})\,G_{t_i+1}(q'_{t_i+1},q'_{t_i})\,\psi_0(H_{\lfloor t_i\rceil};q'_{t_i})}{\psi_0(H_{\lfloor t_i+1\rceil};q_{t_i+1})\,G_{t_i+1}(q_{t_i+1},q_{t_i})\,\psi_0(H_{\lfloor t_i\rceil};q_{t_i})} \label{PositivePathCondLeft}
\end{align}
both remain positive, $\forall i\in[0,k]$, where the local node-determinacy of the DFF or GFF Hamiltonians $\{H'_l:l\in[0,L]\}$ has been used in equations (\ref{PositivePathCondRight}) and (\ref{PositivePathCondLeft}), which enables efficient local determination of the ground state nodal structures. It is noted that some harmless redundancy may be present when the tuple of time slices $\{t_k,\cdots\!,t_i\cdots\!,t_0\}$ consists of consecutive time instants, namely, $\exists\,i\in[0,k]$ such that $t_{i+1}=t_i$ or $t_{i-1}=t_i$. Such redundancy is easily identified and avoided. Among the legitimate new Feynman paths $\bfq'=(q'_{2\mathsmaller{T}},\cdots\!,q'_{\mathsmaller{T}},q'_{\mathsmaller{T}-1},\cdots\!,q'_0)\in\calC^{2\mathsmaller{T}+1}$, a Metropolis-Hastings importance sampling strategy can be used to select a transition from $\bfq$ to $\bfq'$ randomly according to a probability
\begin{equation}
\Pr(\bfq'\leftarrow\bfq) \,\sim\, \frac{\bfG[\bfq'(2T,0)]}{\bfG[\bfq(2T,0)]} \,=\, \frac{\phi_0(q'_{2\mathsmaller{T}})\,{\mathlarger{\mathlarger{[}}}\,{\textstyle{\prod_{t=1}^{2\mathsmaller{T}}}}\langle q'_t|G_t|q'_{t-1}\rangle{\mathlarger{\mathlarger{]}}}\,\phi_0(q'_0)}{\phi_0(q_{2\mathsmaller{T}})\,{\mathlarger{\mathlarger{[}}}\,{\textstyle{\prod_{t=1}^{2\mathsmaller{T}}}}\langle q_t|G_t|q_{t-1}\rangle{\mathlarger{\mathlarger{]}}}\,\phi_0(q_0)} \,>\, 0, \label{FeynPathWiggleTrans}
\end{equation}
where it is obvious that many of the multiplying terms in the numerator and denominator of the rightmost fraction are common factors and cancel out, except for those involving a configuration coordinate that is being changed.

The following algorithm is an example of constrained PIMC simulating the ground state of a CD-separately irreducible DFF or GFF Hamiltonian $H=H^{\mathsmaller{C}}+H^{\mathsmaller{D}}=H'_{\mathsmaller{L}}$ supported by a continuous-discrete product configuration space $C=\calM\times\calP$, with $H'_{\mathsmaller{L}}$ terminating an adiabatic sequence of DFF or GFF Hamiltonians $\{H'_l:L^2_{\mathsmaller{F}}(\calC)\mapsto L^2_{\mathsmaller{F}}(\calC),\,l\in[0,L]\}$, $L\in\mathbb{N}$, whose initial Hamiltonian $H'_0$ has a known ground state $\psi_0(H'_0)$. Let $m=O(\poly(\max\{\size(H'_l):l\in[0,L]\}))$, $m\in\mathbb{N}$ be chosen sufficiently large so that each Gibbs operator $\exp({-}\tau H'_l)$, $l\in[0,L]$ is approximated as in (\ref{LTKforHprime}) to a predetermined accuracy, and let $\{H_t:t\in[0,T],\,T\in\mathbb{N}\}\defeq\bigcup_{l=0}^{\mathsmaller{L}}\{\LTK(H'_l)\}^{\otimes m}\!$ denote the ordered sequence of FBM tensor monomials collecting all of the $m$-fold repeated Lie-Trotter-Kato decompositions of $\{H'_l:l\in[0,L]\}$, with grounds states ${\mathlarger{\mathlarger{\{}}}\phi_t\defeq\psi_0({H'}_{\!\!\raisebox{0.4\height}{\tiny $\LTK^{-1}(t)$}}):t\in[0,T]{\mathlarger{\mathlarger{\}}}}$.

\renewcommand{\labelenumi}{\thealgorithm.\arabic{enumi}}
\begin{algorithm}{(Constrained PIMC  Simulation of a DFF or GFF Hamiltonian)} \label{MCSimuConsPathInt}\\
\vspace{-5.0ex}
\begin{enumerate}[start=0]
\item Choose a predetermined bound $\delta>0$ for numerical accuracy; Choose a warm start $q=r_0$, that is a random point $r_0\in\calU^c(H)$ with the promise of $|\log\psi_0(H'_0;r_0)|$ being polynomially bounded; Set an iteration counters $N\in\mathbb{Z}$ to $0$; Choose a predetermined upper bound $N_{\max}\in\mathbb{N}$ for the iteration counter; Set a predetermined number of samples $S_{\max}\in\mathbb{N}$; Choose a predetermined number $2T{+}1$ of time slices, with $T\in\mathbb{N}$;

Initialize a $2T$-tuple of Gibbs operators $\{G_t:t\in[1,2T]\}$ to $G_t=\exp({-}\tau H_0/m)$, $\forall t\in[1,2T]$; Initialize a Feynman path, that is a $(2T{+}1)$-tuple $\bfq\defeq(q_{2\mathsmaller{T}},\cdots\!,q_{\mathsmaller{T}},q_{\mathsmaller{T}-1},\cdots\!,q_0)\in\calC^{2\mathsmaller{T}+1}$ to $q_t=r_0$, $\forall t\in[0,2T]$; Initialize a time slice index $t\in\mathbb{N}$ to $0$; \label{AlgConsPathIntStepZero}
\item Increase the time slice index $t$ by $1$; Find a $q'_t\in\calU^c(H_t)$ such that
$$\frac{\bfG[\bfq'(t,t{-}1)]}{\bfG[\bfq(t,t{-}1)]}\,=\,\frac{\psi_0(H_t;q'_t)\,G_t(q'_t,q_{t-1})}{\psi_0(H_t;q_t)\,G_t(q_t,q_{t-1})} \,>\, 0;$$ \label{AlgConsPathIntGrowPath1} \vspace{-3.5ex}
\item $\forall t'\in[t,2T{-}t{+}1]\cap\mathbb{N}$, assign the newly found value of $q'_t$ to each $q_{t'}$, and update each Gibbs operator $G_{t'}$ to $G_{t'}=\exp({-}\tau H_t/m)$; \label{AlgConsPathIntGrowPath2}
\item Repeat steps \ref{MCSimuConsPathInt}.\ref{AlgConsPathIntGrowPath1} and \ref{MCSimuConsPathInt}.\ref{AlgConsPathIntGrowPath2} until $t=T$, upon which time break the loop and go to step \ref{MCSimuConsPathInt}.\ref{AlgConsPathIntChooseTimeTuple}; \label{AlgConsPathIntGrowPath3}
\item Increase the iteration counters $N$ by $1$; Choose a random tuple of time slices $\{t_k,\cdots\!,t_i\cdots\!,t_0\}$ with $0\le t_0\le\cdots\!\le t_i\le\cdots\!\le t_k\le 2T$, $0\le i\le k\le 2T$, but no two time instants should coincide when $t_k\neq t_0$; \label{AlgConsPathIntChooseTimeTuple}
\item Choose a random tuple of configuration coordinates $\{q'_{t_k},\cdots\!,q'_{t_i}\cdots\!,q'_{t_0}\}\in\calC^{k+1}$ that satisfies the positive-definite constraints of equations (\ref{PositivePathCondRight}) and (\ref{PositivePathCondLeft}); \label{AlgConsPathIntFindNewCoords}
\item Update the Feynman path from $\bfq$ to $\bfq'$ by substituting the tuple of configuration coordinates $\{q_{t_k},\cdots\!,q_{t_i}\cdots\!,q_{t_0}\}\subseteq\bfq$ with $\{q'_{t_k},\cdots\!,q'_{t_i}\cdots\!,q'_{t_0}\}\subseteq\bfq'$, via Metropolis-Hastings importance sampling according to the transition probability $\Pr(\bfq'\leftarrow\bfq)\sim\bfG[\bfq'(2T,0)]/\bfG[\bfq(2T,0)]\!$ as specified in equation (\ref{FeynPathWiggleTrans}); \label{AlgConsPathIntTransToNewCoords}
\item Repeat steps \ref{MCSimuConsPathInt}.\ref{AlgConsPathIntChooseTimeTuple} through \ref{MCSimuConsPathInt}.\ref{AlgConsPathIntTransToNewCoords} until $N=N_{\max}$, record the well-mixed Feynman path, and reset the value of $N$ to $0$; \label{AlgConsPathIntMixing}
\item Repeat the loop \ref{MCSimuConsPathInt}.\ref{AlgConsPathIntMixing} $S_{\max}$ times to get $S_{\max}$ samples of Feynman paths. \label{AlgConsPathIntRepeatPolyTimes}
\end{enumerate}
\end{algorithm}

The following lists an alternative method of PIMC for simulating a DFF-FS or GFF-FS system, which does not forbid nodal crossings between two adjacent configuration coordinates of two consecutive time slices, but label each Feynman path $\bfq=(q_{2\mathsmaller{T}},\cdots\!,q_{\mathsmaller{T}},q_{\mathsmaller{T}-1},\cdots\!,q_0)\in\calC^{2\mathsmaller{T}+1}$ by a $\lab(\bfq) = \pm 1$, that counts the total number of nodal crossings incurred by the Feynman path, and is efficiently computed from a $\lab(q_0) \defeq \sign(\psi_0(H_0;q_0))$, a $\lab(q_{\mathsmaller{T}}) \defeq \sign(\psi_0(H_0;q_{\mathsmaller{T}}))$, as well as the local nodal structure of the intermediate Hamiltonians $\{H_t:t\in[0,T],\,T\in\mathbb{N}\}$. It is noted, as been observed by many previous authors \cite{Ceperley91,Ceperley96,Zhang97,Zhang04,Kruger08}, that, for fermionic Schr\"odinger systems, the expectation value of an observable with respect to the ground state can be numerically estimated by discarding the {\em odd Feynman paths} labeled by $-1$ and using only the {\em even Feynman paths} labeled by $+1$, because, for any odd Feynman path, there exists a compensating even Feynman path that is a mirror image of the odd Feynman path, obtained by firstly identifying a time slice with a configuration point $q_*$ sitting on a nodal surface, then mirror-reflecting either the Feynman segment from $q_0$ to $q_*$ or the Feynman segment from $q_*$ to $q_{2\mathsmaller{T}}$, said mirror reflection being done by applying an odd permutation that exchange identical fermions.

\renewcommand{\labelenumi}{\thealgorithm.\arabic{enumi}}
\begin{algorithm}{(Constrained PIMC  Simulation of a DFF or GFF Hamiltonian)} \label{MCSimuNoCoPathInt}\\
\vspace{-5.0ex}
\begin{enumerate}[start=0]
\item Choose a predetermined bound $\delta>0$ for numerical accuracy; Choose a warm start $q=r_0$, that is a random point $r_0\in\calU^c(H)$ with the promise of $|\log\psi_0(H'_0;r_0)|$ being polynomially bounded; Set an iteration counters $N\in\mathbb{Z}$ to $0$; Choose a predetermined upper bound $N_{\max}\in\mathbb{N}$ for the iteration counter; Set a predetermined number of samples $S_{\max}\in\mathbb{N}$; Choose a predetermined number $2T{+}1$ of time slices, with $T\in\mathbb{N}$;

Initialize a $2T$-tuple of Gibbs operators $\{G_t:t\in[1,2T]\}$ to $G_t=\exp({-}\tau H_0/m)$, $\forall t\in[1,2T]$; Initialize a Feynman path, that is a $(2T{+}1)$-tuple $\bfq\defeq(q_{2\mathsmaller{T}},\cdots\!,q_{\mathsmaller{T}},q_{\mathsmaller{T}-1},\cdots\!,q_0)\in\calC^{2\mathsmaller{T}+1}$ to $q_t=r_0$, $\forall t\in[0,2T]$; Label the initial Feynman path by $\lab(\bfq) = +1$; Initialize a time slice index $t\in\mathbb{N}$ to $0$; \label{AlgNoCoPathIntStepZero}
\item Increase the time slice index $t$ by $1$; Find a $q'_t\in\calU^c(H_t)$ such that
$$\frac{\bfG[\bfq'(t,t{-}1)]}{\bfG[\bfq(t,t{-}1)]}\,=\,\frac{\psi_0(H_t;q'_t)\,G_t(q'_t,q_{t-1})}{\psi_0(H_t;q_t)\,G_t(q_t,q_{t-1})} \,\neq\, 0;$$
\label{AlgNoCoPathIntGrowPath1} \vspace{-3.5ex}
\item $\forall t'\in[t,2T{-}t{+}1]\cap\mathbb{N}$, assign the newly found value of $q'_t$ to each $q_{t'}$, and update each Gibbs operator $G_{t'}$ to $G_{t'}=\exp({-}\tau H_t/m)$; \label{AlgNoCoPathIntGrowPath2}
\item Repeat steps \ref{MCSimuNoCoPathInt}.\ref{AlgNoCoPathIntGrowPath1} and \ref{MCSimuNoCoPathInt}.\ref{AlgNoCoPathIntGrowPath2} until $t=T$, upon which time break the loop and go to step \ref{MCSimuNoCoPathInt}.\ref{AlgNoCoPathIntChooseTimeTuple}; \label{AlgNoCoPathIntGrowPath3}
\item Increase the iteration counters $N$ by $1$; Choose a random tuple of time slices $\{t_k,\cdots\!,t_i\cdots\!,t_0\}$ with $0\le t_0\le\cdots\!\le t_i\le\cdots\!\le t_k\le 2T$, $0\le i\le k\le 2T$, but no two time instants should coincide when $t_k\neq t_0$; \label{AlgNoCoPathIntChooseTimeTuple}
\item Choose a random tuple of configuration coordinates $\{q'_{t_k},\cdots\!,q'_{t_i}\cdots\!,q'_{t_0}\}\in\calC^{k+1}$ such that none of the two fractions in equations (\ref{PositivePathCondRight}) and (\ref{PositivePathCondLeft}) vanishes; \label{AlgNoCoPathIntFindNewCoords}
\item Update the Feynman path from $\bfq$ to $\bfq'$ by substituting the tuple of configuration coordinates $\{q_{t_k},\cdots\!,q_{t_i}\cdots\!,q_{t_0}\}\subseteq\bfq$ with $\{q'_{t_k},\cdots\!,q'_{t_i}\cdots\!,q'_{t_0}\}\subseteq\bfq'$, via Metropolis-Hastings importance sampling according to the transition probability $\Pr(\bfq'\leftarrow\bfq) \sim \scalebox{1.15}{$|$}\bfG[\bfq'(2T,0)]/\bfG[\bfq(2T,0)]\scalebox{1.15}{$|$}\!$ as specified in equation (\ref{FeynPathWiggleTrans}); If $\bfG[\bfq'(2T,0)]/\bfG[\bfq(2T,0)] < 0$, then label $\bfq'$ by $\lab(\bfq') \defeq -\lab(\bfq)$; Otherwise, label $\bfq'$ by $\lab(\bfq') \defeq \lab(\bfq)$; \label{AlgNoCoPathIntTransToNewCoords}
\item Repeat steps \ref{MCSimuNoCoPathInt}.\ref{AlgNoCoPathIntChooseTimeTuple} through \ref{MCSimuNoCoPathInt}.\ref{AlgNoCoPathIntTransToNewCoords} until $N=N_{\max}$, record the well-mixed Feynman path, and reset the value of $N$ to $0$; \label{AlgNoCoPathIntMixing}
\item Repeat the loop \ref{MCSimuNoCoPathInt}.\ref{AlgNoCoPathIntMixing} $S_{\max}$ times to get $S_{\max}$ samples of Feynman paths. \label{AlgNoCoPathIntRepeatPolyTimes}
\end{enumerate}
\end{algorithm}

An SFF-FS or SFF-EB partial Hamiltonian $H=\sum_{i=1}^{\mathsmaller{J}}H_i$, $J \in \mathbb{N}$, with each partial Hamiltonian $H_i$, $i \in [1,J]$ being DFF or GFF and FS or EB, can be efficiently simulated via Monte Carlo by iteratively simulating DFF or GFF and FS or EB partial Hamiltonians using one of the Algorithms \ref{MCSimuDisc} through \ref{MCSimuNoCoPathInt}. It is often the ground state $\psi_0(H)$ or the associated probability density $|\psi_0(H)|^2$ that is of interest, which can be obtained by running a time-inhomogeneous Markov chain \cite{Seneta81,Winkler03,Stroock05} that iterates the ground state projectors $\{[\psi_0(H_i)] \exp(-\tau H_i) [\psi_0(H_i)]^{-1}\}_{i \in [1,J]}$ in sequence, as implemented in the following algorithm.

\renewcommand{\labelenumi}{\thealgorithm.\arabic{enumi}}
\begin{algorithm}{(Monte Carlo Simulation of a SFF-FS or SFF-EB Hamiltonian)} \label{MCSimuSFF}\\
\vspace{-5.0ex}
\begin{enumerate}[start=0]
\item Receive descriptions of an SFF-FS or SFF-EB partial Hamiltonian $H=\sum_{i=1}^{\mathsmaller{J}}H_i$, $J \in \mathbb{N}$, with each partial Hamiltonian $H_i$, $i \in [1,J]$ being DFF or GFF and FS or EB; Initialize a random walker; \label{MCSimuSFFInit}
\item Perform Monte Carlo simulations driven by $\{[\psi_0(H_i)] \exp(-\tau H_i) [\psi_0(H_i)]^{-1}\}_{i \in [1,J]}$ as Markov operators in sequence, with the $i$-th Markov operator driving the random walker toward a steady state distribution $|psi_0(H_i)|^2$, $\forall i \in [1,J]$; \label{MCSimuSFFOneIter}
\item Repeat step \ref{MCSimuSFF}.\ref{MCSimuSFFOneIter} for a $\Theta(\poly(\size(H)))$ number of times.
\end{enumerate}
\end{algorithm}

\begin{theorem}{(A First {\myTheorem} of Monte Carlo Quantum Computing)}\\
With a warm start promised, the computational problem (as specified in Definition \ref{defiCompProbSimuManyBody}) of simulating an SFF-FS or SFF-EB Hamiltonian $H=\sum_{k=1}^{\mathsmaller{K}}H_k$ is in the class BPP.
\label{FirstTheorem}
\end{theorem}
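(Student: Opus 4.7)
The plan is to verify that one of the Monte Carlo algorithms already displayed (say Algorithm \ref{MCSimuContV1} for concreteness, which handles both SFF-FS and SFF-EB cases uniformly) constitutes a BPP procedure solving the simulation problem of Definition \ref{defiCompProbSimuManyBody}. Three things must be established: (i) each elementary step of the algorithm runs in time $O(\poly(\size(H)))$; (ii) the induced random process is a genuine Markov chain whose stationary distribution is $|\psi_0(H;q)|^2$ up to the promised total-variation tolerance; and (iii) the mixing time toward that stationary distribution is polynomially bounded. Reducing to the connected reach from the warm-started $q_0$ (which, by the warm-start promise, lies in $\calU^c(H)$ with $|\log\psi_0(H;q_0)|$ polynomial) is automatic since the constructed chain is irreducible on $\Range(Q_{\!\mathsmaller{H}};q_0)$.

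First I would establish (i) using the FBM tensor-monomial structure. Each $H_k$ moves only a factor space $\calE_k$ of $\dim(\calE_k)=O(\log(\size(H)))$, so $\psi_0(H_k;\calE_k\oplus u_{q,k}^{\perp})$ and $\langle r|G_*(\epsilon H_k)|q\rangle$ are eigenproblems/Gibbs kernels on a space of dimension polynomial in $\size(H)$, thus computable to the target accuracy $\delta=\Omega(1/\poly(\size(H)))$ in $\poly(\size(H))$ time by the efficient-computability assumption built into the definition of a PLTKD Hamiltonian and of FBM tensor polynomials. In the SFF-FS case, the nodal cell $\calN(H_k;q)$ and the Dirichlet-restricted kernel are likewise local objects; in the SFF-EB case, $G_*(+\infty H_k)$ is just the orthogonal projector $\Pi_k$ onto the (low-dimensional) ground eigenspace of $H_k$, trivially computable from $\psi_0(H_k)$.

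Next I would prove (ii). By Lemma \ref{QuasiStochasticDef} the operator $Q_{\!\mathsmaller{H}}=[\psi_0^*(H)]G_{\mathsmaller{\Box}}(H)[\psi_0^*(H)]^{-1}$ is quasi-stochastic with right marginal $1$ and fixes $|\psi_0(H;\cdot)|^2$, provided one can show that each local transition $\Pr(r|q)$ simulated by the algorithm is actually non-negative and sums/integrates to $1$ over $r$. This is precisely where the SFF hypothesis is used. For the SFF-FS case, the local node-determinacy of Lemma \ref{LocalNodeDeterOfSFF} (together with the fixed-node/lever-rule representation around Corollary \ref{FixedNodeMethod}) ensures that the ratio $\psi_0(H;r)/\psi_0(H;q)$, needed to build the similarity transform, coincides with the efficiently computable local ratio $\psi_{0,l}(H_k;r)/\psi_{0,l}(H_k;q)$ for the uniquely identified ground state $\psi_{0,l}(H_k)$, so the Gibbs kernel restricted to a single nodal cell is non-negative and the quasi-stochastic construction becomes bona fide stochastic. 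For the SFF-EB case, $Q_{\!\mathsmaller{H}}=K^{-1}\sum_k\Pi_k$ sampling (expressed in the $|\psi_{0,l}(H_k)|^2$ form of the transition probability) is manifestly non-negative, and Lemma \ref{LocalNodeDeterOfSFF} again supplies the consistency between $\psi_0(H)$ and the local ground eigenvectors off the measure-$O(KN^{-\xi})$ node-uncertain set $\calU(H)$. Absorbing $\calU(H)$ into the $\varepsilon$ error of Definition \ref{defiCompProbSimuManyBody} uses the frustration-free premise that $\psi_0(H)$ is a common ground state of every $H_k$.

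Finally I would address (iii), which I expect to be the main obstacle. One must show the spectral gap of $Q_{\!\mathsmaller{H}}$ is $\Omega(1/\poly(\size(H)))$, which gives both well-mixing (in the sense of Definition \ref{defiStrongMix}) and TV convergence in polynomially many steps. In the SFF-FS case the chain's eigenvalues coincide with those of $\exp(-\epsilon H)$ by the similarity transform, so the polynomial gap of $H$ promised by the SFF definition together with $\epsilon=\Omega(1/\poly)$ gives the bound immediately. The SFF-EB case is more delicate because $K^{-1}\sum_k\Pi_k$ need not share the spectrum of $\exp(-H)$; here I would apply the essential-boundedness estimate $\langle\psi_n|K^{-1}\sum_k\Pi_k|\psi_n\rangle<1-\lambda_1(H)/(3K\Lambda)$ derived in the paragraph preceding Algorithm \ref{MCSimuDisc} (obtained by chaining the Lie-Trotter-Kato product bound (\ref{LieTrotterKatoBounds}) with the inequality $\Pi_k\preceq I-\Lambda^{-1}H_k$), which gives exactly an $\Omega(1/\poly)$ gap on the excited subspace since $K,\Lambda$ are polynomial. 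Combining (i)--(iii) and producing $S_{\max}=\poly(\size(H),\varepsilon^{-1})$ samples after mixing time $N_{\max}=\poly(\size(H),\varepsilon^{-1})$ yields a BPP algorithm meeting Definition \ref{defiCompProbSimuManyBody}. The most subtle step will be proving that the constants appearing in the Lie-Trotter-Kato bound propagate through the quantum-stochastic similarity of equation (\ref{SchrodingerFokkerPlanckKolmogorovOperator}) without degrading the gap by more than a polynomial factor, and that the discretization tolerance $\delta$ for the subroutine evaluations of $\psi_0(H_k)$ and $\langle r|G_*(\epsilon H_k)|q\rangle$ can be chosen polynomially small while keeping the accumulated TV error below the target $\varepsilon$.
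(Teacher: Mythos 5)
Your proposal is correct and follows essentially the same route as the paper's proof: both reduce the problem to a classical random walk driven by the quasi-stochastic operator of Lemma \ref{QuasiStochasticDef}, both invoke Lemma \ref{LocalNodeDeterOfSFF} for efficient local computation of the wavefunction ratios (with Corollary \ref{FixedNodeMethod} supplying the fixed-node reduction in the SFF-FS case and the $K^{-1}\sum_k\Pi_k$ bound handling the SFF-EB case), and both appeal to the polynomial spectral gap of $Q_{\!\mathsmaller{H}}$ inherited through the similarity transform to conclude polynomial mixing time. The issues you flag at the end (propagation of Lie-Trotter-Kato constants and choice of subroutine tolerance $\delta$) are indeed the points the paper dispatches relatively briskly, so your more explicit enumeration of (i)--(iii) is a reasonable expansion rather than a departure.
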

\vspace{-4.0ex}
\begin{proof}[\iftoggle{ForUSPTO} {Demonstration} {Proof}]
According to the analyses above and Algorithm \ref{MCSimuSFF}, it is only necessary to perform a $\Theta(\poly(\size(H)))$ number of Monte Carlo simulations that sample from ground states of DFF or GFF and FS or EB partial Hamiltonians. Suffice it to prove that any such Monte Carlo simulation for a DFF or GFF and FS or EB partial Hamiltonian can be done with a polynomial-bounded computational complexity.

Firstly, it is without loss of generality to assume that the configuration space $\calC$ is either all-continuous or all-discrete, because any computational problem as specified in Definition \ref{defiCompProbSimuManyBody} with a continuous-discrete product configuration space can be transformed, through a polynomial reduction, into another problem of the same class but based on an all-continuous or all-discrete configuration space. By \myLemma\;\ref{QuasiStochasticOper} and \myCorollary\;\ref{FixedNodeMethod}, suffice it to show that any of the Algorithms \ref{MCSimuDisc} through \ref{MCSimuContV2} solves in probabilistic polynomial time the equivalent problem of either a Dirichlet boundary-conditioned boltzmannonic system confined in one nodal cell when using a fixed-node diffusion type of approach, or a quantum-stochastic operator similarity-transformed random walk based on equation (\ref{r_PinftyHk_q}) over the entire configuration space otherwise.

In accordance with the configuration space $\calC$ being discrete or continuous and $H_i$, $i \in [1,J]$ being FS or EB, a Gibbs operator $G_{\mathsmaller{\Box}}(H_i)$ is suitably defined, and the unique and polynomially gapped ground state $\psi_0(G_{\mathsmaller{\Box}}(H_i))$ is sought after. Irrespectively, it always holds true that the associated quasi-stochastic operator $Q_{\!\mathsmaller{H}_i}=[\psi_0(G_{\mathsmaller{\Box}}(H_i))]G_{\mathsmaller{\Box}}(H_i)[\psi_0(G_{\mathsmaller{\Box}}(H_i))]^{-1}\!$ shares exactly the same eigenvalues with $G_{\mathsmaller{\Box}}(H_i)$. The quotient $\psi_0(G_{\mathsmaller{\Box}}(H_i);r)/\psi_0(G_{\mathsmaller{\Box}}(H_i);q)$ between any pair of configuration points $q,r\in\calU^c(H_i)\subseteq\calC$ that are relevant for simulating $Q_{\!\mathsmaller{H_i}}$-induced transitions is always efficiently computable through a solution of one of its FBM tensor monomials, which is computed efficiently by solving FBM interactions over low-dimensional factor spaces, thereby rendering the transition probability matrix $Q_{\!\mathsmaller{H}_i}$ itself efficiently computable and always positive, thus bona fide stochastic, defining a Markov chain random walk.

As will be justified immediately below, it is safe to assume that $G_{\mathsmaller{\Box}}(H_i)$ is aperiodic, consequently, $Q_{\!\mathsmaller{H}_i}$ is primitive, namely, aperiodic irreducible non-negative. Then the Markov chain $Q_{\!\mathsmaller{H}_i}$ is bound to converge to its unique stationary distribution $|\psi_0(H_i;q)|^2$ from any warm start $q_0$, and the mixing time is polynomial as determined by the spectral gap \cite{Seneta81,Levin08}. $\forall A\in\{\ref{MCSimuDisc},\ref{MCSimuDiscV1},\ref{MCSimuDiscV2},\ref{MCSimuCont},\ref{MCSimuContV1},\ref{MCSimuContV2}\}$, for each iteration of steps $A.1$ through $A.5$ of Algorithm $A$, the computational complexity is polynomial because the number of $(s_m,n_m,d_m)$-few-body-moving tensor monomials is polynomially bounded, while $s_m=O(\log(\size(H_i)))$, $n_m=O(1)$, $d_m=O(1)$. Finally, the predetermined upper bound $N_{\max}$ in all algorithms needs only to be polynomially sized to obtain a polynomial number of rapidly mixing samples of $q\in\calC$ according to the distribution $|\psi_0(H_i;q)|^2$. To conclude, the computational problem of simulating an DFF or GFF and FS or EB Hamiltonian $H_i$ as specified in Definition \ref{defiCompProbSimuManyBody} can be solved via MCMC in polynomial time.
\vspace{-1.5ex}
\end{proof}

In view of the polynomial gappedness of $H$, it is indeed without loss of generality to assume that $\exp(-H)$ be aperiodic and irreducible, therefore, $Q_{\!\mathsmaller{H}} \defeq [\psi_0(H)] \exp(-H) [\psi_0(H)]^{-1\!}$ be primitive, regardless of the configuration space being discrete or continuous, because a Laplace-Beltrami omnidirectional diffusion operator $-\epsilon{\sf L}$, with $\epsilon=\Omega(\poly(\size(H)))$, $\epsilon>0$, could always be added to the Hamiltonian $H$ without substantially altering the ground state wavefunction and the energy gap, so to render the Hamiltonian definitely irreducible. More specifically, with a discrete configuration space $\calG=\prod_{s=1}^{\mathsmaller{S}}\calG_s^{n_s}$ as a Cartesian product graph, the Laplace-Beltrami operator could be ${\sf L}=\sum{\raisebox{.2\height}{$_{(i,j)\in E(\calG)}$}}\left(|i\rangle\langle j|+|j\rangle\langle i|\right)$, where $i$ and $j$ index graph vertices, $(i,j)$ represents a graph edge, and $E(\calG)$ is the set of all graph edges. With a continuous configuration space, ${\sf L}$ could be a Laplace-Beltrami operator $\Delta_g=|\!\det(g)|^{\mathsmaller{-}1/2}\partial_i|\!\det(g)|^{1/2}g^{ij}\partial_j$ associated with a smooth Riemannian metric $g$ on a Riemannian manifold $(\calM,g)$. The aperiodicity of $\exp(-H)$, thus the primitivity of $Q_{\!\mathsmaller{H}}$, follows immediately from the aperiodicity of the Hamiltonian $H$. Or even simpler, any Markovian transition matrix $Q_{\!\mathsmaller{H}}$ can be straightforwardly modified into a lazy version $\half(I+Q_{\!\mathsmaller{H}})$ \cite{Seneta81,Levin08} to ensure aperiodicity.

A warm start is guaranteed for a polynomially gapped, CD-separately irreducible Hamiltonian $H=H^{\mathsmaller{C}}+H^{\mathsmaller{D}}$ supported by a continuous-discrete product configuration space $C=\calM\times\calP$, when two mild conditions are fulfilled: 1) the diagonal elements of $H$ are all polynomially bounded, that is, $|\langle q|H|q\rangle|\le\poly(\size(H))$, $\forall q\in\calC$; 2) each nodal cell is polynomially path-connected, and the positive $H^{\mathsmaller{D}}$-distance between any two nodal cells in any nodal groupoid is polynomially bounded. In which case, any location $q_0$ that is at least a distance $\epsilon>0$ away from any nodal point of $\psi_0(H)$, with $\epsilon^{\mathsmaller{-}1}=O(\poly(\size(H)))$, is a warm start, because, without loss of generality, again a Laplace-Beltrami omnidirectional diffusion operator $-\epsilon{\sf L}$ can be added to the Hamiltonian $H$ without substantially altering the ground state wavefunction and the energy gap, so to establish $\exp[-O(|\log\epsilon|+\poly(\size(H)))L_{\sup}]$ as a lower bound for $|\psi_0(H;q)|$, $\forall q\in\calC$ that is a distance of $\epsilon$ away from any nodal point of $\psi_0(H)$.

Alternatively, it is mathematically less sophisticated and physically more intuitive that, Algorithms \ref{MCSimuDisc} through \ref{MCSimuSFF} can be used to simulate a time series of SFF-FS or SFF-EB Hamiltonians $\{H(t)\}_{0\le t\le 1}$, evolving adiabatically from a pre-solved $H(0)$ with a known $\psi_0(H(0))$ to a target Hamiltonian $H(1)$, such that a warm start for $H(0)$ is readily available, and running any Algorithm \ref{MCSimuDisc} through \ref{MCSimuContV2} for $H(t)$, $0\le t<1$ till convergence supplies warm starts for simulating $H(t+\delta t)$, $\delta t>0$, so long as $\|H(t+\delta t)-H(t)\|$ is sufficiently small and $H(t)$, $0\le t<1$ is always sufficiently gapped to satisfy conditions of the adiabatic theorem \cite{Born28,Kato50,Berry84,Farhi01,Jansen07,Amin09}. Similar adiabatic techniques have been used in the context of boltzmannonic stoquastic Hamiltonians \cite{Bravyi10,Aharonov07a,Aharonov07b}.

Still further, it often happens, and always does in the context of GSQC \cite{Feynman85,Kitaev02,Mizel04,Kempe06}, that the sought-after ground state encodes the history of a quantum evolution and has the quantum probability distributed more or less uniformly among a polynomial number of temporal snapshots of the evolution history, in which one or a few temporal snapshot(s) representing the initial condition of the quantum evolution is/are associated with a known distribution in space carrying a polynomially sized portion of the quantum probability, from which a warm start can be picked easily.

\iftoggle{ForUSPTO} {One skilled in the art shall have no difficulty to recognize that} {It can be easily recognized that} the presented methods of solving a sign problem in numerical simulations are not limited to solutions of many-body quantum systems. Rather, the essential ideas generalize straightforwardly to applications and problems in other areas of science, technology, and engineering, including but not limited to statistics and optimizations, where a high-dimensional density is involved and/or to be simulated, based on which a linear or quadratic functional or another mathematical form is defined as an expectation value, and a computational task is to derive a good numerical estimate to the expectation value. Familiar examples include the theory and applications of Markov random fields, Gibbs distributions, and Gibbs samplings in probability theory, Bayesian statistics, image processing, and stochastic optimizations \cite{Geman84,Liu01,Li09,Bremaud17}, which have demonstrated and enjoyed tremendous successes in providing incredibly efficient numerical methods and algorithms solving many problems of great practical importance, especially when the underlying density is everywhere non-negative and no sign problem is there to spoil numerical integration.

However, in many other applications, a high-dimensional signed density is involved and a sign problem is there to impose great difficulty. That is where the present methods generalize and help. It is often the case, although not always nor does have to be so, that a Gibbs operator, a transition matrix, or another linear operator exists and is explicitly identified, broadly referred to as a {\em transition density matrix}, which has the high-dimensional (signed) density as a {\em stationary density vector} (SDV), in the sense that an operation of said transition density matrix on said stationary density vector produces a (signed) density that is substantially the same as the stationary density vector. There may or may not be a Hamiltonian, or an energy operator, or an energy functional explicitly identified as a generator of the transition density matrix. Even the transition density matrix may be implicit. Regardless, the essential ideas of the present methods are to identify and use an {\em SFF property} in association with a collection of low-dimensional transition density matrices (LD-TDMs) related to a high-dimensional signed density of interest, where each LD-TDM in turn is associated with a plurality of efficiently computable low-dimensional signed densities as low-dimensional stationary density vectors (LD-SDVs) in the sense that an operation of said LD-TDM on each one of the LD-SDVs associated with it produces a low-dimensional signed density that is substantially the same as the one being operated upon, furthermore, said high-dimensional signed density is one stationary density vector associated with all of the LD-TDMs that are related to it. Once such an SFF property is identified and established, it follows from substantially the same mathematical theory, derivations, and proofs as presented in the above, that the high-dimensional signed density of interest can be efficiently simulated on a classical computer, by way of example but no means of limitation, using an MCMC procedure of Gibbs samplings following Markovian state transition rules related to the LD-TDMs.

\iftoggle{ForUSPTO} {
\subsection*{\bf Bi-Fermion Rebits} \label{BiFermionRebits}
} {
\section{Bi-Fermion Rebits} \label{BiFermionRebits}
}
With their BPP solvability established, it is natural to ask about the generality of SFF-FS or SFF-EB systems, whether SFF-FS or SFF-EB Hamiltonians possess sufficient computational power to simulate other quantum systems in general. The answer is yes, as may be greatly surprising to many. To proceed, let's first construct a so-called bi-fermion system as a rebit and devise a
universal set of FBM interactions and tensor polynomials, which can be combined to effect a {\em designer Hamiltonian} that is either SFF-FS or SFF-EB, or even both SFF-FS and SFF-EB, thus dubbed SFF doubly universal (SFF-DU), which has a unique and polynomially gapped ground state encoding an entire history of quantum state evolution as a result of executing a quantum algorithm. A designer Hamiltonian is so named because it is specially programmed and tailor-made to have a ground state encoding a quantum computation, in particular, the solution to a prescribed computational problem that an accordingly designed quantum algorithm is able to solve.

\iftoggle{ForUSPTO} {
} {
\vspace{-2.0ex}
\begin{figure}[ht]
\centering
\includegraphics[width=0.6\textwidth]{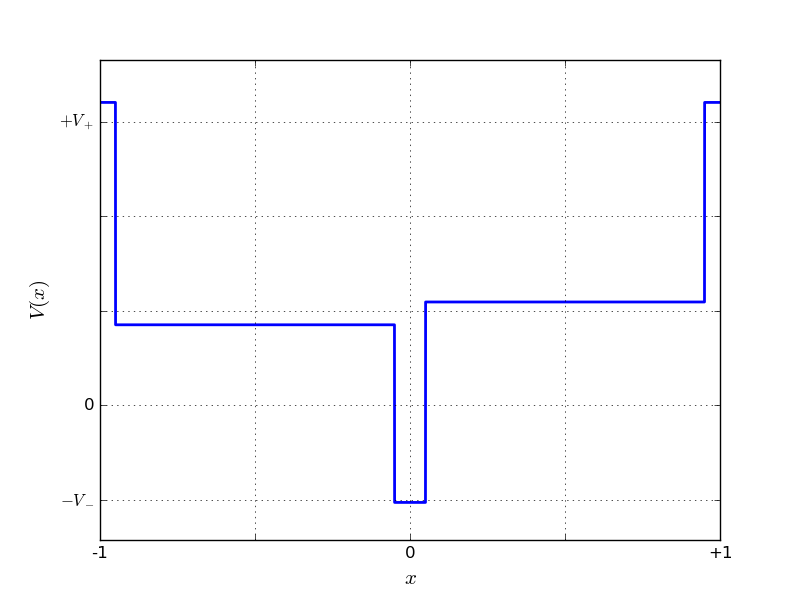}
\caption{A three-well potential on a circle.}
\label{ThreeWellVx}
\end{figure}
\vspace{2.0ex}
}

Let $\mathbb{T}\defeq\mathbb{R}/2\mathbb{Z}$ represent the one-dimensional compact and connected Riemannian Lie group, known as the circle group, which is algebraically a quotient of the additive group of real numbers $(\mathbb{R},+)$ modulo the lattice of even integers $(2\mathbb{Z},+)$, with the $\!\!\pmod*{2}$ operation producing a real-valued remainder within $[-1,1)$, and geometrically a Riemannian manifold endowed with a Riemannian metric defining a distance $d:\mathbb{T}\times\mathbb{T}\mapsto\mathbb{R}_{\ge 0}$ such that, each point on $\mathbb{T}$ is indexed by, namely, bijectively mapped to, a number in the interval $[-1,1)\subset\mathbb{R}$, as will be done throughout this \iftoggle{ForUSPTO} {specification} {presentation}, and $\forall x,y\in\mathbb{T}$, $d(x,y)=\min\{|x-y+2n|:n\in\mathbb{Z}\}$, with $|r|$ representing the absolute value of $r\in\mathbb{R}$. It is obvious that $d(-1,1)=0$, with the consequence of $x=\pm 1$ being identified to represent the same point on $\mathbb{T}$. This coordinate scheme is very convenient, although it deviates from the standard method of using multiple overlapping charts on open sets. Consider two non-interacting identical spinless fermions moving on $\mathbb{T}$ under a three-well potential $V(x)$ as depicted in Fig.\;\ref{ThreeWellVx} and specified analytically as
\begin{equation}
V(x)=\left\{\begin{array}{rl}
\vspace{0.75ex}
-V_-=-C_-V_0,&d(x,0)\le a<\textstyle{\half},\\
\vspace{0.75ex}
V_+=+C_+V_0,&d(x,0)\ge 1-a,\\
\vspace{0.75ex}
v_0,&d\!\left(x,\textstyle{+\half}\right)<\textstyle{\half}-a,\\
-v_0,&d\!\left(x,\textstyle{-\half}\right)<\textstyle{\half}-a,\\
\end{array}\right.
\label{BiFermionV}
\end{equation}
where $C_{\pm}>0$ are fixed constants, $V_0\gg aV_0\gg 1\gg a>0$, $|v_0|\ll 1$, although in the contexts of computations and simulations, $V_0$, $aV_0$, $a^{\mathsmaller{-}1}$, and $|v_0|^{\mathsmaller{-}1}$ all have to be polynomially bounded in terms of the problem size.

For reasons that may become more convincing below, the potential well in the region $\{x:d(x,0)\le a\}$ is referred to as the {\em Pauli well}, while the two potential wells at $\{x:d(x,-1/2)<1/2-a\}$ and $\{x:d(x,1/2)<1/2-a\}$ are called the {\em left and right logic wells} respectively. The Pauli well is said to form a {\em $\pi$ junction}, while the potential barrier in the region $\{x:d(x,0)\ge 1-a\}$ is said to form a {\em normal junction}, between the two regions of logic wells. In a natural unit system with $\hbar=m=1$, the single-particle Schr\"odinger equation reads
\begin{equation}
-\textstyle{\half}\psi''(x)+V(x)\psi(x)=E\psi(x), \label{SingleFermionSchr}
\end{equation}
which is already semi-analytically solvable by matching sinusoidal or hyper-sinusoidal solutions at edges of potential jumps \cite{Schiff68,Flugge94,Messiah99}, but becomes especially easy to analyze in the asymptotic limit of $a\rightarrow 0$, $V_0\rightarrow\infty$, while $\gamma_0\defeq aV_0$ remains constant, namely, when the potential barrier or well approaches a Dirac delta function $\gamma_0\delta(x-x_0)$, $\gamma_0\in\mathbb{R}$, $x_0\in\mathbb{T}$, and the effect of such a Dirac $\delta$ potential reduces to a peculiar boundary condition, that is, while the wavefunction remains continuous at $x=x_0$, its derivative undergoes a definitive discontinuity as $\lim_{\epsilon\mathsmaller{\rightarrow}0,\epsilon>0}[\psi'(x_0+\epsilon)-\psi'(x_0-\epsilon)]=2\gamma_0\psi(x_0)$ \cite{Flugge94,Goldstein94,Belloni14}.

\iftoggle{ForUSPTO} {
} {
\vspace{-2.0ex}
\begin{figure}[ht]
\centering
\includegraphics[width=0.6\textwidth]{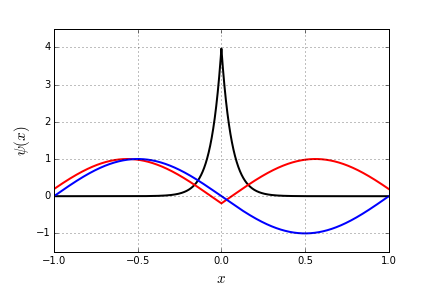}
\caption{The eigenstates $\psi_{\mathsmaller{P}}$, $\psi_+$, and $\psi_-$ in black, red, and blue respectively.}
\label{ThreeWellPsiEvenOdd}
\end{figure}
\vspace{2.0ex}

\vspace{-2.0ex}
\begin{figure}[ht]
\centering
\includegraphics[width=0.6\textwidth]{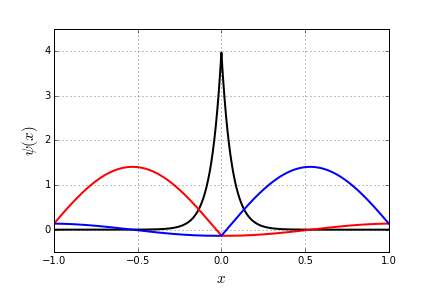}
\caption{The eigenstates $\psi_{\mathsmaller{P}}$, $\psi_{\mathsmaller{L}}$, and $\psi_{\mathsmaller{R}}$ in black, red, and blue respectively.}
\label{ThreeWellPsiPsiLPsiR}
\end{figure}
\vspace{2.0ex}
}

In the simple case with $v_0=0$, $\gamma_+\defeq\lim_{a\mathsmaller{\rightarrow}0,V_+\mathsmaller{\rightarrow}\infty}aV_+=C_+\gamma_0=\Theta(\gamma_0)$, $\gamma_-\defeq\lim_{a\mathsmaller{\rightarrow}0,V_-\mathsmaller{\rightarrow}\infty}aV_-=C_-\gamma_0=\Theta(\gamma_0)$, $\gamma_0\gg 1$, the $\delta$ potential Pauli well hosts a deeply bound state $\psi_{\mathsmaller{P}}(x)=\gamma_-^{\mathsmaller{1/2}}e^{\mathsmaller{-}\gamma_-d(x,0)}$ with energy $E_0=-\gamma_-^2/2$, which is tightly localized around the point $x=0$, while the two logic wells mostly accommodate the first two excited states $\psi_+(x)=\sin k_+[d(x,0)-\alpha_+]$ of even parity, and $\psi_-(x)={-}\sin\pi x$ of odd parity, corresponding to eigen energies $E_+=k_+^2/2$ and $E_-=\pi^2/2$ respectively. The $\psi_-$ state vanishes at $x=0$ as well as $x=\pm 1$, and assumes an eigen energy that is independent of the $\delta$ potentials. For the $\psi_+$ state, the peculiar boundary conditions due to the $\delta$ potentials determine the parameters as $\alpha_+=\gamma_-^{\mathsmaller{-}1}+O\!\left(\gamma_0^{\mathsmaller{-}3}\right)$, $k_+=\pi(1+\gamma_-^{\mathsmaller{-}1}-\gamma_+^{\mathsmaller{-}1})+O\!\left(\gamma_0^{\mathsmaller{-}2}\right)$, such that $E_+=\pi^2/2+\pi^2(\gamma_-^{\mathsmaller{-}1}-\gamma_+^{\mathsmaller{-}1})+O\!\left(\gamma_0^{\mathsmaller{-}2}\right)$. When $\gamma_+=\gamma_-=\gamma_0$, the balance between the positive and negative $\delta$ potentials leads to energy degeneracy at $E_+=E_-=\pi^2/2$. When the $\delta$ potentials are off balance, the even and odd states are split by an energy gap $E_+-E_-=\pi^2(\gamma_-^{\mathsmaller{-}1}-\gamma_+^{\mathsmaller{-}1})+O\!\left(\gamma_0^{\mathsmaller{-}2}\right)$. In a unitarily transformed basis consisting of states
\begin{align}
\psi_{\mathsmaller{L}}(x) \,\defeq\,& \textstyle{\frac{1}{\sqrt{2}}}\!\left[\psi_+(x)+\psi_-(x)\right]\,=\,\sqrt{2}\sin[\pi d(x,0)]\Iver{d(x,-1/2)<1/2} + O\!\left(\gamma_0^{\mathsmaller{-}1}\right)\!, \\[0.75ex]
\psi_{\mathsmaller{R}}(x) \,\defeq\,& \textstyle{\frac{1}{\sqrt{2}}}\!\left[\psi_+(x)-\psi_-(x)\right]\,=\,\sqrt{2}\sin[\pi d(x,0)]
\Iver{d(x,+1/2)<1/2} + O\!\left(\gamma_0^{\mathsmaller{-}1}\right)\!,
\end{align}
with $\Iver{\,\cdot\,}$ being the Iverson bracket, the wave amplitude is mostly localized in the left and right logic wells respectively. The combined effect of the $\delta$ potentials may be interpreted as a coupling via tunneling between the $\psi_{\mathsmaller{L}}$ and $\psi_{\mathsmaller{R}}$ states in conjunction with an overall energy shift, altogether represented by a Hamiltonian $\Gamma+\Gamma|\psi_{\mathsmaller{L}}\rangle\langle\psi_{\mathsmaller{R}}|+\Gamma|\psi_{\mathsmaller{R}}\rangle\langle\psi_{\mathsmaller{L}}|$, with the coupling strength $\Gamma\defeq\pi^2(\gamma_-^{\mathsmaller{-}1}-\gamma_+^{\mathsmaller{-}1})/2+O\!\left(\gamma_0^{\mathsmaller{-}2}\right)$, whereas a relative potential shift $2v_0\neq 0$ between the left and right logic wells can be considered simply as to offset the eigen energies of the $\psi_{\mathsmaller{L}}$ and $\psi_{\mathsmaller{R}}$ states. By the same token, in the $\{\psi_+,\psi_-\}$ basis, the  combined effect of the $\delta$ potentials may be understood as to offset the eigen energies of the $\psi_+$ and $\psi_-$ states, whereas a relative potential shift $2v_0\neq 0$ between the left and right logic wells can be regarded as a coupling between the $\psi_+$ and $\psi_-$ states. It is fairly clear that, when $0\neq|\gamma_+-\gamma_-|\ll\gamma_0$ and $v_0=0$, the second and third eigen states remain largely between $\psi_+$ and $\psi_-$, with a small deviation in the wavefunctions that is $O\!\left(|\gamma_+-\gamma_-|/\gamma_0^2\right)$, although the state degeneracy will be lifted with an energy gap that is $\Theta\!\left(|\gamma_+-\gamma_-|/\gamma_0\right)$. Similarly, it can be easily verified that, when $\gamma_+=\gamma_-=\gamma_0$ and $0\neq|v_0|\ll 1\ll\gamma_0$, the second and third eigen states remain largely between $\psi_{\mathsmaller{L}}$ and $\psi_{\mathsmaller{R}}$, with a small deviation in the wavefunctions that is $O\!\left(|v_0|/\gamma_0\right)$, although the state degeneracy will be lifted with an energy gap that is $\Theta\!\left(|v_0|\right)$. Fig.\;\ref{ThreeWellPsiEvenOdd} shows the first three single-particle eigenstates $\psi_{\mathsmaller{P}}$, $\psi_+$, and $\psi_-$ on the circle group with a balanced three-well potential, namely, when $v_0=0$, $\gamma_+=\gamma_-=\gamma_0$, where $\psi_{\mathsmaller{P}}(x)$, $\psi_+(x)$, and $\psi_-(x)$ are plotted in black, red, and blue colors respectively. In contrast, Fig.\;\ref{ThreeWellPsiPsiLPsiR} shows the eigenstates $\psi_{\mathsmaller{P}}$, $\psi_{\mathsmaller{L}}$, and $\psi_{\mathsmaller{R}}$ as a different representation. A relatively small value of $\gamma_0=16$ is chosen only to reduce strains in graphic displaying and viewing, whereas in real applications, $\gamma_0$ is usually much larger.

\iftoggle{ForUSPTO} {
} {
\vspace{-2.0ex}
\begin{figure}[ht]
\centering
\begin{tabular}{ccc}
\subfloat[Subfigure 1 list of figures text][$\Phi_+(x_1,x_2)$]{\includegraphics[width=0.54\textwidth]
{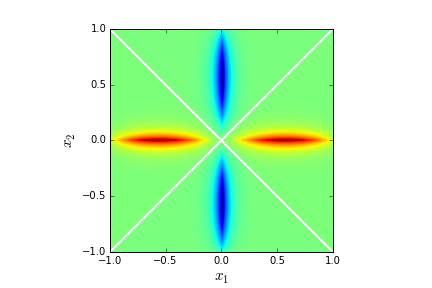}\label{fig:FourPhis1}} & \hspace{-7em} &
\subfloat[Subfigure 2 list of figures text][$\Phi_-(x_1,x_2)$]{\includegraphics[width=0.54\textwidth]
{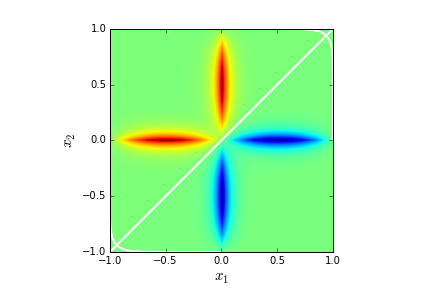}\label{fig:FourPhis2}} \\
\subfloat[Subfigure 3 list of figures text][$\Phi_{\mathsmaller{L}}(x_1,x_2)$]{\includegraphics[width=0.54\textwidth]
{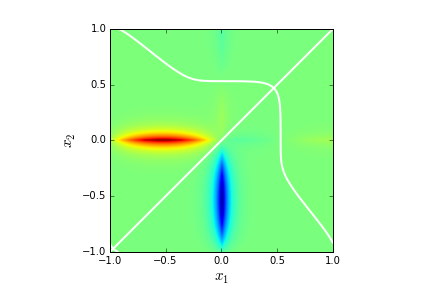}\label{fig:FourPhis3}} & \hspace{-7em} &
\subfloat[Subfigure 4 list of figures text][$\Phi_{\mathsmaller{R}}(x_1,x_2)$]{\includegraphics[width=0.54\textwidth]
{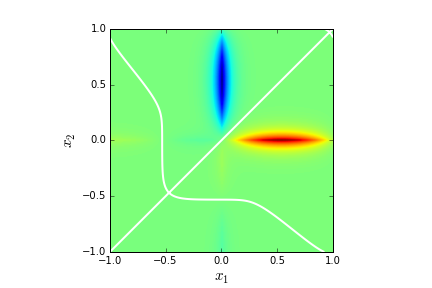}\label{fig:FourPhis4}}
\end{tabular}
\caption{The bi-fermion wavefunctions $\Phi_+(x_1,x_2)$, $\Phi_-(x_1,x_2)$, $\Phi_{\mathsmaller{L}}(x_1,x_2)$, $\Phi_{\mathsmaller{R}}(x_1,x_2)$.}
\label{FourPhisAndNodalCurves}
\end{figure}
\vspace{2.0ex}
}

With $\alpha\defeq\alpha(\gamma_0)\defeq 2\gamma_0^{{-}1}\log\gamma_0$, define a closed circular arc $\mathbb{T}^-_{\alpha}\defeq\{x\in\mathbb{T}:d(x,0)\le\alpha\}$ and its open complement $\mathbb{T}^+_{\alpha}\defeq\{x\in\mathbb{T}:d(x,0)>\alpha\}$, corresponding to the regions that are nearby and far away from the Pauli well respectively. Consider a probability of collision $P(\psi_{\pm},\psi_{\mathsmaller{P}};\mathbb{T}^-_{\alpha})$ between the $\psi_{\pm}$ and the $\psi_{\mathsmaller{P}}$ orbits with respect to $\mathbb{T}^-_{\alpha}$, which is defined and bounded as
\begin{align}
P(\psi_{\pm},\psi_{\mathsmaller{P}};\mathbb{T}^-_{\alpha}) \,\defeq\,& {\textstyle{\scalebox{1.2}{$\int$}_{\!\mathbb{T}^-_{\alpha}}}}|\psi_{\pm}(x)|^2dx \, {\textstyle{\scalebox{1.2}{$\int$}_{\!\mathbb{T}^-_{\alpha}}}}|\psi_{\mathsmaller{P}}(x)|^2dx \,+\, {\textstyle{\scalebox{1.2}{$\int$}_{\!\mathbb{T}^+_{\alpha}}}}|\psi_{\pm}(x)|^2dx \, {\textstyle{\scalebox{1.2}{$\int$}_{\!\mathbb{T}^+_{\alpha}}}}|\psi_{\mathsmaller{P}}(x)|^2dx \nonumber \\[0.75ex]
\le\;\,& {\textstyle{\scalebox{1.2}{$\int$}_{\!\mathbb{T}^-_{\alpha}}}}|\psi_{\pm}(x)|^2dx \,+\, {\textstyle{\scalebox{1.2}{$\int$}_{\!\mathbb{T}^+_{\alpha}}}}|\psi_{\mathsmaller{P}}(x)|^2dx \label{OrbitCollisionProb} \\[0.75ex]
=\;\,& O\!\left(\gamma_0^{{-}3}\log^3\gamma_0\right) \,+\, O\!\left(\gamma_0^{{-}4}\right) \,=\, O\!\left(\gamma_0^{{-}3}\log^3\gamma_0\right)\!. \nonumber
\end{align}
Also, define a closed semicircle $\mathbb{T}_-\defeq\{x\in\mathbb{T}:d(x,{-}1/2)\le 1/2\}$ and and its open complement $\mathbb{T}_+\defeq\{x\in\mathbb{T}:d(x,1/2)<1/2\}$, that correspond to the left and the right logic wells respectively. Consider the overlap integral between $\psi_+$ and $\psi_-$ within each of the two semicircles,
\begin{equation}
2\,{\textstyle{\scalebox{1.2}{$\int$}_{\!\mathbb{T}_-}}}\psi_+(x)\psi_-(x)dx \,=\, {-}2\,{\textstyle{\scalebox{1.2}{$\int$}_{\!\mathbb{T}_+}}}\psi_+(x)\psi_-(x)dx \,=\, \cos\pi\alpha_+ \,=\, 1-O\!\left(\gamma_0^{{-}2}\right)\!. \label{PsiPPsiMOverlap}
\end{equation}
By choosing a sufficiently large value for $\gamma_0$, the probability of collision $P(\psi_{\pm},\psi_{\mathsmaller{P}};\mathbb{T}^-_{\alpha})$ can be made $O\!\left(\gamma_0^{{-}3}\log^3\gamma_0\right)$ infinitesimal, while the overlap integral between $\psi_+$ and $\psi_-$ within each of the two logic wells can be made $O\!\left(\gamma_0^{{-}2}\right)$ close to unity or perfection. Finally, define a closed semicircle $\mathbb{T}_0\defeq\{x\in\mathbb{T}:d(x,0)\le 1/2\}$ and its open complement $\mathbb{T}_1\defeq\{x\in\mathbb{T}:d(x,0)>1/2\}$, that enclose the Pauli well and the potential barrier respectively. It is interesting and potentially useful to note that, the two eigenstates $\psi_+$ and $\psi_-$ have an appreciable difference in their cumulative probability distributions between the two semicircles $\mathbb{T}_0$ and $\mathbb{T}_1$. While $\psi_-$ is exactly equidistributed between the two semicircles, that is, $\int_{\mathbb{T}_0}|\psi_-(x)|^2dx=\int_{\mathbb{T}_1}|\psi_-(x)|^2dx=1/2$, the $\psi_+$ state is more concentrated in $\mathbb{T}_1$ than in $\mathbb{T}_0$, and specifically,
\begin{equation}
{\textstyle{\scalebox{1.2}{$\int$}_{\!\mathbb{T}_1}}}|\psi_+(x)|^2dx - {\textstyle{\scalebox{1.2}{$\half$}}} \,=\, {\textstyle{\scalebox{1.2}{$\half$}}} - {\textstyle{\scalebox{1.2}{$\int$}_{\!\mathbb{T}_0}}}|\psi_+(x)|^2dx \,=\, \pi^{{-}1}\sin 2\pi\alpha_+ \,=\, 2\gamma_0^{{-}1}+O\!\left(\gamma_0^{{-}2}\right)\!. \label{PsiPlusUnevenDistribution}
\end{equation}
It is most worth noting that the disparity between the cumulative probabilities $\int_{\mathbb{T}_0}|\psi_+(x)|^2dx$ and $\int_{\mathbb{T}_1}|\psi_+(x)|^2dx$ scales asymptotically as $\Theta\!\left(\gamma_0^{{-}1}\right)$ when $\gamma_0\gg 1$, as opposed to the $O\!\left(\gamma_0^{{-}3}\log^3\gamma_0\right)$ probability of collision between the $\psi_{\pm}$ and $\psi_{\mathsmaller{P}}$ orbits, and the $O\!\left(\gamma_0^{{-}2}\right)$ deviation from perfect overlapping between $\psi_+$ and $\psi_+$ within each of the logic wells.

To employ the space spanned by $\{\psi_+,\psi_-\}$ or $\{\psi_{\mathsmaller{L}},\psi_{\mathsmaller{R}}\}$ for GSQC, two identical fermions are employed to invoke the Pauli exclusion principle, such that one fermion fills and blocks the deeply bound $\psi_{\mathsmaller{P}}$ state, while the other particle lives in the two-dimensional space ${\rm span}\{\psi_+,\psi_-\}$ to implement a rebit. The configuration is an example of Pauli blockade. Such construct of two identical fermions moving in a three-well potential on a circle is called a bi-fermion, whose two lowest-lying energy states form one rebit's worth of computational space, in which universal GSQC can be realized by designer Hamiltonians with $\calC$-diagonal potential perturbations, where $\calC=\mathbb{T}^2$ for a single bi-fermion. More specifically, a bi-fermion has the nominal Hamiltonian
\begin{equation}
H_{\mathsmaller{\rm BF},0} \,\defeq\, \frac{\gamma_0^2-\pi^2}{2}+\sum_{k=1}^2\left[-{\textstyle{\half}}\partial_k^2-\gamma_0\delta(x_k)+\gamma_0\delta(x_k+1)\right]\!, \label{HBF0}
\end{equation}
$\partial_k\defeq\partial/\partial x_k$, $\partial_k^2\defeq\partial^2\!/\partial x_k^2$, $\forall k\in\{1,2\}$, with two exchange antisymmetric states
\begin{align}
\Phi_+(x_1,x_2) \,\defeq\,& \textstyle{\frac{1}{\sqrt{2}}}\left[\psi_+(x_1)\psi_{\mathsmaller{P}}(x_2)-\psi_+(x_2)\psi_{\mathsmaller{P}}(x_1)\right]\!, \label{PhiPdefi} \\[0.75ex]
\Phi_-(x_1,x_2) \,\defeq\,& \textstyle{\frac{1}{\sqrt{2}}}\left[\psi_-(x_1)\psi_{\mathsmaller{P}}(x_2)-\psi_-(x_2)\psi_{\mathsmaller{P}}(x_1)\right]\!, \label{PhiMdefi}
\end{align}
that are degenerate at the lowest (zero-valued) energy and span a rebit manifold. All excited states lie above with an energy gap at least $\Omega(1)$. The unitarily transformed states
\begin{equation}
\Phi_{\mathsmaller{L}}\defeq\textstyle{\frac{1}{\sqrt{2}}}\left(\Phi_++\Phi_-\right),\;\;\Phi_{\mathsmaller{R}}\defeq\textstyle{\frac{1}{\sqrt{2}}}\left(\Phi_+-\Phi_-\right) \label{PhiLRdefi}
\end{equation}
have one fermion trapped by the Pauli well and tightly localized around $x=0$, while the other fermion mostly localized in either the left or the right logic well, although the spatial localization is not quite exact. \iftoggle{ForUSPTO} {Images of the bi-fermion wavefunctions $\Phi_+(x_1,x_2)$, $\Phi_-(x_1,x_2)$, $\Phi_{\mathsmaller{L}}(x_1,x_2)$, and $\Phi_{\mathsmaller{R}}(x_1,x_2)$ are shown in Fig.\;\ref{FourPhisAndNodalCurves} via contour plots, where dashed lines labeled by the numbers $0.1$, $0.3$, $0.7$ indicate where a wavefunction is positive-valued at levels equal to $0.1$, $0.3$, $0.7$ respectively, while dark solid lines labeled by the numbers $-0.1$, $-0.3$, $-0.7$ signify where a wavefunction is negative-valued at levels equal to $-0.1$, $-0.3$, $-0.7$ respectively. In each image, there are gray solid lines labeled by the number $0.0$ to indicate nodal surfaces, which are actually curves in the present context.} {Images of the bi-fermion wavefunctions $\Phi_+(x_1,x_2)$, $\Phi_-(x_1,x_2)$, $\Phi_{\mathsmaller{L}}(x_1,x_2)$, and $\Phi_{\mathsmaller{R}}(x_1,x_2)$ are shown in Fig.\;\ref{FourPhisAndNodalCurves} via scaled colors, where the color of image pixels extends from dark blue to dark red in accordance with the wave amplitude varying from maximally negative to maximally positive values. In each image, there are white lines indicating nodal surfaces, which are actually curves in the present context.} It is noted that for all of the bi-fermion wavefunctions $\{\Phi_i(x_1,x_2)\}$, $i\in\{+,-,L,R\}$, the two nodal curves are always orthogonal to each other when they intersect, as it follows from the fact that $(\partial_1^2+\partial_2^2)\Phi_i(x_1,x_2)=0$ on the nodal curves, $\forall i\in\{+,-,L,R\}$ \cite{Ceperley91}. It is also noted that the nodal curves always have the bi-fermion configuration space divided into precisely one positive nodal cell $\calN^+(\Phi_i(x_1,x_2))$ and one negative nodal cell $\calN^-(\Phi_i(x_1,x_2))$, $i\in\{+,-,L,R\}$, as dictated by {\myCorollary} \ref{OnePositiveNodalCell} of \iftoggle{ForUSPTO}{Auxiliary Utility}{Lemma} \ref{TilingProperty}.

For the most convenient implementations of logic operations among multiple rebits, it may be preferable to work with basis states that are strictly localized in the logic wells. Again set $\alpha=2\gamma_0^{{-}1}\log\gamma_0$, define three well-localized single-particle states
\begin{align}
|P\rangle \,\defeq\,& \mathlarger{\mathlarger{[}}\gamma_0+O\!\left(\gamma_0^{{-}4}\right)\mathlarger{\mathlarger{]}}^{1/2}e^{\mathsmaller{-}\gamma_0d(x,0)} \Iver{d(x,0)\le\alpha}, \\[0.75ex]
|L\rangle \,\defeq\,& \mathlarger{\mathlarger{[}}\sqrt{2}+O\!\left(\alpha^3\right)\mathlarger{\mathlarger{]}}\sin\pi[d(x,0)-(2\gamma_0)^{{-}1}] \Iver{d(x,-(1+\alpha)/2)<(1-\alpha)/2}, \\[0.75ex]
|R\rangle \,\defeq\,& \mathlarger{\mathlarger{[}}\sqrt{2}+O\!\left(\alpha^3\right)\mathlarger{\mathlarger{]}}\sin\pi[d(x,0)-(2\gamma_0)^{{-}1}] \Iver{d(x,+(1+\alpha)/2)<(1-\alpha)/2},
\end{align}
with $\Iver{\,\cdot\,}$ being the Iverson bracket, such that
\begin{align}
\langle\psi_{\mathsmaller{P}}|P\rangle \,=\, 1+O\!\left(\gamma_0^{{-}4}\right)\!, \\[0.75ex]
\textstyle{\frac{1}{\sqrt{2}}}\langle\psi_{\pm}|\left(|L\rangle\pm|R\rangle\right) \,=\, 1+O\!\left(\gamma_0^{{-}2}\right)\!, \\[0.75ex]
\langle\psi_+|\left(|L\rangle+|R\rangle\right) - \langle\psi_-|\left(|L\rangle-|R\rangle\right) \,=\, O\!\left(\gamma_0^{{-}3}\right)\!, \label{psipmLRpmAgreeWell}
\end{align}
where the $\pi(2\gamma_0)^{{-}1}$ phase shift in the wavefunction $\sin\pi[d(x,0)-(2\gamma_0)^{{-}1}]$ is introduced deliberately to make equation (\ref{psipmLRpmAgreeWell}) true. Then designate the bi-fermion states
\begin{align}
& |\!\downarrow\rangle \,\defeq\, \textstyle{\frac{1}{\sqrt{2}}}\left(|L_1\rangle|P_2\rangle-|L_2\rangle|P_1\rangle\right)
\,\defeq\, \textstyle{\frac{1}{\sqrt{2}}}\left(|L_1P_2\rangle-|L_2P_1\rangle\right) \nonumber \\[0.75ex]
& \!\defeq \sin\pi[d(x_1,0)-(2\gamma_0)^{{-}1}] \Iver{d(x_1,-(1+\alpha)/2)<(1-\alpha)/2}
\gamma_0^{1/2}e^{\mathsmaller{-}\gamma_0d(x_2,0)} \Iver{d(x_2,0)\le\alpha} \nonumber \\[0.75ex]
& \!-\; \sin\pi[d(x_2,0)-(2\gamma_0)^{{-}1}] \Iver{d(x_2,-(1+\alpha)/2)<(1-\alpha)/2}
\gamma_0^{1/2}e^{\mathsmaller{-}\gamma_0d(x_1,0)} \Iver{d(x_1,0)\le\alpha} \nonumber \\[0.75ex]
& \!+\; O\!\left(\alpha^3\right), \label{BiFermionDownDef}
\end{align}
and
\begin{align}
& |\!\uparrow\rangle \,\defeq\, \textstyle{\frac{1}{\sqrt{2}}}\left(|R_1\rangle|P_2\rangle-|R_2\rangle|P_1\rangle\right)
\,\defeq\, \textstyle{\frac{1}{\sqrt{2}}}\left(|R_1P_2\rangle-|R_2P_1\rangle\right) \nonumber \\[0.75ex]
& \!\defeq \sin\pi[d(x_1,0)-(2\gamma_0)^{{-}1}] \Iver{d(x_1,+(1+\alpha)/2)<(1-\alpha)/2}
\gamma_0^{1/2}e^{\mathsmaller{-}\gamma_0d(x_2,0)} \Iver{d(x_2,0)\le\alpha} \nonumber \\[0.75ex]
& \!-\; \sin\pi[d(x_2,0)-(2\gamma_0)^{{-}1}] \Iver{d(x_2,+(1+\alpha)/2)<(1-\alpha)/2}
\gamma_0^{1/2}e^{\mathsmaller{-}\gamma_0d(x_1,0)} \Iver{d(x_1,0)\le\alpha} \nonumber \\[0.75ex]
& \!+\; O\!\left(\alpha^3\right), \label{BiFermionUpDef}
\end{align}
which can be equivalently written in the second-quantized representation as
\begin{align}
|\!\downarrow\rangle \,=\,& a_{\mathsmaller{L}}^+a_{\mathsmaller{P}}^+|{\sf vac}\rangle, \\[0.75ex]
|\!\uparrow\rangle \,=\,& a_{\mathsmaller{R}}^+a_{\mathsmaller{P}}^+|{\sf vac}\rangle,
\end{align}
where $|{\sf vac}\rangle$ denotes the vacuum state, $a_{\mathsmaller{P}}^+$, $a_{\mathsmaller{L}}^+$, and $a_{\mathsmaller{R}}^+$ are fermion creation operators for the $|P\rangle$, $|L\rangle$, and $|R\rangle$ singe-particle orbits respectively. At times, it is also convenient to use the unitarily transformed basis with states
\begin{align}
|+\rangle \,\defeq\,& \textstyle{\frac{1}{\sqrt{2}}}(|\!\downarrow\rangle+|\!\uparrow\rangle) \,=\, \textstyle{\frac{1}{\sqrt{2}}}(a_{\mathsmaller{L}}^++a_{\mathsmaller{R}}^+)a_{\mathsmaller{P}}^+|{\sf vac}\rangle, \label{BiFermionPlusDef} \\[0.75ex]
|-\rangle \,\defeq\,& \textstyle{\frac{1}{\sqrt{2}}}(|\!\downarrow\rangle-|\!\uparrow\rangle) \,=\, \textstyle{\frac{1}{\sqrt{2}}}(a_{\mathsmaller{L}}^+-a_{\mathsmaller{R}}^+)a_{\mathsmaller{P}}^+|{\sf vac}\rangle. \label{BiFermionMinusDef}
\end{align}
To make definitive and unambiguous references in the following, the states $|\!\downarrow\rangle$ and $|\!\uparrow\rangle$ or $|+\rangle$ and $|-\rangle$ are called the {\em effective computational basis states}, and ${\rm span}\{|\!\downarrow\rangle,|\!\uparrow\rangle\}={\rm span}\{|+\rangle,|-\rangle\}$ is called the {\em computational bi-fermion space}. By contrast, the states $\Phi_{\mathsmaller{L}}$ and $\Phi_{\mathsmaller{R}}$ or $\Phi_+$ and $\Phi_-$ are called the {\em physical bi-fermion basis states}, and ${\rm span}\{\Phi_{\mathsmaller{L}},\Phi_{\mathsmaller{R}}\}={\rm span}\{\Phi_+,\Phi_-\}$ is called the {\em physical bi-fermion space}. The two spaces and their corresponding basis states are nearly the same, as the overlaps $\langle\downarrow\!|\Phi_{\mathsmaller{L}}\rangle$, $\langle\uparrow\!|\Phi_{\mathsmaller{R}}\rangle$, $\langle+|\Phi_+\rangle$, and $\langle-|\Phi_-\rangle$ are all valued within $O\!\left(\gamma_0^{{-}2}\right)$ to $1$.

Interestingly, just for a little digression, it is noted that in the special case of a balanced three-well potential with $v_0=0$ and $\gamma_+=\gamma_-=\gamma_0\gg 1$ on the circle group, apart from the deeply bound state $\psi_{\mathsmaller{P}}(x)=\gamma_0^{1/2}e^{\mathsmaller{-}\gamma_0d(x,0)}$ with energy $E_0=-\gamma_0^2/2$, all the other low-lying single-particle energy levels $E_n=n^2\pi^2/2$, $n\in\mathbb{N}$, $n\ll\gamma_0$ are two-fold degenerate, where the two degenerate states per each energy level $E_n$, $n\ge 1$ preferably chosen to have a definitive parity, either even or odd, namely, $\psi_n^+=\sin n\pi[d(x,0)-\gamma_0^{\mathsmaller{-}1}+O\!\left(\gamma_0^{\mathsmaller{-}3}\right)]$, or $\psi_n^-={-}\sin n\pi x$. For all $n\in\mathbb{N}$, $n\ll\gamma_0$, $x=0$ and $x=\pm 1$ are the nodal points for the odd-parity states $\{\psi_n^-\}_{n}$, while $x=\pm\gamma_0^{\mathsmaller{-}1}$ are approximately the nodal points for the even-parity states $\{\psi_n^+\}_{n}$, with an extremely small error $O\!\left(\gamma_0^{\mathsmaller{-}3}\right)$ when $\gamma_0\gg 1$. A bi-fermion, with two non-interacting identical spinless fermions moving in the balanced three-well potential on the circle group, will have one fermion filling the $\psi_{\mathsmaller{P}}$ state and another fermion occupying one of the $\{\psi_n^{\pm}\}_n$ orbits, so that the low-lying energy levels $E_0+E_n$, $n\in\mathbb{N}$, $n\ll\gamma_0$ are all two-fold degenerate, with even-parity and odd-parity eigenstates characterized respectively by the antisymmetric two-fermion wavefunctions $\Psi_n^{\pm}(x_1,x_2)\defeq\psi_n^{\pm}(x_1)\psi_{\mathsmaller{P}}(x_2)-\psi_n^{\pm}(x_2)\psi_{\mathsmaller{P}}(x_1)$, $n\in\mathbb{N}$, where all of the even-parity states $\{\Psi_n^+\}_n$ have nodal lines defined by $x_1-x_2=0$ and $x_1+x_2=0$, $x_1,x_2\in\mathbb{T}$, as illustrated in Fig.\;\ref{fig:FourPhis1}, while all of the odd-parity states $\{\Psi_n^-\}_n$ have one nodal line defined by $x_1-x_2=0$, $x_1,x_2\in\mathbb{T}$, and another nodal curve that, when $\gamma_0\gg 1$, is well approximated by $\{x_1=\pm 1\}\cup\{x_2=\pm1\}$, as illustrated in Fig.\;\ref{fig:FourPhis2}. The open square box $\{(x_1,x_2):x_1,x_2\in(-1,1)\}\subset\mathbb{R}^2$ is divided by the two straight lines $x_1-x_2=0$ and $x_1+x_2=0$, $x_1,x_2\in(-1,1)$ into four isosceles right triangles, the interiors of which are called the E (east), N (north), W (west), and S (south) quarters respectively. Then the even-parity state $\Psi_1^+$ has the E and W quarters as well as the line boundary $\{x_1=\pm 1\}$ between them fused into a connected open set $\calN_{\mathsmaller{\rm EW}}$ as one nodal cell on $\mathbb{T}^2$, and the S and N quarters as well as the line boundary $\{x_2=\pm 1\}$ between them fused into a connected open set $\calN_{\mathsmaller{\rm SN}}$ as the other nodal cell on $\mathbb{T}^2$, with the wavefunction vanishing almost everywhere on the line $x_1+x_2=0$ and absolutely everywhere on the line $x_1-x_2=0$, $x_1,x_2\in\mathbb{T}$, while the odd-parity state $\Psi_1^-$ has the N and W quarters as well as the line boundary $\{x_1+x_2=0,x_1<0\}$ between them fused into a connected open set $\calN_{\mathsmaller{\rm NW}}$ as one nodal cell on $\mathbb{T}^2$, and the S and E quarters as well as the line boundary $\{x_1+x_2=0,x_1>0\}$ between them fused into a connected open set $\calN_{\mathsmaller{\rm SE}}$ as the other nodal cell on $\mathbb{T}^2$, with the wavefunction vanishing almost everywhere on the lines $\{x_1=\pm 1\}\cup\{x_2=\pm1\}$ and absolutely everywhere on the line $x_1-x_2=0$, $x_1,x_2\in\mathbb{T}$. Furthermore, all of the even-parity states $\{\Psi_n^+\}_{n\ll\gamma_0}$ can be recovered as eigenstates of the Dirichlet boundary-conditioned Hamiltonians $H_{\mathsmaller{\rm BF},0}|\calN_{\mathsmaller{\rm EW}}$ and $H_{\mathsmaller{\rm BF},0}|\calN_{\mathsmaller{\rm SN}}$, and by exactly the same token, all of the odd-parity states $\{\Psi_n^-\}_{n\ll\gamma_0}$ as eigenstates of $H_{\mathsmaller{\rm BF},0}|\calN_{\mathsmaller{\rm NW}}$ and $H_{\mathsmaller{\rm BF},0}|\calN_{\mathsmaller{\rm SE}}$. In other words, under a balanced three-well potential, the nodal cells $\calN_{\mathsmaller{\rm EW}}$ and $\calN_{\mathsmaller{\rm NW}}$ (or $\calN_{\mathsmaller{\rm SN}}$ and $\calN_{\mathsmaller{\rm SE}}$) are isospectral manifolds, thus, one cannot distinguish between the two differently shaped bi-fermion ``drums'' $H_{\mathsmaller{\rm BF},0}|\calN_{\mathsmaller{\rm EW}}$ and $H_{\mathsmaller{\rm BF},0}|\calN_{\mathsmaller{\rm NW}}$ (or $H_{\mathsmaller{\rm BF},0}|\calN_{\mathsmaller{\rm SN}}$ and $H_{\mathsmaller{\rm BF},0}|\calN_{\mathsmaller{\rm SE}}$), by just ``hearing'' the spectra of their vibrations \cite{Kac66,Sunada85,Gordon92,Buser94}.

Now come back to our main course of discussions regarding the use of bi-fermions for GSQC. The nominal Hamiltonian $H_{\mathsmaller{\rm BF},0}$ ensures that a bi-fermion at its lowest energy stays in the physical bi-fermion space, which is substantially the same as the computational bi-fermion space, when both parameters $\gamma_0$ and $\alpha^{{-}1}=\gamma_0/(2\log\gamma_0)$ are large yet polynomially bounded. Although containment of the ground state(s) in ${\rm span}\{|\!\downarrow\rangle,|\!\uparrow\rangle\}$ is not absolute, the rate of leakage error as measured by the trace of a so-called {\em leakage error operator} $\sfE_{\rm leak}(\gamma_0)\defeq I-|\!\downarrow\rangle\langle\downarrow\!|-|\!\uparrow\rangle\langle\uparrow\!|\in\calB(\{\Phi_{\mathsmaller{L}},\Phi_{\mathsmaller{R}}\})$, can be made arbitrarily small as $\Tr(\sfE_{\rm leak}(\gamma_0))=O\!\left(\alpha^3\right)=O\!\left(\gamma_0^{{-}3}\log^3\gamma_0\right)$, at the price of proportionally increased peak-to-valley swing of the three-well potential $V(x)$, which may entail higher costs to simulate, either quantum computationally or by means of classical Monte Carlo. Furthermore, there are quantum error suppression techniques known as subsystem, operator, or Hamiltonian encoding \cite{Kribs05a,Kribs05b,Bacon06,Jordan06}, that map intended Hamiltonians into self-adjoint operators acting on an encoded qubit space, and introduce energy penalty terms to suppress one-local, or $k$-local with a fixed small $k$, error transitions affecting the raw qubits before the error suppression encoding. Generally, quantum gate operations on the bi-fermion rebits are not always perfect in terms of the effective computational basis states, but subject to small errors, even in theory. The error rate should and can be made so sufficiently low that there is an appreciable probability, polynomially bounded from below, that a required long sequence of ground state quantum gates on the bi-fermions could be carried through error-free. Also, the well-developed theory and techniques of quantum error correction \cite{Shor95,Steane96,Steane97} can be employed to counter the detrimental effects of errors, and the celebrated quantum threshold theorem promises that, with concatenated fault-tolerant encoding for quantum error correction, an arbitrarily sized quantum circuit can be realized using error-prone quantum gates with at most polylogarithmic overheads, so long as the rate of error for each error-prone quantum gate is below a certain threshold \cite{Aharonov96,Kitaev97a,Kitaev97b,Knill98,Aliferis06}.

It may be noted that the nominal Hamiltonian $H_{\mathsmaller{\rm BF},0}$ of bi-fermions can be rendered doubly universal, and incorporated into an SFF-DU Hamiltonian for GSQC, which is amenable to probabilistic simulations using either of the two corresponding families of Monte Carlo algorithms, because $H_{\mathsmaller{\rm BF},0}$ becomes essentially bounded in a practical implementation, since the Dirac $\delta$ potentials in equation (\ref{HBF0}) are really idealizations that are meant to represent the effects of a bounded potential such as $V(\cdot)$ of equation (\ref{BiFermionV}), where the parameters $V_0>0$ and $a^{{-}1}>0$ are constants, whose values are large but still polynomially bounded. Besides, although it is necessary to keep the Pauli well narrow and deep, approximating a fairly strong Dirac delta trap so to minimize the leakage error of a bi-fermion rebit, there is no practical necessity other than convenience of mathematical analysis, to install literally a steep potential barrier around $x=\pm 1$ on the circle group of a bi-fermion. Rather, it is perfectly fine to place a relatively wide and low potential barrier, as long as it has the width and height chosen properly to be commensurate with the Delta-like potential well around $x=0$, such that the nominal Hamiltonian of the bi-fermion system defines a degenerate two-state Hilbert space approximating an effective computational basis. Along another line of reasoning that follows literally the definition of essential boundedness, the quantum physics of any bi-fermion can be described to within any desired and predetermined accuracy by selecting a finite number of continuous basis functions to span a Hilbert subspace, and projecting or restricting all relevant operators to the Hilbert subspace, such that all projected or restricted operators, including Dirac $\delta$ potentials, become bounded. {\em In any case, it is concluded that even Hamiltonians containing Dirac $\delta$ potentials can be made essentially bounded, a factoid that may be invoked repeatedly in the following without reiterating the justifications.}

On top of the nominal Hamiltonian $H_{\mathsmaller{\rm BF},0}$, the following $\calC$-diagonal potential,
\begin{equation}
V_{\mathsmaller{X},\eta}(x_1,x_2)\defeq\sum_{i=1}^2\left\{\frac{\eta\gamma_0}{1-\eta}\delta(x_i+1)-\frac{\eta\pi^2}{2\gamma_0} \Iver{d(x_i,0)\le\alpha} \right\}\!,\;-1<\eta<1, \label{VxPotentialDef}
\end{equation}
with $\Iver{\,\cdot\,}$ being the Iverson bracket, can be added to realize the Pauli $\pm\sigma^x$ operator, in that
\begin{align}
\lim_{\tau\mathsmaller{\rightarrow}+\infty}\frac{e^{\mathsmaller{-}\tau(H_{\mathsmaller{\rm BF},0}+V_{X,\eta})}}
{\Tr\left[e^{\mathsmaller{-}\tau(H_{\mathsmaller{\rm BF},0}+V_{X,\eta})}\right]}
\,&\HomoPhysL
\lim_{\tau\mathsmaller{\rightarrow}+\infty}\frac{e^{\mathsmaller{-}\tau\sigma^x\sign(\eta)}}{\Tr\left[e^{\mathsmaller{-}\tau\sigma^x\sign(\eta)}\right]} \,+\, O\!\left(\gamma_0^{\mathsmaller{-}1}\right) \\[0.75ex]
&\;=\;
\left\{\begin{array}{rl}\vspace{0.75ex}|-\rangle\langle-| \,+\, O\!\left(\gamma_0^{\mathsmaller{-}1}\right)\!,\!\!&{\rm if}\;{\eta>0},\\
I \,+\, O\!\left(\gamma_0^{\mathsmaller{-}1}\right)\!,\!\!&{\rm if}\;{\eta=0},\\
| \,+\, \rangle\langle+|+O\!\left(\gamma_0^{\mathsmaller{-}1}\right)\!,\!\!&{\rm if}\;{\eta<0},\end{array}\right. \nonumber
\end{align}
where $\eta$ is a constant in $(-1,1)$. It is quite obvious that a nonzero value of $v_0$ in equation (\ref{BiFermionV}) induces $\pm\sigma^z$ interactions. In general, an arbitrary rebit operator $\sigma^x\sin\theta+\sigma^z\cos\theta$, $\theta\in[-\pi,\pi)$ can be realized by adding a $\calC$-diagonal potential $V_{\mathsmaller{X},\eta\sin\theta}(x_1,x_2)+V_{\mathsmaller{Z},\eta\cos\theta}(x_1,x_2)$ to the nominal Hamiltonian $H_{\mathsmaller{\rm BF},0}$, with
\begin{align}
V_{\mathsmaller{Z},\eta\cos\theta}(x_1,x_2) \,\defeq\,& \frac{\eta\pi^2\cos\theta}{2\gamma_0}\sum_{i=1}^2 \Iver{d(x_i,-(1+\alpha)/2)<(1-\alpha)/2} \nonumber \\[0.75ex]
\,-\,\;& \frac{\eta\pi^2\cos\theta}{2\gamma_0}\sum_{i=1}^2 \Iver{d(x_i,+(1+\alpha)/2)<(1-\alpha)/2}, \label{VzPotentialDef}
\end{align}
in the sense that
\begin{align}
& \lim_{\tau\mathsmaller{\rightarrow}+\infty}\frac{e^{\mathsmaller{-}\tau(\sigma^x\sin\theta+\sigma^z\cos\theta)}}{\Tr\left[e^{\mathsmaller{-}\tau(\sigma^x\sin\theta+\sigma^z\cos\theta)}\right]} \,\HomoPhysL
\lim_{\tau\mathsmaller{\rightarrow}+\infty}\frac{e^{\mathsmaller{-}\tau(\sigma^x\sin\theta+\sigma^z\cos\theta)}}{\Tr\left[e^{\mathsmaller{-}\tau(\sigma^x\sin\theta+\sigma^z\cos\theta)}\right]} \,+\, O\!\left(\gamma_0^{\mathsmaller{-}1}\right) \\[0.75ex]
=\;& \left[\,\!\cos\left(\theta/2\right)|\!\uparrow\rangle-\sin\left(\theta/2\right)|\!\downarrow\rangle\,\right]\left[\,\!\cos\left(\theta/2\right)\langle\uparrow\!|-\sin\left(\theta/2\right)\langle\downarrow\!|\,\right] \,+\, O\!\left(\gamma_0^{\mathsmaller{-}1}\right)\!. \nonumber
\end{align}
That is so, because, projected onto the two-dimensional manifold spanned by the nearly degenerate two states of concern, the single-particle Hamiltonian
\begin{align}
H_{\mathsmaller{\rm SP}} \,\defeq& -\half\frac{d^2}{dx^2}-\gamma_0\delta(x)+\gamma_0\delta(x+1)-\frac{\pi^2}{2} \nonumber \\[0.75ex]
\,+\,\;& \frac{\eta\gamma_0\sin\theta}{1-\eta\sin\theta}\delta(x+1)-\frac{\eta\pi^2\sin\theta}{2\gamma_0} \Iver{d(x,0)\le\alpha} \nonumber \\[0.75ex]
\,+\,\;& \frac{\eta\pi^2\cos\theta}{2\gamma_0} \Iver{d(x,-(1+\alpha)/2)<(1-\alpha)/2} \nonumber \\[0.75ex]
\,-\,\;& \frac{\eta\pi^2\cos\theta}{2\gamma_0} \Iver{d(x,+(1+\alpha)/2)<(1-\alpha)/2}
\end{align}
is approximated as
\begin{equation}
H_{\mathsmaller{{\rm SP},LR}} \,\defeq\,
\!\left[\!\begin{array}{cc}
\langle\psi_{\mathsmaller{L}}|H_{\mathsmaller{\rm SP}}|\psi_{\mathsmaller{L}}\rangle & \langle\psi_{\mathsmaller{L}}|H_{\mathsmaller{\rm SP}}|\psi_{\mathsmaller{R}}\rangle \\
\langle\psi_{\mathsmaller{L}}|H_{\mathsmaller{\rm SP}}|\psi_{\mathsmaller{R}}\rangle & \langle\psi_{\mathsmaller{R}}|H_{\mathsmaller{\rm SP}}|\psi_{\mathsmaller{R}}\rangle
\end{array}\!\right]\!
\,=\,\frac{\eta\pi^2}{2\gamma_0}\left[\!\begin{array}{cc}
\cos\theta & \sin\theta \\
\sin\theta & -\cos\theta
\end{array}\!\right]+O\!\left(\gamma_0^{\mathsmaller{-}2}\right)
\end{equation}
in the basis of single-particle states $\{\psi_{\mathsmaller{L}},\psi_{\mathsmaller{R}}\}$, or equivalently
\begin{equation}
H_{\mathsmaller{\rm SP},\pm} \,\defeq\,
\!\left[\!\begin{array}{cc}
\langle\psi_+|H_{\mathsmaller{\rm SP}}|\psi_+\rangle & \langle\psi_+|H_{\mathsmaller{\rm SP}}|\psi_-\rangle \\
\langle\psi_-|H_{\mathsmaller{\rm SP}}|\psi_+\rangle & \langle\psi_-|H_{\mathsmaller{\rm SP}}|\psi_-\rangle
\end{array}\!\right]\!
\,=\,\frac{\eta\pi^2}{2\gamma_0}\left[\!\begin{array}{cc}
\sin\theta & \cos\theta \\
\cos\theta & -\sin\theta
\end{array}\!\right]+O\!\left(\gamma_0^{\mathsmaller{-}2}\right)
\end{equation}
in the basis of single-particle states $\{\psi_+,\psi_-\}$. Therefore, universal single-rebit ground state logic can be realized by a bi-fermion restricted to the two-state manifold of lowest energy, with the fermionic Schr\"odinger Hamiltonian $H_{\mathsmaller{\rm BF},0}+V_{\mathsmaller{X},\eta\sin\theta}+V_{\mathsmaller{Z},\eta\cos\theta}$, $\theta\in[-\pi,\pi)$, which is effectively
\begin{align}
& H_{\mathsmaller{\rm BF},0}+V_{\mathsmaller{X},\eta\sin\theta}+V_{\mathsmaller{Z},\eta\cos\theta} \nonumber \\[0.75ex]
\IsoPhysLR\,& \frac{\eta\pi^2}{2\gamma_0}\left[(a_{\mathsmaller{L}}^+a_{\mathsmaller{R}}+
a_{\mathsmaller{R}}^+a_{\mathsmaller{L}})\sin\theta+(a_{\mathsmaller{L}}^+a_{\mathsmaller{L}}-a_{\mathsmaller{R}}^+a_{\mathsmaller{R}})\cos\theta\right]a_{\mathsmaller{P}}^+a_{\mathsmaller{P}}
+O\!\left(\gamma_0^{\mathsmaller{-}2}\right) \label{HamiltonianForRTheta} \\[0.75ex]
\HomoPhysL\;& \frac{\eta\pi^2}{2\gamma_0}(\sigma^x\sin\theta+\sigma^z\cos\theta)+O\!\left(\gamma_0^{\mathsmaller{-}2}\right)\!, \nonumber
\end{align}
in the first-quantized, second-quantized, and rebit representations respectively.

\iftoggle{ForUSPTO} {
} {
\vspace{-2.0ex}
\begin{figure}[ht]
\vspace{-2.0ex}
\centering
\begin{tabular}{ccc}
\subfloat[Subfigure 1 list of figures text][$\theta=-3\pi/4$]{\includegraphics[width=0.54\textwidth]
{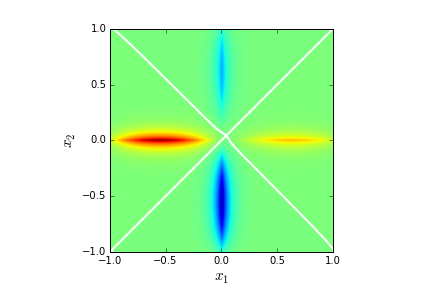}\label{fig:FourXsinZcos1}} & \hspace{-7em} &
\subfloat[Subfigure 2 list of figures text][$\theta=-\pi/4$]{\includegraphics[width=0.54\textwidth]
{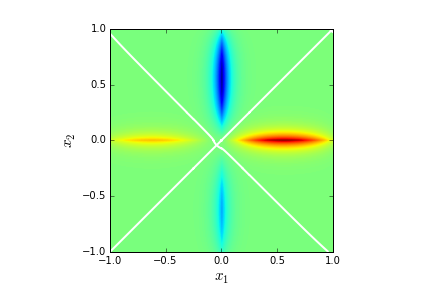}\label{fig:FourXsinZcos2}} \\
\subfloat[Subfigure 3 list of figures text][$\theta=\pi/4$]{\includegraphics[width=0.54\textwidth]
{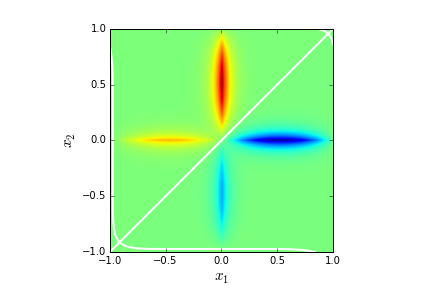}\label{fig:FourXsinZcos3}} & \hspace{-7em} &
\subfloat[Subfigure 4 list of figures text][$\theta=3\pi/4$]{\includegraphics[width=0.54\textwidth]
{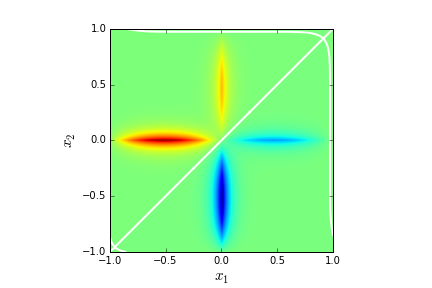}\label{fig:FourXsinZcos4}}
\end{tabular}
\caption{The ground states of $H_{\mathsmaller{\rm BF},0}+V_{\mathsmaller{X},\eta\sin\theta}+V_{\mathsmaller{Z},\eta\cos\theta}$ for $\theta=-3\pi/4,-\pi/4,\pi/4,3\pi/4$.}
\label{FourXsinZcosNodalCurves}
\end{figure}
\vspace{2.0ex}
}

\iftoggle{ForUSPTO} {In contour plots, with dashed lines labeled by the numbers $0.1$, $0.3$, $0.7$ indicating where a wavefunction is positive-valued at levels equal to $0.1$, $0.3$, $0.7$ respectively, while dark solid lines labeled by the numbers $-0.1$, $-0.3$, $-0.7$ signifying where a wavefunction is negative-valued at levels equal to $-0.1$, $-0.3$, $-0.7$ respectively, and gray solid lines labeled by the number $0.0$ indicating nodal surfaces} {In scaled colors}, Fig.\;\ref{FourXsinZcosNodalCurves} shows the ground states of the bi-fermion Hamiltonian $H_{\mathsmaller{\rm BF},0}+V_{\mathsmaller{X},\eta\sin\theta}+V_{\mathsmaller{Z},\eta\cos\theta}$ for $\theta=-3\pi/4,-\pi/4,\pi/4,3\pi/4$ respectively, with $\gamma_0=16$, $\alpha^{{-}1}=4$ (which is not exactly $\gamma_0/(2\log\gamma_0)$), and $\eta=1/2$, where it can be clearly seen that the white-colored nodal curves divide the configuration space into exactly one positive and one negative nodal cells. Again, the relatively small values of $\gamma_0=16$ and $\alpha^{\mathsmaller{-}1}=4$ are chosen only to reduce strains in graphic displaying and viewing, while in real applications, the values of $\gamma_0$ and $\alpha^{\mathsmaller{-}1}$ could and should be much larger. In general, for most of the $\theta$ angles such that $\sin\theta$ is valued not too close to $0$, {\it i.e.}, $|\!\sin\theta|\gg\gamma_0^{\mathsmaller{-}1}$, the ground state of $H_{\mathsmaller{\rm BF},0}+V_{\mathsmaller{X},\eta\sin\theta}+V_{\mathsmaller{Z},\eta\cos\theta}$ is either {\em peri-even} when $\theta\in(-\pi,0)$ or {\em peri-odd} when $\theta\in(0,\pi)$, in the sense that a {\em peri-even} ground state has, approximately, the closures of the E and W quarters of the open square $\{(x_1,x_2):x_1,x_2\in(-1,1)\}$ fused into the closure of one nodal cell on $\mathbb{T}^2$, the closures of the S and N quarters fused into the closure of another nodal cell on $\mathbb{T}^2$, where the two nodal cells are divided by the nodal line $x_1-x_2=0$ and another nodal curve that is closely aligned with the line $x_1+x_2=0$, $x_1,x_2,\in\mathbb{T}$, while a {\em peri-odd} ground state has, approximately, the closures of the N and W quarters of the open square $\{(x_1,x_2):x_1,x_2\in(-1,1)\}$ fused into the closure of one nodal cell on $\mathbb{T}^2$, the closures of the S and E quarters fused into the closure of another nodal cell on $\mathbb{T}^2$, where the two nodal cells are divided by the nodal line $x_1-x_2=0$ and another nodal curve that is closely aligned with $\{x_1=\pm 1\}\cup\{x_2=\pm1\}$, $x_1,x_2,\in\mathbb{T}$. In practical implementations and simulations, when a large $\gamma_0$ is chosen, the nodal curves of the {\em peri-even} and {\em peri-odd} states can be well approximated by the lines $x_1+x_2=0$ and $\{x_1=\pm 1\}\cup\{x_2=\pm1\}$ respectively, in addition to the always exact nodal line $x_1-x_2=0$, $x_1,x_2,\in\mathbb{T}$.

A moment of reflection on  Fig.\;\ref{FourXsinZcosNodalCurves} helps to raise an intuition that, when one fermion is artificially labeled as the first and assigned to the single-particle state $|P\rangle\sim|\psi_{\mathsmaller{P}}\rangle$, the motion of the other fermion in the three-well potential is isophysical to a quantum computational rebit, whose wavefunction is restricted to be real-valued, but can be signed freely, either positive or negative. Still though, the system of the two fermions as a whole, that is the bi-fermion, can be simulated probabilistically on a classical computer, using MCMC in conjunction with either the fixed-node diffusion method or the quantum-stochastic operator similarity-transformation based on equation (\ref{r_PinftyHk_q}), where a random walk over a classical probability space, necessarily with a non-negative valued measure, gleans a sufficient statistic of the bi-fermion system in its quantum-mechanical entirety, involving both positive and negative valued quantum amplitudes. In a sense, if the ability of a quantum machine to represent and manipulate ``negative probability'' is considered a certain computational resource, then a classical probabilistic machine apparently lacks such a computational resource. Nevertheless, the deficiency can be made up for by using an exchange of two identical fermions to represent the sign flip of probability amplitude, which turns the so-called fermion sign problem on its head, in that, instead of being dreaded as a curse that brings trouble to quantum Monte Carlo simulations, the exchange antisymmetry of fermions is embraced as a blessing that endows a classical probabilistic simulator of bi-fermions the ability to represent and manipulate ``negative probability''.

More specifically, let $X\defeq\sigma^x$, $Z\defeq\sigma^z$, and $R(\theta)\defeq X\sin\theta+Z\cos\theta$, $\theta\in[-\pi,\pi)$ represent an ``$R$'' gate on a bi-fermion rebit. Let $R^{\pm}(\theta)\defeq\half[I\pm R(\theta)]$ denote the affine-transformed operators, which preserve the two eigen states of $R(\theta)$, and send one of them to the ground at zero energy. As will be discussed shortly below, applications of such ``$R$'' gates on a specific bi-fermion rebit are often controlled by the state of other rebits, however the quantum physics and Monte Carlo dynamics of the concerned bi-fermion rebit remain essentially the same for the same $R$-gate, although exactly which $R$-gate is applied shall depend upon the state of the controlling rebits. With $R(\theta)$ being an involution, or a self-inverse operator, $R^{\pm}(\theta)$ are orthogonal projections, or self-adjoint idempotent operators, and the associated Gibbs operators have a rather simple formula,
\begin{equation}
G[\tau R^{\pm}(\theta)] \,\defeq\, \exp[-\tau R^{\pm}(\theta)]\,=\,R^{\mp}(\theta)+e^{\mathsmaller{-}\tau}R^{\pm}(\theta),\;\forall\tau\ge 0,
\end{equation}
thus $\lim_{\tau\mathsmaller{\rightarrow}+\infty}G[\tau R^{\pm}(\theta)]=R^{\mp}(\theta)$. The two eigenstates of $R(\theta)$, corresponding to the eigenvalues $+1$ and $-1$ respectively, and being orthogonal necessarily, read
\begin{align}
\Phi^+(\theta) \,=\,& \cos(\theta/2)\,|\!\downarrow\rangle+\sin(\theta/2)\,|\!\uparrow\rangle, \label{Phi+Theta01} \\[0.75ex]
\Phi^-(\theta) \,=\,& \cos(\theta/2)\,|\!\uparrow\rangle-\sin(\theta/2)\,|\!\downarrow\rangle, \label{Phi-Theta01}
\end{align}
in the computational bi-fermion representation, which are equivalently
\begin{align}
\Phi^+(\theta) \,=\,& \left[\cos(\theta/2)|L_1\rangle+\sin(\theta/2)|R_1\rangle\right]|P_2\rangle
-\left[\cos(\theta/2)|L_2\rangle+\sin(\theta/2)|R_2\rangle\right]|P_1\rangle, \label{Phi+ThetaLR} \\[0.75ex]
\Phi^-(\theta) \,=\,& \left[\cos(\theta/2)|R_1\rangle-\sin(\theta/2)|L_1\rangle\right]|P_2\rangle
-\left[\cos(\theta/2)|R_2\rangle-\sin(\theta/2)|L_2\rangle\right]|P_1\rangle, \label{Phi-ThetaLR}
\end{align}
in terms of the effective computational basis states in an explicitly exchange-antisymmetric form. It holds true in any representation that
\begin{equation}
\lim_{\tau\mathsmaller{\rightarrow}+\infty}G[\tau R^{\pm}(\theta)]\,=\,R^{\mp}(\theta)\,=\,|\Phi^{\mp}(\theta)\rangle\langle\Phi^{\mp}(\theta)|\,=\,I-|\Phi^{\pm}(\theta)\rangle\langle\Phi^{\pm}(\theta)|.
\end{equation}
It does not need more than a note in passing that any Hamiltonian $R^{\pm}(\theta)$, $\theta\in[-\pi,\pi)$ is trivially $\epsilon$-almost node-determinate, with $\epsilon=O\!\left(\gamma_0^{{-}2}\right)>0$. When Monte Carlo-simulated on a classical computer, the bi-fermion ground state $\Phi^{\pm}(\theta;x_1,x_2)$ of $R^{\mp}(\theta)$ corresponds to a probability distribution $\left|\Phi^{\pm}(\theta;x_1,x_2)\right|^2$, and the quantum statistical mechanics of the bi-fermion under the Gibbs operator $G[\tau R^{\pm}(\theta)]$, $\forall\tau>0$ maps to a classical random walk with a stochastic operator
\begin{align}
M_{\tau R^{\pm}(\theta)} \,\defeq\;& [\Phi_{\mp}(\theta)]\,G(\tau R^{\pm}(\theta))\,[\Phi_{\mp}(\theta)]^{\mathsmaller{-}1} \label{PtauRpmTheta} \\[0.75ex]
\,=\;\;& [\Phi_{\mp}(\theta)]\,\{R^{\mp}(\theta)+e^{\mathsmaller{-}\tau}R^{\pm}(\theta)\}\,[\Phi_{\mp}(\theta)]^{\mathsmaller{-}1}, \nonumber
\end{align}
on any fixed nodal cell $\calN(R^{\pm}(\theta); x''_1,x''_2)$ enclosing an $(x''_1,x''_2)\in\mathbb{T}^2$ such that $R^{\pm}(\theta; x''_1,x''_2)\neq 0$, where the operators $R^{\mp}(\theta)=|\Phi^{\mp}(\theta)\rangle\langle\Phi^{\mp}(\theta)|$ and $G[\tau R^{\pm}(\theta)]$, as well as $M_{\tau R^{\pm}(\theta)}$ are all guaranteed to be non-negative in the $\mathbb{T}^2$-coordinate representation. Furthermore, in the limit of $\tau\rightarrow{+}\infty$, the operator $M_{\tau R^{\pm}(\theta)}$ is actually positive over the entire configuration space $\mathbb{T}^2\setminus\psi_0^{-1}(R^{\mp}(\theta))$.

More formally, a {\em quantum bi-fermion rebit/system} lives in the two-dimensional {\em quantum state space}
\begin{equation}
\calQ \,\defeq \left\{\textstyle{\frac{1}{\sqrt{2}}}\left(c_0|L_1\rangle|P_2\rangle + c_1|R_1\rangle|P_2\rangle + {\rm E.S.T.}\right):c_0\in\mathbb{R}_{\ge 0},\,c_1\in\mathbb{R}\right\}\!, \label{DefHilbertQ}
\end{equation}
where ${\rm E.S.T.}$ stands for the {\em exchange-symmetric terms} including $(-c_0)|L_2\rangle|P_1\rangle$ and $(-c_1)|R_2\rangle|P_1\rangle$, which are obtained by applying an operator $\pi_{12}$ to the wavefunctions $c_0|L_2\rangle|P_1\rangle$ and $c_1|R_2\rangle|P_1\rangle$ respectively, with $\pi_{12}$ denoting the well-known {\em fermionic exchange operator} that swaps labels for the two identical fermions and induces a negative sign. Any normalized wavefunction $\phi\in\calQ$ can be made into the ground state of a Hamiltonian $H(\phi)\defeq I-|\phi\rangle\langle\phi|=|\phi^{\smallperp}\rangle\langle\phi^{\smallperp}|\in\calB(\calQ)$, with $\phi^{\smallperp}\in\calQ$ being normalized and $\langle\phi|\phi^{\smallperp}\rangle=0$, $\calB(\calQ)$ denoting the Banach algebra of operators acting on the Hilbert space $\calQ$, over the field of real numbers. The quantum bi-fermion rebit can be mapped into a {\em rectified bi-fermion rebit/system} on an ostensibly four-dimensional but actually two-dimensional {\em rectified state space}
\begin{equation}
\calR \,\defeq \left\{c_0|L_1\rangle|P_2\rangle+(-\pi_{12})^{(c_1<0)}c_1|R_1\rangle|P_2\rangle:c_0\in\mathbb{R}_{\ge 0},\,c_1\in\mathbb{R}\right\}\!, \label{DefHilbertR}
\end{equation}
through a one-to-one correspondence $\mathfrak{M}:\left(\mathbb{T}^2,\calQ,\calB(\calQ)\right)\mapsto\left(\mathbb{T}_*^2,\calR,\calB(\calR)\right)$, with
\begin{equation}
\mathbb{T}_*^2 \,\defeq\, \supp(|L_1\rangle|P_2\rangle)\cup\supp(|R_1\rangle|P_2\rangle)\cup\supp(|R_2\rangle|P_1\rangle)
\end{equation}
being called the {\em rectified bi-fermion configuration space}, as opposed to the {\em full bi-fermion configuration space} $\mathbb{T}^2$, such that $\mathfrak{M}(\mathbb{T}^2)=\mathbb{T}_*^2\subseteq\mathbb{T}^2$, and
\begin{align}
\forall\psi(r,\theta) \,\defeq\,& r\,\textstyle{\frac{1}{\sqrt{2}}}\left[\cos(\theta/2)|L_1\rangle|P_2\rangle+\sin(\theta/2)|R_1\rangle|P_2\rangle\right]+{\rm E.S.T.}\in\calQ, \nonumber \\[0.75ex]
\mathfrak{M}(\psi(r,\theta)) \,=\,\;& \Pi_{\mathsmaller{\rm R}}\left\{r\!\left[\cos(\theta/2)|L_1\rangle|P_2\rangle+\sin(\theta/2)|R_1\rangle|P_2\rangle\right]\right\}\in\calR \label{defFrakMPsi} \\[0.75ex]
\,\defeq\,& r\cos(\theta/2)|L_1\rangle|P_2\rangle+(-\pi_{12})^{(r\sin(\theta/2)<0)}r\sin(\theta/2)|R_1\rangle|P_2\rangle, \nonumber
\end{align}
for the wavefunctions, $\forall r\in\mathbb{R}$, $\forall\theta\in[-\pi,\pi]$, and
\begin{align}
\forall H(\theta) \,\defeq\,& I-|\phi(\theta)\rangle\langle\phi(\theta)| \,=\, |\phi^{\smallperp}(\theta)\rangle\langle\phi^{\smallperp}(\theta)|\in\calB(\calQ), \nonumber \\[0.75ex]
\mathfrak{M}(H(\theta)) \,=\,\;& \Pi_{\mathsmaller{\rm R}}H(\theta)\Pi_{\mathsmaller{\rm R}} \,=\, |\mathfrak{M}(\phi^{\smallperp}(\theta))\rangle\langle\mathfrak{M}(\phi^{\smallperp}(\theta))|\in\calB(\calR), \label{defFrakMH}
\end{align}
for the orthogonal projection-valued Hamiltonians, $\theta\in[-\pi,\pi]$, where
\begin{align}
\phi(\theta) \,\defeq\,& \textstyle{\frac{1}{\sqrt{2}}}\left[\cos(\theta/2)|L_1\rangle|P_2\rangle+\sin(\theta/2)|R_1\rangle|P_2\rangle\right]+{\rm E.S.T.}, \label{defiPhiSelfTheta} \\[0.75ex]
\phi^{\smallperp}(\theta) \,\defeq\,& \textstyle{\frac{1}{\sqrt{2}}}\left[\cos(\theta/2)|R_1\rangle|P_2\rangle-\sin(\theta/2)|L_1\rangle|P_2\rangle\right]+{\rm E.S.T.}, \label{defiPhiPerpTheta}
\end{align}
and $\Pi_{\mathsmaller{\rm R}}$ as defined in equation (\ref{defFrakMPsi}) is a one-to-one and self-inverse mapping between $\calQ$ and $\calR$, which is actually an isomorphism between two Hilbert spaces, as will be shown rigorously below. It is noted that a useful convention, henceforth referred to as the {\em logic-$0$-positive convention or representation}, is adopted in equations (\ref{DefHilbertQ}) and (\ref{DefHilbertR}), which imposes the condition that the coefficient $c_0$ should be non-negative, without inducing any loss of generality, since a global phase factor (a +/- sign, to be exact) in or out of the wavefunction of a rebit does not change any physics.

Essentially, the bijective mapping $\mathfrak{M}$ embodies a method of {\em fermion sign rectification} using isomorphic group representations $\mathfrak{s}$ and $\mathfrak{m}$, where
\begin{equation}
\mathfrak{s}:(\{+1,-1\},*)\mapsto(\{I,\pi_{12}\},*),\;{\rm such\;that}\;\mathfrak{s}(+1)=I,\;\mathfrak{s}(-1)=\pi_{12},
\end{equation}
is an isomorphism between the multiplicative group $(\{+1,-1\},*)$ of numerical signs and the multiplicative group $(\{I,\pi_{12}\},*)$ of linear operators, while
\begin{align}
& \mathfrak{m}:(\mathbb{R}_{\neq 0},*)=(\mathbb{R}_{>0},*)\times(\{+1,-1\},*)\mapsto(\mathbb{R}_{>0},*)\times(\{I,\pi_{12}\},*)\defeq(\mathbb{R}'_{\neq 0},*),
\nonumber \\[0.75ex]
& \mathfrak{m}(r)=(|r|,\pi_{12}{\raisebox{-.05\height}{$^{(r<0)}$}})\,\defeq\,|r|\pi_{12}{\raisebox{-.05\height}{$^{(r<0)}$}}=\textstyle{\half}\left(|r|+r\right)I+\textstyle{\half}\left(|r|-r\right)\pi_{12},\;\forall r\in\mathbb{R}_{\neq 0}, \label{defiRprimeMul}
\end{align}
is an isomorphism between the multiplicative group of real numbers and the direct product group $(\mathbb{R}_{>0},*)\times(\{I,\pi_{12}\},*)$, where $\mathbb{R}_{\neq 0}\defeq\{r\in\mathbb{R}:r\neq 0\}$, $\mathbb{R}_{>0}\defeq\{r\in\mathbb{R}:r>0\}$, and $\forall(|r|,\pi_{12}{\raisebox{-.05\height}{$^{(r<0)}$}})\in(\mathbb{R}'_{\neq 0},*)$, the equation $(|r|,\pi_{12}{\raisebox{-.05\height}{$^{(r<0)}$}})=\textstyle{\half}\left(|r|+r\right)I+\textstyle{\half}\left(|r|-r\right)\pi_{12}$ makes its sense with the group elements $\{(|r|,\pi_{12}{\raisebox{-.05\height}{$^{(r<0)}$}})\}_{r\in\mathbb{R}_{\neq 0}}$ interpreted as linear operators acting on the rectified state space $\calR$. $(\mathbb{R}_{\neq 0},*)$ is of course the multiplicative structure in the field of real numbers $(\mathbb{R},+,*)$. With zero elements added, and the domain of the mapping $\mathfrak{m}$ extended to $\mathbb{R}$ such that $\mathfrak{m}(0)=(0,I)=(0,\pi_{12})\defeq 0'\in\mathbb{R}_{\ge 0}\times\{I,\pi_{12}\}$, with $\mathbb{R}_{\ge 0}\defeq\{r\in\mathbb{R}:r\ge 0\}$, then the product set $\mathbb{R}_{\ge 0}\times\{I,\pi_{12}\}$ can be made into an abelian additive group with a ``$+$'' operation defined as
\begin{equation}
r'_1+r'_2\,\defeq\,\mathfrak{m}[\mathfrak{m}^{\mathsmaller{-}1}(r'_1)+\mathfrak{m}^{\mathsmaller{-}1}(r'_2)],\;\forall r'_1,r'_2\in\mathbb{R}_{\ge 0}\times\{I,\pi_{12}\} \,\defeq\, \mathbb{R}', \label{defiRprimeAdd}
\end{equation}
so that the structure $(\mathbb{R}',+,*)$ becomes an ordered field, henceforth referred to as the {\em field of rectified numbers}, that is isomorphic to the conventional {\em field of real numbers} $(\mathbb{R},+,*)$, with an element $r'\in(\mathbb{R}',+,*)$ being positive, denoted as $r'>0'$ or $r'\in\mathbb{R}'_{>0'}$, if and only if $\mathfrak{m}^{\mathsmaller{-}1}(r')\in\mathbb{R}_{>0}$. An alternative construction of the field $(\mathbb{R}',+,*)$ starts with the direct product group $(\mathbb{R}''_{\neq 0},*)\defeq(\mathbb{R}_{\neq 0},*)\times(\{I,\pi_{12}\},*)=\left\{(r,\pi_{12}^s):r\in\mathbb{R}_{\neq 0},s\in\{0,1\}\right\}$, which is overcomplete to serve as a group of multiplicative coefficients for wavefunctions of bi-fermions, but the unwanted redundancy can be removed by imposing an equivalence relation ${\raisebox{-.1\height}{$\simX{$\pi$}$}}$ such that $(r,\pi_{12})\;{\raisebox{-.1\height}{$\simX{$\pi$}$}}\;(-r,I)$, $\forall r\in\mathbb{R}_{\neq 0}$, which then effects a group endomorphism $f_{\pi}:(\mathbb{R}''_{\neq 0},*)\mapsto(\mathbb{R}''_{\neq 0},*)$ such that $f_{\pi}(r,\pi_{12}^s)=(|r|,\pi_{12}{\raisebox{-.05\height}{$^{r(-1)^s<0}$}})$, $\forall r\in\mathbb{R}$, $\forall s\in\{0,1\}$. The resulted quotient group $(\mathbb{R}''_{\neq 0},*)/{\raisebox{-.1\height}{$\simX{$\pi$}$}}$ is isomorphic to $(\mathbb{R}'_{\neq 0},*)$. Then, a zero element $0'$ and an additive group structure can be introduced to the direct product set $\mathbb{R}''\defeq\left\{(r,\pi_{12}^s):r\in\mathbb{R},s\in\{0,1\}\right\}$ in a manner similar to equation (\ref{defiRprimeAdd}), with the equivalence relation ${\raisebox{-.1\height}{$\simX{$\pi$}$}}$ extended naturally such that $(r,I)+(-r,I)\;{\raisebox{-.1\height}{$\simX{$\pi$}$}}\;(r,I)+(r,\pi_{12})\;{\raisebox{-.1\height}{$\simX{$\pi$}$}}\;0'$, $\forall r\in\mathbb{R}$. It becomes clear now that the quotient set $\mathbb{R}''/{\raisebox{-.1\height}{$\simX{$\pi$}$}}$ forms an ordered field $(\mathbb{R}''/{\raisebox{-.1\height}{$\simX{$\pi$}$}},+,*)$, which is isomorphic to the field $(\mathbb{R}',+,*)$. Viewing the field $(\mathbb{R}',+,*)$ as $(\mathbb{R}''/{\raisebox{-.1\height}{$\simX{$\pi$}$}},+,*)$ affords an advantageous perspective and the algebraic convenience for transforming mathematical formulas involving bi-fermion wavefunctions and operators, by freely inserting an identity factor $(-1)\pi_{12}$ anywhere in a multiplicative expression, or a zero term $1+\pi_{12}$ anywhere in an additive expression. In particular, a stoquastic operator on a Hilbert space over the field of rectified numbers $(\mathbb{R}',+,*)$ can be deliberately written as $D-A$, instead of $D+\pi_{12}A$, to signify the stoquastic property and the consequence that $\exp[-\tau(D-A)]$ is non-negative $\forall\tau>0$, when the operator $D$ is represented by a diagonal matrix with elements in $\mathbb{R}\cong\mathbb{R}'$, while the operator $A$ is represented by a non-negative matrix with all elements in $\mathbb{R}_{\ge 0}$. It is also noted that the definition of the rectified state space $\calR$ in (\ref{DefHilbertR}) has removed a two-fold redundancy in the single bi-fermion configuration space and wave distribution, by omitting the negative nodal cell and the negative-valued wave distribution in it, which are always isomorphic to the retained positive nodal cell and positive wave distribution therein.

The quantum state space $\calQ$ is clearly a Hilbert space over the field $(\mathbb{R},+,*)$, with the conventionally defined inner product $\langle\psi_1|\psi_2\rangle\defeq a_1a_2+b_1b_2$, $\forall\psi_i=a_i|L_1\rangle|P_2\rangle+b_i|R_1\rangle|P_2\rangle\in\calQ$, $a_i,b_i\in\mathbb{R}$, $i\in\{1,2\}$. The quantum physics of a rebit is also concerned with the vector space $\calB(\calQ)$ of bounded linear operators on the Hilbert space $\calQ$, which forms a Banach algebra under the operator norm. With a chosen basis of $\calQ$, such as $\{|L_1\rangle|P_2\rangle{-}|L_2\rangle|P_1\rangle,|R_1\rangle|P_2\rangle{-}|R_2\rangle|P_1\rangle\}$, each element in $\calB(\calQ)$ is represented by a $2\times 2$ matrix with $\mathbb{R}$-valued entries. Thanks to the one-to-one correspondence $\mathfrak{M}$ and the bijection $\mathfrak{m}$, the rectified state space $\calR$ can also be made into a vector space, over the field $(\mathbb{R}',+,*)$, with the required vector addition and scalar multiplication operations defined as
\begin{align}
& \psi'_1+\psi'_2 \,\defeq\, \mathfrak{M}\left[\mathfrak{M}^{\mathsmaller{-}1}(\psi'_1)+\mathfrak{M}^{\mathsmaller{-}1}(\psi'_2)\right],\;\forall\psi'_1,\psi'_2\in\calR, \\[0.75ex]
& r'\psi' \,\defeq\, \mathfrak{M}\left[\mathfrak{m}^{\mathsmaller{-}1}(r')\mathfrak{M}^{\mathsmaller{-}1}(\psi')\right],\;\forall r'\in(\mathbb{R}',+,*),\;\forall\psi'\in\calR,
\end{align}
then straightforwardly made into a Hilbert space with an inner product defined as
\begin{equation}
\langle\psi'_1|\psi'_2\rangle\defeq\left(a_1a_2,I\right)+\left(b_1b_2,\pi_{12}^{s_1+s_2}\right)\in(\mathbb{R}',+,*),
\end{equation}
$\forall\psi'_i=(a_i,I)|L_1\rangle|P_2\rangle+(b_i,\pi_{12}^{s_i})|R_1\rangle|P_2\rangle\in\calR$, $(a_i,I)\in\mathbb{R}'$, $(b_i,\pi_{12}^{s_i})\in\mathbb{R}'$, $s_i\in\{0,1\}$, $i\in\{1,2\}$. In the Hilbert space $\calR$, the value of any wavefunction at any location in the bi-fermion configuration space involves only non-negative numbers in $\mathbb{R}$. Still, two non-zero vectors $\psi'_1,\psi'_2\in\calR$ can be orthogonal and produce a $0'$-valued inner product $\langle\psi'_1|\psi'_2\rangle=(0,I)=(0,\pi_{12})\in(\mathbb{R}',+,*)$. The vector space $\calB(\calR)$ of bounded linear operators over the Hilbert space $\calR$ forms a Banach algebra under the operator norm, which is isomorphic to $\calB(\calQ)$ and describes the equivalent physics. With a chosen basis of $\calR$, such as $\{|L_1\rangle|P_2\rangle,|R_1\rangle|P_2\rangle\}$, each element in $\calB(\calR)$ is represented by a $2\times 2$ matrix with $\mathbb{R}'$-valued entries that involve only non-negative numbers in $\mathbb{R}$ and possibly the operator $\pi_{12}$ exchanging identical fermions. It is interesting to note that both $\calQ$ and $\calR$ can be considered as subsets of a larger Hilbert space $\calX$ over the field $\mathbb{R}$, where
\begin{equation}
\calX \,\defeq\, \left\{c_0|L_1\rangle|P_2\rangle+c_1|R_1\rangle|P_2\rangle+c_2|L_2\rangle|P_1\rangle+c_3|R_2\rangle|P_1\rangle:c_0\in\mathbb{R}_{\ge 0},\,c_1,c_2,c_3\in\mathbb{R}\right\}\!, \label{DefHilbertS}
\end{equation}
being equipped with the conventionally defined inner product, while both $\calB(\calQ)$ and $\calB(\calR)$ can be regarded as subsets of $\calB(\calX)$. $\calQ$ and $\calB(\calQ)$ are just the results of $\calX$ and $\calB(\calX)$ modulo the antisymmetric condition upon exchange of identical fermions. For the rectified state space, any wavefunction
\begin{align}
\psi' =\;& (a,I)|L_1\rangle|P_2\rangle+(b,\pi_{12}^s)|R_1\rangle|P_2\rangle\in\calR \nonumber \\[0.75ex]
=\;& a\,(|L_1\rangle|P_2\rangle)+b\,(\pi_{12}^t|R_1\rangle|P_2\rangle)\in\calX,\;a,b\in\mathbb{R}_{\ge 0},\;s\in\{0,1\}
\end{align}
is in the conical hull of $\calX_{\star}\defeq\{|L_1\rangle|P_2\rangle,|R_1\rangle|P_2\rangle,|R_2\rangle|P_1\rangle\}$, denoted by $\coni(\calX_{\star})$, where all of the basis wavefunctions in $\calX_{\star}$ are non-negative-valued in the bi-fermion configuration space, the union of whose supports is the rectified bi-fermion configuration space $\mathbb{T}_*^2$. Therefore, $\calR$ is inside a positive convex cone consisting of non-negative definite wavefunctions in $\calX$. By the same token, any dyadic operator in $\calB(\calR)$ that is of the form $\sum_{ij}|\psi'_i\rangle\langle\phi'_j|$, $\psi'_i,\phi'_j\in\calR$, $i,j\in\mathbb{N}$ can be interpreted as an element in the positive convex cone $\coni(\calX_{\star}\calX_{\star}\!{}\!^+)\subset\calB(\calX)$, with $\calX_{\star}\calX_{\star}\!{}\!^+\defeq\{|u'\rangle\langle v'|:u'\in\calX_{\star},\,v'\in\calX_{\star}\}$. Such embeddings of $\calR$ into $\calX$ and $\calB(\calR)$ into $\calB(\calX)$ coincide with the embeddings of $\calQ$ into $\calX$ and $\calB(\calQ)$ into $\calB(\calX)$ as far as the positive parts of the wavefunctions are concerned, thus enabling fermion sign rectification, which is to do quantum computing and Monte Carlo simulations without ever encountering a negative number in $\mathbb{R}$, through the use of rectified bi-fermions in conjunction with the fixed-node diffusion method or the quantum-stochastic operator similarity transformation as in \iftoggle{ForUSPTO} {Auxiliary Utility} {Lemma} \ref{QuasiStochasticOper}.

Clearly, the mapping $\mathfrak{M}:\left(\mathbb{T}^2,\calQ,\calB(\calQ)\right)\mapsto\left(\mathbb{T}_*^2,\calR,\calB(\calR)\right)$ amounts to an isophysics, through which the rectified bi-fermion physics $\left(\mathbb{T}_*^2,\calR,\calB(\calR)\right)$ constitutes a representation of the quantum bi-fermion physics $\left(\mathbb{T}^2,\calQ,\calB(\calQ)\right)$, where any observable $H\in\calB(\calQ)$ other than the identity operator, and the corresponding $H'=\mathfrak{M}(H)\in\calB(\calR)$ may be designated as the Hamiltonians for the respective physics. Since the states in $\calR$ are all non-negative distributions, any ground state physics of $\left(\mathbb{T}_*^2,\calR,\calB(\calR)\right)$ can be isophysically mapped into a classical random walk on the set union of the supports of all wavefunctions in the state space $\calR$, governed by a stochastic operator of the form $[\psi_0(H')]G'(\tau H')[\psi_0(H')]^{\mathsmaller{-}1}$, with $\tau>0$, $H'=\mathfrak{M}(H)$, $G'(\tau H')\defeq\mathfrak{M}[G(\tau H)]\defeq\mathfrak{M}[\exp({-}\tau H)]$, as having been discussed extensively in the previous section leading to {\myTheorem} \ref{FirstTheorem}, as well about equation (\ref{PtauRpmTheta}) a few paragraphs above. It is reminded that for any Hamiltonian $H'$, $\psi_0(H')$ denotes the ground space of $H'$, which consists of one unique ground state wavefunction for the present case of a single bi-fermion system. For any $H'\in\calB(\calR)$ with $\psi_0(H')\in\calR$, it is obvious that one connected component of the support of $\psi_0(H')$ is exactly the positive nodal cell of the ground state $\mathfrak{M}^{\mathsmaller{-}1}(\psi_0(H'))$ of the quantum bi-fermion Hamiltonian $\mathfrak{M}^{\mathsmaller{-}1}(H')$. Importantly, for a set $\calB(\calQ)$ that contains orthogonal projection-valued operators of the form $H=R^{\pm}(\theta)$, $\theta\in[-\pi,\pi)$, which are complete for ground state quantum computations, each of the associated ground states always has one unique positive nodal cell that is topologically connected, such that the Markovian random walks following $\{[\psi_0(H')]G'(\tau H')[\psi_0(H')]^{\mathsmaller{-}1}\!:H'\in\mathfrak{M}[\calB(\calQ)],\,\tau>0\}$ are always ergodic and rapidly mixing. Therefore, the isophysics $\mathfrak{M}:\left(\mathbb{T}^2,\calQ,\calB(\calQ)\right)\mapsto\left(\mathbb{T}_*^2,\calR,\calB(\calR)\right)$ and {\myCorollary} \ref{FixedNodeMethod} of \iftoggle{ForUSPTO} {Auxiliary Utility} {Lemma} \ref{TilingProperty} agree, as they must do, on asserting rigorously that the fermionic ground state $\psi_0(H)$ of a quantum bi-fermion system is uniquely determined by the boltzmannonic ground state $\mathfrak{M}[\psi_0(H)]$ of the corresponding rectified bi-fermion system on the positive nodal cell of $\psi_0(H)$, where the two ground states are both supported and become identical.

It is important to note that a continuous substrate space is not indispensable for devising useful bi-fermions. Rather, a discrete version of the bi-fermion system can be quite simply realized by placing two identical non-interacting fermions on a cyclic grid of $2n$ points $\mathbb{L}_{2n}\defeq\{x\in\mathbb{Z}:-n\le x\le n\}$, $n\ge 2$, $n\in\mathbb{N}$, where $x=\pm n$ are identified as labeling the same lattice point, with single-particle tunneling/coupling between neighbor sites that is translation-invariant in strength, and an on-site single-particle potential $V_x$ that consists of a deep potential well at $x=0$ and a steep potential barrier at $x=\pm n$, so that the nominal bi-fermion Hamiltonian reads
\begin{equation}
H^{\rm lattice}_{\mathsmaller{\rm BF},0} = C_0 \,+\, \sum_{k=1}^2\sum_{x_k=-n}^{n{-}1}\left(\!{}-|x_k\rangle\langle x_k{+}1|-|x_k{+}1\rangle\langle x_k|-\gamma_0\delta_{x_k,0}+\gamma_0\delta_{x_k,\pm n}\right),
\end{equation}
where $\gamma_0>0$, $\gamma_0\gg 1$, and $C_0\in\mathbb{R}$ is a constant suitably chosen such that $\lambda_0\scalebox{1.15}{(}H^{\rm lattice}_{\mathsmaller{\rm BF},0}\scalebox{1.15}{)}=0$, and $\delta_{x,y}\defeq\Iver{x=y}$ denotes the Kronecker delta function for $(x,y)\in\mathbb{Z}^2$. The nominal bi-fermion Hamiltonian $H_{\mathsmaller{\rm BF},0}^{{\rm lattice}}$ has two degenerate ground states $\Phi_{\mathsmaller{L}}\in L^2(\mathbb{L}_{2n}^2)$ and $\Phi_{\mathsmaller{R}}\in L^2(\mathbb{L}_{2n}^2)$ as the discrete counterparts of the wavefunctions as defined in equations (\ref{PhiPdefi}-\ref{PhiLRdefi}), which span a two-dimensional Hilbert space as well. Such a bi-fermion made of two identical fermions moving in a discrete substrate space like $\mathbb{L}_{2n}$ is referred to as a {\em discrete bi-fermion}, as opposed to the {\em continuous bi-fermion} being discussed extensively above, which is supported by a continuous substrate space like $\mathbb{T}$. In perfect parallelism to the case of a continuous bi-fermion, the restricted states
\begin{align}
|\!\downarrow\rangle \,\defeq\,& \Phi_{\mathsmaller{L}}(x_1,x_2) \Iver{\exists\, k\in\{1,2\}\;\mbox{\rm such that}\;({-}x_k)\in[1,n-1]\;\mbox{\rm and}\;x_{3-k}=0}, \\[0.75ex]
|\!\uparrow\rangle \,\defeq\,& \Phi_{\mathsmaller{R}}(x_1,x_2) \Iver{\exists\, k\in\{1,2\}\;\mbox{\rm such that}\;({+}x_k)\in[1,n-1]\;\mbox{\rm and}\;x_{3-k}=0},
\end{align}
with $\Iver{\,\cdot\,}$ being the Iverson bracket, are called the {\em effective computational basis states} spanning a {\em computational bi-fermion space} ${\rm span}(\{|\!\downarrow\rangle,|\!\uparrow\rangle\})$, which are more convenient and useful for GSQC, although such a computational bi-fermion space is not absolutely complete, manifested in that the leakage error operator $\sfE_{\rm leak}(\gamma_0)\defeq I-|\!\downarrow\rangle\langle\downarrow\!|-|\!\uparrow\rangle\langle\uparrow\!|\in\calB(\{\Phi_{\mathsmaller{L}},\Phi_{\mathsmaller{R}}\})$ does not vanish identically. It can be easily verified that the leakage error can be made as small as $\Tr(\sfE_{\rm leak}(\gamma_0))=O\!\left(\gamma_0^{{-}3}\log^3\gamma_0\right)$ in the limit of $n\rightarrow\infty$, or $\Tr(\sfE_{\rm leak}(\gamma_0))=O\!\left(\gamma_0^{{-}2}\right)$ in the limit of $n\rightarrow 2$. Next, perturbative on-site single-particle potentials can be introduced, so that the fermionic Schr\"odinger operator
\begin{align}
H^{\rm lattice}_{\mathsmaller{\rm BF},0} \;+\;& \left\{ V^{\rm lattice}_{\mathsmaller{X},\eta\sin\theta} \,\defeq\, \frac{\eta\gamma_0\sin\theta}{1-\eta\sin\theta}\delta_{x,\pm n} \,-\, \frac{\eta\sin\theta}{\gamma_0}\delta_{x,0} \right\}
\label{HlatticeForRTheta} \\[0.75ex]
\;+\;& \left\{ V^{\rm lattice}_{\mathsmaller{Z},\eta\cos\theta} \,\defeq\, \frac{\eta\cos\theta}{\gamma_0} \Iver{(-x)\in[1,n-1]} \,-\, \frac{\eta\cos\theta}{\gamma_0} \Iver{x\in[1,n-1]} \right\} \nonumber
\end{align}
effects a computational rebit operator $(\eta/\gamma_0)\left(\sigma^x\sin\theta+\sigma^z\cos\theta\right)$, $\eta\in(-1,1)$, $\theta\in[-\pi,\pi)$. Said operator is simultaneously fermionic Schr\"odinger and essentially bounded, thus doubly universal.

For yet another example of a discrete bi-fermion, two identical and non-interacting fermions move in a three-point lattice $\mathbb{L}_3\defeq\{x:x\in\{0,1,2\}\}$, with on-site energies $\gamma_0+\cos\theta,\,\gamma_0-\cos\theta,\,{-}\gamma_0$ at $x=0,\,1,\,2$ respectively, $\gamma_0>0$, $\theta\in[-\pi,\pi)$, where a particle can possibly hop between the two sites $0$ and $1$, but never transfer between $2$ and the other sites, such that a bi-fermion is formed with a Hamiltonian
\begin{align}
H^{\rm tripod}_{\mathsmaller{\rm BF}} =\,\;& \scalebox{1.2}{(}H^{\rm tripod}_{\mathsmaller{\rm BF},0} \defeq \gamma_0a_0^+a_0+\gamma_0a_1^+a_1-\gamma_0a_2^+a_2\scalebox{1.2}{)} \nonumber \\[0.75ex]
+\,\;& (a_0^+a_1+a_1^+a_0)\sin\theta \,+\, (a_0^+a_0-a_1^+a_1)\cos\theta \label{HtripodForRTheta}
\end{align}
in the second-quantized representation, where $a_k$ and $a_k^+$ are the fermionic annihilation and creation operators respectively, $\forall k\in\{0,1,2\}$, such that $a_ka_l+a_la_k=0$, $a^+_ka^+_l+a^+_la^+_k=0$, $a_ka^+_l+a^+_la_k=\delta_{kl}$, $\forall k,l\in\{0,1,2\}$. The Hamiltonian $H^{\rm tripod}_{\mathsmaller{\rm BF}}$ effects a computational rebit operator $\sigma^x\sin\theta+\sigma^z\cos\theta$ over a {\em computational bi-fermion space} spanned by the states $|\!\downarrow\rangle=a_0^+a_2^+|{\sf vac}\rangle$ and $|\!\uparrow\rangle=a_1^+a_2^+|{\sf vac}\rangle$, which is identical to the {\em physical bi-fermion space} spanned by the lowest two rigorous eigenstates of $H^{\rm tripod}_{\mathsmaller{\rm BF}}$. In the first-quantized picture, with the system resting at its ground state(s), it is obvious that the low-energy site $2$, which is named the first orbit, must always be filled by one fermion, that may be arbitrarily labeled as the first, while the second fermion can only occupy possibly a superposition between sites $0$ and $1$, that is named a second orbit. When the two fermions have their positions exchanged, the wavefunction of the whole bi-fermion system acquires a negative sign. The sole purpose of the first orbit at site $2$ hosting one of the identical fermions is just sign-tracking via the exchange antisymmetry. Although not always fermionic Schr\"odinger, the Hamiltonian $H^{\rm tripod}_{\mathsmaller{\rm BF}}$ is surely essentially bounded, suitable for GSQC through an SFF-EB Hamiltonian. On the other hand, with such a discrete bi-fermion supported by a three-point lattice $\mathbb{L}_3$, henceforth referred to as the $\mathbb{L}_3$-supported {\em leakage-free bi-fermion}, there is absolutely no rebit leakage error to worry about, as the operator $I-|\!\downarrow\rangle\langle\downarrow\!|-|\!\downarrow\rangle\langle\downarrow\!|$ strictly vanishes over the physical bi-fermion space, thus, $\Tr(\sfE_{\rm leak}(\gamma_0))=0$ hold true, $\forall\gamma_0\neq 0$. Consequently, there is no need to set $\gamma_0$ to a substantially large value. A choice of $\gamma_0=O(1)$ suffices. These may be considered as compensations for sacrificing the fermionic Schr\"odinger property of $H^{\rm tripod}_{\mathsmaller{\rm BF}}$.

From this point on, unless explicitly stated otherwise, a bi-fermion without a modifier shall refer to a general bi-fermion construct that can be wither continuous or discrete, which is made of two identical fermions moving on a general substrate space $\mathbb{X}$ that can be either continuous as $\mathbb{X}=\mathbb{T}$, or discrete as $\mathbb{X}=\mathbb{L}_3$ or $\mathbb{X}=\mathbb{L}_{2n}$, $n\ge 2$, or another variant of the substrate space. The two states with the lowest energy of such general bi-fermion, denoted as $|\!\downarrow\rangle\in L^2_{\mathsmaller{F}}(\mathbb{X}^2)$ and $|\!\uparrow\rangle\in L^2_{\mathsmaller{F}}(\mathbb{X}^2)$, that may be suitably restricted in the configuration space $\mathbb{X}^2$ if useful, necessary, and desired, constitute a Hilbert space $\calQ\defeq\calQ(\mathbb{X})\defeq{\rm span}(\{|\!\downarrow\rangle,|\!\uparrow\rangle\})\subseteq L^2_{\mathsmaller{F}}(\mathbb{X}^2)$, called the {\em computational bi-fermion space}. The triple $(\mathbb{X}^2,\calQ,\calB(\calQ))$, with $\calB(\calQ)$ being a Banach algebra of bounded linear operators over the Hilbert space $\calQ$, forms a quantum physics/system associated with a single bi-fermion, which implements a computational rebit in the sense of homophysics. It is also straightforward to generalize the discussions between equations (\ref{DefHilbertR}) and (\ref{defiPhiPerpTheta}) to construct a {\em rectified bi-fermion rebit} with a quantum physics/system $(\mathbb{X}_*^2,\calR,\calB(\calR))$, with $\mathbb{X}_*^2=\calN^+(|\!\downarrow\rangle)\cup\calN^+(|\!\uparrow\rangle)\cup\calN^+(\pi_{12}|\!\uparrow\rangle)$ being called the {\em rectified configuration space}, $\calN^+(\phi)$ denoting the positive nodal cell of a wavefunction $\phi\in L^2_{\mathsmaller{F}}(\mathbb{X}^2)$, $\pi_{12}$ being the familiar two-fermion exchange operator, $\calR$ being the Hilbert space spanned by the states $|\!\downarrow\rangle|_{\mathbb{X}_*^2}$, $|\!\uparrow\rangle|_{\mathbb{X}_*^2}$, and $(\pi_{12}|\!\uparrow\rangle)|_{\mathbb{X}_*^2}$ over the field of rectified numbers $(\mathbb{R}',+,*)$, and $\calB(\calR)$ being a Banach algebra of bounded linear operators over the Hilbert space $\calR$, with respect to the field of rectified numbers $(\mathbb{R}',+,*)$ of course.

Even though a general substrate space $\mathbb{X}\in\{\mathbb{L}_3,\mathbb{L}_{2n},\mathbb{T}\}$ and the quantum physics $(\mathbb{X}^2,\calQ,\calB(\calQ))$ or $(\mathbb{X}_*^2,\calR,\calB(\calR))$ have already covered a rather broad range of preferred embodiments of bi-fermions, they should be understood by way of examples and by no way of limitations. Indeed, there should be no difficulty for one skilled in the art to devise a construct similar to the bi-fermions that are exemplified in the above, by choosing another variant of the substrate space, or a different number of a preferred species of identical fermions, or subjecting the identical fermions to a differently arranged potential or different ways of interactions. In particular, it can be stressed that the dimension of the substrate space to host identical fermions that constitute a bi-fermion or any similar construct worthy of a quantum computational qubit or rebit, does not have to be $1$, and the number of identical fermions comprising each bi-fermion or similar construct can exceed $2$, so that there can be $2$ or more fermions filling $2$ or more lower energy orbits and another fermion occupying and roaming between two nearly degenerate states to encode a qubit/rebit. Still further, a bi-fermion or similar construct can span a $d$-dimensional Hilbert space, $d\ge 3$ to constitute a qudit or {\em redit} (that is a qudit with a $d$-dimensional Hilbert space over the field of real numbers).

Also, there can be inter-particle interactions among the identical fermions constituting a single qubit/rebit, and no difficulty in homophysical implementations and Monte Carlo simulations is entailed by such inter-particle interactions, so long as they are described by diagonal self-adjoint operators in the many-particle configuration space representation. On the other hand, it is important to employ a substrate space that is either at least two-dimensional, or in the one-dimensional case, necessarily containing a subspace that is is not simply connected and homeomorphic to the circle group or a cycle graph, regardless of the substrate space being continuous as a manifold or discrete as a topological graph. And, it is essential to have at least two identical fermions moving on such a multi-connected substrate space, if the Hamiltonian has to be of the conventional Schr\"odinger type. Because, as discussed in the previous section, and it is without loss of generality to take the case of a continuous configuration space for example, the ground state of a single particle under a fermionic Schr\"odinger, thus equivalent to boltzmannonic stoquastic, Hamiltonian $H=-\Delta_g+V(q\!\in\!\calM)$ is always non-degenerate and positive, when the arbitrarily dimensioned Riemannian manifold $(\calM,g)$ is compact and smooth, while the potential $V(\cdot)$ is Kato-decomposable and Klauder regular. In another situation, according to a beautiful theorem of Lieb and Mattis \cite{Lieb62}, a simply connected one-dimensional manifold can never host a degenerate ground state of any system of a number of particles with a base boltzmannonic Hamiltonian of the form $\Base(H_{\!\mathsmaller{B}})=-\sum_ig_{ii}{\raisebox{-.1\height}{$^{\mathsmaller{-}1/2}$}}\partial_i\,g_{ii}{\raisebox{-.1\height}{$^{3/2}$}}\partial_i+V(\{x_i\}_i)$, where $\forall i\in\mathbb{N}$, $x_i$ denotes the position of the $i$-th labeled particle on the one-dimensional manifold, $V(\cdot)$ is Klauder regular, $\partial_i\defeq\partial/\partial x_i$, $g_{ii}(x_i)\in\mathbb{R}_{>0}$, $\forall x_i$, regardless of such particles being all identical, all distinguishable, or a mixture of multiple species, and whether or not there are inter-particle interactions. By contrast, it is certainly possible for an at least two-dimensional substrate space in conjunction with a not necessarily Klauder regular potential to accommodate a degenerate ground state, as the previously discussed bi-fermion physics can be equally realized by two fermions confined to within an annular potential well on a two-dimensional plane, where two infinitely high potential walls are narrowly separated to define an annulus with such a small width that the particle motion in the radial direction is effectively frozen to the fundamental mode of lowest energy.

It is also useful to note that an experimental construction or numerical simulation of a bi-fermion-like computational qubit or rebit, being exploited for fermion sign rectification, does not have to involve an actual species of physical fermions as elementary particles or their compositions found in nature. What is truly essential and of only importance is the use of at least two identical and indistinguishable quantum atomic entities that can be artificially labeled by natural numbers in $\mathbb{N}$, where each said quantum atomic entity has at least two different states and is associated with a pair of annihilation and creation operators $a_k$ and $a^+_k$, $k\in\mathbb{N}$ that support a Lie algebra as specified by $a_ka_l+a_la_k=0$, $a^+_ka^+_l+a^+_la^+_k=0$, $a_ka^+_l+a^+_la_k=\delta_{kl}$, $\forall k,l\in\mathbb{N}$, such that, the exchange of two identical said quantum atomic entities amounts to an equivalent algebraic representation of the numerical negative sign.

No matter what variants of bi-fermions or similar constructs may be employed to serve as basic units for carrying and processing quantum information, it is straightforward by following virtually the same exact recipes to construct a system of such bi-fermions or similar constructs that either is able to simulate homophysically any quantum physics of interest or has a GSQC Hamiltonian whose ground state can be designed to encode the solution of any prescribed BQP computational problem. In the latter case, the GSQC Hamiltonian can be so judiciously designed by translating the notions of SFF Hamiltonians and their property of local node-determinacy to said system of bi-fermions or similar constructs, such that the GSQC Hamiltonian can be as well simulated efficiently using MCMC. The only thing to note here, once and for all, is that a discrete bi-fermion may require the use a substrate space $\mathbb{X}\neq\mathbb{L}_3$, if it is preferred to construct an SFF-FS or SFF-DU Hamiltonian for GSQC. On the other hand, it should also be borne in mind that the fermionic Schr\"odinger property is not indispensable for universal and efficient GSQC as well as MCQC.

Specifically, in a many bi-fermion system implementing a GSQC machine, the bi-fermions are individually distinguishable and interacting through couplings between their constituent fermions of different species, while the two fermions within each bi-fermion are identical and usually non-interacting. Each abstract quantum computational rebit associated with a binary configuration space $\{0,1\}$ and a two-dimensional Hilbert space $L^2(\{0,1\})$ is homophysically implemented as a bi-fermion rebit associated with a configuration space $\mathbb{X}^2$ and a two-dimensional Hilbert subspace $\calQ\in L^2_{\mathsmaller{F}}(\mathbb{X}^2)$ as specified in equation (\ref{DefHilbertQ}). An abstract quantum computational $n$-rebit system, $n\in\mathbb{N}$ associated with a $2^n$-valued configuration space $\{0,1\}^n$ and a $2^n$-dimensional Hilbert space $L^2(\{0,1\}^n)$ is homophysically mapped to a {\em quantum $n$-bi-fermion system} that consists of $n$ distinguishable quantum bi-fermion rebits associated with a configuration space $(\mathbb{X}^2)^n=\mathbb{X}^{2n}$ and a tensor product Hilbert space $\calQ^{\otimes n}\subseteq L^2_{\mathsmaller{F}}(\mathbb{X}^{2n})$, henceforth called the {\em quantum state space of $n$ bi-fermions}, which further can be isophysically mapped to a {\em rectified $n$-bi-fermion system} that consists of $n$ distinguishable rectified bi-fermion rebits associated with a configuration space $\mathbb{X}_*^{2n}\defeq(\mathbb{X}_*^2)^n$, $\forall n\ge 1$, henceforth referred to as the {\em simultaneously individually rectified} (SIR) configuration space of $n$ bi-fermions, and a tensor product Hilbert space $\calR^{\otimes n}\subseteq L^2(\mathbb{X}_*^{2n})$, henceforth referred to as the {\em rectified state space of $n$ bi-fermions}, following the standard procedure of constructing/defining a tensor product vector space \cite{Lang02} to represent the quantum state of a many-body system.

An abstract $n$-rebit ($n\in\mathbb{N}$) quantum computer can be realized as either a quantum or a rectified $n$-bi-fermion system via the following homophysics $\mathfrak{M}$. An abstract $n$-rebit quantum state
\begin{equation}
|\psi\rangle=\sum_{s_1\cdots s_n}c_{s_1\cdots s_n}|s_1\cdots s_n\rangle\in L^2(\{0,1\}^n),\;c_{0\cdots 0}\in\mathbb{R}_{\ge 0},\;c_{s_1\cdots s_n}\in\mathbb{R},\;\forall(s_1\cdots s_n)\in\{0,1\}^n, \label{GeneralMultiRebitPsi}
\end{equation}
is embodied by a system of $n$ bi-fermions at the state
\begin{equation}
\mathfrak{M}(|\psi\rangle)\,\defeq\,|\psi\rangle_{\mathsmaller{\rm BF}}=\sum_{s_1\cdots s_n}\mathfrak{M}(c_{s_1\cdots s_n})\bigotimes_{i=1}^n\left\{\mathlarger{\mathlarger{[}}(1-s_i)a_{\mathsmaller{L}i}^++s_ia_{\mathsmaller{R}i}^+\mathlarger{\mathlarger{]}}a_{\mathsmaller{P}i}^+\right\}|{\sf vac}\rangle\,\in\,\calY^{\otimes n} \label{AbstractRebitsToBiFermionRebitOne}
\end{equation}
in the second-quantization representation, where $\calY=\calQ$ or $\calY=\calR$, $a_{\mathsmaller{W}i}^+$ with $W=L,P,R$ denotes respectively the fermion creation operator for the $|L\rangle$, $|P\rangle$, $|R\rangle$ singe-particle orbit of an $i$-th fermion species that moves in its substrate space, or
\begin{align}
\mathfrak{M}(|\psi\rangle) \,\defeq\,& |\psi\rangle_{\mathsmaller{\rm BF}} \nonumber \\[0.75ex]
\;=\;& \sum_{s_1\cdots s_n}\mathfrak{M}(c_{s_1\cdots s_n})\bigotimes_{i=1}^n\textstyle{\frac{1}{\sqrt{2}}}\left[(1-s_i)\left(|L_1\rangle_i|P_2\rangle_i-|L_2\rangle_i|P_1\rangle_i\right)+s_i\left(|R_1\rangle_i|P_2\rangle_i-|R_2\rangle_i|P_1\rangle_i\right)\right] \nonumber \\[0.75ex]
\;=\;& \sum_{s_1\cdots s_n}\mathfrak{M}(c_{s_1\cdots s_n})\bigotimes_{i=1}^n\textstyle{\frac{1}{\sqrt{2}}}\left\{\left[(1-s_i)|L_1\rangle_i+s_i|R_1\rangle_i\right]|P_2\rangle_i-\left[(1-s_i)|L_2\rangle_i+s_i|R_2\rangle_i\right]|P_1\rangle_i\right\} \nonumber \\[0.75ex]
\;=\;& 2^{\mathsmaller{-}n/2}\sum_{\chi_1=0}^1\cdots\sum_{\chi_n=0}^1(-1)^{\sum_{i=1}^n\chi_i}|\psi_{\chi_1\cdots\chi_n}\rangle_{\mathsmaller{\rm BF}}\,\in\,\calY^{\otimes n} \label{AbstractRebitsToBiFermionRebitTwo}
\end{align}
in the first-quantization representation, with
\begin{equation}
|\psi_{\chi_1\cdots\chi_n}\rangle_{\mathsmaller{\rm BF}} \,\defeq\, \sum_{s_1\cdots s_n}\mathfrak{M}(c_{s_1\cdots s_n})\bigotimes_{i=1}^n\left[(1-s_i)|L_{1+\chi_i}\rangle_i+s_i|R_{1+\chi_i}\rangle_i\right]|P_{2-\chi_i}\rangle_i, \label{AbstractRebitsToBiFermionRebitThree}
\end{equation}
$\forall(\chi_1\cdots\chi_n)\in\{0,1\}^n$, where $\forall i\in[1,n]$, $\forall j\in\{1,2\}$, $|W_j\rangle_i$ with $W_j=L,P,R$ denotes the $j$-th artificially labeled identical fermion of the $i$-th species occupying respectively the $|L\rangle,|P\rangle,|R\rangle$ single-particle orbit. In equations (\ref{AbstractRebitsToBiFermionRebitOne}-\ref{AbstractRebitsToBiFermionRebitThree}), the coefficients $\mathfrak{M}(c_{s_1\cdots s_n})$, $(s_1\cdots s_n)\in\{0,1\}^n$ are either $\mathbb{R}$-valued or $\mathbb{R}'$-valued, depending upon the codomain of the $\mathfrak{M}$ homophysics is associated with a quantum or a rectified $n$-bi-fermion system.

It should be reemphasized that, adopting the so-called {\em logic-$0$-positive convention or representation}, namely, imposing the condition that the coefficient $c_0$ in equations (\ref{DefHilbertQ}) and (\ref{DefHilbertR}) or the coefficient $c_{0\cdots 0}$ in equation (\ref{GeneralMultiRebitPsi}) should be non-negative, is for conventional and notational convenience, which does not induce any loss of generality, since a global phase factor (a +/- sign, to be exact) in or out of a global wavefunction does not change any physics. Besides, were it really necessary to track the overall phase of a wavefunction representing a collection of bi-fermions, such collection of bi-fermions could always be augmented by the addition of a single bi-fermion, called the {\em sign keeper}, to which only two possible wavefunctions $|R_1\rangle|P_2\rangle-|R_2\rangle|P_1\rangle$ or $|R_2\rangle|P_1\rangle-|R_1\rangle|P_2\rangle$ in the $\calQ$ representation might be assigned, corresponding to $|R_1\rangle|P_2\rangle$ or $|R_2\rangle|P_1\rangle$ respectively in the $\calR$ representation, so that an overall phase factor for the pre-augmentation collection of bi-fermions would be kept by the additional sign keeper, while the augmented collection of bi-fermions as a whole could still be fully represented by a wavefunction as prescribed in equation (\ref{GeneralMultiRebitPsi}).

In the first-quantization representation with the two identical fermions of each bi-fermion artificially labeled, an abstract quantum computational $n$-rebit state $|\psi\rangle=\sum_{s_1\cdots s_n}c_{s_1\cdots s_n}|s_1\cdots s_n\rangle$ is embodied as a superposition of $2^n$ equally weighted wavefunctions $|\psi_{\chi_1\cdots\chi_n}\rangle_{\mathsmaller{\rm BF}}$, $(\chi_1\cdots\chi_n)\in\{0,1\}^n$, each of which is homomorphic to the abstract quantum computational state $|\psi\rangle$, and each of which can be obtained from another through fermion exchanges $\{(\pi_{12})_i:i\in[1,n]\}$, with $(\pi_{12})_i$ denoting an exchange of the two identical fermions of the $i$-th bi-fermion, $\forall i\in[1,n]$. Furthermore, it is easily verified that, under any quantum gate operation corresponding to a partial Hamiltonian $h\in\calB(\calY^{\otimes n})$ that is necessarily symmetric under permutations of identical particles of the same species, the transformation of $|\psi\rangle_{\mathsmaller{\rm BF}}$ can be interpreted as having each wavefunction $|\psi_{\chi_1\cdots\chi_n}\rangle_{\mathsmaller{\rm BF}}$, $(\chi_1\cdots\chi_n)\in\{0,1\}^n$, which behaves much like a boltzmannonic wavefunction of distinguishable particles, subject to a boltzmannonic quantum gate operation $\Base(h)$, simultaneously and in parallel. Intuitively, the homophysical mapping between an abstract $n$-rebit quantum register and a system of $n$ bi-fermions is as if the bi-fermionic system created a $2^n$-fold ``multiverse'', where in each of the $2^n$ ``parallel universes'', there were a quantum register homomorphic to the abstract $n$-rebit register, albeit such a fictitious quantum register comprises artificially labeled identical fermions, and the states of all of the fictitious quantum registers must be superposed with equal weights and specific signs to constitute a $|\psi\rangle_{\mathsmaller{\rm BF}}$ that is antisymmetric under exchanges of identical fermions. Clearly, there exists a $2^n$-fold redundancy.

Such redundancy can be handled effectively via an orbit partition under a group action \cite{Rotman99,Kerber99,Lin94,Bekka00,Jacobson09,Rose09}. When acting on the configuration space of the spatial coordinates of $n$ bi-fermions $\mathbb{X}^{2n}$ or on the Hilbert spaces $\calQ^{\otimes n}$ of $n$-bi-fermion wavefunctions, the fermion-exchange operators $(\pi_{12})_i$ and $(\pi_{12})_j$ of two different bi-fermions indexed by $i,j\in[1,n]$ commute, so $\bfS_n\defeq\prod_{i=1}^n\{I,(\pi_{12})_i,*\}$ as a product group is abelian, called the {\em exchange symmetry group of $n$ bi-fermions}. All even-parity operator products of the form $\prod_{i=1}^n(\pi_{12})_i^{\xi_i}$ such that $\{\xi_i\}_{i\in[1,n]}\!\subset\!\{0,1\}^n$, $\sum_{i=1}^n\xi_i\!\equiv\!0\;(\bmod\; 2)$ form a subgroup, which shall be referred to as the {\em alternating group of $n$ bi-fermions} and denoted by
\begin{equation}
\bfA_n \,\defeq\, \mathlarger{\mathlarger{\{}}\,\textstyle{\prod_{i=1}^n}(\pi_{12})_i^{\xi_i}\!:\xi_i\in\{0,1\},\,\forall i\in[1,n],\,\textstyle{\sum_{i=1}^n}\xi_i\,\equiv\,0\;(\bmod\; 2)\,\mathlarger{\mathlarger{\}}}. \label{AltGroupOfNBiFermions}
\end{equation}
There are exactly two cosets of $\bfA_n$ in $\bfS_n$, namely, $\forall i\in[1,n]$, it holds that $(\pi_{12})_i\bfA_n\cap\bfA_n=\emptyset$ and $(\pi_{12})_i\bfA_n\cup\bfA_n=\bfS_n$. Under the group action of $\bfA_n$ on the set $\mathbb{X}^{2n}$, the orbit $\bfA_nq\defeq\{r\in\mathbb{X}^{2n}\!:\exists\,\pi\in\bfA_n\;\mbox{such that}\;r=\pi q\}$ of any point $q\in\mathbb{X}^{2n}$ is a subset of equivalent coordinates from the standpoint of quantum physics of $n$ bi-fermions, as the fermion-exchange symmetries associated with bi-fermions dictate that all of the points in an orbit $\bfA_nq$, $q\in\mathbb{X}^{2n}$ must be assigned exactly the same wave amplitude, including specifically the $\pm$ signs, for any legitimate quantum state of a system of $n$ quantum bi-fermions, or equivalently, for a classical random walk simulating the corresponding system of $n$ rectified bi-fermions, all of the points in an orbit $\bfA_nq$, $q\in\mathbb{X}^{2n}$ always assume exactly the same probability distribution at the stationary state of the Markov chain. Indeed, when Monte Carlo-simulating a system of $n$ bi-fermions, the Markov chain is best interpreted as a random walk over the quotient space $\mathbb{X}^{2n}\!/\!\bfA_n\defeq\{\bfA_nq:q\in\mathbb{X}^{2n}\}$, where a visit of a representative point $q\in\mathbb{X}^{2n}$ by the walker should be considered as all of the $2^{n-1}$ points in the orbit $\bfA_nq$ being visited simultaneously, by $2^{n-1}$ walker replicas if one will imagine, and a walk from a representative point $q\in\mathbb{X}^{2n}$ to another representative point $r\in\mathbb{X}^{2n}$ should be understood as representing a Markov state transition from the orbit subset $\bfA_nq$ to the orbit subset $\bfA_nr$, that may be pictured as $2^{n-1}\times 2^{n-1}$ walkers moving simultaneously and independently, each of which escapes from a representative point $q'\in\bfA_nq$ and arrives at a representative point $r'\in\bfA_nr$. Still further, it is perfectly practical and a good alternative to do Monte Carlo with a Markov chain over the original configuration space $\mathbb{X}^{2n}$ with redundancy, but incorporate a step/subroutine of random walk to sample from the alternating group $\bfA_n$, which picks a random permutation uniformly from the set $\bfA_n$ and makes a transition from a given configuration point $q\in\mathbb{X}^{2n}$ to any $r\in\bfA_nq$ with equal probability. Such random sampling from a permutation group is rather similar to techniques used in conventional quantum Monte Carlo simulations for modeling exchange effects of identical particles, either bosons or fermions \cite{Ceperley96,Bernu02,Militzer00}.

Consider the set of computational basis states $\{|s_1\cdots s_n\rangle\in L^2(\{0,1\}^n):s_i\in\{0,1\},\forall i\in[1,n]\}$ associated with $n$ computational rebits of a quantum computer, which are henceforth referred to as the {\em computational $n$-rebit basis states}, and the set of homophysical images $\{\mathfrak{M}(|s_1\cdots s_n\rangle)\in\calY^{\otimes n}:s_i\in\{0,1\},\forall i\in[1,n]\}$, $\calY\in\{\calQ,\calR\}$ that are associated with an homophysical $n$-bi-fermion system in the first quantization representation with the pair of identical fermions of each bi-fermion artificially labeled as the first and the second respectively, which are henceforth referred to as the {\em quantum $n$-bi-fermion basis states}. Except for a negligible subset with a small measure $O\!\left(\gamma_0^{{-}2}\right)$ in quantum probability ({\it i.e.}, quantum amplitude squared), for most configuration point $q\in\mathbb{X}^{2n}$, there exists a unique $n$-bit array or string $s\defeq(s_1\cdots s_i\cdots s_n)\in\{0,1\}^n$, signified as $s(q)\defeq(s_1\cdots s_i\cdots s_n)_q\defeq(s_1(q)\cdots s_i(q)\cdots s_n(q))$, with which the computational basis state $|s\rangle\defeq|s_1\cdots s_i\cdots s_n\rangle$, signified as $|s(q)\rangle\defeq|s_1\cdots s_i\cdots s_n\rangle_q$, is homophysically mapped to a unique $n$-bi-fermion basis state $|\mathfrak{M}(s(q))\rangle\in\calY^{\otimes n}$ such that the point amplitude $\langle q|\mathfrak{M}(s(q))\rangle\neq 0$, {\em in which case $q$ is said to be signed}, with $\sign(q)\defeq{+}1$ when $\langle q|\mathfrak{M}(s(q))\rangle>0$, and $\sign(q)\defeq{-}1$ when $\langle q|\mathfrak{M}(s(q))\rangle<0$. Otherwise, $\sign(q)\defeq 0$ if $q\in\mathbb{X}^{2n}$ cannot be signed. For any $q\in\mathbb{X}^{2n}$ that is signed, let $\abs(q)$ denote the unique equivalent point in the orbit ${\bf A}_nq$, called {\em the absolute value of $q$}, which has every bi-fermion so configured as to have the first identical fermion located in a logic well and the second localized in the Pauli well. Otherwise, $\abs(q)\defeq q\in\mathbb{X}^{2n}$ if $\sign(q)=0$.
For any signed $q\in\mathbb{X}^2$, the state vectors $\sign(q)|s(q)\rangle\in L^2(\{0,1\}^n)$, $\sign(q)|\mathfrak{M}(s(q))\rangle\in\calQ^{\otimes n}$, and $\pi^{(1-\sign(q))/2}|\mathfrak{M}(s(q))\rangle\in\calR^{\otimes n}$ with any odd permutation $\pi\in\bfS_n\setminus\bfA_n$ are respectively called {\em the computational $n$-rebit basis state, the quantum $n$-bi-fermion basis state, and the rectified $n$-bi-fermion basis state consistent with the configuration point $q$}. For any signed $q\in\mathbb{X}^{2n}$ that corresponds to an $n$-bit array $s(q)=(s_1(q)\cdots s_i(q)\cdots s_n(q))\in\{0,1\}^n$, define $\calI_0(q)\defeq\{i\in[1,n]:s_i(q)=0\}$ and $\calI_1(q)\defeq\{j\in[1,n]:s_j(q)=1\}$. Otherwise, let $\calI_0(q)=\calI_1(q)=\emptyset$ if $q$ cannot be signed. Let
\begin{equation}
\Pi_{\bm{0}}\defeq {\textstyle{\prod_{i\,\in\,\calI_0(q)}}} (\pi_{12})_i, \; \Pi_{\bm{1}}\defeq {\textstyle{\prod_{j\,\in\,\calI_1(q)}}} (\pi_{12})_j, \; \Pi_{\bm{*}}\defeq\Pi_{\bm{0}}\Pi_{\bm{1}} = {\textstyle{\prod_{i\,\in\,[1,n]}}} (\pi_{12})_i \label{defiPi0Pi1Pi*}
\end{equation}
denote the operations of exchanging the pairs of identical fermions simultaneously for, respectively, all of the bi-fermions that register a bit $0$, all of the bi-fermions that register a bit $1$, and all of the bi-fermions indiscriminately.

\begin{definition}{(Sentinel Configuration Points, Configuration Space, and Representation)}\label{defiSentinelStuff}\\
Given a signed $q\in\mathbb{X}^{2n}$ corresponding to a bit array $(s_1\cdots s_n)_q\in\{0,1\}^n$, let $(\rho_0,\rho_1)\in\{0,1\}^2$ be such that $\rho_i \equiv \calI_i \pmod*{2}$, $\forall i\in\{0,1\}$. Such a signed $q\in\mathbb{X}^{2n}$ is called a sentinel configuration point if $q=\Pi_{\bm{0}}^{\,\delta_0\rho_0\,}\Pi_{\bm{1}}^{\,\delta_1\rho_1\,}\abs(q)$ holds true for some choices of $(\delta_0,\delta_1)\in\{0,1\}^2$. For any general $q\in\mathbb{X}^{2n}$ that is signed, let $\sent(q)$ denote the unique sentinel configuration point within the equivalence class ${\bf A}_nq$. The subset $\mathbb{S}^n\defeq\{\sent(q):q\in\mathbb{X}^{2n},\,\sign(q)\neq 0\}\subseteq\mathbb{X}^{2n}$ is called the sentinel configuration space. For any given quantum or rectified state $\psi \in \calQ^{\otimes n} \defeq [\calQ(\mathbb{X}^2)]^{\otimes n} \subseteq L_{\!\mathsmaller{F}}^2(\mathbb{X}^{2n})$ or $\psi \in \calR^{\otimes n} \defeq [\calR(\mathbb{X}_*^2)]^{\otimes n} \subseteq L^2(\mathbb{X}_*^{2n})$, the domain-restricted wavefunction $P_{\scalebox{0.6}{$\,\mathbb{S}^n$}}\psi\defeq\psi|_{\scalebox{0.6}{$\mathbb{S}^n$}}\subseteq L^2(\mathbb{S}^n)$ is called the sentinel representation of $\psi$. In particular, the sentinel representation of an either quantum or rectified $n$-bi-fermion basis state is called a sentinel $n$-bi-fermion basis state.
\vspace{-1.5ex}
\end{definition}

In the above definition, $P_{\scalebox{0.6}{$\,\mathbb{S}^n$}}:\calY^{\otimes n}\mapsto L^2(\mathbb{S}^n)\subseteq L^2(\mathbb{X}^{2n})$, $\calY\in\{\calQ,\calR\}$ denotes a projection operator such that $P_{\scalebox{0.6}{$\,\mathbb{S}^n$}}\psi\defeq\psi\times\mathbbm{1}_{\scalebox{0.6}{$\mathbb{S}^n$}}$, $\forall\psi\in\calY^{\otimes n}$, with $\mathbbm{1}_{\scalebox{0.6}{$\mathbb{S}^n$}}$ being the indicator function for the set $\mathbb{S}^n$, and $L^2(\mathbb{S}^n)$ represents a Hilbert space of square integrable functions having $\mathbb{S}^n$ as their domain, over either the conventional field $\mathbb{R}$ of real numbers or the isomorphic field $\mathbb{R}'$ as specified by equations (\ref{defiRprimeMul}) and (\ref{defiRprimeAdd}), depending upon the interested wavefunction is interpreted as $\psi\in\calQ^{\otimes n}$ or $\psi\in\calR^{\otimes n}$. In the latter case, the Hilbert subspace over the field $\mathbb{R}'$ and supported by the sentinel configuration space $\mathbb{S}^n$ is denoted specifically as $\calS^n\defeq L^2(\mathbb{S}^n)$, and called the {\em sentinel state space of $n$ bi-fermions}. For any $|\psi'\rangle=\int_{\raisebox{0.3\height}{\tiny $q\!\in\!\mathbb{S}^n$}}\psi'(q)\,|q\rangle\,dV_g(q)\in\calS^n$, the coefficient $\psi'(q)$ is $\mathbb{R}'$-valued, $\forall q\in\mathbb{S}^n$, which if negative would carry a $\pi\in\bfS_n\setminus\bfA_n$ operator representing a negative sign. If $n$ is odd, then for each signed configuration point $q\in\mathbb{X}^{2n}$, there is precisely one index set $\calI_{\bm{i_1}}(q)$ whose cardinality $|\calI_{\bm{i_1}}(q)|$ is odd, with $i_1\in\{0,1\}$, consequently, a negative sign in the coefficient $\psi'(q)$ is equivalent to the operator $\Pi_{\bm{i_1}}$ and can be, and indeed will always be assumed to be, absorbed into the state vector by turning $|q\rangle$ into $|\Pi_{\bm{i_1}}q\rangle$, so to leave only the absolute value $|\psi'(q)|$ in the coefficient, while maintaining the closedness and completeness of the sentinel configuration space $\mathbb{S}^n$.

{\em From this point on, as often being reminded explicitly, it is convenient and without loss of generality to assume or make the total number of bi-fermions to be odd in any physics or system of concern, because, if necessary, an auxiliary and dummy bi-fermion can always be added to any interested system of bi-fermions, with said auxiliary and dummy bi-fermion not interacting with any bi-fermion or rebit, but just remaining at a fixed ground state and helping to book-keep the negative sign.}

Let $\mathbb{Y}^n$ with $\mathbb{Y}\in\{\mathbb{X}^2,\mathbb{X}_*^2\}$ be a configuration space and $\calY^{\otimes n}$ with $\calY\in\{\calQ,\calR\}$ be a Hilbert space representing either the quantum or the rectified state space of $n$ bi-fermions, $\calB(\calY^{\otimes n})$ be a Banach algebra of bounded linear operators on $\calY^{\otimes n}$, so that $(\mathbb{Y}^n,\calY^{\otimes n},\calB(\calY^{\otimes n}))$ constitutes a quantum physics/system of $n$ bi-fermions, where the $n$ bi-fermions are distinguishable and can interact with each other, while no interaction is allowed between the two identical fermions constituting each bi-fermion. Construct a mapping between physical systems $\mathfrak{M}:(\mathbb{Y}^n,\calY^{\otimes n},\calB(\calY^{\otimes n}))\mapsto(\mathbb{S}^n,\calS^n,\calB(\calS^n))$ such that 1) $\mathfrak{M}(U)=\{\sent(q):q\in U\}$, for any subset $U\subseteq\mathbb{Y}^n$, $\mathfrak{M}^{{-}1}(V)=\bigcup_{q\in V}(\bfA_nq\cap\mathbb{Y}^n)$ conversely, for any subset $V\subseteq\mathbb{S}^n$; 2) $\mathfrak{M}(\psi) \defeq P_{\!\mathsmaller{F}}^{{-}1}\psi \defeq \psi|_{\scalebox{0.6}{$\mathbb{S}^n$}}\in\calS^n$, $\forall\psi\in\calY^{\otimes n}$, and conversely, $\mathfrak{M}^{-1}(\phi)=P_{\!\mathsmaller{F}}\phi\in\calY^{\otimes n}$, $\forall\phi\in\calS^n$, where $P_{\!\mathsmaller{F}}$ is the familiar fermionic antisymmetrization operator defined as $P_{\!\mathsmaller{F}}\phi\defeq const \times \sum_{\pi\in\bfS_n}[\sign(\pi)\times\phi\circ\pi]$, $\forall\phi\in\calS^n\subseteq L^2(\mathbb{X}^{2n})$ (note the peculiar definition regarding the operator $P_{\!\mathsmaller{F}}^{{-}1}$); 3) $\mathfrak{M}(O)=P_{\!\mathsmaller{F}}^{{-}1}OP_{\!\mathsmaller{F}}\in\calB(\calS^n)\subseteq\calL(\calS^n)$ for all $O\in\calB(\calY^{\otimes n})$, and conversely, $\mathfrak{M}^{{-}1}(O_*)=P_{\!\mathsmaller{F}}O_*P_{\!\mathsmaller{F}}^{{-}1}\in\calB(\calY^{\otimes n})\subseteq\calL(\calY^{\otimes n})$, $\forall O_*\in\calB(\calS^n)$, which define the rules of correspondence and simultaneously the domains and ranges of the mappings between operators. Note that any operator $O\in\calB(\calY^{\otimes n})$ is symmetric and invariant under any exchange permutation $\pi\in\bfS_n$, thus the mapping $\mathfrak{M}:\calB(\calY^{\otimes n})\mapsto\calB(\calS^n)$ is one-to-one and invertible.

\begin{lemma}{(Isophysicality of Sentinel Representations)}\label{SentinelIsophysics}\\
The mapping $\mathfrak{M}:(\mathbb{Y}^n,\calY^{\otimes n},\calB(\calY^{\otimes n}))\mapsto(\mathbb{S}^n,\calS^n,\calB(\calS^n))$ just defined is an isophysics.
\end{lemma}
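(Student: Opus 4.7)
The plan is to verify, in turn, each of the five defining conditions of an isophysics from Definition 3 for the mapping $\mathfrak{M}$ specified above, and then upgrade them to isomorphisms by exhibiting an explicit two-sided inverse induced by fermionic antisymmetrization $P_{\!\mathsmaller{F}}$. The central fact I would exploit is that the sentinel configuration space $\mathbb{S}^n$ serves as a measure-theoretic fundamental domain (an $\bfA_n$-tile in the sense of Definition 18) for the action of the alternating group $\bfA_n$ on $\mathbb{Y}^n$. Assuming without loss of generality that $n$ is odd, by appending a sign-keeper bi-fermion if necessary, almost every signed configuration point $q\in\mathbb{Y}^n$ has exactly one $\bfA_n$-orbit representative in $\mathbb{S}^n$, namely $\sent(q)$, while the full $\bfS_n$-orbit decomposes into the two cosets $\bfA_nq$ and $(\bfS_n\setminus\bfA_n)q$; any fermionic wavefunction $\psi\in\calY^{\otimes n}$ is then completely determined by its restriction to $\mathbb{S}^n$ via the antisymmetry relation $\psi\circ\pi=\sign(\pi)\psi$, $\forall\pi\in\bfS_n$.

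First I would establish the measure-space correspondence $\calD\mapsto\mathfrak{M}(\calD)$ as a bijection modulo null sets, invoking the facts that $\bfA_n$ acts by measure-preserving homeomorphisms (particle-label permutations within each bi-fermion species) and that $\sent(\cdot)$ selects a canonical coset representative on a full-measure subset; the pairwise disjoint translates $\{\pi\mathbb{S}^n:\pi\in\bfA_n\}$ then tile $\mathbb{Y}^n$ up to null sets, yielding the constant factor $c=|\bfA_n|=2^{n{-}1}$ that will serve as the homophysical constant. The small subset of unsigned configuration points has Lebesgue measure zero and does not obstruct the bijection.

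Next I would establish the Hilbert space isomorphism. Injectivity of $\psi\mapsto\psi|_{\mathbb{S}^n}$ is immediate, since $\psi$ is antisymmetric and thus determined throughout $\mathbb{Y}^n$ by its values on any $\bfS_n$-fundamental domain, while surjectivity follows because any $\phi\in\calS^n$ extends uniquely to an antisymmetric $P_{\!\mathsmaller{F}}\phi\in\calY^{\otimes n}$, with the normalizing constant in $P_{\!\mathsmaller{F}}$ chosen so that $P_{\!\mathsmaller{F}}^{-1}P_{\!\mathsmaller{F}}=I$ on $\calS^n$. A direct change of variables then yields $\langle\mathfrak{M}(\psi)|\mathfrak{M}(\phi)\rangle_{\calS^n}=c^{-1}\langle\psi|\phi\rangle_{\calY^{\otimes n}}$, matching condition (4) of Definition 3. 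For the operator algebra, every $O\in\calB(\calY^{\otimes n})$ is $\bfS_n$-symmetric and therefore commutes with $P_{\!\mathsmaller{F}}$, so multiplicativity $\mathfrak{M}(O_1O_2)=(P_{\!\mathsmaller{F}}^{-1}O_1P_{\!\mathsmaller{F}})(P_{\!\mathsmaller{F}}^{-1}O_2P_{\!\mathsmaller{F}})=\mathfrak{M}(O_1)\mathfrak{M}(O_2)$, preservation of the identity, linearity, and norm preservation (up to the same factor $c$) all follow immediately. Condition (5) of Definition 3 is trivial, since $\size(\mathbb{S}^n)\le\size(\mathbb{Y}^n)$ and $c+c^{-1}=2^{n{-}1}+2^{1{-}n}$ is polynomial in the bi-fermion count.

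The main technical obstacle I anticipate is the careful bookkeeping of signs within the rectified number field $\mathbb{R}'$ when reading wavefunctions off sentinel points, especially for the case $\calY=\calR$. A negative coefficient $\phi(q)$ at $q\in\mathbb{S}^n$ can equivalently be absorbed by replacing $|q\rangle$ with the $\bfA_n$-equivalent state $|\pi q\rangle$ for some odd $\pi\in\bfS_n\setminus\bfA_n$; consistency of this convention across the measure-set, wavefunction, and operator embeddings is precisely what makes the odd-$n$ hypothesis (or the sign-keeper augmentation) essential, since only then can the odd coset $\bfS_n\setminus\bfA_n$ be canonically identified with the scalar $-1$ in a way that simultaneously respects all three embeddings. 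Pinning down this convention, and verifying that it is preserved under composition of bounded operators in $\calB(\calY^{\otimes n})$ so as to yield a genuine Banach-algebra homomorphism rather than merely a linear bijection, is where the bulk of the detailed work lies.
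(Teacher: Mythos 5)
Your overall plan --- verify the five conditions of Definition~\ref{defiHomoIsophysics} in turn and exhibit $P_{\!\mathsmaller{F}}$ as the two-sided inverse --- is the right structure, and since the paper's own proof is a single sentence (``It follows straightforwardly from the definitions.''), there is no competing approach to compare against. Most of your verification is sound: $\mathbb{S}^n$ serving as a fundamental domain for $\bfA_n$ on the full-measure signed subset of $\mathbb{X}^{2n}$, injectivity of restriction by antisymmetry, surjectivity via $P_{\!\mathsmaller{F}}$ with the normalizing constant forced to equal $\frac{1}{2}$ by $P_{\!\mathsmaller{F}}^{-1}P_{\!\mathsmaller{F}}=I$ on $\calS^n$, and the Banach-algebra homomorphism $\mathfrak{M}(O)=P_{\!\mathsmaller{F}}^{-1}OP_{\!\mathsmaller{F}}$ using $\bfS_n$-symmetry of every $O\in\calB(\calY^{\otimes n})$, together with the $\mathbb{R}'$-valued sign bookkeeping for odd $n$, all match what ``follows from the definitions'' should mean.

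There is, however, a concrete error in your treatment of condition~(5). You compute the homophysical constant from the tiling multiplicity $|\bfA_n|=2^{n-1}$, arriving at $c+c^{-1}=2^{n-1}+2^{1-n}$, and then assert this is ``polynomial in the bi-fermion count.'' It is not: $2^{n-1}$ is exponential in $n$, and $\size(\mathbb{Y}^n)=\Theta(n)$, so with $\mathfrak{M}(\psi)\defeq\psi|_{\mathbb{S}^n}$ unnormalized, condition~(5) as literally stated fails. To double-check the direction of the constant: since $\overline{\psi}\phi$ is $\bfA_n$-invariant and $\mathbb{X}^{2n}$ is tiled (modulo null sets) by $2^{n-1}$ translates of $\mathbb{S}^n$, one gets $\langle\mathfrak{M}(\psi)|\mathfrak{M}(\phi)\rangle_{\calS^n}=2^{1-n}\langle\psi|\phi\rangle_{\calY^{\otimes n}}$, i.e., $c=2^{1-n}$ in the convention of Definition~\ref{defiHomoIsophysics}(4); the sum $c+c^{-1}$ is exponential either way. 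The natural repair is to absorb the tiling factor into the embedding, for instance $\mathfrak{M}(\psi)\defeq 2^{(n-1)/2}\,\psi|_{\mathbb{S}^n}$ with $\mathfrak{M}^{-1}(\phi)\defeq 2^{(1-n)/2}P_{\!\mathsmaller{F}}\phi$, making $\mathfrak{M}$ unitary ($c=1$) so that condition~(5) holds trivially. You should either make this rescaling explicit, or argue that condition~(5) is inapplicable here --- but the latter is hard to sustain, since the whole point of sentinel representations is that $n$ is variable. As a secondary caution, your tiling picture $\mathbb{Y}^n\cong\bigsqcup_{\pi\in\bfA_n}\pi\mathbb{S}^n$ is clean for $\mathbb{Y}=\mathbb{X}^2$, but for $\mathbb{Y}=\mathbb{X}_*^2$ the SIR-restricted configuration space does not meet every $\bfA_n$-orbit (for odd $n$, consider the orthant where every bi-fermion sits in $\calN^+(\pi_{12}|\!\downarrow\rangle)$: no even permutation carries it into $\mathbb{X}_*^{2n}$), so the measure-space isomorphism in that case needs a separate remark about which orbits actually carry amplitude in $\calR^{\otimes n}$.
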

\vspace{-4.0ex}
\begin{proof}[\iftoggle{ForUSPTO} {Demonstration} {Proof}]
It follows straightforwardly from the definitions.
\vspace{-2.0ex}
\end{proof}

Note that the sentinel configuration space $\mathbb{S}^n$ and the sentinel state space $\calS^n$ can be considered rectified spaces, just like the SIR configuration subspace $\mathbb{X}_*^{2n}$ and the rectified state space $\calR^{\otimes n}$, where the Hilbert spaces $\calR^{\otimes n}$ and $\calS^n$ are over the field of rectified numbers $\mathbb{R}'$, with the negative sign being implemented by exchanging pairs of identical fermions comprising the bi-fermions, so to avoid the use of any conventional negative real numbers in $\mathbb{R}$, thus afford a pure probabilistic interpretation of any wavefunction in $\calR^{\otimes n}$ or $\calS^n$, and facilitate an isophysical simulation of the physics $(\mathbb{X}_*^{2n},\calR^{\otimes n},\calB(\calR^{\otimes n}))$ or $(\mathbb{S}^n,\calS^n,\calB(\calS^n))$ using Monte Carlo on a classical probabilistic machine. In relation to the original (quantum) physics $(\mathbb{X}^{2n},\calQ^{\otimes n},\calB(\calQ^{\otimes n}))$, for any interested quantum state in the $\calQ^{\otimes n}$-representation, denoted by $\psi\in\calQ^{\otimes n}\subseteq L_{\!\mathsmaller{F}}^2(\mathbb{X}^{2n})$, the rectified physics $(\mathbb{X}_*^{2n},\calR^{\otimes n},\calB(\calR^{\otimes n}))$ or $(\mathbb{S}^n,\calS^n,\calB(\calS^n))$ works with a wavefunction $\psi'\defeq\psi|_{\mathbb{U}^n}\in\calR^{\otimes n}$ or $\calS^n$, which is a portion of the positive part of the $\psi\in L_{\!\mathsmaller{F}}^2(\mathbb{X}^{2n})$ wavefunction, restricted to a configuration subspace $\mathbb{U}^n\subset\mathbb{X}^{2n}$, $\mathbb{U}\in\{\mathbb{X}_*^2,\mathbb{S}\}$, which does not intersect any of the negative nodal cells of $\psi$, namely, $\mathbb{U}^n\cap\{q\in\mathbb{X}^{2n}:\psi(q)<0\}=\emptyset$. Nevertheless, the rectified wavefunction $\psi'=\psi|_{\mathbb{U}^n}$ still contains complete and sufficient information for the concerned quantum physics and facilitates isophysical simulations of it by a BPP machine, as the entire quantum state can always be fully recovered through the formula $\psi=P_{\!\mathsmaller{F}}\psi'=const \times \sum_{\pi\in\bfS_n}[\sign(\pi)\times\psi'\circ\pi]$.

For the sake of mathematical formality and rigor, let us spell out the construction of tensor product Hilbert spaces as well as an explicit ground state homophysics between a system of many quantum bi-fermions and a system of many rectified bi-fermions. From the two-dimensional Hilbert spaces $\{\calQ_i\}_{i=1}^n$ of $n\in\mathbb{N}$ distinguishable quantum bi-fermions,
\begin{equation}
\calQ_i \,\defeq\, \left\{\textstyle{\frac{1}{\sqrt{2}}}\left(c_{i0}|L_1\rangle_i|P_2\rangle_i+c_{i1}|R_1\rangle_i|P_2\rangle_i\right)+
{\rm E.S.T.}:c_{i0},c_{i1}\in\mathbb{R}\right\},\;\forall i\in[1,n],
\end{equation}
the tensor product Hilbert space ${\mathsmaller{\bigotimes}}_{i=1}^n\calQ_i\defeq F(\prod_{i=1}^n\!\calQ_i)/{\raisebox{-.2\height}{$\simX{Q}$}}$ is constructed to describe the whole quantum system of $n$ bi-fermions, which is the quotient of the free vector space $F(\prod_{i=1}^n\!\calQ_i)$ generated by the Cartesian product set $\prod_{i=1}^n\!\calQ_i$ \cite{Lang02} over the tensor equivalence relation ``{\raisebox{-.2\height}{$\simX{Q}$}}'' that is characterized by the following equations \cite{Lang02}, $\forall(u_1,\cdots\!,u_n)\in F(\prod_{i=1}^n\!\calQ_i)$, $\forall u'_i\in\calQ_i$,
\begin{align}
& \mbox{\em Component-wise Linearity \!\rm :} \nonumber \\[0.75ex]
& (u_1,\cdots\!,u_i+u'_i,\cdots\!,u_n)\;{\raisebox{-.2\height}{$\simX{Q}$}}\;(u_1,\cdots\!,u_i,\cdots\!,u_n)+(u_1,\cdots\!,u'_i,\cdots\!,u_n),\;\forall i\in[1,n], \\[0.75ex]
& \mbox{\em Commutativity and Distributivity of Scalar Multiplication \!\rm :} \nonumber \\[0.75ex]
& r(u_1,\cdots\!,u_i,\cdots\!,u_j,\cdots\!,u_n)\;{\raisebox{-.2\height}{$\simX{Q}$}}\;(u_1,\cdots\!,ru_i,\cdots\!,u_j,\cdots\!,u_n) \\[0.75ex]
& \;\;\;\;\;\;\;\;\;\;\;\;\;\;\;\;\;\;\;\;\;\;\;\;\;\;\;\;\;\;\;\;\;\;\;\;\;\;\;\;\;\;{\raisebox{-.2\height}{$\simX{Q}$}}\;(u_1,\cdots\!,u_i,\cdots\!,ru_j,\cdots\!,u_n),
\;\forall r\in\mathbb{R},\;\forall i,j\in[1,n]. \nonumber
\end{align}
The equivalence class containing the element $(u_1,\cdots\!,u_i,\cdots\!,u_n)\in F(\prod_{i=1}^n\!\calQ_i)$,
\begin{equation}
[(u_1,\cdots\!,u_i,\cdots\!,u_n)]{\raisebox{.2\height}{$_{\simX{Q}}$}}\defeq u_1\otimes\cdots\otimes u_i\otimes\cdots\otimes u_n
\defeq|u_1\rangle\cdots|u_i\rangle\cdots|u_n\rangle\in\textstyle{{\mathsmaller{\bigotimes}}_{i=1}^n\calQ_i},
\end{equation}
is called the tensor product of the vectors $\{u_i\!\in\!\calQ_i\}_{i=1}^n$. A fundamental tenet of many-body quantum mechanics is that all possible quantum states of a composite system consisting of multiple parts or particles are represented completely and exactly by the tensor product space generated by the individual Hilbert spaces describing the constituent parts or particles, subject to further constraints of bosonic or fermionic exchange symmetry when indistinguishable particles are involved. In particular, for the concerned case of multiple bi-fermions each consisting of two identical fermions, the ``commutativity of scalar multiplication'' property of tensor product asserts that fermion-exchange operators associated with any two bi-fermions commute and invert each other, namely, if $(\pi_{12})_i$ denotes the operation of exchanging the two fermions for the $i$-th bi-fermion, and $(\pi_{12})_j$ for the $j$-th bi-fermion, $i,j\in[1,n]$, then the equivalence identity $(\pi_{12})_i(\pi_{12})_j\,{\raisebox{-.2\height}{$\simX{Q}$}}\,(\pi_{12})_j(\pi_{12})_i\,{\raisebox{-.2\height}{$\simX{Q}$}}\,I$ holds, $\forall i\in[1,n]$, $\forall j\in[1,n]$, over the tensor product Hilbert space ${\mathsmaller{\bigotimes}}_{i=1}^n\calQ_i$.

Similarly, with the state spaces of $n\!\in\!\mathbb{N}$ distinguishable rectified bi-fermions
\begin{equation}
\calR_i \,\defeq\, \left\{c_{i0}(-\pi_{12})_i^{(c_{i0}<0)}|L_1\rangle_i|P_2\rangle_i+c_{i1}(-\pi_{12})_i^{(c_{i1}<0)}|R_1\rangle_i|P_2\rangle_i:c_{i0},c_{i1}\in\mathbb{R}\right\},\;\forall i\in[1,n],
\end{equation}
the Cartesian product set $\prod_{i=1}^n\!\calR_i$ generates a free vector space $F(\prod_{i=1}^n\!\calR_i)$ over the field of scalars $(\mathbb{R}',+,*)$, which reduces to a tensor product Hilbert space ${\mathsmaller{\bigotimes}}_{i=1}^n\calR_i\defeq F(\prod_{i=1}^n\!\calR_i)/{\raisebox{-.1\height}{$\simX{R}$}}$ modulo a tensor equivalence relation ``${\raisebox{-.1\height}{$\simX{R}$}}$'' that is characterized by the following equations, $\forall(v_1,\cdots\!,v_n)\in F(\prod_{i=1}^n\!\calR_i)$, $\forall v'_i\in\calR_i$,
\begin{align}
& \mbox{\em Component-wise Linearity \!\rm :} \nonumber \\[0.75ex]
& (v_1,\cdots\!,v_i+v'_i,\cdots\!,v_n)\;{\raisebox{-.1\height}{$\simX{R}$}}\;(v_1,\cdots\!,v_i,\cdots\!,v_n)+(v_1,\cdots\!,v'_i,\cdots\!,v_n),\;\forall i\in[1,n], \\[0.75ex]
& \mbox{\em Commutativity and Distributivity of Scalar Multiplication \!\rm :} \nonumber \\[0.75ex]
& r'(v_1,\cdots\!,v_i,\cdots\!,v_j,\cdots\!,v_n)\;{\raisebox{-.1\height}{$\simX{R}$}}\;(v_1,\cdots\!,r'v_i,\cdots\!,v_j,\cdots\!,v_n) \label{r'ScalarMul} \\[0.75ex]
& \,\;\;\;\;\;\;\;\;\;\;\;\;\;\;\;\;\;\;\;\;\;\;\;\;\;\;\;\;\;\;\;\;\;\;\;\;\;\;\;\;\;{\raisebox{-.1\height}{$\simX{R}$}}\;(v_1,\cdots\!,v_i,\cdots\!,r'v_j,\cdots\!,v_n),
\;\forall r'\in(\mathbb{R}',+,*),\;\forall i,j\in[1,n]. \nonumber
\end{align}
The equivalence class containing the element $(v_1,\cdots\!,v_i,\cdots\!,v_n)\in F(\prod_{i=1}^n\!\calR_i)$,
\begin{equation}
[(v_1,\cdots\!,v_i,\cdots\!,v_n)]{\raisebox{.1\height}{$_{\simX{R}}$}}\defeq v_1\otimes\cdots\otimes v_i\otimes\cdots\otimes v_n
\defeq|v_1\rangle\cdots|v_i\rangle\cdots|v_n\rangle\in\textstyle{{\mathsmaller{\bigotimes}}_{i=1}^n\calR_i},
\end{equation}
is called the tensor product of the vectors $\{v_i\!\in\!\calR_i\}_{i=1}^n$. It should be noted that equation (\ref{r'ScalarMul}) makes sense only under the premise that the fermion-exchange operators of any two rectified bi-fermions commute, and invert each other, namely, $\forall i\in[1,n]$, $\forall j\in[1,n]$, the equivalence identity $(\pi_{12})_i(\pi_{12})_j\,{\raisebox{-.1\height}{$\simX{R}$}}\,(\pi_{12})_j(\pi_{12})_i\,{\raisebox{-.1\height}{$\simX{R}$}}\,I$ holds over the free vector space $F(\prod_{i=1}^n\!\calR_i)$, in which case, the subscript ``$i$'' in a fermion-exchange operator $(\pi_{12})_i$ becomes insignificant and can be omitted safely, such that $\forall r'=|r|\pi_{12}{\raisebox{-.05\height}{$^{(r<0)}$}}$, $r\in\mathbb{R}$, $\forall v_i\in\calR_i$, $i\in[1,n]$, the expression $r'v_i$ is always well-defined and equated with $|r|(\pi_{12})_i{\raisebox{-.05\height}{$^{(r<0)}$}}v_i$. Consequently, there is a large amount of redundancy among the $4^n$ computational basis states for a system of $n$ rectified bi-fermions
\begin{equation}
\left\{|W_{\chi_1}P_{3-\chi_1}\rangle_1\cdots|W_{\chi_i}P_{3-\chi_i}\rangle_i\cdots|W_{\chi_n}P_{3-\chi_n}\rangle_n:\;W\in\{L,R\},\;\chi_i\in\{1,2\},\;\forall i\in[1,n]\right\},
\end{equation}
many of which must be regarded as equivalent and the same. Specifically, the equivalence relation
\begin{align}
\textstyle{\left[\prod_{i=1}^n(\pi_{12})_i^{\xi_i}\right]}\;& |W_{\chi_1}P_{3-\chi_1}\rangle_1\cdots|W_{\chi_i}P_{3-\chi_i}\rangle_i\cdots|W_{\chi_n}P_{3-\chi_n}\rangle_n \nonumber \\[0.75ex]
{\raisebox{-.1\height}{$\simX{R}$}}\;& |W_{\chi_1}P_{3-\chi_1}\rangle_1\cdots|W_{\chi_i}P_{3-\chi_i}\rangle_i\cdots|W_{\chi_n}P_{3-\chi_n}\rangle_n
\end{align}
holds true for any computational basis state $|W_{\chi_1}P_{3-\chi_1}\rangle_1\cdots|W_{\chi_i}P_{3-\chi_i}\rangle_i\cdots|W_{\chi_n}P_{3-\chi_n}\rangle_n$, so long as $\sum_{i=1}^n\xi_i\,\equiv\,0\;(\bmod\; 2)$, where $\xi_i\in\{0,1\}$, $\forall i\in[1,n]$. Obviously, the equivalence classes of the computational basis states are divided into two groups, which are labeled even and odd respectively, such that each even equivalence class has a representative computational basis state that has no tensor factor of the form $|W_2P_1\rangle_i$, $W\in\{L,R\}$ associated with any rectified bi-fermion $i\in[1,n]$, while each odd equivalence class has a representative computational basis state that has one and only one tensor factor of the form $|W_2P_1\rangle_i$, $W\in\{L,R\}$ associated with a certain rectified bi-fermion indexed by $i\in[1,n]$.

The tensor product Hilbert space ${\mathsmaller{\bigotimes}}_{i=1}^n\calQ_i$ or ${\mathsmaller{\bigotimes}}_{i=1}^n\calR_i$ still contains much redundancy, because the state of a system of $n$ quantum or rectified bi-fermions is sufficiently and uniquely represented by an element, called a ray, from the projective Hilbert space $\mathbb{P}({\mathsmaller{\bigotimes}}_{i=1}^n\calQ_i)$ or $\mathbb{P}({\mathsmaller{\bigotimes}}_{i=1}^n\calR_i)$, that ignores any nonzero scaling in the respective base Hilbert space,
\begin{align}
\mathbb{P}(\textstyle{{\mathsmaller{\bigotimes}}_{i=1}^n\calQ_i}) \,\defeq\,& (\textstyle{{\mathsmaller{\bigotimes}}_{i=1}^n\calQ_i})/{\raisebox{-.2\height}{$\simX{\bf Q}$}}, \\[0.75ex]
\mathbb{P}(\textstyle{{\mathsmaller{\bigotimes}}_{i=1}^n\calR_i}) \,\defeq\,& (\textstyle{{\mathsmaller{\bigotimes}}_{i=1}^n\calR_i})/{\raisebox{-.1\height}{$\simX{\bf R}$}},
\end{align}
with the projective equivalence relations ${\raisebox{-.2\height}{$\simX{\bf Q}$}}$ and ${\raisebox{-.1\height}{$\simX{\bf R}$}}$ defined as
\begin{align}
\forall u,u'\in\textstyle{{\mathsmaller{\bigotimes}}_{i=1}^n\calQ_i},\;u\;{\raisebox{-.2\height}{$\simX{\bf Q}$}}\;u' \;& \mbox{if}\;\exists r\in(\mathbb{R}_{\neq0 },*)\;\mbox{such that}\;u'=ru, \\[0.75ex]
\forall v,v'\in\textstyle{{\mathsmaller{\bigotimes}}_{i=1}^n\calR_i},\;v\;{\raisebox{-.1\height}{$\simX{\bf R}$}}\;v' \;& \mbox{if}\;\exists s\in(\mathbb{R}'_{\neq0 },*)\;\mbox{such that}\;v'=sv.
\end{align}
In particular, any state $[\psi]{\raisebox{.2\height}{$_{\simX{\bf Q}}$}}\in\mathbb{P}({\mathsmaller{\bigotimes}}_{i=1}^n\calQ_i)$ describing a system of $n$ quantum bi-fermions is a ${\raisebox{-.2\height}{$\simX{\bf Q}$}}$-equivalence class of a representative vector $\psi\in{\mathsmaller{\bigotimes}}_{i=1}^n\calQ_i$ such that
\begin{equation}
\psi=\sum_{(s_1\cdots s_i\cdots s_n)\in\{0,1\}^n}r_{s_1\cdots s_i\cdots s_n}|W_1(s_1)P_2\rangle_1\cdots|W_1(s_i)P_2\rangle_i\cdots|W_1(s_n)P_2\rangle_n\,+\,{\rm E.S.T.}, \label{psiQGeneral}
\vspace{-0.75ex}
\end{equation}
in which $W_1:\{0,1\}\mapsto\{L_1,R_1\}$ is a mapping such that $W_1(0)=L_1$, $W_1(1)=R_1$, and the first nonzero element in the sequence $\{r_{s_1\cdots s_i\cdots s_n}:(s_1\cdots s_i\cdots s_n)\in[0,2^n)\}\subset\mathbb{R}^n$ is positive, where the sequence is so arranged with the index $(s_1\cdots s_i\cdots s_n)$ as a binary integer increasing from $0$ to $2^n-1$. Similarly, any state $[\psi']{\raisebox{.1\height}{$_{\simX{\bf R}}$}}\in\mathbb{P}({\mathsmaller{\bigotimes}}_{i=1}^n\calR_i)$ describing a system of $n$ rectified bi-fermions is a ${\raisebox{-.1\height}{$\simX{\bf R}$}}$-equivalence class of a representative vector $\psi'\in{\mathsmaller{\bigotimes}}_{i=1}^n\calR_i$ such that
\begin{equation}
\psi'=\sum_{(s_1\cdots s_i\cdots s_n)\in\{0,1\}^n}r'_{s_1\cdots s_i\cdots s_n}|W_1(s_1)P_2\rangle_1\cdots|W_1(s_i)P_2\rangle_i\cdots|W_1(s_n)P_2\rangle_n, \label{psi'RGeneral}
\end{equation}
where $\forall(s_1\cdots s_i\cdots s_n)\in\{0,1\}^n$, the coefficient $r'_{s_1\cdots s_i\cdots s_n}\in\mathbb{R}'$, and the first nonzero element of the sequence $\{r'_{s_1\cdots s_i\cdots s_n}:(s_1\cdots s_i\cdots s_n)\in[0,2^n)\}\subset\mathbb{R}'^n$, denoted by $r'_*$, is positive, in the sense that $r'_*=(r_*,I)\in\mathbb{R}'$ for an $r_*\in\mathbb{R}_{>0}$, when again the sequence $\{r'_{s_1\cdots s_i\cdots s_n}:(s_1\cdots s_i\cdots s_n)\in[0,2^n)\}$ is so arranged to have the index $(s_1\cdots s_i\cdots s_n)$ as a binary integer increasing from $0$ to $2^n-1$. It is noted that $\forall(s_1\cdots s_i\cdots s_n)\in\{0,1\}^n$, the $\mathbb{R}'$-coefficient $r'_{s_1\cdots s_i\cdots s_n}$ for the basis state $|W_1(s_1)P_2\rangle_1\cdots|W_1(s_i)P_2\rangle_i\cdots|W_1(s_n)P_2\rangle_n$ needs in principle just one operator $\pi_{12}$ of fermion-exchange, which can be equivalently applied onto any of the $n$ rectified bi-fermions, to represent a sign flip for the coefficients of the $n$-rebit computational basis states, although in the previously discussed scheme of SIR representation, it is chosen deliberately to always apply simultaneously an odd number $n_0\in 2\mathbb{N}+1$ of $\pi_{12}$ operators in total, each of which to one of $n_0$ bi-fermions in a cluster. For a subsystem that collects all of the bi-fermions for ordinary rebits, it is possible, although not absolutely necessary, to always keep all bi-fermions rectified except at most one, called the {\em single unrectified delegate} (SUD), to carry an overall $\pi_{12}$ operator/sign if present, then any quantum gate $U$ that moves an $i$-th bi-fermion of ordinary rebit, $i\in[1,n]$ can be effected by firstly transferring the if-present $\pi_{12}$ operator from the old SUD to the $U$-moved $i$-th bi-fermion, so that the old SUD becomes rectified and the $U$-moved $i$-th bi-fermion becomes the new SUD, then applying the gate $U$, such that, any $\mathbb{R}'$-coefficient in the expansion of the quantum state of the subsystem of ordinary rebits, in terms of the computational basis states, similar to what has been illustrated in equation (\ref{psi'RGeneral}), never has more than one operator $\pi_{12}$ of fermion-exchange, both before and after any gate operation on a single bi-fermion controlled by multiple bi-fermions.

It is noted that for each $i\in[1,n]$, the Hilbert spaces $\calQ_i$ for a quantum bi-fermion and $\calR_i$ for a rectified bi-fermion can both be regarded as subspaces of a larger Hilbert space
\begin{equation}
\calX_i \,\defeq\, \left\{c_{i0}|L_1\rangle_i|P_2\rangle_i+c_{i1}|R_1\rangle_i|P_2\rangle_i+c_{i2}|L_2\rangle_i|P_1\rangle_i+c_{i3}|R_2\rangle_i|P_1\rangle_i:c_{i0},c_{i1},c_{i2},c_{i3}\in\mathbb{R}\right\}\!.
\end{equation}
Naturally, the tensor product Hilbert spaces ${\mathsmaller{\bigotimes}}_{i=1}^n\calQ_i$ and ${\mathsmaller{\bigotimes}}_{i=1}^n\calR_i$ can be considered as subspaces of the tensor product Hilbert space ${\mathsmaller{\bigotimes}}_{i=1}^n\calX_i\defeq F(\prod_{i=1}^n\!\calX_i)/{\raisebox{-.1\height}{$\simX{S}$}}$, which is the quotient space of the free vector space $F(\prod_{i=1}^n\!\calX_i)$ modulo a tensor equivalence relation ${\raisebox{-.1\height}{$\simX{S}$}}$ characterized by
\begin{align}
& \mbox{\em Component-wise Linearity \!\rm :} \nonumber \\[0.75ex]
& (w_1,\cdots\!,w_i+w'_i,\cdots\!,w_n)\;{\raisebox{-.1\height}{$\simX{S}$}}\;(w_1,\cdots\!,w_i,\cdots\!,w_n)+(w_1,\cdots\!,w'_i,\cdots\!,w_n),\;\forall i\in[1,n], \\[0.75ex]
& \mbox{\em Commutativity and Distributivity of Scalar Multiplication \!\rm :} \nonumber \\[0.75ex]
& r(w_1,\cdots\!,w_i,\cdots\!,w_j,\cdots\!,w_n)\;{\raisebox{-.1\height}{$\simX{S}$}}\;(w_1,\cdots\!,rw_i,\cdots\!,w_j,\cdots\!,w_n) \\[0.75ex]
& \;\;\;\;\;\;\;\;\;\;\;\;\;\;\;\;\;\;\;\;\;\;\;\;\;\;\;\;\;\;\;\;\;\;\;\;\;\;\;\;\;\;\;\;{\raisebox{-.1\height}{$\simX{S}$}}\;(w_1,\cdots\!,w_i,\cdots\!,rw_j,\cdots\!,w_n),
\;\forall r\in\mathbb{R},\;\forall i,j\in[1,n], \nonumber
\end{align}
$\forall(w_1,\cdots\!,w_n)\in F(\prod_{i=1}^n\!\calX_i)$, $\forall w'_i\in\calX_i$. Each wavefunction in ${\mathsmaller{\bigotimes}}_{i=1}^n\calR_i$ of the form
\begin{equation}
\sum_{(s_1\cdots s_i\cdots s_n)\in\{0,1\}^n}\!\!\mathlarger{\mathlarger{(}}r_{s_1\cdots s_i\cdots s_n},\pi_{12}^{t_{s_1\cdots s_i\cdots s_n}}\mathlarger{\mathlarger{)}}|W_1(s_1)P_2\rangle_1\cdots|W_1(s_i)P_2\rangle_i\cdots|W_1(s_n)P_2\rangle_n,
\end{equation}
with $r_{s_1\cdots s_i\cdots s_n}\in\mathbb{R}_{\ge 0}$, $t_{s_1\cdots s_i\cdots s_n}\in\{0,1\}$, $\forall(s_1\cdots s_i\cdots s_n)\in\{0,1\}^n$, can be interpreted as a vector
\begin{equation}
\sum_{(s_1\cdots s_i\cdots s_n)\in\{0,1\}^n}\!\!r_{s_1\cdots s_i\cdots s_n}\mathlarger{\mathlarger{(}}\pi_{12}^{t_{s_1\cdots s_i\cdots s_n}}|W_1(s_1)P_2\rangle_1\cdots|W_1(s_i)P_2\rangle_i\cdots|W_1(s_n)P_2\rangle_n\mathlarger{\mathlarger{)}}
\end{equation}
in the conical hull $\coni[({\mathsmaller{\bigotimes}}_{i=1}^n\calX_i)_{\star}]$ spanned by
\begin{align}
\textstyle{({\mathsmaller{\bigotimes}}_{i=1}^n\calX_i)_{\star}} \,\defeq\, \mathlarger{\mathlarger{\{}}\pi_{12}^{t_{s_1\cdots s_i\cdots s_n}}|W_1(s_1)P_2\rangle_1\cdots|W_1(s_i)P_2\rangle_i\cdots|W_1(s_n)P_2\rangle_n: \nonumber \\[0.75ex]
t_{s_1\cdots s_i\cdots s_n}\in\{0,1\},\;(s_1\cdots s_i\cdots s_n)\in\{0,1\}^n\mathlarger{\mathlarger{\}}}
\end{align}
with scalar coefficients from the set $\mathbb{R}_{\ge 0}$, where each wavefunction is non-negative-valued at any location in the bi-fermion configuration space $\mathbb{X}^{2n}$. In this way of interpretation, ${\mathsmaller{\bigotimes}}_{i=1}^n\calR_i$ is a subset of $\coni[({\mathsmaller{\bigotimes}}_{i=1}^n\calX_i)_{\star}]$, which is in turn a convex cone in the Hilbert space ${\mathsmaller{\bigotimes}}_{i=1}^n\calX_i$. Similarly, a typical dyad operator in $\calB({\mathsmaller{\bigotimes}}_{i=1}^n\calR_i)$ of the form $|\psi'\rangle\langle\phi'|$, $\psi',\phi'\in{\mathsmaller{\bigotimes}}_{i=1}^n\calR_i$ can be interpreted as an element in the convex cone $\coni[({\mathsmaller{\bigotimes}}_{i=1}^n\calX_i)_{\star}({\mathsmaller{\bigotimes}}_{i=1}^n\calX_i)_{\star}^+]\subset\calB({\mathsmaller{\bigotimes}}_{i=1}^n\calX_i)$, with
\begin{equation}
\textstyle{({\mathsmaller{\bigotimes}}_{i=1}^n\calX_i)_{\star}({\mathsmaller{\bigotimes}}_{i=1}^n\calX_i)_{\star}^+}\,\defeq\,\{|u'\rangle\langle v'|:u'\in\textstyle{({\mathsmaller{\bigotimes}}_{i=1}^n\calX_i)_{\star}},\,v'\in\textstyle{({\mathsmaller{\bigotimes}}_{i=1}^n\calX_i)_{\star}}\}.
\end{equation}
Therefore, by virtue of fermion sign rectification and as the basis of Monte Carlo quantum computing, a collection of multiple quantum bi-fermions with a Hilbert space $\calH\defeq{\mathsmaller{\bigotimes}}_{i=1}^n\calQ_i$ and the Banach algebra $\calB(\calH)$ can be isophysically represented by a system of multiple rectified bi-fermions with a Hilbert space $\calH'\defeq{\mathsmaller{\bigotimes}}_{i=1}^n\calR_i$ and the Banach algebra $\calB(\calH')$, through an isophysics $\mathfrak{M}:(\mathbb{X}^{2n},\calH,\calB(\calH))\mapsto(\mathbb{X}_*^{2n},\calH',\calB(\calH'))$, such that any wavefunction $\psi'\in\calH'$, interpreted as an element in $\coni[({\mathsmaller{\bigotimes}}_{i=1}^n\calX_i)_{\star}]\subset{\mathsmaller{\bigotimes}}_{i=1}^n\calX_i$, is just the positive part of the vector $\psi=\mathfrak{M}^{\mathsmaller{-}1}(\psi')\in\calH\subset{\mathsmaller{\bigotimes}}_{i=1}^n\calX_i$, with $\psi$ and $\psi'$ mutually and uniquely determining each other through the isophysics $\mathfrak{M}$. Having all wavefunctions and bounded linear operators positively valued in the first-quantized configuration space representation, the quantum physics $(\mathbb{X}_*^{2n},\calH',\calB(\calH'))$ becomes amenable to Monte Carlo simulations on a classical probabilistic computer.

A ground state computational physics is basically concerned with the $C_0$-semigroups, also known as strongly continuous one-parameter semigroups \cite{Engel00}, of the Gibbs operators $\{G(\tau H_k)\}_{k=0}^{\mathsmaller{K}}$, $\tau\in[0,\infty)$ that are associated with the set of self-adjoint FBM tensor monomials $\{H_k\}_{k=1}^{\mathsmaller{K}}$ and the Hamiltonian of total energy $H_0\defeq\sum_{k=1}^{\mathsmaller{K}}H_k$ as the infinitesimal generators, such that $G(\tau H_k)\defeq\exp({-}\tau H_k)$, $\forall k\in[0,K]$, $\forall\tau\in[0,\infty)$. Of particular interest is to drive a GSQC system to its ground state $\psi_0(H)$ by applying the Gibbs operators, and to sample from the probability distribution $|\psi_0(H;q\in\mathbb{X}^{2n})|^2dV_g$. If all of the operators $H_k$, $k\in[0,K]$ are either of a finite rank or unbounded from above, then all $G(\tau H_k)$, $k\in[0,K]$ are compact operators, $\forall\tau\in(0,\infty)$. Furthermore, if all $H_k$, $k\in[0,K]$ are lower-bounded by zero, then $\{G(\tau H_k)\}_{\tau\in[0,\infty)}$, $k\in[0,K]$ are all contraction semigroups. GSQC arises naturally from and is widely useful in the contexts of computational physics and quantum chemistry, as well as non-physics-related computational problems of optimizations and searches, where an objective function is encoded in the form of an energy functional resembling the Hamiltonian of a fictitious physical system. However, the power of GSQC far transcends the evaluation and optimization of energy functionals. As having been well established previously \cite{Feynman85,Kitaev02,Mizel04,Kempe06}, as well as will be detailed in the next section, GSQC is actually universal for quantum computing in that, any quantum dynamics as a series of (a polynomial number of) unitary operations on a quantum system that is either executing or being simulated by a quantum algorithm, can be encoded into the ground state stationary properties of a larger quantum system with a polynomially bounded overhead, whose configuration space has an extra spatial dimension encoding the time variable of the concerned quantum dynamics. In a sense, GSQC is all that is needed as far as quantum computability and quantum computational complexity are concerned.

Nevertheless, it is still inspiring to note, even just in passing, that a quantum dynamics which either describes a quantum system or is implemented on a quantum computer based on rebits, where the effects of unitary state transformations are emphasized, can be described by a quantum physics $(\mathbb{X}^{2(n{+}1)},\calH\otimes\calQ,\calB(\calH\otimes\calQ))$, $n\in\mathbb{N}$ in conjunction with a self-adjoint operator $H\in\calL_0(\calH\otimes\calQ)$ designated as the Hamiltonian representing the total energy of the system, where the Hilbert spaces $\calH\defeq\calH(\mathbb{X}^{2n})$ and $\calQ=\{c_0|0\rangle+c_1|1\rangle:c_0,c_1\in\mathbb{R}\}$, as well as the Banach algebra $\calB(\calH\otimes\calQ)=\calB(\calH)\otimes\calB(\calQ)$ and the Banach space $\calL_0(\calH\otimes\calQ)$, are all defined with respect to the field of real numbers. By Stone's theorem \cite{Engel00}, even though itself may not lie in $\calB(\calH\otimes\calQ)$, $H$ can generate a strongly continuous one-parameter unitary group $\{U(\tau H)\defeq\exp(\tau H\otimes\bfi)\}_{\tau\ge 0}$, which acts on the tensor product Hilbert space $\calH\otimes\calQ$ and is contained in the Banach algebra $\calB(\calH\otimes\calQ)$, with $\calQ$ being the state space associated with an auxiliary rebit to flag and represent the real and imaginary parts of complex-valued amplitudes, and $\bfi\defeq[0, -1; 1, 0]\in GL(2,\mathbb{R})$ representing an operator in $\calB(\calQ)$ that constitutes a matrix embodiment of the imaginary unit, such that $\bfi^2=-{\bf 1}$, with ${\bf 1}\defeq[1,0;0,1]\in GL(2,\mathbb{R})$ representing the identity operator in $\calB(\calQ)$. The matrix $\bfi$ generates a Banach algebra $\mathbb{R}[\bfi]$, that consists of matrices with a non-negative definite determinant, the square root of which serves as the norm. $\mathbb{R}[\bfi]$ is also a unital and commutative division algebra, constituting an isomorphic representation  $\mathbb{R}[\bfi]\cong\mathbb{C}$ of the space of complex numbers, both as an algebraic field and as a topological group. Now $\calH\otimes\calQ$ can be considered as a Hilbert space over the field $\mathbb{R}[\bfi]$ which is isomorphic to the field of complex numbers $\mathbb{C}$, and $\calB(\calH\otimes\calQ)$ can be regarded as a Banach algebra also over the field $\mathbb{R}[\bfi]\cong\mathbb{C}$. Therefore, $(\mathbb{X}^{2(n{+}1)},\calH\otimes\calQ,\calB(\calH\otimes\calQ))$ constitutes an isophysical representation for any general quantum physics $(\mathbb{X}^{2n},\calH^*(\mathbb{X}^{2n}),\calB^*(\calH^*))$ involving complex-valued state vectors, with a Hilbert space $\calH^*\!\defeq\calH^*(\mathbb{X}^{2n})$ and the Banach algebra $\calB^*\!\defeq\calB^*(\calH^*)$ being defined with respect to the field of complex numbers $\mathbb{C}$. Through another homophysics $\mathfrak{M}$ using our field of rectified numbers $\mathbb{R}'$ and an imaginary unit $\bfi'\defeq[0,\pi_{12};1,0]\in GL(2,\mathbb{R}')$, the universal rebit-based quantum physics $(\mathbb{X}^{2(n{+}1)},\calH\otimes\calQ,\calB(\calH\otimes\calQ))$ can be homophysically mapped into a quantum system $\scalebox{1.1}{(}\mathbb{X}_*^{2n{+}2},\calH'\otimes\calR,\calB(\calH'\otimes\calR)\scalebox{1.1}{)}$ of rectified bi-fermions, with $\calH'\!\defeq\mathfrak{M}(\calH)$, $\calR=\mathfrak{M}(\calQ)$ representing the rectified state space of a single bi-fermion as defined in (\ref{DefHilbertR}). Thus, even a quantum dynamics involving complex-valued wavefunctions being transformed by a one-parameter unitary group can be sign-rectified and implemented in a system of many rectified bi-fermions, where both the wavefunctions and the unitary groups are represented without involving any negative real number, consequently, the quantum dynamics is homophysically mapped into a standard Markovian process in classical probability theory, which is amenable to efficient Monte Carlo simulations on a classical probabilistic machine.

In spirit, our method of fermion sign rectification using bi-fermions and representing the numerical ``$-$'' sign by the fermion-exchange operator $\pi_{12}$ is analogous to the Janzing-Wocjan representation \cite{Janzing06} and its application in the Jordan-Gosset-Love technique of stoquastization \cite{Jordan10}, where the multiplicative group $(\{1,-1\},*)$ for the sign of real numbers is represented by the multiplicative group $(\{I,\sigma^x\},*)$ of operators on an ancilla qubit, so that any designer Hamiltonians for QMA-complete and universal adiabatic computations using excited states can be converted into stoquastic matrices. $I$ and $\sigma^x$ are respectively the $2\times 2$ identity matrix and the Pauli matrix measuring spin along the $x$ axis. Also similarly, our method of fermion sign rectification and the  Jordan-Gosset-Love technique both entail a multi-fold degeneracy/redundancy of sorts, when encoding the solution of a general BQP problem into an eigenstate of the correspondingly constructed stoquastic Hamiltonian. The  Jordan-Gosset-Love technique introduces an ancilla qubit whose $\sigma^x$ eigenstates $|\pm\rangle=(|0\rangle\pm|1\rangle)/\sqrt{2}$ create two computational manifolds, where the $|-\rangle$-associated manifold encodes the general BQP problem of interest, while the $|+\rangle$-associated manifold corresponds to a stoquastic Hamiltonian that has little to do with the original BQP problem. Our method of fermion sign rectification, using a system of bi-fermions and the fermion-exchange operator $\pi_{12}$, encodes the solution of a general BQP problem into the ground state of a many-bi-fermion system consisting of two classes of probability amplitude distributions that are isomorphic and symmetric with respect to each other, with one class being amplitude distributions in positive nodal cells, and the other class being amplitude distributions in negative nodal cells. There is a crucial difference though. While the  Jordan-Gosset-Love technique either encodes the desired solution in a certain excited state associated with the $|-\rangle$ state of the ancilla, or alternatively biases the $|-\rangle$ state to a lower energy than the $|+\rangle$ state of the ancilla, thus unfortunately spoils the stoquasticity of the total Hamiltonian, our method of fermion sign rectification has the desired solution encoded exactly in the ground state, whose nodal structure can always be determined locally in the configuration space, thanks to the property of local node-determinacy afforded by the employed SFF Hamiltonian. That is fundamentally why our method of fermion sign rectification overcomes the infamous sign problem and enables efficient Monte Carlo simulations of SFF-FS or SFF-EB systems.

More specifically, the essential property of a designer Hamiltonian being separately frustration-free endows our method of fermion sign rectification with distributive and efficient solutions of nodal surfaces associated with FBM tensor monomials, which are all consistent with and collectively determine the global nodal structure of the unique ground state of the designer Hamiltonian. Localization of nodal surfaces afforded by the property of a designer Hamiltonian being separately frustration-free is the central pillar for solving the sign problem, which can even do without using bi-fermions. Indeed, a general bi-fermion supported by a continuous or discrete configuration space $\mathbb{X}^2$ can be equivalently encoded into a pair of ordinary rebits, with the first ordinary rebit bearing the $0$ or $1$ logic value, and the second indicating a positive or negative sign for the quantum amplitude, such that a bi-fermion state $a|\!\downarrow\rangle+b|\!\uparrow\rangle$ with $a,b\in\mathbb{R}$ is encoded, namely, isophysically implemented, as $\phi\defeq a|\!\downarrow\rangle+b|\!\uparrow\rangle\IsoPhysLR\half(|a|+a)|00\rangle+\half(|a|-a)|10\rangle+\half(|b|+b)|01\rangle+\half(|b|-b)|11\rangle\defeq\mathfrak{M}(\phi)$, with the representation of $\mathfrak{M}(\phi)$ involving no negative real numbers. Essentially, the $\pi_{12}$ operator that acts on a bi-fermion and represents a negative sign is in turn mapped isophysically to the $\sigma^x=|1\rangle\langle 0|+|0\rangle\langle 1|$ operator acting on the sign-indicating ordinary rebit. Then, any single-bi-fermion-moving GSQC gate $h\defeq I-|\phi\rangle\langle\phi|$ is isophysically implemented as $\mathfrak{M}(h)=I-|\mathfrak{M}(\phi)\rangle\langle\mathfrak{M}(\phi)|$, which is stoquastic, and consequently, the associated Gibbs operator $\mathfrak{M}[G(\tau h)]\defeq\exp[{-}\tau\mathfrak{M}(h)]=e^{{-}\tau}I+(1-e^{{-}\tau})|\phi\rangle\langle\phi|$ for any $\tau\ge 0$ has all non-negative matrix elements when represented in terms of the four computational basis states $|00\rangle,|01\rangle,|10\rangle,|11\rangle$ of the conventional two-rebit system, so that the Markovian transition matrix $[\phi]\,G(\tau h)\,[\phi]^{{-}1}$ becomes a bona fide stochastic matrix in terms of the same basis states. Furthermore, and clearly, it is even possible to employ just one sign-indicating ordinary rebit shared by many ordinary rebits representing a collection of many bi-fermions, because, as being noted before, it is sufficient to restrict all computational basis states of a many-bi-fermion system to contain no more than one negatively configured bi-fermion at any given time, and it is permissible to transfer any $\pi_{12}$ operator freely from one bi-fermion to any other bi-fermion.

\iftoggle{ForUSPTO} {
\subsection*{\bf Universal Quantum Circuits Using Bi-Fermion Rebits} \label{UQCUBFR}
} {
\section{Universal Quantum Circuits Using Bi-Fermion Rebits} \label{UQCUBFR}
}
Having rigorously established the BPP solvability of SFF-FS and/or SFF-EB systems, also thoroughly discussed bi-fermions and interactions among them as examples of building blocks, it only remains to demonstrate how an SFF-FS or SFF-EB system with a designer Hamiltonian for universal GSQC can be constructed out of such bi-fermion building blocks, which is able to encode any BQP computation in the designer Hamiltonian, more specifically, its ground state. In terms of a general quantum system consisting of abstract computational rebits, such a construct has been well established and known as the Feynman-Kitaev construct of time-space circuit-Hamiltonian mapping \cite{Feynman85,Kitaev02,Mizel04,Kempe06,Biamonte08,Jordan10,Nagaj10,Breuckmann14}, which provides a recipe to compose a designer Hamiltonian $H_{\mathsmaller{\rm FK}}$, called a Feynman-Kitaev Hamiltonian, whose ground state encodes the entire computational history of any BQP circuit that may or may not be fault-tolerant, and is given as a series of self-inverse quantum gates $\{U(t)\}_{t=1}^{\mathsmaller{T}}$, where, $\forall t\in[1,T]\subseteq\mathbb{N}$, $U(t)$ is a unitary operator with real-valued entries such that $[U(t)]^{{-}1}=U(t)$, corresponding to and representing a quantum computational gate.

In the case of quantum error correction being employed in said BQP circuit, it is necessary to note that concatenated quantum fault-tolerant encoding induces a hierarchical structure consisting of multiple levels of relatively called {\em noisy} versus {\em coded} rebits/gates, where at each level of fault-tolerant encoding, multiple error-prone noisy rebits are used to encode a single coded rebit, which in turn may serve as a noisy rebit at the next level of hierarchy to encode a higher-level coded rebit. Also, at each such level of encoding hierarchy, every ``elementary'' gate operation on one or two coded rebits comprises, or is realized by, a series of gate operations on the constituent noisy rebits. Now it can be clarified that the aforementioned series of self-inverse quantum gates $\{U(t)\}_{t=1}^{\mathsmaller{T}}$ refer to quantum operations on the truly physical rebits at the lowest level of the encoding hierarchy, which are the atomic units and building blocks of all of the multi-leveled circuitry of quantum computing, state preparation/measurement, and error-correction encoding/decoding to realize said fault-tolerant BQP circuit.

In a general Feynman-Kitaev construct, the computational history of a possibly fault-tolerant BQP circuit as a series of self-inverse quantum gates $\{U_{\mathsmaller{L}}(t)\}_{t=1}^{\mathsmaller{T}}$ is ground-state-encoded using a so-called {\em clock register} consisting of $T+1$ {\em clock rebits} and a so-called {\em logic register} comprising $N$ {\em logic rebits}, $T\in\mathbb{N}$, $N\in\mathbb{N}$, with the combined system of two quantum registers living in a Hilbert space spanned by effective computational basis states of the form
$$|c_0c_1\cdots c_{\mathsmaller{T}}\rangle_{\mathsmaller{C}}|l_1\cdots l_{\mathsmaller{N}}\rangle_{\mathsmaller{L}} , \; c_t\in\{0,1\} , \; \forall t\in[0,T] , \; l_i\in\{0,1\} , \; \forall i\in[1,N] ,$$
and being governed by a Feynman-Kitaev Hamiltonian
\begin{equation}
H_{\mathsmaller{\rm FK}} \,\defeq\, H_{\rm clock} \,+\, H_{\rm init} \,+\, H_{\rm prop}^{\rm odd} \,+\, H_{\rm prop}^{\rm even} \,, \label{HFKtotal}
\end{equation}
with each of the four partial Hamiltonians on the right hand side being DFF as
\begin{align}
H_{\rm clock} \,\defeq\;& \itPi_{\mathsmaller{C},\,0}^+ \,+\, {\textstyle{ \scalebox{1.15}{$\sum$}_{t=1}^{\mathsmaller{T}} }}
\itPi^+_{\mathsmaller{C},\,t{-}1}\otimes \itPi^-_{\mathsmaller{C},\,t} \,, \label{HFKclock} \\[0.75ex]
H_{\rm init} \,\defeq\;& \itPi^+_{\mathsmaller{C},\,1}\otimes \scalebox{1.25}{(} \,{\textstyle{ \scalebox{1.15}{$\sum$}_{i\,:\,l'_i=0} }}\,Z^-_{\mathsmaller{L},\,i} \,+\, {\textstyle{ \scalebox{1.15}{$\sum$}_{j\,:\,l'_j=1} }}\,Z^+_{\mathsmaller{L},\,j} \scalebox{1.25}{)} \,, \label{HFKinit} \\[0.75ex]
H_{\rm prop}^{\rm odd} \,\defeq\;& {\textstyle{ \scalebox{1.15}{$\sum$} }}_{t=1}^{\lfloor(\mathsmaller{T}+1)/2\rfloor}\,H_{{\rm prop},\,2t-1} \,, \label{HFKpropOdd} \\[0.75ex]
H_{\rm prop}^{\rm even} \,\defeq\;& {\textstyle{ \scalebox{1.15}{$\sum$} }}_{t=1}^{\lfloor\mathsmaller{T}/2\rfloor}\,H_{{\rm prop},\,2t} \,, \label{HFKpropEven} \\[0.75ex]
H_{{\rm prop},\,t} \,\defeq\;& \itPi_{\mathsmaller{C},\,t{-}1}^-\otimes\itPi_{\mathsmaller{C},\,t{+}1}^+\otimes \scalebox{1.15}{[} I-\itGamma_{\mathsmaller{C},\,t}\otimes U_{\mathsmaller{L}}(t) \scalebox{1.15}{]} \,, \; \forall t\in[1,T] \,, \label{HFKpropt}
\end{align}
where $\forall x \in \mathbb{R}$, $\lfloor x\rfloor$ denotes the largest integer that is smaller than or equal to $x$, $\itPi^{\pm}_{\mathsmaller{C},\,t}\defeq\half(I\pm\itPi_{\mathsmaller{C},\,t})$, $\itPi_{\mathsmaller{C},\,t}$ and $\itGamma_{\mathsmaller{C},\,t}$ are single-rebit operators acting on the $t$-th clock rebit such that $\itPi_{\mathsmaller{C},\,t}|c_t\rangle_{\mathsmaller{C},\,t}=(1-2c_t)\,|c_t\rangle_{\mathsmaller{C},\,t}$, $\itGamma_t|c_t\rangle_{\mathsmaller{C},\,t}=|(1-c_t)\rangle_{\mathsmaller{C},\,t}$, $\forall c_t\in\{0,1\}$, $\forall t\in[0,T]$, with $\itPi^+_{\mathsmaller{C},\,t=\mathsmaller{T}+1}$ reducing to the identity when there is no rebit to operate upon, $Z^{\pm}_{\mathsmaller{L},\,i}\defeq\half(I\pm Z_{\mathsmaller{L},\,i})$ is a single-rebit operator acting on the $i$-th clock rebit such that $Z_{\mathsmaller{L},\,i}|l_i\rangle_{\mathsmaller{L},\,i}=(1-2l_i)\,|l_i\rangle_{\mathsmaller{L},\,i}$, $\forall l_i\in\{0,1\}$, $\forall i\in[1,N]$, while $U_{\mathsmaller{L}}(t)$ denotes the $t$-th quantum computational gate that operates on no more than two logic rebits, $H_{{\rm prop},t}$ as defined in equation (\ref{HFKpropt}) is called the $t$-th {\em Feynman-Kitaev propagator (FK-propagator)}, $\forall t\in[1,T]$. Here and after in this \iftoggle{ForUSPTO} {specification} {presentation}, a subscript ``$\mathsmaller{C}$'' indicates a clock rebit or the clock register, a state vector of a clock rebit or the clock register, or an operator acting on one or more clock rebits. By the same token, a subscript ``$\mathsmaller{L}$'' signifies a logic rebit or the logic register, or a state vector or operator associated with logic rebits or the logic register. Such subscript ``$\mathsmaller{C}$'' or ``$\mathsmaller{L}$'' may be omitted when there is no ambiguity as to whether a clock or logic rebit is referred to, especially when a universal index is employed to address all of the rebits uniquely. The partial Hamiltonian $H_{\rm clock}$ ensures that the so-called {\em domain wall clock} \cite{Nagaj10,Breuckmann14} states $\scalebox{1.1}{\{} |t\rangle_{\mathsmaller{C}}\defeq |1\rangle_{\mathsmaller{C}}^{\otimes(t{+}1)}|0\rangle_{\mathsmaller{C}}^{\otimes(\mathsmaller{T}{-}t)} \scalebox{1.1}{\}}_{t=0}^{\mathsmaller{T}}$ lie in and span the manifold of the lowest and zero-valued energy, $H_{\rm init}$ initializes the logic register to a prescribed initial state $|\phi_0\rangle_{\mathsmaller{L}}\defeq|l'_1\cdots l'_{\mathsmaller{N}}\rangle_{\mathsmaller{L}}$, $(l'_1,\cdots\!,l'_{\mathsmaller{N}})\in\{0,1\}^{\mathsmaller{N}}$ at ``time'' $t=0$ associated with and represented by the clock state $|(t=0)\rangle_{\mathsmaller{C}}\defeq|10\cdots 0\rangle_{\mathsmaller{C}}$, while the partial Hamiltonian $H_{{\rm prop},t}$, $\forall t\in[1,T]$ facilitates a transition of the whole Feynman-Kitaev construct between the state $|(t-1)\rangle_{\mathsmaller{C}}|\phi_{t{-}1}\rangle_{\mathsmaller{L}}$ at time $t{-}1$ and the state $|t\rangle_{\mathsmaller{C}}U_{\mathsmaller{L}}(t)|\phi_{t{-}1}\rangle_{\mathsmaller{C}}\defeq|t\rangle_{\mathsmaller{C}}|U_{\mathsmaller{L}}(t)\phi_{t{-}1}\rangle_{\mathsmaller{L}}$ at time $t$, through which, the quantum gate $U_{\mathsmaller{L}}(t)$ is effected on the logic register, such that, as $t\in\mathbb{Z}$ increases from $0$ to $T$, the sequence of unitarily time-evolving quantum states $\{|\phi_t\rangle_{\mathsmaller{L}}:t\in[1,T]\}\subseteq\calH(\{0,1\}^{\mathsmaller{N}})$ are defined and generated as $|\phi_t\rangle_{\mathsmaller{L}}\defeq U_{\mathsmaller{L}}(t)|\phi_{t{-}1}\rangle_{\mathsmaller{L}}$ for all $t\in[1,T]$ recursively.

For convenience in referring to the partial Hamiltonians, let ${\sf H}_1 \defeq H_{\rm clock}$, ${\sf H}_2 \defeq H_{\rm init}$, ${\sf H}_3 \defeq H_{\rm prop}^{\rm odd}$, ${\sf H}_4 \defeq H_{\rm prop}^{\rm even}$, therefore, $H_{\mathsmaller{\rm FK}} = \sum_{k=1}^4 {\sf H}_k$. The partial Hamiltonians ${\sf H}_1$ and ${\sf H}_2$ are straightforwardly DFF. It is precisely in order to render both ${\sf H}_3$ and ${\sf H}_4$ DFF that the set of FK-propagators $\{H_{\rm prop,\,t} : t \in [1,T]\}$ are split into the two of them. Define $\size(H_{\mathsmaller{\rm FK}}) \defeq T+N+1$. As will be demonstrated below, when $\tau \in (0,\infty)$, $\tau = \Theta(\poly(\size(H_{\mathsmaller{\rm FK}})))$ is sufficiently large, the Gibbs operators $\exp(-\tau H_{\mathsmaller{\rm FK}})$ and $\{{\sf G}_k \defeq \exp(-\tau{\sf H}_k)\}_{k \in [1,4]}$ are all essentially ground state projectors, and a process of iterating the four Gibbs operators $\{{\sf G}_k\}_{k \in [1,4]}$, or the four similarity transformed Markov counterparts $\{{\sf M}_k \defeq [\psi_0({\sf H}_k)]\exp(-\tau{\sf H}_k)[\psi_0({\sf H}_k)]^{-1}\}_{k \in [1,4]}$, mixes rapidly and converges to the ground state projector $\exp(-\tau H_{\mathsmaller{\rm FK}})$ or its Markov counterpart. All combined, the Feynman-Kitaev Hamiltonian $H_{\mathsmaller{\rm FK}} = \sum_{k=1}^4 {\sf H}_k$ is SFF, which confines the system of two quantum registers to within a ground space spanned by the states $\left\{|t\rangle_{\mathsmaller{C}}|\phi_t\rangle_{\mathsmaller{L}}:t\in[0,T]\right\}$, called the {\em Feynman-Kitaev history states}, and links these Feynman-Kitaev history states into a one-dimensional {\em Feynman-Kitaev lattice}, with each lattice site being indexed by a $t \in [0,T]$ and called the $t$-th Feynman-Kitaev site, or the Feynman-Kitaev time $t$. On the Feynman-Kitaev lattice, the Hamiltonian $H_{\mathsmaller{\rm FK}}$ is unitarily similar and equivalent to a tridiagonal matrix \cite{Kitaev02,Mizel04,Kempe06,Biamonte08,Jordan10,Nagaj10,Breuckmann14}
\begin{align}
& {\textstyle{\sum_{t\,\in\,\{0,\,\mathsmaller{T}\}}}}\,|t\rangle\langle t| \,+\, 2\,{\textstyle{\sum_{t\,=\,1}^{\mathsmaller{T}-1}}}\,|t\rangle\langle t| \,-\, {\textstyle{\sum_{t\,=\,1}^{\mathsmaller{T}}}}\,|t\rangle\langle(t-1)| \,-\, {\textstyle{\sum_{t\,=\,1}^{\mathsmaller{T}}}}\,|(t-1)\rangle\langle t| \nonumber \\[0.75ex]
& \;\;\;\;\;\;\;\;\;\;\;\;=\;\left[\begin{array}{rrrrr}
 1 &     -1 &        &        &    \\
-1 &      2 &     -1 &        &    \\
   & \ddots & \ddots & \ddots &    \\
   &        &     -1 &      2 & -1 \\
   &        &        &     -1 &  1
\end{array}\right]_{(\mathsmaller{T}+1)\times(\mathsmaller{T}+1)} \,, \label{FeynmanKitaevLattice}
\end{align}
which has
\begin{equation}
\left\{\lambda_k \,\defeq\, 2-2\cos\!\left(\frac{\pi k}{T+1}\right)\right\}_{k=0}^{\mathsmaller{T}}\;\;\;\mbox{and}\;\;\;\left\{u_k(t) \,\defeq\, \cos\!\left[\frac{\pi k(2t+1)}{2(T+1)}\right]\right\}_{k=0}^{\mathsmaller{T}} \label{OneDimLatticeSolutions}
\end{equation}
as eigenvalues and the corresponding eigenvectors respectively \cite{Kitaev02}.
Obviously, the Hamiltonian $H_{\mathsmaller{\rm FK}}$ is polynomially gapped, having $0$ and $2-2\cos[\pi/(T{+}1)]=\Omega\!\left(T^{\mathsmaller{-}2}\right)$ as the lowest and second lowest eigenvalues. The unique ground state of $H_{\mathsmaller{\rm FK}}$ is given by
\begin{align}
\sqrt{T{+}1}\,|\psi_0(H_{\mathsmaller{\rm FK}})\rangle
\,=\,\;& |10\cdots 00\rangle_{\mathsmaller{C}} |l_1\cdots l_{\mathsmaller{N}}\rangle_{\mathsmaller{L}} \nonumber \\[0.75ex]
\,+\,\;& |11\cdots 00\rangle_{\mathsmaller{C}} U_{\mathsmaller{L}}(1)|l_1\cdots l_{\mathsmaller{N}}\rangle_{\mathsmaller{L}} \nonumber \\[0.75ex]
\,+\,\;& \cdots \label{Psi0HFKlong} \\[0.75ex]
\,+\,\;& |11\cdots 10\rangle_{\mathsmaller{C}} U_{\mathsmaller{L}}(T{-}1)\cdots U_{\mathsmaller{L}}(1)|l_1\cdots l_{\mathsmaller{N}}\rangle_{\mathsmaller{L}} \nonumber \\[0.75ex]
\,+\,\;& |11\cdots 11\rangle_{\mathsmaller{C}} U_{\mathsmaller{L}}(T)U_{\mathsmaller{L}}(T{-}1)\cdots U_{\mathsmaller{L}}(1)|l_1\cdots l_{\mathsmaller{N}}\rangle_{\mathsmaller{L}} \,, \nonumber
\end{align}
or in a more compact form using shorthand notations,
\begin{equation}
\sqrt{T{+}1}\,|\psi_0(H_{\mathsmaller{\rm FK}})\rangle \,=\, {\textstyle{ \scalebox{1.15}{$\sum$}_{t=0}^{\mathsmaller{T}} }}\, |t\rangle_{\mathsmaller{C}} \scalebox{1.1}{[}\,{\textstyle{ \prod_{\tau=1}^t }}U_{\mathsmaller{L}}(\tau)\scalebox{1.1}{]}|l_1\cdots l_{\mathsmaller{N}}\rangle_{\mathsmaller{L}} \,. \label{Psi0HFKshort}
\end{equation}
Being able to sample from $|\psi_0(H_{\mathsmaller{\rm FK}})\rangle$ implies that the {\em result state} $|\phi_*\rangle_{\mathsmaller{L}}\defeq U_{\mathsmaller{L}}(T)\cdots U_{\mathsmaller{L}}(1)|l_1\cdots l_{\mathsmaller{N}}\rangle_{\mathsmaller{L}}$ $=|\phi_{\mathsmaller{T}}\rangle_{\mathsmaller{L}}$ is amenable to efficient sampling, with the overhead factor $T+1$ being polynomially bounded. Better yet, a quantum circuit can be so arranged to have a number $T'=\Theta(T)$ of identity gates padded \cite{Jordan10,Nagaj10,Breuckmann14} after the last nontrivial gate $U_{\mathsmaller{L}}(T)$ for a required state transformation to generate $|\phi_*\rangle_{\mathsmaller{L}}=|\phi_{\mathsmaller{T}}\rangle_{\mathsmaller{L}}$ as the result of the desired quantum computation, so that the ground state of the Feynman-Kitaev construct contains $T'+1$ identical copies of the result state $|\phi_*\rangle_{\mathsmaller{L}}$. With such padding of identity gates, there is a boosted probability of $(T'+1)/(T+T'+1)=\Omega(1)$ to obtain a copy of the result state $|\phi_*\rangle_{\mathsmaller{L}}$ by sampling from the ground state of the Feynman-Kitaev construct. Depending on how the computational complexity $\Cost(H_{\mathsmaller{\rm FK}},T,T')$ of generating a sample from $|\psi_0(H_{\mathsmaller{\rm FK}})\rangle$ scales as the numbers of gates $T$ and $T'$ increase, the parameter $T'$ can be optimized so that the cost $\Cost(H_{\mathsmaller{\rm FK}},T,T')\times(T+T'+1)/(T'+1)$ of producing a sample from the result state $|\phi_*\rangle_{\mathsmaller{L}}$ is minimized. Throughout this \iftoggle{ForUSPTO} {specification} {presentation},
unless explicitly stated otherwise, it is always assumed that a Feynman-Kitaev construct is identity-gate-padded properly, and the variable $T$ is redefined to represent the total number of gates in the identity-gate-padded Feynman-Kitaev construct, having an optimal portion of the quantum gates being the identity operator to duplicate the computational result.

As usual, let $X_i\defeq\sigma^x_i$, $X_i^{\pm}\defeq\half(I\pm X_i)$, $Z_i\defeq\sigma^z_i$, $Z_i^{\pm}\defeq\half(I\pm Z_i)$, and $R_i(\theta)\defeq X_i\sin\theta+Z_i\cos\theta$, $\theta\in[-\pi,\pi)$, $R_i^{\pm}(\theta)\defeq\half[I\pm R_i(\theta)]$ denote the Pauli matrices as single-rebit operators acting on a general computational rebit addressed by a universal index $i\in[0,T+N]$, which may be either an $i$-th clock rebit when $i\in[0,T]$, or an $(i-T)$-th logic rebit when $i\in[T+1,T+N]$. Define shorthand notations for the {\em single-rebit-controlled single-rebit-transforming} controlled-$R/X/Z$ gates as
\begin{align}
R_{ij}(\theta) \,\defeq\,& Z^+_i+Z^-_i\otimes R_j(\theta) \,=\, Z^+_i+Z^-_i\otimes(X_j\sin\theta+Z_j\cos\theta), \\[0.75ex]
X_{ij} \,\defeq\,& Z^+_i+Z^-_i\otimes X_j \,=\, R_{ij}(\pi/2), \\[0.75ex]
Z_{ij} \,\defeq\,& Z^+_j+Z^-_i\otimes Z_j \,=\, R_{ij}(0),
\end{align}
with $i,j\in[0,T+N]$ indexing the clock or logic rebits. It is easily verified that $R_j(\theta)$, $X_j$, $Z_j$, $R_{ij}(\theta)$, $X_{ij}$, and $Z_{ij}$ are all self-inverse operators. It is also known that matrix-entry-wise real-valued quantum gates operating on real-valued wavefunctions can implement any quantum circuits \cite{Bernstein97,Rudolph02,Biamonte08} in the sense of homophysics, and the composite gate $R_{ij}(\theta)X_{ij}$ or $R_{ij}(\theta)Z_{ij}$, $i,j\in[T+1,T+N]$ is already universal for any single fixed $\theta\notin\pi\mathbb{Q}$ \cite{Rudolph02}. Therefore, it is more than sufficient to consider an arbitrary BQP algorithm implemented as a fault-tolerant quantum circuit using self-inverse operators \cite{Biamonte08} from a set of at-most-two-rebit gates ${\bf 2RG}\defeq{\bf 1RG}\cup\{R_{ij}(\theta),X_{ij},Z_{ij}:\theta\in\pi\itTheta,\,i,j\in[T+1,T+N]\}$, with ${\bf 1RG}\defeq\{I\}\cup\{R_j(\theta),X_j,Z_j:\theta\in\pi\itTheta,\,j\in[T+1,T+N]\}$, and $\itTheta\subseteq\mathbb{R}$ being any set of real number(s) that contains at least one irrational number, where each gate operation does not have to be perfect, but is allowed and assumed to err independently, so long as the probability of error is below a constant threshold set by an employed scheme of quantum fault-tolerant encoding \cite{Shor95,Steane96,Steane97,Aharonov96,Kitaev97a,Kitaev97b,Knill98,Aliferis06}, or without the use of quantum error correction, as long as the rate of error per gate is lower than a polynomial bound, such that the cumulative probability of errors being treated as hard failures will not become unacceptably too large even after the entire sequence of a polynomial number of quantum gates implementing the BQP algorithm.

Let $(\calC,\calH,\calB)$ be a quantum physics/system representing the Feynman-Kitaev construct of abstract computational rebits, where $\calC\defeq\{0,1\}^{\mathsmaller{T}+\mathsmaller{N}+1}$ is a discrete configuration space, $\calH\defeq\calH(\calC)\subseteq\mathbb{R}^{2^{\mathsmaller{T}+\mathsmaller{N}+1}}$ is a Hilbert space supported by $\calC$, and $\calB\defeq\calB(\calH)$ is a Banach algebra of operators over the field $\mathbb{R}$, which contains all of the semigroups $\{\exp({-}\tau{\sf H}_k):\tau\in[0,\infty),k\in[1,4]\}$ generated by the partial Hamiltonians of $\{{\sf H}_k\}_{k \in [1,4]}$, as well as the semigroup $\{\exp({-}\tau H_{\mathsmaller{\rm FK}}):\tau\in[0,\infty)\}$ generated by $H_{\mathsmaller{\rm FK}}=\sum_{k=1}^4{\sf H}_k$. It is straightforward to have each abstract computational rebit implemented into, or represented by, a concrete bi-fermion as being constructed and described in the previous section with great details, so to effect a homophysics $\mathfrak{M}:(\calC,\calH,\calB)\mapsto(\calC',\calH',\calB')$ from the quantum system $(\calC,\calH,\calB)$ into a many-fermion system $(\calC',\calH',\calB')$, where each of the image operators $\{\mathfrak{M}({\sf H}_k)\}_{k \in [1,4]}$ is fermionic Schr\"odinger, or essentially bounded, or both fermionic Schr\"odinger and essentially bounded, mostly importantly, as well as DFF, because each of the preimages $\{{\sf H}_k\}_{k \in [1,4]}$ is DFF by design. The above-mentioned process of iterating the Gibbs operators $\{{\sf G}_k\}_{k \in [1,4]}$ or the four Markov operators $\{{\sf M}_k\}_{k \in [1,4]}$ that mixes rapidly and converges to the ground state projector $\exp(-\tau H_{\mathsmaller{\rm FK}})$ or its Markov counterpart is homophysically mapped to a process of iterating the four Gibbs operators $\{\mathfrak{M}({\sf G}_k)\}_{k \in [1,4]}$ or the four Markov operators $\{\mathfrak{M}({\sf M}_k)\}_{k \in [1,4]}$ that mixes rapidly and converges to the ground state projector $\exp(-\tau\mathfrak{M}(H_{\mathsmaller{\rm FK}}))$ or its Markov counterpart. Therefore, the homophysical Feynman-Kitaev Hamiltonian $\mathfrak{M}(H_{\mathsmaller{\rm FK}})$ is SFF-FS, or SFF-EB, or both SFF-FS and SFF-EB thus SFF-DU. Such $\mathfrak{M}(H_{\mathsmaller{\rm FK}})$ is always amenable to efficient simulations via Monte Carlo on a classical computer.

It is important and interesting to note that, as building blocks in homophysical implementations of GSQC systems, particularly the Feynman-Kitaev constructs, the fermionic Schr\"odinger-oriented bi-fermions and the quantum gate operations on them, may be inherently imperfect and prone to errors and deviations, which necessitate considerations to control and correct them so to prevent their accumulation into a disastrous failure, similar to the need of minimizing qubit and gate error rates and employing quantum error correction in traditional experimental realizations of quantum computing devices. However, unlike the enterprise of experimentally building an actual and physical device for quantum computing, here for MCQC with theoretical and numerical implementations of a quantum circuit using theoretical and numerically simulated bi-fermions, the rebit and gate errors may be by design as a matter of principle, which can be made arbitrarily small easily and cost-effectively, to the limit of completely vanishing, also as a matter of principle. Specifically, the rate of leakage error $\Tr(\sfE_{\rm leak}(\gamma_0))$ per bi-fermion either can be diminished completely by using an essentially bounded only implementation employing $\mathbb{L}_3$ as the substrate space to support bi-fermions, or if the fermionic Schr\"odinger property is desired, then bi-fermions supported by the substrate spaces $\mathbb{X}=\mathbb{T}$ or $\mathbb{X}=\mathbb{L}^{2n}$, $n\ge 2$ can be employed, for which the rate of leakage error $\Tr(\sfE_{\rm leak}(\gamma_0))$ can be made as small as $\Tr(\sfE_{\rm leak}(\gamma_0))=O\!\left(\gamma_0^{{-}3}\log^3\gamma_0\right)$ for $\mathbb{X}=\mathbb{T}$ or $\mathbb{X}=\mathbb{L}^{2n}$ when $n$ is large, or as small as $\Tr(\sfE_{\rm leak}(\gamma_0))=O\!\left(\gamma_0^{{-}2}\right)$ for $\mathbb{X}=\mathbb{L}^{2n}$ in the limit of $n\rightarrow 2$, namely, the rate of leakage error can be rendered polynomially close to zero,, by choosing the parameter $\gamma_0>0$ sufficiently large but bounded by a polynomial of $\size(H_{\mathsmaller{\rm FK}})$, at the price of a polynomially increased time complexity to simulate such a system via a classical Monte Carlo.

As already been mentioned previously, there are two broadly categorized strategies to deal with the $\Tr(\sfE_{\rm leak}(\gamma_0))$ probability of error associated with each application of a Feynman-Kitaev gate during MCQC. The first strategy is brute-force, treating any instance of such error as a hard failure, upon which an ongoing Monte Carlo simulation is simply aborted, and a new simulation is restarted from the very beginning. At the most it requires a $\gamma_0 = O(\poly(\size(H_{\mathsmaller{\rm FK}})))$ to make sure that the total probability of such a hard failure taking place is less than $1/3$, over the course of an entire MCQC procedure consisting of an $O(\poly(\size(H_{\mathsmaller{\rm FK}})))$ number of Feynman-Kitaev gates, so that said MCQC procedure has a larger than $2/3$ probability to succeed, at the price of an $O(\poly(\size(H_{\mathsmaller{\rm FK}})))$ complexity overhead for simulating a single Feynman-Kitaev gate. The second strategy takes advantage of the well developed methods of quantum error correction, either employing techniques known as subsystem-, operator-, or Hamiltonian-encoding \cite{Kribs05a,Kribs05b,Bacon06,Jordan06} that suppress qubit errors using energy penalties, or more fundamentally and systematically, building the Feynman-Kitaev construct from a quantum circuit $\{U_{\mathsmaller{L}}(t)\}_{t=1}^{\mathsmaller{T}}$ that has quantum error correction and fault tolerance via concatenated encoding \cite{Shor95,Steane96,Steane97,Aharonov96,Kitaev97a,Kitaev97b,Knill98,Aliferis06} built in as being explained previously in this section. Then, by the celebrated quantum threshold theorem (also known as the quantum fault-tolerance theorem), the built-in quantum fault tolerance is able to correct and clean up all of the errors and detrimental effects that take the form of an inaccurate or wrong state in a logic rebit, or a total loss of a rebit (namely, leaking out of its computational basis space), or an inaccurate or wrong logic gate operation leading to such a logic rebit error, at a computational cost of no more than a polylogarithmic overhead factor to the time and space complexities, provided that the error probability $\Tr(\sfE_{\rm leak}(\gamma_0))$ is below a certain constant threshold \cite{Shor95,Steane96,Steane97,Aharonov96,Kitaev97a,Kitaev97b,Knill98,Aliferis06}.

Using any of the Algorithms \ref{MCSimuDisc} through \ref{MCSimuNoCoPathInt} with the polynomial Lie-Trotter-Kato decomposition of (\ref{LieTrotterKatoBounds}), each of the four ground state projectors $\{{\sf G}_k\}_{k \in [1,4]}$ or $\{{\sf M}_k\}_{k \in [1,4]}$ can be Monte Carlo simulated at the cost of a polynomial-bounded computational complexity. Then using Algorithm \ref{MCSimuSFF}, the ground state projector $\exp(-\tau\mathfrak{M}(H_{\mathsmaller{\rm FK}}))$ or $[\psi_0(\mathfrak{M}(H_{\mathsmaller{\rm FK}}))] \exp(-\tau\mathfrak{M}(H_{\mathsmaller{\rm FK}})) [\psi_0(\mathfrak{M}(H_{\mathsmaller{\rm FK}}))]^{-1}$, $\tau \in (0,\infty)$, $\tau = \Theta(\poly(\size(H_{\mathsmaller{\rm FK}})))$ can be simulated by simulating the sequence $\{{\sf G}_k\}_{k \in [1,4]}$ or $\{{\sf M}_k\}_{k \in [1,4]}$ repeatedly for a polynomial-bounded number of times.

\begin{theorem}{(A Second {\myTheorem} of Monte Carlo Quantum Computing)}\\
Any BQP algorithm can be polynomially reduced to the computational problem of simulating an SFF-DU Hamiltonian, with the polynomial reduction also guaranteeing warm starts.
\label{SecondTheorem}
\end{theorem}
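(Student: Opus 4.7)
The plan is to assemble the pieces already laid down in the excerpt into the required polynomial reduction, using the Feynman-Kitaev time-space circuit-Hamiltonian mapping as the backbone. First, I would take an arbitrary BQP algorithm and, by the quantum threshold theorem \cite{Aharonov96,Kitaev97a,Kitaev97b,Knill98,Aliferis06}, realize it as a fault-tolerant quantum circuit $\{U_{\mathsmaller{L}}(t)\}_{t=1}^{\mathsmaller{T}}$, $T=O(\poly(n))$, of self-inverse gates drawn from the universal set $\bfU_1\cup\bfU_2\subseteq{\bf 2RG}$ acting on $N=O(\poly(n))$ logic rebits, where $n$ is the input size. I would append $T'=\Theta(T)$ identity gates (or, equivalently, insert an $O(\log T)$ number of amplifying Feynman-Kitaev propagators of the form (\ref{HFKproptMizel})) to boost the probability of sampling the result state $|\phi_*\rangle_{\mathsmaller{L}}$ to $\Omega(1)$.

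Next, I would construct the Feynman-Kitaev Hamiltonian $H_{\mathsmaller{\rm FK}} = H_{\rm clock} + H_{\rm init} + \sum_{t=1}^{\mathsmaller{T}} H_{{\rm prop},t}$ of (\ref{HFKtotal}--\ref{HFKpropt}), partition the clock indices into $\bfT_1$ (simple gates in $\bfU_1$) and $\bfT_2=[1,T]\setminus\bfT_1$ (genuine $R(\theta)$-gates), and define the homophysics $\mathfrak{M}$ by mapping each logic rebit and each clock rebit at $t\in\bfT_1$ to a single bi-fermion, and each clock rebit at $t\in\bfT_2$ to an FCC cluster of $n_0$ bi-fermions via any one of the constructions (\ref{XoperFCC1}--\ref{ZjointFCC1}), (\ref{XoperFCC2}--\ref{ZjothalfFCC2}), or (\ref{XoperFCC3}--\ref{ZjointFCC3}). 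Invoking \myLemma\;\ref{DoubUnivNodeDeterOfFCCXZ}, \myCorollary\;\ref{NodeDeterFccCtrlGates}, \myLemma\;\ref{NodeDeterP123}, and \myLemma\;\ref{SFFHamilFromP123}, I would verify that each $\mathfrak{M}(H_{{\rm prop},t})$ decomposes into a direct combination of at most two $P_{123}$-controlled-$U$ tensor monomials of the form (\ref{P123CtrlR}), that every additive term in $\mathfrak{M}(H_{\rm clock}+H_{\rm init})$ is of the same form, and that each such term is non-negative, efficiently computable, and $O(\epsilon_{\mathsmaller{FCC}}(n_0)+n_0\Tr(\sfE_{\rm leak}(\gamma_0)))$-almost node-determinate, simultaneously realizable as fermionic Schr\"odinger and essentially bounded, hence SFF-DU.

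I would then calibrate the free parameters $\gamma_0$ and $n_0$, both polynomial in $\size(H_{\mathsmaller{\rm FK}}) = O(n_0 T + N)$, so that $\epsilon_{\mathsmaller{FCC}}(n_0) + n_0\Tr(\sfE_{\rm leak}(\gamma_0)) = O(\size(H_{\mathsmaller{\rm FK}})^{-\xi})$ for an arbitrarily chosen gauge $\xi>0$; by Definition \ref{defiSFFH} the resulting $\mathfrak{M}(H_{\mathsmaller{\rm FK}}) = \sum_k H'_k$ is then SFF-DU, with spectral gap at least $\Omega(T^{-2})$ by the tridiagonal representation (\ref{FeynmanKitaevLattice})--(\ref{OneDimLatticeSolutions}). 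The warm start is delivered for free by the initial Feynman-Kitaev history state $|0\rangle_{\mathsmaller{C}}|\phi_0\rangle_{\mathsmaller{L}}$, which is an explicit product of known computational basis states, so any configuration point $q_0$ with all bi-fermions placed away from their nodal regions has $|\log\psi_0(\mathfrak{M}(H_{\mathsmaller{\rm FK}});q_0)|$ polynomially bounded, satisfying the warm-start hypothesis of \myTheorem\;\ref{FirstTheorem}.

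The main obstacle I anticipate is not the construction itself, whose ingredients are all in place, but controlling the residual errors that the circuit-level fault tolerance does not automatically correct, namely clock-register leakage, misidentification of clock rebits, and possible imbalance of probability amplitude along the Feynman-Kitaev chain. Misidentifications far from the domain wall are handled by majority voting among neighboring clock rebits; those at the domain wall propagate into logic errors that are absorbed by the circuit's built-in fault tolerance; amplitude non-uniformity is countered by the $O(\log T)$ amplifying Feynman-Kitaev propagators (\ref{HFKproptMizel}), which preserve the SFF-DU property while concentrating weight on the result state. The entire reduction is polynomial in $\size(H_{\mathsmaller{\rm FK}})$, completing the proof plan.
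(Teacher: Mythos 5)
Your proposal follows essentially the same route as the paper's own proof: realize the BQP algorithm as a fault-tolerant self-inverse circuit, build the Feynman-Kitaev Hamiltonian (with optional amplifying propagators), homophysically implement clock and logic rebits as bi-fermions and FC clusters so that \myLemma\;\ref{NodeDeterP123} and \myLemma\;\ref{SFFHamilFromP123} yield the SFF-DU property, and extract the warm start from the $t=0$ history state. The paper's proof is more compressed but identical in substance; your version only adds explicit parameter calibration and a side discussion of residual clock-register errors, both of which already appear verbatim in the exposition preceding the theorem.
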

\vspace{-3.0ex}
\begin{proof}[\iftoggle{ForUSPTO} {Demonstration} {Proof}]
Firstly, if desired, any BQP circuit that executes a prescribed BQP algorithm using a number $T_0$ of quantum gates from a fixed universal set of few-body-moving operators on $N_0$ rebits can be transformed into a fault-tolerant quantum circuit comprising a number $T=O((T_0{+}N_0)\poly(\log(T_0{+}N_0)))$ of {\em multi-rebit-controlled single-rebit-transforming} gates $\{U(t)\}_{t=1}^{\mathsmaller{T}}$ on $N=O((T_0{+}N_0)\poly(\log(T_0{+}N_0)))$ rebits, with the built-in quantum fault-tolerance ensuring successful completion of quantum computation even under gate and rebit errors, provided that the rate of errors per gate and rebit is below a threshold. Or if no quantum error correction is built into the transformed quantum circuit $\{U(t)\}_{t=1}^{\mathsmaller{T}}$, then it is only necessary to set $N=N_0$ and $T=O(T_0)$, where a constant overhead $T/T_0=O(1)$ may be incurred to decompose each of the quantum gates used by the original BQP circuit into an $O(1)$ number of the multi-rebit-controlled single-rebit-transforming gates in $\{U(t)\}_{t=1}^{\mathsmaller{T}}$, as well to pad identity gates at the end of the Feynman-Kitaev construct for boosting the probability of measuring the computational result.

Next, a Feynman-Kitaev construct can be generated for the transformed quantum circuit $\{U(t)\}_{t=1}^{\mathsmaller{T}}$, with a Hamiltonian $H_{\mathsmaller{\rm FK}}$ as defined in equations (\ref{HFKtotal}) through (\ref{HFKpropt}), such that the ground state $\psi_0(H_{\mathsmaller{\rm FK}})$ as a superposition of Feynman-Kitaev history states contains both a component that represents the initial state to start the BQP algorithm and one or more components which encodes the solution to the original computational problem, with each of such solution-encoding components carrying an $\Omega((T{+}1)^{{-}1/2})$-sized probability amplitude. When the Feynman-Kitaev construct is implemented into a GSQC system of bi-fermions through a homophysics $\mathfrak{M}$, the ground state of the homophysical image $\mathfrak{M}(H_{\mathsmaller{\rm FK}})$ can be obtained by iterating the four Gibbs operators $\{\mathfrak{M}({\sf G}_k)\}_{k \in [1,4]}$ or the four Markov operators $\{\mathfrak{M}({\sf M}_k)\}_{k \in [1,4]}$ in sequence for an $O(\poly(\size(H_{\mathsmaller{\rm FK}})))$ times. Each partial Hamiltonian $\mathfrak{M}({\sf H}_k)$, $k \in [1,4]$ is DFF as a polynomial Lie-Trotter-Kato decomposition into a number $O(T_0{+}N_0)$ of $O(1/\poly(T_0{+}N_0))$-almost node-determinate FBM tensor monomials. A warm start is readily obtained by sampling a configuration point from the initial Feynman-Kitaev history state with $t=0$.

It only remains to show that iterating the ground state projectors $\{{\sf G}_k\}_{k \in [1,4]}$ or $\{{\sf M}_k\}_{k \in [1,4]}$ in sequence per Algorithm \ref{MCSimuSFF} does mix rapidly and converge to the ground state projector $\exp(-\tau H_{\mathsmaller{\rm FK}})$ or its Markov counterpart polynomially fast, therefore the same holds true for and with the homophysical images. To that end, it is noted that repeated applications of the $\{{\sf G}_k\}_{k \in [1,4]}$ or $\{{\sf M}_k\}_{k \in [1,4]}$ operators ensure that the Feynman-Kitaev construct remains in the low-energy manifold spanned by the Feynman-Kitaev history states, using which as basis states each of the DFF operators ${\sf H}_3$ and ${\sf H}_4$ is represented as a particular form of tridiagonal matrix, that is a direct sum of $2\times 2$ matrices, each $2\times 2$ matrix mixing the quantum probability amplitudes between two adjacent Feynman-Kitaev sites. Iterating the sequence of operators $\{{\sf G}_k\}_{k \in [1,4]}$ or $\{{\sf M}_k\}_{k \in [1,4]}$ constitutes a time-inhomogeneous Markov chain \cite{Seneta81,Winkler03,Stroock05}, where an effective transition matrix ${\sf T}_1$ per each iteration, defined as ${\sf T}_1 \defeq \prod_{=1}^4{\sf G}_k$, or ${\sf T}_1 \defeq \prod_{=1}^4{\sf M}_k$, is obviously an irreducible matrix, so that the time-inhomogeneous Markov chain is bound to converge to the unique steady state in which all Feynman-Kitaev sites have an equal probability amplitude, corresponding to the Feynman-Kitaev ground state of equation (\ref{Psi0HFKlong}). Further, the ${\sf T}_1$ matrix corresponds to a linear graph on the Feynman-Kitaev lattice, where the Feynman-Kitaev sites are the vertices, and between each pair of adjacent Feynman-Kitaev sites there is an edge associated with a matrix element that is lower-bounded by $\Omega(T^{-1})$. In other words, the graph has an inverse polynomial-bounded {\em conductance} or {\em Cheeger constant}, which ensures an inverse polynomial-bounded spectral gap for the ${\sf T}_1$ matrix by the celebrated Cheeger inequalities \cite{Cheeger70,Jerrum89,Chung97,Levin08}, which in turn guarantees that the time-inhomogeneous Markov chain of interest converges polynomially fast.
\end{proof}

\begin{theorem}{(A Third {\myTheorem} of Monte Carlo Quantum Computing)}\\
{\rm BQP}$\,\subseteq\,${\rm BPP}, therefore {\rm BPP}$\,=\,${\rm BQP}.
\label{ThirdTheorem}
\end{theorem}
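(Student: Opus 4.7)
The plan is to assemble the result as an essentially immediate corollary of \myTheorem\;\ref{FirstTheorem} and \myTheorem\;\ref{SecondTheorem}, with the remaining work being to verify that the chain of polynomial reductions preserves the bounded-error guarantee and that a BPP-simulation of the ground state of the designer Hamiltonian can indeed extract the BQP output. First I would invoke the well-known containment {\rm BPP}$\,\subseteq\,${\rm BQP}, so that the only nontrivial direction is {\rm BQP}$\,\subseteq\,${\rm BPP}. Take an arbitrary language $L\in{\rm BQP}$ and, on an input of size $n$, realize its deciding quantum circuit as a sequence $\{U_{\mathsmaller{L}}(t)\}_{t=1}^{\mathsmaller{T}}$ of at-most-two-rebit self-inverse gates drawn from $\bm{2RG}$, with identity-gate padding appended to boost the marginal probability of the result register at times $t$ near $T$ to an $\Omega(1)$ fraction, and, if desired, with fault-tolerant concatenated encoding built in to suppress any gate-level noise below the quantum threshold.

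Next I would apply \myTheorem\;\ref{SecondTheorem} to polynomially reduce this circuit to the computational problem of simulating the Feynman-Kitaev Hamiltonian $\mathfrak{M}(H_{\mathsmaller{\rm FK}})$ on a system of bi-fermions and FCC rebits, which is SFF-DU, i.e., both SFF-FS and SFF-EB, as constructed in the preceding section. The reduction simultaneously supplies a warm start $q_0\in\calU^c(\mathfrak{M}(H_{\mathsmaller{\rm FK}}))$: one simply samples a coordinate consistent with the initial Feynman-Kitaev history state $|0\rangle_{\mathsmaller{C}}|\phi_0\rangle_{\mathsmaller{L}}$, whose log-amplitude at any such $q_0$ is polynomially bounded by the initialization structure of $H_{\rm init}$ and $H_{\rm clock}$. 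I would then invoke \myTheorem\;\ref{FirstTheorem} (with the warm-start clause) to conclude that a Monte Carlo procedure, chosen from any of Algorithms \ref{MCSimuDisc} through \ref{MCSimuNoCoPathInt}, produces in time $O(\poly(\size(\mathfrak{M}(H_{\mathsmaller{\rm FK}}))))$ a polynomial number of well-mixed configuration samples distributed within total variation distance $\varepsilon=\Omega(1/\poly(n))$ of $|\psi_0(\mathfrak{M}(H_{\mathsmaller{\rm FK}}));q|^2dV_g(q)$.

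Finally I would verify the readout step: by the Feynman-Kitaev construction of equation (\ref{Psi0HFKshort}) together with the identity-gate padding, an $\Omega(1)$ fraction of the total ground-state probability lies in the support of history states $|t\rangle_{\mathsmaller{C}}|\phi_*\rangle_{\mathsmaller{L}}$ that encode the BQP output $|\phi_*\rangle_{\mathsmaller{L}}$. Each Monte Carlo sample therefore yields, with $\Omega(1)$ probability, a bi-fermion configuration on which parsing the clock and logic registers (via the coordinate discriminants already used to implement the $\{\sfX^{\delta},\sfZ^{\delta},Z^{\delta}\}$ measurements) returns a bit string distributed according to the BQP output within $O(\varepsilon)$ in total variation. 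A polynomial number of repetitions followed by majority voting then decides $L$ with error strictly below $1/3$, placing $L\in{\rm BPP}$. Combining this with ${\rm BPP}\subseteq{\rm BQP}$ yields ${\rm BPP}={\rm BQP}$.

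I do not anticipate a genuine obstacle beyond bookkeeping: the SFF-DU property, the local node-determinacy, the warm start, and the polynomial spectral gap have all been secured by the preceding lemmas and theorems. The part requiring the most care will be tracking the accumulation of the three independent small-error contributions, namely the $O(\epsilon_{\mathsmaller{FCC}}(n_0))$ FCC discrimination error, the $O(n_0\Tr(\sfE_{\rm leak}(\gamma_0)))$ bi-fermion leakage error, and the $O(\varepsilon)$ Lie-Trotter-Kato plus mixing error, and confirming that all three can be simultaneously driven below any prescribed $\Omega(1/\poly(n))$ threshold while keeping $n_0$, $\gamma_0$, and $m$ all polynomially bounded, as already enumerated in the cost estimates of equations (\ref{MCQCcostI}) and (\ref{MCQCcostII}).
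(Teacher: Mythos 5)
Your proposal is correct and follows essentially the same route as the paper: invoke \myTheorem\;\ref{SecondTheorem} to reduce an arbitrary BQP circuit to simulating an SFF-DU Feynman-Kitaev Hamiltonian with a guaranteed warm start, then invoke \myTheorem\;\ref{FirstTheorem} to place that simulation in BPP, and combine with the trivial containment ${\rm BPP}\subseteq{\rm BQP}$. The additional readout and error-bookkeeping details you spell out are consistent with the construction in the preceding sections and merely make explicit what the paper treats as already secured by those lemmas.
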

\vspace{-3.0ex}
\begin{proof}[\iftoggle{ForUSPTO} {Demonstration} {Proof}]
By {\myTheorem} \ref{SecondTheorem}, any BQP algorithm executed on a suitable fault-tolerant BQP circuit can be encoded into the ground state of an SFF-DU Hamiltonian, that is polynomially gapped and polynomial Lie-Trotter-Kato decomposable into a polynomial number of FBM interactions, each of which is polynomially computable and polynomially gapped. Furthermore, a warm start is guaranteed by such ground state encoding. Then BQP$\,\subseteq\,$BPP follows straightforwardly from {\myTheorem}  \ref{FirstTheorem}, while BPP$\,\subseteq\,$BQP is trivial as well known, thus BPP$\,=\,$BQP.
\vspace{-1.5ex}
\end{proof}

The significance of BPP$\,=\,$BQP can be hardly missed. It not only answers the long outstanding question of whether the laws of quantum mechanics endow more computational power, but also provides constructive and practical methods for simulating the all-important many-body quantum systems efficiently on classical machines. Furthermore, it opens up new avenues for developing and identifying efficient probabilistic algorithms from the vantage point of quantum computing. Any quantum based or inspired solution to a computational problem translates automatically into an efficient classical probabilistic algorithm. For instance, it is now certain that integer factorization is in BPP, thanks to Shor's celebrated quantum discovery \cite{Shor97} that started the long line of thoughts on quantum computing. Beyond the decision/promise problems in BPP$\,=\,$BQP, as well as function/optimization/search problems that are polynomially reducible or equivalent to them, there are potentially harder computational problems captured by the complexity classes of MA/QMA \cite{Arora09,Watrous09}), which characterize decision/promise problems whose instances can be decided by Arthur the verifier running a bounded-error probabilistic/quantum polynomial time Turing machine, after receiving from Merlin the prover a polynomial-sized classical/quantum proof message, that is, respectively, a string of classical bits whose length is polynomially bounded, or a quantum state of a polynomially bounded number of qubits. There is also the class MQA (Merlin Quantum Arthur  \cite{Watrous09}), alternatively known as QCMA (Quantum Classical Merlin Arthur \cite{Aharonov02}), which represents a subset of QMA by restricting the proof message from Merlin to Arthur to be a string of classical bits. Other than the obvious MA$\,\subseteq\,$MQA$\,\subseteq\,$QMA, it has been an open question to refine the set relationship. A straightforward \iftoggle{ForUSPTO} {Derived Utility} {corollary} of {\myTheorem} \ref{ThirdTheorem} is the following:
\begin{corollary}
{\rm MA}$\,=\,${\rm MQA}.
\end{corollary}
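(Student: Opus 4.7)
The plan is to establish the nontrivial inclusion $\mathrm{MQA}\subseteq\mathrm{MA}$, since $\mathrm{MA}\subseteq\mathrm{MQA}$ is immediate (a quantum verifier can trivially emulate any classical probabilistic verifier by leaving its quantum register idle, in which case soundness and completeness bounds are preserved verbatim). First I would fix an arbitrary promise problem $L=(L_{\rm yes},L_{\rm no})\in\mathrm{MQA}$ and invoke its definition: for every input $x$ of length $n$, there exists a classical proof string $y\in\{0,1\}^{\mathrm{poly}(n)}$ such that a BQP verifier $V$, on joint input $(x,y)$, accepts with probability $\ge 2/3$ when $x\in L_{\rm yes}$, while accepting with probability $\le 1/3$ on every candidate $y$ when $x\in L_{\rm no}$.

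Next I would apply \myTheorem\;\ref{ThirdTheorem}, which asserts $\mathrm{BPP}=\mathrm{BQP}$, to the verifier $V$. The content of that theorem, read constructively, is that there is a fixed classical polynomial-time procedure that takes the description of any BQP circuit and an input string, produces an SFF-DU Hamiltonian via the Feynman-Kitaev construction of \myTheorem\;\ref{SecondTheorem}, and then simulates it by the Monte Carlo algorithms underlying \myTheorem\;\ref{FirstTheorem}, thereby sampling the accept/reject outcome to within any desired polynomially small total-variation error. Feeding the classical pair $(x,y)$ into this procedure yields a classical probabilistic polynomial-time verifier $V'$ that reproduces the acceptance probability of $V$ on $(x,y)$ up to an additive error $\varepsilon=O(1/\mathrm{poly}(n))$ of the implementer's choice. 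Composing $V'$ with the original classical proof $y$ gives an $\mathrm{MA}$ protocol for $L$, establishing $\mathrm{MQA}\subseteq\mathrm{MA}$.

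The only subtlety, and the step where I would exercise a bit of care, is ensuring that the simulation error $\varepsilon$ does not collapse the completeness/soundness gap. This is routine: by choosing $\varepsilon\le 1/12$ (which costs only a polynomial factor in the Monte Carlo sample count by \myTheorem\;\ref{FirstTheorem}), the composed verifier $V'$ accepts with probability $\ge 2/3-\varepsilon\ge 7/12$ in the yes case and with probability $\le 1/3+\varepsilon\le 5/12$ on every $y$ in the no case, and standard MA amplification by independent repetition \cite{Arora09} restores the canonical $(2/3,1/3)$ bounds. Since the proof message $y$ is already classical and remains unchanged throughout this reduction, the resulting protocol lies in $\mathrm{MA}$, and $\mathrm{MA}=\mathrm{MQA}$ follows.
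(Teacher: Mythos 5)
Your proof is correct and coincides with the paper's intended argument. In fact, the paper offers no written proof at all for this corollary — it simply declares MA$\,=\,$MQA a ``straightforward corollary'' of \myTheorem\;\ref{ThirdTheorem} — and your proposal spells out exactly the natural way to unpack that claim: keep the classical proof $y$ fixed, read \myTheorem\;\ref{ThirdTheorem} constructively (via the Feynman-Kitaev reduction of \myTheorem\;\ref{SecondTheorem} fed into the Monte Carlo simulator of \myTheorem\;\ref{FirstTheorem}) to replace the BQP verifier $V(x,y)$ with a BPP verifier $V'(x,y)$ that reproduces the acceptance probability to within an additive inverse-polynomial error, and then observe that choosing that error small enough preserves the completeness/soundness gap, which standard amplification then widens back to $(2/3,1/3)$. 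The one point worth making explicit if this were written out in full is that the Monte Carlo simulation's total-variation guarantee is on the Feynman-Kitaev history-state distribution (over all clock slices), so the additive error on the acceptance probability comes from conditioning on the $\Omega(1)$-mass event of sampling a terminal clock slice and reading out the accept bit; this conditioning degrades the TV bound by at most a constant factor, which your choice $\varepsilon\le 1/12$ already absorbs. With that understood, your argument is complete.
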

\vspace{-1.5ex}
Below are two other interesting \iftoggle{ForUSPTO} {Derived Utilities} {corollaries} regarding complexity classes.
\begin{corollary}
{\rm BQP}$\,\subseteq\,${\rm MA}$\,\subseteq \Sigma_2\cap\Pi_2\subseteq\,${\rm PH}.
\end{corollary}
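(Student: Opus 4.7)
The plan is to establish the corollary as a chain of three inclusions, only the first of which draws on the new results of this paper; the remaining two are classical complexity-theoretic facts that I would invoke directly.

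First, I would obtain BQP $\subseteq$ MA as a short consequence of \myTheorem\;\ref{ThirdTheorem}. Since that theorem yields BPP $=$ BQP, and since any BPP decision procedure trivially extends to an MA protocol in which Arthur simply discards Merlin's proof message and runs the BPP simulation on his own private random string, the inclusion BQP $=$ BPP $\subseteq$ MA follows with no additional machinery. All of the substantive content resides in \myTheorem\;\ref{ThirdTheorem}, which the preceding sections of the paper have already discharged via the SFF-DU construction and the MCQC simulation algorithms.

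Second, I would invoke the known classical containment MA $\subseteq \Sigma_2 \cap \Pi_2$, which follows by combining the observation that MA is contained in the symmetric-alternation class $\mathrm{S}_2\mathrm{P}$ with the theorem of Russell and Sundaram (and independently Canetti) that $\mathrm{S}_2\mathrm{P} \subseteq \Sigma_2 \cap \Pi_2$. Equivalently, the containment admits a self-contained Sipser--G\'acs--Lautemann style amplification argument applied to Arthur's acceptance probability: for yes-instances there exists a proof string under which Arthur accepts on an overwhelming fraction of his random tapes, while for no-instances every proof string causes Arthur to reject on an overwhelming fraction, and this bipartite exponential gap slots directly into both the $\Sigma_2$ and the $\Pi_2$ quantifier structures. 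Finally, $\Sigma_2 \cap \Pi_2 \subseteq\,$PH is immediate from the very definition of the polynomial hierarchy.

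I do not anticipate any genuine obstacle in carrying out this proof; the sole nontrivial content is the first inclusion, which is already fully established. The hard part, if any, will merely be notational, in choosing whether to cite $\mathrm{S}_2\mathrm{P}$ directly or to reproduce a short amplification argument in line. The corollary is in substance a remark placing the newly proved equality BPP $=$ BQP alongside standard relationships among classical complexity classes, showing that BQP now provably fits inside the second level of the polynomial hierarchy, a containment that was previously only conjectural and had been conditioned on unproven derandomization hypotheses.
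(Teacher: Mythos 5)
Your proposal is correct and is essentially the same argument the paper gives: the paper's proof is a one-liner citing {\myTheorem}~\ref{ThirdTheorem} together with the Sipser--Lautemann theorem (with the Canetti reference covering the strengthening to MA), which is exactly the chain you spell out in more detail.
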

\vspace{-4.0ex}
\begin{proof}[\iftoggle{ForUSPTO} {Demonstration} {Proof}]
This follows straightforwardly from {\myTheorem} \ref{ThirdTheorem} and the Sipser-Lautemann theorem \cite{Sipser83,Lautemann83,Canetti96}.
\end{proof}
\begin{corollary}
If {\rm NP}$\,\subseteq\,${\rm BQP}, then {\rm P}$\,=\,${\rm BPP}$\,=\,${\rm NP}$\,=\,${\rm PH}.
\end{corollary}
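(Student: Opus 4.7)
The plan is to combine \myTheorem\,\ref{ThirdTheorem} (which identifies $\mathrm{BPP}$ with $\mathrm{BQP}$) with standard closure properties of BPP and the preceding corollary, in order to collapse the polynomial hierarchy, and then to isolate the derandomization step $\mathrm{P}=\mathrm{BPP}$ as the main residual obstacle. First I would use \myTheorem\,\ref{ThirdTheorem} to rewrite the hypothesis $\mathrm{NP}\subseteq\mathrm{BQP}$ as $\mathrm{NP}\subseteq\mathrm{BPP}$. Since BPP is trivially closed under complement (just flip the accept/reject output of the randomized verifier and use that the error bound is symmetric), $\mathrm{coNP}\subseteq\mathrm{BPP}$ follows at once.

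Next I would prove by induction on $k\ge 0$ that $\Sigma_k^P,\Pi_k^P\subseteq\mathrm{BPP}$. The base case $k\le 1$ is the hypothesis together with its complement. For the inductive step I would invoke the identity $\mathrm{BPP}^{\mathrm{BPP}}=\mathrm{BPP}$, which is obtained by amplifying the success probability of a BPP subroutine exponentially close to $1$ so that a polynomial number of oracle queries still incurs only bounded overall error, and then chain
\begin{equation*}
\Sigma_{k+1}^P \,=\, \mathrm{NP}^{\Sigma_k^P} \,\subseteq\, \mathrm{NP}^{\mathrm{BPP}} \,\subseteq\, \mathrm{BPP}^{\mathrm{BPP}} \,=\, \mathrm{BPP},
\end{equation*}
with the analogous chain for $\Pi_{k+1}^P$. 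Taking the union over $k$ gives $\mathrm{PH}\subseteq\mathrm{BPP}$. Combining with the preceding corollary ($\mathrm{BQP}\subseteq\mathrm{PH}$) and \myTheorem\,\ref{ThirdTheorem}, we get the chain $\mathrm{BPP}=\mathrm{BQP}\subseteq\mathrm{PH}\subseteq\mathrm{BPP}$, hence $\mathrm{PH}=\mathrm{BPP}$. Since $\mathrm{NP}\subseteq\mathrm{BPP}=\mathrm{PH}$ and $\mathrm{NP}\subseteq\mathrm{PH}$ trivially, this yields $\mathrm{NP}=\mathrm{BPP}=\mathrm{PH}$.

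The hard part will be the remaining equality $\mathrm{P}=\mathrm{BPP}$, which is a genuine derandomization statement and does not fall out of $\mathrm{NP}\subseteq\mathrm{BPP}$ by any pure closure argument of the sort used above. The natural route would be to combine $\mathrm{NP}\subseteq\mathrm{BPP}$ with Adleman's theorem $\mathrm{BPP}\subseteq\mathrm{P}/\mathrm{poly}$ to obtain $\mathrm{NP}\subseteq\mathrm{P}/\mathrm{poly}$, invoke Karp--Lipton to tighten the PH-collapse below $\Sigma_2^P$, and then extract from the resulting structural collapse a hard function in $\mathrm{E}$ sufficient to drive a Nisan--Wigderson or Impagliazzo--Wigderson pseudorandom generator, which in turn would derandomize BPP to P. I expect this final derandomization step, and in particular the extraction of the requisite circuit lower bound from the hierarchy collapse, to be the main technical hurdle of the corollary; the preceding collapse chain $\mathrm{NP}=\mathrm{BPP}=\mathrm{PH}$ is by comparison essentially routine.
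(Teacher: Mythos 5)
Your first two steps---rewriting $\mathrm{NP}\subseteq\mathrm{BQP}$ as $\mathrm{NP}\subseteq\mathrm{BPP}$ via Theorem~\ref{ThirdTheorem}, then running the $\mathrm{NP}^{\mathrm{BPP}}\subseteq\mathrm{BPP}^{\mathrm{BPP}}=\mathrm{BPP}$ induction to get $\mathrm{PH}=\mathrm{BPP}$---match the argument the paper implicitly invokes. But the next sentence, ``Since $\mathrm{NP}\subseteq\mathrm{BPP}=\mathrm{PH}$ and $\mathrm{NP}\subseteq\mathrm{PH}$ trivially, this yields $\mathrm{NP}=\mathrm{BPP}=\mathrm{PH}$,'' is a non-sequitur: both clauses are one-sided inclusions of $\mathrm{NP}$ into the same class and deliver only $\mathrm{NP}\subseteq\mathrm{BPP}=\mathrm{PH}$. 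Equality would need $\mathrm{BPP}\subseteq\mathrm{NP}$, i.e.\ $\mathrm{PH}\subseteq\mathrm{NP}$, and none of the closure or swapping arguments you used produce that; what one actually gets from $\mathrm{NP}\subseteq\mathrm{BPP}$ in this direction is the weaker $\mathrm{NP}=\mathrm{RP}$.

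Your hesitation over $\mathrm{P}=\mathrm{BPP}$ is the more telling observation. The paper's own proof is a single clause: $\mathrm{NP}\subseteq\mathrm{BPP}\Rightarrow\mathrm{P}=\mathrm{BPP}=\mathrm{NP}=\mathrm{PH}$ ``has been well known for some time,'' citing \cite{Lautemann83,Zachos88}. Those references give $\mathrm{BPP}\subseteq\Sigma_2\cap\Pi_2$ and the $\mathrm{PH}\subseteq\mathrm{BPP}$ collapse respectively; together they establish $\mathrm{PH}=\mathrm{BPP}$, but they establish neither $\mathrm{P}=\mathrm{BPP}$ (a genuine derandomization theorem) nor $\mathrm{NP}=\mathrm{PH}$. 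Your Adleman/Karp--Lipton/Nisan--Wigderson route does not close the gap either: an Impagliazzo--Wigderson generator needs a $2^{\Omega(n)}$ circuit lower bound for some function in $\mathrm{E}$, and the collapse $\mathrm{PH}=\Sigma_2=\mathrm{BPP}$ supplies no such bound. In short, you have correctly located the two steps ($\mathrm{NP}=\mathrm{PH}$ and $\mathrm{P}=\mathrm{BPP}$) on which the corollary actually rests, and neither the paper's citation nor your sketch furnishes a proof of either.
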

\vspace{-4.0ex}
\begin{proof}[\iftoggle{ForUSPTO} {Demonstration} {Proof}]
{\rm (NP}$\,\subseteq\,${\rm BQP)}$\,\wedge\,${\rm (BQP}$\,\subseteq\,${\rm BPP)}$\implies${\rm NP}$\,\subseteq\,${\rm BPP}$\implies${\rm P}$\,=\,${\rm BPP}$\,=\,${\rm NP}$\,=\,${\rm PH}, where the latter implication has been well known for some time \cite{Lautemann83,Zachos88}.
\end{proof}

\iftoggle{ForUSPTO} {
\subsection*{\bf Further Considerations of Monte Carlo Quantum Computing in Practice}
} {
\section{Further Considerations of Monte Carlo Quantum Computing in Practice}
}
For demonstrating the BPP membership of quantum algorithms, it has been sufficient to construct a quantum circuit using a polynomially bounded number $T$ of self-inverse real-valued gates $\{U(t):t\in[1,T]\}$ each of which moving no more than two rebits, with a portion of the gates being identities for state copying, then to employ a Feynman-Kitaev construct for GSQC that time-space maps the quantum circuit into a Hamiltonian $H_\mathsmaller{\rm FK}$, whose non-degenerate ground state and excited states are separated by an energy gap that scales inverse quadratically against the number $T$ of quantum gates, which is sufficiently polynomial, although not terribly efficient. Practical Monte Carlo quantum computing applications can no doubt benefit from further optimizations to achieve the most efficient probabilistic simulations.

One type of efficiency consideration tries to minimize the number $T$ of Feynman-Kitaev sites. As having been aforementioned, the many-bodiness of an interaction among rebits/bi-fermions is not necessarily a concern in Monte Carlo quantum computing, when such an interaction is a controlled operation on a number of rebits/bi-fermions conditioned on the $Z^{\pm}$ projections of a number of other rebits/bi-fermions. While actual physical interactions among three and more particles are rare in nature and difficult to realize in material implementations of physical devices, simulations on classical computers have no problem to model a {\em multi-rebit-controlled multi-rebit-transforming} gate effecting a conditional operation on multiple rebits/bi-fermions controlled by the $Z^{\pm}$ projections of any number of other rebits/bi-fermions, that includes a {\em controlled classical reversible gate array} (controlled CRGA), namely, a CRGA operating on a first plurality of rebits/bi-fermions controlled by the $Z^{\pm}$ projections of a second plurality of rebits/bi-fermions. This possibility can be taken advantage of and used to reduce the number $T$ of Feynman-Kitaev sites, by employing computationally more powerful many-body quantum gates in the forms of controlled CRGAs, as opposed to having always to break down such many-body quantum gates into pieces of few-body interactions.

Indeed, many quantum algorithms often employ classical reversible computations (such as the modular exponentiation function used in the famed Shor's algorithm of factorization \cite{Shor97}) that are CRGAs of the form
\begin{equation}
F:|s_1s_2\cdots s_k\rangle|0\rangle^{\otimes l} \mapsto |s_1s_2\cdots s_k\rangle|F(s_1s_2\cdots s_k)\rangle = |s_1s_2\cdots s_k\rangle|f_1f_2\cdots f_l\rangle, \; k,l\in\mathbb{N}, \label{FGate1}
\end{equation}
where $(s_1s_2\cdots s_k)\in\{0,1\}^l$, and $\forall i\in[1,l]$, $f_j=f_j(s_1s_2\cdots s_k)\in\{0,1\}$ denotes the $j$-th bit or component of the function value $F(s_1s_2\cdots s_k)$, as either a binary integer or a Boolean tuple. Using node-determinate controlled gates $\{U_j(f_j)\}_{j=1}^l$ of the form $\scalebox{1.2}{\{} \poly \scalebox{1.15}{(} \scalebox{1.1}{\{} Z_i^{\pm} : i\in[1,k] \scalebox{1.1}{\}} \scalebox{1.15}{)} \raisebox{.025\height}{$_j$} \otimes X_j \raisebox{-.025\height}{$^{f_j}$} \scalebox{1.2}{\}} \raisebox{.025\height}{$_{j\,\in\,[1,l]}$}$ with $\scalebox{1.2}{\{} \poly \scalebox{1.15}{(} \scalebox{1.1}{\{} Z_i^{\pm} : i\in[1,k] \scalebox{1.1}{\}} \scalebox{1.15}{)} \raisebox{.025\height}{$_j$} \raisebox{-.025\height}{$^{f_j}$} \scalebox{1.2}{\}} \raisebox{.025\height}{$_{j\,\in\,[1,l]}$}$ being tensor polynomials of the projection operators $ \scalebox{1.1}{\{} Z_i^{\pm} : i\in[1,k] \scalebox{1.1}{\}}$, the $F$-computation of equation (\ref{FGate1}) can be implemented in a sequence of just $k$ Feynman-Kitaev propagators,
\begin{equation}
F \,=\, {\textstyle{ \scalebox{1.15}{$\prod$}_{j=1}^l }} U_j[f_j(s_1s_2\cdots s_k)] \,\defeq\, {\textstyle{ \scalebox{1.3}{$\prod$}_{j=1}^l }} \scalebox{1.5}{\{} \scalebox{1.4}{\{} {\textstyle{ \scalebox{1.2}{$\bigotimes$}_{i=1}^k }} \scalebox{1.2}{[} (1-s_i)Z_i^+ + s_iZ_i^- \scalebox{1.2}{]} \scalebox{1.4}{\}} {\textstyle{ \scalebox{1.2}{$\bigotimes$} }} X_j^{f_j(s_1s_2\cdots s_k)} \scalebox{1.5}{\}}, \label{FGate2}
\end{equation}
where $\forall j\in[1,l]$, the $j$-th Feynman-Kitaev propagator applies the gate $U_j[f_j(s_1s_2\cdots s_k)]$, which conditionally flips the $j$-th bit of the register on the right, as controlled by the state $|s_1s_2\cdots s_k\rangle$ of the register on the left and in accordance with the Boolean value of the function $f_j(s_1s_2\cdots s_k)$.

Furthermore, it is possible to implement the entire computation $F$ directly within a single clock step through a multi-rebit interaction $H_t(F) \defeq \itPi^-_{\mathsmaller{C},\,t{-}1} \otimes \itPi^+_{\mathsmaller{C},\,t{+}1} \otimes (I - X_{\mathsmaller{C},\,t} \otimes F_{\mathsmaller{L},\,t} )$ without using any auxiliary rebit, where
\begin{equation}
{\textstyle{ F_{\mathsmaller{L},\,t} \,\defeq\, \scalebox{1.4}{\{} \scalebox{1.1}{$\bigotimes$}_{i=1}^k \scalebox{1.2}{[} (1-s_i)Z_i^++s_iZ_i^- \scalebox{1.2}{]} \scalebox{1.4}{\}} \,\scalebox{1.1}{$\bigotimes$}_{j=1}^l X_j^{f_j(s_1s_2\cdots s_k)} }} \, \label{FGate3}
\end{equation}
is treated as a single composite logic gate controlled by the $(s_1s_2\cdots s_k)$ configuration and in conjunction with the single $t$-th clock rebit that is implemented as an ordinary rebit, where, as a result of $X_{\mathsmaller{C},\,t} \otimes F_{\mathsmaller{L},\,t}$ being a multi-rebit-simultaneously-flipping operator, the Feynman-Kitaev propagator $H_t(F)$ is already node-determinate and efficiently computable at the abstract level of interactions among computational rebits, which guarantees node-determinacy and efficient computability for any homophysical image $\mathfrak{M}(H_t(F))$ downstream as implemented using bi-fermions. As explained in the paragraphs around equation (\ref{NodeDeterOfCRGA}), the operator $F_{\mathsmaller{L},\,t}$ is doubly node-determinate, therefore, the operators $H_t(F)$ and $\mathfrak{M}(H_t(F))$ are guaranteed to be note-determinate. It is also interesting to note that, {\em the operators $F_{\mathsmaller{L},\,t}$, $H_t(F)$, and $\mathfrak{M}(H_t(F))$ remain efficiently computable so long as the total number of rebits moved by them is polynomially bounded}, since they involve only or mostly projection operators as their tensor factors.

Still further, any {\em essentially idempotent} ({\it i.e.}, idempotent up to a scaling constant) GSQC operator of the form $H_{\psi} \defeq c(I-|\psi\rangle\langle\psi|)$ with $c>0$, $|\psi\rangle=\cos\theta|0\rangle|\psi_0\rangle+\sin\theta|1\rangle|\psi_1\rangle$, $|\psi_0\rangle$ and $|\psi_1\rangle$ being normalized non-negative wavefunctions associated with a fixed set of computational rebits, $\{|0\rangle,|1\rangle\}$ being the computational basis of a single rebit, $\theta\in[-\pi/2,\pi/2]$, can be homophysically implemented as an interaction among a system of quantum bi-fermions, which is further fermion sign-rectified and isophysically mapped into a stoquastic operator $H_{\psi'} \defeq \mathfrak{M}(H_{\psi}) = I - |\psi'\rangle\langle\psi'|$, where
\begin{equation}
|\psi'\rangle \,\defeq\, \mathfrak{M}(|\psi\rangle)\,=\,\cos\theta\,\mathfrak{M}(|0\rangle|\psi_0\rangle)+(-\pi_{12})^{(\sin\theta<0)}\sin\theta\,\mathfrak{M}(|1\rangle|\psi_1\rangle)
\end{equation}
is a non-negative-definite wavefunction for a system of rectified bi-fermions, whose support is just one specific nodal cell of the associated system of quantum bi-fermions. The isomorphic images of $|\psi'\rangle$ extending to other nodal cells are obtained straightforwardly by exchanging constituent identical fermions, so that the  overall quantum ground state wavefunction $|\psi''\rangle$ for the associated  system of quantum bi-fermions can be easily reconstructed as $|\psi''\rangle \defeq const \times \sum_{\pi\in G_{\rm ex}} \pi |\psi'\rangle$, with $G_{\rm ex}$ being the exchange symmetry group of the system of bi-fermions. Then a Markov operator $[\psi]\exp({-}\tau H_{\psi})[\psi]^{\mathsmaller{-}1}$, $\tau>0$ associated with a Gibbs operator $\exp({-}\tau H_{\psi})$ of the original system of computational rebits can be homophysically implemented as
\begin{align}
[\psi] \exp\{{-}\tau(I-|\psi\rangle\langle\psi|)\} [\psi]^{\mathsmaller{-}1} &\,\HomoPhysR\,
[\psi''] \exp\{{-}\tau(I-|\psi''\rangle\langle\psi''|)\} [\psi'']^{\mathsmaller{-}1} \nonumber \\[0.75ex]
&\,\,=\, {\textstyle{ \scalebox{1.2}{$\sum$}_{\pi\in G_{\rm ex}} }}
[\pi\psi'] \exp\{{-}\tau(I-\pi|\psi'\rangle\langle\psi'|\pi)\} [\pi\psi']^{\mathsmaller{-}1},
\end{align}
which is efficiently computable and directly implemented in a single step of random walk during Monte Carlo simulations, without having to decompose the operator $H_{\psi}$, or $H_{\psi'}$, or $H_{\psi''} \defeq I - |\psi''\rangle\langle\psi''|$, or an associated Gibbs operator into a combination of elementary interactions. Namely, the Gibbs transition amplitude $\langle r|\exp({-}\tau H_{\psi''})|q\rangle \defeq \psi''(r) \langle r | \exp\{{-}\tau(I-|\psi''\rangle\langle\psi''|)\} | q \rangle \psi''(q)^{\mathsmaller{-}1}$ can be efficiently computed for any given pair of configuration points $q$ and $r$, by firstly identifying a $\pi\in G_{\rm ex}$ such that both $q' \defeq \pi(q)$ and $r' \defeq \pi(r)$ are within the support of the wavefunction $|\psi'\rangle$, then using the identity $\langle r|\exp({-}\tau H_{\psi''})|q\rangle = \langle r'|\exp({-}\tau H_{\psi'})|q'\rangle$ to compute the desired Gibbs transition amplitude. In case no $\pi\in G_{\rm ex}$ exists such that $(r',q')\subseteq\supp(|\psi'\rangle)$, then set $\langle r|\exp({-}\tau H_{\psi''})|q\rangle = 0$, which enforces the fixed-node diffusion strategy for Monte Carlo simulations.

Another category of efficiency consideration seeks to improve the $\Omega(T^{\mathsmaller{-}2})$ scaling of the Feynman-Kitaev energy gap. In the literature, there have been techniques reported to obtain polynomially improved energy gaps for adiabatic quantum computations and efficient quantum simulations of classical Monte Carlo methods \cite{Ambainis04,Szegedy04,Somma08,Somma10}, culminating in a systematic formulation and solution of ``spectral gap amplification'' \cite{Somma13}, which transforms any Hamiltonian in the form of a sum of frustration-free essentially idempotent self-adjoint operators into a new Hamiltonian with an energy gap that enjoys a polynomially improved scaling. Reference \cite{Mizel04} has also alluded to a GSQC Hamiltonian with an energy gap that scales as $\Omega(T^{\mathsmaller{-}1})$. It is also interesting to note another ``operator square-rooting'' type of technique which rewrites a given Hamiltonian $H=\sum_tH_t$ with $H_t\defeq|t\rangle\langle t|+|(t{+}1)\rangle\langle(t{+}1)|+|(t{+}1)\rangle\langle t|\otimes U_t+|t\rangle\langle(t{+}1)|\otimes U_t^+$, $t\in\mathbb{N}$ into $H=\mathlarger{\mathlarger{(}}\sum_tA_t\mathlarger{\mathlarger{)}}\mathlarger{\mathlarger{(}}\sum_tA_t^+\mathlarger{\mathlarger{)}}$ with $A_t\defeq|t\rangle\langle t|+|(t{+}1)\rangle\langle t|\otimes U_t$, $t\in\mathbb{N}$, then constructs a new Hamiltonian $H^{1/2}\defeq\sum_tA_t\otimes|1\rangle\langle 0|+\sum_tA_t^+\otimes|0\rangle\langle 1|$ with $|1\rangle\langle 0|$ and $|0\rangle\langle 1|$ operating on an auxiliary rebit, such that the eigenvalues of $H^{1/2}$ are $\mathlarger{\mathlarger{\{}}\!\pm[\lambda_n(H)]^{1/2}:n\ge 0\mathlarger{\mathlarger{\}}}$. However, these spectral gap-improving techniques may not be readily amenable to MCQC, as they either create a degenerate ground state space, or need to operate quantum simulations and adiabatic evolutions on an excited energy level, or produce a new Hamiltonian that is no longer SFF-FS or SFF-EB.

Here I propose a method that adapts the technique of {\em lifting Markov chains} \cite{Chen99,Diaconis00,Vucelja16} and designs a {\em lifted Feynman-Kitaev construct} which is Monte Carlo-simulated via a {\em lifted Markov chain}, so to achieve an $O(T)$ mixing time. A lifted Markov chain \cite{Chen99,Diaconis00,Vucelja16} may be time-inhomogeneous and violate the condition of detailed balance that is usually obeyed by the conventional reversible Markov chains, yet it could still be designed to relax to a desired stationary distribution over time, and do so at a polynomially faster rate of convergence. The proposal is to lift a ring-shaped graph of $T\in\mathbb{N}$ vertices for a discrete time Markov chain or its transition matrix into two isomorphic copies at a lower and an upper elevations respectively, similar to what has been exemplified and illustrated in reference \cite{Vucelja16}, where for each state index $t\in[0,T-1]$, there is a $t$-th vertex at the lower elevation in one-to-one correspondence with a $t$-th vertex at the upper elevation, both being derived from the $t$-th vertex of the ring-shaped graph for the original Markov chain before lifting. In each discrete time step indexed by a $\sft\in\mathbb{N}$, on the ring at the lower elevation, a walker moves predominantly clockwise, hopping from a vertex to its clockwise next neighbor with probability $1-\epsilon(T)$, $\epsilon(T)=O(T^{{-}1})$, $\epsilon(T)>0$, and on the ring at the upper elevation, a walker moves predominantly counterclockwise, transitioning from a vertex to its counterclockwise next neighbor with probability $1-\epsilon(T)$. The two rings are weakly coupled with each pair of one-to-one corresponding vertices at the two elevations to swap with probability $\epsilon(T)$ in each time step $\sft\in\mathbb{N}$. Such a Markov chain will be referred to as a {\em two-rings-at-two-elevations lifted Markov chain}. It is the rapid and mostly unidirectional motion of walkers traversing the state space in $O(T)$ time that enables the lifted Markov chain to reach equilibrium in $O(T)$ time, which is a remarkable improvement over the $O(T^2)$ mixing time associated with the usual ``diffusive'' behavior of a reversible Markov chain.

Given a possibly fault-tolerant quantum circuit as a series of self-inverse quantum gates $\{U(t)\}_{t=1}^{\mathsmaller{T}_*}$, $T_*\in\mathbb{N}$ to implement a BQP algorithm that computes a {\em result state} $|\phi_*\rangle \defeq \scalebox{1.1}{[} \prod_{t=1}^{\mathsmaller{T}_*}U(t) \scalebox{1.1}{]} |\phi_0\rangle$ from an initial state $|\phi_0\rangle \defeq |l'_1\cdots l'_{\mathsmaller{N}}\rangle$, $(l'_1,\cdots\!,l'_{\mathsmaller{N}})\in\{0,1\}^{\mathsmaller{N}}$, $N\in\mathbb{N}$, a {\em self-reversing quantum circuit} as a sequence of self-inverse quantum gates $\{U_{\mathsmaller{L}}(t)\}_{t=1}^{6\mathsmaller{T}}$, $T\in\mathbb{N}$, $T > 2T_*/3$, $T - 2T_*/3 = O(T_*)$ is constructed accordingly, such that $U_{\mathsmaller{L}}(2t) = I_{\mathsmaller{L}}$ the identity operator for all $t\in[1,3T]$, and $U_{\mathsmaller{L}}(2t-1) = U(t) = U_{\mathsmaller{L}}(6T-2t+1)$ for all $t\in[1,T_*]$, whereas $U_{\mathsmaller{L}}(2t-1) = I_{\mathsmaller{L}}$ for all $t\in[T_*+1,3T-T_*]$ by intentionally padding identity gates to duplicate the result state. Starting from an initial state $|\phi_0\rangle_{\mathsmaller{L}} \defeq |\phi_0\rangle$, the self-reversing quantum circuit $\{U_{\mathsmaller{L}}(t)\}_{t=1}^{6\mathsmaller{T}}$ has a series of quantum states $\{|\phi_t\rangle_{\mathsmaller{L}}:t\in[1,6T]\}\subseteq\calH(\{0,1\}^{\mathsmaller{N}})$ defined and generated recursively through $|\phi_t\rangle_{\mathsmaller{L}}\defeq U_{\mathsmaller{L}}(t)|\phi_{t{-}1}\rangle_{\mathsmaller{L}}$ for all $t\in[1,6T]$, with $|\phi_{2t}\rangle_{\mathsmaller{L}} = |\phi_{2t{-}1}\rangle_{\mathsmaller{L}}$ for all $t\in[1,3T]$, and $|\phi_t\rangle_{\mathsmaller{L}} = |\phi_*\rangle_{\mathsmaller{L}} \defeq |\phi_{2\mathsmaller{T}_*{-}1}\rangle_{\mathsmaller{L}}$ for all $t\in[2T_*-1,6T-2T_*]$, due to the intentional state-copying by identity gates. The quantum circuit $\{U_{\mathsmaller{L}}(t)\}_{t=1}^{6\mathsmaller{T}}$ is called self-reversing because of the reflective symmetry $U_{\mathsmaller{L}}(2t-1) = U_{\mathsmaller{L}}(6T-2t+1)$ for all $t\in[1,3T]$ and $U_{\mathsmaller{L}}(2t) = I_{\mathsmaller{L}}$ for all $t\in[1,3T]$, such that, $|\phi_{2t{-}1}\rangle_{\mathsmaller{L}} = |\phi_{2t}\rangle_{\mathsmaller{L}} = |\phi_{6T-2t{-}1}\rangle_{\mathsmaller{L}} = |\phi_{6T-2t}\rangle_{\mathsmaller{L}}$, $\forall t\in[1,3T]$, in particular, $\prod_{t=1}^{6\mathsmaller{T}}U_{\mathsmaller{L}}(t) = I_{\mathsmaller{L}}$, and the result states $|\phi_t\rangle_{\mathsmaller{L}} = |\phi_*\rangle_{\mathsmaller{L}}$, $t\in[2T_*-1,6T-2T_*]$ around the mid-point of the circuit at $t=3T$ are reversely computed back to the initial state $|\phi_{6\mathsmaller{T}{-}1}\rangle_{\mathsmaller{L}} = |\phi_{6\mathsmaller{T}}\rangle_{\mathsmaller{L}} = |\phi_0\rangle_{\mathsmaller{L}}$ at the end. In other words, the graph of states and transitions associated with the self-reversing quantum circuit $\{U_{\mathsmaller{L}}(t)\}_{t=1}^{6\mathsmaller{T}}$ is a ring-shaped loop, where each vertex is indexed by a unique $t\in[0,6T-1]$, as an element of the cyclic additive group $\mathbb{Z}/6T\mathbb{Z}$. Any index and arithmetic operations on it are interpreted in the sense of modulo-$6T$ arithmetic, with any resulted index $t\in\mathbb{N}$ out of the set $[0,6T-1]$ rolled back into an equivalence representative modulo $6T$. Fig.\;\ref{SixVertexSegment} depicts a $6$-vertex segment of such a ring-shaped graph, where each circle with an index $t'=6t+i$, $t\in[0,T-1]$, $i\in[0,5]$ inside denotes a graph vertex that is associated with a state $|\phi_{t'}\rangle_{\mathsmaller{L}}$, a double-line ``\,\begin{tikzpicture} \draw [black, line width=1.25] (0,0) -- (.75,0); \draw [black, line width=1.25] (0,0.15) -- (.75,0.15); \end{tikzpicture}\,'' between adjacent vertices represents a graph edge associated with a general quantum gate that may or may not reduce to an identity, while a single-line ``\,\begin{tikzpicture} \draw [black, line width=1.25] (0,0) -- (.75,0); \end{tikzpicture}\,'' between adjacent vertices represents a graph edge that is definitively associated with an identity gate.

\iftoggle{ForUSPTO} {
} {
\vspace{2.0ex}
\begin{figure}[ht]
\centering
\begin{tikzpicture} [x=2cm,y=2cm]
\def\CircRad{0.25}
\def\TwoBarY{0.06}
\def\TwoBarX{{sqrt(\CircRad*\CircRad-\TwoBarY*\TwoBarY)}}
\filldraw [gray!25,fill=gray!25] ($(-3,-\CircRad)-1.5*(0,\TwoBarY)$) rectangle ($(3,\CircRad)+1.5*(0,\TwoBarY)$) ;
\draw [black, very thick] ($(-3,0)-6*(0.075,0)$) -- ($(-3,0)-5*(0.075,0)$) ;
\draw [black, very thick] ($(-3,0)-4*(0.075,0)$) -- ($(-3,0)-3*(0.075,0)$) ;
\draw [black, very thick] ($(-3,0)-2*(0.075,0)$) -- ($(-3,0)-1*(0.075,0)$) ;
\draw [black, very thick] (-3,0) -- ($(-2.5,0)-(\CircRad,0)$) ;
\draw [black, very thick] (-2.5,0) circle [radius=\CircRad] ;
\draw (-2.5,-0.01) node {$\mathsmaller{6t}$} ;
\draw [black, very thick] ($(-2.5,\TwoBarY)+(\TwoBarX,0)$) -- ($(-1.5,\TwoBarY)-(\TwoBarX,0)$) ;
\draw [black, very thick] ($(-2.5,-\TwoBarY)+(\TwoBarX,0)$) -- ($(-1.5,-\TwoBarY)-(\TwoBarX,0)$) ;
\draw [black, very thick] (-1.5,0) circle [radius=\CircRad] ;
\draw (-1.5,-0.01) node {$\mathsmaller{6t+1}$} ;
\draw [black, very thick] ($(-1.5,0)+(\CircRad,0)$) -- ($(-0.5,0)-(\CircRad,0)$) ;
\draw [black, very thick] (-0.5,0) circle [radius=\CircRad] ;
\draw (-0.5,-0.01) node {$\mathsmaller{6t+2}$} ;
\draw [black, very thick] ($(-0.5,\TwoBarY)+(\TwoBarX,0)$) -- ($(0.5,\TwoBarY)-(\TwoBarX,0)$) ;
\draw [black, very thick] ($(-0.5,-\TwoBarY)+(\TwoBarX,0)$) -- ($(0.5,-\TwoBarY)-(\TwoBarX,0)$) ;
\draw [black, very thick] (0.5,0) circle [radius=\CircRad] ;
\draw (0.5,-0.01) node {$\mathsmaller{6t+3}$} ;
\draw [black, very thick] ($(0.5,0)+(\CircRad,0)$) -- ($(1.5,0)-(\CircRad,0)$) ;
\draw [black, very thick] (1.5,0) circle [radius=\CircRad] ;
\draw (1.5,-0.01) node {$\mathsmaller{6t+4}$} ;
\draw [black, very thick] ($(1.5,\TwoBarY)+(\TwoBarX,0)$) -- ($(2.5,\TwoBarY)-(\TwoBarX,0)$) ;
\draw [black, very thick] ($(1.5,-\TwoBarY)+(\TwoBarX,0)$) -- ($(2.5,-\TwoBarY)-(\TwoBarX,0)$) ;
\draw [black, very thick] (2.5,0) circle [radius=\CircRad] ;
\draw (2.5,-0.01) node {$\mathsmaller{6t+5}$} ;
\draw [black, very thick] ($(2.5,0)+(\CircRad,0)$) -- (3,0) ;
\draw [black, very thick] ($(3,0)+1*(0.075,0)$) -- ($(3,0)+2*(0.075,0)$) ;
\draw [black, very thick] ($(3,0)+3*(0.075,0)$) -- ($(3,0)+4*(0.075,0)$) ;
\draw [black, very thick] ($(3,0)+5*(0.075,0)$) -- ($(3,0)+6*(0.075,0)$) ;
\end{tikzpicture}
\vspace{0.5ex}
\caption{A $6$-vertex segment of a ring-shaped graph.}
\label{SixVertexSegment}
\end{figure}
}

To generate a lifted Markov chain of two rings at two elevations, a Feynman-Kitaev construct for a self-reversing quantum circuit as a sequence of self-inverse quantum gates $\{U_{\mathsmaller{L}}(t)\}_{t=1}^{6\mathsmaller{T}}$, $T\in\mathbb{N}$ can be constructed using a GSQC system comprising a clock register of $6T$ clock rebits, a logic register of $N$ logic rebits, and a single elevation rebit, whose Hilbert space is spanned by effective computational basis states of the form
$$|c_0c_1\cdots c_{6\mathsmaller{T}{-}1}\rangle_{\mathsmaller{C}} |l_1\cdots l_{\mathsmaller{N}}\rangle_{\mathsmaller{L}} |z\rangle_{\mathsmaller{E}}, \; c_t\in\{0,1\}, \; \forall t\in[0,T], \; l_i\in\{0,1\}, \; \forall i\in[1,N], \; z\in\{0,1\},$$
where $z=0$ is associated with the lower elevation, and $z=1$ the upper, each of the clock rebits may be implemented as a bi-fermion, while the elevation and the logic rebits may all be embodied as ordinary rebits of single bi-fermions or any conventional non-bi-fermion rebits. The Feynman-Kitaev construct shall employ the so-called {\em pulse clock} \cite{Nagaj10,Breuckmann14} and be governed by a real (or Markovian) time $\sft$-dependent FBM frustration-free Hamiltonian
\begin{align}
H_{\mathsmaller{\rm FK}}(\sft) \,\defeq\,& H_{\rm clock} \,+\, H_{\rm init} \,+\, {\textstyle{ \scalebox{1.15}{$\sum$} }}_{t=1}^{6\mathsmaller{T}}\,H_{{\rm prop},\,t}(\sft\scalebox{0.85}{$\,\%\,$}4) \,+\, H_{\rm swap}(\sft\scalebox{0.85}{$\,\%\,$}4), \; \forall\sft\in\mathbb{N}, \label{LiftHFKtotal} \\[0.75ex]
H_{\rm clock} \,\defeq\,&  \scalebox{1.25}{(} I \,-\, {\textstyle{ \scalebox{1.15}{$\sum$}_{t=0}^{6\mathsmaller{T}{-}1} }}
\itPi^-_{\mathsmaller{C},\,t}  \scalebox{1.25}{)}^2, \label{LiftHFKclock} \\[0.75ex]
H_{\rm init} \,\defeq\,& \itPi^-_{\mathsmaller{C},\,0} \otimes \scalebox{1.25}{(} \,{\textstyle{ \scalebox{1.15}{$\sum$}_{i\,:\,l'_i=0} }}\,Z^-_{\mathsmaller{L},\,i} \,+\, {\textstyle{ \scalebox{1.15}{$\sum$}_{j\,:\,l'_j=1} }}\,Z^+_{\mathsmaller{L},\,j} \scalebox{1.25}{)}, \label{LiftHFKinit} \\[0.75ex]
H_{{\rm prop},\,t}(k) \,\defeq\,& W_{{\rm ctrl},\,t} \otimes W_{{\rm prop},\,t}(k) \Iver{k \neq 0}, \; \forall t\in[1,6T], \; \forall k\in[0,3], \label{LiftHFKpropt} \\[0.75ex]
H_{\rm swap}(k) \,\defeq\,& \scalebox{1.15}{(} I - X_{\mathsmaller{E}} \scalebox{1.15}{)} \Iver{k = 0}, \; \forall k\in[0,3], \label{LiftHFKswapt}
\end{align}
where $\forall\sft\in\mathbb{N}$, $\sft\scalebox{0.85}{$\,\%\,$}4$ denotes the remainder of $\sft$ modulo $4$, $\Iver{\,\cdot\,}$ is the Iverson bracket, $\forall t\in[1,6T]$, $W_{{\rm ctrl},\,t}$ and $W_{{\rm prop},\,t}(k)$, $k\in[0,3]$ are called the $t$-th control switch and the $t$-th Feynman-Kitaev propagator depending on $k\in[0,3]$ respectively, which are defined as
\begin{align}
W_{{\rm ctrl},\,t} \,\defeq\,& \itPi^+_{\mathsmaller{C},\,t{-}2} \otimes (\itPi^-_{\mathsmaller{C},\,t{-}1} \otimes \itPi^+_{\mathsmaller{C},\,t} + \itPi^+_{\mathsmaller{C},\,t{-}1} \otimes \itPi^-_{\mathsmaller{C},\,t}) \otimes \itPi^+_{\mathsmaller{C},\,t{+}1}, \label{WctrlDef} \\[0.75ex]
W_{{\rm prop},\,t}(k) \,\defeq\,& \itPi^-_{\mathsmaller{C},\,t{-}1} \otimes \itPi^+_{\mathsmaller{C},\,t} \otimes \exp[((t+1)\scalebox{0.85}{$\,\%\,$}2)v(Z^{-}_{\mathsmaller{E}},t,k)] \nonumber \\[0.75ex]
-\;\,& \itGamma_{\mathsmaller{C},\,t{-}1} \otimes \itGamma_{\mathsmaller{C},\,t} \otimes U_{\mathsmaller{L}}(t) \otimes I_{\mathsmaller{E}} \label{WpropDef} \\[0.75ex]
+\;\,& \itPi^+_{\mathsmaller{C},\,t{-}1} \otimes \itPi^-_{\mathsmaller{C},\,t} \otimes \exp[{-}((t+1)\scalebox{0.85}{$\,\%\,$}2) v(Z^{-}_{\mathsmaller{E}},t,k)]\,, \nonumber
\end{align}
where $(t+1)\scalebox{0.85}{$\,\%\,$}2$ evaluates to $1$ if and only if $t$ is even, otherwise it is $0$, the operator $v(Z^{-}_{\mathsmaller{E}},t,k)$ reduces to a scalar-valued potential $v(z,t,k)\in\mathbb{R}$, when projected onto each eigenstate $|z\rangle_{\mathsmaller{E}}$, $z\in\{0,1\}$ of the operator $Z^{-}_{\mathsmaller{E}} \defeq \half(I-Z_{\mathsmaller{E}})$, $\forall t\in[1,6T]$, $\forall k\in[0,3]$. More details will be given below on the compositions and operations of the partial Hamiltonians $H_{\rm clock}$, $H_{\rm init}$, $H_{{\rm prop},\,t}(k)$, $t\in[1,6T]$, $k\in[0,3]$, and $H_{\rm swap}(k)$, $k\in[0,3]$, which are called the clock Hamiltonian, the initialization Hamiltonian, the Feynman-Kitaev propagators, and the elevation swap Hamiltonian respectively. 

An effective computational basis state $|c_0c_1\cdots c_{6\mathsmaller{T}{-}1}\rangle_{\mathsmaller{C}}$ of the clock register is called a {\em pulse clock state} when a $t\in[0,6T-1]$ exists such that $c_t=1$ while $c_{t'}=0$ for all $t'\in[0,6T-1]\setminus\{t\}$, in which case $|t\rangle_{\mathsmaller{C}} \defeq |c_0c_1\cdots c_{6\mathsmaller{T}{-}1}\rangle_{\mathsmaller{C}}$ is used as a shorthand notation for the basis state, which represents the Feynman-Kitaev time $t\in[0,6T-1]$. That only the pulse clock states are allowed for the clock register at the lowest energy is guaranteed by the clock Hamiltonian $H_{\rm clock}$ of equation (\ref{LiftHFKclock}), which is easily computable for any given configuration point $q\in\calC'$ during a Monte Carlo simulation over a configuration space $\calC'=\mathfrak{M}(\calC)$ as a homophysical image of the computational rebit configuration space $\calC\defeq\{0,1\}^{6\mathsmaller{T}}$, by simply parsing $q$ to determine the computational $6T$-rebit basis state $|s(q)\rangle_{\mathsmaller{C}} \defeq |s_0(q)\cdots s_{6\mathsmaller{T}{-}1}(q)\rangle_{\mathsmaller{C}} \in \calH(\calC)$ that is consistent with $q\in\calC'$, then reducing the $\calC$-diagonal clock Hamiltonian to $H_{\rm clock} \sim \scalebox{1.15}{[} 1 - \sum_{t=0}^{6\mathsmaller{T}{-}1}s_t(q) \scalebox{1.15}{]}^2 \times |s(q)\rangle_{\mathsmaller{C}}\langle s(q)|_{\mathsmaller{C}}$, when computing any Gibbs transition amplitude from $q$ to another configuration point $r\in\calC'$. The initialization Hamiltonian $H_{\rm init}$ sets up the predetermined initial state $|\phi_0\rangle_{\mathsmaller{L}} \defeq |l'_1\cdots l'_{\mathsmaller{N}}\rangle_{\mathsmaller{L}}$, $(l'_1,\cdots\!,l'_{\mathsmaller{N}})\in\{0,1\}^{\mathsmaller{N}}$ for the logic register at the Feynman-Kitaev time $|(t=0)\rangle_{\mathsmaller{C}}\defeq|10\cdots 0\rangle_{\mathsmaller{C}}$. Then the Feynman-Kitaev propagators $\{H_{{\rm prop},\,t}(k):t\in[1,T],\,k\in[0,3]\}$ will generate the series of quantum states $\{|\phi_t\rangle_{\mathsmaller{L}}:t\in[1,6T]\}\subseteq\calH(\{0,1\}^{\mathsmaller{N}})$ recursively through $|\phi_t\rangle_{\mathsmaller{L}} = U_{\mathsmaller{L}}(t)|\phi_{t{-}1}\rangle_{\mathsmaller{L}}$ for all $t\in[1,6T]$. At any fixed instant of the real (or Markovian) time $\sft\in\mathbb{N}$, or at any finer-resolved time instant within each discrete $\sft$-step, the instantaneous Hamiltonian $\sft\in\mathbb{N}$ is frustration-free, whose ground state(s) can only live in the linear space that is spanned by the so-called Feynman-Kitaev history states $\{|t\rangle_{\mathsmaller{C}}|\phi_t\rangle_{\mathsmaller{L}}|z\rangle_{\mathsmaller{E}}:t\in[0,6T-1],\,z\in\{0,1\}\}$.

\begin{definition}{(SFF Group of Hamiltonians, SFF Periodic Sequence of Hamiltonians)}\label{SFFPeriodicSequence}\\
A group of Hamiltonians $\{H_k:k\in[0,K-1]\}$ supported by a configuration space $\calC$ is said to generate a periodic sequence of Hamiltonians $\{H(\sft):\sft\in\mathbb{N}\}$ with period $K$, when $H(\sft) = H_{\sft\mathsmaller{\,\%\,K}}$, $\forall\sft\in\mathbb{N}$, where $\sft\scalebox{0.85}{$\,\%\,$}K$ denotes the remainder of $\sft$ modulo $K$. Such periodic sequence of Hamiltonians in turn generates a time-inhomogeneous Markov chain \cite{Seneta81,Winkler03,Stroock05} with a periodic sequence of Markov operators or transition matrices $\{M(\sft):\sft\in\mathbb{N}\}$, where $M(\sft) \defeq [\psi_0(H(\sft))] \exp\{{-}\tau(\sft)H(\sft)\} [\psi_0(H(\sft))]^{{-}1}\!$ as specified in \myLemma\;\ref{QuasiStochasticOper}, $\tau(\sft) > 0$ is a $\sft$-dependent constant, $\forall\sft\in\mathbb{N}$.

Starting from an initial probability distribution $\phi_0^2(q\in\calC)$, such $\{M(\sft):\sft\in\mathbb{N}\}$, being called a periodic Markov chain, generates a sequence of probability distributions $\{\phi_{\sft}^2(q\in\calC):\sft\in\{0\}\cup\mathbb{N}\}$ such that $\phi_{\sft}^2 \defeq M(\sft) \phi_{\sft{-}1}^2$, $\forall\sft\in\mathbb{N}$. A period-averaged probability distribution $\scalebox{1.1}{$\langle$} \phi_{\sft}^2 \scalebox{1.1}{$\rangle$}$ is defined as $\scalebox{1.1}{$\langle$} \phi_{\sft}^2 \scalebox{1.1}{$\rangle$} \defeq K^{{-}1}\sum_{k=0}^{\mathsmaller{K}{-}1}\phi_{\sft+k}^2$, $\forall\sft\in\{0\}\cup\mathbb{N}$. When $\lim_{\,\sft\,\rightarrow\,\infty\!} \scalebox{1.1}{$\langle$} \phi_{\sft}^2 \scalebox{1.1}{$\rangle$} \defeq \scalebox{1.1}{$\langle$} \phi_{\infty}^2 \scalebox{1.1}{$\rangle$}\!$ exists in the topology of total variation distance and is independent of the initial $\phi_0^2$, then the generating periodic Markov chain $\{M(\sft):\sft\in\mathbb{N}\}$ is said to converge, and $\scalebox{1.1}{$\langle$} \phi_{\infty}^2 \scalebox{1.1}{$\rangle$}\!$ is called its stationary distribution, which generalizes the like-named notion for a time-homogeneous, irreducible, and reversible Markov chain.

A group of Hamiltonians $\{H_k:k\in[0,K-1]\}$, and the periodic sequence of Hamiltonians $\{H(\sft)=H_{(\sft{-}1)\mathsmaller{\,\%\,K}}:\sft\in\mathbb{N}\}$ generated by it, are both called separately frustration-free (SFF), when every Hamiltonian in the group $\{H_k:k\in[0,K-1]\}$ satisfies all the conditions as specified in Definition \ref{defiSFFH} for it to be SFF, and further, the periodic sequence of Hamiltonians $\{H(\sft):\sft\in\mathbb{N}\}$ generates a periodic Markov chain that converges to a stationary distribution $\scalebox{1.1}{$\langle$} \phi_{\infty}^2 \scalebox{1.1}{$\rangle$}$.
\vspace{-1.5ex}
\end{definition}

When homophysically implemented into a system of bi-fermions, the homophysical image of the time-dependent Feynman-Kitaev Hamiltonian $\{H_{\mathsmaller{\rm FK}}(\sft):\sft\in\mathbb{N}\}$ as defined in equations (\ref{LiftHFKtotal}) through (\ref{WpropDef}), still denoted by the same $\{H_{\mathsmaller{\rm FK}}(\sft):\sft\in\mathbb{N}\}$ for brevity of the mathematical formulas, is an SFF periodic sequence of Hamiltonians with period $4$, which is generated by a group of Hamiltonians $\{H_{\mathsmaller{\rm FK},\,k}\}_{k\in[0,3]}$, with $H_{\mathsmaller{\rm FK},\,k} \defeq H_{\rm clock} \,+\, H_{\rm init} \,+\, {\textstyle{ \scalebox{1.15}{$\sum$} }}_{t=1}^{6\mathsmaller{T}}\,H_{{\rm prop},\,t}(k) \,+\, H_{\rm swap}(k)$, $\forall k\in[0,3]$. Exactly to which of the two homophysical Feynman-Kitaev Hamiltonians is referred to by the same notation $\{H_{\mathsmaller{\rm FK}}(\sft):\sft\in\mathbb{N}\}$ should always be easily inferred from the context without ambiguity. Moreover, when applicable, the adjective ``SFF'' is often used before a reference of Feynman-Kitaev Hamiltonian $\{H_{\mathsmaller{\rm FK}}(\sft):\sft\in\mathbb{N}\}$ to clearly indicate that it refers to the homophysical implementation in a system of bi-fermions which does possess the SFF property.

Let $0 < v_0 = O(\poly(T+N))$ and $0 < \tau_0 = O(\poly(T+N))$ be two sufficiently large but still polynomially bounded constants. In order for the SFF Feynman-Kitaev Hamiltonian $\{H_{\mathsmaller{\rm FK}}(\sft):\sft\in\mathbb{N}\}$ to generate a useful and efficient lifted Markov chain, the $\mathbb{R}$-valued potential $v(z,t,k)$ as a function of $(z,t,k)\in\{0,1\}\times[0,6T-1]\times[0,3]$ is so designed that a periodic Markov chain
\begin{equation}
\{M(\sft) \defeq [\psi_0(H_{\mathsmaller{\rm FK}}(\sft))] \exp\{{-}\tau(\sft)H_{\mathsmaller{\rm FK}}(\sft)\} [\psi_0(H_{\mathsmaller{\rm FK}}(\sft))]^{{-}1}\!:\tau(\sft)>0,\,\sft\in\mathbb{N}\} \label{PeriodicMarkovChainDefi}
\end{equation}
works by repeating 4-phase cycles (or called 4-stroke cycles) on each $6$-vertex segment of the ring-shaped graph as depicted in Fig.\;\ref{SixVertexSegment}, where

{\em Phase 1}\,: in the first phase/stroke with $\sft\scalebox{0.85}{$\,\%\,$}4 = k = 1$, the potential values are set as $v(0,6t+2,1) = {-}v(0,6t+4,1) = v_0$, and $v(0,6t,1) = 0$ for the lower elevation, while $v(1,6t+4,1) = {-}v(1,6t,1) = v_0$, and $v(1,6t+2,1) = 0$ for the higher elevation, $\forall t\in[0,T-1]$, and the Markov transition matrix $M(\sft)$ is applied with $\tau(\sft) = \tau_0$, such that an equilibrium distribution is reached, which has the probability amplitude concentrated in the Feynman-Kitaev history states $\{|(6t+i)\rangle_{\mathsmaller{C}}|\phi_{6t+i}\rangle_{\mathsmaller{L}}|0\rangle_{\mathsmaller{E}}:i\in\{2,3\}\}$ at the lower elevation, and the Feynman-Kitaev history states $\{|(6t+j)\rangle_{\mathsmaller{C}}|\phi_{6t+j}\rangle_{\mathsmaller{L}}|1\rangle_{\mathsmaller{E}}:j\in\{4,5\}\}$ at the higher elevation, moreover, the probability amplitude is substantially equally distributed between each pair of adjacent and connected Feynman-Kitaev history states at both the lower and the higher elevations;

{\em Phase 2}\,: in the second phase/stroke with $\sft\scalebox{0.85}{$\,\%\,$}4 = k = 2$, the potential values are set as $v(0,6t+4,2) = {-}v(0,6t,2) = v_0$, and $v(0,6t+2,2) = 0$ for the lower elevation, while $v(1,6t+2,2) = {-}v(1,6t+4,2) = v_0$, and $v(1,6t,2) = 0$ for the higher elevation, $\forall t\in[0,T-1]$, and the Markov transition matrix $M(\sft)$ is applied with $\tau(\sft) = \tau_0$, such that an equilibrium distribution is reached, which has the probability amplitude concentrated in the Feynman-Kitaev history states $\{|(6t+i)\rangle_{\mathsmaller{C}}|\phi_{6t+i}\rangle_{\mathsmaller{L}}|0\rangle_{\mathsmaller{E}}:i\in\{4,5\}\}$ at the lower elevation, and the Feynman-Kitaev history states $\{|(6t+j)\rangle_{\mathsmaller{C}}|\phi_{6t+j}\rangle_{\mathsmaller{L}}|1\rangle_{\mathsmaller{E}}:j\in\{2,3\}\}$ at the higher elevation, moreover, the probability amplitude is substantially equally distributed between each pair of adjacent and connected Feynman-Kitaev history states at both the lower and the higher elevations;

{\em Phase 3}\,: in the third phase/stroke with $\sft\scalebox{0.85}{$\,\%\,$}4 = k = 3$, the potential values are set as $v(z,6t,3) = {-}v(z,6t+2,3) = v_0$, while $v(z,6t+4,3) = 0$, $\forall t\in[0,T-1]$, $\forall z\in\{0,1\}$, and the Markov transition matrix $M(\sft)$ is applied with $\tau(\sft) = \tau_0$, to effect transfer of probability amplitude from the Feynman-Kitaev history states $\{|(6t+i)\rangle_{\mathsmaller{C}}|\phi_{6t+i}\rangle_{\mathsmaller{L}}|z\rangle_{\mathsmaller{E}}:i\in[2,5]\}$ to the adjacent Feynman-Kitaev history states $\{|(6t+j)\rangle_{\mathsmaller{C}}|\phi_{6t+j}\rangle_{\mathsmaller{L}}|z\rangle_{\mathsmaller{E}}:j\in\{0,1\}\}$, such that when an equilibrium distribution is reached, each of the Feynman-Kitaev history states $\{|(6t+i)\rangle_{\mathsmaller{C}}|\phi_{6t+i}\rangle_{\mathsmaller{L}}|z\rangle_{\mathsmaller{E}}:i\in[2,5]\}$ retains merely an exponentially small thus negligible probability amplitude, while each of the Feynman-Kitaev history states $\{|(6t+j)\rangle_{\mathsmaller{C}}|\phi_{6t+j}\rangle_{\mathsmaller{L}}|z\rangle_{\mathsmaller{E}}:j\in\{0,1\}\}$ gets substantially all of the available probability amplitude, $\forall t\in[0,T-1]$, $\forall z\in\{0,1\}$;

{\em Phase 4}\,: in the fourth phase/stroke with $\sft\scalebox{0.85}{$\,\%\,$}4 = k = 0$, all of the Feynman-Kitaev propagators are switched off, while the elevation swap Hamiltonian $H_{\rm swap}(k=0)$ is turned on, so that the Markov operator $[\psi_0(H_{\rm swap}(k))] \exp\{{-}\tau(\sft)H_{\rm swap}(k)\} [\psi_0(H_{\rm swap}(k))]^{{-}1}$ is applied with $\tau(\sft) = O(\epsilon(T)) = O(T^{{-}1})$, to effect a cross-elevation transition with probability $\epsilon(T) = O(T^{{-}1})$.

It is now easily seen that the repeated 4-phase or 4-stroke cycles of the periodic Markov chain $\{M(\sft):\sft\in\mathbb{N}\}$ amount to a shift register-type of operation for the probability amplitude in each of the ring-shaped graphs at an elevation $z\in\{0,1\}$, in conjunction with a brief and weak coupling periodically between the two elevations. At any instant of real time $\sft\in\mathbb{N}$, essentially all of the probability amplitude is localized at a single pair of consecutive vertices in each $6$-vertex segment as depicted in Fig.\;\ref{SixVertexSegment} at any elevation $z\in\{0,1\}$. In the first three phases of each 4-phase or 4-stroke cycle with $\sft \equiv 1,2,3 \pmod*{4}$, at the lower elevation with $z=0$, the localized probability amplitude transfers clockwise, firstly from a vertex pair $(6t,6t+1)$ to a vertex pair $(6t+2,6t+3)$, then from $(6t+2,6t+3)$ to $(6t+4,6t+5)$, and lastly from $(6t+4,6t+5)$ to a vertex pair $(6(t+1),6(t+1)+1)$ in the next $6$-vertex segment, much like a cloud of electric charges transferring in a charge coupled device (CCD) \cite{Boyle70,Amelio70,Boyle74}. The similar transfer of probability amplitude takes place at the higher elevation with $z=1$, only that the probability amplitude moves counterclockwise. In the fourth phase of each 4-phase or 4-stroke cycle with $\sft \equiv 0 \pmod*{4}$, the two rings at different elevations are allowed to briefly exchange probability amplitude, with a transition rate $\epsilon(T)=O(T^{{-}1})$.

It can be easily verified that the periodic Markov chain $\{M(\sft):\sft\in\mathbb{N}\}$ of equation (\ref{PeriodicMarkovChainDefi}) implements precisely the two-rings-at-two-elevations lifted Markov chain as desired, where, at the lower elevation, a walker moves predominantly clockwise from a vertex to its neighbor with probability $1-\epsilon(T)$, while at the upper elevation, a walker moves predominantly counterclockwise from a vertex to its neighbor with probability $1-\epsilon(T)$, and the two rings are weakly coupled with a transition probability $\epsilon(T)=O(T^{{-}1})$. For this reason, $\{M(\sft):\sft\in\mathbb{N}\}$ of equation (\ref{PeriodicMarkovChainDefi}) is called a {\em lifted periodic Markov chain}. It follows straightforwardly from well established results \cite{Chen99,Diaconis00,Vucelja16} that the lifted periodic Markov chain $\{M(\sft):\sft\in\mathbb{N}\}$ of equation (\ref{PeriodicMarkovChainDefi}) converges in $O(T)$ time to its stationary distribution, which is an equilibrium distribution with all Feynman-Kitaev history states equally populated. On the other hand, it is without difficulty to repeat the familiar analysis to demonstrate that the periodic Feynman-Kitaev Hamiltonian $\{H_{\mathsmaller{\rm FK}}(\sft):\sft\in\mathbb{N}\}$ as defined in equations (\ref{LiftHFKtotal}) through (\ref{WpropDef}) can be made PLTKD, with each additive term in the Lie-Trotter-Kato decomposition being homophysically implementable as a node-determinate and efficiently computable tensor monomials on bi-fermions. In summary, the periodic Feynman-Kitaev Hamiltonian $\{H_{\mathsmaller{\rm FK}}(\sft):\sft\in\mathbb{N}\}$ and the lifted periodic Markov chain $\{M(\sft):\sft\in\mathbb{N}\}$ generated by it facilitate highly efficient methods of MCQC.

By way of example and not limitation, the following algorithm describes an overall process for solving a general classical or quantum computational problem via Monte Carlo quantum computing, which firstly designs a quantum algorithm that solves the given computational problem, then synthesizes a quantum circuit and creates a Feynman-Kitaev construct whose bi-fermion implementation has either a time-independent SFF Hamiltonian or an SFF periodic sequence of Hamiltonians, to generate either a time-homogenous or a lifted periodic Markov chain, finally runs a Monte Carlo simulation using said Markov chain and derives from its stationary distribution a solution to the computational problem.

\renewcommand{\labelenumi}{\thealgorithm.\arabic{enumi}}
\begin{algorithm}{(Monte Carlo  Quantum Computing)} \label{MCQCOverall}
\vspace{-1.5ex}
\begin{enumerate}[start=1]
\item Design a quantum algorithm that solves a given computational problem; \label{OverallDesignQuanAlg}
\item Construct a quantum circuit processing computational rebits, {\it i.e.}, a sequence of quantum gates that implements the designed quantum algorithm using real-valued and self-inverse quantum gates, with the total number of such quantum gates minimized; \label{OverallConstructQuanCirc}
\item Create a Feynman-Kitaev construct for the quantum circuit, using either a time-independent Hamiltonian or a periodic sequence of Hamiltonians; \label{OverallCreateFeynmanKitaev}
\item Homophysically map said time-independent Hamiltonian or periodic sequence of Hamiltonians to either a time-independent SFF Hamiltonian or an SFF periodic sequence of Hamiltonians governing a physical system of bi-fermions; \label{OverallHomoPhysMap}
\item Run Monte Carlo simulations using either a time-homogeneous Markov chain generated by the time-independent SFF Hamiltonian, or a lifted periodic Markov chain generated by the SFF periodic sequence of Hamiltonians, to sample from the stationary distribution and derive a solution to the original computational problem. \label{OverallRunMonteCarlo}
\end{enumerate}
\end{algorithm}
\vspace{-1.5ex}

As being discussed and specified previously, both multi-rebit-controlled single-rebit-transforming and multi-rebit-controlled multi-rebit-transforming quantum gates can be used to construct the quantum circuit in step \ref{MCQCOverall}.\ref{OverallConstructQuanCirc}, although a typical multi-rebit-transforming quantum gate might entail a different and likely higher computational cost than a single-rebit-transforming quantum gate for MCQC implementations and simulations. The quantum circuit should then be optimized so that the sum of the computational costs of all multi-rebit-transforming or single-rebit-transforming quantum gates is minimized.

It is only trivial a point to note that quantum mechanics and computing are not limited to isolated quantum systems described by a pure quantum state. Rather, they can be, as have already been, easily generalized to open quantum systems in mixed quantum states described by density matrices (also known as density operators) \cite{Feynman72,Davis76,Gruska99,Nielsen10}. A quantum computation executes a BQP algorithm as a sequence of {\em primitive gate operations} to a general quantum system associated with a {\em description size} $N\in\mathbb{R}$, where the description size characterizes the minimally required number of classical bits to encode the setup and parameters of the quantum system. In one \iftoggle{ForUSPTO} {preferred embodiment} {example}, $N$ is the size of a configuration space associated with the quantum system. In another \iftoggle{ForUSPTO} {preferred embodiment} {example}, $N$ counts the number of particles or modes of field constituting the quantum system. Each primitive gate operation is a so-called {\em superoperator effecting a transformation on density matrices} representing states of the physical system \cite{Davis76,Gruska99}, with said transformation involving multiplications and taking traces of tensor-factor matrices, and being implemented either numerically or physically up to an $O(1/\poly(N))$-bounded relative error. Said tensor-factor matrices are associated with low-dimensional subsystems and have dimensions upper-bounded by a constant or $O(\log N)$. The quantum system is initially in either a {\em simple classical state} with each constituent particle or mode of field individually set to a predetermined basis state, or a {\em simple quantum state} that can be generated from a simple classical state followed by an application of an $O(\poly(N))$ number of primitive gates. The density matrices representing the possible initial states and result states of a quantum system after an application of an $O(\poly(N))$ number of primitive gates constitute a space of the so-called {\em BQP computable states}, which captures all of the relevant states to quantum computing. Quantum states out of the space of BQP computable (states) are too much entangled to be accessible and useful \cite{Mora05,Gross09,Bremner09}.

\begin{definition}{(BQP Computable Density Matrices)}\label{defiPolyDensMatr}\\
A density matrix representing a state of a quantum system with a description size $N\in\mathbb{R}$ is BQP computable when the represented state of the quantum system can be generated by applying an $O(\poly(N))$ number of primitive gated implemented numerically or physically up to an $O(1/\poly(N))$-bounded relative error, said quantum system being initially in a simple classical state having each constituent particle or mode of field individually set to a predetermined basis state.
\vspace{-1.5ex}
\end{definition}

Quantum computing is just to generate or manipulate BQP computable density matrices, while classical computing, as a special case of quantum computing, is just to generate and manipulate a particular kind of BQP computable states that are necessarily diagonal and non-negative. The proposition of {\myTheorem} \ref{ThirdTheorem}, {\rm BPP}$\,=\,${\rm BQP}, simply says that the computational power of generating and manipulating only diagonal and non-negative density matrices is polynomially equivalent to that of generating and manipulating all BQP computable density matrices.

\begin{definition}{(Classical and Quantum Computational Procedures)}\label{defiCompProc}\\
A quantum computational procedure is a computational process that employs a BQP algorithm to generate and manipulate a plurality of BQP computable density matrices, said density matrices representing pure or mixed quantum states of a physical system. A classical computational procedure is a special case of a quantum computational procedure with respect to the same physical system, which employs a BPP algorithm to generate and manipulate a plurality of BQP computable density matrices that are necessarily diagonal and non-negative, said density matrices representing classical probabilistic states of said physical system.
\end{definition}

\begin{definition}{(Minimally Entangled Density Matrices, Substantially Entangled Density Matrices, Practically Substantially Entangled Density Matrices)}\label{defiEntaDensMatr}\\
Consider a quantum system supported by a configuration space $\calC$, and a density matrix $M$ representing a state of the quantum system. The density matrix $M$ is said to be minimally entangled, or substantially entangle, or practically substantially entangled, when its configuration coordinate representation $\{\langle r|M|q\rangle:(r,q)\in\calC^2\}$ as a density on the measurable space $(\calC^2,\Bor(\calC^2))$ is minimally entangled, or substantially entangle, or practically substantially entangled, respectively, where $\Bor(\calC^2)$ denotes the smallest $\sigma$-algebra consisting all of the Borel subsets of $\calC^2$.
\vspace{-1.5ex}
\end{definition}

The methods and theory presented in this \iftoggle{ForUSPTO} {specification} {documentation} have shown that MCQC is able to implement quantum computational procedures on classical computers, so to generate and simulate many of the densities as well as density matrices which have hitherto been found or thought to be practically substantially entangled.

Lastly, taking cue from the hierarchically leveled specialization, organization, and cooperation of programming languages and software in the tremendously successful industry of classical computing, where there are low-level programming languages such as machine codes and assembly languages, that are strongly coupled to specific hardware architectures and machine instruction sets, which can run very efficiently but are difficult to use and hardly portable, and by contrast, high-level programming languages that provide strong abstraction from the hardware and architecture details of the machines being used, which are much more friendly to programmers and easily portable, as well as importantly, compilers that translate source codes from high-level programming languages to low-level programming languages and often perform code optimizations during the course, it is easily envisioned that a large scale adoption of Monte Carlo quantum computing will benefit from an MCQC compiler that lies in between and bridges two specialized domains of quantum computing and applications, where in the higher-level domain, users and applications of quantum computing can be oblivious of the underlying MCQC mechanism, particularly the MCMC details, but focus on constructing/programming general and abstract quantum algorithms/circuits that can be run on a bona fide, yet-to-be-built, materially quantum piece of hardware, as well as on a Monte Carlo-simulated quantum process using an ordinary classical computer in conjunction with the presently specified method of Monte Carlo quantum computing, while in the lower-level domain, an MCQC compiler and programming and software utilities take care automatically, and free users of higher-level quantum programming and applications from considerations of\,:

\vspace{-1.5ex}
\begin{itemize}[itemsep=0.75ex]
\item[] {\em 1) Creating a Feynman-Kitaev construct that turns a quantum algorithm/circuit into either a time-independent Hamiltonian or a periodic sequence of Hamiltonians, which generates either a time-homogeneous or a lifted periodic Markov chain, whose stationary distribution encodes the result state of said quantum algorithm/circuit;}
\item[] {\em 2) Homophysically mapping said either time-independent Hamiltonian or a periodic sequence of Hamiltonians to an either time-independent SFF Hamiltonian or an SFF periodic sequence of Hamiltonians that governs a physical system of bi-fermions;}
\item[] {\em 3) Running Monte Carlo simulations using either a time-homogeneous Markov chain generated by the time-independent SFF Hamiltonian or a lifted periodic Markov chain generated by the SFF periodic sequence of Hamiltonians, to sample from the stationary distribution, and to derive a solution to the original computational problem.}
\end{itemize}
\vspace{-1.5ex}

\iftoggle{ForUSPTO} {
According to a preferred embodiment, Fig. \ref{MethodSimuSyst} illustrates one process 700 of simulating a multivariable system on a classical computer, which can be incorporated into a software system and related computer program products. Process 700 also involves computer-readable source and machine codes as well as numerical data, which are stored, processed, and executed by a hardware system and related physical device. Process 700 comprises a step 710 of receiving data or signals describing a multivariable system, followed by a step 720 of generating a plurality of result sampling points.

At step 710, data or signals describing a multivariable system are received, which comprise data or signals that represent a configuration space 711, an SFF partial Hamiltonian 712, and an objective operator 713. In one preferred embodiment, said data or signals are computer-readable numerical data stored in memory devices, which are read, written, processed, and transferred by computing devices. In another preferred embodiment, data or signals are physical signals including but not limited to electric charges, electric currents, voltages, magnetic fields, magnetic moments, electromagnetic fields, electromagnetic waves, masses of matter, number of particles, strengths of forces, physical locations, velocities of motion, mechanical energies, mechanical waves, chemical and material properties.

Configuration space 711 in turn comprises a plurality of configuration points, which are associated with a plurality of canonical variables. In one preferred embodiment, configuration space 711 is the collection of spatial coordinates of particles constituting a many-body quantum system with particle position operators being the canonical variables. In another preferred embodiment, the configuration points are strings of $0$ or $1$ valued binary variables representing configurations of spins, or qubits that comprise a quantum computing device. In yet another preferred embodiment, the configuration points are lattice points of a lattice model, namely, a physical model that is defined on a lattice, such as a quantum Ising model, a Heisenberg model of magnetism, a Hubbard model, a $t$-$J$ model, or a lattice field theory. In yet another preferred embodiment, the configuration points are vertices of a graph on which a classical random walk or a quantum walk takes place. In still another preferred embodiment, every configuration point in configuration space 711 is a pair of two coordinates, with each of the two coordinates representing values of canonical variables associated with particles constituting a many-body quantum system. In yet another preferred embodiment, every configuration point in configuration space 711 is a Feynman path, which is a tuple of a plurality of coordinates, with each of the plurality of coordinates representing values of canonical variables associated with particles constituting a many-body quantum system. Each preferred embodiment is given by way of example but no means of limitation. Configuration space 711 is endowed with a topology and comprising $2^{D_1}$ connected components, $D_1\in\mathbb{R}$, where each pair of different such connected components are topologically disconnected, each individual connected component is topologically connected and has a dimension smaller than or equal to $D_2$, $D_2\in\mathbb{R}$, $D_2\ge 0$, while at least one of the connected components contains at least one configuration point requiring a minimum number $D_2$ of numerical values to label and distinguish from other configuration points.

SFF partial Hamiltonian 712 is a function or operator involving the canonical variables, which is a combination of a plurality of DFF or GFF partial Hamiltonians, each of which involves a substantially small number of the canonical variables. In one preferred embodiment, SFF partial Hamiltonian 712 is a linear sum of DFF or GFF partial Hamiltonians. In another preferred embodiment, SFF partial Hamiltonian 712 is Lie-Trotter-Kato decomposable into a sum of DFF or GFF partial Hamiltonians. The SFF partial Hamiltonian is associated with an SFF ground state, which is a practically substantially entangled density on said configuration space. In one preferred embodiment, the SFF ground state is a element in a Hilbert space of functions supported by the configuration space. In another preferred embodiment, the SFF ground state is a density matrix supported by the configuration space comprising configuration points as a pair of two coordinates, with each of the two coordinates representing values of canonical variables associated with particles constituting a many-body quantum system. Each of the DFF or GFF partial Hamiltonians in turn is associated with one or more efficiently computable partial ground states, one of which as a density on the configuration space is substantially the same as the SFF ground state. That a partial ground state denoted by $\psi'_0$ is substantially the same as the SFF ground state denoted by $\psi_0$ means, for substantially any pair of configuration points $q$ and $r$ on which both $\psi'_0$ and $\psi_0$ are defined, that the fraction $\psi'_0(r)/\psi'_0(q)$ is substantially the same as the fraction $\psi_0(r)/\psi_0(q)$ in one preferred embodiment, or that the plus or minus numerical sign of the fraction $\psi'_0(r)/\psi'_0(q)$ is substantially the same as the plus or minus numerical sign of the fraction $\psi_0(r)/\psi_0(q)$ in another preferred embodiment.

Importantly, the partial ground states associated with the same GFF partial Hamiltonian are substantially mutually non-overlapping in the configuration space, in the sense of a substantially negligible upper bound for the set of node-uncertain configuration points, that are the configuration points on which two different partial ground states of the same partial Hamiltonian exist, whose numerical values make a product that is substantially non-negligible comparing to the sum of the norms of the same two numerical values. Also importantly, each DFF partial Hamiltonian is a direct sum of FBM partial Hamiltonians, where each FBM partial Hamiltonian moves a configuration subspace that is annihilated by any other different FBM partial Hamiltonian, in other words or alternatively, any pair of different FBM partial Hamiltonians in the same direct sum of a DFF partial Hamiltonian multiply into a null operator. In the construction of a Feynman-Kitaev Hamiltonian, it is crucial to split the Feynman-Kitaev propagators into two groups, with each group forming a direct sum and constituting a DFF partial Hamiltonian.

In one preferred embodiment, SFF partial Hamiltonian 712 is the total energy operator of a many-body quantum system. In another preferred embodiment, SFF partial Hamiltonian 712 is a finite- although high-dimensional matrix in association with a graph, a grid or mesh, a lattice of points, or a connected network. In yet another preferred embodiment, SFF partial Hamiltonian 712 is a lower-bounded self-adjoint operator acting on square-integrable functions supported by configuration space 711. In yet another preferred embodiment, SFF partial Hamiltonian 712 is an operator generating a strongly continuous one-parameter semigroup of operators acting on square-integrable functions supported by configuration space 711. Each preferred embodiment is given by way of example but no means of limitation.

Objective operator 713 is an operator depending on the canonical variables, which enters a linear, quadratic, or another form of nonlinear functional involving the SFF ground state to produce an expectation value. In one preferred embodiment, objective operator 713 is the same as SFF partial Hamiltonian 712 that measures the total energy of a physical system. In another preferred embodiment, objective operator 713 is a configuration coordinate projection operator, or called a position measurement operator, that produces an expectation value indicating the probability of finding the multivariable system at a configuration point within a predetermined subset of configuration space 711. Each preferred embodiment is given by way of example but no means of limitation.

Process step 720 further comprises step 721 and step 722 to generate a plurality of result sampling points, which can be used to compute a numerical estimate of the expectation value produced by objective operator 713. In step 721, a current sampling point (CSP) variable holding a CSP value is initialized to a configuration point, on which the SFF ground state is valued substantially different from $0$. Step 722 repeats an iterative procedure for a predetermined number $N$ of times, where the iterative procedure further comprises sub-step 723 and sub-step 724. Sub-step 723 executes a transition from the CSP value held by the CSP variable to a new sampling point in the configuration space, according to a transition probability matrix that depends on one or more of the partial Hamiltonian(s) and uses the partial ground states associated with the one or more partial Hamiltonian(s), where the transition probability matrix has a probability density function as a stationary distribution, which is related to the partial ground states of the one or more DFF or GFF partial Hamiltonian(s). In one preferred embodiment, said probability density function is point-wise the square of the absolute value of a partial ground state. Sub-step 724 transfers the CSP value held by the CSP variable to one of one or more result container(s) as a result sampling point, then resets the CSP value held by the CSP variable to the new sampling point. In another preferred embodiment, the very first result sampling point is labeled by $+1$, and for the subsequent result sampling points denoted by $q$ and $r$ that are obtained before and after a sub-step 723 respectively, the label for $r$ will be that of $q$ multiplied by $+1$ or $-1$ depending upon whether the same or oppositely signed numerical values are assumed by the partial ground state used by the transition probability matrix for the execution of sub-step 723, then accordingly, each of the result sampling points will be transferred to a positive result container or a negative result container depending upon whether the result sampling point is labeled by $+1$ or $-1$ respectively. In another preferred embodiment, the negative result container in the previous preferred embodiment defaults to void, and all of the would-be $(-1)$-labeled result sampling points are discarded. In yet another preferred embodiments, more than two result containers are used. Each preferred embodiment is given by way of example but no means of limitation.

The result sampling points obtained in process step 720 enable a numerical estimate for the expectation value associated with objective operator 713, with the numerical estimate having a relative error smaller than or equal to a formula $a(D_1+D_2)^{\,b}N^{-c}$ when both $D_1+D_2$ and $N$ becoming substantially large, with $a$, $b$, and $c$ being constants, both $a$ and $c$ further being positive.

According to a preferred embodiment, Fig. \ref{MethodSimuDens} illustrates another process 800 of simulating a multivariable system, which can be incorporated into a software system and related computer program products. Process 800 also involves computer-readable source and machine codes as well as numerical data, which are stored, processed, and executed by a hardware system and related physical devices. Process 800 comprises a step 810 of generating data or signals describing a multivariable system, followed by a step 820 of generating a plurality of result sampling points.

At step 810, data or signals describing a multivariable system are received, which comprise data or signals that represent a multi-dimensional space (MDS) of the first kind 811, a plurality of groups of low-dimensional densities (LDDs) 812, and an objective operator 813. In one preferred embodiment, said data or signals are computer-readable numerical data stored in memory devices, which are read, written, processed, and transferred by computing devices. In another preferred embodiment, data or signals are physical signals including but not limited to electric charges, electric currents, voltages, magnetic fields, magnetic moments, electromagnetic fields, electromagnetic waves, masses of matter, number of particles, strengths of forces, physical locations, velocities of motion, mechanical energies, mechanical waves, chemical and material properties.

MDS of the first kind 811 in turn comprises a plurality of multi-dimensional points (MDPs) of a first kind. In one preferred embodiment, MDS of the first kind 811 is the collection of spatial coordinates of particles constituting a many-body quantum system with particle position operators being the canonical variables. In another preferred embodiment, the MDPs of the first kind are strings of $0$ or $1$ valued binary variables representing configurations of spins, or qubits that comprise a quantum computing device. In yet another preferred embodiment, the MDPs of the first kind are lattice points of a lattice model, namely, a physical model that is defined on a lattice, such as a quantum Ising model, a Heisenberg model of magnetism, a Hubbard model, a $t$-$J$ model, or a lattice field theory. In yet another preferred embodiment, the MDPs of the first kind are vertices of a graph on which a classical random walk or a quantum walk takes place. In still another preferred embodiment, every MDP in MDS of the first kind 811 is a pair of two coordinates, with each of the two coordinates representing values of canonical variables associated with particles constituting a many-body quantum system. In yet another preferred embodiment, every MDP in MDS of the first kind 811 is a Feynman path, which is a tuple of a plurality of coordinates, with each of the plurality of coordinates representing values of canonical variables associated with particles constituting a many-body quantum system. Each preferred embodiment is given by way of example but no means of limitation. MDS of the first kind 811 is endowed with a topology and comprising $2^{D_1}$ connected components, $D_1\in\mathbb{R}$, where each pair of different such connected components are topologically disconnected, each individual connected component is topologically connected and has a dimension smaller than or equal to $D_2$, $D_2\in\mathbb{R}$, $D_2\ge 0$, while at least one of the connected components contains at least one MDP requiring a minimum number $D_2$ of numerical values to label and distinguish from other MDPs. MDS of the first kind 811 is associated with a plurality of means of factoring itself into a low-dimensional space (LDS) comprising a plurality of low-dimensional points (LDPs) and an MDS of a second kind comprising a plurality of MDPs of a second kind, such that each of the MDPs of the first kind in MDS of the first kind 811 is substantially the same as one and only ordered pair of an LDP and an MDP of the second kind, whereas said LDS is endowed with an LDS topology and comprises $2^{d_1}$ connected LDS components, $d_1\in\mathbb{R}$, where each pair of different such connected LDS components are topologically disconnected, each of such connected LDS components is topologically connected and has a dimension smaller than or equal to $d_2$, $d_2\in\mathbb{R}$, $d_2\ge 0$, at least one of the connected LDS components comprises one or more of said LDPs that require(s) a minimum number $d_2$ of numerical values to label and distinguish, wherein $2^{d_1}$ is substantially smaller than $2^{D_1}$ when $2^{D_1}$ is a substantially large number, and $d_2$  is substantially smaller than $D_2$ when $D_2$ is a substantially large number.

All LDDs in the plurality of groups of LDDs 812 are efficiently computable. Each said group of LDDs is associated with an LDS and an MDS of the second kind according to one of said plurality of means of factoring MDS of the first kind 811, with each said LDD included in such group of LDDs being associated with a portion of the MDS of the second kind associated with the group of LDDs, thus each of the LDDs is associated with a portion of MDS of the first kind 811, such that each MDP included in said portion of MDS of the first kind 811 is substantially the same as one and only one ordered pair of an LDP included in the LDS and an MDP of the second kind associated with the group of LDDs, whereas each pair of said portions of MDS of the first kind 811 associated respectively with two different LDDs included in the same group of LDDs are substantially non-overlapping with respect to a volumetric measure, such that the volumetric measure of the intersection set of said pair of said portions of said MDS of the first kind is substantially negligible comparing to the volumetric measure of each of the pair of said portions of said MDS of the first kind, wherein the plurality of groups of LDDs 812 collectively determine a practically substantially entangled high-dimensional density (HDD) that is uniquely defined over MDS of the first kind 811, such that the HDD is substantially the same as each LDD in the corresponding portion of MDS of the first kind 811 associated with that LDD. In one preferred embodiments, a group of LDDs consists of the eigenvectors of a Gibbs operator associated with the largest eigenvalue, while in another preferred embodiment, a group of LDDs comprises general signed densities. Furthermore, in one preferred embodiment, said volumetric measure is a Lebesgue volume measure with respect to a Riemannian metric on a Riemannian manifold, while in another preferred embodiment, said volumetric measure is a probability measure with a probability density function defined in relation to the HDD and the LDDs. Still further, in one preferred embodiment, the HDD is a ground state wavefunction of a physical system, while in another preferred embodiment, the HDD is a general signed density over a multi-dimensional space. In yet another preferred embodiment, the HDD is a element in a Hilbert space of functions supported by the MDS. In still another preferred embodiment, the HDD is a density matrix supported by the MDS comprising MDPs as a pair of two coordinates, with each of the two coordinates representing values of canonical variables associated with particles constituting a many-body quantum system. Each preferred embodiment is given by way of example but no means of limitation.

Objective operator 813 is an operator that enters a linear, quadratic, or another form of nonlinear functional involving the HDD to produce an expectation value. In one preferred embodiment, objective operator 813 is a partial Hamiltonian that measures the total energy of a physical system. In another preferred embodiment, objective operator 813 is an MDP projection operator, or called a position measurement operator, that produces an expectation value indicating the probability of finding the multivariable system at an MDP within a predetermined subset of MDS of the first kind 811. Each preferred embodiment is given by way of example but no means of limitation.

Process step 820 further comprises step 821 and step 822 to generate a plurality of result sampling points, which can be used to compute a numerical estimate of the expectation value produced by objective operator 813. In step 821, a current sampling point (CSP) variable holding a CSP value is initialized to an MDP, on which the HDD is valued substantially different from $0$. Step 822 repeats an iterative procedure for a predetermined number $N$ of times, where the iterative procedure further comprises sub-step 823 and sub-step 824. Sub-step 823 executes a transition from the CSP value held by the CSP variable to a new sampling point in MDS of the first kind 811, according to a transition probability matrix that depends on one or more groups of the plurality of groups of LDDs 812 and uses the LDDs in the one or more groups of LDDs, where the transition probability matrix has a probability density function as a stationary distribution, which is related to one or more groups of the plurality of groups of LDDs 812. In one preferred embodiment, said probability density function is point-wise the square or sum of squares of the absolute value(s) of one or more LDD(s). Sub-step 824 transfers the CSP value held by the CSP variable to one of one or more result container(s) as a result sampling point, then resets the CSP value held by the CSP variable to the new sampling point. In one preferred embodiment, the very first result sampling point is labeled by $+1$, and for the subsequent result sampling points denoted by $q$ and $r$ that are obtained before and after a sub-step 823 respectively, the label for $r$ will be that of $q$ multiplied by $+1$ or $-1$ depending upon whether the same or oppositely signed numerical values are assumed by an LDD used by the transition probability matrix for the execution of sub-step 823, then accordingly, each of the result sampling points will be transferred to a positive result container or a negative result container depending upon whether the result sampling point is labeled by $+1$ or $-1$ respectively. In another preferred embodiment, the negative result container in the previous preferred embodiment defaults to void, and all of the would-be $(-1)$-labeled result sampling points are discarded. In yet another preferred embodiments, more than two result containers are used. Each preferred embodiment is given by way of example but no means of limitation.

The result sampling points obtained in process step 820 enable a numerical estimate for the expectation value associated with objective operator 813, with the numerical estimate having a relative error smaller than or equal to a formula $a(D_1+D_2)^{\,b}N^{-c}$ when both $D_1+D_2$ and $N$ becoming substantially large, with $a$, $b$, and $c$ being constants, both $a$ and $c$ further being positive.

According to a preferred embodiment, Fig. \ref{MethodConsFK} illustrates one process 900 of constructing a multivariable system, which comprises three steps: a step 910 of receiving a description of a quantum algorithm, followed by a step 920 of constructing a Feynman-Kitaev (FK) register, further followed by a step 930 of constructing a separately frustration-free (SFF) partial Hamiltonian. Process 900 can be incorporated into a software system and related computer program products. Process 900 also involves computer-readable source and machine codes as well as numerical data, which are stored, processed, and executed by a hardware system and related physical devices.

At step 910, a description of a quantum algorithm is received, which comprises: a list 911 of a plurality qubits, an initial state 912, a linearly ordered sequence 913 of a plurality of quantum gates, and a quantum measurement operator 914. Each qubit in list 911 
is associated with a single-qubit configuration space, a single-qubit Hilbert space that comprising state vectors called single-qubit states, as well as single-qubit dynamical variables as operators acting on the single-qubit states. A Cartesian product of multiple single-qubit configuration spaces becomes a multi-qubit configuration space. In one preferred embodiment, the multi-qubit configuration space is continuous. In another preferred embodiment, the multi-qubit configuration space is a topological manifold. In yet another preferred embodiment, the multi-qubit configuration space is a discrete graph or a discrete space of lattice points. In yet another preferred embodiment, the multi-qubit configuration space is a continuous-discrete mixed space, namely, a space with both continuous and discrete coordinates or degrees of freedom. In still another preferred embodiment, every configuration point in the configuration space is a pair of two coordinates, with each of the two coordinates representing values of canonical variables associated with particles constituting a many-body quantum system. In yet another preferred embodiment, every configuration point in the configuration space is a Feynman path, which is a tuple of a plurality of coordinates, with each of the plurality of coordinates representing values of canonical variables associated with particles constituting a many-body quantum system. Each preferred embodiment is given by way of example but no means of limitation. A tensor product of multiple single-qubit Hilbert spaces becomes a multi-qubit Hilbert space supported by the multi-qubit configuration space. Each element in the multi-qubit Hilbert space is called a multi-qubit state, or state vector. Initial state 912 is a multi-qubit state. In one preferred embodiment, initial state 912 is a element in a Hilbert space of functions supported by the configuration space. In another preferred embodiment, initial state 912 is a density matrix supported by the configuration space comprising configuration points as a pair of two coordinates, with each of the two coordinates representing values of canonical variables associated with particles constituting a many-body quantum system. The multi-qubit Hilbert space has an inner product defined, which takes any pair of multi-qubit states as input and produces a numerical value as output, with the output numerical value also referred to as the inner product between the two input multi-qubit states. Two multi-qubit states are said to be orthogonal, or one multi-qubit state is said to be orthogonal to the other, when the inner product between them is substantially $0$. Each quantum gate in linearly ordered sequence 913 of a plurality of quantum gates involves a number of the single-qubit dynamical variables. Each such quantum gate acts on and transforms the multi-qubit states. The plurality of quantum gates in 913 are linearly ordered in the sense that, between any pair of such quantum gates, there is always one called earlier, and the other called later. In one preferred embodiment, some of the quantum gates are unitary matrices or operators. In another preferred embodiment, some of the quantum gates are non-unitary matrices or operators. In yet another preferred embodiment, some of the quantum gates are superoperators that map density matrices to density matrices \cite{Davis76,Gruska99,Nielsen10}. Quantum measurement operator 914 also involves a number of single-qubit dynamical variables, as well as acts on and transforms the multi-qubit states. In one preferred embodiment, quantum measurement operator 914 is a partial Hamiltonian that measures energy. In another preferred embodiment, quantum measurement operator 914 is a position operator that measures the coordinates of one or more qubit(s) in their respective single-qubit configuration space(s). Each preferred embodiment is given by way of example but no means of limitation. In the described quantum algorithm, a multi-qubit system comprising multiple qubits is initially at initial state 912. Then the quantum gates in linearly ordered sequence 913 are applied to the multi-qubit system to produce a result multi-qubit state that is a practically substantially entangled density supported by the multi-qubit configuration space. Finally, a quantum measurement is performed using quantum measurement operator 914 against the result multi-qubit state, so to yield an interested expectation value of quantum measurement operator 914.

Step 920 constructs a Feynman-Kitaev (FK) register, which further comprises a sub-step 921 of allocating a plurality of FK logic bits associated with a plurality of logic states and a plurality of FK clock bits associated with a clock configuration space consisting of clock configuration points, a sub-step 922 of creating a plurality of FK time projectors with each said FK time projector fixing said clock configuration points, and a sub-step 923 of creating a plurality of FK time propagators, where said plurality of FK time propagators are divided into a first group of first FK time propagators and a second group of second FK time propagators. Each said first FK time propagator is able to connect different clock configuration points in a first subset of said clock configuration space. Each said second FK time propagator is able to connect different clock configuration points in a second subset of said clock configuration space. A first graph is formed having said configuration points in the first subset of said clock configuration space as vertices and said first FK time propagators as edges. A second graph is formed having said configuration points in the second subset of said clock configuration space as vertices and said second FK time propagators as edges. It is crucial to make sure that the first graph and the second graph are disjoint.

Step 930 constructs an SFF partial Hamiltonian that is a combination of a plurality of directly frustration-free (DFF) or ground state frustration-free (GFF) partial Hamiltonians, which further comprises a sub-step 931 of creating an FK state initializer as a tensor product between an FK time projector and a logic state initializer, a sub-step 932 of creating a first FK state operator as a combination of a plurality of first FK state propagators, and a sub-step 933 of creating a second FK state operator as a combination of a plurality of second FK state propagators. In sub-step 931, the FK state initializer is a DFF or GFF partial Hamiltonian, and the logic state initializer produces an initial logic state, while the initial logic state is related to said initial state. In sub-step 932, the first FK state operator is a DFF or GFF partial Hamiltonian, while each said first FK state propagator is a tensor product between a first FK time propagator and one of said quantum gates transforming said logic states. In sub-step 933, the second FK state operator is a DFF or GFF partial Hamiltonian, while each said second FK state propagator is a tensor product between a second FK time propagator and one of said quantum gates transforming said logic states. Importantly, both the first FK state operator and the second FK state operator are a direct sum of FBM partial Hamiltonians, where each FBM partial Hamiltonian moves a configuration subspace that is annihilated by any other different FBM partial Hamiltonian, in other words or alternatively, any pair of different FBM partial Hamiltonians in the same direct sum of a first FK state operator or a second FK state operator multiply into a null operator. An SFF partial Hamiltonian so constructed in step 930 has a non-degenerate ground state, which in association with said quantum measurement operator 914 produces a numerical estimate for the expectation value of quantum measurement operator 914.

According to a preferred embodiment, Fig. \ref{MethodSolvSyst} illustrates one process 1000 of of solving a computational problem on a classical computer, which can be incorporated into a software system and related computer program products. Process 1000 also involves computer-readable source and machine codes as well as numerical data, which are stored, processed, and executed by a hardware system and related physical devices. Process 1000 comprises a step 1010 of generating data or signals describing a multivariable system, followed by a step 720 of generating a plurality of result sampling points in substantially the same manner as that being specified for process 700. In one preferred embodiment, step 1010 uses a method according to process 900. In another preferred embodiment, during step 722 of process 700, the plurality of FK time projectors included in sub-step 931 and the plurality of FK state propagators generated in sub-step 933 are applied one by one, where each such application of an FK time projector or an FK state propagator parses the current sampling point (CSP) then exerts the intended operation. In yet another preferred embodiment, during step 722 of process 700, the CSP is firstly parsed to select one or just a few of the plurality of FK time projectors included in sub-step 931 and the plurality of FK state propagators generated in sub-step 933 whose intended operation(s) produce(s) non-trivial result(s), then the selected one or just a few of the FK time projectors and the FK state propagators are applied to exert the intended operation(s).

Step 1010 generates data or signals describing a multivariable system, which further comprises a sub-step 1011 of allocating a configuration space, a sub-step 1012 of creating an SFF partial Hamiltonian, and a sub-step 1013 of generating an objective operator. In one preferred embodiment, said data or signals are computer-readable numerical data stored in memory devices, which are read, written, processed, and transferred by computing devices. In another preferred embodiment, data or signals are physical signals including but not limited to electric charges, electric currents, voltages, magnetic fields, magnetic moments, electromagnetic fields, electromagnetic waves, masses of matter, number of particles, strengths of forces, physical locations, velocities of motion, mechanical energies, mechanical waves, chemical and material properties.

The configuration space allocated in sub-step 1011 comprises a plurality of configuration points, which are associated with a plurality of canonical variables. In one preferred embodiment, the configuration space is the collection of spatial coordinates of particles constituting a many-body quantum system with particle position operators being the canonical variables. In another preferred embodiment, the configuration points are strings of $0$ or $1$ valued binary variables representing configurations of spins, or qubits that comprise a quantum computing device. In yet another preferred embodiment, the configuration points are lattice points of a lattice model, namely, a physical model that is defined on a lattice, such as a quantum Ising model, a Heisenberg model of magnetism, a Hubbard model, a $t$-$J$ model, or a lattice field theory. In yet another preferred embodiment, the configuration points are vertices of a graph on which a classical random walk or a quantum walk takes place. In still another preferred embodiment, every configuration point in the configuration space is a pair of two coordinates, with each of the two coordinates representing values of canonical variables associated with particles constituting a many-body quantum system. In yet another preferred embodiment, every configuration point in the configuration space is a Feynman path, which is a tuple of a plurality of coordinates, with each of the plurality of coordinates representing values of canonical variables associated with particles constituting a many-body quantum system. Each preferred embodiment is given by way of example but no means of limitation. The configuration space is endowed with a topology and comprising $2^{D_1}$ connected components, $D_1\in\mathbb{R}$, where each pair of different such connected components are topologically disconnected, each individual connected component is topologically connected and has a dimension smaller than or equal to $D_2$, $D_2\in\mathbb{R}$, $D_2\ge 0$, while at least one of the connected components contains at least one configuration point requiring a minimum number $D_2$ of numerical values to label and distinguish from other configuration points.

The SFF partial Hamiltonian created in sub-step 1012 is a function or operator involving the canonical variables, which is a combination of a plurality of partial Hamiltonians, each of which involves a substantially small number of the canonical variables. In one preferred embodiment, the SFF partial Hamiltonian is a linear sum of partial Hamiltonians. In another preferred embodiment, the SFF partial Hamiltonian is Lie-Trotter-Kato decomposable into a sum of partial Hamiltonians. The SFF partial Hamiltonian is associated with an SFF ground state, which is a practically substantially entangled density on said configuration space. In one preferred embodiment, the SFF ground state is a element in a Hilbert space of functions supported by the configuration space. In another preferred embodiment, the SFF ground state is a density matrix supported by the configuration space comprising configuration points as a pair of two coordinates, with each of the two coordinates representing values of canonical variables associated with particles constituting a many-body quantum system. Each of the partial Hamiltonians in turn is associated with one or more efficiently computable partial ground states, one of which as a density on the configuration space is substantially the same as the SFF ground state. That a partial ground state denoted by $\psi'_0$ is substantially the same as the SFF ground state denoted by $\psi_0$ means, for substantially any pair of configuration points $q$ and $r$ on which both $\psi'_0$ and $\psi_0$ are defined, that the fraction $\psi'_0(r)/\psi'_0(q)$ is substantially the same as the fraction $\psi_0(r)/\psi_0(q)$ in one preferred embodiment, or that the plus or minus numerical sign of the fraction $\psi'_0(r)/\psi'_0(q)$ is substantially the same as the plus or minus numerical sign of the fraction $\psi_0(r)/\psi_0(q)$ in another preferred embodiment. Importantly, the partial ground states associated with the same partial Hamiltonian are substantially mutually non-overlapping in the configuration space, in the sense of a substantially negligible upper bound for the set of node-uncertain configuration points, that are the configuration points on which two different partial ground states of the same partial Hamiltonian exist, whose numerical values make a product that is substantially non-negligible comparing to the sum of the norms of the same two numerical values. In one preferred embodiment, the SFF partial Hamiltonian is the total energy operator of a many-body quantum system. In another preferred embodiment, the SFF partial Hamiltonian is a finite- although high-dimensional matrix in association with a graph, a grid or mesh, a lattice of points, or a connected network. In yet another preferred embodiment, the SFF partial Hamiltonian is a lower-bounded self-adjoint operator acting on square-integrable functions supported by the configuration space allocated in sub-step 1011. In yet another preferred embodiment, the SFF partial Hamiltonian is an operator generating a strongly continuous one-parameter semigroup of operators acting on square-integrable functions supported by the configuration space allocated in sub-step 1011. Each preferred embodiment is given by way of example but no means of limitation.

The objective operator generated in sub-step 1013 is an operator depending on the canonical variables, which enters a linear, quadratic, or another form of nonlinear functional involving the SFF ground state to produce an expectation value. In one preferred embodiment, the objective operator is the same as the SFF partial Hamiltonian created in sub-step 1012 that measures the total energy of a physical system. In another preferred embodiment, the objective operator is a configuration coordinate projection operator, or called a position measurement operator, that produces an expectation value indicating the probability of finding the multivariable system at a configuration point within a predetermined subset of the configuration space allocated in sub-step 1011. Each preferred embodiment is given by way of example but no means of limitation.

According to a preferred embodiment, Fig. \ref{MethodSolvDens} illustrates one process 1100 of of solving a computational problem on a classical computer, which can be incorporated into a software system and related computer program products. Process 1100 also involves computer-readable source and machine codes as well as numerical data, which are stored, processed, and executed by a hardware system and related physical devices. Process 1100 comprises a step 1110 of generating data or signals describing a multivariable system, followed by a step 820 of generating a plurality of result sampling points in substantially the same manner as that being specified for process 800. In one preferred embodiment, step 1110 uses a method according to process 900. In another preferred embodiment, during step 822 of process 800, the plurality of FK time projectors included in sub-step 931 and the plurality of FK state propagators generated in sub-step 933 are applied one by one, where each such application of an FK time projector or an FK state propagator parses the current sampling point (CSP) then exerts the intended operation. In yet another preferred embodiment, during step 822 of process 800, the CSP is firstly parsed to select one or just a few of the plurality of FK time projectors included in sub-step 931 and the plurality of FK state propagators generated in sub-step 933 whose intended operation(s) produce(s) non-trivial result(s), then the selected one or just a few of the FK time projectors and the FK state propagators are applied to exert the intended operation(s).

At step 1110, data or signals describing a multivariable system are generated, which comprise data or signals that represent a multi-dimensional space (MDS) of the first kind 1111, a plurality of groups of low-dimensional densities (LDDs) 1112, and an objective operator 1113. In one preferred embodiment, said data or signals are computer-readable numerical data stored in memory devices, which are read, written, processed, and transferred by computing devices. In another preferred embodiment, data or signals are physical signals including but not limited to electric charges, electric currents, voltages, magnetic fields, magnetic moments, electromagnetic fields, electromagnetic waves, masses of matter, number of particles, strengths of forces, physical locations, velocities of motion, mechanical energies, mechanical waves, chemical and material properties.

MDS of the first kind 1111 in turn comprises a plurality of multi-dimensional points (MDPs) of a first kind. In one preferred embodiment, MDS of the first kind 1111 is the collection of spatial coordinates of particles constituting a many-body quantum system with particle position operators being the canonical variables. In another preferred embodiment, the MDPs of the first kind are strings of $0$ or $1$ valued binary variables representing configurations of spins, or qubits that comprise a quantum computing device. In yet another preferred embodiment, the MDPs of the first kind are lattice points of a lattice model, namely, a physical model that is defined on a lattice, such as a quantum Ising model, a Heisenberg model of magnetism, a Hubbard model, a $t$-$J$ model, or a lattice field theory. In yet another preferred embodiment, the MDPs of the first kind are vertices of a graph on which a classical random walk or a quantum walk takes place. Each preferred embodiment is given by way of example but no means of limitation. MDS of the first kind 1111 is endowed with a topology and comprising $2^{D_1}$ connected components, where each pair of different such connected components are topologically disconnected, each individual connected component is topologically connected and has a dimension smaller than or equal to $D_2$, while at least one of the connected components contains at least one MDP requiring a minimum number $D_2$ of numerical values to label and distinguish from other MDPs. MDS of the first kind 1111 is associated with a plurality of means of factoring itself into a low-dimensional space (LDS) comprising a plurality of low-dimensional points (LDPs) and an MDS of a second kind comprising a plurality of MDPs of a second kind, such that each of the MDPs of the first kind in MDS of the first kind 1111 is substantially the same as one and only ordered pair of an LDP and an MDP of the second kind, whereas said LDS is endowed with an LDS topology and comprises $2^{d_1}$ connected LDS components, where each pair of different such connected LDS components are topologically disconnected, each of such connected LDS components is topologically connected and has a dimension smaller than or equal to $d_2$, at least one of the connected LDS components comprises one or more of said LDPs that require(s) a minimum number $d_2$ of numerical values to label and distinguish, wherein $2^{d_1}$ is substantially smaller than $2^{D_1}$ when $2^{D_1}$ is a substantially large number, and $d_2$  is substantially smaller than $D_2$ when $D_2$ is a substantially large number.

All LDDs in the plurality of groups of LDDs 1112 are efficiently computable. Each said group of LDDs is associated with an LDS and an MDS of the second kind according to one of said plurality of means of factoring MDS of the first kind 1111, with each said LDD included in such group of LDDs being associated with a portion of the MDS of the second kind associated with the group of LDDs, thus each of the LDDs is associated with a portion of MDS of the first kind 1111, such that each MDP included in said portion of MDS of the first kind 1111 is substantially the same as one and only one ordered pair of an LDP included in the LDS and an MDP of the second kind associated with the group of LDDs, whereas each pair of said portions of MDS of the first kind 1111 associated respectively with two different LDDs included in the same group of LDDs are substantially non-overlapping with respect to a volumetric measure, such that the volumetric measure of the intersection set of said pair of said portions of said MDS of the first kind is substantially negligible comparing to the volumetric measure of each of the pair of said portions of said MDS of the first kind, wherein the plurality of groups of LDDs 1112 collectively determine a practically substantially entangled high-dimensional density (HDD) that is uniquely defined over MDS of the first kind 1111, such that the HDD is substantially the same as each LDD in the corresponding portion of MDS of the first kind 1111 associated with that LDD. In one preferred embodiments, a group of LDDs consists of the eigenvectors of a Gibbs operator associated with the largest eigenvalue, while in another preferred embodiment, a group of LDDs comprises general signed densities. Furthermore, in one preferred embodiment, said volumetric measure is a Lebesgue volume measure with respect to a Riemannian metric on a Riemannian manifold, while in another preferred embodiment, said volumetric measure is a probability measure with a probability density function defined in relation to the HDD and the LDDs. Still further, in one preferred embodiment, the HDD is a ground state wavefunction of a physical system, while in another preferred embodiment, the HDD is a general signed density over a multi-dimensional space. Each preferred embodiment is given by way of example but no means of limitation.

Objective operator 1113 is an operator that enters a linear, quadratic, or another form of nonlinear functional involving the HDD to produce an expectation value. In one preferred embodiment, objective operator 1113 is a partial Hamiltonian that measures the total energy of a physical system. In another preferred embodiment, objective operator 1113 is an MDP projection operator, or called a position measurement operator, that produces an expectation value indicating the probability of finding the multivariable system at an MDP within a predetermined subset of MDS of the first kind 1111. Each preferred embodiment is given by way of example but no means of limitation.
} {}

\iftoggle{ForUSPTO} {
\clearpage
\section*{CLAIMS}
\newcounter{claim}[section]
\refstepcounter{claim}

What is claimed is:

\vspace{-2.0ex}
\begin{itemize}[leftmargin=0.0ex]

\item[\theclaim.] \label{ClaimSimuSyst} A method of simulating a multivariable system, comprising:
\begin{itemize}[leftmargin=0.0ex]
\item[] receiving data or signals describing a multivariable system, said data or signals comprising:
    \begin{itemize}[itemsep=0.75ex]
    \item[] a description of a configuration space comprising in turn a plurality of configuration points, said configuration points being associated with a plurality of canonical variables, said configuration space being endowed with a topology and comprising $2^{D_1}$ connected components, each pair of different said connected components being topologically disconnected, each said connected component being topologically connected and having a dimension smaller than or equal to $D_2$, at least one of said connected component comprising one or more of said configuration points requiring a minimum number $D_2$ of numerical values to label and distinguish;
    \item[] a description of a separately frustration-free (SFF) partial Hamiltonian involving said canonical variables, said SFF partial Hamiltonian being a combination of a plurality of directly frustration-free (DFF) or ground state frustration-free (GFF) partial Hamiltonians, said SFF partial Hamiltonian being associated with an SFF ground state, said SFF ground state as a density on said configuration space being practically substantially entangled;
    \item[] each said DFF partial Hamiltonian being a direct sum of DFF-few-body-moving (DFF-FBM) partial Hamiltonians and being associated with a DFF ground state, each said DFF-FBM partial Hamiltonian involving a substantially small number of said canonical variables, different DFF-FBM partial Hamiltonians in said direct sum of each said DFF partial Hamiltonian involving disjoint sets of said canonical variables;
    \item[] each said GFF partial Hamiltonian being a combination of GFF-few-body-moving (GFF-FBM) partial Hamiltonians and being associated with a GFF ground state, each said GFF-FBM partial Hamiltonian involving a substantially small number of said canonical variables and being associated with said GFF ground state as a ground state;
    \item[] said multivariable system still further comprising an objective variable, said objective variable involving said canonical variables and being associated with an expectation value with respect to said SFF ground state;
    \end{itemize}
\item[] generating a plurality of result sampling points, comprising:
    \begin{itemize}
    \item[] initializing a current sampling point (CSP) value held by a CSP variable to a first sampling point as a member included in said configuration space, said SFF ground state being valued substantially different from zero at said first sampling point;
    \item[] repeating an iterative procedure for a predetermined number $N$ of times, said iterative procedure comprising:
        \begin{itemize}[itemsep=0.75ex]
        \item[] a first iterative step executing a transition from the CSP value held by said CSP variable to a second sampling point as a member included in said configuration space according to a transition probability matrix, said transition probability matrix depending on one or more of said DFF or GFF partial Hamiltonian(s) at the same time using said DFF or GFF ground states associated with said one or more of said DFF or GFF partial Hamiltonian(s), wherein said transition probability matrix has a probability density function as a stationary distribution, said probability density function is related to said DFF or GFF ground states associated with said one or more of said DFF or GFF partial Hamiltonian(s);
        \item[] a second iterative step transferring the CSP value held by said CSP variable to one of one or more result container(s) as a result sampling point and resetting the CSP value held by said CSP variable to said second sampling point;
        \end{itemize}
    \item[] whereby said result sampling points enable a numerical estimate for said objective variable, the numerical estimate having a relative error smaller than or equal to a formula $a(D_1+D_2)^{\,b}N^{-c}$ when both $D_1+D_2$ and $N$ becoming substantially large, with $a$, $b$, and $c$ being constants, both $a$ and $c$ further being positive.
    \end{itemize}
\end{itemize}
\refstepcounter{claim}

\item[\theclaim.] \label{ClaimConsFK} A method of constructing a multivariable system, comprising:
\begin{itemize}[leftmargin=0.0ex]
\item[] receiving a description of a quantum algorithm, said description comprising:
    \begin{itemize}[itemsep=0.75ex]
    \item[] a list of a plurality of qubits, each said qubit being associated with a first Hilbert space comprising states of a first kind supported by a first configuration space, each said qubit further being associated with dynamical variables of a first kind transforming said states of the first kind, whereby a Cartesian product of said first configuration spaces associated with said plurality of qubits becoming a second configuration space, a tensor product of said first Hilbert spaces associated with said plurality of qubits becoming a second Hilbert space comprising states of a second kind supported by the second configuration space, said second Hilbert space having an inner product function defined, said inner product function taking a pair of said states of the second kind as input and producing a numerical value as output, with said numerical value also referred to as the inner product between the two states of the second kind, whereby two states of the second kind are said to be orthogonal, or one state of the second kind is said to be orthogonal to the other, when the inner product between them is substantially $0$;
    \item[] an initial state, said initial state being a state of the second kind;
    \item[] an ordered sequence of quantum gates, each said quantum gate involving a number of said dynamical variables of the first kind and transforming said sates of the second kind;
    \item[] a quantum measurement operator involving said dynamical variables of the first kind;
    \item[] whereby said ordered sequence of quantum gates applied in order produce a result state, said result state is a state of the second kind and practically substantially entangled as a density on the second configuration space, said quantum measurement operator in association with said result state produces a first expectation value;
    \end{itemize}
\item[] constructing a Feynman-Kitaev (FK) register, comprising:
    \begin{itemize}[itemsep=0.75ex]
    \item[] allocating a plurality of FK logic bits associated with a plurality of logic states and a plurality of FK clock bits associated with a clock configuration space consisting of clock configuration points;
    \item[] creating a plurality of FK time projectors, each said FK time projector fixing said clock configuration points;
    \item[] creating a first group of first FK time propagators and a second group of second FK time propagators, each said first FK time propagator being able to connect different clock configuration points in a first subset of said clock configuration space, each said second FK time propagator being able to connect different clock configuration points in a second subset of said clock configuration space;
    \item[] whereby a first graph is formed having said configuration points in the first subset of said clock configuration space as vertices and said first FK time propagators as edges, a second graph is formed having said configuration points in the second subset of said clock configuration space as vertices and said second FK time propagators as edges, said first graph and said second graph are disjoint;
    \end{itemize}
\item[] constructing a separately frustration-free (SFF) partial Hamiltonian as a combination of a plurality of directly frustration-free (DFF) or ground state frustration-free (GFF) partial Hamiltonians, comprising:
    \begin{itemize}[itemsep=0.75ex]
    \item[] creating an FK state initializer as a tensor product between an FK time projector and a logic state initializer, said FK state initializer being a DFF or GFF partial Hamiltonian, said logic state initializer producing an initial logic state, said initial logic state being related to said initial state;
    \item[] creating a first FK state operator as a combination of a plurality of first FK state propagators, said first FK state operator being a DFF or GFF partial Hamiltonian, each said first FK state propagator being a tensor product between a first FK time propagator and one of said quantum gates transforming said logic states;
    \item[] creating a second FK state operator as a combination of a plurality of second FK state propagators, said second FK state operator being a DFF or GFF partial Hamiltonian, each said second FK state propagator being a tensor product between a second FK time propagator and one of said quantum gates transforming said logic states;
    \end{itemize}
\item[] whereby the ground state of said SFF partial Hamiltonian is non-degenerate and in association with said quantum measurement operator produces a second expectation value that is substantially the same as said first expectation value.
\end{itemize}
\refstepcounter{claim}

\item[\theclaim.] \label{ClaimSolvSyst} A method of solving a computational problem, comprising:
\vspace{-1.0ex}
\begin{itemize}[leftmargin=0.0ex]
\item[] generating data or signals describing a multivariable system, said data or signals comprising:
    \begin{itemize}[itemsep=0.75ex]
    \item[] creating a description of a configuration space comprising in turn a plurality of configuration points, said configuration points being associated with a plurality of canonical variables, said configuration space being endowed with a topology and comprising $2^{D_1}$ connected components, each pair of different said connected components being topologically disconnected, each said connected component being topologically connected and having a dimension smaller than or equal to $D_2$, at least one of said connected component comprising one or more of said configuration points requiring a minimum number $D_2$ of numerical values to label and distinguish;
    \item[] creating a separately frustration-free (SFF) partial Hamiltonian involving said canonical variables:
        \begin{itemize}[itemsep=0.75ex]
        \item[] said SFF partial Hamiltonian being a combination of a plurality of directly frustration-free (DFF) or ground state frustration-free (GFF) partial Hamiltonians, said SFF partial Hamiltonian being associated with an SFF ground state, said SFF ground state as a density on said configuration space being practically substantially entangled;
        \item[] each said DFF partial Hamiltonian being a direct sum of DFF-few-body-moving (DFF-FBM) partial Hamiltonians and being associated with a DFF ground state, each said DFF-FBM partial Hamiltonian involving a substantially small number of said canonical variables, different DFF-FBM partial Hamiltonians in said direct sum of each said DFF partial Hamiltonian involving disjoint sets of said canonical variables;
        \item[] each said GFF partial Hamiltonian being a combination of GFF-few-body-moving (GFF-FBM) partial Hamiltonians and being associated with a GFF ground state, each said GFF-FBM partial Hamiltonian involving a substantially small number of said canonical variables and being associated with said GFF ground state as a ground state;
        \end{itemize}
    \item[] generating an objective variable, said objective variable involving said canonical variables and being associated with an expectation value with respect to said SFF ground state;
    \end{itemize}
\item[] generating a plurality of result sampling points, comprising:
    \begin{itemize}
    \item[] initializing a current sampling point (CSP) value held by a CSP variable to a first sampling point as a member included in said configuration space, said SFF ground state being valued substantially different from zero at said first sampling point;
    \item[] repeating an iterative procedure for a predetermined number $N$ of times, said iterative procedure comprising:
        \begin{itemize}[itemsep=0.75ex]
        \item[] a first iterative step executing a transition from the CSP value held by said CSP variable to a second sampling point as a member included in said configuration space according to a transition probability matrix, said transition probability matrix depending on one or more of said DFF or GFF partial Hamiltonian(s) at the same time using said DFF or GFF ground states associated with said one or more of said DFF or GFF partial Hamiltonian(s), wherein said transition probability matrix has a probability density function as a stationary distribution, said probability density function is related to said DFF or GFF ground states associated with said one or more of said DFF or GFF partial Hamiltonian(s);
        \item[] a second iterative step transferring the CSP value held by said CSP variable to one of one or more result container(s) as a result sampling point and resetting the CSP value held by said CSP variable to said second sampling point;
        \end{itemize}
    \item[] whereby said result sampling points enable a numerical estimate for said objective variable, the numerical estimate having a relative error smaller than or equal to a formula $a(D_1+D_2)^{\,b}N^{-c}$ when both $D_1+D_2$ and $N$ becoming substantially large, with $a$, $b$, and $c$ being constants, both $a$ and $c$ further being positive.
    \end{itemize}
\end{itemize}
\refstepcounter{claim}

\end{itemize}

\clearpage
\section*{ABSTRACT}
\PatentAbstract

\clearpage
\setcounter{page}{1}
\pagestyle{empty}
\section*{DRAWINGS}

\begin{figure}[ht]
\centering
\includegraphics[width=0.8\textwidth]{figs/BlackWhiteDiracDeltaPotential.png}
\caption{A three-well potential on a circle.}
\label{ThreeWellVx}
\end{figure}

\clearpage
\begin{figure}[ht]
\centering
\includegraphics[width=0.8\textwidth]{figs/BlackWhiteThreeEigenStatesEvenOdd.png}
\caption{The eigenstates $\psi_{\mathsmaller{P}}$, $\psi_+$, and $\psi_-$ in dotted, solid, and dashed lines respectively.}
\label{ThreeWellPsiEvenOdd}
\end{figure}

\clearpage
\begin{figure}[ht]
\centering
\includegraphics[width=0.8\textwidth]{figs/BlackWhiteThreeEigenStatesPsiLPsiR.png}
\caption{The eigenstates $\psi_{\mathsmaller{P}}$, $\psi_{\mathsmaller{L}}$, and $\psi_{\mathsmaller{R}}$ in dotted, solid, and dashed lines respectively.}
\label{ThreeWellPsiPsiLPsiR}
\end{figure}

\clearpage
\begin{figure}[ht]
\centering
\begin{tabular}{ccc}
\subfloat[Subfigure 1 list of figures text][$\Phi_+(x_1,x_2)$]{\includegraphics[width=0.495\textwidth]
{figs/ContourBiFermionNodalCellsEven.png}\label{fig:FourPhis1}} & \hspace{-2.85em} &
\subfloat[Subfigure 2 list of figures text][$\Phi_-(x_1,x_2)$]{\includegraphics[width=0.495\textwidth]
{figs/ContourBiFermionNodalCellsOdd.png}\label{fig:FourPhis2}} \\
\subfloat[Subfigure 3 list of figures text][$\Phi_{\mathsmaller{L}}(x_1,x_2)$]{\includegraphics[width=0.495\textwidth]
{figs/ContourBiFermionNodalCellsPsiL.png}\label{fig:FourPhis3}} & \hspace{-2.85em} &
\subfloat[Subfigure 4 list of figures text][$\Phi_{\mathsmaller{R}}(x_1,x_2)$]{\includegraphics[width=0.495\textwidth]
{figs/ContourBiFermionNodalCellsPsiR.png}\label{fig:FourPhis4}}
\end{tabular}
\caption{The bi-fermion wavefunctions $\Phi_+(x_1,x_2)$, $\Phi_-(x_1,x_2)$, $\Phi_{\mathsmaller{L}}(x_1,x_2)$, $\Phi_{\mathsmaller{R}}(x_1,x_2)$.}
\label{FourPhisAndNodalCurves}
\end{figure}

\clearpage
\begin{figure}[ht]
\centering
\begin{tabular}{ccc}
\subfloat[Subfigure 1 list of figures text][$\theta=-3\pi/4$]{\includegraphics[width=0.495\textwidth]
{figs/ContourBiFermionNodalCellsNeg3PiDiv4.png}\label{fig:FourXsinZcos1}} & \hspace{-2.85em} &
\subfloat[Subfigure 2 list of figures text][$\theta=-\pi/4$]{\includegraphics[width=0.495\textwidth]
{figs/ContourBiFermionNodalCellsNeg1PiDiv4.png}\label{fig:FourXsinZcos2}} \\
\subfloat[Subfigure 3 list of figures text][$\theta=\pi/4$]{\includegraphics[width=0.495\textwidth]
{figs/ContourBiFermionNodalCellsPos1PiDiv4.png}\label{fig:FourXsinZcos3}} & \hspace{-2.85em} &
\subfloat[Subfigure 4 list of figures text][$\theta=3\pi/4$]{\includegraphics[width=0.495\textwidth]
{figs/ContourBiFermionNodalCellsPos3PiDiv4.png}\label{fig:FourXsinZcos4}}
\end{tabular}
\caption{The ground states of $H_{\mathsmaller{\rm BF},0}+V_{\mathsmaller{X},\eta\sin\theta}+V_{\mathsmaller{Z},\eta\cos\theta}$ for $\theta=-3\pi/4,-\pi/4,\pi/4,3\pi/4$.}
\label{FourXsinZcosNodalCurves}
\end{figure}

\clearpage
\begin{sidewaysfigure}[ht]
\centering
\begin{tikzpicture} [x=3cm,y=3cm]
\def\CircRad{0.25}
\def\TwoBarY{0.06}
\def\TwoBarX{{sqrt(\CircRad*\CircRad-\TwoBarY*\TwoBarY)}}
\filldraw [gray!25,fill=gray!25] ($(-3,-\CircRad)-1.5*(0,\TwoBarY)$) rectangle ($(3,\CircRad)+1.5*(0,\TwoBarY)$) ;
\draw [black, very thick] ($(-3,0)-6*(0.075,0)$) -- ($(-3,0)-5*(0.075,0)$) ;
\draw [black, very thick] ($(-3,0)-4*(0.075,0)$) -- ($(-3,0)-3*(0.075,0)$) ;
\draw [black, very thick] ($(-3,0)-2*(0.075,0)$) -- ($(-3,0)-1*(0.075,0)$) ;
\draw [black, very thick] (-3,0) -- ($(-2.5,0)-(\CircRad,0)$) ;
\draw [black, very thick] (-2.5,0) circle [radius=\CircRad] ;
\draw (-2.5,-0.01) node {$\mathsmaller{6t}$} ;
\draw [black, very thick] ($(-2.5,\TwoBarY)+(\TwoBarX,0)$) -- ($(-1.5,\TwoBarY)-(\TwoBarX,0)$) ;
\draw [black, very thick] ($(-2.5,-\TwoBarY)+(\TwoBarX,0)$) -- ($(-1.5,-\TwoBarY)-(\TwoBarX,0)$) ;
\draw [black, very thick] (-1.5,0) circle [radius=\CircRad] ;
\draw (-1.5,-0.01) node {$\mathsmaller{6t+1}$} ;
\draw [black, very thick] ($(-1.5,0)+(\CircRad,0)$) -- ($(-0.5,0)-(\CircRad,0)$) ;
\draw [black, very thick] (-0.5,0) circle [radius=\CircRad] ;
\draw (-0.5,-0.01) node {$\mathsmaller{6t+2}$} ;
\draw [black, very thick] ($(-0.5,\TwoBarY)+(\TwoBarX,0)$) -- ($(0.5,\TwoBarY)-(\TwoBarX,0)$) ;
\draw [black, very thick] ($(-0.5,-\TwoBarY)+(\TwoBarX,0)$) -- ($(0.5,-\TwoBarY)-(\TwoBarX,0)$) ;
\draw [black, very thick] (0.5,0) circle [radius=\CircRad] ;
\draw (0.5,-0.01) node {$\mathsmaller{6t+3}$} ;
\draw [black, very thick] ($(0.5,0)+(\CircRad,0)$) -- ($(1.5,0)-(\CircRad,0)$) ;
\draw [black, very thick] (1.5,0) circle [radius=\CircRad] ;
\draw (1.5,-0.01) node {$\mathsmaller{6t+4}$} ;
\draw [black, very thick] ($(1.5,\TwoBarY)+(\TwoBarX,0)$) -- ($(2.5,\TwoBarY)-(\TwoBarX,0)$) ;
\draw [black, very thick] ($(1.5,-\TwoBarY)+(\TwoBarX,0)$) -- ($(2.5,-\TwoBarY)-(\TwoBarX,0)$) ;
\draw [black, very thick] (2.5,0) circle [radius=\CircRad] ;
\draw (2.5,-0.01) node {$\mathsmaller{6t+5}$} ;
\draw [black, very thick] ($(2.5,0)+(\CircRad,0)$) -- (3,0) ;
\draw [black, very thick] ($(3,0)+1*(0.075,0)$) -- ($(3,0)+2*(0.075,0)$) ;
\draw [black, very thick] ($(3,0)+3*(0.075,0)$) -- ($(3,0)+4*(0.075,0)$) ;
\draw [black, very thick] ($(3,0)+5*(0.075,0)$) -- ($(3,0)+6*(0.075,0)$) ;
\end{tikzpicture}
\vspace{3.0ex}
\caption{A $6$-vertex segment of a ring-shaped graph.}
\label{SixVertexSegment}
\end{sidewaysfigure}

\clearpage
\begin{sidewaysfigure}[ht]
\centering
\includegraphics[width=1.0\textwidth]{figs/MethodSimuSyst.png}
\caption{One process of simulating a multivariable system.}
\label{MethodSimuSyst}
\end{sidewaysfigure}

\clearpage
\begin{sidewaysfigure}[ht]
\centering
\includegraphics[width=1.0\textwidth]{figs/MethodSimuDens.png}
\caption{Another process of simulating a multivariable system.}
\label{MethodSimuDens}
\end{sidewaysfigure}

\clearpage
\begin{figure}[ht]
\centering
\includegraphics[width=1.0\textwidth]{figs/MethodConsFK.png}
\vspace{-20ex}
\caption{A process of constructing a multivariable system.}
\label{MethodConsFK}
\end{figure}

\clearpage
\begin{sidewaysfigure}[ht]
\centering
\includegraphics[width=1.0\textwidth]{figs/MethodSolvSyst.png}
\caption{One process of solving a computational problem.}
\label{MethodSolvSyst}
\end{sidewaysfigure}

\clearpage
\begin{sidewaysfigure}[ht]
\centering
\includegraphics[width=1.0\textwidth]{figs/MethodSolvDens.png}
\caption{Another process of solving a computational problem.}
\label{MethodSolvDens}
\end{sidewaysfigure}

\clearpage
\section*{LIST OF THE DRAWINGS}
\ListOfDraw
} {
} 

\end{document}